\documentclass[11pt,reqno]{amsart}
\usepackage[utf8]{inputenc}
\usepackage[T1]{fontenc}
\usepackage{amsmath}
\usepackage{amsfonts}
\usepackage{amssymb}
\usepackage{graphicx}
\usepackage[export]{adjustbox}
\usepackage{mathrsfs}
\usepackage{xcolor}
\usepackage{slashed}
\usepackage{tikz}
\usetikzlibrary{shadings,intersections}
\usepackage{enumitem}
\usepackage{verbatim}
\usepackage{caption}
\usepackage{hyperref}
\usepackage{amsthm}
\usepackage[final,expansion=false]{microtype}
\usepackage[a4paper,top=2.5cm,bottom=2.5cm,left=3cm,right=3cm]{geometry}

\theoremstyle{plain}
\newtheorem{lemma}{Lemma}[section]
\newtheorem{definition}[lemma]{Definition}
\newtheorem{proposition}[lemma]{Proposition}
\newtheorem{theorem}[lemma]{Theorem}
\newtheorem{corollary}[lemma]{Corollary}
\newtheorem{remark}{Remark}
\numberwithin{equation}{section}

\newcommand{\Lie}{\mathcal{L}}
\newcommand{\dv}{{\rm div}}
\newcommand{\tr}{{\rm tr}}
\newcommand{\cl}{{\rm curl}}

\theoremstyle{plain}
\newtheorem{conj}{Conjecture}[section]

\def\Hb {\underline{H}}
\def\chib {\underline{\chi}}
\def\chih {\hat{\chi}}
\def\chibh {\hat{\underline{\chi}}}
\def\omegab {\underline{\omega}}
\def\etab {\underline{\eta}}
\def\betab {\underline{\beta}}
\def\alphab {\underline{\alpha}}

\def\i {\infty}
\def\l {\bigg(}
\def\r {\bigg)}
\def\S {S_{u,v}}
\def\SS {\mathbb{S}}

\def\LS{{\mathscr{L}}}
\def\LSS{\LS_{II}}
\def\LT{\LS_{III}}
\def\Rc{{\rm Ric}}

\def\lnm{\left\|}
\def\rnm{\right\|}
\def\z{\left(\frac{v}{(-u)^{1-\k}}\right)}
\def\R{\mathcal{R}}
\def\D{\mathcal{D}}

\def\V{{\rm Vol}_{\mathbb{S}}}
\def\k{\kappa}

\newcommand{\wt}{\widetilde}
\newcommand{\ol}{\overline}
\newcommand{\ot}{\overset{\triangle}}
\newcommand{\oc}{\overset{\circ}}
\newcommand{\df}[1]{\left[\overline{#1}\right]_0^v}
\newcommand{\ddf}[1]{\left\{ #1\right\}_0^v }
\newcommand{\ddfl}[1]{\ddf{\ol{#1}}}

\newcommand{\mub}{\underline{\mu}}

\newcommand{\Rb}{\underline{\mathcal{R}}}

\newcommand{\sg}{g\mkern-9mu /}
\newcommand{\seps}{\epsilon\mkern-8mu /}

\title[Naked Singularities beyond Spherical Symmetry]{Naked Singularities beyond Spherical Symmetry: Singular Inner Cauchy Horizons\\ for the Einstein-Scalar Field System}

\author{Xinliang An}
\address{Department of Mathematics, National University of Singapore, Singapore}
\email{matax@nus.edu.sg}

\author{Shengrong Wu}
\address{Department of Mathematics, National University of Singapore, Singapore}
\email{shengrong\_wu@u.nus.edu}

\date{}

\subjclass[2020]{Primary 35Q76; Secondary 83C57, 83C75}
\keywords{naked singularities,  inner Cauchy horizons, cosmic censorship, Einstein-scalar field equations, non-spherical symmetry}

\begin{document}

\maketitle

\begin{abstract}
In this work, we investigate the formation of naked singularities for the $3+1$-dimensional Einstein-scalar field system without symmetry assumptions. We generalize the spherically symmetric and self-similar naked-singularity solution constructed by Christodoulou in \cite{Christodoulou1994} by prescribing non-spherically symmetric initial data along both incoming and outgoing initial null hypersurfaces. We then establish global existence for the resulting solutions and analyze the singular structure of the inner Cauchy horizon. Our construction is based on employing a notion of four-type differences and designing a system of scale-invariant weighted norms to control the corresponding geometry. We show that the constructed spacetimes retain a global naked-singularity structure, characterized by an incomplete future null infinity and a singular inner Cauchy horizon. Moreover, we derive detailed asymptotics near the inner Cauchy horizon and prove
the desired $C^{1, \frac{\kappa}{1-\kappa}+}$inextendibility of these solutions, where $\kappa\in (0,1/3)$ is the self-similar parameter. This indicates a connection between weak and strong cosmic censorship: for the class of non-spherically symmetric solutions constructed here, the failure of weak cosmic censorship in its strict formulation is accompanied by a quantitative inextendibility mechanism at the inner Cauchy horizon.
\end{abstract}

\section{Introduction}

In this article, we study the Einstein-scalar field equations (ESE):
\begin{equation}\label{ESE}
    \left\{
    \begin{aligned}
        \text{Ric}_{\mu\nu}(g)-\frac{1}{2}{R}(g)g_{\mu\nu} &= D_\mu\phi D_\nu\phi-\frac{1}{2}D_\lambda\phi D^\lambda \phi g_{\mu\nu},\\
        \square_{g}\phi &= 0,
    \end{aligned}
    \right.
\end{equation}
where $(\mathcal{M},g)$ is a $3+1$-dimensional Lorentzian manifold with Ricci curvature $\text{Ric}_{\mu\nu}(g)$ and scalar curvature $R(g)$, and $\phi:\mathcal{M}\rightarrow\mathbb{R}$ is a real-valued scalar function. 

In \cite{Christodoulou1994} Christodoulou constructed the celebrated naked singularity solution for this system. This solution arises under the assumption of continuous self-similarity (CSS) and spherical symmetry. In renormalized double null coordinates, the background solution is solved under the following ansatz:
\begin{equation}\label{eq:symmetry_ansatz}
    \begin{split}
        \text{Spherical Symmetry:}&\quad 
    g = -2\Omega^2(u,v)(du\otimes dv+dv\otimes du) + r^2(u,v)d\sigma_{\mathbb{S}^2},\\
    \kappa\text{-Self-Similarity:}&\quad  S = u\partial_u+v\partial_v,\quad \mathcal{L}_S g=2g,\quad S(\phi)=\sqrt{2\kappa}.
    \end{split}
\end{equation}
Both constructing more examples of naked singularity solutions beyond these symmetries and proving the stability or instability of these naked singularities are challenging and are closely tied to the \textit{weak cosmic censorship conjecture} (WCCC), one of the central problems in mathematical general relativity.

\begin{conj}[Weak Cosmic Censorship, \cite{penrose1969gravitational,christodoulou1999global}] 
    The maximal future development of generic, regular, asymptotically flat initial data for the Einstein field equations possesses a complete future null infinity.
\end{conj}

For the system \eqref{ESE}, under spherically symmetry, in a series of foundational works \cite{christodoulou1991,christodoulou1993,christodoulou1999}  Christodoulou showed that all naked singularities are \textit{unstable}, i.e., \textit{each naked singularity initial datum is associated with two unstable directions and the generic perturbation leads to black hole formation}. Recently, the first author \cite{An2024} proved that for Christodoulou's naked-singularity solution, even a small anisotropic perturbation can produce an anisotropic spacelike apparent horizon and thereby censors the singularities behind the achronal dynamical horizon. In \cite{An2024} it also shows that, for any $k\in\mathbb{Z}^+$, Christodoulou's naked-singularity solution has at least codimensional $2k$ nonlinear instability under outgoing anisotropic characteristic perturbations. Here we also note the work \cite{LL2} by Li-Liu and their result shows that the spherically symmetric singularities considered in \cite{LL2} are unstable (as a result of trapped surface formation) subject to close-to-isotropic gravitational perturbations and the instability there is in the sense of first category.

\vspace{1mm}
These above instability results raise a complementary set of questions: \textit{Can naked singularities persist under a suitable class of non-spherically symmetric perturbations? If so, since weak cosmic censorship in its strict formulation, without the genericity requirement, appears to be violated, is there another censorship mechanism that takes effect?} In this article, we prove that \textit{naked singularities form from non-spherically symmetric initial data lying in a small neighborhood of Christodoulou's self-similar solution \cite{Christodoulou1994}}. We further show that \textit{the singular boundary (the inner Cauchy horizon) of the corresponding spacetime admits no H\"older extension}. For our initial data, we prescribe non-spherically symmetric $\kappa$-self-similar perturbations of \cite{Christodoulou1994} on the initial incoming null hypersurface and small perturbations on the initial outgoing null hypersurface. In the setting considered there,  we find that \textit{when weak cosmic censorship fails in its strict formulation, strong cosmic censorship comes into force.}

\vspace{1mm}
Our results here also connect another direction. In the vacuum setting, Rodnianski and Shlapentokh-Rothman \cite{Rodnianski2019NakedSF} proved the formation of naked singularities from their introduced $\kappa$-self similar initial data for the Einstein vacuum equations. Their result \cite{Rodnianski2019NakedSF} gives the first mathematical construction of naked-singularity solutions without symmetry assumption, and their results and methods inspired the present paper. We also mention that in the spherically symmetric Einstein-scalar field setting, Singh \cite{singh2024construction} proved the stability of Christodoulou's solution under (non-generic) small spherical perturbations. See also a further extension of \cite{singh2024construction} by Singh-Zheng \cite{SZ} and Zheng \cite{Zheng}.

\vspace{1mm}
The present paper can be viewed as a non-spherically symmetric stability result and the nonlinear singularity analysis of Christodoulou's naked-singularity geometry beyond spherical symmetry. In this paper, the scalar field enters as the leading terms in the analysis: it supplies the singular background, enters the Ricci tensor through $\text{Ric}_{\mu\nu}=D_\mu\phi D_\nu\phi$, modifies the Bianchi equations, creates a derivative-loss difficulty at the top order, and ultimately yields the quantitative obstruction to extending the solution across the inner Cauchy horizon. Thus, this paper both advances the study of naked singularities and establishes a precise nonlinear description of the singular inner Cauchy horizon beyond spherical symmetry.

Besides overcoming the new difficulties arising from the matter field, compared with \cite{Rodnianski2019NakedSF} we additionally  derive the precise asymptotic behavior of the solution as it approaches the singular inner Cauchy horizon and establish the $C^{1,\frac{\kappa}{1-\kappa}+}$-inextendibility of the spacetime. This hence links to
 
\begin{conj}[Strong Cosmic Censorship, \cite{christodoulou1999global}] 
    For generic initial data for suitable Einstein-matter system or for Einstein vacuum equations, the maximal Cauchy development arising from initial data is inextendible.
\end{conj}

\subsection{Definitions and Main Result}

To rigorously state our results, we first introduce the geometric framework of the double null foliation and the precise definition of the naked singularity structure.

\begin{definition}[Double Null Foliation]\label{def:double_null}
    We introduce a \textbf{double null foliation} on the $(3+1)$-dimensional spacetime $(\mathcal{M},g)$. Consider a spacelike $2$-sphere $S_{-1,0}$, and let $u$ and $v$ be future-directed optical functions (which satisfy the eikonal equation):
    \[
    g^{\mu\nu}\partial_\mu u\partial_\nu u = g^{\mu\nu}\partial_\mu v\partial_\nu v = 0,
    \]
    subject to the initial conditions $u|_{S_{-1,0}}=-1$ and $v|_{S_{-1,0}}=0$.
    
    We define the null geodesic vector fields $L' = -2\nabla u^\sharp$ and $\underline{L}' = -2\nabla v^\sharp$. The lapse function $\Omega$ is defined by the relation $2\Omega^{-2} = -g(L', \underline{L}')$. We define the normalized null frame as $e_3 = \Omega \underline{L}'$ and $e_4 = \Omega L'$.
    
    Let $H_u$ and $\underline{H}_v$ denote the level sets of $u$ and $v$, respectively, where $H_u$ corresponds to outgoing null cones and $\underline{H}_v$ to incoming null cones. The intersection $S_{u,v} := H_u \cap \underline{H}_v$ is a topological 2-sphere.
    
    We construct the coordinate system $(u, v, \theta^A)$ as follows: let $(\theta^A)$ denote arbitrary coordinates on $S_{-1,0}$, and propagate them throughout the spacetime by imposing the gauge condition $[e_4, \partial_{\theta^A}] = 0$. Let $b^A$ be the shift vector defined by $e_3 = \Omega^{-1}(\partial_u + b^A \partial_A)$. In these coordinates, the metric takes the form:
    \begin{equation}
        g = -2\Omega^2(u,v,\theta)(du\otimes dv + dv\otimes du) + \slashed{g}_{AB}(u,v,\theta)(d\theta^A - b^A du)\otimes(d\theta^B - b^B du).
    \end{equation}
    The associated null frame is given by $e_3 = \Omega^{-1}(\partial_u + b^A\partial_A)$, $e_4 = \Omega^{-1}\partial_v$, and $e_A = \partial_{\theta^A}$ for $A=1,2$.
\end{definition}

\begin{definition}[Incomplete Future Null Infinity and Naked Singularity]\label{def:naked_singularity}
    Suppose that, for a fixed $u_0$, the outgoing null hypersurface $H_{u_0}$ extends to an asymptotically flat region. The spacetime $(\mathcal{M},g)$ is said to possess an \textbf{incomplete future null infinity} if there exists a sequence of points $\{p_i\} \subset H_{u_0}$ with $v(p_i) \to \infty$ such that the future-directed incoming null geodesics emanating from $p_i$ and generated by $e_3$ have uniformly bounded affine length. 
    
    \noindent A spacetime is said to contain a \textbf{naked singularity} if it arises as the maximal globally hyperbolic development of regular initial data and possesses an incomplete future null infinity $\mathcal{I}^+$.
    
\end{definition}

We denote Christodoulou's spherically symmetric, $\kappa$-self-similar background solution by $(\mathcal{M}^c, g^c, \phi^c)$, which satisfies the symmetry conditions in \eqref{eq:symmetry_ansatz}. We are now ready to state our main theorem.

\begin{theorem}\label{Main_Theorem}
    Let $0 < \kappa < 1/3$ be fixed and $\delta > 0$ be a small parameter. For any $\delta>0$, there exists a sufficiently small $\epsilon = \epsilon(\delta) > 0$ such that the following holds:
    
    There exists a spacetime $(\mathcal{M}, g, \phi)$ solving the Einstein-scalar field equations \eqref{ESE} with the following properties:
    
    \begin{enumerate}
        \item \textbf{Global Structure:} The manifold $\mathcal{M}$ is diffeomorphic to $[-1,0) \times [0,\infty) \times \mathbb{S}^2$ with coordinates $(u,v,\theta)$. The metric $g$ takes the double null form described in Definition \ref{def:double_null}. In these coordinates, the metric of Christodoulou's solution \cite{Christodoulou1994} is written by $g^c = -2(\Omega^c)^2 (du\otimes dv + dv\otimes du) + \slashed{g}^c_{AB} d\theta^A \otimes d\theta^B$.
        
        \item \textbf{Initial Data:} Along the initial incoming hypersurface $v=0$ we prescribe the non-spherically symmetric $\kappa$-self similar initial data as described in Section \ref{Construction of Self-Similar Initial Data on Incoming Cone} and along the initial outgoing null hypersurface $u=-1$ we impose small perturbations as specified in Section \ref{Initial Values along the Initial Outgoing Null Cone}. In particular, along $v=0$ the initial data satisfy the Lie-scaling condition $u\mathcal{L}_{\partial_u} g = 2g$, the quantity $u\partial_u \phi + \sqrt{2\kappa}$ remains constant up to an error of order $O(\epsilon)$, and the deviation of the metric from the background obeys the bound
        \[
        \|\slashed{g} - \slashed{g}^c\|_{L^2(\slashed{g}^c)} \leq \epsilon.
        \]
        
        \item \textbf{Asymptotic Behavior at Infinity:} The metric $g$ and the scalar field $\phi$ admit a continuous extension to the region boundary $\{u=0, v>0\}$. In the regime where the ratio $v/(-u)$ is sufficiently large, the continuous extensions satisfy the following decay estimates:
        \begin{equation*}
            |\partial_v(\phi - \phi^c)| \leq C\epsilon^{1-\delta} v^{-1}, \quad |\slashed{g} - \slashed{g}^c|_{\slashed{g}^c} \leq C\epsilon^{1-\delta}.
        \end{equation*}
        
        \item \textbf{Inextendibility at the Inner Cauchy Horizon:} As $u \to 0$, that is, as the solution approaches the inner Cauchy horizon, the solution exhibits singular behaviors. Specifically,
        \[
        \lim_{u \to 0} (-u)^{1-2\kappa-\delta} |(\Omega^{-1}e_3)^2 \phi| = \infty.
        \]
        
        \noindent This blow-up behavior indicates that the scalar field fails to be $C^{1, \frac{\kappa+\delta}{1-\kappa}}$-continuous, thereby rendering the the corresponding spacetime inextendibility across the inner Cauchy horizon.
        
        \item \textbf{Causal Structure:} The spacetime admits the Penrose diagram depicted in Figure \ref{Fig:Penrose Diagram}. The boundary $\mathcal{I}^+$ represents an incomplete future null infinity, confirming that the singularity $\mathcal{O}$ is naked.
        
    \end{enumerate}
\end{theorem}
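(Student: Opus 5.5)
The proof splits the spacetime into three regions, following the strategy of Rodnianski--Shlapentokh-Rothman adapted to the scalar field.

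\textbf{Region I: the self-similar regime.} In $\{v/(-u)^{1-\kappa}\lesssim 1\}$ we solve a characteristic initial value problem for the differences between $(g,\phi)$ and Christodoulou's background $(g^c,\phi^c)$. Data are posed on $\underline{H}_0$ (the $\kappa$-self-similar perturbation) and on $H_{-1}$ (the small perturbation). The unknowns are the Ricci coefficients $\chi,\chib,\eta,\etab,\omega,\omegab$, the curvature components $\alpha,\beta,\rho,\sigma,\betab,\alphab$, and the scalar-field derivatives $e_3\phi, e_4\phi, \nabla\phi$. We measure each difference in scale-invariant weighted norms with weights $(-u)^{p}$ and $v^{q}$, chosen so that the background is order one and the perturbations are $O(\epsilon)$.

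The estimates come from a bootstrap. The null structure equations are integrated along $e_3$ and $e_4$. The Bianchi pairs are treated by energy estimates: $(\alpha,\beta)$, $(\beta,(\rho,\sigma))$, $((\rho,\sigma),\betab)$, $(\betab,\alphab)$, modified by the source terms $D_\mu\phi D_\nu\phi$. The wave equation $\square_g\phi=0$ is handled by a separate energy estimate for the difference $\phi-\phi^c$.

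To handle the differences we use the four-type splitting. We compare the solution along $\underline{H}_v$ with its value at $v=0$, and we compare the self-similar extension with the background. This isolates the linear terms, which are exactly self-similar, so that they are controlled by integrating in $\log(-u)$. Nonlinear terms gain smallness $\epsilon^{1-\delta}$.

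\textbf{Region II: large $v/(-u)$, continuous extension to $\{u=0\}$.} Here we switch to weights in $v$. The background satisfies $\Omega^c\sim$ const and $r^c\sim v$ (up to the self-similar profile). The same transport and energy estimates then give $|\partial_v(\phi-\phi^c)|\lesssim\epsilon^{1-\delta}v^{-1}$ and $|\slashed g-\slashed g^c|\lesssim\epsilon^{1-\delta}$, uniformly as $u\to0$. Integrability in $u$ of the $e_3$-derivatives yields the continuous extension to $u=0$ and proves item (3).

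The incompleteness of $\mathcal{I}^+$ (item (5)) follows from a bounded $e_3$-affine length. The $u$-range is $[-1,0)$, and $\Omega$ stays comparable to $\Omega^c$, so the integral $\int\Omega^2\,du$ remains bounded uniformly in $v$. Together with the asymptotic flatness of $H_{-1}$, this gives the Penrose diagram.

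\textbf{Main obstacle: top-order derivative loss.} The Bianchi equations are sourced by $\nabla\phi$ and $\nabla^2\phi$, and the scalar-field energy needs curvature. I would close at top order by renormalizing the curvature: subtract the $\phi$-quadratic terms, so that for instance $\rho-\frac12 e_3\phi\, e_4\phi$ obeys a Bianchi equation whose right-hand side involves only first derivatives of $\phi$ times Ricci coefficients. I would also commute the wave equation with one fewer angular derivative than the curvature.

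A second delicate point is the weights in the linearized self-similar problem. They must be chosen so that the zeroth-order terms have a favorable sign. This is where $\kappa<1/3$ enters: the background exponents must satisfy $1-2\kappa>\kappa$ so that the weighted terms are absorbed.

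\textbf{Inextendibility (item (4)).} Transport $e_3(e_3\phi)$ near $u=0$ using the wave equation written as
\[
e_4(\Omega^{-1}e_3\phi)+\ldots=0 .
\]
The background is explicit: Christodoulou's solution has $(\Omega^{-1}e_3)^2\phi^c\sim c_0(-u)^{-(1-2\kappa)}$ times a nonzero profile for $v>0$. The perturbation contributes only $O(\epsilon^{1-\delta})$ times the same rate. Hence, after multiplying by $(-u)^{1-2\kappa-\delta}$, the quantity diverges.

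Changing to a regular coordinate across the horizon, $u\mapsto(-u)^{1-\kappa}$, converts this divergence into failure of $C^{1,\frac{\kappa+\delta}{1-\kappa}}$ regularity. A standard argument then upgrades this to inextendibility of the metric.
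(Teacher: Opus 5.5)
Your overall architecture matches the paper: differences from Christodoulou's solution, scale-invariant weighted norms, a bootstrap, renormalized curvature, and $\kappa<1/3$ entering as $1-\kappa>2\kappa$ in the transport estimate for $\Omega^{-1}e_4\phi$. But the central step is missing. You assert that in the large-$z$ region the same transport and energy estimates, now with $v$-weights, close uniformly as $u\to0$. The compact range $\epsilon_1\le z\le\epsilon_1^{-1}$ is indeed harmless: it is finite in $\log z$, so a Gr\"onwall-type argument works, and the paper implements it with the weight $V^a e^{-DV^{1-\kappa}/(-U)}$ and $D$ large. But $z\ge\epsilon_1^{-1}$ is infinite in $\log z$, and the background is singular there ($\alpha$ and $(\Omega^{-1}e_3)^2\phi$ blow up). The paper states that the damping available in the transport estimates there is too weak to control the error growth directly, so your argument gives no uniform control as $u\to0$. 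The paper closes this region only by two new devices. (a) It accepts a small polynomial loss: the weight is $V^a((-U)/V^{1-\kappa})^{\tau}$, and $\eta$, $b$, $\overline{\Omega^2}$ carry an arbitrary $z^{\lambda}$ loss, while quantities with favorable $\nabla_3$ equations keep loss-free sphere bounds. (b) It estimates the Bianchi pairs first in the spacetime norms $\mathcal{R}$, $\underline{\mathcal{R}}$, and only afterwards recovers the fluxes. In those norms the top-order Hodge-dual terms cancel, and a small factor $\epsilon_1^{(1-\kappa)(1-\tau)}$ comes from $(-U)/V^{1-\kappa}\le\epsilon_1^{1-\kappa}$. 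Without a mechanism of this kind, items (3)--(5) rest on unproved uniform bounds. Relatedly, your affine-length step assumes $\Omega/\Omega^c$ stays uniformly comparable up to $u=0$, which the estimates do not give. The paper integrates $\overline{\Omega\omega}$ in $V$ to get $\Omega^2(U,V)/\Omega^2(U_0(V),V)\le 2(-U)^{-\epsilon^{1/4}/(1-\kappa)}$, which is lossy but still integrable in $U$.

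Three further steps fail as written.

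First, for item (4) you claim the perturbation of $(\Omega^{-1}e_3)^2\phi$ is $O(\epsilon^{1-\delta})$ times the background rate. The difference bound the paper actually obtains is $\epsilon^{1-2\delta}z^{\delta}$ times the background, which eventually exceeds it. A plausible reason is that the coefficient $\tfrac12\Omega\mathrm{tr}\chi-8\Omega\omega$ of the $\Omega e_4$-transport equation is known only up to $O(\epsilon^{1-\delta})/V$, which integrates along $H_U$ to a factor $z^{O(\epsilon^{1-\delta})}$. The paper instead integrates the undifferenced equation from $z=\epsilon_1^{-1}$, where the solution is still comparable to Christodoulou's, using the favorable sign of that coefficient. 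Deriving that equation requires the exact cancellation of every $\eta$-term, since $\eta$ is only controlled with a $z^{\lambda}$ loss. This yields only the rate $(-u)^{-(1-2\kappa-\delta)}$, which is exactly why the statement carries the $-\delta$.

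Second, your derivative count is backwards. Renormalizing $\beta$, $\underline{\beta}$, $\sigma$ removes the $D_3D_3\phi$ and $D_4D_4\phi$ terms. But the renormalized Bianchi equations still contain $\nabla\mathrm{Ric}\sim\nabla\phi\cdot\nabla^2\phi$, so $\nabla^5$ of curvature needs $\nabla^6$ of $D\phi$ and of the Ricci coefficients. The scalar field must be carried one angular derivative above the curvature, not below. The Ricci coefficients are then recovered at top order through Hodge systems and auxiliary quantities such as $\mu$, $\underline{\mu}$, $\omega^\dagger$.

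Third, in Region I, non-spherical self-similar data force $\Omega^{-2}\alpha\sim\epsilon\,(v/(-u)^{1-\kappa})^{\kappa/(1-\kappa)-1}$ as $v\to0$. So the plain difference $\alpha-\alpha^c=\alpha$ has infinite flux through $H_u$ in the Region I norms, already in the unweighted one when $\kappa<1/3$. The paper therefore subtracts explicit approximants for $\Omega^{-1}\hat\chi$, $\Omega^{-1}e_4\phi$, $\Omega^{-2}\alpha$, $\Omega^{-1}\beta^r$, built from the parabolic Lie propagation equations. It also needs the first-order Taylor remainder for $\Omega\mathrm{tr}\underline{\chi}$, whose transport coefficient has the unfavorable sign.
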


From the perspective of cosmic censorship, the above theorem has two complementary features. First, future null infinity of the solution is incomplete in the sense of Definition \ref{def:naked_singularity}; hence the singularity is visible from null infinity within the constructed class of data. Second, the inner Cauchy horizon is itself singular. Thus, the same solutions exhibit both a \textit{violation of weak cosmic censorship in its strict formulation} and a \textit{manifestation of strong cosmic censorship} at the inner Cauchy horizon.
\vspace{2mm}

The regularity threshold is dictated by the self-similar background. The precise H\"older regularity $C^{1,\frac{\kappa}{1-\kappa}}$ arises from Christodoulou's spherically symmetric solution, while the detailed analysis in this paper yields a quantitative obstruction at the level $C^{1,\frac{\kappa+\delta}{1-\kappa}}$, for every sufficiently small $\delta>0$, under non-spherically symmetric perturbations prescribed on both the initial incoming and outgoing null hypersurfaces. This regularity threshold is hence desired in the present non-spherically symmetric setting and is summarized as a $C^{1,\frac{\kappa}{1-\kappa}+}$ inextendibility at the inner Cauchy horizon for the coupled Einstein-scalar field system.

\subsection{New Ingredients and Key Strategies}

The proof relies on a global bootstrap argument coupled with a domain decomposition strategy. Based on the ratio of the null coordinates $v/|u|$—which reflects the self-similar nature of the background flow—we partition the spacetime into three characteristic regions.

\begin{figure}[h]
    \begin{center}
    \begin{tikzpicture}[scale=1.5]
        \fill[gray!10] (0,0) -- (2,2) -- (4,0) -- (2,-2) -- cycle;
        
        \draw[dashed, thick] (0,0) -- (2,2); 
        \draw[dashed] (2,2) -- (4,0);         
        \draw[thick] (4,0) -- (2,-2);        
        \draw[thick] (2,-2) -- (0,0);        
        \draw[dashed] (0,0) -- (2.6,-1.4);
        \draw[dashed] (0,0) -- (2.6,1.4);
        \node[rotate=45, above] at (1,1) {\small Cauchy Horizon ($u = 0$)};
        
        \node[rotate=-45, above] at (3,1) {\small $\mathcal{I}^+$};
        
        \node[rotate=-45, below] at (1,-1) {\small Initial Incoming ($v = 0$)};
        
        \node[rotate=45, below] at (3,-1) {\small Initial Outgoing ($u = -1$)};
        \node at (2,1.4) {$\mathcal{R}_{III}$};
        \node at (2,-1.4) {$\mathcal{R}_{I}$};
        \node at (3,0) {$\mathcal{R}_{II}$};

        \filldraw[white] (0,0) circle (0.10);
        \draw[thick] (0,0) circle (0.10);
        \node[left] at (-0.1,0) {\small $\mathcal{O}$};
    \end{tikzpicture}
    \end{center}
    \caption{\small{Penrose diagram and domain decomposition for the proof, showing the naked singularity $\mathcal{O}$, the incomplete future null infinity $\mathcal{I}^+$, and the singular inner Cauchy horizon. The spacetime is foliated by the ratio $v/|u|$. \textbf{Region I} is the initial layer where the solution is close to initial data. \textbf{Region II} is the wave zone dominated by self-similar dynamics. \textbf{Region III} is the asymptotic region where the singularity and the inner Cauchy horizon locate.}}
    \label{Fig:Penrose Diagram}
\end{figure}
Precisely, we let $z=\frac{v}{-u}$. With $0<\epsilon_1\ll 1$ the regions are defined via:
\[\R_I:=\{0\leq z\leq \epsilon_1\},\,\R_{II}:=\{\epsilon_1\leq z\leq \epsilon_1^{-1}\},\,\R_{III}:=\{\epsilon_1^{-1}\leq z\}.\]
The overarching logic of the proof involves constructing an \textit{approximating solution} via Lie transport along the homothetic vector field $S$. We then estimate various differences, including the difference between the actual solution and the approximations, using a system of \textit{scale-invariant weighted norms}. This allows us to propagate regularity from the initial data surfaces (Region I) through the wave zone (Region II) and finally establish the singular asymptotics at the horizon (Region III). 

\subsubsection{Lie Propagation Equation and the Approximating Solution}

A primary challenge in Region I lies in constructing an approximating solution that is sufficiently close to the exact solution, thereby ensuring that the error terms remain controllable with respect to both the perturbation size $\epsilon$ and the distance parameter $z$. The underlying self-similar symmetry plays an indispensable role in this construction. 

\vspace{1mm}
Consider a tensor field $\psi(u,v,\theta)$ satisfying the scaling law $\psi \sim (-u)^a \check{\psi}(v/(-u), \theta)$. The geometric transport of such a tensor along the incoming null direction $e_3$ is intimately related to its variation along the similarity flow. Specifically, the characteristic relation $S = u\partial_u + v\partial_v$ yields the following identity:
\[
\Omega \Lie_{e_3}\psi = -\frac{a}{-u}\psi + \Lie_b\psi + \frac{v}{-u}\Lie_{\partial_v}\psi.
\]
Motivated by this, we define the \textit{Lie propagation equation} for a general tensor $f$ as:
\begin{equation}\label{Lie_propagation_eq}
    \frac{v}{-u}\Lie_{\partial_v} f + \Lie_b f - \frac{\lambda}{-u} f = F,
\end{equation}
where $\lambda$ is a weight constant related to the scaling dimension.

\vspace{1mm}
For geometric quantities governed by a transport equation along $e_3$—specifically the renormalized shear $\Omega^{-1}\chih$ and the transversal scalar derivative $\Omega^{-1}e_4\phi$—we can derive their corresponding Lie propagation equations. A crucial simplification arises on the initial incoming cone $S_{u,0}$ (where $v=0$). Provided the regularity condition $\lim_{v\to 0} v\Lie_{\partial_v}\psi = 0$ holds, the equation reduces to a purely spatial system on the 2-sphere, which we refer to as the \textit{degenerate Lie propagation equation}.

For instance, by substituting the Ricci evolution equation for $\nabla_3 \chih$ into the self-similarity relation, we derive the following defining equation for the approximating shear restricted to the initial cone:
\begin{equation}\label{Intro_Lie_prop_eq_chih}
    \begin{aligned}
        -\Lie_b(\Omega^{-1}\chih_{AB}) &+ \left(\frac{1}{2}\dv b - 2\Lie_b\log\Omega + \frac{\k}{-u}\right)\Omega^{-1}\chih_{AB} \\
        &= -(\nabla\hat{\otimes}\eta)_{AB} - (\eta\hat{\otimes}\eta)_{AB} + \frac{1}{2}(\nabla\hat{\otimes} b)_{AB}(\Omega^{-1}\tr\chi) - \frac{1}{2}(\nabla\phi\hat{\otimes}\nabla\phi)_{AB} \\
        &\quad - \left[ \frac{1}{2}(\nabla\hat{\otimes} b)^C_{\ A}(\Omega^{-1}\chih)_{BC} + \frac{1}{2}(\nabla\hat{\otimes} b)^C_{\ B}(\Omega^{-1}\chih)_{AC} \right].
    \end{aligned}
\end{equation}
Note that the $v\partial_v$ terms here vanish at $v=0$. This system admits a unique solution provided the divergence of the shift, $\dv b$, is sufficiently small relative to the spectral parameter $\lambda$. The existence theory and regularity estimates for such equations were established in Section 4 of \cite{Rodnianski2019NakedSF}, allowing us to prescribe the initial values along $v=0$.

\vspace{1mm}

Once the boundary values on $v=0$ are suitably determined, we extend the approximating solution into the interior domain ($v>0$). For any fixed $V>0$ and $\lambda > 0$, the full Lie propagation equation \eqref{Lie_propagation_eq} admits a unique solution on the interval $v \in (0, V)$ corresponding to the prescribed boundary data at $v=0$.

\vspace{1mm}
A high degree of accuracy is required for this approximation because the curvature tensor $\alpha$ depends on $\nabla_4 \chih$. Following the strategy in \cite{Rodnianski2019NakedSF}, we define the approximating shear $\ot\chih$ by solving \eqref{Intro_Lie_prop_eq_chih} where all coefficients (like $b, \eta, \Omega$) are evaluated at their limiting values on $v=0$. The approximating tensor is then extended via:
\begin{equation*}
    \ot\chih_{AB}(u,v) = \frac{1}{2}\left( \ot\chih_{AC}(u,0)g^{CD}(u,0)g_{DB}(u,v) + \ot\chih_{BC}(u,0)g^{CD}(u,0)g_{DA}(u,v) \right).
\end{equation*}
We apply an analogous construction procedure to define $\ot{e_4\phi}$. However, $e_4\phi$ does not vanish as $\chih$ in the spherical symmetric case. To capture the increasing rate, we use the following equation to define it:
\begin{equation}
    \begin{aligned}
        &\left((\Omega\nabla_3)(0)+\frac{1}{2}(\Omega\tr\chib)(0)
+\frac{1}{2}\left((\Omega\tr\chib)^c(v)-(\Omega\tr\chib)^c(0)\right)
\right)\ot{\Omega^{-1}e_4\phi}\\
&\qquad-\left(4(\Omega\omegab)(0)+4\left((\Omega\omegab)^c(v)-(\Omega\omegab)^c(0)\right)\right)\ot{\Omega^{-1}e_4\phi}\\
=&
\left(\dv\nabla\phi-\frac{1}{2}(\Omega^{-1}\tr\chi)(\Omega e_3\phi)
+2\eta\cdot\nabla\phi\right)(0)\\
&\qquad-\frac{1}{2}\left((\Omega^{-1}\tr\chi\,\Omega e_3\phi)^c(v)-(\Omega^{-1}\tr\chi\,\Omega e_3\phi)^c(0)\right).
    \end{aligned}
\end{equation}
Here $(\cdot)^c$ stands for corresponding quantity in Christodoulou's solution, and $\psi(v)$ means $\psi$ takes value at $v$.
The success of this approximation scheme fundamentally hinges on the smallness of the shift vector, namely $|b| \ll \kappa$. In the context of the scalar field system, this condition ensures that the non-spherical perturbations (driven by the shift $b$) do not overwhelm the damping effect provided by the self-similarity parameter $\kappa$. Besides, we define the approximating curvature tensor $\ot{\Omega^{-2}\alpha}$ via
\[\ot{\Omega^{-2}\alpha}_{AB}=-(\Omega_0)^{-2}\Lie_{\partial_v}\ot{\Omega^{-1}\chih}_{AB},\]
which captures the singular part of $\nabla_4\chih$.

\subsubsection{Gauge Choice and the Scale-Invariant Weighted Norm System}

In the standard double null gauge $(u,v)$, the background solution exhibits singular scaling behavior near the origin. Specifically, expressed in terms of the self-similar variable $z = v/(-u)$, the background conformal factor behaves as $(\Omega^c)^2 \sim z^\kappa$ as $z \to 0$, while the connection coefficient scales according to $(\Omega^{-1}\omega)^c \sim v^{-(1-\kappa)}(-u)^{-\kappa}$.

\vspace{1mm}
To impose a consistent functional framework where the Ricci coefficients depend primarily on the time coordinate, we introduce a \textit{renormalized coordinate system}. Near the initial cone ($z \approx 0$, Region $\mathcal{R}_I$), we employ the transformation $(\hat{u}, \hat{v}) = (u, v^{1-\kappa})$. Symmetrically, away from $z=0$ (Regions $\mathcal{R}_{II}$ and $\mathcal{R}_{III}$), we utilize $(U, V) = (-(-u)^{1-\kappa}, v)$. For notational simplicity, within the analysis of Region $\mathcal{R}_I$, we will continue to use $(u, v)$ to refer to the renormalized gauge.

\vspace{1mm}
To close the energy estimates and cancel borderline terms arising from the background geometry, we introduce a system of \textit{weighted norms}. Since the background quantities scale as powers of $-u$ or $v$ along characteristics, we assign a specific weight---termed the \textit{signature}---to each geometric quantity. The signature $s(\psi)$ represents the \textit{expected} decay rate of the $L^2$ norm of a tensor $\psi$.

\vspace{1mm}
In Region $\mathcal{R}_I = \{0 < z < \epsilon_1, -1 < u < 0\}$, we define a modified $L^2$ norm over the 2-spheres $S_{u,v}$ by approximating the geometric volume form $d\text{Vol}_g$ with $(-u)^2 d\mathcal{V}_{\mathbb{S}^2}$. We incorporate an additional weight function $w(u,v)$ to adequately handle the fine structure of the error terms. For example, since the background Gaussian curvature evaluates to $\|K^c\|_{L^2(S_{u,v})} \sim (-u)^{-1}$, we assign the geometric signature $s(K) = -1$. The corresponding spatial norms take the form:
\begin{equation*}
    \lnm K\rnm_{\LS_w^2(S_{u,v})} := \left(\int_{\mathbb{S}^2} (-u)^{2-2s(K)} |K|^2 w(u,v) d\mathcal{V} \right)^{1/2},
\end{equation*}
and the pointwise norm is given by $\lnm \cdot\rnm_{\LS_w^\infty(S_{u,v})} := (-u)^{1-s(\cdot)}w^{1/2}\lnm \cdot\rnm_{L^\infty(\mathbb{S}^2)}$.

We control the evolution of the system using the following integrated energy norms along null hypersurfaces:
\[
\lnm\cdot \rnm_{\LS_w^2(H_{{u}}^{{v}})} := \left(\int_{0}^{{v}}\frac{1}{(-{u})^{1-\kappa}}\lnm\Omega\ \cdot \rnm^2_{\LS_w^2(S_{{u},v^\prime})}dv^\prime\right)^{1/2},
\]
\[
\lnm\cdot \rnm_{\LS_w^2(\Hb_{{v}}^{{u}})} := \left(\int_{-1}^{{u}}\frac{1}{-{u^\prime}}\lnm \cdot \rnm^2_{\LS_w^2(S_{{u^\prime},{v}})}du^\prime\right)^{1/2}.
\]
In Section \ref{Norms and Prework}, we will adapt the standard methodologies of the characteristic initial value problem---including Sobolev inequalities, transport estimates, and elliptic estimates for Hodge systems---to this weighted framework.

Finally, in Regions $\mathcal{R}_{II}$ and $\mathcal{R}_{III}$, we transition to the coordinates $(U,V)$. The weight factors $(-u)^{2-2s}$ are replaced by $V^{2-2S}$, where $S(\cdot)$ denotes the $V$-signature. This reflects the change in the dominant scaling variable from $-u$ (near $v=0$) to $v$ (near $u=0$).

\subsubsection{Hierarchy of Differences and Stability Estimates in Regions I and II}

To obtain optimal control over the perturbations, we employ a hierarchy of \textit{four distinct difference quantities}. The choice of difference variable depends on the geometric interaction and the region of spacetime.

\begin{enumerate}
    \item \textbf{Approximation Difference} ($\wt{\psi}$): 
    Let $\ot\psi$ be the approximating solution constructed via the Lie propagation equations. We define $\wt{\psi} = \psi - \ot{\psi}$. This quantity is the primary variable for the shear $\chih$, the scalar derivative $e_4\phi$, and the curvature components $\alpha, \beta$, capturing the deviation from the imagined self-similar evolution.
    
    \item \textbf{Background Difference} ($\ol{\psi}$): 
    We define $\ol{\psi} = \psi - \psi^c$, measuring the deviation from the actual Christodoulou background solution. This is effective in Regions $\R_{II}$ and $\R_{III}$. However, it is insufficient in Region $\R_I$ because it fails to capture the precise $z$-dependence with $z = v/(-u)$ near the initial cone $z=0$.
    
    \item \textbf{Initial Value Difference} ($\df{\psi}$): 
    To resolve the $z$-dependence issue in $\R_I$, we define $\df{\psi} := \ol{\psi} - \ol{\psi}|_{v=0}$. This subtracts the boundary value, simplifying the transport equations. We use this for quantities decoupled from the shear $\chih$.
    
    \item \textbf{Second-Order Difference} ($\ddfl{\psi}$): 
    For the expansion scalar $\tr\chib$, the estimates require a higher degree of cancellation near $z=0$ which $\df{\psi}$ cannot provide. We therefore define the second-order subtraction:
    \[
    \ddfl{\psi} := \ol{\psi} - \ol{\psi}|_{v=0} - v \cdot (\Lie_v \ol{\psi})|_{v=0}.
    \]
    This removes both the value and the linear growth rate at the boundary, allowing us to close the estimates with higher powers of $z$.
\end{enumerate}
\begin{figure}[h]
    \centering
    \includegraphics[width=0.8\textwidth]{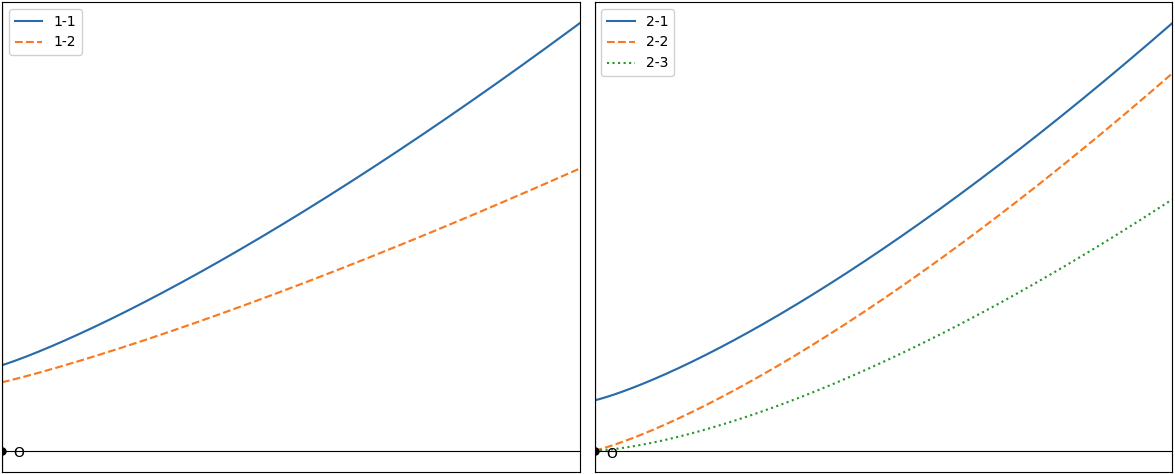}
    \caption{This is a visualized comparison of $\ol{\psi},\df{\psi},\ddfl{\psi}$. The curve 1-1 stands for $\psi$, and 1-2 for $\psi^c$. Curve 2-1 is defined by the difference $\ol{\psi}=\psi-\psi^c$, and 2-2, 2-3 represent the residues of zeroth and first order Taylor expansion of $\ol{\psi}$ respectively.}
    \label{fig:my_image}
\end{figure}

\paragraph{\textbf{Comparison with the Vacuum Case.}}
A critical distinction between this work and the vacuum result in \cite{Rodnianski2019NakedSF} lies in the regularity requirements. In the vacuum case, one can rely on the simpler difference $\psi - \psi|_{v=0}$. However, for the Einstein-scalar field system \eqref{ESE}, the curvature estimates at order $n$ depend on the $n+1$-th derivatives of the scalar field (via terms like $D_\mu \phi D_\nu \phi$). Consequently, we must control the $n+1$-th order energy estimates for the Ricci coefficients. 
More specifically, we use $\psi$ to represent the first order derivatives related to $g,\Omega,\phi$ and $\Psi$ the curvature components. The equations of Einstein vacuum systems are simply written as
$$\nabla_3\Psi_1+\mathcal{D}\Psi_2\sim\psi\Psi,\ \nabla_4\Psi_2-{}^{*}\mathcal{D}\Psi_1\sim\psi\Psi,$$
where $\mathcal{D}$ is a differencial operator with Hodge dual ${}^*\mathcal{D}$. However, within the Einstein scalar-field system, the right hand sides of these equations will be:
$$\nabla_3\Psi_1+\mathcal{D}\Psi_2\sim\psi\Psi+\psi\nabla\psi,\ \nabla_4\Psi_2-{}^{*}\mathcal{D}\Psi_1\sim\psi\Psi+\psi\nabla\psi.$$
This \textit{loss-of-derivative structure} necessitates the finer control provided by our hierarchy of differences $\df{\psi}$ and $\ddfl{\psi}$.

\vspace{2mm}
\paragraph{\textbf{Bootstrap Strategy in Region $\R_I$.}}
We improve the bootstrap assumptions by carefully designing the weights to exploit the decay of the background. In $\R_I$, we employ the weight $w(u,v) = ((-u)^{1-\k}/v)^{1-2\delta}$. Our goal is to prove estimates of the form:
\begin{equation*}
    \lnm \varphi_1\rnm^2_{\LS_w^2(H)}+\lnm \varphi_2\rnm^2_{\LS_w^2(\Hb)}\leq \epsilon^2\z^{2\tau},
\end{equation*}
where $\tau > 0$ determines the decay rate near the cone.

Consider a simplified model equation: $\nabla_{\partial_u}\varphi + \lambda\Omega\tr\chib\varphi = F$, where we assume $F$ has good estimate. In the background, $\Omega\tr\chib \approx -\frac{2}{-u}$. The weighted energy identity along the incoming null direction takes the form:
\begin{equation*}
    \lnm \varphi\rnm^2_{\LS_w^2(S_{u,v})} = \lnm \varphi\rnm^2_{\LS_w^2(S_{-1,v})} + \int_{-1}^u \mathcal{C}(u',v) \lnm \varphi\rnm^2_{\LS_w^2(S_{u',v})} du'+\mathcal{E}(F),
\end{equation*}
where the coefficient function satisfies $(-u)\mathcal{C} \approx -2 + 2s(\varphi) + (-u)\partial_u\log w + 4\lambda$, and $\mathcal{E}(F)$ denotes the terms arising from $F$.

If $\mathcal{C}$ is negative, the norm is bounded by the initial data. However, for bad-behavior quantities like $\Omega\tr\chib$ itself, the corresponding coupling constant $\lambda$ is large, making $\mathcal{C}$ positive. 
Provided the forcing term $F$ increases sufficiently slow, i.e., $\mathcal{E}(F) \lesssim \epsilon^2 z^{2\tau}$ and $2\tau(1-\k) > (-u)\mathcal{C}$, the integral term can be absorbed, yielding the desired bound $$\lnm \varphi\rnm^2_{\LS_w^2(S_{u,v})} \leq C \epsilon^2 \z^{2\tau}.$$ This leads to an extra requirement that 
$\lnm \varphi\rnm^2_{\LS_w^2(S_{-1,v})}\lesssim \epsilon^2 v^{2\tau}$,
so instead of using standard quantities, such as $\chi,\chib,$ we use differences defined above, like $\df{\chi},\ddfl{\chib}$, to provide the $\epsilon,v$ smallness. 

With $\LS_I$ standing for the norm in $\R_I$, where the weight function is $w_I=\z^{2\delta-1}$, we briefly write down the choice of $\tau$ for each quantity. There are three types of energy estimates in this region. Firstly, we have for the most singular one $\wt{\Omega^{-2}\alpha}=\Omega^{-2}\alpha-\ot{\Omega^{-2}\alpha}$ that 
\[\sum_{i\leq 5}\lnm \nabla^i\wt{\Omega^{-2}\alpha}\rnm^2_{\LS_I^2(H_u^v)}+\lnm \nabla^{i+1}\wt{\Omega^{-1}\chih}\rnm^2_{\LS_I^2(\Hb_v^u)}\lesssim\epsilon^2\z^{2\delta}.\]
For the Bianchi pair without $\alpha$, we have the second type
\[\sum_{i\leq 5}\lnm \nabla^i\df{\Psi_1}\rnm^2_{\LS_I^2(H_u^v)}+\lnm \nabla^{i+1}\df{\Psi_2}\rnm^2_{\LS_I^2(\Hb_v^u)}\lesssim\epsilon^2\z^{2\k/(1-\k)}.\]
For the scalar field and $\tr\chib$, we need more accurate estimate:
\begin{equation*}
    \begin{aligned}
        &\sum_{i\leq 6}\left\|\nabla^{i} \wt{\Omega^{-1}D_4\phi},\nabla^i\df{\nabla\phi},\nabla^i\ddfl{\Omega\tr\chib}\right\|_{\LS_I^{2}(H_u^v)}^{2}\\
&\quad\quad\quad+\left\|\nabla^{i} \ddfl{\Omega D_3\phi},\nabla^i\ddfl{\nabla\phi} \right\|_{\LS_I^{2}(\Hb_v^u)}^{2} \lesssim\epsilon^2\left(\frac{v}{(-u)^{1-\k}}\right)^{1+2\k/(1-\k)-2\delta}.
    \end{aligned}
\end{equation*}

\vspace{2mm}
\paragraph{\textbf{Bootstrap Strategy in Region $\R_{II}$.}}
In Region $\R_{II}$, where the self-similarity variable $z$ is no longer small, we switch to an exponential weight inspired by \cite{Rodnianski2019NakedSF}:
\[
W(U,V) = V^a \exp\left(-\frac{V^{1-\k}}{-U}D\right).
\]
The large parameter $D=D(\epsilon_1)$ provides a strong damping mechanism. The bootstrap estimates allow us to show that the error terms are bounded by $o(1) \epsilon^{2-\delta} W(U,V)$ as $D \to \infty$. This large parameter effectively suppresses the non-linear interactions, allowing us to extend the existence result towards the singularity. More precisely, we take 
$\partial_U \psi=\frac{1}{-U}\psi$
as a toy model. 
Supposing 
\[\lnm \psi\rnm^2_{\LS_W^2(S_{U,V})}\leq C_{boot} V^a\exp\left(\frac{V^{1-\k}}{-U}D\right),\]
then one can actually prove that 
\[\lnm \psi\rnm^2_{\LS_W^2(\Hb_V^U)}\leq o_D(1)C_{boot} V^a\exp\left(\frac{V^{1-\k}}{-U}D\right)\]
with $o_D(1)$ tends to $0$ as $D$ becoming large. The transport estimate further yields that 
\[\lnm \psi\rnm^2_{\LS_W^2(S_{U,V})}\lesssim \lnm \psi\rnm^2_{\LS_W^2(S_{-1,V})}+\lnm \psi\rnm^2_{\LS_W^2(\Hb_V^U)}\leq o_D(1)C_{boot} V^a\exp\left(\frac{V^{1-\k}}{-U}D\right).\]
For demonstration, here we assume the initial value of $\psi$ satisfies good estimate. By letting $D$ be sufficiently large, we can then improve the bootstrap assumption.

\subsubsection{Asymptotic Analysis in Region III: Renormalization and Bulk Integration}

Region $\R_{III}$ presents the most significant analytic challenge. As the solution approaches the inner Cauchy horizon ($z \to \infty$), the curvature component $\alpha$ is expected to blow up. Unlike in Region I, we cannot subtract a smooth approximating solution $\ot{\alpha}$ because the background itself becomes singular. Consequently, we cannot estimate the absolute difference directly.

To resolve this, we employ a strategy of renormalization. We assign specific weights $\Omega^{t(\psi)}$ to each Ricci coefficient $\psi$ and curvature component $\Psi$ (excluding the transversal components $\omegab$ and $\alphab$) to neutralize their expected blow-up rates. In Section \ref{Estimates in Region III}, we prove that the renormalized difference decays as:
\begin{equation*}
\lnm\Omega^{t(\psi)}\ol{\psi},\; \Omega^{t(D\phi)}\ol{D\phi},\; \Omega^{t(\Psi)}\ol{\Psi}\rnm_{\LS^\infty(S_{U,V})}\lesssim \epsilon^{1-\delta}.
\end{equation*}
For the geometric quantities $\eta, b,$ and $\Omega^2$, we achieve even stronger asymptotic controls. We demonstrate that for any $\lambda > 0$, provided $\epsilon$ is sufficiently small, we have:
\[
\lnm \eta,b,\ol{\Omega^2}\rnm_{\LS^\infty(S_{U,V})}\lesssim \epsilon^{1-\delta}\left(\frac{V}{(-U)^{1/(1-\k)}}\right)^\lambda.
\]
Also note that the weight function in this region is set to be
$$W_{III}(U,V) = V^{a-\tau(1-\k)}(-U)^\tau,$$ 
where $0 < a \ll \tau \ll 1$.

\vspace{2mm}
\paragraph{\textbf{From $\R_{II}$ to $\R_{III}$.}} The transition from $\R_{II}$ to $\R_{III}$ requires careful handling of the weight parameter $a$. We prove the estimates in $\R_{II}$ for an arbitrary positive weight power. This flexibility allows us to verify the initial conditions for the bulk norms $\LS^2(\R)$ and $\LS^2(\Rb)$ along the future boundary of $\R_{II}$ by utilizing the $\R_{II}$ estimates with parameter $a/2$.

\vspace{2mm}
\paragraph{\textbf{The Integrated Bulk Norm Method.}}
A \textit{new} feature of our analysis in $\R_{III}$ is introducing the \textit{spacetime integrated norms} to close the energy estimates. Standard energy estimates along hypersurfaces $H$ and $\Hb$ are insufficient in $\R_{III}$ because the damping terms in the transport equations are too weak to control the error growth directly.

To overcome this difficulty, we introduce norms integrated over the bulk regions $\R_{U,V}$ (outgoing bulk) and $\Rb_{U,V}$ (incoming bulk):
\[
\lnm \varphi\rnm_{\LS^2(\R_{U,V})} \sim \left(\int_{V} (V^\prime)^{-1} \lnm \varphi\rnm^2_{\LS^2(\Hb_{V^\prime}^U)} dV^\prime \right)^{\frac{1}{2}},\] 
\[
\lnm \varphi\rnm_{\LS^2(\Rb_{U,V})} \sim \left(\int_{U} V^{-1+\k} \lnm \varphi\rnm^2_{\LS^2(H_{U^\prime}^V)} dU^\prime \right)^{\frac{1}{2}}.
\]
By integrating over the bulk, we exploit the structure of the $\nabla_3$ equations to gain an effective decay factor of $(-U)$, which allows us to break the circularity in the estimates. We apply this method to the Bianchi pairs $(\Psi_1, \Psi_2)$. Instead of estimating them individually on the cones, we control the coupled quantity:
\[
\lnm \Psi_1 \cdot z^\lambda\rnm_{\LS^2(\R)} + \lnm \Psi_2 \cdot z^\lambda\rnm_{\LS^2(\Rb)},
\]
where $z = V/(-U)^{1/(1-\k)}$ and $\lambda \in [0, (1-\k)\tau/2)$. This novel \textit{bulk-first} strategy is crucial for us to establish the global existence up to the horizon.

\vspace{2mm}
\paragraph{\textbf{Bootstrap Strategy in $\R_{III}$.}} 
We first illustrate our target estimates, which can be roughly divided into four types in this part. 
Firstly, for those connection components except $\Omega^{-1}\chibh$, and derivatives of scalar field with $\nabla_3$ equation, for example $\Omega^{-1}\tr\chib,\etab,\Omega e_4\phi$, we have estimates up to $4$th order derivative
\[\sum_{i\leq 4}\lnm \nabla^i\ol{\psi}\rnm^2_{\LS_1^2(S_{U,V})}\lesssim \epsilon^{2-2\delta}.\]
Moreover, for $\Omega^{-1}e_3\phi,\Omega^{-1}\chibh$, the estimates can only be proved up to $3$rd order:
\[\sum_{i\leq 3}\lnm \nabla^i\ol{\psi}\rnm^2_{\LS_1^2(S_{U,V})}\lesssim \epsilon^{2-2\delta}.\]
Secondly, the standard spherical norms estimates can be derived as
\[\sum_{i\leq 5}\lnm \nabla^i\ol{\psi}\rnm^2_{\LT^2(S_{U,V})}\lesssim \epsilon^{2-2\delta}V^a\]
for connection components and scalar-field derivatives.
Here $\LT$ corresponds to the weight function $W_{III}=V^a\left(\frac{-U}{V^{1-\k}}\right)^\tau$. If we rewrite the second type in $\LS_1$ norm, the right hand side should be $\epsilon^{2-2\delta}\left(\frac{V^{1-\k}}{-U}\right)^\tau$, which is worse than the first type.
The third type is about $b,\Omega,\eta$. We will prove that for any fixed $\lambda>0$, it holds that 
\[\sum_{i\leq 3}\lnm \nabla^i\left(b,\eta,\ol{\Omega}\right)\rnm^2_{\LS_1^2(S_{U,V})}\lesssim \epsilon^{2-2\delta} \left(\frac{V^{1-\k}}{-U}\right)^\lambda,\]
which is weaker than the first type but stronger than the second type. Lastly we will prove the energy estimates for Bianchi pairs, such as $(\nabla(\Omega e_4\phi),\nabla^2\phi)$, $(\Omega^2\alpha,\Omega\beta)$,
\[\sum_{i\leq 5}\lnm \nabla^i\ol{\Psi_1}\rnm^2_{\LT^2(H_U^V)}+\lnm \nabla^i\ol{\Psi_2}\rnm^2_{\LT^2(\Hb_V^U)}\lesssim\epsilon^{2-2\delta}V^a.\]
The key strategy for the energy estimate is to do another bulk energy estimate with the $\LS^2(\R_{U,V})$ and $\LS^2(\Rb_{U,V})$ norms defined above. Specifically, we will show that 
\[\sum_{i\leq 5}\lnm \nabla^i\ol{\Psi_1}\cdot\left(\frac{V^{1-\k}}{-U}\right)^\lambda\rnm^2_{\LT^2(H_U^V)}+\lnm \nabla^i\ol{\Psi_2}\cdot\left(\frac{V^{1-\k}}{-U}\right)^\lambda\rnm^2_{\LT^2(\Hb_V^U)}\ll \epsilon^{2-2\delta}V^a,\]
which provides the smallness to close the energy bootstrap estimates.

\subsection{The Naked Singularity and Singular Inner Cauchy Horizon}
With the global existence and a priori estimates, we can control the difference between our solution and Chrostodoulou's background solution. However, the difference $\Omega^2-(\Omega^2)^c$ does not guarantee the uniform boundness for the affine length of the related null geodesics. We will further estimate $\Omega^2(U,V)\Omega^{-2}(-1,V)-\left(\Omega^2(U,V)\Omega^{-2}(-1,V)\right)^c$. From the estimates of $\ol{\Omega\omega}$, the lapse $\Omega$ satisfies
\[\left|\ol{\log\left(\Omega^2(U,V)\Omega^{-2}(-1,V)\right)}\right|\leq\log\left((-U)^{-\epsilon^{1/4}/(1-\k)}\right)+\epsilon^{1/4}.\]
This indicates that
\[\frac{1}{2}(-U)^{-\frac{\epsilon^{1/4}}{1-\k}}\leq \Omega^{2}(U,V)\Omega^{-2}(-1,V) \leq 2(-U)^{-\frac{\epsilon^{1/4}}{1-\k}},\]
and thus the affine length is uniformly bounded, which means the singularity is naked. 

\vspace{1mm}
Moreover, with the control of difference $(\Omega^{-1}e_3)^2\phi-\left((\Omega^{-1}e_3)^2\phi\right)^c$, we further prove that 
\begin{equation}
    (-U)^{\frac{2-4\k+2\delta}{1-\k}}\left|\partial_U^2\phi\right|^2(U,V)\lesssim_{\epsilon, V} 1,\ \ \lim_{-U\rightarrow 0}(-U)^{\frac{2-4\k-2\delta}{1-\k}}\left|\partial_U^2\phi\right|^2(U,V) =\infty,
\end{equation}
which yields the $C^{1,\frac{\k+\delta}{1-\k}}$-inextendibility across the inner Cauchy Horizon. We remark that the above blow-up estimates rely crucially on the growth of $\partial_U^2\phi^c$ and on the precise estimates we obtain around the singular background solution. In the Einstein vacuum setting, the available information on the singular background solution is more limited, and establishing the corresponding discontinuity and inextendibility statements across the inner Cauchy horizon remains a delicate question.

\subsection{Outline of the Paper}
The rest of the paper is organized as follows.

Section \ref{Preliminaries} fixes the double null formalism for the Einstein-scalar field equations. We record the null structure equations, the renormalized Bianchi equations, the relevant properties of Christodoulou's self-similar background in the double null gauge, and the characteristic local existence theorem used to start the construction.

Section \ref{Initial Values and Approximating Solution} constructs the characteristic initial data. We solve the degenerate Lie propagation equations on the initial incoming cone, construct the approximating shear and scalar-field derivative along the outgoing cone, and prescribe data that remain close to the Christodoulou background while allowing non-spherical perturbations.

Section \ref{Norms and Prework} develops the weighted functional framework. We introduce the regional gauges, the scale-invariant signatures, the weighted sphere, null, and bulk norms, and the Sobolev, transport, and elliptic estimates used throughout the proof.

Section \ref{Estimates in Region I} carries out the Region I analysis near the initial cone. This section proves the weighted estimates for the four-type difference hierarchy, including the second-order subtraction for $\tr\chi$, and culminates in the interface estimates on the fixed self-similar boundary.

Section \ref{Estimates in Region II} treats Region II. Using the Region I interface data, we prove high-order stability estimates around the Christodoulou background with the $V$-signature norm system and the arbitrary factor $V^a$. The section also contains the top-order scalar-field estimates, the auxiliary treatment of $\Omega\omega$ and $\Omega^{-1}\omega$, and the renormalized mass-aspect estimates needed to recover the Ricci coefficients.

Section \ref{Estimates in Region III} proves the Region III estimates near the inner Cauchy horizon. The analysis uses weighted background-renormalized differences and spacetime-integrated bulk norms to close the Bianchi and scalar-field estimates in the singular asymptotic region.

Section \ref{Further Result near Cauchy Inner Horizon} identifies the limiting causal and regularity structure. We prove the affine-length estimate giving the incompleteness of future null infinity and derive the blow-up of the incoming transverse second derivative of the scalar field. This yields the singular inner Cauchy horizon and the $C^{1,\frac{\k+\delta}{1-\k}}$ inextendibility statement.

\subsection{Acknowledgments}
XA is supported by MOE Tier 1 grant A-8002933-00-00 and MOE Tier 2 grant A-8000977-00-00. SW is supported by the NUS President Graduate Fellowship.

\section{Preliminaries}\label{Preliminaries}

In this section, we review the geometric setup of the double null foliation alongside the relevant field equations. Additionally, we summarize the fundamental properties of the background solution and formulate the characteristic initial value problem.
\subsection{Equations in the Double Null Foliation}

We work within the double null framework defined in Definition \ref{def:double_null}. Let $D$ denote the Levi-Civita connection on $\mathcal{M}$ and $\nabla$ the induced connection on the surfaces $S_{u,v}$. The projections of the null covariant derivatives $D_{3}$ and $D_{4}$ onto $S_{u,v}$ are denoted by $\nabla_3$ and $\nabla_4$. Let $\slashed{\epsilon}$ be the volume form associated with the induced metric $\slashed{g}$.

We define the Ricci coefficients on $S_{u,v}$ for indices $A,B\in\{1,2\}$ as follows:
\[
\chi_{A B}=g\left(D_A e_4, e_B\right), \quad \underline{\chi}_{A B}=g\left(D_A e_3, e_B\right), \quad 
\eta_A=-\frac{1}{2} g\left(D_3 e_A, e_4\right), \quad \underline{\eta}_A=-\frac{1}{2} g\left(D_4 e_A, e_3\right),
\]
\[
{\omega}=-\frac{1}{4} g\left(D_4 e_3, e_4\right), \quad \underline{\omega}=-\frac{1}{4} g\left(D_3 e_4, e_3\right),\quad 
\zeta_A=\frac{1}{2}g\left(D_A e_4,e_3\right).
\]
Direct computation yields the derivatives of the conformal factor:
\[
\nabla_4\log\Omega^2=-4\omega,\, \nabla_3\log\Omega^2=-4\omegab,\, \nabla\log\Omega^2=\eta+\etab.
\]

We decompose the Weyl curvature $W$ into its null components:
\[
\alpha_{AB}=W_{4A4B},\, \alphab_{AB}=W_{3A3B},\, \beta_A=\frac{1}{2}W_{A434},\, \betab_A=\frac{1}{2}W_{A334},\, \rho=\frac{1}{4}W_{3434},\, \sigma=\frac{1}{4}{}^*W_{3434}.
\]
Here the Weyl tensor is defined as
\begin{equation}
    W_{i k l m}:=R_{i k l m}+\frac{1}{2}\left(R_{i m} g_{k l}-R_{i l} g_{k m}+R_{k l} g_{i m}-R_{k m} g_{i l}\right)+\frac{1}{6} R\left(g_{i l} g_{k m}-g_{i m} g_{k l}\right),
\end{equation}
and ${}^*W_{ijkl}=\frac{1}{2}\slashed{\epsilon}_{ij}^{\ \ mn}W_{mnkl}$ denotes the Hodge dual. 

\vspace{2mm}
To avoid the second derivatives of the scalar field in the evolution equations, we introduce the \textit{renormalized curvature components}:
\begin{equation}
        {\beta^r}_A = \beta_A-\frac{1}{2} \nabla_4 \phi \nabla_A \phi, \,\, {\underline{\beta}^r}_A=\underline{\beta}_A+\frac{1}{2} \nabla_3 \phi \nabla_A \phi,\,\,
        \sigma^r = \sigma+\frac{1}{2} \hat{\chi} \wedge \underline{\hat{\chi}}.
\end{equation}
Note that for the Einstein-scalar field system, the Ricci tensor satisfies $\text{Ric}_{\mu\nu} = D_\mu\phi D_\nu\phi$.

\paragraph{\textbf{Geometric Operators and Null Structure.}}
For 1-forms $\phi^{(1)}, \phi^{(2)}$ and symmetric 2-tensors, we denote the symmetric traceless product $\hat{\otimes}$ and the wedge product $\wedge$ as:
\begin{equation*}
    \left(\phi^{(1)} \widehat{\otimes} \phi^{(2)}\right)_{A B}:=\phi_A^{(1)} \phi_B^{(2)}+\phi_B^{(1)} \phi_A^{(2)}-\sg_{A B}\left(\phi^{(1)} \cdot \phi^{(2)}\right),
\end{equation*}
\begin{equation*}
    \phi^{(1)} \wedge \phi^{(2)}:=\slashed{\epsilon}^{A B}\sg^{C D} \phi_{A C}^{(1)} \phi_{B D}^{(2)}.
\end{equation*}
We also define the standard divergence, curl, and trace operators on $S_{u,v}$:
\begin{equation*} 
    \begin{aligned}
    (\operatorname{div} \phi)_{A_1 \ldots A_r}:=\nabla^B \phi_{B A_1 \ldots A_r}, \quad
    (\operatorname{curl} \phi)_{A_1 \ldots A_r}:=\slashed{\epsilon}^{B C} \nabla_B \phi_{C A_1 \ldots A_r} ,
    \end{aligned}
\end{equation*}
\begin{equation*}
    (\operatorname{tr} \phi)_{A_1 \ldots A_{r-1}}:=g^{B C} \phi_{B C A_1 \ldots A_{r-1}} .
\end{equation*}

Let $\chih$ and $\chibh$ denote the traceless parts of $\chi$ and $\chib$ respectively. The Ricci coefficients satisfy the following null structure equations:
\begin{equation}
    \begin{aligned}
    \nabla_4 \operatorname{tr} \chi+\frac{1}{2}(\operatorname{tr} \chi)^2 & =-|\hat{\chi}|^2-2 \omega \operatorname{tr} \chi-D_4 \phi D_4 \phi, \\
    \nabla_4 \hat{\chi}_{A B}+\operatorname{tr} \chi \hat{\chi}_{A B} & =-2 {\omega} \hat{\chi}_{A B}-\alpha_{A B}, \\
    \nabla_3 \operatorname{tr} \underline{\chi}+\frac{1}{2}(\operatorname{tr} \underline{\chi})^2 & =-2 \underline{\omega} \operatorname{tr} \underline{\chi}-|\underline{\hat{\chi}}|^2-D_3 \phi D_3 \phi, \\
    \nabla_3 {\underline{\hat\chi}}_{A B}+\operatorname{tr} \underline{\chi} \underline{\hat{\chi}}_{A B} & =-2 {\omegab} \chibh_{A B}-\underline{\alpha}_{A B}, \\
    \nabla_4 \operatorname{tr} \underline{\chi}+ \operatorname{tr} \chi \operatorname{tr} \underline{\chi} & =2 {\omega} \operatorname{tr} \underline{\chi}-2 K+2 \operatorname{div} \underline{\eta}+2|\underline{\eta}|^2+ 2D_A \phi D^A \phi, \\
    \nabla_4 \underline{\hat{\chi}}_{A B}+\frac{1}{2} \operatorname{tr} \chi \hat{{\chib}}_{A B} & =(\nabla \hat{\otimes} \underline{\eta})_{A B}+2 {\omega} \underline{\hat{\chi}}_{A B}-\frac{1}{2} \operatorname{tr} \underline{\chi} {\hat{\chi}}_{A B}+\left(\underline{\eta} \widehat{\otimes}\etab\right)_{A B}+\frac{1}{2}(\nabla\phi\hat\otimes\nabla\phi)_{AB}, \\
    \nabla_3 \operatorname{tr} \chi+ \operatorname{tr} \underline{\chi} \operatorname{tr} \chi & =2 \underline{\omega} \operatorname{tr} \chi-2K+2 \operatorname{div} \eta+2|\eta|^2+ 2D_A \phi D^A \phi, \\
    \nabla_3 \hat{\chi}_{A B}+\frac{1}{2} \operatorname{tr} \underline{\chi} \hat{\chi}_{A B} & =(\nabla \hat{\otimes} \eta)_{A B}+2 \underline{\omega} \hat{\chi}_{A B}-\frac{1}{2} \operatorname{tr} \chi \underline{\hat{\chi}}_{A B}+(\eta \widehat{\otimes} \eta)_{A B}+\frac{1}{2}(\nabla\phi\hat\otimes\nabla\phi)_{AB}.
    \end{aligned}
\end{equation}
The torsion coefficients and conformal weights satisfy:
\begin{equation}
    \begin{aligned}
    \nabla_4 \eta_A & =-\chi_{A B} \cdot(\eta-\underline{\eta})_B-\beta_A-\frac{1}{2} D_A \phi D_4 \phi, \\
    \nabla_3 \underline{\eta}_A & =-\underline{\chi}_{A B} \cdot(\underline{\eta}-\eta)_B+\underline{\beta}_A-\frac{1}{2} D_A \phi D_3 \phi, \\
    \nabla_4 \underline{\omega} & =2 {\omega} \underline{\omega}-\eta_A \cdot \underline{\eta}^A+\frac{1}{2}|\eta|^2-\frac{1}{2}K-\frac{1}{8}\tr\chi\tr\chib+\frac{1}{4}\chih\cdot\chibh+\frac{1}{4}D_3\phi D_4\phi+\frac{1}{4}D_A\phi D^A\phi , \\
    \nabla_3 {\omega} & =2 {\omega} \underline{\omega}-\eta_A \cdot \underline{\eta}^A+\frac{1}{2}|\underline{\eta}|^2-\frac{1}{2}K-\frac{1}{8}\tr\chi\tr\chib+\frac{1}{4}\chih\cdot\chibh+\frac{1}{4}D_3\phi D_4\phi+\frac{1}{4}D_A\phi D^A\phi.
    \end{aligned}
\end{equation}
The Gauss-Codazzi equations relate the intrinsic geometry to the null coefficients:
\begin{equation}\label{Gauss_Codazzi}
    \begin{aligned}
    &\operatorname{div} \hat{\chi}_A =\frac{1}{2} \nabla_A \operatorname{tr} \chi-\zeta^B\cdot\left(\hat{\chi}_{B A}-\frac{1}{2} \operatorname{tr} \chi g_{B A}\right)-\beta_A^r, \\
    &\operatorname{div} \underline{\hat{\chi}}_A =\frac{1}{2} \nabla_A \operatorname{tr} \underline{\chi}+\zeta^B \cdot\left(\underline{\hat{\chi}}_{B A}-\frac{1}{2} \operatorname{tr} \underline{\chi} g_{B A}\right)+\underline{\beta}_A^r, \\
    &\operatorname{curl} \eta  =-\operatorname{curl} \underline{\eta}=\sigma^r, \\
    &K =-\rho-\frac{1}{4} \operatorname{tr} \chi \operatorname{tr} \underline{\chi}+\frac{1}{2} \hat{\chi} \cdot \underline{\hat{\chi}}+\frac{1}{6} D_3 \phi D_4 \phi+\frac{1}{3} D_A \phi D^A \phi .
    \end{aligned}
\end{equation}

\paragraph{\textbf{Wave Equation.}}
The wave equation $\square_g \phi=0$ decomposes as:
\begin{equation}\label{intrinsic_wave_equation}
    \begin{aligned}
    & e_3\left(D_4 \phi\right)+\frac{1}{2}  \operatorname{tr} \underline{\chi}  D_4 \phi= \Delta_g \phi+2\omegab D_4\phi -\frac{1}{2}  \operatorname{tr} \chi  e_3 \phi+2 \eta^A e_A \phi, \\
    & e_4\left(D_3 \phi\right)+\frac{1}{2}  \operatorname{tr} \chi  D_3 \phi= \Delta_g \phi+2\omega D_3\phi-\frac{1}{2}  \operatorname{tr} \underline{\chi}  D_4 \phi+2  \underline{\eta}^A e_A \phi.
    \end{aligned}
\end{equation}

\paragraph{\textbf{Renormalized Bianchi Equations.}}
Finally, we present the evolution equations for the renormalized curvature components. For the Gaussian curvature $K$ and renormalized curl $\sigma^r$, there are transport equations:
\begin{equation}\label{Equations_K}
    \begin{aligned}
        \Omega\nabla_3 K+\Omega\tr\chib K &=\operatorname{div} \left(\Omega\underline{\beta}^r\right)-\frac{1}{2}\operatorname{div}\left(\eta\Omega\tr\chib\right)+\operatorname{div}\left(\eta\cdot\Omega\chibh\right), \\
        \Omega\nabla_4 K+\Omega\tr\chi K &=-\operatorname{div} \left(\Omega\beta^r\right)+\frac{1}{2}\operatorname{div}\left(\etab\Omega\tr\chi\right)-\operatorname{div}\left(\etab\cdot\Omega\chih\right),
    \end{aligned}
\end{equation}
and
\begin{equation}\label{Equations_reduced_sigma}
  \begin{aligned}
      \Omega\nabla_3\sigma^r+\Omega\tr\chib\sigma^r &=\operatorname{curl}(\Omega\underline{\beta}^r)+\operatorname{curl}\left(\Omega\chibh\cdot\eta\right), \\
      \Omega\nabla_4\sigma^r+\Omega\tr\chi\sigma^r &=-\operatorname{curl}(\Omega\beta^r)+\operatorname{curl}\left(\Omega\chih\cdot\etab\right).
  \end{aligned}
\end{equation}
For reduced $\beta^r$ and $\betab^r$, we have equations: 
\begin{equation}\label{Equations_nab3beta_nab4betab}
    \begin{aligned}
        \nabla_3 \beta^r_A &+\left(\tr\chib-2\omegab\right) \beta^r_A = -\nabla_A K+{}^*\nabla_A\sigma^r+2(\chih\cdot\betab)_A \\
        &+\frac{1}{2}\left(\nabla_A(\chih\cdot\chibh)-{}^*\nabla_A(\chih\wedge\chibh)\right)-\frac{1}{4}\nabla_A\left(\tr\chi\tr\chib\right)\\
        &+3(\eta\rho+{}^*\eta\sigma)_A + \nabla_A\left(g^{CD}\text{Ric}_{CD}\right)-\nabla^B \text{Ric}_{BA}-\frac{1}{2}\tr\chib \text{Ric}_{4A},\\
        \nabla_4 \underline{\beta}^r_A &+\left(\tr\chi -2\omega\right)\underline{\beta}^r_A =\nabla_A K+{}^*\nabla_A\sigma^r+2(\chibh\cdot\beta)_A \\
        &-\frac{1}{2}\left(\nabla_A(\chih\cdot\chibh)+{}^*\nabla_A(\chih\wedge\chibh)\right)+\frac{1}{4}\nabla_A\left(\tr\chi\tr\chib\right)\\
        &-3\left(\etab\rho-{}^*\etab\sigma\right)_A -\nabla_A(g^{CD}\text{Ric}_{CD})+\nabla^B \text{Ric}_{BA}+\frac{1}{2}\tr\chi \text{Ric}_{3A},
    \end{aligned}
\end{equation}
and 
\begin{equation}\label{Equations_nab3betab_nab4beta}
    \begin{aligned}
\nabla_3 \underline{\beta}^r_A &+\left(2 \operatorname{tr} \chib+2\omegab\right)\underline{\beta}^r_A =-\operatorname{div} \underline{\alpha}_A+\tr\chib\text{Ric}_{3A}+(\underline{\eta} \cdot \underline{\alpha})_A+\frac{1}{2}D_A \text{Ric}_{33},\\
\nabla_4 \beta^r_A &+\left(2 \operatorname{tr} \chi+2\omega\right)\beta^r_A =\operatorname{div} {\alpha}_A+\tr\chi\text{Ric}_{4A}+({\eta} \cdot {\alpha})_A-\frac{1}{2}D_A \text{Ric}_{44}.
    \end{aligned}
\end{equation}
The equations for $\alpha$ and $\alphab$ involve the mixed $\beta$ terms:
\begin{equation}\label{Equations_nab3alpha_nab4alphab}
    \begin{split}
        &\nabla_3\alpha +\left(\frac{1}{2}\tr\chib-4\omegab\right)\alpha -\nabla\hat\otimes\beta=-3(\chih\rho+{}^*\chih\sigma)+(\zeta+4\eta)\hat\otimes\beta,\\
        &\nabla_4\alphab+\left(\frac{1}{2}\tr\chi-4\omega\right)\alphab +\nabla\hat\otimes\betab=-3(\chibh\rho-{}^*\chibh\sigma)+(\zeta-4\etab)\hat\otimes\betab.
    \end{split}
\end{equation}

\begin{remark}
    The original equations for $\beta,\betab,\sigma,\rho$ contain many terms related to $D_3 D_3\phi$ and $D_4 D_4\phi$, which are expected to blow up near the singularity. By using the renormalized quantities $\beta^r$ and $\sigma^r$, we effectively remove these second-derivative terms from the evolution system.
\end{remark}

\begin{remark}
    For the mass aspect functions $\operatorname{div}\etab-K$ and $\operatorname{div}\eta-K$, we have the following equations:
    \begin{equation}
        \begin{aligned}
            &\Omega\nabla_3\left(\operatorname{div} \etab-K\right)+\Omega\tr\chib\left(\operatorname{div} \etab-K\right)=-2\operatorname{div}(\Omega\chibh\cdot\etab)-\operatorname{div}(\Omega\Rc_{3\cdot})+2\operatorname{div}(\eta\cdot\Omega\chibh),\\
            &\Omega\nabla_4\left(\operatorname{div} \eta-K\right)+\Omega\tr\chi\left(\operatorname{div} \eta-K\right)=-2\operatorname{div}(\Omega\chi\cdot\eta)-\operatorname{div}(\Omega\Rc_{4\cdot})+2\operatorname{div}(\etab\cdot\Omega\chih).
        \end{aligned}
    \end{equation}
\end{remark}

\subsection{Christodoulou's Naked Singularity}\label{subsec:Christodoulou_Background}

The background spacetime for our perturbation analysis is the spherically symmetric, continuously self-similar naked singularity solution constructed by Christodoulou in his seminal work \cite{Christodoulou1994}. This construction effectively reduces the Einstein-scalar field equations to a system of ordinary differential equations with respect to the self-similar variable, which are subsequently analyzed employing standard dynamical systems techniques.

A spherically symmetric configuration is described by the triple $(r, m, \phi)$, where $r$ is the area radius, $m$ is the Hawking mass, and $\phi$ is the scalar field. The system admits a scaling symmetry under the group $\mathbb{R}_+ \times \mathbb{R}$, where an element $(a, b)$ maps a solution $(r, m, \phi)$ to $(ar, am, \phi+b)$. A solution is said to be **$k$-continuously self-similar ($k$-CSS)** if there exists a homothetic Killing vector field $S$ such that the spacetime is invariant under the one-parameter diffeomorphism group $\Phi_a$ generated by $S$, up to a shift in the scalar field:
\[
\Phi_a^*(g) = a^2 g, \quad \Phi_a^*(\phi) = \phi + \sqrt{2\kappa} \log a.
\]
In terms of the Lie derivative, this condition is expressed as:
\begin{equation}\label{eq:self_similar_Lie}
    \mathcal{L}_S g = 2g, \quad S(\phi) = \sqrt{2\kappa}.
\end{equation}
Although the original construction in \cite{Christodoulou1994} employed self-similar Bondi coordinates, we adapt this exact solution to the double null gauge to facilitate our stability analysis. 
We summarize the key analytic properties of this background solution relevant to our proof, and detailed derivations concerning the background geometry in double null coordinates are provided in Appendix \ref{Appendix_Chr94} (see also Appendix A of \cite{singh2024construction}).

\begin{theorem}[Properties of the Background Solution]\label{thm:background_properties}
    For any fixed $\kappa \in (0, 1/3)$, there exists an asymptotically flat, spherically symmetric solution $(\mathcal{M}^c, g^c, \phi^c)$ to the Einstein-scalar field equations in the causal domain $\mathcal{Q} = \{u < 0, v > u\}$, satisfying the following properties:
    
    \begin{enumerate}
        \item \textbf{Global Structure:} The spacetime admits a global double null coordinate system satisfying the symmetry condition \eqref{eq:self_similar_Lie}.
        
        \item \textbf{Regularity:} The scalar function $\phi$ and the conformal factor $\Omega^2$ are smooth in the interior $\mathcal{Q} \setminus \{v=0\}$. Near the boundaries $\{v=0\}$ and $\{u=0\}$, the solution belongs to the H\"older class $C^{1, \frac{\kappa}{1-\kappa}}$.
        
        \item \textbf{Asymptotics at the Past Null Cone ($v=0$):} 
        The geometric quantities exhibit specific power-law behaviors. Along the initial incoming cone $v=0$:
        \begin{equation*}
            \begin{split}
                 &\lim_{v\rightarrow 0} v^\kappa \Omega^2(u,v) = (-u)^\kappa, \quad \Omega\operatorname{tr}\underline{\chi}(u,0) = \frac{2}{u}, \quad \Omega^{-1}\operatorname{tr}\chi(u,0) = -\frac{2}{(1+\kappa)u}, \\
                 &K(u,0) = (-u)^{-2}, \quad \Omega D_3\phi(u,0) = \frac{\sqrt{2\kappa}}{u}, \\
                 &\Omega^{-1}D_4\phi (u,0) = \frac{1}{(1+\kappa)u}\sqrt{\frac{2}{\kappa}}, \quad \Omega^{-1}D_4\log\left(v^\kappa\Omega^2\right)(u,0) = \frac{2}{(1+\kappa)u}.
            \end{split}
        \end{equation*}
        
        \item \textbf{Asymptotics at the Cauchy Horizon ($u=0$):} 
        Along the singular boundary $u=0$, the solution satisfies:
        \begin{equation*}
            \begin{split}
                &\lim_{u\rightarrow 0} (-u)^\kappa \Omega^2(u,v) = v^\kappa, \quad \Omega\operatorname{tr}\chi(0,v) = \frac{2}{v}, \quad \Omega^{-1}\operatorname{tr}\underline{\chi}(0,v) = -\frac{2}{(1+\kappa)v}, \\
                &K(0,v) = v^{-2}, \quad \Omega D_4\phi(0,v) = \frac{\sqrt{2\kappa}}{v}, \\
                &\Omega^{-1}D_3\phi(0,v) = \frac{1}{(1+\kappa)v}\sqrt{\frac{2}{\kappa}}, \quad \Omega^{-1}D_3\log ((-u)^\kappa \Omega^2)(0,v) = \frac{2}{(1+\kappa)v}.
            \end{split}
        \end{equation*}
        
        \item \textbf{Singular Behavior:} 
        The second derivatives of the scalar field blow up near the boundaries. Specifically,
        \[
        |(\Omega^{-1}e_4)^2\phi| \sim \left(\frac{-u}{v}\right)^{1-2\kappa}(-u)^{-2} \quad \text{as } \frac{v}{-u} \to 0,
        \]
        \[
        |(\Omega^{-1}e_3)^2\phi| \sim \left(\frac{v}{-u}\right)^{1-2\kappa}v^{-2} \quad \text{as } \frac{-u}{v} \to 0.
        \]
    \end{enumerate}
\end{theorem}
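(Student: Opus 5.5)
The plan is to reduce everything to Christodoulou's ODE system in the self-similar variable and then translate the result into the double null gauge. By spherical symmetry and self-similarity we write $r=(-u)\,\hat r(z)$, $\Omega^2=\hat\Omega^2(z)$, and $\phi=\sqrt{2\kappa}\log(-u)+\hat\phi(z)$ (up to additive constants), with $z=v/(-u)$. This ansatz is consistent with $\mathcal{L}_Sg=2g$ and $S\phi=\sqrt{2\kappa}$, where $S=u\partial_u+v\partial_v$. Substituting into the spherically symmetric reduction of \eqref{ESE} (the Raychaudhuri equations for $\partial_u^2 r$, $\partial_v^2 r$, the wave equation for $r\phi$, and the equation for $\partial_u\partial_v\log\Omega$) yields an autonomous ODE system in $\log z$. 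Existence of a global solution on $z\in(0,\infty)$ is quoted from \cite{Christodoulou1994}, where the same system is solved in Bondi coordinates via phase-plane analysis. We then change variables to double null coordinates, as in Appendix A of \cite{singh2024construction}. Item (1) and the smoothness of item (2) in the interior follow directly.

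For the asymptotics in items (3) and (4), we linearize the ODE at the two critical points corresponding to $z=0$ and $z=\infty$. Near $z=0$ the characteristic exponents are $0$ and $\kappa/(1-\kappa)$ after passing to the variable $v^{1-\kappa}$. This is exactly why $\hat v=v^{1-\kappa}$ is the natural regularizing coordinate. Normalizing $u$ on $v=0$ so that $v^\kappa\Omega^2\to(-u)^\kappa$, the leading-order values follow algebraically:
\begin{itemize}
\item $r(u,0)=-u$ gives $\Omega\tr\chib=2/u$ and $K=(-u)^{-2}$;
\item $S\phi=\sqrt{2\kappa}$ at $v=0$ gives $\Omega D_3\phi=\sqrt{2\kappa}/u$;
\item the $\partial_v$-quantities $\Omega^{-1}\tr\chi$, $\Omega^{-1}D_4\phi$ and $\Omega^{-1}D_4\log(v^\kappa\Omega^2)$ come from the first-order Taylor coefficients in $\hat v$. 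These are determined by the constraint equations along $v=0$ (the $\nabla_3\tr\chi$ and wave equations reduced to ODEs in $u$), and produce the factor $1/(1+\kappa)$.
\end{itemize}
The statement at $u=0$ is obtained symmetrically. In Christodoulou's solution the roles of $u$ and $v$ near the future boundary mirror those near the past cone, with exponent $\kappa/(1-\kappa)$ in $(-u)^{1-\kappa}$. We verify this by running the same linearization at $z=\infty$.

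The Hölder regularity in item (2) and the blow-up in item (5) come from the non-integer exponent. Near $z=0$ the solution takes the form $\hat\phi=c_0+c_1\hat v+c_2\hat v^{1/(1-\kappa)}+\dots$ in suitable variables. Expressing $\phi$ in $v$ and differentiating twice with $\Omega^{-1}e_4=\Omega^{-2}\partial_v\sim v^\kappa(-u)^{-\kappa}\partial_v$ produces a term of size $(v/(-u))^{-(1-2\kappa)}(-u)^{-2}$. The coefficient is nonzero precisely because the solution is not $C^2$ across $v=0$, which is the naked-singularity branch in \cite{Christodoulou1994}. This gives the stated rate, and the analogous computation at $u=0$ gives the second rate.

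The main obstacle is making sure the coefficient of the singular mode is nonzero at both ends, and matching normalizations so the constants come out exactly as listed. For nonvanishing we would rely on Christodoulou's result that the $z=\infty$ end is not regular, since a regular end would give a smooth horizon. The constants then need only careful bookkeeping of the gauge normalizations.
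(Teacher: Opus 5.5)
Your reduction, your boundary values in (3)--(4) from the self-similar constraints on $v=0$ and $u=0$, and your identification of the non-integer mode $v^{\k}=\hat v^{\k/(1-\k)}$ in $\hat\Omega^{-1}\hat\omega$ (hence in $\Omega^{-1}D_4\phi$) follow the same mechanism as Appendix~B. They correctly give item~(2) and the upper bounds in (5). The gap is the lower bound. The two-sided ``$\sim$'' in (5) holds only if the coefficient of that mode is nonzero at both ends, and you do not prove this. For the one-sided solution on $v\geq 0$, ``not $C^2$ at $v=0$'' merely restates that nonvanishing. Your argument ``a regular end would give a smooth horizon'' is circular: a smooth horizon contradicts nothing you have established. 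Moreover, the sharp non-smoothness of the horizon is exactly what the paper later derives from this background blow-up. Unless you can quote a precise statement from \cite{Christodoulou1994} that gives nonvanishing, this step is assumed rather than proved.

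The missing idea is a global sign argument along the trajectory joining the two ends. Along $u=-1$, set $A=(\Omega^{-1}\tr\chi)^{-1}$ and $B=\hat\Omega^{-1}\hat\omega/\Omega^{-1}\tr\chi$. The reduced spherically symmetric system gives $\partial_vA=\Omega^2\big(\frac12+\frac{8}{\k}B^2\big)>0$ and
\[
\partial_v\Big(v^{-\k}\big(B-B(0)\big)\Big)=v^{-\k}\Omega^2\Big(\frac{8}{\k}B^2+4B-\frac12\Big)\frac{B}{A},
\]
with $A(0)=\frac{1+\k}{2}$, $B(0)=\frac14$, and $B(\infty)=-\frac{\k}{4}$; the last value comes from the $u=0$ limits via self-similarity. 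The right-hand side is $O(v^{-2\k})$, which is integrable since $2\k<1$, so $\ell=\lim_{v\to0}v^{-\k}(B-B(0))$ exists. Because $A-A(0)=O(v^{1-\k})$, $\ell$ equals $A(0)$ times the coefficient of $v^{\k}$ in $[\hat\Omega^{-1}\hat\omega]_0^v$. Suppose $\ell\geq0$. Strict positivity of the right-hand side near $0$ then gives $B>B(0)$ on some interval $(0,v_0)$ with $B(v_0)=B(0)$; such a $v_0$ exists because $B(\infty)<B(0)$. On this interval $B>\frac14$ and $A>0$, so the integrand is positive. Hence $0=\ell+\int_0^{v_0}(\cdots)>0$, a contradiction. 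Therefore $\ell<0$ and $[\hat\Omega^{-1}\hat\omega]_0^v\sim v^{\k}$. The identity $\Omega^{-1}D_4\phi=-\frac{4}{\sqrt{2\k}}\hat\Omega^{-1}\hat\omega$ then yields the first rate in (5). The mirror argument along $v=1$, using $\Omega^{-1}\tr\chib$ and $\hat\Omega^{-1}\hat\omegab$ with $A(0)=-\frac{1+\k}{2}$, $B(0)=\frac14$, $B(-\infty)=-\frac{\k}{4}$, gives the rate at $u=0$.
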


\subsection{The Characteristic Initial Value Problem}

We now review the local well-posedness of the characteristic initial value problem (CIVP) for the Einstein-scalar field system. For the Einstein vacuum equations, interested readers are referred to the work \cite{Luk12} by Luk, which is an extension of Rendall's result \cite{Rendall90} with harmonic gauge. Here, we formulate the analogous result for the coupled Einstein-scalar field system.

\vspace{2mm}
The underlying strategy relies on reducing the full system to a set of quasilinear wave equations formulated in harmonic coordinates. We cite the relevant local existence theorem due to Rendall:

\begin{theorem}[Rendall \cite{Rendall90}]\label{Rendall_thm}
    Consider a quasilinear wave equation of the form
    \begin{equation}\label{qswveq}
        g^{\mu\nu}(\phi)\partial^2_{\mu\nu}\phi = F(\phi, \partial\phi),
    \end{equation}
    where $g$ and $F$ are smooth functions of their arguments. Let data be prescribed on two intersecting null hypersurfaces $H_{-1}$ and $\underline{H}_0$. If the initial data are smooth and satisfy the requisite compatibility conditions at the intersection sphere $S_{-1,0}$, then there exists a unique smooth solution in a neighborhood to the future of $S_{-1,0}$.
\end{theorem}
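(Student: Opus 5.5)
The statement is Rendall's local existence theorem, which the paper cites rather than proves. I would prove it by reducing the characteristic problem to a standard Cauchy problem. In this reduction the data are first extended off the two null cones, and the solution is then corrected near the corner sphere $S_{-1,0}$.

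\textbf{Step 1: transversal jets on the cones.} Choose coordinates in which $H_{-1}=\{x^0=x^1\}$ and $\underline{H}_0=\{x^0=-x^1\}$ near $S_{-1,0}$; use null coordinates $u,v$ and angular coordinates $\theta^A$. On $H_{-1}$ the datum $\phi$ is known, so all tangential derivatives of $\phi$ are known there. Restricted to $H_{-1}$, equation \eqref{qswveq} is a linear ODE along the generators for the first transversal derivative $\partial_u\phi$. Its initial value on $S_{-1,0}$ comes from the data on $\underline{H}_0$. Differentiating \eqref{qswveq} repeatedly in $u$ gives linear transport equations for every $\partial_u^k\phi|_{H_{-1}}$. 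The same procedure applied on $\underline{H}_0$ gives every $\partial_v^k\phi|_{\underline{H}_0}$. The compatibility conditions at $S_{-1,0}$ are exactly what make these two formal jets agree at the corner to all orders.

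\textbf{Step 2: extension and reduction to a Cauchy problem.} By Whitney/Borel extension, construct a smooth function $\phi_0$ on a neighborhood of $S_{-1,0}$ whose jets on both cones coincide with those found in Step 1. Then $\phi_0$ solves \eqref{qswveq} to infinite order on $H_{-1}\cup\underline{H}_0$. Extend the coefficients $g$ and $F$ smoothly, and modify the equation in the past region $\{u<-1\}\cup\{v<0\}$ so that $\phi_0$ is an exact solution there, for instance by replacing $F$ with $F+\chi\,(g^{\mu\nu}(\phi_0)\partial^2\phi_0-F(\phi_0,\partial\phi_0))$, where $\chi$ is supported in the past region. Because the defect vanishes to infinite order on the cones, the modified equation is smooth. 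Now pose the Cauchy problem for the modified equation on a spacelike hypersurface $\Sigma$ that passes below $S_{-1,0}$, with data induced by $\phi_0$. Standard local well-posedness for quasilinear hyperbolic equations, proved by energy estimates and iteration, gives a smooth solution on a uniform slab containing a neighborhood of $S_{-1,0}$.

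\textbf{Step 3: agreement with the data and uniqueness.} By finite speed of propagation and uniqueness for the Cauchy problem, the solution coincides with $\phi_0$ in the past region. Hence it takes the prescribed values on $H_{-1}\cup\underline{H}_0$, and to their future it solves the original equation \eqref{qswveq}. For uniqueness, take two solutions of the characteristic problem. Their jets on the cones are forced by Step 1, so each extends to the past by $\phi_0$, and the energy estimate for their difference in the domain of dependence, including the null boundary terms, shows they coincide.

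\textbf{Main obstacle.} The delicate point is Step 2: the extension must match the transversal jets from both cones simultaneously at the corner. This is where the compatibility conditions are used, together with a Whitney-type extension of jets defined on the union of two transversal hypersurfaces. One also needs a uniform lower bound on the existence time near the corner, which I expect to follow from continuity of the energy norms of the extended data.
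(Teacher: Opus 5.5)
The paper does not prove this statement; it quotes it from Rendall. Your outline is essentially Rendall's own reduction: compute the transversal jets on both cones, build an extension $\phi_0$ that solves the equation to infinite order there, add the defect times the indicator function of the past region, and solve a Cauchy problem. So the route is right. Three points need correcting or sharpening.

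\textbf{The first transversal derivative does not satisfy a linear ODE.} On $H_{-1}$ the equation for $\psi=\partial_u\phi$ reads
$2g^{uv}\partial_v\psi+2g^{uA}\partial_A\psi=F(\phi,\psi,\partial_v\phi,\partial_A\phi)$, up to terms fixed by the data. This is nonlinear in $\psi$ whenever $F$ is nonlinear in $\partial\phi$, for instance quadratic as in the reduced Einstein equations, and it can blow up along the generators. You only get $\psi$ for small $v$, uniformly in $\theta$ by compactness of $S_{-1,0}$. That suffices for a local theorem. It is the higher derivatives $\partial_u^k\phi$, $k\ge 2$, that satisfy linear transport equations.

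\textbf{Agreement of the two jets at $S_{-1,0}$ is automatic, not a hypothesis.} The pure derivatives agree because each transport equation takes its initial value from the other cone's data. For mixed derivatives, induct on the number $k+j$ of $u,v$-derivatives: both computations obtain $\partial_u^k\partial_v^j\phi|_{S_{-1,0}}$ from the same relation $\partial_u^{k-1}\partial_v^{j-1}\bigl(g^{\mu\nu}(\phi)\partial^2_{\mu\nu}\phi-F\bigr)=0$. In that relation this derivative has coefficient $2g^{uv}\neq 0$. The only competing terms of the same $u,v$-order are $g^{uu}\partial_u^{k+1}\partial_v^{j-1}\phi$ and $g^{vv}\partial_u^{k-1}\partial_v^{j+1}\phi$, and their coefficients vanish on $S_{-1,0}$. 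The real requirements are continuity of the data at $S_{-1,0}$ and nullity of both hypersurfaces for $g$ evaluated on the data.

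\textbf{Step 3 needs a precise metric for ``finite speed of propagation''.} In the quasilinear setting you must say which metric's cones you use. Write the equation for $w=\phi-\phi_0$ in the past region with principal part $g(\phi_0)$; the term $(g(\phi)-g(\phi_0))\partial^2\phi$ is then zeroth order in $w$. Since $g(\phi_0)$ equals $g$ of the data on the two cones, the region $\{u\ge -1,\,v\ge 0\}$ is locally $J^+(S_{-1,0})$ for $g(\phi_0)$. Hence the past region is a past set, and the energy estimate gives $w=0$ there.

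You can also drop the uniform-existence-time argument. Take $\Sigma$ spacelike through $S_{-1,0}$ itself; locally $\Sigma\setminus S_{-1,0}$ then lies in the past region. Local well-posedness for the Cauchy problem then directly covers a neighbourhood of $S_{-1,0}$ to its future.

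Finally, your cutoff $\chi$ must be the indicator function of the past region, which is what you implicitly use. A smooth cutoff cannot equal $1$ up to the cones and vanish beyond them, and this is exactly why infinite-order vanishing of the defect is needed.
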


Building upon this theorem, we formulate the corresponding local existence proposition for the geometric CIVP associated with the Einstein-scalar field system.

\begin{proposition}\label{Local existence of CIVP of ESE} 
    Let $H_{-1}^{(0,I_1)}$ and $\underline{H}_0^{(-1,I_2)}$ be outgoing and incoming null hypersurfaces emanating from a 2-sphere $S_{-1,0}$. 
    
    Consider a regular initial data set $(g_{AB}, \Omega, \zeta_A, \chi_{AB}, \underline{\chi}_{AB}, \phi)$ defined as follows:
    \begin{itemize}
        \item The degenerate metric $g_{AB}$ and the scalar field $\phi$ are defined on $H_{-1} \cup \underline{H}_0$.
        \item The null second fundamental forms $\chi_{AB}$ and $\underline{\chi}_{AB}$ are defined on $H_{-1}$ and $\underline{H}_0$, respectively.
        \item The torsion $\zeta_A$ and the conformal factor $\Omega$ are defined on the intersection sphere $S_{-1,0}$.
    \end{itemize}
    
    Suppose the data satisfy the constraint equations along the initial hypersurfaces:
    \begin{align}
        \text{Along } H_{-1}:& \quad \mathcal{L}_{{L}'} g_{AB} = 2{\chi}_{AB}, \quad \mathcal{L}_{L'} \operatorname{tr}\chi = -|\chi|^2_{g} - (L'(\phi))^2, \label{CIVP_constraint_outgoing} \\
        \text{Along } \underline{H}_{0}:& \quad \mathcal{L}_{\underline{L}'} g_{AB} = 2\underline{\chi}_{AB}, \quad \mathcal{L}_{\underline{L}'} \operatorname{tr}\underline{\chi} = -|\underline{\chi}|^2_{g} - (\underline{L}'(\phi))^2, \label{CIVP_constraint_incoming}
    \end{align}
    and the normalization condition $g(L', \underline{L}')|_{S_{-1,0}} = -2\Omega^{-2}$.

    Then the Einstein-scalar field equations admit a unique smooth solution in a sufficiently small spacetime neighborhood to the future of $S_{-1,0}$.
\end{proposition}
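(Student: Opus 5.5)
Following Luk's treatment of the vacuum case \cite{Luk12}, the plan is to reduce the geometric characteristic problem to Rendall's Theorem \ref{Rendall_thm}. We use generalized harmonic (wave) coordinates. The unknowns are the metric components $g_{\mu\nu}$ in coordinates $x^\mu$ with $\square_g x^\mu=0$, together with $\phi$. In these coordinates the Einstein-scalar field system \eqref{ESE}, written with trace-reversed form $\mathrm{Ric}_{\mu\nu}=D_\mu\phi D_\nu\phi$, becomes the reduced system
\[
-\tfrac12 g^{\alpha\beta}\partial^2_{\alpha\beta}g_{\mu\nu}+Q_{\mu\nu}(g,\partial g)=\partial_\mu\phi\,\partial_\nu\phi,\qquad g^{\alpha\beta}\partial^2_{\alpha\beta}\phi=\Gamma^\lambda\partial_\lambda\phi .
\]
Here $\Gamma^\lambda=g^{\alpha\beta}\Gamma^\lambda_{\alpha\beta}$ vanishes in harmonic gauge. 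This is exactly a system of the form \eqref{qswveq}.

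\textbf{Step 1: data for the reduced system.}
From the geometric data $(g_{AB},\Omega,\zeta_A,\chi,\chib,\phi)$ we construct full initial data for $(g_{\mu\nu},\phi)$ on $H_{-1}\cup\Hb_0$.
\begin{itemize}
\item On $H_{-1}$, we first solve the transport equations along $L'$ for $\Omega$ (via $\omega$), $\zeta$ and $\chib$, namely the $\nabla_4$ equations of the null structure system. These are ODEs along generators, with initial values on $S_{-1,0}$. We do the same along $\Hb_0$ in the $\underline{L}'$ direction.
\item Next we solve the wave-coordinate conditions $\square_g x^\mu=0$ restricted to the cones. These are also ODEs along null generators once the double null quantities are known, and they produce coordinate functions $x^\mu$ and hence all components $g_{\mu\nu}$ on the initial cones.
\item Transversal derivatives are determined from the data, which gives the compatibility conditions at $S_{-1,0}$ required by Theorem \ref{Rendall_thm}, including their higher-order versions needed for smoothness.
\end{itemize}

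\textbf{Step 2: apply Rendall.}
Theorem \ref{Rendall_thm} then yields a smooth local solution of the reduced system in a neighborhood to the future of $S_{-1,0}$.

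\textbf{Step 3: propagation of the gauge, the main obstacle.}
We must show that the solution of the reduced system actually solves \eqref{ESE}, that is, $\Gamma^\lambda\equiv0$.
\begin{itemize}
\item The contracted Bianchi identity, together with the divergence-free property of $T_{\mu\nu}=D_\mu\phi D_\nu\phi-\frac12 g|D\phi|^2$, shows that $\Gamma^\lambda$ satisfies a linear homogeneous wave equation. The divergence-free property requires $\square_g\phi=0$, which itself differs from the reduced scalar equation only by a term in $\Gamma$.
\item By uniqueness for the linear characteristic problem, it then suffices to show $\Gamma^\lambda=0$ on $H_{-1}\cup\Hb_0$.
\item On the cones, the vanishing of $\Gamma^\lambda$ reduces to the null constraint equations. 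The components of $\mathrm{Ric}-D\phi D\phi$ along $L'L'$ are exactly the Raychaudhuri constraint \eqref{CIVP_constraint_outgoing}, and the remaining tangential components follow from the choices of $\zeta,\Omega$ made by transport in Step 1, as in Luk's argument.
\item The only new feature relative to \cite{Luk12} is the source terms $(L'\phi)^2$ and $(\underline{L}'\phi)^2$. These are exactly built into \eqref{CIVP_constraint_outgoing}--\eqref{CIVP_constraint_incoming}.
\end{itemize}

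\textbf{Step 4: recover the geometric gauge and uniqueness.}
Finally we construct the optical functions $u,v$ in the resulting spacetime by solving the eikonal equations with the given initial values. This recovers the double null foliation of Definition \ref{def:double_null} with the prescribed data. Uniqueness follows from uniqueness in Theorem \ref{Rendall_thm} combined with the geometric uniqueness of wave coordinates with prescribed values on the cones.
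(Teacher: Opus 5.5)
Your proposal is correct and follows the paper's route: harmonic-gauge reduction as in \cite{Luk12}, Theorem \ref{Rendall_thm} for the reduced quasilinear system, and propagation of $\Gamma^\mu=0$ from the initial cones via the contracted Bianchi identity and $\nabla^\mu T_{\mu\nu}=0$, with your Steps 1 and 4 supplying details the paper leaves to Luk. One small correction: the null structure system has no $\nabla_4\omega$ (or $\nabla_3\omegab$) equation, so $\Omega$ along $H_{-1}$ and $\Hb_0$ is not obtained by transport but is a gauge choice (how $v$, resp.\ $u$, parametrizes the geodesic generators), which is exactly the freedom made explicit in Corollary \ref{Local existence of CIVP of ESE cor}.
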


\begin{proof}
    The proof closely follows the harmonic coordinate construction employed in Section 3 of \cite{Luk12} for the vacuum case. In this framework, the Einstein-scalar field system satisfies the reduced Ricci condition $\widetilde{R}_{\mu\nu} = D_\mu\phi D_\nu\phi$, where $\widetilde{R}_{\mu\nu}$ denotes the Ricci tensor expressed in harmonic coordinates.
    
    The coupled system for the variables $\Phi = (g_{\mu\nu}, \phi)$ assumes the form of a quasilinear wave system:
    \begin{equation}
        g^{\mu\nu}\partial_{\mu\nu}^2\Phi = F(g, \partial g, \phi, \partial\phi).
    \end{equation}
    Local existence then follows directly from Theorem \ref{Rendall_thm}. To ensure that the obtained solution to the reduced system additionally satisfies the full Einstein equations, one must verify the propagation of the gauge constraints (i.e., the vanishing of the harmonic coordinate quantities $\Gamma^\mu = 0$). Invoking the Bianchi identities alongside energy conservation guarantees that the vector field $\Gamma^\mu$ satisfies a homogeneous linear wave equation:
    \begin{equation}
        \nabla_{\alpha}(G^{\alpha\beta} - T^{\alpha\beta}) = -\frac{1}{2}g^{\alpha\lambda}\partial^2_{\alpha\lambda}\Gamma^\beta + A_{\alpha}^{\beta\lambda}(g, \partial g, \Gamma)\partial_{\lambda}\Gamma^\alpha = 0.
    \end{equation}
    Given that these constraints are rigorously satisfied on the initial null boundary surfaces, the uniqueness of solutions implies $\Gamma^\mu \equiv 0$ throughout the domain of dependence, thereby concluding the proof.
\end{proof}

We can further refine this result by explicitly specifying the propagation of the lapse function, which simplifies the practical application of the existence theorem within our double null gauge framework.

\begin{corollary}\label{Local existence of CIVP of ESE cor} 
    Consider the setup of Proposition \ref{Local existence of CIVP of ESE}, but with the initial data set augmented to include the lapse $\Omega$ and connection coefficients $\omega, \underline{\omega}$ along the null hypersurfaces:
    \begin{itemize}
        \item $\Omega, \chi_{AB}, \omega$ defined on $H_{-1}$.
        \item $\Omega, \underline{\chi}_{AB}, \underline{\omega}$ defined on $\underline{H}_0$.
    \end{itemize}
    Suppose the data satisfy the propagation equations:
    \begin{align*}
        \text{Along } H_{-1}:& \, \mathcal{L}_{e_4} g_{AB} = 2\chi_{AB}, \, \mathcal{L}_{e_4}\log\Omega = -2\omega, \, \mathcal{L}_{e_4} \operatorname{tr}\chi = -|\chi|^2_{g} - 2\omega\operatorname{tr}\chi - (e_4(\phi))^2. \\
        \text{Along } \underline{H}_{0}:& \, \mathcal{L}_{e_3} g_{AB} = 2\underline{\chi}_{AB}, \, \mathcal{L}_{e_3}\log\Omega = -2\underline{\omega}, \, \mathcal{L}_{e_3} \operatorname{tr}\underline{\chi} = -|\underline{\chi}|^2_{g} - 2\underline{\omega}\operatorname{tr}\underline{\chi} - (e_3(\phi))^2.
    \end{align*}

    Then the Einstein-scalar field equations admit a unique smooth solution in a sufficiently small spacetime neighborhood to the future of $S_{-1,0}$.
\end{corollary}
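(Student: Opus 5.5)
The plan is to reduce Corollary \ref{Local existence of CIVP of ESE cor} to Proposition \ref{Local existence of CIVP of ESE}. I would show that the augmented data $(\Omega,\chi,\omega)$ on $H_{-1}$ and $(\Omega,\chib,\omegab)$ on $\Hb_0$, subject to the stated propagation equations, determine a regular initial data set in the sense of the proposition. I would then check that the solution it produces carries exactly the prescribed lapse.

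\textbf{Step 1: translate the lapse-normalized equations into affine form.}
Since $e_4=\Omega L'$, we have $\Lie_{L'}=\Omega^{-1}\Lie_{e_4}$ and $\chi'=\Omega^{-1}\chi$. The equation $\Lie_{e_4}g=2\chi$ is therefore equivalent to $\Lie_{L'}g=2\chi'$. Set $\omega=-\tfrac12 e_4\log\Omega$. A direct computation then turns the Raychaudhuri equation in the $e_4$ normalization, with the $-2\omega\tr\chi$ term, into
\[
\Lie_{L'}\tr\chi'=-|\chi'|^2-(L'\phi)^2 .
\]
The term $-2\omega\tr\chi$ exactly absorbs the derivative of the factor $\Omega^{-1}$. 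The same computation on $\Hb_0$, using $e_3$, $\Omega$ and $\omegab$, yields \eqref{CIVP_constraint_incoming}. The restrictions to $S_{-1,0}$ then give the data $\Omega|_{S_{-1,0}}$ and the normalization $g(L',\underline L')=-2\Omega^{-2}$. The torsion $\zeta$ on $S_{-1,0}$ is taken as part of the data, as in the proposition. Proposition \ref{Local existence of CIVP of ESE} now supplies a unique smooth local solution.

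\textbf{Step 2: identify the lapse.}
The solution's optical functions $u,v$ are fixed by the initial surfaces $H_{-1}$, $\Hb_0$ together with the level-set parametrization. Along $H_{-1}$ the vector $L'$ is the geodesic generator normalized by $v$. Prescribing $\Omega$ along $H_{-1}$ amounts to choosing the parametrization $v$ of the generators, through the relation $e_4=\Omega^{-1}\partial_v$. Since both $\Omega$ and the parametrization are initial freedom in the proposition, I would reparametrize $v$ on $H_{-1}$ (and likewise $u$ on $\Hb_0$) so that the solution's lapse $2\Omega^{-2}=-g(L',\underline L')$ coincides with the prescribed $\Omega$. The equations $\nabla_4\log\Omega^2=-4\omega$ and $\nabla_3\log\Omega^2=-4\omegab$ then show that the solution's $\omega,\omegab$ agree with the prescribed ones on the initial cones.

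\textbf{Main obstacle.} The delicate point is the consistency in Step 2. One must show that the lapse computed from the solution actually equals the prescribed $\Omega$ along the cones, rather than merely agreeing at $S_{-1,0}$. I would handle this by a uniqueness argument for a linear ODE along each generator. Both the prescribed $\Omega$ and the solution's $\Omega$ satisfy the transport equation $\partial_v\log\Omega=-2\Omega\omega$ with the same datum on $S_{-1,0}$, provided the coordinate $v$ is chosen consistently with the prescribed $\chi$ through $\Lie_{e_4}g=2\chi$. Uniqueness in Proposition \ref{Local existence of CIVP of ESE} then gives uniqueness of the augmented solution.
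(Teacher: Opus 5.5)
Your reduction is the same as the paper's. The paper introduces the affine parameters $x^4=\int_0^v\Omega^2(-1,v',\theta)\,dv'$ and $x^3=\int_{-1}^u\Omega^2(u',0,\theta)\,du'$ along the generators. In these parameters the lapse-normalized propagation equations become the affine constraints \eqref{CIVP_constraint_outgoing}--\eqref{CIVP_constraint_incoming}, and the paper then applies Proposition \ref{Local existence of CIVP of ESE}. Your Step 1 is exactly this computation, written with $L'=\Omega^{-1}e_4=\Omega^{-2}\partial_v$, i.e.\ $L'=\partial_{x^4}$.

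The one thing to fix is your ``main obstacle'' paragraph, whose ODE-uniqueness argument is circular. The solution's $\omega$ is by definition $-\tfrac12 e_4\log\Omega$, computed from the solution's own lapse. So the equation $\partial_v\log\Omega=-2\Omega\omega$ says nothing about that lapse unless you already know $\omega$ agrees with the prescribed one.

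No ODE argument is needed. In any double null foliation $L'(v)=-\tfrac12 g(L',\underline L')=\Omega^{-2}$. Fix $v|_{H_{-1}}$ by $dv/ds=\Omega_{\mathrm{pres}}^{-2}$, where $s$ is the affine parameter of $L'$; equivalently $s=x^4$. Then the solution's lapse equals the prescribed one along the entire cone, not just on $S_{-1,0}$. The agreement of $\omega$ then follows from the hypothesis $\mathcal{L}_{e_4}\log\Omega=-2\omega$, and likewise for $\underline\omega$ on $\underline{H}_0$.
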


\begin{proof}
    We introduce the respective harmonic coordinates $x^4$ and $x^3$ by integrating the lapse function along the null generators:
    \[
    x^4(u,v,\theta) = \int_0^v \Omega^{2}(-1,v',\theta) dv', \quad x^3(u,v,\theta) = \int_{-1}^u \Omega^{2}(u',0,\theta) du'.
    \]
    In these coordinates, the propagation equations governing the metric components and the lapse function reduce to standard transport equations. Consequently, the conditions stipulated in Proposition \ref{Local existence of CIVP of ESE} are thereby satisfied, smoothly guaranteeing local existence.
\end{proof}

\section{Initial Values and Approximating Solution}\label{Initial Values and Approximating Solution}

In this section, we construct exactly $\kappa$-self-similar initial data along the incoming null hypersurface $\underline{H}_0$, and approximately self-similar initial data along the outgoing null hypersurface $H_{-1}$.

\subsection{Construction of Self-Similar Initial Data on $\underline{H}_0$}\label{Construction of Self-Similar Initial Data on Incoming Cone}

We begin by constructing the data along $v=0$ to strictly satisfy the $\kappa$-self-similarity condition. Suppose a spacetime $(\mathcal{M},g)$ admits a homothetic Killing field $S=u\partial_u+v\partial_v$ satisfying the $\kappa$-self-similarity condition. Expressed in the double null gauge, the metric
\begin{equation*}
    g = -2\Omega^2(du\otimes dv+dv\otimes du) + g_{AB}(d\theta^A-b^Adu)\otimes(d\theta^B-b^Bdu),
\end{equation*}
admits a scaling reduction in terms of the self-similar variable $z = v/(-u)$. Specifically, the metric components exhibit the following scaling behavior:
\begin{equation*}
    (v^\kappa\Omega^2, b^A, g_{AB}) = ((-u)^\kappa\check{\Omega}^2(z,\theta), (-u)^{-1}\check{b}^A(z,\theta), (-u)^2\check{g}_{AB}(z,\theta)),
\end{equation*}
where the checked quantities $\check{\cdot}$ depend only on $z$ and $\theta$.

\vspace{2mm}
For a general tensor field taking the form $F_{A_1\cdots A_r}(u,v,\theta) = (-u)^{-a}\check{F}_{A_1\cdots A_r}(z,\theta)$, its geometric derivative along the incoming null direction $e_3$ is related to the tangential derivatives on the cone $\underline{H}_v$ via the identity:
\begin{equation*}
    \Omega \mathcal{L}_{e_3}F_{A_1\cdots A_r} = -\frac{a}{-u}F_{A_1\cdots A_r} + \mathcal{L}_b F_{A_1\cdots A_r} + \frac{v}{-u} \mathcal{L}_{\partial_v}F_{A_1\cdots A_r}.
\end{equation*}
Applying this fundamental relation to the geometric quantities yields the following algebraic constraints among the Ricci coefficients.

\begin{lemma}[Algebraic Relations in Self-Similar Spacetimes]
    In a self-similar spacetime, the following relations hold:
    \begin{equation}\label{relation_chib_and_chi_omegab_and_omega}
        \underline{\chi}_{AB} + \frac{v}{u}\chi_{AB} = \Omega^{-1}u^{-1}g_{AB} + \frac{1}{2}\Omega^{-1}(\mathcal{L}_b g)_{AB}, \quad 
        \underline{\omega} = -\frac{v}{u}\omega + \frac{1}{2}b^A\partial_A(\Omega^{-1}).
    \end{equation}
    Suppose $\check{\Omega}^2$ remains bounded near $z=0$. Evaluating the limit towards the initial cone gives $\lim_{v\rightarrow 0}\frac{v}{u}\Omega\omega = \frac{\kappa}{4u}$. In particular, along the hypersurface $v=0$ we have:
    \begin{equation*}
        \Omega\operatorname{tr}\underline{\chi} = \frac{2}{u} + \operatorname{div} b, \quad \Omega\underline{\hat{\chi}}_{AB} = \frac{1}{2}(\nabla\hat{\otimes}b)_{AB}, \quad \Omega\underline{\omega} = -\frac{\kappa}{4u} - \frac{1}{2}\mathcal{L}_b\log \Omega.
    \end{equation*}
\end{lemma}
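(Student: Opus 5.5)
The plan is to obtain both relations from the homothety condition $\mathcal{L}_S g = 2g$ with $S = u\partial_u + v\partial_v$, using the frame of Definition \ref{def:double_null}. We express $S$ in the null frame. Since $e_3 = \Omega^{-1}(\partial_u + b^A\partial_A)$ and $e_4 = \Omega^{-1}\partial_v$, we have
\[
S = u\Omega e_3 + v\Omega e_4 - u b^A\partial_A .
\]
Evaluating $(\mathcal{L}_S g)(e_A,e_B) = 2g_{AB}$ with the formula $(\mathcal{L}_X g)(Y,Z) = g(D_Y X, Z) + g(Y, D_Z X)$ gives
\[
(\mathcal{L}_S g)_{AB} = 2u\Omega\chib_{AB} + 2v\Omega\chi_{AB} - u(\mathcal{L}_b g)_{AB}.
\]
The terms containing derivatives of the coefficient functions $u\Omega$ and $v\Omega$ drop out, because they multiply $g(e_3,e_A) = 0$ or $g(e_4,e_A) = 0$. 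The term $-u b^A\partial_A$ contributes $-u(\mathcal{L}_b g)_{AB}$, since $u$ is constant on $S_{u,v}$. Setting this equal to $2g_{AB}$ and dividing by $2u\Omega$ yields the first identity in \eqref{relation_chib_and_chi_omegab_and_omega}.

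For $\omegab$ we evaluate $(\mathcal{L}_S g)(e_3,e_4) = 2g(e_3,e_4) = -4$. Expanding $D_{e_3}S$ and $D_{e_4}S$ brings in $e_3(u\Omega)$, $e_4(v\Omega)$ and $\omega$, $\omegab$, through $g(D_3e_4,e_3) = -4\omegab$ and $g(D_4e_3,e_4) = -4\omega$. The derivatives $e_3(u) = \Omega^{-1}$ and $e_4(v) = \Omega^{-1}$ produce constants. We then use $\nabla_3\log\Omega^2 = -4\omegab$ and $\nabla_4\log\Omega^2 = -4\omega$ to trade $e_3\Omega$ and $e_4\Omega$ for $\omegab$ and $\omega$. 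The $b$-part contributes $b^A\partial_A\Omega$-type terms through $g(D_3(b^A\partial_A), e_4)$. After the constant terms cancel against $-4$, what remains is a linear relation $u\Omega\omegab + v\Omega\omega \sim u\, b\cdot\nabla\Omega$. Dividing by $u\Omega$ and simplifying should give the stated form. An alternative route is to use $\mathcal{L}_S\Omega^2$ directly, since $S(\Omega^2) = -\kappa\Omega^2$ as read off from the scaling ansatz, with the $-\kappa$ from $v^\kappa$ balanced against $(-u)^\kappa$. We would cross-check the two routes, because they must agree; the second is probably cleaner.

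The limit statement comes next. Write $\Omega^2 = v^{-\kappa}(-u)^\kappa\check\Omega^2(z,\theta)$. Then
\[
\Omega\omega = -\tfrac14\,\partial_v\log\Omega^2 = \frac{\kappa}{4v} - \tfrac14\,\partial_v\log\check\Omega^2 ,
\]
so that $\tfrac{v}{u}\Omega\omega = \tfrac{\kappa}{4u} - \tfrac{z}{4u}\,\partial_z\log\check\Omega^2$. The main obstacle is justifying that $z\,\partial_z\log\check\Omega^2 \to 0$ as $z\to 0$. Boundedness of $\check\Omega^2$ alone does not literally give this, so we will interpret the hypothesis as including the regularity condition $\lim_{v\to0} v\mathcal{L}_{\partial_v}\psi = 0$ already used for the degenerate Lie propagation equations. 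We will also need $\check\Omega^2$ bounded away from zero so that the logarithm makes sense.

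Finally, set $v=0$ in the two identities. The first becomes $\chib = \Omega^{-1}u^{-1}g + \tfrac12\Omega^{-1}\mathcal{L}_b g$. Taking the trace gives $\Omega\tr\chib = \tfrac{2}{u} + \dv b$, and the traceless part gives $\Omega\chibh = \tfrac12\nabla\hat\otimes b$, using $\tfrac12(\mathcal{L}_b g)_{AB} = \tfrac12(\nabla_A b_B + \nabla_B b_A)$. For $\omegab$, multiply the second identity by $\Omega$ and insert the limit $\tfrac{v}{u}\Omega\omega \to \tfrac{\kappa}{4u}$. The $b$ term becomes $\tfrac12\Omega\, b^A\partial_A\Omega^{-1} = -\tfrac12\mathcal{L}_b\log\Omega$, which gives $\Omega\omegab = -\tfrac{\kappa}{4u} - \tfrac12\mathcal{L}_b\log\Omega$. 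Throughout, we will track signs carefully, since $u<0$.
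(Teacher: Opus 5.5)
Your derivation of the first identity is correct, and your overall strategy is the paper's. The paper simply feeds $g_{AB}=(-u)^2\check g_{AB}$ and the lapse into the self-similar derivative identity $\Omega\mathcal{L}_{e_3}F=\ldots$, which is $\mathcal{L}_S$ written in coordinates.

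However, the claim $S(\Omega^2)=-\kappa\Omega^2$ in your ``cleaner'' second route is false. The ansatz gives $\Omega^2=(-u)^\kappa v^{-\kappa}\check\Omega^2(z,\theta)=z^{-\kappa}\check\Omega^2(z,\theta)$, which is $S$-invariant. Equivalently, $\mathcal{L}_S(\Omega^2\,du\otimes dv)=(S(\Omega^2)+2\Omega^2)\,du\otimes dv$ must equal $2\Omega^2\,du\otimes dv$, so $S(\Omega^2)=0$. Now insert $\partial_u\log\Omega^2=-4\Omega\omegab-b^A\partial_A\log\Omega^2$ and $\partial_v\log\Omega^2=-4\Omega\omega$ into $S\log\Omega^2$:
\begin{itemize}
\item With your value $S\log\Omega^2=-\kappa$, you get $\Omega\omegab=-\frac{v}{u}\Omega\omega-\frac12\mathcal{L}_b\log\Omega+\frac{\kappa}{4u}$. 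The extra term contradicts the lemma and your first route, and at $v=0$ it would kill the $\kappa$ entirely, giving $\Omega\omegab=-\frac12\mathcal{L}_b\log\Omega$.
\item With the correct value $S\log\Omega^2=0$, the same substitution yields the identity in one line. This is the paper's argument.
\end{itemize}
Your frame computation also works if completed. The $b$-part enters through both $g(D_3b,e_4)=-2b\cdot\eta$ and $g(e_3,D_4b)=-2b\cdot\etab$, giving $8u\Omega\omegab+8v\Omega\omega+2u\,b\cdot(\eta+\etab)=0$. You then need $\eta+\etab=2\nabla\log\Omega$ to reach a $b\cdot\nabla\Omega$ term; the term $g(D_3b,e_4)$ you mention produces only $b\cdot\eta$.

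Two remarks on the $v\to 0$ step. First, since $\Omega\sim v^{-\kappa/2}$ in this gauge, ``setting $v=0$'' in the first identity only makes sense after multiplying by $\Omega$. You then need $\frac{v}{u}\Omega\chi_{AB}=\frac{1}{2u}v\partial_v g_{AB}\to 0$. This is not automatic; it is exactly what fails for $\omega$, where $\frac{v}{u}\Omega\omega\to\frac{\kappa}{4u}\neq 0$. It does hold under your regularity reading: boundedness of $\check\Omega^2$ gives $v\Omega^2=O(v^{1-\kappa})$, so it suffices that $\Omega^{-1}\chi$ stays bounded (or that $z\partial_z\check g_{AB}\to 0$), because $v\Omega\chi=v\Omega^2\cdot\Omega^{-1}\chi$. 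Second, you are right that boundedness of $\check\Omega^2$ alone does not force $z\partial_z\log\check\Omega^2\to 0$; take $\check\Omega^2=2+\sin\log z$. The paper does not address this, and strengthening the hypothesis as you propose is the right fix.
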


If a given Einstein-scalar field spacetime $(\mathcal{M},g,\phi)$ obeys the $\kappa$-self-similarity condition, restricting these geometric relations to the hypersurface $v=0$ yields a coupled system of Lie derivative equations.

\begin{lemma}[Constraint Equations on $S_{u,0}$]
    Assuming the regularity condition \\
    $\lim_{v\rightarrow 0}v\mathcal{L}_{\partial_v}\psi=0$ holds for $\psi \in \{\Omega\underline{\chi}, \zeta, \Omega^{-1}\chi, \Omega^{-1}D_4\phi\}$, the following evolution equations are satisfied on the initial spheres $S_{u,0}$:
    \begin{equation}\label{eq Lie b div b}
        \begin{aligned}
            \frac{2\kappa}{(-u)^2}& + \frac{1}{-u}\operatorname{div} b - \mathcal{L}_b\operatorname{div} b - \frac{1}{2}(\operatorname{div} b)^2 - \frac{1}{4}|\nabla\hat{\otimes} b|^2 \\
            &- \frac{4}{-u}\mathcal{L}_b\log\Omega + 2\operatorname{div} b \cdot \mathcal{L}_b\log\Omega - (\Omega D_3\phi)^2 = 0,
        \end{aligned}
    \end{equation}
    \begin{equation}\label{eq_etab(-1,0)}
        \begin{split}
             \mathcal{L}_b\zeta_A + &\left(-\frac{2}{-u} + \operatorname{div} b\right)\zeta_A\\
            =& 2(\nabla\hat{\otimes} b)_A^C \nabla_C\log\Omega - \frac{\kappa}{-2u} - \mathcal{L}_b\log\Omega - \frac{1}{2}\operatorname{div}(\nabla\hat{\otimes} b)_A + \frac{1}{2}\nabla_A\operatorname{div} b,
        \end{split}
    \end{equation}
    \begin{equation}\label{eq_trchi(-1,0)}
        \begin{aligned}
            \mathcal{L}_b(\Omega^{-1}\operatorname{tr}\chi) + \Omega^{-1}\operatorname{tr}\chi\left(-\frac{1+\kappa}{-u} + \operatorname{div} b + 2\mathcal{L}_b\log\Omega\right) = -2K + 2\operatorname{div}\eta + 2|\eta|^2 + |\nabla\phi|^2,
        \end{aligned}
    \end{equation}
    \begin{equation}\label{eq_chih(-1,0)}
        \begin{split}
            \mathcal{L}_b(\Omega^{-1}\hat{\chi})_{AB} &- \left(\frac{1}{2}\operatorname{div} b + \frac{\kappa}{-u} - 2\mathcal{L}_b\log\Omega\right)\Omega^{-1}\hat{\chi}_{AB} \\
            &= (\nabla\hat{\otimes} b)^C_{\ A}(\Omega^{-1}\hat{\chi})_{BC} + (\nabla\hat{\otimes} b)^C_{\ B}(\Omega^{-1}\hat{\chi})_{AC} + (\nabla\hat{\otimes}\eta)_{AB} + (\eta\hat{\otimes}\eta)_{AB} \\
            &\quad - \frac{1}{2}(\nabla\hat{\otimes} b)_{AB}(\Omega^{-1}\operatorname{tr}\chi) + \nabla_A\phi \nabla_B\phi - \frac{1}{2}g_{AB}|\nabla\phi|^2,
        \end{split}
    \end{equation}
    \begin{equation}\label{eq_D4phi(-1,0)}
        \mathcal{L}_b(\Omega^{-1} D_4\phi) - \left(\frac{1}{2}\operatorname{div} b + \frac{\kappa}{-u}\right)\Omega^{-1} D_4\phi = \Delta_g \phi - \frac{1}{2}\Omega^{-1}\operatorname{tr}\chi\Omega D_3\phi - \mathcal{L}_b\log\Omega \cdot \Omega^{-1} D_4\phi.
    \end{equation}
    Detailed derivations are provided in Appendix \ref{appen_derive_eq}.
\end{lemma}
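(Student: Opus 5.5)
The plan is to derive each equation from the self-similarity identity
\[
\Omega \mathcal{L}_{e_3}F = -\frac{a}{-u}F + \mathcal{L}_b F + \frac{v}{-u}\mathcal{L}_{\partial_v}F,
\]
valid for $F=(-u)^{-a}\check F(z,\theta)$. We apply it to each quantity in $\{\Omega\underline{\chi},\zeta,\Omega^{-1}\chi,\Omega^{-1}D_4\phi\}$ with the correct scaling weight $a$. We then substitute the corresponding $\nabla_3$ null structure equation (or the wave equation \eqref{intrinsic_wave_equation}) for the left-hand side and pass to the limit $v\to 0$. The regularity hypothesis $\lim_{v\to0}v\mathcal{L}_{\partial_v}\psi=0$ kills the last term. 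All $\nabla_3$-derivatives are thereby converted into $\mathcal{L}_b$ plus multiples of $\frac{1}{-u}$. Throughout, the earlier Lemma on algebraic relations is used to replace $\Omega\operatorname{tr}\underline{\chi}$, $\Omega\underline{\hat\chi}$ and $\Omega\underline{\omega}$ on $v=0$ by $\frac{2}{u}+\operatorname{div}b$, $\frac12\nabla\hat\otimes b$ and $-\frac{\kappa}{4u}-\frac12\mathcal{L}_b\log\Omega$, respectively.

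\textbf{Step 1: equation \eqref{eq Lie b div b}.} I will take the Raychaudhuri equation for $\operatorname{tr}\underline{\chi}$, multiplied by $\Omega^2$ to form $\Omega\nabla_3(\Omega\operatorname{tr}\underline{\chi})$. Its scaling weight is $a=1$. Inserting the $v=0$ values of $\Omega\operatorname{tr}\underline{\chi}$, $|\Omega\underline{\hat\chi}|^2=\frac14|\nabla\hat\otimes b|^2$ and $\Omega\underline{\omega}$, and expanding $\mathcal{L}_b(\frac{2}{u}+\operatorname{div} b)=\mathcal{L}_b\operatorname{div}b$, should give \eqref{eq Lie b div b} after collecting terms. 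The $\frac{2\kappa}{u^2}$ comes from $-4\Omega\underline\omega\cdot\Omega\operatorname{tr}\underline\chi$ combined with the $a$-term.

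\textbf{Step 2: equations \eqref{eq_trchi(-1,0)}, \eqref{eq_chih(-1,0)} and \eqref{eq_D4phi(-1,0)}.} For these I use the $\nabla_3\operatorname{tr}\chi$ equation, the $\nabla_3\hat\chi$ equation and the first line of \eqref{intrinsic_wave_equation}, applied to $\Omega^{-1}\operatorname{tr}\chi$, $\Omega^{-1}\hat\chi$ and $\Omega^{-1}D_4\phi$. The extra factor $\Omega^{-1}$ contributes $\Omega e_3(\Omega^{-1})=2\Omega\underline\omega$, which produces the $\mathcal{L}_b\log\Omega$ and $\kappa/(-u)$ shifts in the coefficients.

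For the tensor $\hat\chi$ one must convert $\nabla_3$ into a Lie derivative. This uses $\mathcal{L}_{e_3}\hat\chi_{AB}=\nabla_3\hat\chi_{AB}+\underline{\chi}_A{}^C\hat\chi_{CB}+\underline{\chi}_B{}^C\hat\chi_{AC}$. On $v=0$, that generates the $(\nabla\hat\otimes b)\cdot\Omega^{-1}\hat\chi$ cross terms and the $\frac12\operatorname{div}b$ coefficient. For $\Omega^{-1}D_4\phi$, the weight is fixed by $\Omega^{-1}D_4\phi\sim(-u)^{-1}$ at $v=0$.

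\textbf{Step 3: equation \eqref{eq_etab(-1,0)} for $\zeta$.} Here I will use the $\nabla_3\underline\eta$ equation together with $\zeta=\underline\eta$-type relations, specifically $\eta=\zeta+\nabla\log\Omega$ and $\underline\eta=-\zeta+\nabla\log\Omega$. The Codazzi equation for $\operatorname{div}\underline{\hat\chi}$ from \eqref{Gauss_Codazzi} then eliminates $\underline\beta^r$ in favour of $\operatorname{div}(\nabla\hat\otimes b)$ and $\nabla\operatorname{div}b$.

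\textbf{Main obstacle.} I expect the principal difficulty to be the careful bookkeeping of weights and the $\Omega$-renormalizations. Because $\Omega^2\sim v^{-\kappa}$, the factor $\Omega\omega$ is singular at $v=0$. It must be replaced by the limit $\lim\frac{v}{u}\Omega\omega=\frac{\kappa}{4u}$, using $\underline\omega=-\frac vu\omega+\dots$, rather than by its value. Checking that no other singular products survive, namely that each appears as $v\mathcal{L}_{\partial_v}$ of a regular quantity, is where the argument could fail. I would verify this first for $\operatorname{tr}\underline\chi$ against the background values in Theorem \ref{thm:background_properties}.
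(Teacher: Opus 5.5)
Your proposal is correct and follows the paper's own derivation in Appendix~\ref{appen_derive_eq}: the self-similar identity for $\Omega\mathcal{L}_{e_3}$ on each renormalized quantity, the $\nabla_3$ null structure equations and the first line of \eqref{intrinsic_wave_equation}, the $v=0$ values of $\Omega\operatorname{tr}\underline{\chi}$, $\Omega\underline{\hat\chi}$, $\Omega\underline{\omega}$ (using $v\Omega\omega\to\kappa/4$), and for $\zeta$ the $\nabla_3\underline{\eta}$ equation combined with Codazzi; the only cosmetic difference is that the paper first writes the full $v\partial_v$ system on $u=-1$ and then lets $v\to0$.

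When you carry it out, note the following.
\begin{itemize}
\item $\Omega e_3(\Omega^{-1})=2\underline{\omega}$, not $2\Omega\underline{\omega}$, so $\Omega\nabla_3(\Omega^{-1}\psi)=\Omega^{-1}(\Omega\nabla_3\psi)+2\Omega\underline{\omega}\,\Omega^{-1}\psi$.
\item The $\frac{2\kappa}{(-u)^2}$ in \eqref{eq Lie b div b} comes entirely from $-4\Omega\underline{\omega}\,\Omega\operatorname{tr}\underline{\chi}$, not from the weight term. That same product also contributes $-\frac{\kappa}{-u}\operatorname{div}b$, so you will obtain $\frac{1-\kappa}{-u}\operatorname{div}b$ where the display prints $\frac{1}{-u}\operatorname{div}b$.
\item A few other printed coefficients are likewise off. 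The $\Omega^{-1}D_4\phi$ equation comes out as $\mathcal{L}_b(\Omega^{-1}D_4\phi)-(\frac{\kappa}{-u}-\frac12\operatorname{div}b-2\mathcal{L}_b\log\Omega)\Omega^{-1}D_4\phi=\Delta\phi-\frac12\Omega^{-1}\operatorname{tr}\chi\,\Omega D_3\phi+2\eta\cdot\nabla\phi$, which is consistent with how the paper uses it in Section~3.2. In the one-form equation \eqref{eq_etab(-1,0)}, the non-gradient terms $-\frac{\kappa}{-2u}-\mathcal{L}_b\log\Omega$ should be $-2\nabla_A(\Omega\underline{\omega})$.
\end{itemize}

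None of these discrepancies shows up in your planned $b=0$ background check, so compare term by term.
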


The construction of the initial data critically hinges on solving this coupled system. Specifically, the master equation \eqref{eq Lie b div b} determines the shift vector $b$ alongside the spatial metric structure. Predicated on the analytic framework developed in Appendix A of \cite{Rodnianski2019NakedSF}, we establish the following solvability result:

\begin{lemma}[Solution to the Master Equation]\label{Solve Lie b div b eq}
    Let $N \in \mathbb{N}$ and $0 < \epsilon \ll 1$. There exists a solution set $(g_{AB}, b^A, \phi, v^\kappa\Omega^2)$ on $S_{-1,0}$ such that:
    \begin{enumerate}
        \item The master equation \eqref{eq Lie b div b} is satisfied.
        \item The metric admits the conformal decomposition $g_{AB} = e^{2\varphi} \check{g}_{AB}$, where $\check{g}$ is the standard metric on $\mathbb{S}^2$.
        \item The solution is a small perturbation of the background in the following sense:
        \begin{equation}
            \|\varphi, \log(v^\kappa\Omega^2), b, \Omega D_3\phi + \sqrt{2\kappa}, \nabla\phi\|_{H^{N+2}(\mathbb{S}^2, g)} \leq \epsilon.
        \end{equation}
        \item The scalar field derivative $\partial_u\phi$ is constant on $S_{-1,0}$.
    \end{enumerate}
\end{lemma}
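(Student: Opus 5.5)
The plan is to treat \eqref{eq Lie b div b} as an equation for the single scalar unknown $\log(v^\kappa\Omega^2)$ (equivalently $\log\Omega$ modulo the background power), after making a free choice of the remaining data. We first fix $u=-1$ and choose freely a small conformal factor $\varphi$ with $g_{AB}=e^{2\varphi}\check g_{AB}$, a small shift $b$, and a small nonconstant $\phi|_{S_{-1,0}}$ so that $\nabla\phi$ is small in $H^{N+2}$. We take $\Omega D_3\phi\equiv-\sqrt{2\kappa}+c$ with $c$ a constant to be fixed, so that item (4) holds; since $\Omega$ is nonconstant, we arrange constancy of $\partial_u\phi$ by choosing it directly and defining $\Omega D_3\phi=\Omega^{-1}(\partial_u+b^A\partial_A)\phi\cdot\Omega^{2}/\Omega$ accordingly. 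In other words, we prescribe $\partial_u\phi\equiv$ const and let $\Omega D_3\phi$ be determined by $\Omega$ and $b\cdot\nabla\phi$. Then \eqref{eq Lie b div b} at $u=-1$ reads
\[
\bigl(4-2\,\dv b\bigr)\,b^A\partial_A\log\Omega = 2\kappa+\dv b-\Lie_b\dv b-\tfrac12(\dv b)^2-\tfrac14|\nabla\hat\otimes b|^2-(\Omega D_3\phi)^2 ,
\]
which is a first-order transport equation along the flow of $b$ on $\mathbb S^2$. The zeroth-order term $2\kappa-(\Omega D_3\phi)^2$ vanishes identically for the background.

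The key difficulty is that $b\cdot\nabla$ degenerates at the zeros of $b$ (any vector field on $\mathbb S^2$ has zeros) and has no good invertibility. Following Appendix A of \cite{Rodnianski2019NakedSF}, we would invert the roles. Fix $\Omega$ (small $\log(v^\kappa\Omega^2)$), $\varphi$, and $\phi$. Then solve for $b$ in a gradient-plus-curl form $b=\nabla f+{}^*\nabla h$ with $h$ given and $f$ unknown. The linear part of the equation in $b$ is $\dv b=\Delta f$ plus constant terms, and the remaining terms are quadratic in $(b,\nabla\log\Omega)$ or involve $\Omega D_3\phi+\sqrt{2\kappa}$.

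So the equation becomes
\[
\Delta f=-2\kappa+(\Omega D_3\phi)^2+\mathcal N(f,h,\Omega,\phi),
\]
with $\mathcal N$ quadratic but containing $\Lie_b\dv b$, which has the same derivative order as $\Delta f$ times $b$. It is therefore a small quasilinear perturbation of $\Delta$: we write $(1+O(\epsilon))\Delta f+\ldots$, or more precisely $\Delta f-b\cdot\nabla\Delta f$. The solvability obstruction is the mean: $\int(\text{RHS})\,d\mathcal V=0$. We kill it with the constant $c$ in $\Omega D_3\phi$ (or in $\partial_u\phi$), solved via the implicit function theorem together with $f$.

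We would run a contraction, or the implicit function theorem, on $(f,c)\in H^{N+4}_0\times\mathbb R$ around $(0,0)$. The data are $h,\varphi,\log(v^\kappa\Omega^2),\nabla\phi$, all of size $\epsilon$. The main obstacle is the loss of derivatives from the term $b\cdot\nabla\Delta f$, which prevents a naive Banach fixed point in $H^{N+4}$. To handle it, we first try treating the principal operator $P_b=(1-b\cdot\nabla)\Delta$ as elliptic-transport. We solve $(1-b\cdot\nabla)w=G$ by energy estimates: $\int w\,(1-b\cdot\nabla)w=\|w\|^2+\tfrac12\int(\dv b)w^2$, which is coercive for small $b$, with commutator estimates at high order. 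We then invert $\Delta$ on mean-zero functions. Iterating with a frozen $b$, we obtain uniform high-norm bounds and contraction in a lower norm, which closes the argument. The resulting estimates give item (3), and item (2) holds by construction.
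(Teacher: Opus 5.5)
Your argument is correct, but it takes a genuinely different route from the paper's. The paper does not solve \eqref{eq Lie b div b} at all. It observes that $\kappa$ and the scalar field enter only through the combination $\frac{2\kappa}{(-u)^2}-(\Omega D_3\phi)^2$. It then takes a vacuum solution $(\kappa_1,b,g,\log\Omega)$ of the same equation with $\phi=0$ from \cite{Rodnianski2019NakedSF} and sets $\phi=\sqrt{2(\kappa-\kappa_1)}\log(-u)$. This gives $(\Omega D_3\phi)^2=2(\kappa-\kappa_1)/u^2$, so the coupled equation collapses to the vacuum one, with $\nabla\phi\equiv 0$ and $\partial_u\phi$ trivially constant.

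Your constant $c$ in $\partial_u\phi$ is the same mechanism made explicit. The quantity $2\kappa-(\partial_u\phi)^2$ plays the role of $2\kappa_1$, and your mean-zero condition for inverting $\Delta$ is exactly the scalar equation that pins this constant down. That condition is nondegenerate, since its $c$-derivative is $-2\sqrt{2\kappa}$ times the area of $S_{-1,0}$, up to small errors.

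The paper's route buys brevity. The price is that all the analysis is outsourced to the vacuum result, and that result's $\kappa_1$ must be small for item (3) to hold. Your route is self-contained. It exposes the quasilinear structure $(1-\Lie_b)\dv b$ and treats it with precisely the estimate of Lemma \ref{Lie_Propagation_Sol}, which also supplies existence, not just the energy bound. It also produces data with $\nabla\phi\neq 0$ on $S_{-1,0}$, which the paper's reduction does not.

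Two small remarks. First, since $\Omega e_3=\partial_u+b^A\partial_A$, one has $\Omega D_3\phi=\partial_u\phi+b\cdot\nabla\phi$, with no dependence on $\Omega$; your detour through ``$\Omega$ nonconstant'' is unnecessary. Second, the frozen-coefficient scheme $(1-b_n\cdot\nabla)\Delta f_{n+1}=G(f_n,c_{n+1})$ closes uniform bounds at fixed regularity. This is because $b_n=\nabla f_n+{}^*\nabla h\in H^{k+1}$ whenever $\Delta f_n\in H^k$. The derivative loss appears only in the difference term $(b_n-b_{n-1})\cdot\nabla\Delta f_n$, which is exactly why contraction holds only in the weaker norm, as you say.
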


\begin{proof}
    The proof mirrors the corresponding vacuum case. Suppose that the tuple\\
    $(\kappa_1, b, g, \log\Omega)$ solves the vacuum equation (i.e., Eq. \eqref{eq Lie b div b} evaluated with $\phi=0$). Then, the extended tuple $(\kappa, b, g, \log\Omega, \phi)$ subject to the relation $\phi = \sqrt{2(\kappa - \kappa_1)}\log(-u)$ provides a valid solution to the coupled system \eqref{eq Lie b div b}.
\end{proof}

Once the fundamental variables $(g, b, \phi, v^\kappa\Omega^2)$ are fully determined, the remaining geometric quantities are successively obtained by solving their respective linear transport equations. To this end, we invoke the following solvability lemma governing Lie transport equations on the sphere.

\begin{lemma}[Proposition 4.5 in \cite{Rodnianski2019NakedSF}]\label{Lie_Propagation_Sol}
    Consider the linear equation on the sphere $\mathbb{S}^2$:
    \begin{equation}\label{Lie_propagate_eq}
        u + \mathcal{L}_X u + h \cdot u = F,
    \end{equation}
    where $X$ is a vector field in $H^{M+1}(\mathbb{S}^2)$, and $h, F$ are tensor fields in $H^M(\mathbb{S}^2)$. Define the control parameter $A_M = \|X\|_{H^{M+1}} + \|h\|_{H^M}$. If $A_2 \ll 1$, then there exists a unique solution $u \in H^M(\mathbb{S}^2)$ satisfying the estimate:
    \begin{equation}\label{Lie_propagate_est}
        \|u\|_{H^M} + \|\mathcal{L}_X u\|_{H^M} \leq C(A_M)\|F\|_{H^M}.
    \end{equation}
\end{lemma}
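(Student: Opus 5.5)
The plan is to reduce the equation $u+\mathcal{L}_X u+h\cdot u=F$ to an a priori energy estimate and then obtain existence by a continuity argument in the coefficients. First consider the model operator $P_0 u := u$ with $X=0$, $h=0$, which is trivially invertible with $\|u\|_{H^M}=\|F\|_{H^M}$. Next interpolate through the family $P_t u := u + t(\mathcal{L}_X u + h\cdot u)$, $t\in[0,1]$. It is more convenient, however, to regularize. Replace $\mathcal{L}_X$ by $\mathcal{L}_X - \varepsilon' \Delta^{M'}$ for a large $M'$, or alternatively mollify $X$, so that the equation becomes elliptic or has smooth coefficients. Then derive estimates uniform in the regularization and pass to the limit. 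Uniqueness will follow from the same $L^2$ estimate applied to the difference of two solutions.

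The core step is the a priori bound. At the $L^2$ level, pair the equation with $u$ and integrate over $\mathbb{S}^2$:
\[
\int |u|^2 + \int \langle \mathcal{L}_X u, u\rangle + \int \langle h\cdot u,u\rangle = \int\langle F,u\rangle .
\]
For tensors, $\langle \mathcal{L}_X u,u\rangle = \tfrac12 X(|u|^2) + (\nabla X * u * u)$. Here the correction comes from the difference between $\mathcal{L}_X$ and $\nabla_X$ and from differentiating the metric. Integrating by parts gives
\[
\int \tfrac12 X(|u|^2) = -\tfrac12\int \operatorname{div}X\,|u|^2 .
\]
All the error terms are therefore bounded by $C(\|\nabla X\|_{L^\infty}+\|h\|_{L^\infty})\|u\|_{L^2}^2 \lesssim A_2\|u\|_{L^2}^2$, using Sobolev embedding $H^2\subset L^\infty$ on $\mathbb{S}^2$. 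Since $A_2\ll1$, these terms absorb into the left side, which yields $\|u\|_{L^2}\le 2\|F\|_{L^2}$.

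For higher order, commute $\nabla^k$ (or $(1-\Delta)^{k/2}$) through the equation for $k\le M$. The commutator $[\nabla^k,\mathcal{L}_X]u$ is a sum of terms $\nabla^{j+1}X * \nabla^{k-j}u$ with $0\le j\le k$, plus curvature terms of the round metric. The top-order term $\nabla X*\nabla^k u$ is again controlled by $A_2\|\nabla^k u\|^2$. The intermediate terms are handled by Moser-type product estimates in terms of $\|X\|_{H^{M+1}}\|u\|_{H^{M-1}}$ and $\|h\|_{H^M}\|u\|_{L^\infty\cap H^{M-1}}$. This gives the inductive bound
\[
\|u\|_{H^k}\le C(A_k)\bigl(\|F\|_{H^k}+\|u\|_{H^{k-1}}\bigr).
\]
Finally, the bound on $\|\mathcal{L}_X u\|_{H^M}$ is read off from the equation itself, since $\mathcal{L}_X u = F - u - h\cdot u$.

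The main obstacle I expect is the top-order commutator and the fact that the estimate involves $\mathcal{L}_X u$, which loses a derivative. This is why a regularization is needed for existence. I would use the elliptic regularization $u+\mathcal{L}_Xu+h u-\varepsilon\Delta^{M+1}u=F$ and solve it by Lax–Milgram, whose coercivity again comes from $A_2\ll1$. The key point is that the energy identity above holds with the extra nonnegative term $\varepsilon\|\nabla^{M+1}u\|^2$ and with constants independent of $\varepsilon$. A weak-$*$ limit in $H^M$ then solves the equation, and the estimate passes to the limit by lower semicontinuity.
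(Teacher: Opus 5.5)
Your argument is correct up to a few routine repairs. The paper does not prove this lemma; it is quoted from \cite{Rodnianski2019NakedSF}, so there is no internal argument to match.

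Your a priori estimate uses the right mechanism. After integrating $\langle\nabla_X u,u\rangle=\frac12 X(|u|^2)$ by parts, the symmetric part of $\mathcal{L}_X+h$ is only $-\frac12\mathrm{div}X$ plus algebraic $\nabla X$ and $h$ terms, all $\lesssim A_2$, so the identity term dominates. After commuting $\nabla^k$, the only terms with $k$ derivatives on $u$, besides the transport term $\nabla_X\nabla^k u$, are of the form $\nabla X*\nabla^k u$. Three details need tightening.
\begin{enumerate}
\item For $k\le 2$, the terms $\nabla^{k+1}X*u$ and $\nabla^k h*u$ are not bounded by $\|u\|_{H^{k-1}}$, since $H^1\not\subset L^\infty$ on $\mathbb{S}^2$. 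They are $\lesssim A_2\|u\|_{H^k}$ and must be absorbed using the smallness of $A_2$, not by the induction.
\item The regularizer must be $\varepsilon(-\Delta)^{M+1}$. Your $-\varepsilon\Delta^{M+1}$ has the wrong sign when $M$ is odd.
\item The uniform $H^M$ bound for $u_\varepsilon$ is not literally the old identity plus a nonnegative term. First you need $u_\varepsilon$ regular enough to differentiate the regularized equation $M$ times. This follows from elliptic regularity: $\varepsilon(-\Delta)^{M+1}u_\varepsilon=F-u_\varepsilon-\mathcal{L}_Xu_\varepsilon-hu_\varepsilon\in H^M$, so $u_\varepsilon\in H^{3M+2}$. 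Second, the commutator $[\nabla^k,(-\Delta)^{M+1}]$ contributes sign-indefinite lower-order curvature terms. Interpolation absorbs these into $\varepsilon\int\langle(-\Delta)^{M+1}\nabla^k u_\varepsilon,\nabla^k u_\varepsilon\rangle$ up to $C\varepsilon\|u_\varepsilon\|_{L^2}^2$, which your uniform $L^2$ bound controls.
\end{enumerate}

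The closest argument the paper actually carries out is the proof of Lemma \ref{Lie_propagation_eq_II}. There a stationary Lie equation is reached as the $s\to\infty$ limit of a transport evolution in $s=-\log(v/V)$. Transplanted here, you would solve $\partial_s w+w+\mathcal{L}_Xw+hw=F$, a linear transport equation integrable along the global flow of $X$. Your commuted energy estimate then gives $H^M$ bounds uniform in $s$. Since the coefficients are $s$-independent, $\partial_s w$ satisfies the homogeneous equation, so it decays exponentially in $H^{M-1}$, and $u=\lim_{s\to\infty}w(s)$. Equivalently, $u=\int_0^\infty e^{-s}\mathcal{S}(s)F\,ds$, where $\mathcal{S}(s)$ is the solution operator of $\partial_s w+\mathcal{L}_Xw+hw=0$.

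That route needs no artificial high-order term, and it turns the gap between $1$ and $\|\nabla X\|_{L^\infty}+\|h\|_{L^\infty}$ into an explicit decay rate. Its cost is solving a transport problem with $H^{M+1}$ coefficients. Your route obtains existence by soft functional analysis (Lax--Milgram plus weak compactness), at the price of the regularity and commutator bookkeeping described above.
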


With this lemma, we can construct the initial data along $\Hb_0$.
\begin{corollary}[Construction of Initial Data]
    Let $\oc{\psi}$ denote the solution constructed via Lemma \ref{Solve Lie b div b eq} and Lemma \ref{Lie_Propagation_Sol}. These quantities define a self-similar initial data set along $\underline{H}_0$, obeying the scaling laws:
    \begin{gather*}
        \oc{b^A}(u) = (-u)^{-1}\oc{b^A}(-1), \, \oc{g}_{AB}(u) = (-u)^2\oc{g}_{AB}(-1), \\
        \oc{v^\kappa\Omega^2}(u) = (-u)^\kappa\oc{v^\kappa\Omega^2}(-1),  \, \oc{\zeta}_A(u) = \oc{\zeta}_A(-1),\\
        \oc{\Omega^{-1}\operatorname{tr}\chi}(u) = (-u)^{-1}\oc{\Omega^{-1}\operatorname{tr}\chi}(-1), \, \oc{\Omega^{-1}\hat{\chi}}_{AB}(u) = (-u)\oc{\Omega^{-1}\hat{\chi}}_{AB}(-1), \\
        \oc{\Omega D_3\phi}(u) = (-u)^{-1}\oc{\Omega D_3\phi}(-1), \, \oc{\Omega^{-1}D_4\phi}(u) = (-u)^{-1}\oc{\Omega^{-1}D_4\phi}(-1), \, \oc{\nabla_A\phi}(u) = \oc{\nabla_A\phi}(-1).
    \end{gather*}
    Furthermore, the constructed data remain close to the Christodoulou background solution. For $\psi \in \{\Omega^{-1}\chi, \zeta\}$ and $D\phi \in \{\Omega D_3\phi, \nabla\phi, \Omega^{-1}D_4\phi\}$, we have the estimates:
    \begin{equation}
        \|\nabla^i(\oc{\psi}(u) - \psi^c(u,0), \oc{D\phi}(u) - D\phi^c(u,0))\|_{L^2(\oc{g}(u))} \lesssim \epsilon (-u)^{-i}, \quad i=0, \dots, N.
    \end{equation}
\end{corollary}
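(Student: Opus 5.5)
The plan is to build everything on the single sphere $S_{-1,0}$ and then transport it along $\underline{H}_0$ by the scaling laws.

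\textbf{Step 1: the basic quantities.} Lemma \ref{Solve Lie b div b eq} gives the tuple $(\oc{g},\oc{b},\oc{\phi},\oc{v^\kappa\Omega^2})$ on $S_{-1,0}$. It is $\epsilon$-close in $H^{N+2}$ to the Christodoulou values, which by Theorem \ref{thm:background_properties} are $\check g$ round, $b=0$, $v^\kappa\Omega^2=1$ and $\Omega D_3\phi=-\sqrt{2\kappa}$ at $u=-1$.

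\textbf{Step 2: the remaining quantities, in triangular order.} We solve \eqref{eq_etab(-1,0)}, \eqref{eq_trchi(-1,0)}, \eqref{eq_chih(-1,0)} and \eqref{eq_D4phi(-1,0)} one after another; each depends only on quantities already built.
\begin{enumerate}
\item First $\zeta$. Then $\eta=\zeta+\nabla\log\Omega$.
\item Next $\Omega^{-1}\tr\chi$, using $K$ of $\oc g$ and $\eta$.
\item Then $\Omega^{-1}\chih$.
\item Finally $\Omega^{-1}D_4\phi$.
\end{enumerate}
At $u=-1$, divide each equation by its zeroth-order constant: $-2$, $-(1+\kappa)$, $-\kappa$ and $-\kappa$ respectively. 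This puts it in the form \eqref{Lie_propagate_eq} with $X=c^{-1}b$ and $h$ built from $\dv b$, $\Lie_b\log\Omega$ and $\nabla\hat\otimes b$. All of these are $O(\epsilon)$ in $H^{N+1}$, so $A_2\ll1$ once $\epsilon\ll\kappa$. Lemma \ref{Lie_Propagation_Sol} then gives unique $H^N$ solutions. The $\nabla\hat\otimes b$ coupling terms in the $\chih$ equation are absorbed into $h$ as a linear endomorphism of symmetric traceless tensors.

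\textbf{Step 3: closeness to the background.} Write each equation for the difference $\oc\psi-\psi^c(-1,0)$. The background values solve the same equations with $b=0$ and round data: for instance $\zeta^c=0$, and $\Omega^{-1}\tr\chi^c=2/(1+\kappa)$ from $K=1$. So the difference satisfies an equation of the same type whose forcing is $O(\epsilon)$ in $H^{N}$. This forcing is products of $b$, $\nabla\log\Omega$, $\varphi$ and $\nabla\phi$ with bounded background quantities, using Sobolev algebra on $\mathbb{S}^2$. Estimate \eqref{Lie_propagate_est} then gives the bound $\lesssim\epsilon$ at $u=-1$ for $i\le N$, with the norms switched between $\check g$ and $\oc g$ at $O(\epsilon)$ cost.

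\textbf{Step 4: propagation to all $u$.} We define $\oc\psi(u)$ by the stated scaling laws. Each equation on $S_{u,0}$ is homogeneous under $u\mapsto\lambda u$ with these weights: $b\sim(-u)^{-1}$, $g\sim(-u)^2$, $\nabla$ is scale-free, and the $(-u)^{-1}$ coefficients match. Hence the rescaled data solve the constraint equations for every $u$ and define a $\kappa$-self-similar data set; in particular $u\Lie_{\partial_u}g=2g$. For the estimate, an $i$-th covariant derivative of a rescaled tensor in $L^2(\oc g(u))$ scales by $(-u)^{-i}$ relative to $u=-1$, after accounting for the tensor weights and the area factor, and the background scales identically. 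This gives the $(-u)^{-i}$ factor in the bound.

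\textbf{Main obstacle.} The delicate point is the smallness condition $A_2\ll1$ for the $\chih$ and $D_4\phi$ equations, whose normalizing constant $\kappa$ is small. This forces $\epsilon\ll\kappa$, which is harmless since $\kappa$ is fixed. A second delicate point is the loss of derivatives: $F$ contains $\nabla\eta$ and $\Delta\phi$, so we need $\eta$ in $H^{N+1}$. We handle this by solving the $\zeta$ equation at regularity $N+1$, which is possible because its forcing involves $\nabla\dv b$ and $b\in H^{N+2}$ (losing at most one derivative for $\nabla\hat\otimes b$ terms). We then accept $N$ derivatives for $\chih$ and $D_4\phi$.
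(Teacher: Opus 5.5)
Your route is the paper's: solve the degenerate Lie propagation equations on $S_{-1,0}$ with Lemma \ref{Lie_Propagation_Sol}, apply the same lemma to the difference equations, and extend by scaling. However, the step you single out as the way around the derivative loss does not work as stated.

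You claim the $\zeta$-equation \eqref{eq_etab(-1,0)} can be solved in $H^{N+1}$ because $b\in H^{N+2}$. Its forcing contains $\frac12\nabla_A\dv b-\frac12\dv(\nabla\hat\otimes b)_A$ and $\nabla_A(\Lie_b\log\Omega)$ (the latter coming from $-2\nabla_A(\Omega\omegab)$). These are \emph{second} derivatives of $b$ and of $\log\Omega$. The first combination does not collapse to first order: by the two-dimensional Bochner identity it reduces to a multiple of ${}^*\nabla\cl b$ plus a multiple of $Kb$.

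So with the $H^{N+2}$ data of Lemma \ref{Solve Lie b div b eq}, the forcing lies only in $H^N$. Lemma \ref{Lie_Propagation_Sol} is a transport estimate with no smoothing, so you get $\zeta$, and hence $\eta=\zeta+\nabla\log\Omega$, only in $H^N$. Then $\dv\eta$ and $\nabla\hat\otimes\eta$ lie in $H^{N-1}$. Your triangular chain therefore yields $\Omega^{-1}\tr\chi$, $\Omega^{-1}\chih$, and (through the term $\Omega^{-1}\tr\chi\,\Omega D_3\phi$) $\Omega^{-1}D_4\phi$ only for $i\le N-1$. That is one derivative short of the claim.

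The repair is harmless. Lemma \ref{Solve Lie b div b eq} is available for every $N$, so apply it with $N+1$ in place of $N$, i.e.\ data in $H^{N+3}$. Then the $\zeta$-forcing lies in $H^{N+1}$, all later forcings lie in $H^N$, and the rest of your argument goes through. The paper's short proof does not track this count either; the same shift of $N$ is implicitly needed there.

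A smaller inaccuracy concerns the $\chih$ equation. The coupling terms $(\nabla\hat\otimes b)^C{}_A(\Omega^{-1}\chih)_{BC}+(\nabla\hat\otimes b)^C{}_B(\Omega^{-1}\chih)_{AC}$ are not an endomorphism of traceless tensors. For two-dimensional symmetric traceless tensors this expression equals $\big((\nabla\hat\otimes b)\cdot\Omega^{-1}\chih\big)g_{AB}$, which is pure trace; it arises from $\Omega\chibh=\frac12\nabla\hat\otimes b$ and compensates the trace of $\Lie_b\chih$. You should therefore solve in the space of all symmetric $2$-tensors. Then note that the trace of the solution satisfies a homogeneous equation of the same Lie propagation type, so uniqueness in Lemma \ref{Lie_Propagation_Sol} forces it to vanish. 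This is what justifies calling the solution $\oc{\Omega^{-1}\chih}$.
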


\begin{proof}
    These precise estimates readily follow from applying Lemma \ref{Lie_Propagation_Sol} to the appropriate difference equations below. 
    \begin{equation*}
        \begin{split}
            &\Lie_b\left(\Omega^{-1}\tr\chi-\Omega^{-1}\tr\chi^c\right)+\left(\Omega^{-1}\tr\chi-\Omega^{-1}\tr\chi^c\right)\left(-\frac{1+\k}{-u}+\dv b+2\Lie_b\log\Omega\right)\\
            =& -2\left(e^{-2\varphi}-1\right)K^c +2e^{-2\varphi}\Delta_g\varphi+2\dv\eta+2|\eta|^2+|\nabla\phi|^2-\Omega^{-1}\tr\chi^c\left(\dv b+2\Lie_b\log\Omega\right),\\
            &\Lie_b(\Omega^{-1} D_4\phi-(\Omega^{-1} D_4\phi)^c)-\l\frac{1}{2}\dv b+\frac{\k}{-u}-\Lie_b\log\Omega\r\left(\Omega^{-1} D_4\phi-(\Omega^{-1} D_4\phi)^c\right)\\
            =&\left(\frac{1}{2}\dv b-\Lie_b\log\Omega\right)(\Omega^{-1} D_4\phi)^c+ \Delta \phi
    -\frac{1}{2}\left(\Omega^{-1}\tr\chi\Omega D_3\phi-(\Omega^{-1}\tr\chi\Omega D_3\phi)^c\right).
        \end{split}
    \end{equation*}
    The existence and regularity bounds are then a direct consequence of the linear theory.
\end{proof}

\subsection{Initial Values along the Initial Outgoing Null Cone}\label{Initial Values along the Initial Outgoing Null Cone}

Having explicitly constructed the requisite data on the incoming cone $\underline{H}_0$, we now proceed to formulate the initial data along the outgoing cone $H_{-1}$. The construction strategy involves three distinct steps:
\begin{enumerate}
    \item Establish a general solvability theory for parabolic-type Lie propagation equations.
    \item Construct the approximating shear $\ot{\hat{\chi}}$ and scalar derivative $\ot{e_4\phi}$ by solving these equations.
    \item Integrate the transport equations along $H_{-1}$ to define the full metric $g_{AB}$, lapse $\Omega$, and scalar field $\phi$.
\end{enumerate}

\subsubsection{General Theory for Lie Propagation Equations}

We first establish the corresponding well-posedness and regularity estimates for a broader class of transport equations that naturally arise in the geometric construction.

\begin{lemma}[Parabolic Lie Propagation]\label{Lie_propagation_eq_II}
    Let $\epsilon\ll 1$, and $0<\lambda\sim O(1)$.
    For tensor fields $f_{A_1\cdots A_r}(v,\theta)$ and $F_{A_1\cdots A_r}(v,\theta,f)$,
    consider the equation 
    \begin{equation}\label{Lie_prop_parabolic_eq}
        v\partial_v f(v,\theta)+\Lie_{b(v,\theta)} f(v,\theta)-\lambda f(v,\theta)+h(v,\theta)\cdot f(v,\theta)=F(v,\theta,f),
    \end{equation}
    where $b$ is a vector field and $h=\sum_{i\leq r}{h^{(i)}}_{A_1\cdots A_i}^{\quad\quad \ B_1\cdots B_i}$.
    Suppose $b,h,F$ are continuous in $v$ and bounded, and for any $v$ it holds that  $\|b(v)\|_{{H}^{N+1}(\SS)}+\|\nabla h\|_{{H}^{N}(\SS)}+\|(\nabla,\partial_f) F\|_{{H}^{N}(\SS\times\mathbb{R}^r)}\lesssim\epsilon$, $\|h(v)\|_{L^\infty(\SS)}+\|(\nabla,\partial_f) F\|_{L^\infty(\SS\times \mathbb{R}^r)}\leq \delta\ll 1$, then for any $V>0$, and $f_{V}(\theta)\in\mathring{H}^{N}$, there exists a solution $f$ defined on $(0,V)\times \SS$ solving \eqref{Lie_prop_parabolic_eq} with $f(V)=f_V$. The following estimate holds 
    \begin{equation}
        \|\nabla f\|_{L^\infty_v{H}^{N}(\SS)}\lesssim \epsilon+\|\nabla f_V\|_{{H}^{N}(\SS)},\quad \| f\|_{L^\infty_v{H}^{N}(\SS)}\lesssim \lnm F(f(0))\rnm_{L^\infty_{v}L^2(\SS)}+\| f_V\|_{{H}^{N}(\SS)}.
    \end{equation}
    Moreover, if we assume that all coefficients are continuous at $v=0$ and $\partial_v(b,\nabla b,F,\partial_f F,h)$ is of order $o_{\mathring{H}^4(\SS)}(\frac{1}{v})$, then $\lim_{v\rightarrow 0}f(v)$ exists and coincides with the solution to 
    \begin{equation}\label{Lie_prop_eq_s=0}
        \Lie_{b(0)} f-\lambda f+h(0)f=F(0,f(0)).
    \end{equation}
    Specifically, suppose that $\|\partial_vh,\partial_v\nabla b,\partial_vF\|_{\mathring{H}^4(\SS)}=o(v^{a-1})$. Then for any positive constant $a<\lambda$ and $\delta\ll \lambda-a$, we have $(v^{1-a}\partial_v f)$ is continuous at $0$, with limit satisfying 
    \begin{equation}
        \begin{split}
            {b^A(0)}\nabla_A(v^{1-a}\partial_vf) (0)-&\left(\lambda-a-h(0)+\partial_f F(0,f(0))\right)(v^{1-a}\partial_vf) (0)\\
            =&(v^{1-a}\partial_v)F(0,f(0))-(v^{1-a}\partial_v)b^A(0)\partial_A f(0)-(v^{1-a}\partial_v)h(0)f(0).
        \end{split}
    \end{equation}
\end{lemma}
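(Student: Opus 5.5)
The plan is to treat \eqref{Lie_prop_parabolic_eq} as an ODE in the variable $s=\log v$, with values in $H^N(\SS)$. Writing $v\partial_v=\partial_s$, the equation becomes
\[
\partial_s f+\Lie_{b}f-\lambda f+h\cdot f=F(f).
\]
The term $-\lambda f$ is a strong \emph{repulsive} term as $s$ increases. So if we integrate backward from $s=\log V$ toward $s=-\infty$, it becomes damping. This is why we prescribe data at $v=V$ and solve on $(0,V)$.

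\textbf{Step 1: existence and the $H^N$ bound.} For existence, I would regularize: mollify $b,h,F$ in $\theta$, solve the resulting ODE by Picard iteration on $H^N$, and pass to the limit using the uniform bounds below. The bounds come from energy estimates. For $0\le k\le N$, differentiate the equation $k$ times and pair with $\nabla^k f$. The transport term $\Lie_b$ contributes, after integrating by parts on $\SS$, only $\|{\rm div}\, b\|_{L^\infty}$ and commutator terms bounded by $\|b\|_{H^{N+1}}\lesssim\epsilon$. This gives
\[
\tfrac{d}{ds}\|f\|_{H^N}^2\ge 2(\lambda-C\epsilon-C\delta)\|f\|_{H^N}^2-C\|F(f)\|_{H^N}\|f\|_{H^N}.
\]
Integrating backward from $s=\log V$ gives a Gronwall bound with decaying exponential weight $e^{-(\lambda-C\delta)(\log V-s)}$. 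The contributions of $F$ are then bounded by $\lambda^{-1}\sup\|F\|$. To handle the nonlinearity in $f$, I would split $F(f)=F(f(0))+O(\|\partial_fF\|)f$ and absorb the second piece since $\delta\ll\lambda$. This produces $\|f\|_{L^\infty_vH^N}\lesssim\|F(f(0))\|+\|f_V\|_{H^N}$.

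\textbf{Step 2: the bound on $\nabla f$.} For $\nabla f$, the source involves $\nabla F$, $\nabla h\, f$ and $\nabla b\,\nabla f$, all of size $\epsilon$ by the hypotheses. The same argument therefore gives $\lesssim\epsilon+\|\nabla f_V\|_{H^N}$.

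\textbf{Step 3: the limit $v\to0$.} Set $g=v\partial_vf=\partial_sf$ and differentiate the equation in $s$. Then $g$ satisfies the same type of equation, with a source made of $\partial_s(b,h,F)$ applied to $f$. By hypothesis this source is $o(1)$ as $s\to-\infty$, in $\mathring H^4$. Backward damping then shows $\|g\|\to0$, so $\partial_s f\to0$ and $\int\|\partial_s f\|\,ds$ controls the oscillation. To get convergence rather than only $\partial_sf\to0$, I would use a contraction argument. Let $f_*$ solve \eqref{Lie_prop_eq_s=0}; it exists by Lemma \ref{Lie_Propagation_Sol}, after dividing by $-\lambda$, using $\lambda-\delta>0$ and smallness of $b$. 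Then $f-f_*$ satisfies the backward-damped equation with source
\[
(\Lie_{b(0)}-\Lie_{b(v)})f_*+\dots=o(1),
\]
and the damping gives $f\to f_*$.

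\textbf{Step 4: the refined rate.} Next apply $v^{1-a}\partial_v$, which in the variable $s$ is $e^{as}\partial_s$. The quantity $k=v^{1-a}\partial_vf$ satisfies the same equation, but with $\lambda$ replaced by $\lambda-a$ and $h$ replaced by $h-\partial_fF$. Its source is $v^{1-a}\partial_v(F,h,b)$ evaluated on $f$, which is continuous at $0$ by the $o(v^{a-1})$ hypothesis. Since $\delta\ll\lambda-a$, the damping still holds. Repeating Step 3 shows that $k(0)$ exists and solves the displayed limiting equation.

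\textbf{Main obstacle.} I expect the hardest point to be the top-order commutators $[\nabla^N,\Lie_b]$ in Step 1, which lose a derivative. I would handle them by using the $H^{N+1}$ norm of $b$ and integrating by parts in the top-order term, as in Proposition 4.5 of \cite{Rodnianski2019NakedSF}. Uniformity of the limits in Steps 3–4 rests on the same estimate.
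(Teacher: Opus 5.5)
Your Steps 1--2 are essentially the paper's argument. The paper uses $s=-\log(v/V)$, so the $\lambda$-term is damping forward in $s$, which is your backward integration in $\log v$. The regularization you add for existence is something the paper does not spell out.

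The genuine difference is how you obtain the limit as $v\to0$. The paper argues indirectly, in three moves:
\begin{itemize}
\item it shows $\partial_sf\to0$, which is where the $o(1/v)$ hypothesis on $\partial_v(b,\nabla b,F,\partial_fF,h)$ enters;
\item it extracts subsequential limits from the uniform $H^N$ bound and observes that each solves \eqref{Lie_prop_eq_s=0};
\item it proves that two subsequential limits coincide by applying Lemma \ref{Lie_Propagation_Sol} to their difference.
\end{itemize}
You instead solve the frozen equation first and show $f-f_*\to0$ directly by backward damping. This is precisely the comparison the paper only uses afterwards, in Corollary \ref{Lie_prop_eq_II_Cor_I}, to get the $v^a$ rate.

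Your route has several advantages. It avoids compactness, and it gives convergence in whatever norm you run the energy estimate. It needs only continuity of the coefficients at $v=0$, so your computation for $g=\partial_sf$ is not needed for this claim, although it is what gives Corollary \ref{Lie_prop_eq_II_Cor_II}. It also yields a rate as soon as the coefficients converge at one.

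The cost is that you must construct $f_*$ yourself. Since \eqref{Lie_prop_eq_s=0} depends nonlinearly on $f$ through $F(0,f)$, Lemma \ref{Lie_Propagation_Sol} alone is not enough, as your sketch suggests. You need a contraction, for example in $L^2$ with Lipschitz constant of order $\delta/\lambda$, and then an upgrade of regularity with the same lemma. The paper, by contrast, gets a solution of \eqref{Lie_prop_eq_s=0} for free from compactness and uses Lemma \ref{Lie_Propagation_Sol} only for uniqueness.

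Finally, the paper's written proof never addresses the $v^{1-a}\partial_vf$ statement, so your Step 4 supplies it; fix two points there.
\begin{itemize}
\item There is a sign slip: with $s=\log v$ one has $v^{1-a}\partial_v=e^{-as}\partial_s$, not $e^{as}\partial_s$. This is what turns $\lambda$ into $\lambda-a$; your formula would give $\lambda+a$.
\item Read literally, the $o(v^{a-1})$ hypothesis makes the limiting source vanish and forces the limit to be $0$. What is really needed, and what you implicitly use, is that $v^{1-a}\partial_v(b,h,F)$ has a limit at $v=0$.
\end{itemize}
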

\begin{proof}
    Consider reparametrization $s=-\log(\frac{v}{V})$, then the equation turns to 
    \begin{equation}
    \left\{
    \begin{aligned}
        &\partial_s f+(\lambda-h)f=\Lie_b f-F(f),\\
        &f|_{s=0}=f_V.
    \end{aligned}
    \right.
    \end{equation}
    For any $n\in\mathbb{N}$, the equation for $\nabla^n f$ reads 
    \begin{equation*}
        \begin{split}
            \partial_s\nabla^n f+&(\lambda-h)\nabla^n f-\Lie_b \nabla^n f-\partial_f F\nabla^{n}f\\
            =&\sum_{\substack{i_1+i_2+i_3=n\\ i_3\leq n-2}}\nabla^{i_1}F\partial_f^{i_2}F(f)\nabla^{i_3}(\nabla f)^{i_2}+\sum{\substack{i_1+i_2=n\\ i_2\leq n-1}}\nabla^{i_1}h\cdot \nabla^{i_2}f+[\nabla^n,\Lie_b]f.
        \end{split}
    \end{equation*}
    If we multiply the equation by $\nabla^n f$ and integrate on $\SS$, the left hand side is  
    \begin{equation}
        \begin{split}
            &\frac{1}{2}\partial_s\lnm \nabla^n f\rnm^2_{L^2(\SS)}+\int_{\SS} (\lambda-h+\frac{1}{2}\dv b -\partial_fF)\left|\nabla^n f\right|^2\\
            \geq & \lnm \nabla^n f\rnm_{L^2(\SS)}\left(\partial_s\lnm \nabla^n f\rnm_{L^2(\SS)}+(\lambda-C_0\delta)\lnm \nabla^n f\rnm_{L^2(\SS)}\right),
        \end{split}
    \end{equation}
    and the right hand side is bounded by:
    \begin{equation}
        \begin{split}
            &C\left(\lnm f\rnm_{H^{n-1}(\SS)}\right)\cdot\left(1+\lnm (\nabla,\partial_s)F\rnm_{H^{n-1}(\SS\times \mathbb{R}^r)}+\lnm \nabla h\rnm_{H^{n-1}(\SS)}+\lnm b\rnm_{H^n(\SS)}\right)\cdot \lnm \nabla^n f\rnm_{L^2(\SS)}\\
            \leq & C\left(\lnm f\rnm_{H^{n-1}(\SS)}\right)\cdot \epsilon \lnm \nabla^n f\rnm_{L^2(\SS)}.
        \end{split}
    \end{equation}
    Especially, for $n=0$ we have $\left|F(v,\theta,f)-F(v,\theta,0)\right|\leq \lnm \partial_f F\rnm_{L^\infty(\SS\times \mathbb{R}^r)}\cdot\left|f\right|$, thus
    \begin{equation}
        \begin{split}
            &\left(\partial_s\lnm f\rnm_{L^2(\SS)}+(\lambda-C_0\delta)\lnm f\rnm_{L^2(\SS)}\right)\lnm f\rnm_{L^2(\SS)}\\
            \leq & C_0\lnm F\rnm_{L^\infty(\SS)}\lnm f\rnm_{L^2(\SS)}+\lnm \partial_f F\rnm_{L^\infty(\SS)}\lnm f\rnm_{L^2(\SS)}^2,
        \end{split}
    \end{equation}
    which means 
    \begin{equation}\label{Lie_prop_eq_lemma nameless eq 2}
        \lnm f(s)\rnm_{L^2(\SS)}\leq \frac{\lnm f_V\rnm_{L^2(\SS)}}{e^{(\lambda-C\delta)s}}+C_0\lnm F\rnm_{L^\infty(\SS)}\leq C_0.
    \end{equation}
    Inductively, we obtain that for $1\leq n\leq N+1$, 
    \begin{equation}
        \lnm \nabla^nf\rnm_{L^2(\SS)}\leq C_0\epsilon.
    \end{equation}
    Since $\|\partial_v(b,\nabla b,F,\partial_f F,h)\|_{\mathring{H}^4(\SS)}=o(\frac{1}{v})$, we have $\|\partial_s(b,\nabla b,F,\partial_f F,h)\|_{W^{2,\infty}(\SS)}=o(1)$.
    Differentiating the equation in $s$ and writing $u=\partial_s f$, we get 
    \begin{equation}
        \partial_su+(\lambda-h+\partial_f F)u-\Lie_b u=f\partial_s h+\Lie_{\partial_A} f\partial_s b^A-\partial_s F=o(1).
    \end{equation}
    For any $\mu>0$, there exists $s_0>0$ such that for any $s>s_0$, $$\|f\partial_s h+\Lie_A f\partial_s b^A-\partial_s F\|_{W^{2,\infty}(\SS)}<\mu.$$
    We immediately obtain the bound for $u$:
    \begin{equation}\label{Nameless equation sec 3.2 eq 1}
        \|u\|_{H^2(\SS)}\leq \frac{\|(\Lie_b f-F(s,f)-(\lambda-h)f)|_{s=s_0}\|_{H^2(\SS)}}{e^{(\lambda-C\delta)(s-s_0)}}+C_0\mu.
    \end{equation}
    Taking the limit as $s\to\infty$ yields $\limsup_{s\to\infty}\|\partial_s f\|_{H^2(\SS)}\le \mu$ for any arbitrarily chosen $\mu>0$, which in turn implies that $\partial_s f$ converges to $0$ uniformly. Given the uniform standard bound for $\|f\|_{H^N(\SS)}$, it follows that the sequence $f(s)$ possesses a subsequential limit satisfying the limit equation \eqref{Lie_prop_eq_s=0}.
    If there are two convergent subsequences $f(s^{(1)_n})$ and $f(s^{(2)_n})$, then 
    \begin{align*}
        &\partial_sf(s^{(1)}_n)-\partial_sf(s^{(2)}_n)-(h(s_n^{(1)})-h(s_n^{(2)}))f(s_n^{(1)})-\nabla_A f(s^{(2)}_n)(b^A(s_n^{(1)})-b^A(s_n^{(2)}))\\
        &+F(s_n^{(1)},f(s_n^{(2)}))-F(s_n^{(2)},f(s_n^{(2)}))\\
        =& (h(s_n^{(2)})-\lambda+b^A(s_n^{(1)})\nabla_A)(f(s_n^{(1)})-f(s_n^{(2)}))-\left(F(s_n^{(1)},f(s_n^{(1)}))-F(s_n^{(1)},f(s_n^{(2)}))\right).
    \end{align*}
    Let $w_n=f(s_n^{(1)})-f(s_n^{(2)})$. Since the expression on the left-hand side converges sequentially to $0$ as $n\rightarrow\infty$ for any $\mu>0$, we can choose $n$ sufficiently large to ensure that 
    \[\lnm \left(h(s_n^{(2)})-\lambda-\lnm \partial_s F\rnm_{L^\infty} +b^A(s_n^{(1)})\nabla_A\right)w_n\rnm_{H^2(\SS)}\leq \mu.\]
    By Lemma \ref{Lie_Propagation_Sol}, we have $\|w_n\|_{H^2}\leq C_0\mu$. Therefore the two subsequential limits coincide.

\end{proof}

Building upon this lemma, we derive two crucial corollaries concerning the asymptotic behavior in the limit $v\to 0$.

\begin{corollary}[Derivative Decay]\label{Lie_prop_eq_II_Cor_II}
    Suppose that
    $\partial_vh,\partial_v\nabla b,\partial_vF$ is of order $o_{\mathring{H}^{N+1}(\SS)}(\frac{1}{v})$, then order $o_{\mathring{H}^{N+1}(\SS)}(\frac{1}{v})$ also holds for $f$,
    \begin{equation}\label{Lie_prop_eq_II_Cor_II eq1}
        \lim_{v\rightarrow 0} \lnm v\partial_v f(v,\theta)\rnm_{H^{N+1}(\SS)}=0.
    \end{equation}
    Moreover, if we assume
    \begin{equation}\label{Lie_prop_eq_II_Cor_II eq2}
        \|v\partial_v b(v)\|_{L^\infty_v{H}^{N+1}(\SS)}+\|v\partial_v\nabla h\|_{L^\infty_v{H}^{N}(\SS)}+\|v\partial_v (\nabla,\partial_f) F\|_{L^\infty_v{H}^{N}(\SS\times\mathbb{R}^r)}\lesssim\epsilon,
    \end{equation}
    then the following estimate holds :
    \begin{equation}
        \lnm v\partial_v f(v,\theta)\rnm_{H^{N+1}(\SS)}+\lnm \Lie_b f(v,\theta)\rnm_{H^{N+1}(\SS)}\lesssim \epsilon.
    \end{equation}
\end{corollary}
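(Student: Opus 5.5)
We work in the variable $s=-\log(v/V)$ of the proof of Lemma \ref{Lie_propagation_eq_II}, in which \eqref{Lie_prop_parabolic_eq} reads $\partial_s f+(\lambda-h)f=\Lie_b f-F(f)$ and $v\partial_v=-\partial_s$. Both claims then concern $w:=\partial_s f=-v\partial_v f$ in $H^{N+1}(\SS)$. Differentiating the equation in $s$ gives
\[
\partial_s w+(\lambda-h+\partial_fF)w-\Lie_b w=f\,\partial_s h+\Lie_{\partial_sb}f-(\partial_sF)(s,f)=:G.
\]
The plan is to run for $\nabla^n w$, $n\le N+1$, the same energy argument used for $\nabla^n f$ in the proof of Lemma \ref{Lie_propagation_eq_II}. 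Multiply by $\nabla^n w$ and integrate over $\SS$. The transport term $\Lie_b$ only produces $\frac12\dv b\,|\nabla^n w|^2$ plus commutators $[\nabla^n,\Lie_b]w$, which are controlled by $\|b\|_{H^{N+1}}\|w\|_{H^{n}}\lesssim\epsilon\|w\|_{H^n}$ and therefore absorb into the damping $\lambda-C_0\delta$. The lower order terms $\nabla^{i}h\,\nabla^{n-i}w$ and the $\partial_f^kF$ terms are handled inductively in $n$, exactly as before. This yields
\[
\partial_s\|w\|_{H^n}+(\lambda-C_0\delta)\|w\|_{H^n}\le C\|G\|_{H^n}+C\epsilon\|w\|_{H^{n-1}}.
\]

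For \eqref{Lie_prop_eq_II_Cor_II eq1}, the hypothesis $\partial_v(h,\nabla b,F)=o(1/v)$ in $\mathring H^{N+1}$ means $\|\partial_s(h,\nabla b,F)\|_{H^{N+1}}=o(1)$ as $s\to\infty$. We need $\|f\|_{H^{N+2}}$ bounded in order to bound $\Lie_{\partial_sb}f$ in $H^{N+1}$. If Lemma \ref{Lie_propagation_eq_II} only gives $H^{N+1}$, we will instead rewrite $\Lie_{\partial_s b}f$ as $\partial_sb^A\nabla_Af$ plus $\nabla\partial_s b\cdot f$ and use the $\nabla b$ hypothesis. This derivative count is the main technical point, and if it fails we will shift the index by one. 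With that in hand, $\|G(s)\|_{H^{n}}=o(1)$. A Gronwall argument with exponential damping gives $\|w(s)\|\le e^{-(\lambda-C\delta)(s-s_0)}\|w(s_0)\|+C\sup_{s'\ge s_0}\|G(s')\|$. Taking first $s\to\infty$ and then $s_0\to\infty$ shows $\limsup\|w\|_{H^{N+1}}=0$, which is \eqref{Lie_prop_eq_II_Cor_II eq1}. The initial value $w(0)=-V\partial_vf(V)$ is finite, since $w(0)=\Lie_bf_V-F-(\lambda-h)f_V$.

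For the quantitative bound, assumption \eqref{Lie_prop_eq_II_Cor_II eq2} states that $v\partial_v$ of the coefficients is $O(\epsilon)$ in the relevant norms. Combined with the bounds $\|\nabla f\|_{H^N}\lesssim\epsilon+\|\nabla f_V\|_{H^N}$ and $\|f\|\lesssim1$ from Lemma \ref{Lie_propagation_eq_II}, this gives $\|G\|_{L^\infty_sH^{N+1}}\lesssim\epsilon$. The same Gronwall argument then bounds $\|w\|_{H^{N+1}}$ by $\epsilon+e^{-cs}\|w(0)\|$. Here $w(0)$ must itself be $O(\epsilon)$, which holds for data compatible with the equation (for example, $f_V$ at the $O(\epsilon)$ perturbative scale), or else the bound is read with this data term included. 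Finally, $\Lie_b f$ is estimated algebraically. Either $\|\Lie_bf\|_{H^{N+1}}\lesssim\|b\|_{H^{N+1}}\|f\|_{H^{N+2}}\lesssim\epsilon$ directly, or we read it off the equation: $\Lie_bf=w+(\lambda-h)f+F$. The first route is preferable because it avoids the $O(1)$ size of $f$, whereas the equation route would only give an $O(1)$ bound unless the combination $(\lambda-h)f+F$ is small.
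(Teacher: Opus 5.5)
Your treatment of the first claim and of the bound on $v\partial_v f$ is essentially the paper's argument. You differentiate the $s$-equation, obtain the damped transport equation for $w=\partial_s f$ with source $G=f\partial_s h+\Lie_{\partial_s b}f-\partial_sF$, and run the energy/Gronwall estimate of Lemma \ref{Lie_propagation_eq_II}; the paper phrases this second step as ``apply the Lemma to $v\partial_v f$''. Your remark that the data term $w(0)=-V\partial_vf(V)$ must itself be $O(\epsilon)$ is legitimate, and the paper passes over it.

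On the derivative count you are right that $\Lie_{\partial_s b}f$ in $H^{N+1}$ costs $\nabla^{N+2}f$. However, splitting it as $\partial_sb^A\nabla_Af+\nabla\partial_sb\cdot f$ does not remove this loss, since the first piece still carries $\nabla^{N+2}f$. The paper's application of the Lemma to $v\partial_v f$ hides the same loss, so the index shift you mention is the honest repair. It is harmless where the corollary is used: there the shift vector is $\oc{b}$, which is $v$-independent, so $\Lie_{\partial_sb}f\equiv0$.

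The step that does not hold up is the last one. Your preferred route $\|\Lie_b f\|_{H^{N+1}}\lesssim\|b\|\,\|f\|_{H^{N+2}}$ needs $f\in H^{N+2}$. For tensorial $f$ it also needs $\nabla^{N+2}b$, through the term $\nabla b\cdot f$ in $\Lie_b f$. Lemma \ref{Lie_propagation_eq_II} provides neither.

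Your reason for discarding the equation route is also mistaken. Only the zeroth-order part of $(\lambda-h)f+F(f)$ is $O(1)$. For $1\le n\le N+1$,
\[
\nabla^n\Lie_b f=\nabla^n\bigl(w+F(f)+(\lambda-h)f\bigr)
\]
is $O(\epsilon)$ in $L^2(\SS)$, for three reasons:
\begin{itemize}
\item $\nabla^n w$ is controlled by the bound you just proved.
\item $\nabla(F(f))=\nabla_\theta F+\partial_fF\cdot\nabla f$, with $\|(\nabla,\partial_f)F\|_{H^N}\lesssim\epsilon$ and $\|\nabla f\|_{L^\infty_vH^N}\lesssim\epsilon$ for data with $\|\nabla f_V\|_{H^N}\lesssim\epsilon$.
\item $\nabla^n\bigl((\lambda-h)f\bigr)$ consists only of $(\lambda-h)\nabla^nf$ and products $\nabla^ih\,\nabla^{n-i}f$ with $i\ge1$, all small.
\end{itemize}
The case $n=0$ is handled directly by $\|\Lie_bf\|_{L^2}\lesssim\|b\|_{W^{1,\infty}}\|f\|_{H^1}\lesssim\epsilon$.

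This is exactly how the paper closes the argument. It needs nothing beyond $f\in H^{N+1}$ and the $H^{N+1}$ bound on $v\partial_v f$ already in hand, so it should replace your product estimate.
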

\begin{proof}
    Repeating the argument for \eqref{Nameless equation sec 3.2 eq 1} gives \eqref{Lie_prop_eq_II_Cor_II eq1}. If we have \eqref{Lie_prop_eq_II_Cor_II eq2}, then applying Lemma \ref{Lie_propagation_eq_II} to $v\partial_v f$ gives 
    \begin{equation}
        \lnm v\partial_v f(v,\theta)\rnm_{H^{N+1}(\SS)}\lesssim\epsilon.
    \end{equation}
    Since $\nabla^n\Lie_b f=\nabla^n\left(-v\partial_v f+F(f)+(\lambda-h)f\right)$, we obtain that for $1\leq n\leq N+1$, 
    \begin{equation}
        \lnm \nabla^n\Lie_b f(v,\theta)\rnm_{L^2(\SS)}\lesssim\epsilon.
    \end{equation}
    Together with $\lnm \Lie_b f\rnm_{L^2(\SS)}\leq \lnm b\rnm_{L^\infty(\SS)}\lnm \nabla f\rnm_{L^2(\SS)}$, we finish the proof.
\end{proof}

The next corollary reveals that solution's increasing rate is determined by coefficient $\lambda$.
\begin{corollary}[H\"older Continuity]\label{Lie_prop_eq_II_Cor_I}
    Suppose that $\| F(v,f(0))-F(0,f(0))\|_{H^N(\SS\times\mathbb{R}^r)}+\| b(v)-b(0),h(v)-h(0)\|_{H^N(\SS)}\leq C_1 v^{a}$, where $a<\lambda$ is a fixed number, then for $$\lnm h,F(f(0))\rnm_{L^\infty(S)}\ll \lambda-a,$$ it holds that 
    \begin{equation}
        \|f(v)-f(0)\|_{H^N}\leq \frac{C_0}{\lambda-a}\cdot\left(C_1+\frac{1}{V^{a}}\|f_V-f(0)\|_{H^N}\right) v^{a}.
    \end{equation}
\end{corollary}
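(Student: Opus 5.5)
We prove the H\"older bound by writing an equation for the difference $g(v):=f(v)-f(0)$ and running the same $s=-\log(v/V)$ energy argument as in the proof of Lemma \ref{Lie_propagation_eq_II}, this time with the weight $v^{-a}=V^{-a}e^{as}$. By Lemma \ref{Lie_propagation_eq_II}, $f(0)=\lim_{v\to0}f(v)$ exists and solves \eqref{Lie_prop_eq_s=0}. Subtracting \eqref{Lie_prop_eq_s=0} from \eqref{Lie_prop_parabolic_eq} gives
\begin{equation*}
v\partial_v g+\Lie_{b(v)}g-\lambda g+h(v)g=\big[F(v,f(v))-F(v,f(0))\big]+G(v),
\end{equation*}
where the source is
\begin{equation*}
G(v):=F(v,f(0))-F(0,f(0))-\Lie_{b(v)-b(0)}f(0)-(h(v)-h(0))f(0).
\end{equation*}
The hypotheses, together with the uniform $H^{N}$ bounds on $f(0)$ from Lemma \ref{Lie_propagation_eq_II}, give $\|G(v)\|_{H^N}\lesssim C_1v^a$. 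The term $\Lie_{b(v)-b(0)}f(0)$ costs one derivative of $f(0)$; this is covered by the $H^{N+1}$-type control of $\nabla f$ in Lemma \ref{Lie_propagation_eq_II}. The bracketed term is linear in $g$ to leading order, with coefficient $\partial_fF$. We move it to the left and treat it as part of the zeroth-order coefficient, bounded in $L^\infty$ by $\|\partial_f F\|_{L^\infty}\le\delta$.

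Next we set $w:=v^{-a}g=V^{-a}e^{as}g$. Since $v\partial_v=-\partial_s$, multiplying the $g$-equation by $v^{-a}$ turns the coefficient $-\lambda$ into $-(\lambda-a)$. In the variable $s$ this reads
\begin{equation*}
\partial_s w+(\lambda-a-\tilde h)w-\Lie_b w=-v^{-a}G,\qquad \|v^{-a}G\|_{H^N}\lesssim C_1,
\end{equation*}
where $\tilde h$ collects $h$ and $\partial_fF$ and is small in $L^\infty$. We then repeat the computation of Lemma \ref{Lie_propagation_eq_II} for $\nabla^n w$, $n\le N$: we pair the equation with $\nabla^n w$ and integrate by parts in $\Lie_b$, which produces $\tfrac12\dv b$. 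The commutator $[\nabla^n,\Lie_b]$ and the lower-order products are bounded by $\epsilon\|w\|_{H^n}$. This yields
\begin{equation*}
\partial_s\|w\|_{H^N}+(\lambda-a-C_0\delta')\|w\|_{H^N}\le C_0C_1,
\end{equation*}
where $\delta'$ collects $\|h\|_{L^\infty}$, $\|\partial_fF\|_{L^\infty}$, $\epsilon$, and $\|F(f(0))\|_{L^\infty}$. The smallness assumption $\|h,F(f(0))\|_{L^\infty}\ll\lambda-a$, together with $\epsilon,\delta$ small, ensures $\lambda-a-C_0\delta'\ge\frac12(\lambda-a)$. Gronwall from $s=0$, where $\|w(0)\|_{H^N}=V^{-a}\|f_V-f(0)\|_{H^N}$, then gives
\begin{equation*}
\|w(s)\|_{H^N}\le e^{-(\lambda-a)s/2}V^{-a}\|f_V-f(0)\|_{H^N}+\frac{2C_0C_1}{\lambda-a}.
\end{equation*}
Since $\lambda-a\lesssim1$, the first term is at most $\frac{C_0}{\lambda-a}V^{-a}\|f_V-f(0)\|_{H^N}$. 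Multiplying back by $v^a$ gives the claimed estimate.

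The main obstacle is the first step. The nonlinear difference $F(v,f(v))-F(v,f(0))$ must be absorbed at the $H^N$ level, not only in $L^\infty$. Its higher derivatives involve $\partial_f^kF$ multiplied by derivatives of $f$ and $f(0)$. These are controlled by the $\epsilon$-smallness of $\|(\nabla,\partial_f)F\|_{H^N}$ and by the uniform bounds of Lemma \ref{Lie_propagation_eq_II}, so they enter only as $\epsilon\|w\|_{H^N}$. We also have to track that $\Lie_{b(v)-b(0)}f(0)$ loses one derivative. If the available regularity of $f(0)$ is only $H^N$, we would instead prove the estimate at level $N-1$ plus a top-order bound via the $\nabla f$ estimate. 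We expect, however, that the $\|\nabla f\|_{H^N}$ bound from Lemma \ref{Lie_propagation_eq_II} suffices.
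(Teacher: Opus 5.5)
Your proposal is correct and follows essentially the same route as the paper: subtract the limit equation \eqref{Lie_prop_eq_s=0} from \eqref{Lie_prop_parabolic_eq}, keep the nonlinear difference $F(v,f(0)+w)-F(v,f(0))$ on the left using the small $L^\infty$ bound on $\partial_fF$, bound the remaining source by $C_1v^a$, and integrate an $H^N$ differential inequality in $s=-\log(v/V)$. The only difference is cosmetic: you conjugate by $v^{-a}$, so the damping rate becomes $\lambda-a$ and the forcing is bounded, whereas the paper keeps $f(v)-f(0)$ with damping rate $\lambda-C\delta$ and forcing $C_0C_1V^ae^{-as}$ and then evaluates the Duhamel integral; the two computations are equivalent.
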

\begin{proof}
    Let $w(v)=f(v)-f(0)$. After making difference of equations \eqref{Lie_prop_parabolic_eq} and \eqref{Lie_prop_eq_s=0}, we obtain that 
    \begin{equation}
        \begin{aligned}
            &v\partial_v w(v)+b^A(v)\nabla_A w(v)-\lambda w(v)+h(v)w(v)-\left(F(v,f(0)+w)-F(v,f(0))\right)\\
            =&-\left(b^A(v)-b^A(0)\right)\nabla_A f(0)-\left(h(v)-h(0)\right)f(0)+F(v,f(0))-F(0,f(0)).
        \end{aligned}
    \end{equation}
    Since the right hand side is bounded by $C_1 v^a O_{H^N}(1)$, we can obtain that 
    \begin{equation}
        \partial_s\lnm w(s)\rnm_{H^N(\SS)}+(\lambda-C\delta)\lnm w(s)\rnm_{H^N(\SS)}\leq C_0\cdot C_1 V^a e^{-as},
    \end{equation}
    where $s=-\log\left(\frac{v}{V}\right)$. 
    Thus after integration, we obtain 
    \begin{equation}
        \begin{split}
            \lnm w(s)\rnm_{H^N(\SS)}\leq & \frac{\lnm f_V-f(0)\rnm_{H^N(\SS)}}{e^{(\lambda-C\delta)s}}+C_0\cdot C_1 V^a\int_0^s e^{-as^\prime-(\lambda-C\delta)(s-s^\prime)}ds^\prime\\
            \leq & \frac{\lnm f_V-f(0)\rnm_{H^N(\SS)}}{e^{(\lambda-a-C\delta)s}}\left(\frac{v(s)}{V}\right)^a+C_0\cdot C_1 v^a\int_{0}^s e^{-(\lambda-a-C\delta)(s-s^\prime)}ds^\prime.
        \end{split}
    \end{equation}

\end{proof}

\subsubsection{Construction of Approximating Quantities}

We define the approximating shear $\ot{\Omega^{-1}\hat{\chi}}$ combined with the approximating scalar derivative $\ot{\Omega^{-1}e_4\phi}$ evaluated along $H_{-1}$ by solving the corresponding Lie propagation equations uniquely derived from the self-similar background ansatz. Specifically, we consider the coupled system:
\begin{equation}
    \begin{aligned}
        &\left(\oc{\Omega\nabla_3}+\frac{1}{2}\oc{\Omega\tr\chib}-4\oc{\Omega\omegab}\right)\ot{\Omega^{-1}\chih}= \left(\nabla\hat\otimes\eta+\eta\hat\otimes\eta+\frac{1}{2}\nabla\phi\hat\otimes\nabla\phi-\frac{1}{2}\Omega^{-1}\tr\chi \Omega\chibh\right)^{\circ},\\
        &\left(\oc{\Omega\nabla_3}+\frac{1}{2}\oc{\Omega\tr\chib}+\frac{1}{2}\left[(\Omega\tr\chi)^c\right]_0^v-4\oc{\Omega\omegab}-4\left[(\Omega\omegab)^c\right]_0^v\right)\ot{\Omega^{-1}e_4\phi}\\
        &\qquad\qquad = \left(\dv\nabla\phi-\frac{1}{2}\Omega^{-1}\tr\chi\Omega e_3\phi+2\eta\cdot\nabla\phi\right)^\circ -\left[\left(\frac{1}{2}\Omega^{-1}\tr\chi\Omega e_3\phi\right)^c\right]_0^v,
    \end{aligned}
\end{equation}
which reads in the self-similar context along $H_{-1}$ that 
\begin{equation}
    \begin{aligned}
        \left(v\Lie_v+\Lie_{\oc{b}}\right)\ot{\Omega^{-1}\chih}_{AB}-&\left(\k-2\Lie_{\oc{b}}\log\oc{\Omega}+\frac{1}{2}\oc{\dv b}\right)\ot{\Omega^{-1}\chih}_{AB}-\frac{1}{2}\left(\oc{\nabla\hat\otimes b}\right)^C_{\ \left(A\right.}\ot{\Omega^{-1}\chih}_{\left. B\right)C}\\
        =& \left(\nabla\hat\otimes\eta+\eta\hat\otimes\eta+\frac{1}{2}\nabla\phi\hat\otimes\nabla\phi-\frac{1}{2}\Omega^{-1}\tr\chi \Omega\chibh\right)^{\circ},\\
        \left(v\Lie_v+\Lie_{\oc{b}}\right)\ot{\Omega^{-1}e_4\phi}-&\left(\k+\left[(\Omega\omega)^c\right]_0^v-2\Lie_{\oc{b}}\log\oc{\Omega}-\frac{1}{2}\oc{\dv b}-\frac{1}{2}(v\Omega\tr\chi)^c\right)\ot{\Omega^{-1}e_4\phi}
        \\
        =& \left(\dv\nabla\phi-\frac{1}{2}\Omega^{-1}\tr\chi\Omega e_3\phi+2\eta\cdot\nabla\phi\right)^\circ-\frac{1}{2}\left[\left(\Omega^{-1}\tr\chi\Omega e_3\phi\right)^c\right]_0^v.
    \end{aligned}
\end{equation}
Here we use notation $\left[\psi\right]_0^v(u,v)=\psi(u,v)-\psi(u,0)$.
We impose the boundary conditions at $v=\epsilon_1$:
\[
\ot{\Omega^{-1}\hat{\chi}}(\epsilon_1) = 0, \quad \ot{\Omega^{-1}e_4\phi}(\epsilon_1) = (\Omega^{-1}e_4\phi)^c(\epsilon_1).
\]

\begin{proposition}[Properties of Approximating Solutions]
    For the above approximating quantities, the following properties hold:
    \begin{enumerate}
        \item They are continuous at $v=0$, and $$\ot{\Omega^{-1}\chih}(-1,0)=\oc{\Omega^{-1}\chih}(-1,0),\ot{\Omega^{-1}e_4\phi}(-1,0)=\oc{\Omega^{-1}e_4\phi}(-1,0).$$
        \item For any $\delta>0$, we can choose $\epsilon,\epsilon_1$ sufficiently small so that \begin{equation}\label{ID_triangle_chih_e4phi}
            \lnm \left[\ot{\Omega^{-1}\chih}\right]_0^v(-1,v),\left[\ot{\Omega^{-1}e_4\phi}\right]_0^v(-1,v)-\left[\left(\Omega^{-1}e_4\phi\right)^c\right](-1,v)\rnm_{H^{N-1}(\SS)}\lesssim\epsilon v^{\k-\delta}.
        \end{equation}
        \item $\ot{\Omega^{-1}\chih}_{AB}$ is $\oc{g}$-trace-free.
        \item $\ot{\Omega^{-1}\chih}_{AB}(u,v)$ and $\ot{\Omega^{-1}e_4\phi}(u,v)$ satisfy
        \[\ot{\Omega^{-1}\chih}_{AB}(u,v)=(-u)\ot{\Omega^{-1}\chih}_{AB}(-1,v/(-u)),\ \ot{\Omega^{-1}e_4\phi}(u,v)=\frac{1}{-u}\ot{\Omega^{-1}e_4\phi}(-1,v/(-u)).\]
    \end{enumerate}
\end{proposition}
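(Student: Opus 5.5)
We would derive all four properties from the parabolic Lie propagation theory of Lemma \ref{Lie_propagation_eq_II} and its two corollaries. The plan is to cast each defining equation along $H_{-1}$ in the form \eqref{Lie_prop_parabolic_eq}. For the shear, the unknown is $f=\ot{\Omega^{-1}\chih}$ with $\lambda=\kappa$, vector field $b=\oc{b}$, and
\[
h=2\Lie_{\oc b}\log\oc\Omega-\tfrac12\oc{\dv b}-\tfrac12(\oc{\nabla\hat\otimes b})\cdot .
\]
The forcing $F$ is the $v$-independent $\circ$-data. For the scalar derivative, $f=\ot{\Omega^{-1}e_4\phi}$, again with $\lambda=\kappa$, and $h$ now also contains the background increments $-[(\Omega\omega)^c]_0^v+\frac12(v\Omega\tr\chi)^c$. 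By Lemma \ref{Solve Lie b div b eq} and the Corollary on construction of initial data, $\|\oc b\|_{H^{N+1}}$, $\|\nabla h\|_{H^N}$ and $\|\nabla F\|_{H^N}$ are $O(\epsilon)$.

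The first step is to check that the $L^\infty$ smallness $\|h\|_{L^\infty}\le\delta$ holds. For the $\chih$ equation this is immediate from the $\epsilon$-closeness of the data. For $e_4\phi$ it additionally needs $\epsilon_1$ small, using the background asymptotics of Theorem \ref{thm:background_properties}: $[(\Omega\omega)^c]_0^v$ and $(v\Omega\tr\chi)^c$ are self-similar functions of $z$ that vanish like $z^{\kappa}$ or faster as $z\to0$. Lemma \ref{Lie_propagation_eq_II} then gives existence on $(0,\epsilon_1)$ with the boundary data at $v=\epsilon_1$, together with uniform $H^N$ bounds. Since $\partial_v h$ and $\partial_v F$ are $o(v^{-1})$ (indeed $O(v^{\kappa-1})$), the second part of the lemma yields the limit $f(0)$. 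This limit solves the degenerate equation \eqref{Lie_prop_eq_s=0}, which is exactly the $v=0$ system \eqref{eq_chih(-1,0)}, respectively \eqref{eq_D4phi(-1,0)}. By the uniqueness in Lemma \ref{Lie_Propagation_Sol}, $f(0)$ coincides with the $\circ$-values, which gives (1).

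For (2), we apply Corollary \ref{Lie_prop_eq_II_Cor_I} with the exponent $a=\kappa-\delta<\lambda=\kappa$. For $\chih$ the coefficients are $v$-independent, so $C_1=0$. The bound is then controlled by $\|f_V-f(0)\|_{H^N}V^{-a}=\|\oc{\Omega^{-1}\chih}\|\epsilon_1^{-a}$, which is $\lesssim\epsilon\,\epsilon_1^{-a}$. Because $\kappa-\delta$ is bounded below, this produces $\epsilon v^{\kappa-\delta}$ after choosing $\epsilon\ll\epsilon_1$; we lose one derivative for safety, giving $H^{N-1}$.

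For $e_4\phi$ we work with the difference $f-(\Omega^{-1}e_4\phi)^c$ rather than $f$ itself. The background function satisfies the same equation with all $\circ$-quantities replaced by their $c$-values; this is exactly why the $[\cdot]_0^v$ corrections were inserted into the definition. Subtracting the two equations gives an equation of type \eqref{Lie_prop_parabolic_eq} whose forcing and coefficient differences are $O(\epsilon)$, uniformly in $v$ with $O(\epsilon v^\kappa)$ variation. Its boundary value at $\epsilon_1$ is zero, and its value at $v=0$ is $O(\epsilon)$. Corollary \ref{Lie_prop_eq_II_Cor_I} then gives the bound $\epsilon v^{\kappa-\delta}$ for the incremental difference. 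The main obstacle is here: verifying that the background $(\Omega^{-1}e_4\phi)^c$, which is only $C^{\kappa/(1-\kappa)}$-type in $v$, exactly solves the linearized structure, so that no $O(1)\cdot v^{\kappa}$ term survives in the forcing. If exactness fails, we would instead absorb the residual into $C_1$, which is still of size $\epsilon$ relative to the background and so remains acceptable.

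For (3), we take the trace of the shear equation with respect to $\oc g$. The trace $t=\oc g^{AB}f_{AB}$ satisfies a homogeneous equation of the same type: $\Lie_{\oc b}\oc g$ produces the $\nabla\hat\otimes b$ terms, which cancel against the symmetrized term, and the right side is $\oc g$-traceless. The boundary data is zero, so the uniqueness part of Lemma \ref{Lie_propagation_eq_II} gives $t\equiv0$. For (4), we define the functions on all of $\mathcal{R}_I$ by the stated scaling. Since the equations are derived from the $S$-invariant $e_3$ transport through the identity for $\Omega\Lie_{e_3}$, the rescaled functions automatically solve the $\nabla_3$ form of the equations, so (4) is simply the definition and is consistent with the equations.
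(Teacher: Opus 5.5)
Your proposal is correct and is essentially the paper's argument. The paper also subtracts the background equation $\bigl(v\Lie_v-4(v\Omega\omega)^c+\tfrac12(v\Omega\tr\chi)^c\bigr)(\Omega^{-1}e_4\phi)^c=-\tfrac12(\Omega^{-1}\tr\chi\,\Omega e_3\phi)^c$. That equation holds exactly, so the obstacle you flag is not one and your fallback is unnecessary. The paper then feeds the $O(\epsilon)$-forced difference equation into the parabolic Lie-propagation corollary, treats $\Omega^{-1}\chih$ identically, and obtains trace-freeness from the homogeneous equation satisfied by $\oc{g}^{AB}\ot{\Omega^{-1}\chih}_{AB}$.

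Two points to fix when you apply the H\"older corollary. First, the $v$-variation of the coefficient $h$ is $O(v^{1-\kappa})$, not $O(\epsilon)$, so the factor $\epsilon$ in the rate comes from its product with $w(0)=O(\epsilon)$. Second, the implicit constant must depend on $\epsilon_1$, and choosing $\epsilon\ll\epsilon_1$ does not remove this. At $v=\epsilon_1$ the increment of $\ot{\Omega^{-1}\chih}$ equals $-\oc{\Omega^{-1}\chih}$, which is generically of size $\epsilon$ rather than $\epsilon\,\epsilon_1^{\kappa-\delta}$.
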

\begin{proof}
    
    To compute the convergent speed at $v=0$, we first notice that 
    \begin{equation}
        \begin{aligned}
            &\left(v\Lie_v-4(v\Omega\omega)^c+\frac{1}{2}(v\Omega\tr\chi)^c\right)\left({\Omega^{-1}e_4\phi}\right)^c=-\frac{1}{2}\left(\Omega^{-1}\tr\chi\Omega e_3\phi\right)^c,
        \end{aligned}
    \end{equation}
    and we have equation for $\ot{\Omega^{-1}e_4\phi}-(\Omega^{-1}e_4\phi)^c$
    \begin{equation}
        \begin{aligned}
            &\left(v\Lie_v+\Lie_{\oc{b}}-\left(4(v\Omega\omega)^c-2\Lie_{\oc{b}}\log\oc{\Omega}-\frac{1}{2}\oc{\dv b}-\frac{1}{2}(v\Omega\tr\chi)^c\right)\right)\left(\ot{\Omega^{-1}e_4\phi}-(\Omega^{-1}e_4\phi)^c\right)\\
            =& -\left(2\Lie_{\oc{b}}\log\oc\Omega+\frac{1}{2}\oc\dv b\right)+\left(\dv\nabla\phi+2\eta\nabla\phi\right)^\circ\\ & -\frac{1}{2}\left(\oc{\Omega^{-1}\tr\chi}\oc{\Omega e_3\phi}(-1,0)-{\left(\Omega^{-1}\tr\chi \Omega e_3\phi\right)^c}(-1,0)\right).
        \end{aligned}
    \end{equation}
    Because $H^{N-1}(\SS)$ norm of the right hand side in above equation is of size $O(\epsilon)$,
    then the estimate follows by Corollary \ref{Lie_prop_eq_II_Cor_II}, and $\Omega^{-1}\chih$ is similar. 
    Since $\oc{\Omega\nabla_3}$ is commutable with $\oc{g}$, we obtain 
    \begin{equation}
        \left(\oc{\Omega\nabla_3}+\frac{1}{2}\oc{\Omega\tr\chib}-4\oc{\Omega\omegab}\right)\left(\oc{g^{AB}}\ot{\Omega^{-1}\chih}_{AB}\right)=0.
    \end{equation}
    Hence $\ot{\Omega^{-1}\chih}_{AB}$ is $\oc{g}$-trace-free
\end{proof}

For $v\in (0,\epsilon_1)$, we define ${v^\k\Omega^2}(v)$ by solving 
\begin{equation}
    v\partial_v\log\left({v^\k\Omega^2}\right)=\k-4\left(v\Omega\omega\right)^c(v),\ {v^\k\Omega^2}(0)=\oc{v^\k\Omega^2}(-1,0).
\end{equation}
Define $\phi(-v)$ by $$\phi(v)=\oc{\phi}(-1,0)+\int_0^v \frac{1}{(v^\prime)^\k}\left(v^\k\Omega^2\right)(v^\prime)\ot{\Omega^{-1}e_4\phi}(-1,v^\prime)dv^\prime,$$
and $g_{AB}(v),(\Omega^{-1}\tr\chi)(v)$ by 
\begin{equation}
    \left\{\begin{aligned}
        &v^\k\Lie_v g_{AB}=(v^\k\Omega^2)(\Omega^{-1}\tr\chi)+2(v^\k\Omega^2)(\Omega^{-1}\chih)_{AB},\\
        &v^\k\partial_v\left(\Omega^{-1}\tr\chi\right)=-\left(v^\k\Omega^2\right)\left(\frac{1}{2}\left(\Omega^{-1}\tr\chi\right)^2+\Omega^{-1}\chih_{AB}\Omega^{-1}\chih_{CD}g^{AC}g^{BD}+\left({\Omega^{-1}e_4\phi}\right)^2\right) ,\\
        &\Omega^{-1}\chih_{AB}(v)=\frac{1}{2}\left(g_{AC}\ot{\Omega^{-1}\chih}_{DB}+g_{BC}\ot{\Omega^{-1}\chih}_{DA}\right)\oc{\left(g^{CD}\right)}(-1,v),\\ 
        &\Omega^{-1}e_4\phi(v)=\ot{\Omega^{-1}e_4\phi}(-1,v)\\
        &g_{AB}(0)=\oc{g}_{AB}(-1,0),\ \Omega^{-1}\tr\chi(0)=\oc{\Omega^{-1}\tr\chi}(-1,0).
    \end{aligned}\right.
\end{equation}

The following lemma systematically compares $g,\Omega$ with their respective backgrounds $g^c,\Omega^c$ restricted on $H_{-1}\cap\{0<v<\epsilon_1\}$.
\begin{lemma}[Comparison with Background]
    For $u=-1,$ $v\in(0,\epsilon_1)$, we have
    \begin{enumerate}
        \item $(\Omega/\Omega^c)(v)=(\Omega/\Omega^c)(0)$, $v\Omega\omega=(v\Omega\omega)^c$.
        \item For $\delta$ satisfying $ \epsilon\ll\delta\ll \k$, the following inequalities hold 
        \begin{equation}\label{333444000}
            \lnm \left[g\right]_0^v-\left[g^c\right]_0^v,\left[\Omega^{-1}\tr\chi\right]_0^v-\left[(\Omega^{-1}\tr\chi)^c\right]_0^v\rnm_{H^{N}(\SS)}\lesssim \epsilon v^{1-\k},
        \end{equation}
        \begin{equation}
            \lnm \left[\Omega^{-1}\chih\right]_0^v,\left[\Omega^{-1}e_4\phi\right]_0^v-\left[(\Omega^{-1}e_4\phi)^c\right]_0^v\rnm_{H^{N}(\SS)}\lesssim \epsilon v^{\k-\delta}.
        \end{equation}
    \end{enumerate}
\end{lemma}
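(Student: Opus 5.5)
Part (1) comes directly from the defining ODE for $v^\kappa\Omega^2$ along $H_{-1}$. The background satisfies the same equation, $v\partial_v\log\bigl((v^\kappa\Omega^2)^c\bigr)=\kappa-4(v\Omega\omega)^c$; this follows from $\nabla_4\log\Omega^2=-4\omega$ after multiplying by $v\Omega$. Subtracting the two equations gives $v\partial_v\log(\Omega^2/(\Omega^c)^2)=0$ on $(0,\epsilon_1)$. Hence $\Omega/\Omega^c$ is constant in $v$ and equals its value at $v=0$. The identity $v\Omega\omega=(v\Omega\omega)^c$ is then read off from $\Omega\omega=-\frac14\partial_v\log\Omega^2$, since the two logarithms differ by a $v$-independent function.

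For part (2), I would prove the estimate for $\Omega^{-1}\chih$ and $\Omega^{-1}e_4\phi$ first, because it feeds the others. By construction, $\Omega^{-1}e_4\phi(v)=\ot{\Omega^{-1}e_4\phi}(-1,v)$, and $\Omega^{-1}\chih(v)$ is the $g$-symmetrized version of $\ot{\Omega^{-1}\chih}$ using $\oc{g}^{-1}(-1,v)=\oc g^{-1}(-1,0)$. The bound \eqref{ID_triangle_chih_e4phi} from the preceding proposition then gives the $v^{\kappa-\delta}$ estimate for $[\Omega^{-1}e_4\phi]_0^v-[(\Omega^{-1}e_4\phi)^c]_0^v$ directly. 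For $[\Omega^{-1}\chih]_0^v$ it gives the estimate up to a term of the form $([g]_0^v)\cdot\ot{\Omega^{-1}\chih}$. That term is $O(\epsilon v^{1-\kappa})$, since $[g]_0^v=O(v^{1-\kappa})$ and $\ot{\Omega^{-1}\chih}=O(\epsilon)$. The index is $N$ rather than $N-1$ as in \eqref{ID_triangle_chih_e4phi}; I would recover this by running Corollary~\ref{Lie_prop_eq_II_Cor_I} at order $N$, with the $O(\epsilon)$ loss absorbed into $v^{-\delta}$. Since $\epsilon v^{1-\kappa}\le\epsilon v^{\kappa-\delta}$ for $\kappa<1/3$, this closes the second inequality.

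For \eqref{333444000}, I would form the system for the differences $X=g-g^c$ and $Y=\Omega^{-1}\tr\chi-(\Omega^{-1}\tr\chi)^c$, both measured in $H^N(\SS)$ with respect to the fixed round coordinates. Subtracting the background transport equations gives
\[
v^\kappa\partial_v X=(v^\kappa\Omega^2)\bigl(Y g+\Omega^{-1}\tr\chi\,X+2\Omega^{-1}\chih\bigr)+\bigl[(v^\kappa\Omega^2)-(v^\kappa\Omega^2)^c\bigr](\ldots),
\]
and an analogous Riccati-type equation for $Y$. The key features of the right-hand sides are:
\begin{itemize}
\item they are $O(\epsilon)$ in $H^N$, uniformly in $v\in(0,\epsilon_1)$;
\item the difference $(v^\kappa\Omega^2)-(v^\kappa\Omega^2)^c$ is $O(\epsilon)$, by part (1) and Lemma~\ref{Solve Lie b div b eq};
\item $|\Omega^{-1}\chih|^2$ is $O(\epsilon^2)$, while $(\Omega^{-1}e_4\phi)^2-((\Omega^{-1}e_4\phi)^c)^2$ is $O(\epsilon)$.
\end{itemize}
Writing $[X]_0^v=\int_0^v (v')^{-\kappa}(\ldots)\,dv'$ and running a Gr\"onwall argument on the coupled pair, with the products handled by the algebra property of $H^N$ (here $N\ge2$), gives $\|[X]_0^v,[Y]_0^v\|_{H^N}\lesssim\epsilon\int_0^v (v')^{-\kappa}dv'\lesssim\epsilon v^{1-\kappa}$. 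The initial values coincide up to $O(\epsilon)$ by the corollary on initial data, and only the $[\cdot]_0^v$ increments are being measured.

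The main obstacle is the $v^{\kappa-\delta}$ rate rather than the $v^{1-\kappa}$ rate. The $\Omega^{-1}\chih$ source enters $\partial_v g$ only at size $O(\epsilon)$, not $O(\epsilon v^{\kappa-\delta})$, so one must make sure this does not degrade the metric bound. I expect it is harmless, because it is integrated against $(v')^{-\kappa}$ and so still yields $\epsilon v^{1-\kappa}$. The delicate point is therefore only the order-$N$ version of \eqref{ID_triangle_chih_e4phi}. There I would verify the hypothesis of Corollary~\ref{Lie_prop_eq_II_Cor_I} with $a=\kappa-\delta<\lambda=\kappa$, using the H\"older regularity $(\Omega\omega)^c(v)-(\Omega\omega)^c(0)=O(v^{\kappa})$ furnished by Theorem~\ref{thm:background_properties}.
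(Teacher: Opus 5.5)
Your proposal is correct and follows the paper's proof. Part (1) is read off from the defining ODE $v\partial_v\log(v^\kappa\Omega^2)=\kappa-4(v\Omega\omega)^c$ on $H_{-1}$. The bounds for $[g]_0^v-[g^c]_0^v$ and $[\Omega^{-1}\tr\chi]_0^v-[(\Omega^{-1}\tr\chi)^c]_0^v$ come from integrating the difference equations against $(v')^{-\kappa}$ with $O(\epsilon)$ right-hand sides. The $\Omega^{-1}\chih$ and $\Omega^{-1}e_4\phi$ bounds are inherited from \eqref{ID_triangle_chih_e4phi} through the symmetrization relation.

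You are more explicit than the paper about the Gr\"onwall step, the cross term $[g]_0^v\cdot\ot{\Omega^{-1}\chih}=O(\epsilon v^{1-\kappa})$, and the $H^{N-1}$ versus $H^{N}$ derivative count. The one thing to fix is cosmetic. In the original gauge of this lemma $(\Omega\omega)^c$ is singular at $v=0$, so the $O(v^\kappa)$ increment you invoke is really that of $\hat\Omega^{-1}\hat\omega\propto(\Omega^{-1}e_4\phi)^c$, while $[(v\Omega\omega)^c]_0^v=O(v^{1-\kappa})$; either rate suffices for $a=\kappa-\delta$.
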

\begin{proof}
    We write the equations for differences. 
    \begin{equation}
        \left\{\begin{aligned}
            v^\k\Lie_v\left(g-g^c\right) =&\left(v^\k\Omega^2\right)\left(\Omega^{-1}\tr\chi-(\Omega^{-1}\tr\chi)^c+2\Omega^{-1}\chih-2\left(\Omega^{-1}\chih\right)^c\right) \\
            &+\left((\Omega/\Omega^c)^2-1\right)(v^\k\Omega\tr\chi)^c,\\
            v^\k\partial_v\left(\Omega^{-1}\tr\chi-(\Omega^{-1}\tr\chi)^c\right)=&-\left(v^\k\Omega^2\right)\left(\frac{1}{2}\left(\Omega^{-1}\tr\chi\right)^2+\left|\Omega^{-1}\chih\right|^2+(\Omega^{-1}e_4\phi)^2\right)\\
            &+\left(v^\k\Omega^2\right)\left(\frac{1}{2}\left(\Omega^{-1}\tr\chi\right)^2+\left|\Omega^{-1}\chih\right|^2+(\Omega^{-1}e_4\phi)^2\right)^c\\
            &-\left((\Omega/\Omega^c)^2-1\right)v^\k\left(\frac{1}{2}((\tr\chi)^c)^2+((e_4\phi)^c)^2\right).
        \end{aligned}\right.
    \end{equation}
    The right hand side are all of size $\epsilon$, hence \eqref{333444000} follows by direct integration. 
    Since 
    \begin{equation}
        \left[\Omega^{-1}\chih\right]_0^v\sim g\oc{g^{-1}}\left[\ot{\Omega^{-1}\chih}\right]_0^v,
    \end{equation}
    the estimates of $\left[\Omega^{-1}\chih,\Omega^{-1}e_4\phi\right]_0^v$ comes from the corresponding $\ot{\cdot}$ estimates \eqref{ID_triangle_chih_e4phi}.
\end{proof}
In the exterior region $v\geq \epsilon_1$, we can prescribe the data components $g_{AB},\phi,\Omega^2$ freely, provided that the following geometric estimates reliably hold:
\begin{equation}
    \begin{aligned}
        &\sum_{j=0}^1\sum_{i =0}^7\lnm v^{i+j-1}\partial_v^j\partial_\theta^i\left(g-g^c,\Omega/\Omega^c-1\right)\rnm^2_{L^2(\SS,{g^c(-1,v)})}\\
        &\qquad +\sum_{j=0}^1\sum_{i=1-j}^7\lnm v^{i+j-1}\partial_v^j\partial_\theta^i\left(\phi-\phi^c\right)\rnm^2_{L^2(\SS,{g^c(-1,v)})}\leq \epsilon^2.
    \end{aligned}
\end{equation}
With Corollary \ref{Local existence of CIVP of ESE cor}, we study the ESE spacetime $(\mathcal{M},g,\phi)$ near $\Hb_0\cup H_{-1}$ arising from initial data 
\begin{equation}\label{334466}
    \begin{aligned}
        &g_{AB}(u,0)=\oc{g}_{AB}(u,0),\ g_{AB}(-1,v)=g_{AB}(v),\ \lim_{v\rightarrow 0}\left(v^\k\Omega^2\right)(u,v)=\oc{v^\k\Omega^2}(u,0),\\
        &(v^\k\Omega^2)(-1,v)=(v^\k\Omega^2)(v),\ \phi(u,0)=\oc{\phi}(u,0),\ \phi(-1,v)=\phi(v).
    \end{aligned}
\end{equation}
In this article, we rigorously prove the global existence result formulated as follows.
\begin{theorem}[Global Existence]\label{Main_Existence_Theorem_whole_paper}
    The spacetime $(\mathcal{M},g,\phi)$ associated with the characteristic initial data \eqref{334466} admits global a priori energy estimates. Consequently, the spacetime can be globally developed over the entire exterior characteristic initial region $-1<u<0$ and $0<v<\infty$, revealing a globally naked singularity located at $u=0,v=0$.
\end{theorem}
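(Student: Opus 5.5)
The plan is a continuity (bootstrap) argument in three self-similar regions. The regions are $\R_I=\{z\le\epsilon_1\}$, $\R_{II}=\{\epsilon_1\le z\le\epsilon_1^{-1}\}$ and $\R_{III}=\{z\ge\epsilon_1^{-1}\}$. Each region is combined with the local existence result Corollary \ref{Local existence of CIVP of ESE cor}.

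Local existence gives a smooth solution near $\Hb_0\cup H_{-1}$. I will take the set of $(u_*,v_*)$ such that the solution exists on $\{-1\le u\le u_*,\,0\le v\le v_*\}$ and satisfies the bootstrap bounds with constant $2C$. The aim is to show that this set is open and closed. Closedness follows from continuity of the norms. Openness follows once the bounds are improved to constant $C$: the improved bounds control enough derivatives ($\ge 5$ angular derivatives of curvature) in $L^\infty$ via the Sobolev inequalities of Section \ref{Norms and Prework}. They therefore give uniform $C^k$ control on each compact subregion $\{u\le u_*<0\}$, and the solution can be re-extended.

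\textbf{Region $\R_I$.} Work in the gauge $(u,v^{1-\k})$ with weight $w_I=\z^{2\delta-1}$. Bootstrap the three families of estimates described in the introduction:
\begin{itemize}
\item $\wt{\Omega^{-2}\alpha}$ and $\wt{\Omega^{-1}\chih}$, at rate $z^{2\delta}$;
\item the Bianchi pairs written as initial-value differences $\df{\Psi}$, at rate $z^{2\k/(1-\k)}$;
\item $\wt{\Omega^{-1}D_4\phi}$, $\df{\nabla\phi}$ and $\ddfl{\Omega\tr\chib}$, $\ddfl{\Omega D_3\phi}$, at rate $z^{1+2\k/(1-\k)-2\delta}$.
\end{itemize}
The initial data on $H_{-1}$ for these differences are small with the required powers of $v$; this is the Comparison-with-Background lemma and \eqref{ID_triangle_chih_e4phi}. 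Transport estimates along $e_3$ then close whenever the coefficient $\mathcal{C}$ is negative. When it is positive, as for $\tr\chib$, the argument uses $2\tau(1-\k)>(-u)\mathcal{C}$ to absorb the growing term. The top-order Ricci coefficients, including the extra derivative of $\phi$ caused by derivative loss, are recovered by elliptic Hodge estimates using the mass aspects $\dv\eta-K$ and $\dv\etab-K$. The output of this step is interface bounds on $\{z=\epsilon_1\}$.

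\textbf{Region $\R_{II}$.} Switch to $(U,V)$ and the weight $W=V^a\exp(-DV^{1-\k}/(-U))$. Take background differences $\ol\psi$ for all quantities, and prove the estimates for every small $a>0$. Choosing $D$ large makes every nonlinear and linear error term $o_D(1)$ times the bootstrap constant. Data come from the $\R_I$ interface and from $H_{-1}$; since $z$ stays in a compact range, the exponential weight is harmless.

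\textbf{Region $\R_{III}$.} Use the weight $W_{III}=V^{a-\tau(1-\k)}(-U)^\tau$ together with the renormalizing factors $\Omega^{t(\psi)}$. Here I bootstrap the four groups of estimates listed in the introduction. The curvature pairs and the pair $(\nabla\Omega e_4\phi,\nabla^2\phi)$ are closed via the bulk norms $\LS^2(\R_{U,V})$ and $\LS^2(\Rb_{U,V})$ with the extra factor $z^\lambda$. Integrating the energy identity over the bulk gains the factor $(-U)$, which breaks the circularity between the $\nabla_3$ and $\nabla_4$ equations. The initial bulk data on the future boundary of $\R_{II}$ come from the $\R_{II}$ estimates with parameter $a/2$.

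\textbf{Conclusion.} The uniform estimates extend the solution to all of $-1<u<0$, $v>0$. The nakedness claim reduces to the lapse bound $\Omega^2(U,V)\Omega^{-2}(-1,V)\approx(-U)^{-\epsilon^{1/4}/(1-\k)}$, which follows from integrating $\ol{\Omega\omegab}$. Because this exponent is small, it is integrable in $U$, so the affine length along $e_3$ is uniformly bounded.

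\textbf{Main obstacle.} I expect $\R_{III}$ to be hardest, specifically closing the top-order Bianchi and scalar-field energies near $u=0$. The damping there is borderline and the $\psi\nabla\psi$ source terms lose a derivative. The first thing I would try is to put the top-order scalar field into the Bianchi pair $(\nabla(\Omega e_4\phi),\nabla^2\phi)$ and estimate it in the bulk norm first, before the hypersurface norms.
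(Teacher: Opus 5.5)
Your outline follows the paper's proof region by region: the same weights $w_I$, $W_{II}$, $W_{III}$, the difference hierarchy in $\R_I$, the large-$D$ mechanism in $\R_{II}$, the bulk norms with a $z^\lambda$ factor in $\R_{III}$, and the affine-length conclusion. One sub-step, however, would not close as you state it.

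\textbf{The extra scalar-field derivative.} You say that the extra derivative of $\phi$ forced by the derivative loss is recovered, together with the top-order Ricci coefficients, by elliptic Hodge estimates for the mass aspects $\dv\eta-K$, $\dv\etab-K$. Those Hodge systems only return $\nabla^6\eta,\nabla^6\etab$ from $\nabla^5$ of $K$, $\sigma^r$ and the mass aspects. Nothing in them controls $\nabla^7\phi$. There is also no Hodge system that gains this derivative on $\nabla\phi$ from quantities you control: its divergence $\Delta\phi$ is tied by the wave equation to the transversal derivatives $\nabla_3D_4\phi$ and $\nabla_4D_3\phi$. Since $\nabla^5$ of the renormalized Bianchi sources contains $\nabla^7\phi$ and $\nabla^6 D_4\phi$, your third bootstrap family at order six would not be improved this way.

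The paper closes it with energy estimates in which the wave equation is written as two Bianchi-type pairs, $\bigl(\nabla^6\wt{\Omega^{-1}D_4\phi},\,\nabla^5\ddfl{\nabla^2\phi}\bigr)$ and $\bigl(\nabla^5\df{\nabla^2\phi},\,\nabla^6\ddfl{\Omega D_3\phi}\bigr)$. The positive bulk coefficient is handled by the $z$-power of the target rate, as in your $\tr\chib$ argument. To make the first pair close, the paper commutes the $\ot{\Omega^{-1}e_4\phi}$ equation with the coordinate fields $\partial_{\theta^A}$ rather than $\nabla_A$. Otherwise the term $\nabla^6\left[\dv\right]_0^v(\nabla\phi)_0$ would require seven derivatives of the metric increment. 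This is the same pair idea you reserve for $\R_{III}$; it is needed in every region (in $\R_{II}$ as the pairs $(\nabla^6\Omega D_4\phi,\nabla^7\phi)$ and $(\nabla^7\phi,\nabla^6\Omega^{-1}D_3\phi)$).

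\textbf{The lapse bound.} A smaller point: $\Omega\omegab$ is not among the quantities controlled in $\R_{III}$ (the paper deliberately avoids it there), so integrating $\ol{\Omega\omegab}$ in $U$ at fixed $V$ needs a bound you have not supplied.

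The paper instead integrates in $V$ the derivative of $\ol{\log\bigl(\Omega^2(U,V)\Omega^{-2}(U_0(V),V)\bigr)}$. This involves $\ol{\Omega\omega}(U,V)-\ol{\Omega\omega}(U_0(V),V)$, controlled through the $\nabla_3$-equation for $\omega$, and $\ol{\Omega\omegab}$ only on the $\R_{II}$ interface.

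Your route can be repaired directly. The quantity $\partial_V(\Omega\omegab)=\Omega^2(\ldots)$ contains no term linear in $\omegab$. Integrating from $V_0(U)$ with the low-order $\R_{III}$ bounds then gives $|\ol{\Omega\omegab}|\lesssim\epsilon^{1-\delta}V_0(U)^{\k-1}\sim\epsilon^{1-\delta}\epsilon_1^{1-\k}(-U)^{-1}$. Together with the shift term, this yields the same small negative power of $(-U)$ in the lapse ratio.
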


\section{Weighted Scale Invariant Norms and Useful Lemmas}\label{Norms and Prework}
\subsection{Weighted norm system} We first define \textit{signatures} and a \textit{scale-invariant weighted norm system} to simplify the calculation in later sections.
\begin{definition}
    We define the \textit{signature} for the geometric quantities of interest:
    \[s(\Omega^2)=\k+1,\quad s(b)=s(g)=s(\phi)=1,\]
    \[s(\Omega^{-1}\tr\chi)=s(\Omega\tr\chib)=s(\Omega^{-1}\chih)=s(\Omega^{-1}\omega)=s(\Omega\omegab)=s(\eta)=s(\etab)=0,\]
    \[s(\Omega^2\alphab)=s(\Omega^{-2}\alpha)=s(\Omega^{-1}\beta)=s(\Omega\betab)=s(\rho)=s(\sigma)=-1.\]
    Furthermore, we establish the following conventions for derivatives and contractions:
    \[s(\nabla)=s(\Omega D_3)=s(\Omega^{-1}D_4)=0,\]
    \[s(\Phi_1\Phi_2)=s(\Phi_1)+s(\Phi_2)-1,\]
    where $\Phi_1$ can be $\nabla,\Omega D_3,\Omega^{-1} D_4$.
    For difference quantities, the signatures of the differences, such as $s({\psi}-\psi^c)$ and $s(\left[\psi\right]_0^v)$, are all defined to be equal to $s(\psi)$. Additionally, we define 
    \[s(v^a(-u)^b\psi)=s(\psi)+a(1-\k)+b.\]
\end{definition}
\begin{remark}
    By direct calculation, one can readily verify that within every evolution equation, each additive term possesses the same signature. For instance, in the equation 
    $\Omega^{-1}\nabla_4\psi=\psi^2+\Psi$, we have $s(\Omega^{-1}\nabla_4\psi)=-1=s(\psi^2)=s(\Psi)$.
\end{remark}

\begin{definition}
    We now define the scale-invariant norms incorporated with the weight function $w(u,v)$. 
\[\lnm\psi\rnm_{\LS_w^2(S_{u,v})}^2:=\int_{\SS} (-u)^{2-2s(\psi)}|\psi|_g^2\cdot w(u,v)d\V,\]
\[\|\psi\|_{\LS_w^2(H_u^v)}^2:=\int_{v_0(u)}^v\int_{\SS} (-u)^{1-2s(\psi)}\Omega^{2}w(u,v^\prime)\cdot |{\psi}|_g^2d\V dv^\prime,\]
and 
\[\|\psi\|_{\LS_w^2(\Hb_v^u)}^2:=\int_{u_0(v)}^u\int_{\SS} (-u^\prime)^{1-2s(\psi)}w(u^\prime,v)\cdot |{\psi}|_g^2d\V du^\prime.\]
We furthermore define $\LS_w^\infty(S_{u,v})$ norm by 
\[\lnm \psi\rnm_{\LS_w^\infty(S_{u,v})}=(-u)^{1-s(\psi)}w^{1/2}\sup_{S_{u,v}}|\psi|_{g}.\]
$\LS_w^2$ norm on the $\k$ similar hypersurfaces is defined by 
\[\|\Phi\|^2_{\LS_w^2(\Sigma^z_{u,v})}:=\int_{-v^{1/(1-\k)}/z}^u\|\Phi\|^2_{\LS_w^2\left(S_{s,z(-s)^{1-\k}}\right)}\frac{1}{(-s)}ds.\]
Especially, we write
$\|\Phi\|^2_{\LS_w^2(\Sigma^z)}:=\|\Phi\|^2_{\LS_w^2\left(\Sigma^z_{0,z^{1-\k}}\right)}.$
Next, we introduce the bulk integrated norms tailored for the energy estimates:
\[\|\Phi\|^2_{\LS_w^2({\mathcal{R}}_{u,v})}:=\int_{u_0(v)}^u\int_{v_0(u^\prime)}^v\int_{\SS} (-u^\prime)^{-2s(\Phi)}|{\Phi}|^2\Omega^{2} w\  d\V dv^\prime du^\prime .\]
Whenever there is no risk of ambiguity, we shall write $\LS$ in place of $\LS_w$.

\end{definition}
\begin{remark}
    For $X_{u,v}\in\{S_{u,v},H_u^v,\Hb_v^u,\R_{u,v}\}$, we deduce that 
    \begin{equation}
        \lnm v^a(-u)^{b}\psi\rnm_{\LS_w^2(X_{u,v})}\leq \left(\frac{v}{(-u)^{1-\k}}\right)^a\lnm \psi\rnm_{\LS_w^2(X_{u,v})}.
    \end{equation}
    Especially, the equality holds for $X=S$,
    \begin{equation}
        \lnm v^a(-u)^{b}\psi\rnm_{\LS_w^2(S_{u,v})}= \left(\frac{v}{(-u)^{1-\k}}\right)^a\lnm \psi\rnm_{\LS_w^2(S_{u,v})}.
    \end{equation}
\end{remark}

\begin{remark}
    We will demonstrate later that within Region I, the metric satisfies $g(u,v)=(-u)^2g^{\SS}+(-u)^2o_{\epsilon,\epsilon_1}(1)$. Consequently, the ${\LS^2(S_{u,v})}$-norm is formally equivalent to the standard weighted norm $L^2(S_{u,v},w\cdot d{\rm Vol}_g)$:
    \begin{equation}
        (1-o_{\epsilon,\epsilon_1}(1))\lnm \psi\rnm^2_{\LS^2(S_{u,v})}\leq w(u,v)\cdot \int_{S_{u,v}}|\psi|^2_{g}d{\rm Vol}_{g}\leq (1+o_{\epsilon,\epsilon_1}(1))\lnm \psi\rnm^2_{\LS^2(S_{u,v})}.
    \end{equation}
\end{remark}

\subsection{Inequalities of the weighted norm system}
In this part, we establish some useful inequalities, which appear frequently in the standard double null analysis, for our weighted norm system. 
\begin{lemma}{\rm (Sobolev Embedding)}
    Assume that $C_1^{-1}g^{\SS}_{AB}\leq (-u)^{-2}g_{AB}\leq C_1g^{\SS}_{AB}$. For tensors $\psi=\psi_{i_1\cdots i_n}$ on $S_{u,v}$, it follows that 
    \begin{equation}\label{Sobolev_W22}
            \|{\psi}\|_{\LS_w^\infty(S_{u,v})} 
            \leq C(C_1)\sum_{i=0}^2\|\nabla^i{\psi}\|_{\LS_w^2(S_{u,v})}^2,
    \end{equation}

\end{lemma}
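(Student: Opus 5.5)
The weight $w(u,v)$ is constant on each sphere $S_{u,v}$, so it factors out of both sides. The proof then reduces to a scale-invariant Sobolev inequality on a sphere whose metric is uniformly comparable to the round one. I read the right-hand side of \eqref{Sobolev_W22} as $\sum_{i\le 2}\|\nabla^i\psi\|_{\LS_w^2(S_{u,v})}$. The square as printed is inconsistent with the homogeneity of the two sides (I take it to be a typo); the squared form would only follow after normalization, e.g. when the right-hand side is at least $1$.

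\textbf{Step 1: rescaling.} On a fixed sphere define $\tilde g_{AB}:=(-u)^{-2}g_{AB}$. By hypothesis $C_1^{-1}g^{\SS}\le\tilde g\le C_1g^{\SS}$. For a covariant $n$-tensor we have $|\psi|_g=(-u)^{-n}|\psi|_{\tilde g}$. Rescaling the metric by a constant does not change the Levi-Civita connection, so $|\nabla^i\psi|_g=(-u)^{-n-i}|\nabla^i\psi|_{\tilde g}$. The signature convention gives $s(\nabla^i\psi)=s(\psi)-i$, so the weights $(-u)^{2-2s(\nabla^i\psi)}=(-u)^{2-2s(\psi)+2i}$ exactly absorb the factors $(-u)^{-2i}$. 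Both sides of \eqref{Sobolev_W22} therefore carry the same overall factor $(-u)^{1-s(\psi)-n}w^{1/2}$. It suffices to prove, on $(\SS,\tilde g)$,
\[
\sup|\psi|_{\tilde g}\le C(C_1)\sum_{i\le2}\Big(\int_{\SS}|\nabla^i\psi|^2_{\tilde g}\,d\V\Big)^{1/2}.
\]
Comparability of $\tilde g$ and $g^{\SS}$ also shows that $d{\rm Vol}_{\tilde g}$ and $d\V$ agree up to factors depending only on $C_1$.

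\textbf{Step 2: reduction to scalars.} We only control $\tilde g$ at zeroth order: no bound on its Christoffel symbols relative to $g^{\SS}$ is assumed. So I avoid any tensorial elliptic theory. Instead I work with the scalar functions $f_0=|\psi|_{\tilde g}$ and $f_1=|\nabla\psi|_{\tilde g}$ and use Kato's inequality, $|d f_0|\le|\nabla\psi|$ and $|d f_1|\le|\nabla^2\psi|$ almost everywhere. For scalar functions the gradient norm involves only $\tilde g^{-1}$, which is comparable to $(g^{\SS})^{-1}$. Hence the scalar Sobolev inequalities on the round sphere,
\[
\|f\|_{L^4}\lesssim\|f\|_{W^{1,2}},\qquad \sup|f|\lesssim\|f\|_{W^{1,4}},
\]
transfer to $(\SS,\tilde g)$ with constants depending only on $C_1$.

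\textbf{Step 3: chaining the inequalities.} Applying these, $\sup f_0\lesssim\|f_0\|_{L^4}+\|df_0\|_{L^4}$. The first term is at most $\|f_0\|_{W^{1,2}}$. The second is at most $\|f_1\|_{L^4}\lesssim\|f_1\|_{W^{1,2}}$. Kato then bounds everything by $\sum_{i\le2}\|\nabla^i\psi\|_{L^2}$. Undoing the rescaling of Step 1 and multiplying by $w^{1/2}$ gives \eqref{Sobolev_W22}.

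\textbf{Main obstacle.} The delicate point is keeping the constant dependent on $C_1$ only, without any control of derivatives of $g$. The Kato reduction to scalars in Step 2 is what achieves this. The only other thing to verify carefully is the signature bookkeeping $s(\nabla^i\psi)=s(\psi)-i$, which makes the powers of $(-u)$ match exactly.
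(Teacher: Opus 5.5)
Your proof is correct and reaches the same estimate as the paper, but by a more careful route.

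The paper starts from the Sobolev inequality on the unit round sphere and rescales it to radius $r$, which puts the weights $r^{-1},1,r$ on $\|\psi\|_{L^2},\|\nabla\psi\|_{L^2},\|\nabla^2\psi\|_{L^2}$. It then applies this inequality to tensors. Finally it converts round norms into $g$-norms using only the pointwise equivalence $|\psi|_g\sim(-u)^{-n}|\psi|_{g^{\mathbb{S}^2}}$. In doing so it silently identifies the round covariant derivatives with those of $g$. As written, that step needs control of $\Gamma(g)-\Gamma(g^{\mathbb{S}^2})$, i.e.\ of $\partial g$. The zeroth-order hypothesis $C_1^{-1}g^{\mathbb{S}^2}\le(-u)^{-2}g\le C_1g^{\mathbb{S}^2}$ does not supply this. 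The gap is harmless in the paper's applications, where derivative bounds on $g$ are available from the metric estimates, but it is not covered by the lemma as stated.

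You instead rescale by the constant $(-u)^{-2}$. This leaves $\nabla$ unchanged, and the resulting powers of $(-u)$ are exactly absorbed by $s(\nabla^i\psi)=s(\psi)-i$. You then pass to the scalars $|\psi|_{\tilde g}$ and $|\nabla\psi|_{\tilde g}$ via Kato's inequality. As a result, only the comparability $\tilde g^{-1}\sim(g^{\mathbb{S}^2})^{-1}$ enters the scalar embeddings $W^{1,2}\subset L^4$ and $W^{1,4}\subset L^\infty$. This costs one extra step compared with quoting $H^2\subset L^\infty$ for tensors. In exchange, it proves the lemma with a constant depending on $C_1$ alone, exactly under its stated hypothesis.

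Your reading of the right-hand side also matches what the paper actually proves. Its chain ends with $\big(\sum_{i\le 2}\|\nabla^i\psi\|^2_{\LS_w^2(S_{u,v})}\big)^{1/2}$, so the square in the displayed statement is indeed a typo.
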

\begin{proof}
    We begin by recalling the standard Sobolev inequality on the round sphere $\SS_r$ of radius $r$. For $r=1$, we have the familiar bound:
$$
\begin{aligned}
\|\psi\|_{L^{\infty}\left(\SS_{1}\right)} & \leq C_{0}\left(\|\psi\|_{L^{2}\left(\SS_{1}\right)}+\| \nabla \psi\|_{L^{2}\left(\SS_{1}\right)}+\| \nabla^{2} \psi\| _{L^{2} \left(\SS_{1}\right)}\right) 
\end{aligned}.
$$
For a scalar function $\psi$ defined on $\SS_r$, we can naturally interpret it as a corresponding function on $\SS_1$. Since the metric on $\SS_r$ is obtained by a constant conformal rescaling of the metric on $\SS_1$, we have $g^{\SS_r}_{ij}=r^2g^{\SS_1}_{ij}$, which implies
$$
\begin{aligned}
\| \nabla^{k} \psi\|_{L^{2}\left(\SS_{r}\right)}^{2} & =\int_{\SS_{1}} \nabla_{i_{1}} \cdots \nabla_{i_{k}} \psi \nabla_{j_{1}} \cdot \nabla_{j_{k}} \psi g^{i_1j_1} . . g^{i_kj_k} r^{2} d V =r^{-2 k+2}\left\|\nabla^{k} \psi\right\|_{L^{2}\left(\SS_{1}\right)}^{2}
\end{aligned}
$$
Consequently, we arrive at the following rescaled Sobolev embedding on $\SS_r$:
$$
\|\psi\|_{L^{\infty}\left(\SS_{r}\right)} \leq C_{0}\left(r^{-1}\|\psi\|_{L^{2}\left(\SS_{r}\right)}+\|\nabla \psi\|_{L^{2}\left(\SS_{r}\right)}+r \| \nabla^{2}\psi\|_{L^{2}\left(\SS_{r}\right)}\right).
$$
Next, we carefully derive the corresponding Sobolev embedding expressed in the $\LS^2(\S)$ norm. 
For a generic tensor $\psi=\psi_{i_{1} \ldots i_{n}}$, we note the pointwise equivalence $|\psi|_g\sim|\psi|_{\SS} \cdot (-u)^{-n}$. 
$$
\begin{aligned}
 &\|\psi\|_{L^{\infty}(\S)}\sim(-u)^{s(\psi)-1-n}\left\|u^{1-s(\psi)} \psi\right\|_{L^{\infty}(\SS)}  \\
 \lesssim& (-u)^{s(\psi)-1-n}\left(\int_{\SS} (-u)^{2-2 s(\psi)}|\psi|_{\SS}^{2}+\int_{\SS} (-u)^{2-2 s(\psi)} \mid \nabla\psi|_{\SS} ^{2}+\int_{\SS} (-u)^{2-2 s(\psi)}\left|\nabla^2 \psi\right|_{\SS}^{2}\right)^{1 / 2}  \\
\lesssim& (-u)^{s(\psi)-1}\left(\int_{\SS} (-u)^{2-2 s(\psi)}|\psi|_{g}^{2}  +\int_{\mathbb{S}} (-u)^{2-2 s(\nabla \psi)}|\nabla \psi|_{g}^{2}  +\int_{\mathbb{S}} (-u)^{2-2 s\left(\nabla^{2} \psi\right)}\left|\nabla^{2} \psi\right|_{g}^{2} \right)^{1 / 2}. \end{aligned}
$$

\end{proof}

We furthermore reformulate the standard product Sobolev inequality 
\[\lnm \psi_1\psi_2\rnm_{H^n(\SS)}\lesssim \lnm \psi_1\rnm_{H^{n-1}(\SS)}\lnm \psi_2\rnm_{H^n(\SS)}+\lnm \psi_1\rnm_{H^{n}(\SS)}\lnm \psi_2\rnm_{H^{n-1}(\SS)},\, {\rm for}\, n\geq 3,\]
in terms of $\LS$-norms.
\begin{lemma}\label{Lemma_Sobolev_Ineq_product}
    With the scale invariant weighted norm, for $n\geq 3$ the following holds,
    \begin{equation}\label{Sobolev_Product_form}
        \begin{aligned}
            \sum_{j\leq n}&\lnm \nabla^j(\psi_1\psi_2)\rnm_{\LS_w^2(S_{u,v})}\\
            &\lesssim \sum_{\substack{i\leq n-1,j\leq n}}\left(\lnm \nabla^i\psi_1\rnm_{\LS_1^2(S_{u,v})}\lnm \nabla^j\psi_2\rnm_{\LS_w^2(S_{u,v})}+\lnm \nabla^i\psi_2\rnm_{\LS_1^2(S_{u,v})}\lnm \nabla^j\psi_1\rnm_{\LS_w^2(S_{u,v})}\right),
        \end{aligned}
    \end{equation}
    where $\LS_1$, $\LS_w$ stand for the norm systems with respect to weight $1$ and $w$. Moreover, for $n\geq 3$, $X=H_{u}^v,\Hb_v^u,\R_{u,v}$, one obtains
    \begin{equation}\label{Calcu_weighted_norm_Sobolev}
        \begin{aligned}
            \sum_{j\leq n}\lnm \nabla^j(\psi_1\psi_2)\rnm_{\LS_w^2(X)}\lesssim &\sup_{S_{u,v}\subset X}\left({\sum_{i\leq n-1}\lnm \nabla^i\psi_1\rnm_{\LS_1^2(S_{u,v})}}\right)\sum_{j\leq n}\lnm \nabla^j\psi_2\rnm_{\LS_w^2(X)}\\
            &+\sup_{S_{u,v}\subset X}\left({\sum_{i\leq n-1}\lnm \nabla^i\psi_2\rnm_{\LS_1^2(S_{u,v})}}\right)\sum_{j\leq n}\lnm \nabla^j\psi_1\rnm_{\LS_w^2(X)}.
        \end{aligned}
    \end{equation}
\end{lemma}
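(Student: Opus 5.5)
The plan is to reduce \eqref{Sobolev_Product_form} to the unweighted product estimate on the unit sphere, which the paper quotes just before the lemma. The reduction uses two facts. First, the signature bookkeeping matches: since $s(\nabla)=0$, the convention $s(\Phi_1\Phi_2)=s(\Phi_1)+s(\Phi_2)-1$ gives $s(\nabla^k\psi)=s(\psi)-k$. It also gives $s(\psi_1\psi_2)=s(\psi_1)+s(\psi_2)-1$. Hence the $(-u)$-weights factor exactly:
\[
(-u)^{1-s(\nabla^j(\psi_1\psi_2))}=(-u)^{1-s(\nabla^{j_1}\psi_1)}\,(-u)^{1-s(\nabla^{j_2}\psi_2)},\qquad j_1+j_2=j .
\]
Second, the weight $w(u,v)$ is constant on each sphere $S_{u,v}$. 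It therefore enters $\LS_w^2(S_{u,v})$ only as an overall factor $w^{1/2}$. This explains why one factor carries $\LS_w$ and the other $\LS_1$.

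Concretely, I would fix $(u,v)$ and assume, as in the Sobolev Embedding lemma, that $(-u)^{-2}g$ is uniformly comparable to $g^{\SS}$. Up to a constant depending on $C_1$, the quantity $\lnm\nabla^k\psi\rnm_{\LS_1^2(S_{u,v})}$ is then $(-u)^{1-s(\psi)}$ times the $L^2$ norm of $\nabla^k\psi$ on the unit sphere. Here the tensor-index factors $(-u)^{-n}$ are absorbed exactly as in that proof. The covariant derivative $\nabla$ of $g$ differs from the round one by Christoffel differences. These are controlled by the same comparability assumption together with the bounds on $\nabla g$ implicit in the bootstrap. I would simply absorb them into the implicit constant.

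With this in place, the sphere estimate is the quoted inequality $\lnm\psi_1\psi_2\rnm_{H^n}\lesssim\lnm\psi_1\rnm_{H^{n-1}}\lnm\psi_2\rnm_{H^n}+\lnm\psi_1\rnm_{H^n}\lnm\psi_2\rnm_{H^{n-1}}$ for $n\ge3$. I multiply it by $(-u)^{1-s(\psi_1\psi_2)}w^{1/2}$ and distribute the factors using the identity above. The factor $w^{1/2}$ goes onto the $H^n$ factor, which produces the $\LS_w$ norm; the $H^{n-1}$ factor is left with weight $1$. The Leibniz expansion of $\nabla^j(\psi_1\psi_2)$ then gives \eqref{Sobolev_Product_form}. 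The restriction $n\ge3$ is what lets the lower-order factor be placed in $L^\infty$ through $H^2\subset L^\infty$ while the higher factor stays in $L^2$.

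For the null-hypersurface and bulk norms \eqref{Calcu_weighted_norm_Sobolev}, I would square the sphere estimate and integrate it against the measures defining $\LS_w^2(H_u^v)$, $\LS_w^2(\Hb_v^u)$ and $\LS_w^2(\R_{u,v})$. These measures are $(-u)^{1-2s}\Omega^2w\,dv$, $(-u')^{1-2s}w\,du'$, and the bulk analogue. The $(-u)$-powers in these measures again split consistently between the two factors. In each term I bound the $\LS_1$ factor by its supremum over the spheres $S_{u,v}\subset X$ and keep the integral on the $\LS_w$ factor. After taking square roots this gives \eqref{Calcu_weighted_norm_Sobolev}.

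The only step needing care is the bulk norm, whose weight is $(-u')^{-2s}$ rather than $(-u')^{1-2s}$. One must check that the extra $(-u')^{-1}$ (or $\Omega^2$) factor attaches to a single factor, and this is automatic since it is a scalar weight. Beyond that I expect no real obstacle; the argument is bookkeeping of signatures.
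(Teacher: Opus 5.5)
Your proposal is correct and follows the same route as the paper. The paper's proof is just the sphere-level weighted product Sobolev inequality, with the sphere-constant weight $w$ carried by the top-order factor, followed by taking the supremum of the $\LS_1$ factor over spheres and integrating to obtain the $H_u^v$, $\Hb_v^u$ and $\R_{u,v}$ versions; your signature bookkeeping makes explicit what the paper leaves implicit. One step needs more careful wording: the factor $\Omega^2$ in the $H_u^v$ and bulk measures is not constant on $S_{u,v}$, so it is not automatic. When the integrated factor is placed in $L^\infty$ or $L^4$, you should invoke the comparability of $\Omega^2$ with $(-u)^\k$, which holds under the standing bounds (the paper also passes over this).
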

\begin{proof}
    With weight $w$, we have inequality:
    \begin{equation}
        \begin{aligned}
            \sum_{j\leq n}\lnm \nabla^j(\psi_1\psi_2)\rnm_{\LS_w^2(S_{u,v})}\lesssim & \sum_{i\leq n-1}\lnm \nabla^i\psi_1\rnm_{\LS_1^2(S_{u,v})}\sum_{j\leq n}\lnm \nabla^j\psi_1\rnm_{\LS_w^2(S_{u,v})}\\
            &+\sum_{i\leq n-1}\lnm \nabla^i\psi_2\rnm_{\LS_1^2(S_{u,v})}\sum_{j\leq n}\lnm \nabla^j\psi_2\rnm_{\LS_w^2(S_{u,v})}\\
            \lesssim & \sup_{S_{u,v}\subset X}\left(\sum_{i\leq n-1}\lnm \nabla^i\psi_1\rnm_{\LS_1^2(S_{u,v})}\right)\sum_{j\leq n}\lnm \nabla^j\psi_1\rnm_{\LS_w^2(S_{u,v})}\\
            &+\sup_{S_{u,v}\subset X}\left(\sum_{i\leq n-1}\lnm \nabla^i\psi_2\rnm_{\LS_1^2(S_{u,v})}\right)\sum_{j\leq n}\lnm \nabla^j\psi_2\rnm_{\LS_w^2(S_{u,v})}.
        \end{aligned}
    \end{equation}
    After integration, desired estimates for $H_u^v,\Hb_v^u,\R_{u,v}$ are derived.
\end{proof}

The transport estimates can be written as below.
\begin{lemma}{\rm (Transport Estimates)}
    Assume that $s(\psi)\leq 0$, $C_1^{-1}\leq\frac{\Omega^2}{(-u)^\k}\leq C_1$, $\partial_u\log w\leq 0$, and $\partial_v\log w\leq 0$. The following estimates hold.
    \begin{equation}\label{transport_3}
            \|\psi\|_{\LS_w^2(S_{u,v})}^2\leq \|\psi\|_{\LS_w^2(S_{u_0(v),v})}^2+\|\Omega\nabla_3{\psi},{\psi}|\nabla b|^{1/2}\|_{\LS_w^2(\Hb_v^u)}^2,
    \end{equation}
    \begin{equation}\label{transport_4}
        \|\psi-\psi|_{v_0(u)}\|_{\LS_w^2(S_{u,v})}^2\lesssim \frac{v}{(-u)^{1-\k}}\|\Omega^{-1}\nabla_4{\psi}, \Omega^{-1}\chi\cdot\psi\|_{\LS_w^2(H_u^v)}^2.
    \end{equation}
\end{lemma}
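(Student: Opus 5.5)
Both estimates come from the fundamental theorem of calculus along a null generator, applied to the weighted density $(-u)^{2-2s(\psi)}|\psi|_g^2 w$ integrated over $\SS$ with respect to $d\V$. The metric comparison $C_1^{-1}\leq\Omega^2/(-u)^\k\leq C_1$ is used to convert $\Omega$-weights into powers of $(-u)$ and $v$. The signs $\partial_u\log w\leq0$, $\partial_v\log w\leq0$ and $s(\psi)\leq0$ are used to discard the terms in which the derivative falls on the weights.

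\textbf{Estimate \eqref{transport_3}.} I would differentiate along $\partial_u$ at fixed $(v,\theta)$. Since $e_3=\Omega^{-1}(\partial_u+b^A\partial_A)$, we have $\partial_u=\Omega e_3-b^A\partial_A$. Applying this to $|\psi|_g^2$ and integrating over the coordinate sphere against $d\V$ gives three kinds of terms.
\begin{itemize}
\item The main term is $2\langle\psi,\Omega\nabla_3\psi\rangle$.
\item Differentiating $g^{-1}$ produces $\Omega\chib$ contributions; these combine with the trace part of $\Omega\nabla_3$ acting on the area element.
\item The $b^A\partial_A$ part becomes, after integration by parts on $\SS$, an error bounded by $|\nabla b||\psi|^2$.
\end{itemize}
The derivative of the weight $(-u)^{2-2s}w$ produces a factor $(-u)^{-1}(2-2s)$ together with $\partial_u\log w$.

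Near the background $\Omega\tr\chib\approx 2/u$, so the area term contributes about $-2(-u)^{-1}$. Combined with the $(2-2s)$ factor, the net coefficient is $-2s(\psi)(-u)^{-1}+\partial_u\log w$. The hypotheses $s(\psi)\le0$ and $\partial_u\log w\le0$ are used here: this is the step where I expect care is needed, since signature zero is borderline and I would keep the trace-free part and any deviation $\ol{\Omega\tr\chib}$ inside the error $|\nabla b|$-type term, or absorb it into $\Omega\nabla_3\psi$. Cauchy--Schwarz, $2|\psi||\Omega\nabla_3\psi|\le(-u)^{-1}|\psi|^2+(-u)|\Omega\nabla_3\psi|^2$, then bounds the remaining cross terms by the integrand of $\|\Omega\nabla_3\psi,\psi|\nabla b|^{1/2}\|^2_{\LS_w^2(\Hb_v^u)}$, whose measure carries exactly $(-u')^{1-2s}du'$. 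Integrating from $u_0(v)$ to $u$ yields \eqref{transport_3}, with the $(-u)^{-1}|\psi|^2$ piece dropped by its favourable sign or absorbed.

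\textbf{Estimate \eqref{transport_4}.} I would fix $u$ and write $\psi(v)-\psi(v_0)=\int_{v_0}^v\Lie_{\partial_v}\psi\,dv'$, with $\Lie_{\partial_v}=\Omega\Lie_{e_4}$ and $\Lie_{e_4}\psi=\nabla_4\psi+\chi\cdot\psi$. Cauchy--Schwarz in $v'$ with weight $\Omega^2$ gives
\[
|\psi-\psi|_{v_0}|^2\le\Big(\int_{v_0}^v\Omega^2dv'\Big)\int_{v_0}^v\Omega^2|\Omega^{-1}\nabla_4\psi,\Omega^{-1}\chi\psi|^2dv'.
\]
By the lapse bound, $\int_{v_0}^v\Omega^2\,dv'\lesssim v(-u)^{\k}$ in the renormalized gauge. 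Norms at different $v'$ must be compared in a common metric, using $C_1$-equivalence of $g$ to $(-u)^2g^{\SS}$. Multiplying by $(-u)^{2-2s}w(u,v)$ and using $w(u,v)\le w(u,v')$ (from $\partial_v\log w\le0$), I recover $(-u)^{1-2s}\Omega^2w$ times the extra factor $(-u)v(-u)^{\k}\cdot(-u)^{-2+\dots}$. Tracking the powers, this factor should reduce to $v/(-u)^{1-\k}$, which gives \eqref{transport_4}. I expect this power bookkeeping to be routine; the main obstacle is the sign bookkeeping in \eqref{transport_3} described above.
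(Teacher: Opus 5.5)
Your proof of \eqref{transport_4} is correct and is the paper's argument: the fundamental theorem of calculus in $v$ at fixed $u$, Cauchy--Schwarz against $\Omega^2\,dv'$, $\int_{v_0}^v\Omega^2\,dv'\lesssim v(-u)^{\kappa}$, $w(u,v)\le w(u,v')$, and $s(\Omega^{-1}\nabla_4\psi)=s(\psi)-1$, which together turn the prefactor into $v/(-u)^{1-\kappa}$. The problem is in \eqref{transport_3}, at the step you flagged, and as written that step does not close. Two things go wrong in your coefficient count.

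\emph{No area-element term.} The norm integrates against the fixed round measure $d\mathrm{Vol}_{\mathbb{S}}$, and $\nabla_3$ (the projection of the Levi-Civita connection) is compatible with $g$. So
\[
\partial_u|\psi|_g^2=\Omega e_3\big(|\psi|_g^2\big)-b^A\partial_A|\psi|_g^2=2\langle\Omega\nabla_3\psi,\psi\rangle_g-b^A\partial_A|\psi|_g^2
\]
holds identically. The $\Omega\underline{\chi}$ terms produced by $\partial_u g^{-1}$ cancel exactly against those in $\mathcal{L}_{\Omega e_3}\psi-\Omega\nabla_3\psi$. No approximation $\Omega\,\mathrm{tr}\,\underline{\chi}\approx 2/u$ is needed, and none is available, since the lemma assumes nothing about $\underline{\chi}$.

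\emph{Wrong sign of the weight derivative.} In fact $\partial_u(-u)^{2-2s}=-(2-2s)(-u)^{1-2s}$. Your net coefficient $-2s(\psi)(-u)^{-1}+\partial_u\log w$ comes from adding a spurious $-2(-u)^{-1}$ to $+(2-2s)(-u)^{-1}$. Under the hypothesis $s(\psi)\le0$ its $(-u)^{-1}$ part is nonnegative. After your Cauchy--Schwarz step the bulk term therefore carries coefficient at least $(1-2s(\psi))(-u)^{-1}$, apart from the nonpositive $\partial_u\log w$, which need not help. That term can be neither dropped nor absorbed: Gronwall would cost a factor $\big((-u_0)/(-u)\big)^{c}$, unbounded as $u\to0$, whereas \eqref{transport_3} has constant one.

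The correct bookkeeping is the paper's computation. The only zeroth-order coefficient comes from the weight, $\partial_u\log\big((-u)^{2-2s}w\big)=-(2-2s)(-u)^{-1}+\partial_u\log w$. Your splitting $2|\psi||\Omega\nabla_3\psi|\le(-u)^{-1}|\psi|^2+(-u)|\Omega\nabla_3\psi|^2$ is fine: it matches $(-u)^{3-2s(\psi)}=(-u)^{1-2s(\Omega\nabla_3\psi)}$ in the $\underline{H}_v^u$-norm. It adds $+1$ to that coefficient, so the coefficient of $(-u)^{1-2s(\psi)}w|\psi|_g^2$ becomes
\[
-\big(1-2s(\psi)\big)+(-u)\,\partial_u\log w\le-1 .
\]
The term is therefore simply discarded, and the $b$-part is handled by integration by parts on $\mathbb{S}^2$, as you said. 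This is exactly where $s(\psi)\le0$ and $\partial_u\log w\le0$ enter. Signature zero is not borderline: there is a full $(-u)^{-1}$ of margin.
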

\begin{proof}
    
By integration, it follows that 
\begin{align*}
    \|\psi\|_{\LS_w^2(S_{u,v})}^2 = & \|\psi\|_{\LS_w^2(S_{u_0,v})}^2+\int_{u_0}^u\int_{\SS} \partial_u |\psi|_g^2d\V du^\prime\\
    =& \|\psi\|_{\LS_w^2(S_{u_0,v})}^2+\int_{u_0}^u\int_{\SS} (\Omega e_3-b^Ae_A)\left( (-u)^{2-2s(\psi)}|\psi|_g^2\cdot w(u,v)\right)d\V du^\prime\\
    =& \|\psi\|_{\LS_w^2(S_{u_0,v})}^2+\int_{u_0}^u\int_{\SS} 2\langle\Omega\nabla_3{\psi}, \psi\rangle (-u)^{2-2s(\psi)} w(u,v) d\V du^\prime\\
    &+\int_{u_0}^u\int_{\SS} (-u)^{1-2s(\psi)}|{\psi}|_g^2 w(-(1-2s(\psi))+(-u)\partial_u \log w) d\V du^\prime\\
    &+\int_{u_0}^u\int_{\SS} (-u)^{2-2s(\psi)}|\psi|_g^2\cdot w(u,v)\partial_A b^A d\V du^\prime\\
    \leq & \|\psi\|_{\LS_w^2(S_{u_0,v})}^2+\int_{u_0}^u\int_{\SS} |\Omega\nabla_3{\psi}| (-u)^{1-2s(\Omega\nabla_2\psi)} w(u,v) d\V du^\prime\\
    &+\int_{u_0}^u\int_{\SS} (-u)^{1-2s(\psi)}|{\psi}|_g^2 w\cdot (-(1-2s(\psi))+1+(-u)\partial_u \log w) d\V du^\prime\\
    &+\int_{u_0}^u\int_{\SS} (-u)^{2-2s(\psi)}|\psi|_g^2\cdot w(u,v)|\nabla b| d\V du^\prime,
\end{align*}
and 
\begin{align*}
    \|\psi-\psi|_{v_0(u)}\|_{\LS_w^2(S_{u,v})}^2 \lesssim & \int_{\SS} (-u)^{2-2s(\psi)}w(u,v)\left|\int_{v_0(u)}^v\Lie_v\psi(u,v^\prime)\right|^2\\
    \lesssim & \int_{\SS}(-u)^{2-2s(\psi)}\int_0^v \frac{1}{(-u)^{1-\k}}\int_{v_0(u)}^v(-u)^{1+\k}w(u,v^\prime)\left|\Omega^{-1}\Lie_4{\psi}\right|^2\\
    \lesssim & \frac{v}{(-u)^{1-\k}}\lnm \Omega^{-1}\nabla_4{\psi},\Omega^{-1}\chi\cdot\psi\rnm^2_{\LS_w^2(H_u^v)}.
\end{align*}

\end{proof}
\begin{remark}
    If $\lnm \Omega^{-1}\chi\rnm_{\LS_1^\infty(S_{u,v})}\lesssim 1$, which will be direct from bootstrap assumption in later sections, and $\psi|_{v_0(u)}=0$, \eqref{transport_4} can be written as
    \begin{equation}
        \begin{aligned}
            \|\psi\|_{\LS_w^2(S_{u,v})}^2\lesssim & \frac{v}{(-u)^{1-\k}}\|\Omega^{-1}\nabla_4{\psi}|_{v_0(u)}\|_{\LS_w^2(H_u^v)}^2+\frac{v}{(-u)^{1-\k}}\int_{v_0(u)}^v\frac{1}{(-u)^{1-\k}}\|\psi\|_{\LS_w^2(S_{u,v^\prime})}^2.
        \end{aligned}
    \end{equation}
    The last term is ignorable with the smallness of $\frac{v}{(-u)^{1-\k}}$ in region I.

\end{remark}

For Bianchi pair, we next establish the energy estimates.
\begin{lemma}{\rm (Energy Estimates)}
    Consider equations 
    \begin{equation}
        \left\{
            \begin{aligned}
                &\Omega\nabla_3{\Phi_1}+(\lambda\Omega\tr\chib-\Lambda\Omega\omegab){\Phi_1}-\mathcal{D}{\Phi_2}=F,\\
                &\Omega^{-1}\nabla_4{\Phi_2}+^*\mathcal{D}{\Phi_1}=G,
            \end{aligned}
        \right.
    \end{equation}
    where $\mathcal{D}$ is a first order differential operator and ${}^*\mathcal{D}$ is its Hodge dual. 
    Suppose that $$\lnm\frac{(-u)^2}{\sqrt{\det g(u,v)}}-1\rnm_{L^2(\SS)}\ll 1,\ \lnm\nabla\left(\frac{(-u)^2}{\sqrt{\det g(u,v)}}\right)\rnm_{L^\infty(\SS)}\lesssim \epsilon$$ and 
    \begin{equation}
        \lnm (-u)\dv b, (-u)\Omega\tr\chib+2,(-u)4\Omega\omegab-\k\rnm_{L^\infty(\SS)}\sim o_{\epsilon,\epsilon_1}(1), 
    \end{equation}
    where $f\sim o_{\epsilon,\epsilon_1}(1)$ means $\lim_{\epsilon,\epsilon_1\rightarrow 0}f=0$.
    Then it follows that 
    \begin{equation}\label{Energy Estimates}
        \begin{split}
            &\lnm {\Phi_1}\rnm_{\LS_w^2(H_{u}^v)}^2+\lnm {\Phi_2}\rnm_{\LS_w^2(\Hb_{v}^u)}^2\\
            \lesssim & \left(o_{\epsilon,\epsilon_1}(1)-1+2s(\Phi_1)-\k+\partial_u\log w+4\lambda+\Lambda\k/2\right)\lnm {\Phi_1}\rnm^2_{\LS_w^2(\mathcal{R}_{u,v})}\\
            &+\lnm {\Phi_1}\rnm_{\LS_w^2(\mathcal{R}_{u,v})}\cdot \lnm F\rnm_{\LS_w^2(\mathcal{R}_{u,v})}+\lnm G \rnm_{\LS_w^2(\mathcal{R}_{u,v})}^2+\lnm {\Phi_1}\rnm_{\LS_w^2(H_{-1}^v)}^2+\lnm {\Phi_2}\rnm_{\LS_w^2(\Hb_{0}^u)}^2.
        \end{split}
    \end{equation}
    If $o_{\epsilon,\epsilon_1}(1)-1+2s(\Phi_1)-\k+\partial_u\log w+4\lambda+\Lambda\k/2\leq -\delta<0$, the following inequality holds 
    \begin{equation}
        \begin{split}
            &\lnm {\Phi_1}\rnm_{\LS_w^2(H_{u}^v)}^2+\lnm {\Phi_2}\rnm_{\LS_w^2(\Hb_{v}^u)}^2
            \lesssim \lnm F, G \rnm_{\LS_w^2(\mathcal{R}_{u,v})}^2+\lnm {\Phi_1}\rnm_{\LS_w^2(H_{-1}^v)}^2+\lnm {\Phi_2}\rnm_{\LS_w^2(\Hb_{0}^u)}^2.
            \end{split}
    \end{equation}
\end{lemma}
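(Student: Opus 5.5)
The plan is the standard weighted energy identity for a Bianchi pair: contract the first equation with $\Phi_1$, the second with $\Phi_2$, integrate over the bulk region $\mathcal{R}_{u,v}$ with the weights $(-u)^{-2s(\Phi_1)}\Omega^2 w$, and combine. Integrating by parts on each sphere, the principal terms $-\langle \mathcal{D}\Phi_2,\Phi_1\rangle$ and $\langle {}^*\mathcal{D}\Phi_1,\Phi_2\rangle$ cancel up to lower-order terms, since ${}^*\mathcal{D}$ is the formal adjoint of $\mathcal{D}$. The leftover terms come from differentiating the weight and the volume ratio $(-u)^2/\sqrt{\det g}$. These are bounded by the $L^\infty$ smallness $\lnm\nabla((-u)^2/\sqrt{\det g})\rnm_{L^\infty}\lesssim\epsilon$ and absorbed into the $o_{\epsilon,\epsilon_1}(1)$ coefficient. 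The signatures match, $s(\Phi_2)=s(\Phi_1)$ up to the convention $s(\mathcal{D}\Phi)=s(\Phi)-1$, and the factor $\Omega^{\pm1}$ in front of $\nabla_3,\nabla_4$ makes both equations carry the same power of $-u$. Hence the two contracted identities can be added with a single common weight.

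Next I will convert the $\nabla_3$ part into a flux. I will write $\Omega\nabla_3=\partial_u+\Lie_b$ (up to connection terms) and compute $\partial_u\big((-u)^{2-2s}|\Phi_1|^2 w\big)$ exactly as in the proof of the transport estimate \eqref{transport_3}, then integrate in $u$ from $-1$ and in $v$ with the measure $(-u)^{-1}\Omega^2$. This gives the flux $\lnm\Phi_1\rnm^2_{\LS_w^2(H_u^v)}$ minus the initial flux on $H_{-1}^v$, plus a bulk term with coefficient
\[
-(1-2s(\Phi_1))+(-u)\partial_u\log w+(-u)\partial_u\log\Omega^2-2\lambda(-u)\Omega\tr\chib+\tfrac12\Lambda(-u)\Omega\omegab+O(|(-u)\dv b|).
\]
The hypotheses give $(-u)\Omega\tr\chib=-2+o(1)$, $4(-u)\Omega\omegab=\kappa+o(1)$, and $(-u)\partial_u\log\Omega^2=-4(-u)\Omega\omegab\approx-\kappa$. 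These produce exactly the constant $-1+2s(\Phi_1)-\kappa+4\lambda+\Lambda\kappa/2$ in \eqref{Energy Estimates}. The $\Lambda$ term should be tracked with the exact normalization as written, and a sign there is the one place I would double-check.

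For the $\nabla_4$ part I will compute $\partial_v\big((-u)^{2-2s}|\Phi_2|^2w\big)$ using $\Omega^{-1}\nabla_4$, which gives $\Omega^{-2}\partial_v$, together with the measure $\Omega^2$. Integrating in $v$ then yields the flux $\lnm\Phi_2\rnm^2_{\LS_w^2(\Hb_v^u)}$ and the initial flux on $\Hb_0$. The bulk term involving $\partial_v\log w\le 0$ is dropped since it has a favourable sign, and the $\Omega^{-1}\tr\chi$ terms are treated as $o(1)$ errors or handled identically. The forcing terms $F$ and $G$ are estimated by Cauchy--Schwarz in the bulk norm. For $G$ I will use the bound $\lnm\Phi_2\rnm_{\LS^2_w(\mathcal{R})}\cdot\lnm G\rnm$, and then $\lnm\Phi_2\rnm^2_{\LS^2_w(\mathcal{R})}\lesssim\int\lnm\Phi_2\rnm^2_{\LS_w^2(\Hb)}$. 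This is absorbed by Grönwall in $v$, or alternatively by Young's inequality, which leaves $\lnm G\rnm^2$.

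The main obstacle is the bookkeeping of the borderline coefficient. All errors, including the $\det g$ variation, the $\dv b$ term coming from $\Lie_b$ of the volume, and the commutator lower-order terms, must collapse into $o_{\epsilon,\epsilon_1}(1)$ times the $\Phi_1$ bulk norm. For this I will rely on the stated $L^\infty$ smallness assumptions. Finally, when the coefficient is at most $-\delta$, the $\Phi_1$ bulk term becomes favourable. I then use Young's inequality, $\lnm\Phi_1\rnm\lnm F\rnm\le\frac{\delta}{2}\lnm\Phi_1\rnm^2+C_\delta\lnm F\rnm^2$, and absorb, which gives the second inequality.
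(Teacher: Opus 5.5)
Your proposal is correct and follows the paper's proof essentially step for step: integrate by parts on the spheres, using the bound on $\nabla\big((-u)^2/\sqrt{\det g}\big)$ to cancel the $\mathcal{D}$/${}^*\mathcal{D}$ terms up to an $\epsilon|\Phi_1||\Phi_2|$ error; differentiate the weighted fluxes in $u$ and $v$; and read off the bulk coefficient from $(-u)\Omega\tr\chib\approx-2$ and $4(-u)\Omega\omegab\approx\k$. One slip: the paper's coefficient is $(2\Lambda-4)\Omega\omegab$, so your $\tfrac12\Lambda(-u)\Omega\omegab$ should read $2\Lambda(-u)\Omega\omegab$, which is what produces $\Lambda\k/2$. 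Your absorption of the $\Phi_2$ bulk terms coming from $G$ and from the cross term, via $\lnm\Phi_2\rnm^2_{\LS_w^2(\R_{u,v})}\lesssim\frac{v}{(-u)^{1-\k}}\sup_{v'\le v}\lnm\Phi_2\rnm^2_{\LS_w^2(\Hb_{v'}^u)}$, is a step the paper's proof leaves implicit (it stops at $\lnm\Phi_2\rnm_{\LS_w^2(\R_{u,v})}\lnm G\rnm_{\LS_w^2(\R_{u,v})}$), and it is exactly how the lemma is later applied in Region I.
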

\begin{proof}
    First notice that 
    \begin{equation}
        \begin{aligned}
            &\int_{\SS}(-u)^2 \langle {\Phi_1},\D {\Phi_2}\rangle_gd\V\\
            =&\int_{S_{u,v}} \langle {\Phi_1},\D {\Phi_2}\rangle_g\frac{(-u)^2}{\sqrt{\det g(u,v)}}d{\rm Vol}_g\\
            =&\int_{\SS} (-u)^2\langle {}^*\D{\Phi_1}, {\Phi_2}\rangle_gd\V+\int_{S_{u,v}} \langle{}^*\D\left(\frac{(-u)^2}{\sqrt{\det g(u,v)}}\right)\cdot{\Phi_1},{\Phi_2} \rangle d{\rm Vol}_g,
        \end{aligned}
    \end{equation}
    so we obtain 
    \begin{equation*}
        \left|\int_{\SS}(-u)^2 \langle {\Phi_1},\D {\Phi_2}\rangle_gd\V-\int_{\SS} (-u)^2\langle {}^*\D{\Phi_1}, {\Phi_2}\rangle_gd\V\right|\lesssim \epsilon \int_{\SS}(-u)^2 \left|{\Phi_1}\right|\cdot\left|{\Phi_2}\right|d\V
    \end{equation*}
    It therefore follows that
    \begin{align*}
        &\lnm {\Phi_1}\rnm_{\LS^2(H_{u}^v)}^2-\lnm {\Phi_1}\rnm_{\LS^2(H_{-1}^v)}^2+\lnm {\Phi_2}\rnm_{\LS^2(\Hb_{v}^u)}^2-\lnm {\Phi_2}\rnm_{\LS^2(\Hb_{0}^u)}^2\\
        \leq&\int_{-1}^u\int_0^v \int_{\SS}\nabla_A b^A (-u^\prime)^{1-2s(\Phi_1)}w\Omega^2\left|{\Phi_1}\right|^2\\
        &+\int_{-1}^u\int_0^v\int_{\SS} (-u^\prime)^{1-2s(\Phi_1)}w\Omega^2\left|{\Phi_1}\right|^2\left(-\frac{1-2s(\Phi_1)}{-u}+\partial_u\log w-2\lambda \Omega\tr\chib+(2\Lambda-4)\Omega\omegab\right)\\
        &+\int_{-1}^u\int_{-1}^v\int_{\SS} (-u^\prime)^{1-2s(\Phi_2)}w\Omega^2\left|{\Phi_2}\right|^2\left(-\frac{2\k-3\delta}{1-\k}\frac{1}{v\Omega^2}-2\mu\Omega^{-1}\tr\chi\right)  \\
        &+2\int_{-1}^u\int_0^v\int_{\SS} (-u^\prime)^{1-2s(\Phi_1)}w\Omega^2\langle{\Phi_1},F \rangle \\
        &+2\int_0^v\int_{-1}^u\int_{\SS} (-u^\prime)^{1-2s(\Phi_2)}w\Omega^2\langle {\Phi_2}, G \rangle +C_0\epsilon \lnm {\Phi_1}\rnm_{\LS^2(\R_{u,v})}\lnm {\Phi_2}\rnm_{\LS^2(\R_{u,v})} \\
        \leq & \left(C_0\epsilon-1+2s(\Phi_1)-\k+\partial_u\log w+4\lambda+\Lambda\k/2\right)\lnm {\Phi_1}\rnm^2_{\LS^2(\mathcal{R}_{u,v})}\\
        &+\lnm {\Phi_1}\rnm_{\LS^2(\mathcal{R}_{u,v})}\cdot\lnm F \rnm_{\LS^2(\mathcal{R}_{u,v})}+\lnm {\Phi_2}\rnm_{\LS^2(\mathcal{R}_{u,v})}\cdot\lnm G \rnm_{\LS^2(\mathcal{R}_{u,v})}.
    \end{align*}
\end{proof} 

We next state the elliptic estimates for difference.
\begin{lemma}{\rm (Elliptic Estimates for Hodge System)}
    Let $\phi$ be a totally symmetric $r+1$ covariant tensor on $S_{u,v}$ and correspondingly $\phi^{(c)}$ on $S^{(c)}_{u,v}$. We denote $\ol{\phi}=\phi-\phi^{(c)}$. Suppose that 
    \[\dv\phi=f,\quad \cl\phi=l,\quad \tr\phi=h.\]
    Assume that $C_1^{-1}\leq (-u)^2 K_{\S}\leq C_1$, $C_1^{-1}g^{\SS}_{AB}\leq (-u)^{-2}g_{AB}\leq C_1 g^{\SS}_{AB}$.
    Moreover, assume that $\phi^{(c)}$ is spherically symmetric.
    Then it holds that 
\begin{equation}\label{Ellip Est}
    \begin{split}
        \|\nabla^i&\ol\phi\|_{\LS_w^2(\S)}
        \leq   C\left(C_1,\|K\|_{\LS_1^2(\S)},\|\nabla K\|_{\LS_1^4(\S)}\right)\cdot \sum_{j=0}^{i-1} \|\nabla^j\left(\ol{f},\ol{l},\ol{h},\ol\phi,\ol{g}\right)\|_{\LS_w^2(\S)} .
    \end{split}
\end{equation}
\end{lemma}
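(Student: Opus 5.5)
The plan is to reduce the difference estimate to the standard elliptic theory for Hodge systems on the perturbed sphere $S_{u,v}$, and to treat the background tensor $\phi^{(c)}$ as a source. The quantity $\ol\phi$ lives on two different spheres, so I first pull $\phi^{(c)}$ back to $S_{u,v}$ through the common coordinates $(\theta^A)$. On $S_{u,v}$ I then compute the Hodge system satisfied by $\ol\phi$ with respect to the connection $\nabla$ of $g$:
\[
\dv\ol\phi=\ol f-(\dv-\dv^{c})\phi^{(c)},\qquad \cl\ol\phi=\ol l-(\cl-\cl^{c})\phi^{(c)},\qquad \tr\ol\phi=\ol h-(\tr-\tr^{c})\phi^{(c)}.
\]
The commutator terms are linear in $\phi^{(c)}$ and in $\ol g$ together with its first derivative, since $\Gamma-\Gamma^c\sim g^{-1}\nabla\ol g$ and the volume forms and inverse metrics differ by $O(\ol g)$. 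Because $\phi^{(c)}$ is spherically symmetric and smooth with the background scaling, $\|\nabla^k\phi^{(c)}\|_{\LS_1^\infty}$ is bounded. After the signature bookkeeping, each correction at order $j$ is therefore bounded in $\LS_w^2(\S)$ by $\sum_{k\le j+1}\|\nabla^k\ol g\|_{\LS_w^2}$. This gives the $\ol g$ term on the right-hand side, up to shifting one derivative, which I absorb by counting $\nabla\ol g$ at one order lower in the final sum.

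Next I apply the classical $L^2$ elliptic estimate for the traceless Hodge system $(\dv,\cl,\tr)$ on a 2-sphere, in the form used in Christodoulou–Klainerman and Luk. I use the integrated Bochner identity
\[
\int|\nabla\psi|^2+(r+1)\int K|\psi|^2=\int\big(|\dv\psi|^2+|\cl\psi|^2\big)+\text{trace terms}.
\]
This controls $\nabla\ol\phi$ by the data and $\|K\|_{L^\infty}\|\ol\phi\|_{L^2}$, which accounts for the $\ol\phi$ term on the right. I do not need the positivity of $K$ for this, only a bound for it, since the lower-order $\ol\phi$ term is kept. For higher orders I commute $\nabla^{i-1}$ through the system. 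The commutators $[\nabla,\dv]$ and $[\nabla,\cl]$ generate terms $K\nabla^{j}\ol\phi$ and $\nabla^{j}K\cdot\nabla^{i-2-j}\ol\phi$. I bound these by Hölder and Sobolev on $S_{u,v}$, using the Sobolev embedding lemma \eqref{Sobolev_W22} and the product estimate of Lemma \ref{Lemma_Sobolev_Ineq_product}, with $K$ in $\LS_1^2$ and $\nabla K$ in $\LS_1^4$. This yields the dependence of the constant $C$ on $\|K\|_{\LS_1^2}$ and $\|\nabla K\|_{\LS_1^4}$, while the metric equivalence constant $C_1$ governs all comparisons between $g$ and $(-u)^2g^{\SS}$.

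Finally I convert to the scale-invariant weighted norms. The weight $w(u,v)$ is constant on each sphere, so it factors out. Each factor $(-u)$ from $s(\nabla)=0$ and the normalization $(-u)^{2-2s}$ matches the scaling $r\sim(-u)$ exactly, so the inequality is scale-invariant and the constants depend only on the stated quantities.

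The main obstacle I expect is bounding the commutator/difference terms $(\dv-\dv^c)\phi^{(c)}$ at top order. These involve $\nabla^{i}\ol g$, one derivative more than I want. I will handle this by using that $\phi^{(c)}$ is spherically symmetric: its covariant derivatives are pure trace combinations of $g^c$, so $\dv^c$ and $\nabla^c$ act algebraically. In this way the loss appears only as $\nabla^{j}\ol g$ with $j\le i-1$ multiplied by bounded background factors. Failing that, I would accept the index shift and state the estimate with $\ol g$ counted to one higher order, which is harmless in later applications since $\ol g$ is controlled by transport.
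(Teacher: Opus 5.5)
Your strategy is the paper's. Its entire proof is that differencing gives $\dv\ol\phi=\ol f$, $\cl\ol\phi=\ol l$, $\tr\ol\phi=\ol h-\ol g\cdot\phi^{(c)}$, after which standard Hodge estimates are quoted. You are more careful than the paper, which silently drops $(\dv_{g^c}-\dv_g)\phi^{(c)}$ and $(\cl_{g^c}-\cl_g)\phi^{(c)}$. But your treatment of exactly these terms is where the argument breaks.

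Spherical symmetry forces $\phi^{(c)}=0$ in odd rank and, in rank two, $\phi^{(c)}=c\,g^c$ with $c$ constant on $S_{u,v}$. Since $\dv_g g=\cl_g g=0$, one gets exactly
\[
\dv\ol\phi=\ol f+c\,\dv\ol g,\qquad \cl\ol\phi=\ol l+c\,\cl\ol g,\qquad \tr\ol\phi=\ol h+c\,\tr\ol g .
\]
The derivative lands on $\ol g$ because $\nabla_g$, not $\nabla_{g^c}$, acts on $g^c$. So the observation that $\nabla_{g^c}$ acts algebraically on $\phi^{(c)}$ changes nothing. Commuting $\nabla^{i-1}$ through the system leaves $c\,\nabla^{i}\ol g$ at top order, and ``absorbing $\nabla\ol g$ at one order lower'' is not a valid step.

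This loss cannot be removed, because the inequality as written is false whenever $\phi^{(c)}\neq0$. Take $\phi=c\,g$, so that $\ol f=\ol l=\ol h=0$ and $\ol\phi=c\,\ol g$. Let $g=\Theta^*g^c$, where $\Theta$ is a diffeomorphism of the sphere that is $\varepsilon$-close to the identity in $C^1$ but oscillates at frequency $N$. Then $K$ is constant and $C_1$ is close to $1$, so the constant in the lemma is uniform in $N$. Meanwhile $\ol g=O(\varepsilon)$ and $\nabla\ol g\sim\varepsilon N$, so already for $i=1$ the left side exceeds the right side by a factor $\sim N$.

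The correct statement is your fallback: count $\ol g$ up to order $i$, equivalently $\nabla^{j}(g^{-1}\partial\ol g)$ for $j\le i-1$. This is also how metric terms actually enter the paper's sixth-order estimates, e.g.\ through $\nabla^{5}[g^{-1}\partial g]_0^v$. Alternatively, in every application in the paper the tensor is $\chih$, $\chibh$, $\eta$, $\etab$ or a gradient. There $\phi^{(c)}\equiv0$, and the lemma is just the standard Hodge estimate with no $\ol g$ term at all.

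There is a second, smaller gap. At order $i$ the commutators produce $\nabla^{i-2}K\cdot\ol\phi$. H\"older and Sobolev with only $\|K\|_{\LS_1^2}$ and $\|\nabla K\|_{\LS_1^4}$ therefore close the argument only for $i\le 3$. For the orders up to six used later, the constant must also depend on $\sum_{j\le i-2}\|\nabla^j K\|_{\LS_1^2(S_{u,v})}$, which the bootstrap bounds provide in the applications.
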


\begin{proof}
    Applying difference gives that 
    \[\dv \ol{\phi} = \ol{f},\ \cl\ol\phi = \ol{l},\ \tr\ol\phi = \ol{h}-\ol{g}\cdot\phi^c.\]
    Standard Hodge system estimates give the inequality.
\end{proof}

\begin{remark}
    If we denote $\lnm \psi\rnm^4_{\LS^4_w(S_{u,v})}=\int_{\SS}(-u)^{4-4s(\psi)}\left|\psi\right|^4 w^2d\V$, then one can derive Sobolev inequality
    \begin{equation}
        \lnm \psi\rnm_{\LS^\infty_w(S_{u,v})}\lesssim \lnm \psi\rnm_{\LS^2_w(S_{u,v})}+\lnm \nabla\psi\rnm_{\LS^4_w(S_{u,v})},
    \end{equation}
    and trace formulae
        \begin{equation}
        \begin{aligned}
            &\|{\psi}\|_{\LS_w^4(S_{u,v})}^4-\|{\psi}\|_{\LS_w^4(S_{u,v_0(u)})}^4\\
        &\qquad\lesssim \|\Omega\nabla_4{\psi}\|_{\LS_w^2(H_u^v)}\left(\|{\psi}\|_{\LS_w^2(H_u^v)}+\|\nabla{\psi}\|_{\LS_w^2(H_u^v)}\right)\cdot\sup_{v_0(u)\leq v^\prime\leq v}\|{\psi}\|^2_{\LS_w^4(S_{u,v^\prime})},\\
            &\|{\psi}\|_{\LS_w^4(S_{u,v})}^4-\|{\psi}\|_{\LS_w^4(S_{u_0(v),v})}^4-\int_{u_0(v)}^u\int_{\SS}(-u)^{4-4s(\psi)}|{\psi}|^4w^2 \partial_Ab^A \,d\V \, du^\prime\\
        &\qquad\lesssim \|\Omega\nabla_3{\psi}\|_{\LS_w^2(\Hb_v^u)}\left(\|{\psi}\|_{\LS_w^2(\Hb_v^u)}+\|\nabla{\psi}\|_{\LS_w^2(\Hb_v^u)}\right)\cdot\sup_{u_0(v)\leq u^\prime\leq u}\|{\psi}\|^2_{\LS_w^4(S_{u^\prime,v})}.
        \end{aligned}
    \end{equation}

    These inequalities are highly useful when addressing the CIVP under low-regularity conditions. Since we will not actively employ the $\LS_w^4(S)$-norm in the subsequent analysis, we conservatively omit the detailed proof here.
\end{remark} 

\subsection{Commutation Formulae}
Lastly, we rigorously establish the fundamental commutation formulae below.

\begin{lemma}\label{LEMMA_COMMUTATION_FORMULAE_NABLA}{\rm (Commutation Formulae)}
    The following relation holds 
    \begin{equation}\label{Exact_Comm_Formula}
        \begin{aligned}
        \left[\Omega \nabla_4,  \nabla_A\right] \phi_{B_1 \cdots B_k}
        =&\sum_{i=1}^k\left(-\nabla_{B_i}(\Omega\chi)_A^C+\nabla^C(\Omega\chi)_{AB_i}\right) \phi_{B_1 \cdots \hat{B}_i C \cdots B_k}-\Omega \chi_A{ }^C  \nabla_C \phi_{B_1 \cdots B_k}, \\
        \left[\Omega \nabla_3, \nabla_A\right] \phi_{B_1 \cdots B_k}
        = &\sum_{i=1}^k\left(-\nabla_{B_i}(\Omega\chib)_A^C+\nabla^C(\Omega\chib)_{AB_i}\right) \phi_{B_1 \cdots \hat{B}_i C \cdots B_k}-\Omega \underline{\chi}_A{ }^C \nabla_C\phi_{B_1 \cdots B_k} .
        \end{aligned}
        \end{equation}
        Let $\nabla_4\phi=G_0$. Write $G_i=\nabla_4\nabla^i\phi$, then the following relation holds  
        \begin{equation}\label{Comm_Formula_D4,nab}
            \Omega G_i=\nabla^i(\Omega G_0)-\frac{i}{2}\Omega\tr\chi\nabla^i\phi-i\Omega\chih\cdot \nabla^i\phi+\sum_{\substack{i_1+i_2=i\\ i_2\leq i-1}}\nabla^{i_1}\Omega\chi\nabla^{i_2}\phi.
        \end{equation}
        Similarly, suppose $\nabla_3\phi=F_0$ and $\nabla_3\nabla^i\phi=F_i$, then we find that 
        \begin{equation}\label{Comm_Formula_D3,nab}
            \Omega F_i=\nabla^i(\Omega F_0)-\frac{i}{2}\Omega\tr\chib\nabla^i\phi-i\Omega\chibh\cdot \nabla^i\phi+\sum_{\substack{i_1+i_2=i\\ i_2\leq i-1}}\nabla^{i_1}\Omega\chib\nabla^{i_2}\phi.
        \end{equation}
    We also have the commutation formulae to operators $\Omega^{-1}\nabla_3$ and $\Omega^{-1}\nabla_4$.
    \begin{equation}
        \begin{aligned}
            \Omega^{-1}F_i=&\sum_{i_1+i_2+i_3=i}\nabla^{i_1}(\eta+\etab)^{i_2}\nabla^{i_3}(\Omega^{-1}F_1)-\frac{i}{2}\Omega^{-1}\tr\chib\nabla^i\phi -i\Omega^{-1}\chibh\nabla^i\phi\\
            &+\sum_{\substack{i_1+i_2+i_3+i_4=i\\i_4\leq i-1}}\nabla^{i_1}(\eta+\etab)^{i_2}\nabla^{{i_3}}(\Omega^{-1}\chib)\nabla^{i_4}\phi,
        \end{aligned}
    \end{equation}
    \begin{equation}
        \begin{aligned}
            \Omega^{-1}G_i=&\sum_{i_1+i_2+i_3=i}\nabla^{i_1}(\eta+\etab)^{i_2}\nabla^{i_3}(\Omega^{-1}G_1)-\frac{i}{2}\Omega^{-1}\tr\chi\nabla^i\phi -i\Omega^{-1}\chih\nabla^i\phi\\
            &+\sum_{\substack{i_1+i_2+i_3+i_4=i\\i_4\leq i-1}}\nabla^{i_1}(\eta+\etab)^{i_2}\nabla^{{i_3}}(\Omega^{-1}\chi)\nabla^{i_4}\phi.
        \end{aligned}
    \end{equation}
\end{lemma}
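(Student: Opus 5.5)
The plan is to prove all formulae by direct computation in the frame $e_A=\partial_{\theta^A}$, using the structure of the double null gauge. This gives $[\Omega e_4,e_A]=0$ and $[\Omega e_3,e_A]=\Lie_b$-type terms, which are absorbed since $\nabla_3$ is the projected Lie derivative plus $\chib$ terms.

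\textbf{Step 1: the exact commutators \eqref{Exact_Comm_Formula}.} First I write $\Omega\nabla_4\phi_{B_1\cdots B_k}=\Lie_{\Omega e_4}\phi_{B_1\cdots B_k}-\sum_i\Omega\chi_{B_i}{}^C\phi_{\cdots C\cdots}$. Then I use that $\Lie_{\Omega e_4}\sg=2\Omega\chi$, so the variation of the Christoffel symbols of $\sg$ along $\Omega e_4$ is
\[
\Lie_{\Omega e_4}\Gamma^C_{AB}=\nabla_A(\Omega\chi)_B{}^C+\nabla_B(\Omega\chi)_A{}^C-\nabla^C(\Omega\chi)_{AB}.
\]
Since $\Lie_{\Omega e_4}$ commutes with $\partial_A$, computing $[\Lie_{\Omega e_4},\nabla_A]\phi$ via this Christoffel variation and then reinserting the $\chi$-contractions gives the stated formula. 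The symmetric terms cancel, leaving $-\nabla_{B_i}(\Omega\chi)_A{}^C+\nabla^C(\Omega\chi)_{AB_i}$ together with the extra $-\Omega\chi_A{}^C\nabla_C$ term coming from the index $A$. The $e_3$ case is identical with $\Omega e_3$ replaced by $\partial_u+b^A\partial_A$. Here $\Lie_{\partial_u+b}$ still commutes with coordinate derivatives up to $\Lie_b$, which is itself a natural (covariant) operation, and $\Lie_{\Omega e_3}\sg=2\Omega\chib$.

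\textbf{Step 2: the iterated formulae \eqref{Comm_Formula_D4,nab}--\eqref{Comm_Formula_D3,nab}.} I will proceed by induction on $i$. I write $\Omega G_i=\Omega\nabla_4\nabla\nabla^{i-1}\phi=\nabla(\Omega G_{i-1})+[\Omega\nabla_4,\nabla]\nabla^{i-1}\phi$. By Step 1, the commutator contributes $-\Omega\chi\cdot\nabla^i\phi$ plus terms of the form $\nabla(\Omega\chi)\nabla^{i-1}\phi$. Decomposing $\chi=\chih+\frac12\tr\chi\,\sg$ splits off the top-order contribution $-\frac12\Omega\tr\chi\nabla^i\phi-\Omega\chih\cdot\nabla^i\phi$ from the term $-\Omega\chi_A{}^C\nabla_C$. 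The contractions on the $B$-slots involve only derivatives of $\chi$, so they land in the lower-order sum. Summing these contributions over the $i$ steps yields the coefficients $i/2$ and $i$. All remaining terms are schematically $\nabla^{i_1}(\Omega\chi)\nabla^{i_2}\phi$ with $i_2\le i-1$. The argument for $\nabla_3$ is the same.

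\textbf{Step 3: the versions with $\Omega^{-1}\nabla_{3}$ and $\Omega^{-1}\nabla_4$.} I write $\Omega^{-1}\nabla_4=\Omega^{-2}(\Omega\nabla_4)$ and apply Step 2. Then I redistribute derivatives of $\Omega^{-2}$ using $\nabla\log\Omega^2=\eta+\etab$: each $\nabla$ falling on $\Omega^{\mp2}$ produces a factor of $(\eta+\etab)$, and iterating produces the polynomial structure $\nabla^{i_1}(\eta+\etab)^{i_2}$. The term $\nabla^i(\Omega^{-1}G_0)$ is rewritten starting from $G_1$ as stated, since the first commutator is already included there. The top-order terms keep their coefficients once $\Omega^{-1}\chi$ replaces $\Omega\chi$.

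I expect the main obstacle to be the bookkeeping in Step 1 for the $e_3$ direction, where the shift $b$ enters. I will handle it by noting that $\Omega\nabla_3$ equals $\Lie_{\partial_u+b}$ projected, minus $\Omega\chib$-contractions. Because $\Lie_b$ is covariant, it commutes with $\nabla$ up to Christoffel variations already encoded in $\Lie_{\Omega e_3}\sg=2\Omega\chib$, so no explicit $b$ terms survive.
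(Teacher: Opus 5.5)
Your proposal is correct. In Step 1 it takes a different route from the paper.

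The paper computes $[\Omega\nabla_4,\nabla_A]\phi_B$ directly in the frame and collects the terms without derivatives of $\phi$ into $\phi^C\Omega R_{4ABC}$ via the Ricci identity, using that $[\Omega e_4,e_A]$ is tangent to $S_{u,v}$. It then removes the spacetime curvature with the Codazzi equation for the full second fundamental form, after which the $\etab$-terms cancel. The Lie-derivative commutator \eqref{Commutation_Formulae_Lie4} is derived afterwards, in Corollary \ref{CORROLLARY_COMMUTATION_FORMULA_LIE}, from the covariant one.

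You go in the opposite direction. Since $\Omega e_4=\partial_v$ commutes with $\partial_A$ and $\Lie_{\Omega e_4}\sg=2\Omega\chi$, the first-variation formula for the Christoffel symbols gives $[\Lie_{\Omega e_4},\nabla_A]$ at once. Converting back with $\Omega\nabla_4=\Lie_{\Omega e_4}-\Omega\chi\cdot$ cancels the $\nabla_A(\Omega\chi)_{B_i}{}^C$ terms and leaves the $-\Omega\chi_A{}^C\nabla_C$ term.

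What your route buys:
\begin{itemize}
\item It never meets curvature.
\item It treats general $k$ in one stroke.
\item It handles the shift cleanly: $\partial_u\Gamma+\Lie_b\Gamma$ is exactly the Christoffel variation generated by $\partial_u\sg+\Lie_b\sg=\Lie_{\Omega e_3}\sg=2\Omega\chib$, so no explicit $b$-terms survive, as you claim.
\end{itemize}
What the paper's route buys: it shows explicitly where $R_{4ABC}$ and the torsion enter and how they cancel.

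You also supply two things the paper leaves implicit, since its proof only writes out the case $k=1$ in the $e_4$ direction. These are the induction for \eqref{Comm_Formula_D4,nab}--\eqref{Comm_Formula_D3,nab} and the redistribution of $\Omega^{\mp2}$ via $\nabla\log\Omega^2=\eta+\etab$ for the $\Omega^{-1}$-versions.

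One small correction in Step 3. With the derivative count $i_1+i_2+i_3=i$ and top-order coefficient $i/2$, the $\Omega^{-1}$-formulas are consistent only with $\Omega^{-1}G_0$, $\Omega^{-1}F_0$ in place of $G_1$, $F_1$; this looks like a typo in the statement. Your derivation produces exactly the $G_0$ and $F_0$ version. Drop the sentence about rewriting from $G_1$: starting from $G_1$ would force the sum to run over $i_1+i_2+i_3=i-1$ and would lower the explicit top-order coefficient to $(i-1)/2$.
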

\begin{proof}
    As an illustrative example, we prove the formula for $\left[\Omega\nabla_4,\nabla_A\right]\phi_B$. The derivations for general $(0,k)$-tensors and for the conjugate commutator $\left[\Omega\nabla_3,\nabla_A\right]$ proceed in a strictly analogous manner. We compute directly:
    \begin{equation}
    \begin{aligned}
        &\left[\Omega \nabla_4,\nabla_A\right]\phi_B=\Omega\nabla_4\nabla_A\phi_B-\nabla_A(\Omega \nabla_4\phi)_B \\
        =&\Omega e_4\left(\nabla_A\phi_B\right)-\nabla^C\phi_B \langle D_{\Omega e_4} e_A,e_C\rangle-\nabla_A\phi^C\langle e_C,D_{\Omega e_4}e_B\rangle\\
        &-e_A\left(\Omega \nabla_4\phi_B\right)+\langle \Omega\nabla_4\phi,\nabla_A e_B\rangle\\
        =& \Omega e_4(e_A(\phi_B))-\Omega e_4\left(\phi_C\Gamma_{AB}^C\right)-e_A\left(\Omega e_4(\phi_B)\right)+e_A\langle\phi, D_{\Omega e_4}e_B\rangle\\
        &-\nabla^C\phi_B\left(\Omega\chi_{AC}+\langle\left[\Omega e_4,e_A\right],e_C\rangle\right)-\nabla_A\phi^C\langle e_C, D_{\Omega e_4}e_B\rangle\\
        &+\langle \Omega D_4\phi,\nabla_A e_B\rangle.
    \end{aligned}
\end{equation}
For the first line, we obtain
\begin{equation}
    \begin{aligned}
        &\Omega e_4(e_A(\phi_B))-\Omega e_4\left(\phi_C\Gamma_{AB}^C\right)-e_A\left(\Omega e_4(\phi_B)\right)+e_A\langle\phi, D_{\Omega e_4}e_B\rangle\\
        =&\underline{\left[\Omega e_4,e_A\right]\phi_B}_{(1)} -\underline{\Omega e_4\left(\phi_C\Gamma^C_{AB}\right)}_{(2)}+\underline{\langle D_A\phi,D_{\Omega e_4}e_B\rangle}_{(3)} +\underline{\langle \phi,D_A D_{\Omega e_4}e_B\rangle}_{(4)},
    \end{aligned}
\end{equation}
and the second line can be written as
\begin{equation}
    \begin{aligned}
        &-\nabla^C\phi_B\Omega\chi_{AC}-\langle\left[\Omega e_4,e_A\right],\nabla\phi_B\rangle-\nabla_A\phi^C\langle e_C, D_{\Omega e_4}e_B\rangle\\
        =&-\underline{\nabla^C\phi_B\Omega\chi_{AC}}_{(5)}-\underline{\nabla_{\left(\left[\Omega e_4,e_A\right]\right)^{T}}\phi_B}_{(6)} -\underline{\langle\nabla_A\phi,D_{\Omega e_4}e_B\rangle}_{(7)}.
    \end{aligned}
\end{equation}
Here, for an arbitrary vector field $X$, $X^T$ denotes its $S_{u,v}$-tangent projection and $X^V$ its vertical component.
Since the Gauss formula yields $D_A e_B-\nabla_A e_B=\frac{1}{2}\chi_{AB}e_3+\frac{1}{2}\chib_{AB}e_4$,
the third line evaluates to
\begin{equation}
    \begin{aligned}
        &\langle \Omega D_4\phi,\nabla_A e_B\rangle\\
        =&\Omega e_4\langle \phi,D_A e_B\rangle-\langle\phi,D_{\Omega e_4}D_A e_B\rangle-\frac{1}{2}\Omega\chi_{AB}\langle D_4\phi,e_3\rangle-\frac{1}{2}\Omega\chib_{AB}\langle D_4\phi,e_4\rangle\\
        =&\underline{\Omega e_4\left(\phi_C\Gamma^C_{AB}\right)}_{(8)}-\underline{\langle \phi, D_{\Omega e_4}D_{A}e_B\rangle}_{(9)}+\underline{\Omega\chi_{AB}\phi^C\etab_C}_{(10)}.
    \end{aligned}
\end{equation}
For those terms that do not contribute to the explicit derivatives of $\phi$, we successfully recover the necessary curvature components via the standard Ricci identity:
\begin{equation}
    \begin{aligned}
        &(1)+(4)+(6)+(9)\\
        =&\left[\Omega e_4,e_A\right]\phi_B-\nabla_{\left(\left[\Omega e_4,e_A\right]\right)^T}\phi_B+\langle\phi,D_{e_A}D_{\Omega e_4} e_B\rangle-\langle\phi,D_{\Omega e_4}D_A e_B\rangle\\
        =& \langle \phi,D_{\left(\left[\Omega e_4,e_A\right]\right)^T}e_B\rangle+\langle\phi,D_{e_A}D_{\Omega e_4} e_B\rangle-\langle\phi,D_{\Omega e_4}D_A e_B\rangle\\
        =&\phi^C \Omega R_{4ABC},
    \end{aligned}
\end{equation}
where we have utilized the fact that $$\left(\left[\Omega e_4,e_A\right]\right)^V=\left(\Omega D_4 e_A\right)^V-\left(D_A(\log\Omega) \Omega e_4+\Omega D_A e_4\right)=0,$$ which confirms that the commutator $\left[\Omega e_4,e_A\right]$ is purely tangent to $S_{u,v}$.
We can match the remaining geometric terms and obtain 
\begin{equation}
    \begin{aligned}
        (2)+(8)+(3)+(7)=&\langle D_A\phi,\left(D_{\Omega e_4} e_B\right)^V\rangle=\langle D_A\phi, \etab_B\Omega e_4\rangle\\
        =&\etab_B \phi^C\langle D_A e_C,\Omega e_4\rangle=-\Omega\chi_{AC}\etab_B \phi^C.
    \end{aligned}
\end{equation}
Gathering these results, we summarize:
\begin{equation}
    \left[\Omega \nabla_4,\nabla_A\right]\phi_B+\Omega\chi_{AC}\nabla^C\phi_B=\phi^C \left(\Omega R_{4ABC}+\Omega\chi_{AB}\etab_C-\Omega\chi_{AC}\etab_B\right).
\end{equation}
For the full second fundamental form $h_{AB}=\frac{1}{2}\Omega\chi_{AB}\Omega^{-1}e_3+\frac{1}{2}\Omega\chib_{AB}\Omega^{-1}e_4$, the associated Codazzi equation reads
\begin{equation}
    \langle\nabla_C h_{AB}-\nabla_B h_{AC},\Omega e_4 \rangle=\langle R(e_C,e_B)e_A,\Omega e_4\rangle,
\end{equation}
The left-hand side reduces to $-\nabla_C (\Omega\chi)_{AB}+\Omega\chi_{AB}\etab_C+\nabla_B(\Omega\chi)_{AC}-\Omega\chi_{AC}\etab_B$ while the corresponding right-hand side is simply $-\Omega R_{4ABC}$, completing the proof.

\end{proof}

We also derive the commutators with Lie-derivatives.
\begin{corollary}\label{CORROLLARY_COMMUTATION_FORMULA_LIE}
    The commutators $\left[\Lie_{\Omega e_3},\nabla_A\right]$ and $\left[\Lie_{\Omega e_4},\nabla_A\right]$ have the expressions below:
    \begin{equation}\label{Commutation_Formulae_Lie4}
        \left[\Lie_{\Omega e_4},\nabla_A\right]\phi_{B_1\cdots B_k}=\sum_{i=1}^k\left(-\nabla_{B_i}(\Omega\chi)_A^C+\nabla^C(\Omega\chi)_{AB_i}-\nabla_A(\Omega\chi)_{B_i}^C\right) \phi_{B_1 \cdots \hat{B}_i C \cdots B_k},
    \end{equation}
    \begin{equation}
        \left[\Lie_{\Omega e_3},\nabla_A\right]\phi_{B_1\cdots B_k}=\sum_{i=1}^k\left(-\nabla_{B_i}(\Omega\chib)_A^C+\nabla^C(\Omega\chib)_{AB_i}-\nabla_A(\Omega\chib)_{B_i}^C\right) \phi_{B_1 \cdots \hat{B}_i C \cdots B_k}.
    \end{equation}
    More schematically, we can express the higher-order commutator as $$\left[\Lie_{\Omega e_4},\nabla^j\right]\phi\sim\sum_{i_1+i_2=j}\nabla^{i_1}(\Omega\chi)\nabla^{i_2}\phi.$$
\end{corollary}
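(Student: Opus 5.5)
The plan is to derive the Lie-derivative commutators from the covariant ones in Lemma \ref{LEMMA_COMMUTATION_FORMULAE_NABLA}, by expressing $\Lie_{\Omega e_4}$ on $S_{u,v}$-tangent tensors through $\Omega\nabla_4$ and $\Omega\chi$. Since $[\Omega e_4,\partial_{\theta^A}]=0$ in our gauge, and the tangential projection of $D_{e_C}(\Omega e_4)$ is $\Omega\chi_C{}^D e_D$, we have for any $(0,k)$-tensor $\psi$ on the spheres
\[
(\Lie_{\Omega e_4}\psi)_{B_1\cdots B_k}=(\Omega\nabla_4\psi)_{B_1\cdots B_k}+\sum_{i=1}^k(\Omega\chi)_{B_i}{}^{C}\psi_{B_1\cdots\hat B_i C\cdots B_k}.
\]
The analogous identity holds for $\Lie_{\Omega e_3}$ with $\Omega\chib$. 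The shift $b$ causes no trouble: we either work with $\Omega e_3$ and project onto $S_{u,v}$, or note that the projected Lie derivative differs from $\Omega\nabla_3$ only by the same $\chib$-contraction.

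Next I apply this identity twice. First to $\nabla\phi$, which is a $(0,k+1)$-tensor with free indices $A,B_1,\dots,B_k$. Then to $\phi$, after which I take $\nabla_A$. Subtracting the two gives
\[
[\Lie_{\Omega e_4},\nabla_A]\phi=[\Omega\nabla_4,\nabla_A]\phi+(\Omega\chi)_A{}^C\nabla_C\phi+\sum_i(\Omega\chi)_{B_i}{}^C\nabla_A\phi_{\cdots C\cdots}-\sum_i\nabla_A\big((\Omega\chi)_{B_i}{}^C\phi_{\cdots C\cdots}\big).
\]
I then substitute \eqref{Exact_Comm_Formula}. Its term $-\Omega\chi_A{}^C\nabla_C\phi$ cancels the extra contraction in the $A$ slot. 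By Leibniz, the $\chi$-contractions on the $B_i$ slots cancel except for the terms $-\nabla_A(\Omega\chi)_{B_i}{}^C\phi_{\cdots C\cdots}$. Adding these to the two terms already present from \eqref{Exact_Comm_Formula} yields exactly \eqref{Commutation_Formulae_Lie4}. The $e_3$ version is identical with $\chi\to\chib$.

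For the schematic higher-order statement, I induct on $j$ using $[\Lie,\nabla^{j}]=\sum_{l}\nabla^{l}[\Lie,\nabla]\nabla^{j-1-l}$. Each single commutator is a product of $\nabla(\Omega\chi)$ with an underived tensor. Distributing the outer derivatives by Leibniz gives terms of the form $\nabla^{i_1}(\Omega\chi)\nabla^{i_2}\phi$ with $i_1+i_2=j$.

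The only delicate point, which I expect to be the main obstacle, is justifying the first identity. Specifically, one must check that $\Lie_{\Omega e_4}$ preserves tangency, i.e.\ that $[\Omega e_4,e_A]$ has no vertical part, as was verified in the proof of Lemma \ref{LEMMA_COMMUTATION_FORMULAE_NABLA}. One must also confirm that the frame term is $\Omega\chi$, and not $\chi$, with the normalization $\Omega e_4=\partial_v$. The sign conventions must match those in \eqref{Exact_Comm_Formula} so that the cancellation is exact.
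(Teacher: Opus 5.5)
Your proposal is correct and follows the paper's proof: the paper also substitutes $\Omega\nabla_4\phi=\Lie_{\Omega e_4}\phi-\sum_i\Omega\chi_{B_i}{}^C\phi_{\cdots C\cdots}$ into $[\Omega\nabla_4,\nabla_A]\phi$, uses Leibniz so that only the $-\nabla_A(\Omega\chi)_{B_i}{}^C$ terms survive, and cancels the $-\Omega\chi_A{}^C\nabla_C\phi$ term against \eqref{Exact_Comm_Formula}. Your Leibniz induction for the schematic higher-order commutator is a harmless addition, since the paper states that part without proof.
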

\begin{proof}
    Applying $\Omega\nabla_4\phi_{B_1\cdots B_k}=\Lie_{\Omega e_4}\phi_{B_1\cdots B_k}-\sum_{i=1}^k \Omega\chi_{B_i}^C\phi_{\cdots C\cdots}$ to $\left[\Omega \nabla_4,\nabla_A\right]$, we find that 
\begin{equation}
    \begin{aligned}
        &\left[\Omega \nabla_4,\nabla_A\right]\phi_{B_1\cdots B_k}\\
        =& \Lie_{\Omega e_4}\nabla_A\phi_{B_1\cdots B_k}-\Omega\chi_{A}^C\nabla_C\phi_{B_1\cdots B_k} -\sum_{i=1}^k\Omega\chi_{B_i}^C\nabla_A\phi_{\cdots B_{i-1}CB_{i+1}\cdots}\\
        &-\nabla_A\left(\Lie_{\Omega e_4}\phi\right)_{B_1\cdots B_k}+\sum_{i=1}^k\nabla_A\left(\Omega\chi_{B_i}^C\phi_{\cdots B_{i-1}CB_{i+1}\cdots}\right)\\
        =&\left[\Lie_{\Omega e_4},\nabla_A\right]\phi_{B_1\cdots B_k}-\Omega\chi_A^C\nabla_C\phi_{B_1\cdots B_k}+\sum_{i=1}^k\nabla_A(\Omega\chi)_{B_i}^C\cdot \phi_{\cdots B_{i-1}CB_{i+1}\cdots}.
    \end{aligned}
\end{equation}
\end{proof}
Finally, we explicitly compute the associated commutator for the angular derivatives operating on $S_{u,v}$.
\begin{lemma}
    Let $\phi$ be a tensor on $S_{u,v}$, then for $\mathcal{D}\in\{\dv,\cl,{}^*\nabla,\nabla,\nabla\hat\otimes\}$, this yields 
    \begin{equation}
        \left[\mathcal{D},\nabla^i\right]\phi=\sum_{i_1+i_2+i_3 = i-1}\nabla^{i_1}K^{i_2+1}\nabla^{i_3}\phi.
    \end{equation}
\end{lemma}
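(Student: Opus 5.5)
We argue by induction on $i$, starting from the basic Ricci identity on the two-dimensional surface $S_{u,v}$. Since $S_{u,v}$ is two-dimensional, its Riemann tensor is determined by the Gaussian curvature, $R_{ABCD}=K(\sg_{AC}\sg_{BD}-\sg_{AD}\sg_{BC})$. For any covariant tensor $\psi_{B_1\cdots B_k}$ this gives
\[
[\nabla_A,\nabla_C]\psi_{B_1\cdots B_k}=\sum_{j=1}^k K\left(\sg_{AB_j}\psi_{B_1\cdots C\cdots B_k}-\sg_{CB_j}\psi_{B_1\cdots A\cdots B_k}\right),
\]
where the displaced index sits in the $j$-th slot. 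Schematically, $[\nabla,\nabla]\psi = K\cdot\psi$, with contractions against $\sg$ and $\seps$ that we suppress.

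Each operator $\mathcal{D}\in\{\dv,\cl,{}^*\nabla,\nabla,\nabla\hat\otimes\}$ is a linear combination of a single $\nabla$ followed by contractions with $\sg$, $\seps$ and symmetrization. All of these are parallel, so they commute with $\nabla$. It therefore suffices to compute $[\nabla,\nabla^i]\phi$ and then apply the algebraic operation defining $\mathcal{D}$, which preserves the schematic form.

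\textbf{Base case $i=1$.} Here $[\nabla,\nabla]\phi=K\phi$, which is the claimed form with $i_1=i_3=0$, $i_2=0$, i.e. $K^{1}\phi$.

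\textbf{Inductive step.} Write
\[
[\nabla,\nabla^{i}]\phi=[\nabla,\nabla]\nabla^{i-1}\phi+\nabla[\nabla,\nabla^{i-1}]\phi .
\]
The first term equals $K\nabla^{i-1}\phi$, which is the term with $i_1=i_2=0$, $i_3=i-1$. For the second term we use the induction hypothesis, so it becomes $\nabla\big(\sum_{i_1+i_2+i_3=i-2}\nabla^{i_1}K^{i_2+1}\nabla^{i_3}\phi\big)$. By the Leibniz rule, the extra derivative lands in one of three places:
\begin{itemize}
\item on $\nabla^{i_3}\phi$, which raises $i_3$ by one;
\item on the block $\nabla^{i_1}K^{i_2+1}$, which raises $i_1$ by one;
\item on one of the $K$ factors, which produces a term of the form $\nabla^{i_1}K^{i_2}\cdot\nabla K$.
\end{itemize}
In each case the total number of derivatives increases by exactly one, so the indices sum to $i-1$.

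Here we interpret $\nabla^{i_1}K^{i_2+1}$, as usual in the paper's schematic notation, as a sum of products of $i_2+1$ factors of $K$ carrying $i_1$ derivatives in total. The factor-counting convention in the schematic notation is loose; for instance, $K$ times $K\nabla^{i-2}\phi$ lies in the same class as $\nabla^{0}K^{2}\nabla^{i-2}\phi$. Under this interpretation, all three kinds of terms belong to the claimed class. This closes the induction.

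\textbf{Main obstacle.} The only delicate point is the bookkeeping. One must make sure that the derivative falling on $K$ factors and the commutator $[\nabla,\nabla]$ acting on the many-index tensor $\nabla^{i-1}\phi$ are both captured by the schematic expression $\nabla^{i_1}K^{i_2+1}\nabla^{i_3}\phi$ with the correct total order $i-1$. This is the step to check carefully; the rest is routine.
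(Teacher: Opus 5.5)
Your proof is correct, and it is the direct computation the paper has in mind; the paper states this lemma without proof. The ingredients are the telescoping identity $[\nabla,\nabla^i]\phi=[\nabla,\nabla]\nabla^{i-1}\phi+\nabla[\nabla,\nabla^{i-1}]\phi$, the two-dimensional Ricci identity $R_{ABCD}=K(\sg_{AC}\sg_{BD}-\sg_{AD}\sg_{BC})$, and reading $\mathcal{D}$ as a parallel algebraic operation acting on the $\phi$-slots of $[\nabla,\nabla^i]\phi$.

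One correction to your bookkeeping: your induction only ever produces terms linear in $K$ (the $i_2=0$ terms), so the remark about $K\cdot K\nabla^{i-2}\phi$ is unnecessary. The higher powers of $K$ in the stated formula are a harmless over-allowance; such terms are not scale-consistent with the left-hand side.
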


\section{Analysis in Region I}\label{Estimates in Region I}
In Region I, we conduct the coordinate change $(\hat{u},\hat{v},\theta)=(u,\frac{1}{1-\k}v^{1-\k},\theta)$. By direct computation, we obtain 
\begin{lemma}
    Suppose that $\hat{\psi}$ and $\psi$ denote geometric quantities in the new and old coordinate systems, respectively. Then, for any spacetime, we obtain
    \[\hat\Omega D_{\hat{e}_3}=\Omega D_{e_3},\quad \hat\Omega^{-1} D_{\hat{e}_4}=\Omega^{-1}D_{e_4},\]
    \[\hat{\Omega}^2=v^\k\Omega^2,\quad \hat\Omega\hat\omegab=\Omega\omegab,\quad \hat{\Omega}^{-1}\hat\omega=\frac{v\Omega\omega-\k/4}{v\Omega^2},\]
    \[\hat{g}=g,\quad \hat\Omega\hat\chib=\Omega\chib,\quad \hat{\Omega}^{-1}\hat\chi=\Omega^{-1}\chi,\quad \hat\eta=\eta,\quad \hat\etab=\etab.\]
\end{lemma}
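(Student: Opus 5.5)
The plan is to verify every identity directly from the definitions in Definition \ref{def:double_null}. The key observation is that the change $(u,v)\mapsto(\hat u,\hat v)=(u,\frac{1}{1-\k}v^{1-\k})$ preserves the level sets of both optical functions. Its only effect is therefore to rescale the null pair $(e_3,e_4)$ by a factor that is constant on each sphere $S_{u,v}$.

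First we compare the metrics. Since $d\hat v=v^{-\k}dv$, we have $-2\Omega^2(du\otimes dv+dv\otimes du)=-2\Omega^2v^{\k}(d\hat u\otimes d\hat v+d\hat v\otimes d\hat u)$. The angular coordinates $\theta^A$ are unchanged, and the shift term $b^A du$ is unchanged because $\hat u=u$. Hence $\hat g=g$ and $\hat\Omega^2=v^\k\Omega^2$. The gauge condition $[\hat e_4,\partial_{\theta^A}]=0$ persists, because $\hat e_4$ will turn out to be a multiple of $e_4$ by a function of $v$ alone.

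Next we compute the frame. We have $L'=-2\nabla u^\sharp$ unchanged and $\hat{\underline L}'=-2\nabla\hat v^\sharp=v^{-\k}\underline L'$. Set $\lambda:=v^{\k/2}$. Then $\hat e_4=\hat\Omega L'=\lambda e_4$ and $\hat e_3=\hat\Omega\hat{\underline L}'=\lambda^{-1}e_3$. It follows immediately that $\hat\Omega\hat e_3=\Omega e_3$ and $\hat\Omega^{-1}\hat e_4=\Omega^{-1}e_4$, and the corresponding covariant derivatives agree.

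We then treat the Ricci coefficients under the rescaling $(e_3,e_4)\mapsto(\lambda^{-1}e_3,\lambda e_4)$, using that $e_A\lambda=0$ and $e_3\lambda=0$. From the definitions, $\hat\chi=\lambda\chi$ and $\hat\chib=\lambda^{-1}\chib$, so $\hat\Omega^{-1}\hat\chi=\Omega^{-1}\chi$ and $\hat\Omega\hat\chib=\Omega\chib$. For $\eta$ and $\etab$, the two scalings cancel between $e_3$ and $e_4$. The only possible extra term involves $e_A(\log\lambda)$, which vanishes, so $\hat\eta=\eta$ and $\hat\etab=\etab$.

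For $\hat\omega$, expand $D_{\lambda e_4}(\lambda^{-1}e_3)=D_4e_3+\lambda e_4(\lambda^{-1})e_3$ and use $g(e_3,e_4)=-2$. This gives $\hat\omega=\lambda\omega-\tfrac12 e_4\lambda$. Substituting $e_4\lambda=\Omega^{-1}\tfrac{\k}{2}v^{\k/2-1}$ yields
\[\hat\omega=v^{\k/2}\Big(\omega-\frac{\k}{4v\Omega}\Big),\qquad \hat\Omega^{-1}\hat\omega=\frac{v\Omega\omega-\k/4}{v\Omega^2}.\]
The analogous computation for $\omegab$ produces the term $e_3(\lambda)=0$, so $\hat\omegab=\lambda^{-1}\omegab$ and hence $\hat\Omega\hat\omegab=\Omega\omegab$.

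No step is a real obstacle. The only point needing care is the inhomogeneous term in $\hat\omega$: it is the sole place where the $v$-dependence of $\lambda$ enters, and its sign and factor must be tracked correctly.
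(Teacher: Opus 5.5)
Your proposal is correct and is the direct computation the paper leaves implicit; the paper simply states the lemma ``by direct computation''. Because $\hat u=u$ and $\hat v$ depends only on $v$, the null pair rescales as $(\hat e_3,\hat e_4)=(\lambda^{-1}e_3,\lambda e_4)$ with $\lambda=v^{\kappa/2}$, and you handle the only inhomogeneous term, $-\tfrac12 e_4\lambda$ in $\hat\omega$, correctly. One minor point: with $\eta_A=-\tfrac12 g(D_3e_A,e_4)$ no derivative of $\lambda$ appears at all (an $e_A(\log\lambda)$ term would arise for $\zeta$, not $\eta$), and this does not affect your conclusion $\hat\eta=\eta$.
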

For convenience, we omit $\hat{\cdot}$ for the new gauge and use $\psi^{old}$ to refer to the old gauge when it appears. 
This gauge change absorbs the leading singular dependence of the outgoing parameter into the lapse. In the estimates below this is why the same geometric quantity can be measured in scale-invariant norms without repeatedly translating between the old $v$-power and the new $v$-coordinate.

\subsection{Approximations and formulae of difference}
We adopt the notation $$\phi_0(u,v):=\lim_{v^\prime\rightarrow 0}\phi(u,v^\prime),$$
and $\ol{\psi}=\psi-\psi^c$, where $(\cdot)^c$ represents the corresponding quantity in Christodoulou's naked singularity solution.
Recall the definition of $\ot{\Omega^{-1}\chih}$ and $\ot{\Omega^{-1}e_4\phi}$, 
\begin{equation}
    \begin{aligned}
        &\left({(\Omega\nabla_3)_0}+\frac{1}{2}({\Omega\tr\chib})_0-4({\Omega\omegab})_0\right)\ot{\Omega^{-1}\chih}= \left(\nabla\hat\otimes\eta+\eta\hat\otimes\eta+\frac{1}{2}\nabla\phi\hat\otimes\nabla\phi-\frac{1}{2}\Omega^{-1}\tr\chi \Omega\chibh\right)_0,\\
        &\left((\Omega\nabla_3)_0+\frac{1}{2}({\Omega\tr\chib})_0+\frac{1}{2}\left[({\Omega\tr\chib})^c\right]_0^v-4({\Omega\omegab})_0-4\left[(\Omega\omegab)^c\right]_0^v\right)\ot{\Omega^{-1}e_4\phi}\\
        &\qquad\qquad\qquad\qquad= \left(\dv\nabla\phi-\frac{1}{2}\Omega^{-1}\tr\chi\Omega e_3\phi+2\eta\cdot\nabla\phi\right)_0 -\left[\left(\frac{1}{2}\Omega^{-1}\tr\chi\Omega e_3\phi\right)^c\right]_0^v,
    \end{aligned}
\end{equation}
where $\left[\psi\right]_0^v=\psi-\psi_0$. 
We define the approximation for the curvature component $\alpha$ by
\begin{equation}
    \ot{\Omega^{-2}\alpha}_{AB}=-(\Omega_0)^{-2}\Lie_{\partial_v}\left(\ot{\Omega^{-1}\chih}\right)_{AB},
\end{equation}
and $\ot{\Omega^{-1}\beta^r}_A$ by 
\begin{equation}\label{R_I_Def_ot_beta}
    \ot{\Omega^{-1}\beta^r}_A=-{\dv_0}\ot{(\Omega^{-1}\chih)}_A+\frac{1}{2}\nabla_A{(\Omega^{-1}\tr\chi)_0}-\left({\eta^B}\right)_0\ot{\Omega^{-1}\chih}_{AB}+\frac{1}{2}\left({\Omega^{-1}\tr\chi}{\eta_A}\right)_0.
\end{equation}
The lemma below reveals the key properties of $\ot{\Omega^{-2}\alpha}$ and $\ot{\Omega^{-1}\beta^r}$, which will be used to derive the equations for the differences.
\begin{lemma}
    The following equations hold for $\ot{\Omega^{-2}\alpha}$ and $\ot{\Omega^{-1}\beta^r}$:
    \begin{equation}\label{Eq_for_ot_alpha}
        \left(\Omega\nabla_3+\frac{1}{2}\Omega\tr\chib-8\Omega\omegab\right)_0\ot{\Omega^{-2}\alpha}_{AB}=0,
    \end{equation}
    and 
    \begin{equation}\label{Eq_for_ot_beta}
        (\Omega_0)^{-2}\Lie_v \ot{\Omega^{-1}\beta^r}_A=\dv_0\ot{\Omega^{-2}\alpha}_A+(2\eta^B+\etab^B)_0\ot{\Omega^{-2}\alpha}_{AB}.
    \end{equation}
\end{lemma}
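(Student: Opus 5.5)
Both identities follow by differentiating the defining relations of the approximating quantities in $v$. The key observation is that every coefficient in those relations is frozen at $v=0$, so each one is independent of $v$ and commutes with $\Lie_{\partial_v}$.

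\textbf{Proof of \eqref{Eq_for_ot_alpha}.} We work in the coordinates $(u,v,\theta)$ with the shift $b_0=b(u,0,\theta)$. The frozen operator $(\Omega\nabla_3)_0$ acts on $S$-tensors as $\Lie_{\partial_u}+\Lie_{b_0}$ plus zeroth-order terms built from $(\Omega\chib)_0$. All of these are $v$-independent, so $\Lie_{\partial_v}$ commutes with $(\Omega\nabla_3)_0$, $(\Omega\tr\chib)_0$ and $(\Omega\omegab)_0$. The right-hand side of the defining equation for $\ot{\Omega^{-1}\chih}$ is $(\cdot)_0$, hence independent of $v$. Applying $\Lie_{\partial_v}$ to that equation therefore gives
\[\left((\Omega\nabla_3)_0+\tfrac12(\Omega\tr\chib)_0-4(\Omega\omegab)_0\right)\Lie_{\partial_v}\ot{\Omega^{-1}\chih}=0.\]
Next we multiply by $-(\Omega_0)^{-2}$ and commute it through the operator. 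Using $(\Omega\nabla_3)_0\log\Omega_0^2=-4(\Omega\omegab)_0$, we get
\[(\Omega\nabla_3)_0\big(\Omega_0^{-2}X\big)=\Omega_0^{-2}(\Omega\nabla_3)_0X+4(\Omega\omegab)_0\,\Omega_0^{-2}X.\]
So conjugation shifts the $\omegab$ coefficient from $-4$ to $-8$, which is exactly \eqref{Eq_for_ot_alpha}.

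Two points need checking here.
\begin{itemize}
\item We must verify that the limiting value $(\Omega\nabla_3)_0\log\Omega_0^2$ is indeed $-4(\Omega\omegab)_0$ in the renormalized gauge. This follows from $\nabla_3\log\Omega^2=-4\omegab$ and $\hat\Omega\hat\omegab=\Omega\omegab$.
\item We must verify that $\Lie_{\partial_v}$ commutes with the tensorial zeroth-order part of $(\Omega\nabla_3)_0$, which involves $g_0$ and $(\Omega\chib)_0$ contracted with the tensor. Since $g_0$ and $(\Omega\chib)_0$ are $v$-independent, this holds.
\end{itemize}

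\textbf{Proof of \eqref{Eq_for_ot_beta}.} We apply $\Lie_v$ to the definition \eqref{R_I_Def_ot_beta}. The terms $\frac12\nabla_A(\Omega^{-1}\tr\chi)_0$ and $\frac12(\Omega^{-1}\tr\chi\,\eta_A)_0$ are $v$-independent and drop out. Since $\dv_0$ is taken with respect to the frozen metric $g_0$, it commutes with $\Lie_v$. Substituting $\Lie_v\ot{\Omega^{-1}\chih}=-\Omega_0^2\ot{\Omega^{-2}\alpha}$ then gives
\[\Lie_v\ot{\Omega^{-1}\beta^r}_A=\dv_0\big(\Omega_0^2\ot{\Omega^{-2}\alpha}\big)_A+(\eta^B)_0\,\Omega_0^2\ot{\Omega^{-2}\alpha}_{AB}.\]
Expanding the divergence with $\nabla\log\Omega_0^2=(\eta+\etab)_0$ produces an extra $(\eta+\etab)^B_0$ term. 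Dividing by $\Omega_0^2$, the coefficient of $\ot{\Omega^{-2}\alpha}_{AB}$ becomes $(2\eta+\etab)^B_0$, which is \eqref{Eq_for_ot_beta}.

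\textbf{Main obstacle.} The only real care is bookkeeping. The index raising in $(\eta^B)_0$ and in $\dv_0$ must use $g_0$ consistently. We must also confirm that $\ot{\Omega^{-1}\chih}$ is regarded as the solution of the frozen equation for all $v$, rather than the $g(u,v)$-symmetrized extension used on $H_{-1}$. Once that is fixed, $\Lie_v$ acts only on $\ot{\Omega^{-1}\chih}$ and both computations are mechanical.
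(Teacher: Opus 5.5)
Your proposal is correct and is essentially the paper's own argument. You commute $\Lie_v$ through the frozen operator in the defining equation of $\ot{\Omega^{-1}\chih}$ and conjugate by $\Omega_0^{-2}$, using $(\Omega\nabla_3)_0\log\Omega_0^2=-4(\Omega\omegab)_0$ to move the coefficient from $-4$ to $-8$; then you differentiate the definition of $\ot{\Omega^{-1}\beta^r}$ and expand $\dv_0(\Omega_0^2\ot{\Omega^{-2}\alpha})$ with $\nabla\log\Omega_0^2=(\eta+\etab)_0$, exactly as the paper does (your reading of $\ot{\Omega^{-1}\chih}$ as the solution of the frozen equation, rather than the $g(u,v)$-symmetrized tensor used for the actual data on $H_{-1}$, is also the right one).
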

\begin{proof}
    Because $[(\Omega \nabla_3)_0,\Lie_v]=0,[\psi_0,\Lie_v]=0$, we obtain
    \begin{equation}
        \begin{aligned}
            &\left(\Omega\nabla_3+\frac{1}{2}\Omega\tr\chib-4\Omega\omegab\right)_0\ot{\Omega^{-2}\alpha}\\
=&(\Omega\nabla_3)_0\log(\Omega_0)^{-2}\ot{\Omega^{-2}\alpha}-(\Omega_0)^{-2}\Lie_v\left(\left(\Omega\nabla_3+\frac{1}{2}\Omega\tr\chib-4\Omega\omegab\right)_0\ot{\Omega^{-1}\chih}\right)\\
=&4(\Omega\omegab)_0\ot{\Omega^{-2}\alpha}.
        \end{aligned}
    \end{equation}
    For $\ot{\Omega^{-1}\beta^r}$, we obtain 
    \begin{equation}
        \begin{aligned}
            &(\Omega_0)^{-2}\Lie_v \ot{\Omega^{-1}\beta^r}_A\\
            =&\left(\eta_0+\dv_0\right)\left(-(\Omega_0)^{-2}\Lie_v \ot{\Omega^{-1}\chih}\right)-\left(\nabla\log\Omega^{-2}\right)_0(\Omega_0)^{-2}\Lie_v\left(-\ot{\Omega^{-1}\chih}\right)\\
            =& \dv_0\ot{\Omega^{-2}\alpha}_A+(2\eta^B+\etab^B)_0\ot{\Omega^{-2}\alpha}_{AB}.
        \end{aligned}
    \end{equation}
\end{proof}
For $\ot{\Omega^{-1}\chih},\ot{\Omega^{-1}e_4\phi},\ot{\Omega^{-2}\alpha},\ot{\Omega^{-1}\beta^r}$, we define difference 
\begin{equation}
    \wt{\psi}=\psi-\ot\psi.
\end{equation}
We precisely know the initial values of $\Omega^{-1}\beta^r,\Omega\betab^r,\sigma^r$ along $\Hb_0$ via the intrinsic equations \eqref{Gauss_Codazzi} and $K_0=K(g_0)$; hence, we define the difference 
\begin{equation}
    \df{\psi}=\left[\psi\right]_0^v-\left[\psi^c\right]_0^v=\ol\psi-(\ol{\psi})_0=\psi-\psi^c-\psi_0+{\psi^c}_0.
\end{equation}
It is worth noting that the specific difference $\df{\psi}$ for $\Omega^{-1}\beta^r,\Omega\betab^r,\sigma^r$ is exactly $\psi-\psi_0$, because these quantities identically vanish in spherically symmetric settings.
For those geometric quantities admitting explicit $\nabla_4$ evolution equations that do not contain $\alpha$, we define the corresponding differences:
\begin{equation}
    \ddf{\psi}=\psi-\psi_0-v\cdot\left(\Lie_v\psi\right)_0,\ \ddfl{\psi}=\ol{\psi}-(\ol\psi)_0-v\cdot\left(\Lie_v\ol{\psi}\right)_0.
\end{equation}
\begin{remark}
    Heuristically, we expect the metric to remain close to the spherically symmetric and exactly self-similar background metric found in \cite{Christodoulou1994}. Therefore, $\ol{\cdot}$ serves as the appropriate basic difference, which will be primarily utilized in Regions II and III. In Region I, however, we require more accurate difference quantities, which are expected to satisfy bounds of the form $\epsilon \left(\frac{v}{(-u)^{1-\k}}\right)^C$ for suitable constants $C$. Consequently, we introduce $\df\cdot$ and $\ddfl{\cdot}$ to carefully isolate the remaining terms corresponding to the zeroth and first-order Taylor expansions. Nevertheless, the leading terms $\Omega^{-1}\chih$, $\Omega^{-1}e_4\phi$, and $\Omega^{-2}\alpha$ cannot be effectively estimated using these basic differences alone; we thus approximate them explicitly by solving their evolution equations. 
The three difference notations should be read as a hierarchy. The symbol $\wt{\cdot}$ is used after an explicit model term has been subtracted, $\df{\cdot}$ records a zeroth-order increment away from $v=0$, and $\ddfl{\cdot}$ records the remainder after the first $v$-Taylor term has also been removed. This hierarchy explains why later product estimates always separate explicit background increments from the genuinely small difference terms. 
\end{remark}

For these differences, we state the formulae to compute difference of product. They can be proved by direct calculation.
\begin{lemma}\label{Product_rule_for_difference}
    The following estimates hold for the relevant quantities:
    \begin{equation}
        \wt{\psi_1\psi_2}=\psi_1\wt{\psi_2}+\wt{\psi_2}\ot{\psi_1},\quad \ol{\psi_1\psi_2}=\ol{\psi_1}\psi_2+{\psi_1}^c\ol{\psi_2},
    \end{equation}
    \begin{equation}
    \begin{aligned}
            \left[\ol{\psi_1\psi_2}\right]_0^v=&\psi_1\left[\ol{\psi_2}\right]_0^v+\left[\ol{\psi_1}\right]_0^v\psi_2+\ol{\psi_1}\left[{\psi_2}^c\right]_0^v+\left[{\psi_1}^c\right]_0^v\ol{\psi_2}(0)\\
            =&\psi_1\left[\ol{\psi_2}\right]_0^v+\left[\ol{\psi_1}\right]_0^v\psi_2+\df{\psi_1}\left[{\psi_2}^c\right]_0^v+\ol{\psi_1}(0)\left[{\psi_2}^c\right]_0^v+\left[{\psi_1}^c\right]_0^v\ol{\psi_2}(0).
    \end{aligned}
    \end{equation}
    \begin{equation}
\begin{aligned}
            \ddf{\ol{\psi_1\psi_2}}=&\psi_1(0)\ddf{\ol{\psi_2}}+\ddf{\psi_1}\ol{\psi_2}(v)+v\Lie_v\psi_1(0)\df{\psi_2}+{\psi_2}^c(0)\ddf{\ol{\psi_1}}\\
            &+\ddf{{\psi_2}^c}\ol{\psi_1}(v)+v\Lie_v{\psi_2}^c(0)\df{\psi_1}\\
            =&\psi_1(0)\ddf{\ol{\psi_2}}+\ddfl{\psi_1}\ol{\psi_2}(v)+\ddf{\psi_1^c}\ol{\psi_2}(v)+v\Lie_v\psi_1(0)\ddfl{\psi_2}+{\psi_2}^c(0)\ddf{\ol{\psi_1}}\\
            &+\ddf{{\psi_2}^c}\ol{\psi_1}(v)+v\Lie_v{\psi_2}^c(0)\ddfl{\psi_1}+v^2\Lie_v\psi_2^c(0)\Lie_v\ol{\psi_1}(0)+v^2\Lie_v\psi_1(0)\Lie_v\ol{\psi_2}(0).
\end{aligned}
    \end{equation}
    Here we use $\psi(v)$ to represent $\psi(u,v)$.
\end{lemma}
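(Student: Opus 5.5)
The plan is to verify each identity by elementary algebra: write every factor as a reference value plus a difference, expand, and match terms. For the first line, write $\psi_1=\ot{\psi_1}+\wt{\psi_1}$. Here $\wt{\psi_1\psi_2}$ is the product minus the product of the approximations, so
\[
\psi_1\psi_2-\ot{\psi_1}\ot{\psi_2}=\psi_1\wt{\psi_2}+\ot{\psi_2}\,\wt{\psi_1}.
\]
Up to the labeling of the indices in the statement, this is the stated formula. The same computation with $\psi^c$ in place of $\ot\psi$ gives
\[
\ol{\psi_1\psi_2}=\ol{\psi_1}\psi_2+\psi_1^c\ol{\psi_2}.
\]

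For the $\df{\cdot}$ identity I will use the increment operator $[f]_0^v=f(v)-f(0)$. It obeys the discrete Leibniz rule
\[
[fg]_0^v=f(v)[g]_0^v+[f]_0^v\,g(0).
\]
Apply this to the two terms of $\ol{\psi_1\psi_2}$ just obtained:
\[
[\ol{\psi_1}\psi_2]_0^v=\psi_2(v)[\ol{\psi_1}]_0^v+\ol{\psi_1}(0)[\psi_2]_0^v,\qquad [\psi_1^c\ol{\psi_2}]_0^v=\psi_1^c(v)[\ol{\psi_2}]_0^v+[\psi_1^c]_0^v\ol{\psi_2}(0).
\]
Then split $\psi_1^c=\psi_1-\ol{\psi_1}$ and $[\psi_2]_0^v=[\psi_2^c]_0^v+[\ol{\psi_2}]_0^v$. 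Regrouping recovers the first displayed form, up to the bookkeeping of which variable is evaluated at $0$ and which at $v$; I will fix that consistently. The second form then follows from $\ol{\psi_1}(v)=\df{\psi_1}+\ol{\psi_1}(0)$.

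For the second-order difference I will use the operator
\[
\{f\}_0^v=f(v)-f(0)-v(\Lie_vf)(0),
\]
which is a truncated Taylor remainder. Its product rule comes from expanding $f(v)=f(0)+v f'(0)+\{f\}$ and the analogous expansion of $g(v)$:
\[
\{fg\}=f(0)\{g\}+\{f\}g(v)+vf'(0)[g]_0^v .
\]
The check is that all terms of order $v^0$ and $v^1$ cancel, using $(fg)'(0)=f'(0)g(0)+f(0)g'(0)$. Applying this to $\ol{\psi_1}\psi_2$ and to $\psi_1^c\ol{\psi_2}$, then substituting $\psi=\psi^c+\ol\psi$ where needed, produces the six terms of the first form. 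For the second form, replace each $[\ol\psi]_0^v$ by $\ddfl{\psi}+v\Lie_v\ol\psi(0)$, and split $\{\psi_1\}$ as $\ddfl{\psi_1}+\{\psi_1^c\}$. The leftover pieces are the explicit $v^2$ cross terms.

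No step is deep. The only real obstacle is bookkeeping: tracking which factors sit at $0$ and which at $v$, since the labels in the displayed formula may be slightly permuted. I would verify by a degree count in $v$ (or by a test with polynomials in $v$) that both sides agree.
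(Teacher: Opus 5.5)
Your method is the paper's own (it says only ``direct calculation''), and both of your organizing rules are correct: $[fg]_0^v=f(v)[g]_0^v+[f]_0^v g(0)$ and $\{fg\}_0^v=f(0)\{g\}_0^v+\{f\}_0^v g(v)+v(\Lie_v f)(0)[g]_0^v$. What you should not leave as ``up to bookkeeping'' is that two of the displayed identities are false as printed, so no regrouping can recover them literally.

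First, the tilde identity is $\wt{\psi_1\psi_2}=\psi_1\wt{\psi_2}+\wt{\psi_1}\ot{\psi_2}$, as you derived. The printed $\wt{\psi_2}\ot{\psi_1}$ is a misprint.

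Second, in the $[\,\cdot\,]_0^v$ identity the factor $\psi_2$ in the second term must be evaluated at $0$. With $\psi_2=\psi_2(v)$, both printed forms exceed the left side by $[\ol{\psi_1}]_0^v[\psi_2]_0^v$. For example, with $\psi_1^c=\psi_2^c\equiv0$ and $\psi_1=\psi_2=v$, the right side is $2v^2$ while the left side is $v^2$. The orientation you chose for $[\ol{\psi_1}\psi_2]_0^v$ also does not lead to the four-term form. After your substitutions it gives a different, correct identity:
\[
\left[\ol{\psi_1\psi_2}\right]_0^v=\psi_1(v)\df{\psi_2}+\df{\psi_1}\psi_2(v)+\ol{\psi_1}(0)\left[\psi_2^c\right]_0^v+\left[\psi_1^c\right]_0^v\ol{\psi_2}(0)-\df{\psi_1}\df{\psi_2}.
\]
To get the printed four terms, with $\psi_2(0)$ in the second one, use the other orientation, $[\ol{\psi_1}\psi_2]_0^v=\ol{\psi_1}(v)[\psi_2]_0^v+\df{\psi_1}\psi_2(0)$. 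Then split $[\psi_2]_0^v=[\psi_2^c]_0^v+\df{\psi_2}$ and $\psi_1^c(v)=\psi_1(v)-\ol{\psi_1}(v)$; the quadratic terms cancel exactly. Either corrected version suffices for Lemma \ref{RI_Lemma_Calculation_of_df}.

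The $\{\,\cdot\,\}_0^v$ identities are correct as printed. The first form is built on the other splitting, $\ol{\psi_1\psi_2}=\psi_1\ol{\psi_2}+\psi_2^c\ol{\psi_1}$. Applying your rule with $(f,g)=(\psi_1,\ol{\psi_2})$ and $(f,g)=(\psi_2^c,\ol{\psi_1})$ gives the six terms directly, with no substitution. Starting from $\ol{\psi_1}\psi_2+\psi_1^c\ol{\psi_2}$, as you propose, also works. However, the leftover terms cancel only once you recognize them as two expansions of the same quantity $\{\ol{\psi_1}\ol{\psi_2}\}_0^v$. The second form then follows exactly as you describe.
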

\begin{remark}
    Although the expression for $\ddfl{\psi_1\psi_2}$ appears algebraically complicated, we can exploit the underlying self-similarity to simplify it significantly. Assuming $\psi_i(0)$, $\Lie_v\psi_i(0)$, and $\psi_i^c$ satisfy suitable scale-invariant bounds, and that $\left|\ol{\psi_i}\right|\ll 1$, we then obtain 
    \begin{equation}
        \begin{aligned}
            &\ddfl{\psi_1\psi_2}\\
            \sim & (-u)^{s(\psi_1)-1}\ddfl{\psi_2}+o(1)(-u)^{s(\psi_2)-1}\ddfl{\psi_1}+\frac{v}{(-u)^{1-\k}}(-u)^{s(\psi_1)-1}\ol{\psi_2}\\
            &+\frac{v}{(-u)^{1-\k}}(-u)^{s(\psi_1)-1}\ddfl{\psi_2}+(-u)^{s(\psi_2)-1}\ddfl{\psi_1}+\frac{v}{(-u)^{1-\k}}(-u)^{s(\psi_2)-1}\ol{\psi_1}\\
            &+\frac{v}{(-u)^{1-\k}}(-u)^{s(\psi_2)-1}\ddfl{\psi_1}+\epsilon\frac{v^2}{(-u)^{2-2\k}}(-u)^{s(\psi_1)+s(\psi_2)-2}\\
            \sim & (-u)^{s(\psi_1)-1}\ddfl{\psi_2}+(-u)^{s(\psi_2)-1}\ddfl{\psi_1}\\
            &+\frac{v}{(-u)^{1-\k}}(-u)^{s(\psi_2)-1}\ol{\psi_1}+\frac{v}{(-u)^{1-\k}}(-u)^{s(\psi_1)-1}\ol{\psi_2}+\epsilon\frac{v^2}{(-u)^{2-2\k}}(-u)^{s(\psi_1\psi_2)-1}.
        \end{aligned}
    \end{equation}
    The detailed statement will be presented and proved in Lemma \ref{RI_Lemma_Calculation_of_ddfl}.
\end{remark}
\begin{remark}
    The equations for differences $\df{\cdot}$ and $\ddfl{\cdot}$ can be written simply. If\\ 
    $\Omega^{-1}\nabla_4\psi=F$, the equations for the difference take the form
    \begin{equation}
        \Omega^{-1}\nabla_4\df{\psi} = \ol{F}-\ol{\Omega^{-2}}\Lie_v\psi^c+\ol{\Omega^{-1}\chi}\cdot \psi^c,
    \end{equation}
    \begin{equation}
        \Omega^{-1}\nabla_4\ddfl{\psi} = \Omega^{-1}\chi\ddfl{\psi}+\Omega^{-2}\left[\Lie_v\ol{\psi}\right]_0^v=\Omega^{-1}\chi\ddfl{\psi}+\Omega^{-2}\df{\Omega^2 \left(F+\Omega^{-1}\chi\psi\right)}.
    \end{equation}
    As for $\Omega\nabla_3\psi=F$, the equations for difference can be written as:
    \begin{equation}
        \begin{aligned}
            \Omega\nabla_3\df{\psi}=&\df{F}-\ol{\Omega\nabla_3}\left[\psi^c\right]_0^v-\left[\Omega\nabla_3\right]_0^v\ol{\psi}(0)-\df{\Omega\nabla_3}\psi^c(0)\\
            =&\df{F}+\ol{\Omega\chib}\left[\psi^c\right]_0^v+\left(\left[\Omega\chib\right]_0^v+\Lie_{\left[b\right]_0^v}\right)\ol{\psi}(0)+\df{\Omega\chib}\psi^c(0).
        \end{aligned}
    \end{equation}
     \begin{equation}
        \Omega\nabla_3\ddfl{\psi}=-\Lie_{\ddfl{b}}\left(\ol{\psi}(0)+v\Lie_v\ol{\psi}(0)\right)-\Lie_{v\Lie_v b(0)}v\Lie_v\ol{\psi}(0)+\ddfl{F}.
     \end{equation}
\end{remark}

To improve the readability of the subsequent computations, we first list several basic lemmas concerning the four types of differences, systematically based on Lemma \ref{Lemma_Sobolev_Ineq_product}. These lemmas allow one to largely bypass the complicated algebraic forms of the expanded equations and directly focus on anticipating the behavior of the leading terms.
\begin{lemma}\label{RI_Lemma_Calculation_of_ol}
    We consider difference pairs $(\psi_i,\psi_i^c,\ol{\psi}_i)$, $i\in\mathbb{N}$. Let $n\geq 3$ be an integer. Assume that 
    \begin{equation}
        \sum_{j\leq n-1}\sup_{u,v}\lnm\nabla^j \ol{\psi}_i\rnm_{\LS_1^2(S_{u,v})}+\sum_{j\leq n} \sup_{u,v}\lnm\nabla^j {\psi_i^c}\rnm_{\LS_1^2(S_{u,v})}\lesssim 1,
    \end{equation}
    then it holds that
    \begin{equation}\label{Calu_Diff_ol_1}
        \sum_{j\leq n}\lnm \nabla^j\left({\psi}_1\ol{\psi}_2\right)\rnm_{\LS^2_w(X)}\lesssim \sum_{j\leq n}\lnm \nabla^j\ol{\psi}_2\rnm_{\LS^2_w(X)}+\lnm\nabla^{n}\ol{\psi}_1 \rnm_{\LS_w^2(X)},
    \end{equation}
    where $X\in\{S_{u,v},H_{u}^v,\Hb_v^u,\R_{u,v}\}$.
    Also, since $\ol{\psi_1\psi_2}=\ol{\psi_1}\psi_2^c+\psi_1^c\ol{\psi_2}+\ol{\psi_1}\cdot\ol{\psi_2}$, it follows that 
    \begin{equation}\label{Calcu_Diff_product_ol}
        \sum_{j\leq n}\lnm \nabla^j\left(\ol{{\psi}_1{\psi}_2}\right)\rnm_{\LS^2_w(X)}\lesssim \sum_{j\leq n}\lnm \nabla^j\ol{\psi}_2\rnm_{\LS^2_w(X)}+\sum_{j\leq n}\lnm\nabla^j\ol{\psi}_1 \rnm_{\LS_w^2(X)}.
    \end{equation}
    All the arguments hold for difference pairs $(\psi_i,\ot\psi_i,\wt\psi_i)$.
\end{lemma}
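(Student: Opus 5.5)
The plan is to reduce both estimates to the weighted product inequality of Lemma \ref{Lemma_Sobolev_Ineq_product}, with the hypotheses supplying the $\LS_1$-factors there. For \eqref{Calu_Diff_ol_1} we split $\psi_1=\psi_1^c+\ol{\psi}_1$ and write
\[
\psi_1\ol{\psi}_2=\psi_1^c\,\ol{\psi}_2+\ol{\psi}_1\,\ol{\psi}_2 .
\]

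\emph{Background term.} We apply \eqref{Sobolev_Product_form} on spheres, or \eqref{Calcu_weighted_norm_Sobolev} on $X=H_u^v,\Hb_v^u,\R_{u,v}$, with the pair $(\psi_1^c,\ol{\psi}_2)$. This produces two contributions:
\[
\sup\sum_{i\le n-1}\lnm\nabla^i\psi_1^c\rnm_{\LS_1^2}\cdot\sum_{j\le n}\lnm\nabla^j\ol{\psi}_2\rnm_{\LS^2_w(X)}
\quad\text{and}\quad
\sup\sum_{i\le n-1}\lnm\nabla^i\ol{\psi}_2\rnm_{\LS_1^2}\cdot\sum_{j\le n}\lnm\nabla^j\psi_1^c\rnm_{\LS_w^2(X)} .
\]
The first is acceptable because the $\LS_1$ factor is $\lesssim1$ by hypothesis.

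The second needs more care, since $\lnm\nabla^j\psi_1^c\rnm_{\LS_w^2(X)}$ is not small. Here I would exploit that $\psi_1^c$ is a spherically symmetric background quantity bounded in $\LS_1$ up to order $n$. I would not use the product lemma in the symmetric form. Instead I would expand $\nabla^j(\psi_1^c\ol{\psi}_2)$ by Leibniz and put every $\nabla^{i}\psi_1^c$, $i\le n$, in $L^\infty$ or $L^4$ via the Sobolev embedding \eqref{Sobolev_W22}. The high-derivative factor is then always $\nabla^{j-i}\ol{\psi}_2$, and its weighted norm is bounded by $\sum_{j\le n}\lnm\nabla^j\ol{\psi}_2\rnm_{\LS_w^2(X)}$. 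The $L^\infty$ control of $\nabla^{n}\psi_1^c$ is not directly available from the $\LS_1$ bounds up to order $n$. The top case $i=n$ should therefore be handled through $L^4\times L^4$ together with the Sobolev inequality on $S_{u,v}$. Alternatively one can simply note that the background is smooth in the angular variables (derivatives of spherically symmetric tensors are explicit multiples of the metric), so all angular derivatives of $\psi_1^c$ are bounded scale-invariantly. Signatures match because $s(\psi_1\psi_2)=s(\psi_1)+s(\psi_2)-1$ and the $\LS_1^\infty$ norm of $\psi_1^c$ carries exactly the compensating factor $(-u)^{1-s(\psi_1)}$.

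\emph{Nonlinear term.} For $\ol{\psi}_1\ol{\psi}_2$ we apply Lemma \ref{Lemma_Sobolev_Ineq_product} directly. Using $\sup\sum_{i\le n-1}\lnm\nabla^i\ol{\psi}_{1}\rnm_{\LS_1^2}\lesssim1$, and the same for $\ol{\psi}_2$, we get a bound by
\[
\sum_{j\le n}\lnm\nabla^j\ol{\psi}_2\rnm_{\LS_w^2(X)}+\sum_{j\le n}\lnm\nabla^j\ol{\psi}_1\rnm_{\LS_w^2(X)} .
\]
To reach the sharper right side of \eqref{Calu_Diff_ol_1}, which keeps only $\lnm\nabla^n\ol{\psi}_1\rnm$, the plan is to expand by Leibniz again. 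Terms with at most $n-1$ derivatives on $\ol{\psi}_1$ are placed in $L^\infty/L^4$ using the $\LS_1$ bound, and they go with $\nabla^{\le n}\ol{\psi}_2$. The single term $\nabla^n\ol{\psi}_1\cdot\ol{\psi}_2$ uses $\ol{\psi}_2\in\LS_1^\infty$, which follows from the $\LS_1$ bounds at orders $\le2\le n-1$ through \eqref{Sobolev_W22}.

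Summing the two pieces gives \eqref{Calu_Diff_ol_1}. Then \eqref{Calcu_Diff_product_ol} follows by applying \eqref{Calu_Diff_ol_1} to $\psi_1^c\ol{\psi}_2$ and, with roles swapped, to $\ol{\psi}_1\psi_2^c$ and $\ol{\psi}_1\ol{\psi}_2$. For the pairs $(\psi_i,\ot\psi_i,\wt\psi_i)$ the identical argument goes through using $\wt{\psi_1\psi_2}=\psi_1\wt{\psi_2}+\wt{\psi_1}\ot{\psi_2}$, with $\ot\psi$ in the role of $\psi^c$. The only difference is that $\ot\psi$ is not spherically symmetric, so there one genuinely needs the assumed $\LS_1$ bounds on $\nabla^{\le n}\ot\psi$ rather than the explicit structure of the background. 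The main obstacle is this top-order background derivative $\nabla^n\psi_1^c$ (respectively $\nabla^n\ot\psi_1$) paired with the lowest derivative of $\ol{\psi}_2$. It is handled by the $L^4$–$L^4$ splitting and Sobolev on $S_{u,v}$, after which we integrate in $u$, $v$ or both to pass from spheres to $X$.
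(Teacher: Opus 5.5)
Your proposal is correct and follows essentially the paper's route. You split off the background, apply the product estimate, and let the uniform $\LS_1$ bounds on $\nabla^{\le n}\psi_1^c$ and $\nabla^{\le n-1}\ol\psi_1$ absorb everything except the weighted norms of $\ol\psi_2$ and of $\nabla^n\ol\psi_1$. Two corrections are needed, though.

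\emph{The weight on the background is not an obstacle.} Since $w=w(u,v)$ is constant on each sphere, $\lnm f\rnm_{\LS_w^2(S_{u,v})}=w(u,v)^{1/2}\lnm f\rnm_{\LS_1^2(S_{u,v})}$. So in \eqref{Sobolev_Product_form} you may move the weight from $\nabla^{\le n}\psi_1^c$ onto $\nabla^{\le n-1}\ol\psi_2$ sphere by sphere and then integrate over $X$. This is exactly the paper's step, and no separate Leibniz argument is needed for the background term.

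\emph{Your top case $i=n$ via $L^4\times L^4$ fails.} Putting $\nabla^n\psi_1^c$ in $L^4$ would require $\nabla^{n+1}\psi_1^c\in L^2$. That is not assumed, and it is not available for $\ot\psi_1$ either. Use $L^2\times L^\infty$ instead, via \eqref{Sobolev_W22}:
\[
\lnm \nabla^n\psi_1^c\,\ol\psi_2\rnm_{\LS_w^2(S_{u,v})}\le \lnm\nabla^n\psi_1^c\rnm_{\LS_1^2(S_{u,v})}\lnm\ol\psi_2\rnm_{\LS_w^\infty(S_{u,v})}\lesssim \sum_{k\le 2}\lnm\nabla^k\ol\psi_2\rnm_{\LS_w^2(S_{u,v})}.
\]

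The same problem appears in your nonlinear term when $\nabla^{n-1}$ falls on $\ol\psi_1$. That factor cannot go in $L^4$ or $L^\infty$ under the hypotheses. Keep it in $\LS_1^2$ and put $\nabla^{\le 1}\ol\psi_2$ in $\LS_w^\infty$. This costs $\nabla^{\le 3}\ol\psi_2$ in $\LS_w^2$, which is admissible exactly because $n\ge 3$. When $\nabla^{n-2}$ falls on $\ol\psi_1$, $L^4\times L^4$ does work.

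With these fixes, your Leibniz bookkeeping justifies the sharp right-hand side of \eqref{Calu_Diff_ol_1}, with only $\nabla^n\ol\psi_1$. It does so more explicitly than the paper's display, which lists all $\nabla^{\le n}\ol\psi_1$ in $\LS_w^2$ and then discards the lower-order ones with a one-line remark.
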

\begin{proof}
    Note that the only ambiguous part of \eqref{Calu_Diff_ol_1} in $\nabla^j\psi_1$ is $\nabla^n\ol{\psi}_1$, and other terms about $\psi_1$ can be directly relaxed to $O(1)$ by the assumption. Using Lemma \ref{Lemma_Sobolev_Ineq_product}, it follows that 
    \begin{equation}
        \begin{aligned}
            &\sum_{j\leq n}\lnm \nabla^j\left({\psi}_1\ol{\psi}_2\right)\rnm_{\LS^2_w(X)}\lesssim  \sup_{S}\left(\sum_{i\leq n-1}\lnm \nabla^j{\psi}_1\rnm_{\LS^2_1(S)}\right)\sum_{j\leq n}\lnm \nabla^j\ol{\psi}_2\rnm_{\LS^2_w(X)}\\
            &\qquad +\sum_{k\leq n-1,j\leq n}\lnm\nabla^{j}\ol{\psi}_1\rnm_{\LS_w^2(X)}\lnm\nabla^{k}\ol{\psi}_2\rnm_{\LS_1^2(X)}+\lnm \nabla^j\psi_1^c\rnm_{\LS_1^2(X)}\lnm\nabla^{k}\ol{\psi}_2\rnm_{\LS_w^2(X)}\\
            \lesssim &\sum_{j\leq n}\lnm \nabla^j\ol{\psi}_2\rnm_{\LS^2_w(X)}+\lnm\nabla^{n}\ol{\psi}_1 \rnm_{\LS_w^2(X)}.
        \end{aligned}
    \end{equation}
    Here we note that all $\psi^c$-like quantities have uniform bound in $\LS_1^2$ norm.
\end{proof}

We proceed to prove the help lemma for $\df{\cdot}$.
\begin{lemma}\label{RI_Lemma_Calculation_of_df}
    We consider difference $\df{\psi_i}$. Let $n\geq 3$ be an integer. Suppose that 
    \begin{equation}
        \sum_{j\leq n-1}\sup_{u,v}\lnm \nabla^j{\psi_i}\rnm_{\LS_1^2(S_{u,v})}+\sum_{j\leq n}\sup_{u,v}\lnm \nabla^j{\psi_i^c},\nabla^j\left(\psi_i\right)_0\rnm_{\LS_1^2(S_{u,v})}\lesssim 1.
    \end{equation}
    Moreover, we assume 
    \begin{equation}
        \sum_{j\leq n}\sup_{u,v}\lnm \nabla^j\left(\ol{\psi_i}\right)_0\rnm_{\LS_1^2(S_{u,v})}\lesssim\epsilon,\ \sum_{j\leq n}\lnm \nabla^j\left[\psi_i^c\right]_0^v\rnm_{\LS_w^2(S_{u,v})}\lesssim \left(\frac{v}{(-u)^{1-\k}}\right)^a,
    \end{equation}
    where $a$ is some real number and we recall $(\ol{\psi_i})_0(u,v)=\ol{\psi_i}(u,0)$. Then we find that 
    \begin{equation}\label{Calcu_Diff_product_df}
        \sum_{j\leq n}\lnm \nabla^j\df{\psi_1\psi_2}\rnm_{\LS^2_w(X)}\lesssim\sum_{j\leq n}\lnm \nabla^j\df{\psi_1 }\rnm_{\LS^2_w(X)}+\sum_{j\leq n}\lnm \nabla^j\df{\psi_2}\rnm_{\LS^2_w(X)}+\epsilon\left(\frac{v}{(-u)^{1-\k}}\right)^a.
    \end{equation}
\end{lemma}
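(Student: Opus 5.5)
The plan is to expand $\df{\psi_1\psi_2}$ algebraically via the product formula of Lemma \ref{Product_rule_for_difference},
\[
\left[\ol{\psi_1\psi_2}\right]_0^v=\psi_1\df{\psi_2}+\df{\psi_1}\psi_2+\df{\psi_1}\left[{\psi_2}^c\right]_0^v+\ol{\psi_1}(0)\left[{\psi_2}^c\right]_0^v+\left[{\psi_1}^c\right]_0^v\ol{\psi_2}(0),
\]
and to estimate the five terms separately with Lemma \ref{Lemma_Sobolev_Ineq_product}. In each term the higher-order derivatives of one factor go into the weighted norm $\LS_w^2(X)$, while the other factor is placed in $\sup_S\sum_{i\le n-1}\|\nabla^i\cdot\|_{\LS_1^2(S)}$.

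For the first two terms, $\psi_1\df{\psi_2}$ and $\df{\psi_1}\psi_2$, I would apply \eqref{Calcu_Weighted_norm_Sobolev} with $\psi_1$ (or $\psi_2$) as the ``coefficient'' factor.
\begin{itemize}
\item The lower-order piece $\sum_{i\le n-1}\|\nabla^i\psi_1\|_{\LS_1^2}$ is $\lesssim 1$ by hypothesis. It contributes $\sum_{j\le n}\|\nabla^j\df{\psi_2}\|_{\LS^2_w(X)}$.
\item The cross piece contributes $\sum_{i\le n-1}\|\nabla^i\df{\psi_2}\|_{\LS_1^2}\cdot\sum_{j\le n}\|\nabla^j\psi_1\|_{\LS_w^2(X)}$.
\end{itemize}
The cross piece is where the top-order derivative $\nabla^n\psi_1$ appears, which is not assumed bounded. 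To avoid it, I would first write $\psi_1=\psi_1^c+(\ol{\psi_1})_0+\left[\psi_1^c\right]_0^v+\df{\psi_1}$; this is an identity, since $\df{\psi}=\ol\psi-(\ol\psi)_0$. The first three summands have $n$ derivatives controlled by the hypotheses: $\nabla^j\psi^c$ and $\nabla^j(\psi)_0$ in $\LS^2_1$, $(\ol{\psi})_0$ of size $\epsilon$, and $\left[\psi^c\right]_0^v$. The remaining summand gives a quadratic term $\df{\psi_1}\df{\psi_2}$, and there the top derivative falls on a $\df{\cdot}$ quantity already present on the right-hand side. Its companion lower-order factor $\sum_{i\le n-1}\|\nabla^i\df{\psi}\|_{\LS_1^2}$ is bounded by $\sup\|\nabla^i\psi\|+\sup\|\nabla^i\psi_0\|+\dots\lesssim1$.

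For the third term, $\df{\psi_1}\left[\psi_2^c\right]_0^v$, the background increment is spherically symmetric. I would bound it pointwise in $\LS_1^2$ by the background bounds ($\psi_2^c$ and its value at $0$), so it is absorbed into $\sum_{j\le n}\|\nabla^j\df{\psi_1}\|_{\LS_w^2(X)}$.

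The last two terms are the source of the $\epsilon\, z^a$ term, with $z=v/(-u)^{1-\k}$. The factor $\ol{\psi_1}(0)$ is $\epsilon$-small in $\sum_{j\le n}\LS_1^2$, and $\left[\psi_2^c\right]_0^v$ is bounded by $z^a$ in the $\LS_w^2$ sphere norm. By the sphere version \eqref{Sobolev_Product_form} of Lemma \ref{Lemma_Sobolev_Ineq_product}, these terms are $\lesssim\epsilon z^a$ on $S_{u,v}$.

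The main obstacle is passing from spheres to $X=H_u^v,\Hb_v^u,\R_{u,v}$ for these last terms. I would integrate the sphere bound with the measures $(-u)^{-1+\k}\Omega^2dv'$, $(-u')^{-1}du'$ and the bulk measure. I expect to use that $z$ increases in $v$ along $H$, and that the integrated weights are normalized by the scale-invariant definitions, so the outcome is $\lesssim\epsilon z(u,v)^a$ up to constants. If integrability at $v'\to0$ or $u'\to-1$ fails for some sign of $a$, I would interpret the bound as the supremum over $X$, matching how the lemma is used later.
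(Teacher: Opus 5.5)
Your proposal is essentially the paper's proof. It uses the same five-term expansion, the same splitting of the coefficient $\psi_1$ into background, initial-value and $\df{\cdot}$ parts so that $n$ derivatives never fall on an uncontrolled factor, and the same bounds for the remaining terms. There are two harmless slips. Your splitting double counts $\left[\psi_1^c\right]_0^v$ and should read $\psi_1=\psi_1^c+(\ol{\psi_1})_0+\df{\psi_1}$. The expansion quoted from Lemma \ref{Product_rule_for_difference} should contain $\df{\psi_1}\,(\psi_2)_0$ rather than $\df{\psi_1}\,\psi_2$.

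The paper only asserts the passage from $S_{u,v}$ to $X$ for the $\epsilon$-terms, and your caution there is warranted. With $z=v/(-u)^{1-\k}$, integrating the sphere bound $\epsilon z^{a}$ gives $\lesssim_a\epsilon z^{a}$ on $H_u^v$ and $\R_{u,v}$ only when $a>-\tfrac12$, and on $\Hb_v^u$ only when $a>0$. For $a<0$ on $\Hb_v^u$ one gets $\epsilon v^{a}=\epsilon\sup_{\Hb_v^u}z^{a}$, which is exactly your fallback.
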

\begin{proof}
    We first recall that 
    \begin{equation}
        \left[\ol{\psi_1\psi_2}\right]_0^v= \psi_1\left[\ol{\psi_2}\right]_0^v+\left[\ol{\psi_1}\right]_0^v\psi_2+\df{\psi_1}\left[{\psi_2}^c\right]_0^v+\left(\ol{\psi_1}\right)_0\left[{\psi_2}^c\right]_0^v+\left[{\psi_1}^c\right]_0^v\left(\ol{\psi_2}\right)_0.
    \end{equation}
    For $\psi_1\df{\psi_2}$, we can decompose it as ${(\psi_1)_0}\df{\psi_2}+\left[{\psi_1}^c\right]_0^v\df{\psi_2}+\df{\psi_1}\df{\psi_2}$. With the assumptions, we find that
    \begin{equation}
        \begin{aligned}
            \sum_{j\leq n}\lnm \nabla^j\left(\psi_1\df{\psi_2}\right)\rnm_{\LS_w^2(X)}\lesssim & \sum_{j\leq n}\lnm \nabla^j\df{\psi_1}\rnm_{\LS_w^2(X)}+\lnm \nabla^j\df{\psi_2}\rnm_{\LS_w^2(X)}.
        \end{aligned}
    \end{equation}
    The second term can be estimated in the same way. The third term are bounded by 
    \[\sum_{j,k\leq n}\lnm \nabla^j\df{\psi_1}\rnm_{\LS_w^2(X)}\cdot\sup_{u,v}\lnm \nabla^k\left[{\psi_2}\right]_0^v\rnm_{\LS_1^2(S_{u,v})}\lesssim \sum_{j,k\leq n}\lnm \nabla^j\df{\psi_1}\rnm_{\LS_w^2(X)}.\]
    The last two terms are controlled by 
    \[\sum_{i_1,i_2=1,2}\sum_{j,k\leq n}\lnm \nabla^j\left(\ol{\psi_{i_1}}\right)_0\rnm_{\LS_1^2(X)}\cdot\sup_{u,v}\lnm \nabla^k\left[{\psi_{i_2}}\right]_0^v\rnm_{\LS_w^2(S_{u,v})}\lesssim\epsilon\left(\frac{v}{(-u)^{1-\k}}\right)^a.\]
\end{proof}
\begin{remark}
        The exponent $a$ is allowed to be negative. The proof uses only the stated weighted bound for the Christodoulou increment, not any monotonicity in the self-similar parameter.
\end{remark}
We finally state the product estimate for the first-order Taylor remainder. Compared with the preceding lemma, one must also track the explicit first Taylor coefficient at $v=0$.

\begin{lemma}\label{RI_Lemma_Calculation_of_ddfl}
    We consider difference $\ddfl{\psi_i}$. Let $n\geq 3$ be an integer. We still assume that 
    \begin{equation}\label{Calcu_Diff_Condition_1}
        \sum_{j\leq n-1}\sup_{u,v}\lnm \nabla^j{\psi_i}\rnm_{\LS_1^2(S_{u,v})}+\sum_{j\leq n}\sup_{u,v}\lnm \nabla^j{\psi_i^c},\nabla^j\left(\psi_i\right)_0\rnm_{\LS_1^2(S_{u,v})}\lesssim 1,
    \end{equation}
    and
    \begin{equation}\label{Calcu_Diff_Condition_2}
        \sum_{j\leq n}\sup_{u,v}\lnm \nabla^j\left(\ol{\psi_i}\right)_0\rnm_{\LS_1^2(S_{u,v})}\lesssim\epsilon,\ \sum_{j\leq n}\lnm \nabla^j\ddf{\psi_i^c}\rnm_{\LS_1^2(S_{u,v})}\lesssim \left(\frac{v}{(-u)^{1-\k}}\right)^a,
    \end{equation}
    where $a$ is some real number. 
    Moreover, we require 
    \begin{equation}\label{Calcu_Diff_Condition_3}
        \sum_{j\leq n}\sup_{u,v}\lnm \nabla^j\left(\Lie_v\psi_i\right)_0,\nabla^j\left(\Lie_v{\psi_i}^c\right)_0\rnm_{\LS_1^2(S_{u,v})}\lesssim 1,\ \sum_{j\leq n}\sup_{u,v}\lnm \nabla^j\left(\Lie_v\ol{\psi_i}\right)_0\rnm_{\LS_1^2(S_{u,v})}\lesssim \epsilon.
    \end{equation}
    Here the signature of $\Lie_v\psi_i$ is same as $s(\Omega^2\Omega^{-1}\nabla_4\psi_i)=s(\psi_i)+\k-1$ and $s(v\Lie_v\psi_i)=s((-u)^{1-\k}\Lie_v\psi_i)=s(\psi_i)$. 
    Then the following estimates hold:
    \begin{equation}\label{Calcu_Diff_product_ddfl}
        \begin{aligned}
            &\sum_{j\leq n}\lnm \nabla^j\ddfl{\psi_1\psi_2}\rnm_{\LS^2_w(X)}\\
            \lesssim & \sum_{j\leq n}\lnm \nabla^j\ddfl{\psi_1},\nabla^j\ddfl{\psi_2}\rnm_{\LS^2_w(X)}+\sum_{j\leq n}\left(\frac{v}{(-u)^{1-\k}}\right)^a\lnm \nabla^j\ol{\psi_1},\nabla^j\ol{\psi_2}\rnm_{\LS^2_w(X)}\\
            &+\left(\frac{v}{(-u)^{1-\k}}\right)^{2}\sum_{j\leq n}\lnm \nabla^j\left(\Lie_v\ol{\psi_1}\right)_0,\nabla^j\left(\Lie_v\ol{\psi_2}\right)_0\rnm_{\LS^2_w(X)}.
        \end{aligned}
    \end{equation}
\end{lemma}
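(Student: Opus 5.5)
The plan is to proceed exactly as in Lemma \ref{RI_Lemma_Calculation_of_df}. We first use the algebraic identity of Lemma \ref{Product_rule_for_difference} for $\ddfl{\psi_1\psi_2}$, in its second (fully expanded) form, and then estimate each resulting term with the product Sobolev inequality \eqref{Calcu_weighted_norm_Sobolev} of Lemma \ref{Lemma_Sobolev_Ineq_product}. Concretely, the expansion writes $\ddfl{\psi_1\psi_2}$ as a sum of three types of terms:
\begin{enumerate}
\item[(i)] terms linear in $\ddfl{\psi_i}$ with coefficients $\psi_1(0)$, $\psi_2^c(0)$, $v\Lie_v\psi_1(0)$, $v\Lie_v\psi_2^c(0)$;
\item[(ii)] terms of the form $\ddf{\psi_i^c}\,\ol{\psi_j}(v)$;
\item[(iii)] the quadratic Taylor remainders $v^2\Lie_v\psi_2^c(0)\Lie_v\ol{\psi_1}(0)$ and $v^2\Lie_v\psi_1(0)\Lie_v\ol{\psi_2}(0)$.
\end{enumerate}
The expansion also contains one mixed term $\ddfl{\psi_1}\,\ol{\psi_2}(v)$, which I will treat together with type (i).

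\emph{Type (i) and the mixed term.} For the type (i) terms, apply \eqref{Calcu_weighted_norm_Sobolev} with the $\LS_1$ factor placed on the coefficient. By \eqref{Calcu_Diff_Condition_1} and \eqref{Calcu_Diff_Condition_3} these coefficients are $O(1)$ in $\sum_{j\le n}\sup\|\nabla^j\cdot\|_{\LS_1^2(S)}$. The only subtlety is the factor $v$. By the signature convention $s(v\Lie_v\psi)=s(\psi)$ and the scaling remark ($\|v^a(-u)^b\psi\|\le z^a\|\psi\|$ with $z=v/(-u)^{1-\k}$), the weight $v$ becomes $z\le\epsilon_1$ times $(-u)^{1-\k}\Lie_v\psi_i(0)$, which is again $O(1)$. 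This yields $\sum_{j\le n}\|\nabla^j\ddfl{\psi_i}\|_{\LS^2_w(X)}$. For the mixed term $\ddfl{\psi_1}\,\ol{\psi_2}(v)$, put the $\LS_1$ norm on $\ol{\psi_2}$, which is bounded up to order $n-1$. The top-order piece $\nabla^n\ol{\psi_2}$ then pairs with $\ddfl{\psi_1}$ in $\LS_1$ at order $\le n-1$. That factor is itself bounded by $\ol{\psi_1}$, $(\ol{\psi_1})_0$ and $v(\Lie_v\ol{\psi_1})_0$, hence by $O(1)$. The top-order piece can therefore be absorbed into the second line of the claim, or alternatively into the first after using \eqref{Calcu_Diff_Condition_2}.

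\emph{Types (ii) and (iii).} For type (ii), apply \eqref{Calcu_weighted_norm_Sobolev} with $\ddf{\psi_i^c}$ in $\LS_1$. Its sup over spheres is $\lesssim z^a$ by \eqref{Calcu_Diff_Condition_2}, but only sphere-wise, so the factor $z^a$ has to be carried inside the integrated norm over $X$. I will handle this in one of two ways. Either I note that for the relevant spheres the dependence is monotone, so that $z^a$ can be pulled out as its value at the endpoint $(u,v)$. Or, more safely, I interpret $z^a\|\cdot\|_{\LS^2_w(X)}$ as the norm with the weight $z^{a}$ inserted, which is what the later applications need. This produces the term $z^a\|\nabla^j\ol{\psi_i}\|_{\LS^2_w(X)}$. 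For type (iii), $v^2=z^2(-u)^{2-2\k}$, and $(-u)^{1-\k}\Lie_v\psi_i^c(0)$ is bounded in $\LS_1$ by \eqref{Calcu_Diff_Condition_3}. The product estimate then gives $z^2\|\nabla^j(\Lie_v\ol{\psi_i})_0\|_{\LS^2_w(X)}$, with signatures matching because $s(\Lie_v\psi)=s(\psi)+\k-1$.

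\emph{Main obstacle.} I expect the main difficulty to be bookkeeping: checking that every coefficient generated by the expansion, including $\nabla^n$ falling on $\psi_1(0)$ or on $v\Lie_v\psi_1(0)$, is controlled at top order by the hypotheses. This works because \eqref{Calcu_Diff_Condition_1} and \eqref{Calcu_Diff_Condition_3} control these boundary quantities up to order $n$, not just $n-1$. The other point to treat carefully is handling the factor $z^a$ uniformly over $X$ when $a<0$.
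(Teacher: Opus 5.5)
Your proposal is essentially the paper's proof. It uses the same expansion from Lemma \ref{Product_rule_for_difference}, the same three types of terms, and the same product estimates in which the coefficients $(\psi_1)_0$, $v(\Lie_v\psi_1)_0$, $(\psi_2^c)_0$, $v(\Lie_v\psi_2^c)_0$ are controlled up to order $n$ in $\LS_1^2$.

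One correction concerns the mixed term $\ddfl{\psi_1}\ol{\psi_2}$: only your second option works in general, and it is exactly what the paper does. The paper writes $\ol{\psi_2}=(\ol{\psi_2})_0+v(\Lie_v\ol{\psi_2})_0+\ddfl{\psi_2}$, using \eqref{Calcu_Diff_Condition_3} for the middle piece, so the whole term is bounded by $\ddfl{\cdot}$ norms alone. Your first option leaves a term $O(1)\cdot\lnm\nabla^n\ol{\psi_2}\rnm_{\LS^2_w(X)}$. This cannot be absorbed into $\z^a\lnm\nabla^n\ol{\psi_2}\rnm_{\LS^2_w(X)}$ when $a>0$, and $a>0$ is the case used later (e.g.\ $a=2$ for $\Omega\tr\chib$).

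Your monotonicity observation is correct: on $H_u^v$, $\Hb_v^u$ and $\R_{u,v}$ the ratio $v/(-u)^{1-\k}$ is largest at the endpoint. This is what justifies the paper pulling out $\z^a$ and $\z^2$ without comment when $a\ge 0$.
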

\begin{proof}
    From the expression of $\ddfl{\psi_1\psi_2}$, we divide the terms into several types.
    \begin{enumerate}
        \item $(\psi_1)_0\ddfl{\psi_2},\ddfl{\psi_1}\ol{\psi_2},v\left(\Lie_v\psi_1\right)_0\ddfl{\psi_2},{\left({\psi_2}^c\right)_0}\ddfl{\psi_1},v\left(\Lie_v{\psi_2}^c\right)_0\ddfl{\psi_1}$.
        \item $\ddf{{\psi_1}^c}\ol{\psi_2},\ddf{{\psi_2}^c}\ol{\psi_1}$.
        \item $v^2\left(\Lie_v{\psi_2}^c\right)_0\left(\Lie_v\ol{\psi_1}\right)_0,v^2\left(\Lie_v{\psi_1}^c\right)_0\left(\Lie_v\ol{\psi_2}\right)_0$.
    \end{enumerate}
    For the first type, since we have the $\nabla^j$-estimate for $j\leq n$, for $(\psi_1)_0$, $v(\Lie_v\psi_1)_0$, $\left({\psi_2^c}\right)_0$, $v\left(\Lie_v{\psi_2}^c\right)_0$ we obtain that
    \begin{equation}
        \begin{aligned}
            &\sum_{j\leq n}\lnm \nabla^j\left((\psi_1)_0\ddfl{\psi_2},v\left(\Lie_v\psi_1\right)_0\ddfl{\psi_2},{\left({\psi_2}^c\right)_0}\ddfl{\psi_1},v\left(\Lie_v{\psi_2}^c\right)_0\ddfl{\psi_1}\right)\rnm_{\LS^2_w(X)}\\
            &\qquad\lesssim \sum_{k\leq n}\sup_{u,v}\lnm \nabla^k\left( (\psi_1)_0,v(\Lie_v\psi_1)_0,\left({\psi_2^c}\right)_0,v\left(\Lie_v{\psi_2}^c\right)_0\right)\rnm_{\LS^2_1(S_{u,v})}\cdot\\
            &\qquad\qquad\qquad\sum_{j\leq n}\lnm \nabla^j\left( \ddfl{\psi_2}, \ddfl{\psi_1} \right)\rnm_{\LS^2_w(X)}.
        \end{aligned}
    \end{equation}
    As for $\ddfl{\psi_1}\ol{\psi_2}$, note that $\ol{\psi_2}=\left(\ol{\psi_2}\right)_0+v\left(\Lie_v\ol{\psi_2}\right)_0+\ddfl{\psi_2}$. We have similar estimate 
    \begin{equation}
        \sum_{j\leq n}\lnm \nabla^j\left(\ddfl{\psi_1}\ol{\psi_2}\right)\rnm_{\LS^2_w(X)}
            \lesssim  \sum_{j\leq n}\lnm  \nabla^j\ddfl{\psi_1} \rnm_{\LS^2_w(X)}+\lnm \nabla^j \ddfl{\psi_2}  \rnm_{\LS^2_w(X)}.
    \end{equation}
    For the second type, we have by assumptions
    \begin{equation}
        \begin{aligned}
            &\sum_{j\leq n}\lnm \nabla^j\left(\ddf{{\psi_1}^c}\ol{\psi_2},\ddf{{\psi_2}^c}\ol{\psi_1}\right)\rnm_{\LS^2_w(X)}\\
            &\qquad\lesssim  \left(\frac{v}{(-u)^{1-\k}}\right)^a\sum_{j\leq n}\left(\lnm  \nabla^j\ol{\psi_1} \rnm_{\LS^2_w(X)}+\lnm \nabla^j \ol{\psi_2}  \rnm_{\LS^2_w(X)}\right).
        \end{aligned}
    \end{equation}
    Lastly, we estimate the third type. 
    \begin{equation}
        \begin{aligned}
            &\sum_{j\leq n}\lnm \nabla^j\left(v^2\left(\Lie_v{\psi_2}^c\right)_0\left(\Lie_v\ol{\psi_1}\right)_0\right)\rnm_{\LS^2_w(X)}\\
            =& \left(\frac{v}{(-u)^{1-\k}}\right)^2\sum_{j\leq n}\lnm \nabla^j\left(\left((-u)^{1-\k}\Lie_v{\psi_2}^c\right)_0\left((-u)^{1-\k}\Lie_v\ol{\psi_1}\right)_0\right)\rnm_{\LS^2_w(X)}\\
            \lesssim &\left(\frac{v}{(-u)^{1-\k}}\right)^2\sum_{j\leq n}\lnm \left(\Lie_v\ol{\psi_1}\right)_0\rnm_{\LS^2_w(X)},
        \end{aligned}
    \end{equation}
    and the remaining terms of the third type are treated in the same way.
\end{proof}
\begin{remark}
    Here we highlight the reasonability of conditions \eqref{Calcu_Diff_Condition_1}, \eqref{Calcu_Diff_Condition_2}, and \eqref{Calcu_Diff_Condition_3}. Because we know all necessary information of $(\cdot)^c$ quantities in \cite{christodoulou1993}, and we will show the differences between our solution and Christodoulou's solution are small, condition \eqref{Calcu_Diff_Condition_1} is natural. The initial values, $(\ol{\cdot})_0$, are prescribed already, which are bounded by $O(\epsilon)$ by previous arguments. The increasing rates of $\left[\psi^c\right]_0^v$ are exact, with minimal $a$ equal to $\frac{2\k}{1-\k}$, see Corollary \ref{increasing_rate_D4phi_c} and further arguments on the spherical symmetric self-similar naked singularity in Appendix \ref{Appendix_Chr94}. Similarly, condition \eqref{Calcu_Diff_Condition_3} only requires the initial values along $\Hb_0$, which is obviously true.
\end{remark}

\subsection{Bootstrap assumptions in Region I}
In region I, $\mathcal{R}_{I}=\{0\leq \frac{v^{1/(1-\k)}}{-u}\leq \epsilon_1\}$, we choose weight function $$w_I(u,v)=\left(\frac{(-u)^{1-\k}}{v}\right)^{1-2\delta}$$ and denote the correpsonding norms by $\LS_I$. 
We make bootstrap assumptions as following:
\begin{equation}\label{RI_Bootstrap_Assumptions}
    \left\{
    \begin{aligned}
        & \sum_{i=0}^{5}\left\|\nabla^{i} \widetilde{\Omega^{-2}\alpha}\right\|_{\LS_I^{2}(H_u^v)}^{2}+\left\|{\nabla}^i \wt{\Omega^{-1}\beta^r},\nabla^{i+1}\wt{\Omega^{-1}\chih}\right\|_{\LS_I^{2}(\Hb_v^u)}^{2}\leq B\epsilon^2 \left(\frac{v}{(-u)^{1-\k}}\right)^{2\delta}, \\
& \sum_{i=0}^{5}\left\|\nabla^{i} \df{\Psi_{1}}\right\|_{\LS_I^{2}(H_u^v)}^{2}+\left\|{\nabla}^i \df{\Psi_2}\right\|_{\LS_I^{2}(\Hb_v^u)}^{2}\leq B\epsilon^2 \left(\frac{v}{(-u)^{1-\k}}\right)^{2\k/(1-\k)}, \\
& \Psi_{1}= \Omega^{-1} \beta^r, \Omega \betab^r, K, \sigma^r, \quad \Psi_{2}=\Omega^{2} \alphab,  \Omega \betab^r, K, \sigma^r, \\
& \sum_{i=0}^{6}\lnm  \nabla^i\df{\Omega\chib},\nabla^i\df{\etab},\nabla^i\df{\Omega^{-1}\chi},\nabla^i\df{\eta},\nabla^i\df{\Omega^{-1}\omega}\rnm^2_{\LS_I^2(H_u^v)}\\
&\quad\quad + \lnm \nabla^i\df{\Omega\omegab} ,\nabla^i\df{\Omega^{-1}\tr\chi},\nabla^i\df{\Omega\chib},\nabla^i\df{\eta},\nabla^i\df{\etab}\rnm^2_{\LS_I^2(\Hb_v^u)} \\
&\quad\quad\quad\leq B \epsilon^2
\left(\frac{v}{(-u)^{1-\k}}\right)^{2\k/(1-\k)},\\ 
& \sum_{i=0}^{6}\left\|\nabla^{i} \wt{\Omega^{-1}D_4\phi},\nabla^i\df{\nabla\phi},\nabla^i\ddfl{\Omega\tr\chib}\right\|_{\LS_I^{2}(H_u^v)}^{2}\\
&\quad\quad+\left\|\nabla^{i} \ddfl{\Omega D_3\phi},\nabla^i\ddfl{\nabla\phi},\nabla^i \ddfl{\Omega\omegab}\right\|_{\LS_I^{2}(\Hb_v^u)}^{2} \leq B\epsilon^2\left(\frac{v}{(-u)^{1-\k}}\right)^{1+2\k/(1-\k)-2\delta}.
    \end{aligned}
    \right.
\end{equation}
We furthermore assume that 
\begin{equation}\label{RI_Bootstrap_Assumptions_2}
    \sum_{i=0}^5\lnm \nabla^i\ol{\psi}\rnm^2_{\LS_1^2(S_{u,v})}\lesssim B\epsilon^2\ll 1,\ \psi\in\{\Omega^{-1}\chi,\Omega\chib,\eta,\etab,\Omega\omegab,\Omega^{-1}\omega\}.
\end{equation}
With \eqref{RI_Bootstrap_Assumptions_2}, it follows immediately that all Ricci coefficients are bounded, which means 
\[\sum_{i=0}^5\lnm \nabla^i{\psi}\rnm^2_{\LS_1^2(S_{u,v})}\lesssim  1,\ \psi\in\{\Omega^{-1}\chi,\Omega\chib,\eta,\etab,\Omega\omegab,\Omega^{-1}\omega\}.\]
Similarly, \eqref{RI_Bootstrap_Assumptions} implies that for $\psi_1=\Omega\chib,\etab,\eta,\Omega^{-1}\chi,\Omega^{-1}\omega,\Omega^{-1}e_4\phi,\nabla\phi$, $\psi_2=\Omega\chib$, $\etab$, $\eta$, $\Omega^{-1}\tr\chi$, $\Omega\omegab$, $\Omega e_3\phi$, $\nabla\phi$, $\Psi_1=\Omega^{-1}\beta^r,\sigma^r,K,\Omega\betab^r$, $\Psi_2=\Omega^{-1}\beta^r,\sigma^r,K,\Omega\betab^r,\Omega^2\alphab$, we have the bound for their scale invariant norms with trivial weight $w=1$,
\[\sum_{i\leq 6,j\leq 5}\lnm \nabla^i\psi_1,\nabla^j\Psi_1\rnm^2_{\LS_1^2(H_u^v)}+\lnm \nabla^i\psi_2,\nabla^j\Psi_2\rnm^2_{\LS_1^2(H_v^u)}\lesssim 1.\] 
These two immediate consequences of the bootstrap assumptions will be used silently in the estimates below: they provide uniform control for all non-difference factors, while the weighted bootstrap bounds provide the small factor attached to whichever factor is a difference. Whenever a product is estimated, the relevant product lemma identifies which factor carries the difference and the uniform bound above controls the remaining factors.

In the rest of this section, we will prove that:
\begin{proposition}[Main Estimates in Region I]\label{Main Proposition R_I}
    Consider the Einstein scalar-field system $(\mathcal{M},g,\phi)$ in Theorem \ref{Main_Existence_Theorem_whole_paper}. If the bootstrap assumption \eqref{RI_Bootstrap_Assumptions} holds in region $\{-1\leq u<0,0\leq v^{\frac{1}{1-\k}}<\epsilon_1(-u)\}$, then we can improve the bootstrap assumptions so that $B\lesssim 1$ by choosing $\epsilon$, $\epsilon_1$ sufficiently small, which will not depend on each other.
\end{proposition}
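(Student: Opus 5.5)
The proof is a bootstrap in the usual hierarchical order. At each stage the error terms are estimated with the product lemmas for the four difference types (Lemma \ref{RI_Lemma_Calculation_of_ol}, Lemma \ref{RI_Lemma_Calculation_of_df}, Lemma \ref{RI_Lemma_Calculation_of_ddfl}), and the non-difference factors are bounded uniformly through \eqref{RI_Bootstrap_Assumptions_2}. Each improved bound should take the form $C\epsilon^2 z^{2\tau}(1+o_{\epsilon,\epsilon_1}(1)B)$ with $z=v/(-u)^{1-\k}$, so that $B\lesssim 1$ follows once $\epsilon,\epsilon_1$ are small.

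\textbf{Step 1: initial data on $H_{-1}$.} First I check that the data on $H_{-1}$ are compatible with every rate $z^{2\tau}$. The comparison lemma on $H_{-1}$ gives $\df{\cdot}\lesssim \epsilon v^{1-\k}$ for $g$ and $\Omega^{-1}\tr\chi$, and $\lesssim\epsilon v^{\k-\delta}$ for $\Omega^{-1}\chih$ and $\Omega^{-1}e_4\phi$; the bound \eqref{ID_triangle_chih_e4phi} controls the tilde quantities. Integrating these against the weight $w_I=z^{2\delta-1}$ produces exactly the exponents $2\delta$, $2\k/(1-\k)$ and $1+2\k/(1-\k)-2\delta$. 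On $\Hb_0$ all differences vanish by construction.

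\textbf{Step 2: Ricci coefficients and the scalar field.} For quantities governed by $\Omega^{-1}\nabla_4$ equations ($\df{\Omega\chib},\df{\etab},\df{\Omega^{-1}\chi},\dots$), I use the difference equations in the Remark and then the transport estimate \eqref{transport_4}. This gains a factor $z$, which absorbs the linear terms and leaves only curvature and known background increments $[\psi^c]_0^v\sim z^{2\k/(1-\k)}$. For the $\Omega\nabla_3$ quantities I use \eqref{transport_3} along $\Hb$. The coefficient $(-u)\mathcal{C}\approx -2+2s+(-u)\partial_u\log w+4\lambda$ is handled as in the introduction: when it is positive, it is absorbed provided $2\tau(1-\k)>(-u)\mathcal{C}$. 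For the worst ones, $\tr\chib$ and $D_3\phi$, this is why the second-order difference $\ddfl{\cdot}$ is needed. There the source $\df{\Omega^2(F+\Omega^{-1}\chi\psi)}$ carries one extra power $z$ beyond the $\df{}$ rate, which gives $z^{1+2\k/(1-\k)-2\delta}$. The pair $(\wt{\Omega^{-1}D_4\phi},\ddfl{\nabla\phi})$ is treated together. Because of the derivative loss, the $\nabla^6$ bounds on $\nabla\phi$ must come with this top-order energy estimate rather than from transport alone.

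\textbf{Step 3: curvature.} I apply the Bianchi-pair energy estimate \eqref{Energy Estimates} to $(\wt{\Omega^{-2}\alpha},\wt{\Omega^{-1}\beta^r})$, whose approximate equations \eqref{Eq_for_ot_alpha}, \eqref{Eq_for_ot_beta} cancel the leading singular part. I also apply it to the pairs $(\df{\Psi_1},\df{\Psi_2})$ after commuting with $\nabla^i$ via Lemma \ref{LEMMA_COMMUTATION_FORMULAE_NABLA}. The key numerical check is that the bulk coefficient $-1+2s(\Phi_1)-\k+(-u)\partial_u\log w+4\lambda+\Lambda\k/2$, with the $z$-weight included, is $\le-\delta$. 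The factor $w_I$ contributes $\partial_u\log w<0$, and $\partial_v\log w$ gives the good sign term on $H$. The terms $\psi\nabla\psi$ on the right-hand side need $\nabla^{i+1}$ of Ricci coefficients and scalar derivatives, which is why those are bootstrapped to order $6$. Finally, elliptic Hodge estimates \eqref{Ellip Est} recover the top derivatives of $\chih,\eta$ from $\beta^r,\sigma^r,K$, closing $\nabla^{i+1}\wt{\Omega^{-1}\chih}$.

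I expect the main obstacle to be the top-order scalar-field and $\tr\chib$ estimates. There the derivative loss from $D\phi D\phi$ meets the marginal exponent $1+2\k/(1-\k)-2\delta$, and one must verify that all cross terms from Lemma \ref{RI_Lemma_Calculation_of_ddfl} (especially $z^a\ol{\psi}$ and $z^2(\Lie_v\ol\psi)_0$) carry at least that power with an $\epsilon$ or $o_{\epsilon_1}(1)$ factor. The constraint $\k<1/3$ should enter here, since it ensures $2\k/(1-\k)<1$. If this closes, \eqref{RI_Bootstrap_Assumptions_2} follows from the weighted bounds by Sobolev \eqref{Sobolev_W22} and the boundedness of $z\le\epsilon_1$.
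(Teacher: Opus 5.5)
Your overall order (lower-order transport, Bianchi-pair energy estimates, elliptic recovery, and the second-order subtraction $\ddfl{\cdot}$ for $\tr\chib$) matches the paper. However, the step you call the key numerical check for the curvature pairs is false, and the mechanism that actually closes those pairs is missing.

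Put the $\Omega\nabla_3$ equations into the form of \eqref{Energy Estimates}, with $(-u)\partial_u\log w_I=-(1-\k)(1-2\delta)$.
\begin{itemize}
\item For $\Omega\betab^r$ one has $\lambda=2$, $\Lambda=-4$, $s=-1$. The bulk coefficient for $(\df{\Omega\betab^r},\df{\Omega^2\alphab})$ is then $4-2\k+2\delta(1-\k)>0$.
\item For $(\df{\Omega^{-1}\beta^r},(\df{K},\df{\sigma^r}))$ one has $\lambda=1$, $\Lambda=4$, and the coefficient is $2\k+2\delta(1-\k)$. This exceeds $2\tau(1-\k)=2\k$ for the target rate $\z^{2\k/(1-\k)}$, so even your Step 2 growth-rate absorption fails.
\item Multiplying $w_I$ by a power of $\z$ does not help: it shifts the coefficient and $2\tau(1-\k)$ by the same amount.
\end{itemize}

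What the paper does instead (proof of Proposition \ref{R_I_Estimate_beta_sigma_K_betab_alphab}) is keep $\|\nabla^5\df{\Psi_1}\|^2_{\LS_I^2(\R_{u,v})}$ as an error term. It bounds this term by incoming fluxes, $\|\Phi\|^2_{\LS_I^2(\R_{u,v})}\lesssim\int_0^v(-u)^{\k-1}\|\Phi\|^2_{\LS_I^2(\Hb_{v'}^u)}\,dv'$, which gains a factor $\z$ that is small in Region I.
\begin{itemize}
\item This works for $\sigma^r$, $K$, $\Omega\betab^r$ because they also belong to the $\Psi_2$ list and so carry $\Hb$-flux bounds.
\item $\Omega^{-1}\beta^r$ has no $\df{\cdot}$ incoming flux. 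One writes $\df{\Omega^{-1}\beta^r}=\wt{\Omega^{-1}\beta^r}+(\ot{\Omega^{-1}\beta^r}-(\Omega^{-1}\beta^r)_0)$ and uses the $\Hb$-flux of $\wt{\Omega^{-1}\beta^r}$. This gives \eqref{R_I_ENERGY_EST_df_betar}.
\end{itemize}
Consequently the $(\wt{\Omega^{-2}\alpha},\wt{\Omega^{-1}\beta^r})$ estimate must be closed first. Among the curvature pairs it is the only one where the sign argument works, with coefficient $-2+4\k+2\delta(1-\k)<0$.

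The second gap is the top-order recovery of the Ricci coefficients. Your outgoing transport equations lose a derivative at order six. $\Omega^{-1}\nabla_4$ of $\Omega\chib$, $\Omega\omegab$, $\eta$ contains $\nabla\etab$, $K$, $\beta^r$, of which only five derivatives are bootstrapped. Your elliptic step also cannot recover $\eta$ from $\beta^r,\sigma^r,K$, since $\dv\eta$ is not determined by the curvature. The paper needs three additional ingredients.
\begin{itemize}
\item The mass aspects $\mu=-\dv\eta+K$ and $\mub=-\dv\etab+K$, with their own transport equations. The $\mub$ equation contains $\Omega\chibh\cdot\nabla^6\etab$, which closes only through the smallness of $\Omega\chibh$ in a sub-bootstrap.
\item The auxiliary potential $\Omega\omegab^\dagger$ and the associated Hodge systems for $\ddfl{\Omega\omegab}$.
\item A coupled chain for $\Omega\chib$. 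The $\nabla_3$ transport of $\nabla^6\ddfl{\Omega\tr\chib}$ contains $\Omega\chibh\cdot\nabla^6\ddfl{\Omega\chibh}$. Via Codazzi, this requires $\nabla^5\ddfl{\Omega\betab^r}$ at the enhanced rate $\z^{1+2\k/(1-\k)-\delta}$. That is a separate energy estimate for $((\df{\sigma^r},\df{K}),\ddfl{\Omega\betab^r})$, namely \eqref{R_I_Top_order_est_for_ddfl_betab}, which your plan does not contain.
\end{itemize}

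Finally, the approximating equations use operators frozen at $v=0$. Commuting $\nabla^6$ directly with the $\wt{\Omega^{-1}D_4\phi}$ equation therefore produces $\nabla^6[\dv]_0^v(\nabla\phi)_0$, i.e.\ seven derivatives of $[g]_0^v$, which are not controlled. The paper avoids this by first differentiating with $\partial_{\theta^A}$ and using $[\Lie_{\Omega e_3},\Lie_{e_A}]=-\Lie_{\partial_A b}$.
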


\subsection{Estimates for metric components}
With defined differences, we first write the equations:
\begin{equation*}
    \begin{aligned}
        \Lie_v\ddfl{b^A}=2\df{\Omega^2(\etab^A-\eta^A)},\ \Lie_v\ddfl{g_{AB}}=2\df{\Omega^2(\Omega^{-1}\chi)_{AB}},
    \end{aligned}
\end{equation*}
\begin{equation*}
    \Lie_v\ddfl{\log\Omega^2}=-4\df{\Omega^2(\Omega^{-1}\omega)}.
\end{equation*}
From transport estimate \eqref{transport_4}, the following schematic equation holds for $\gamma=b^A,g_{AB}$, and $\log\Omega^2$:
\begin{equation}\label{R_I_EST_Eq_Lie_v_nabla_gamma}
    \Lie_v\nabla^j\ddfl{\gamma}=\sum_{i_1+i_2+i_3=j}\nabla^{i_1}(\Omega^2)\nabla^{i_2}(\Omega^{-1}\chi)\nabla^{i_3}\ddfl{\gamma}+\nabla^j\df{\Omega^2\psi},
\end{equation}
with $\psi$ satisfies bootstrap assumption
\begin{equation}
    \sum_{j\leq 6}\lnm \nabla^j\df\psi\rnm^2_{\LS^2_I(H_{u}^v)}\lesssim B\epsilon^2\left(\frac{v}{(-u)^{1-\k}}\right)^{2\k/(1-\k)}.
\end{equation}
We can now estimate the metric components.
\begin{proposition}\label{R_I_Estimate_metric}
    For $\gamma=b,g,\log\Omega^2$, we find that 
    \begin{equation}
         \sum_{i\leq 6}\lnm \nabla^i\ddfl{\gamma}\rnm^2_{\LS_I^2(S_{u,v})}\lesssim B\epsilon^2\left(\frac{v}{(-u)^{1-\k}}\right)^{1+2\k/(1-\k)}\ll  \epsilon^2\z^{1+2\delta}.
    \end{equation}
\end{proposition}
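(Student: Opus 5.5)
The plan is to integrate the schematic transport equation \eqref{R_I_EST_Eq_Lie_v_nabla_gamma} in $v$ from $v=0$, where $\nabla^j\ddfl{\gamma}$ vanishes by construction. By definition $\ddfl{\gamma}=\ol\gamma-(\ol\gamma)_0-v(\Lie_v\ol\gamma)_0$, so both $\ddfl{\gamma}|_{v=0}=0$ and $\Lie_v\ddfl{\gamma}|_{v=0}=0$. We therefore apply the transport inequality \eqref{transport_4} with $\psi|_{v_0(u)}=0$, in the form of the remark following it:
\begin{equation*}
\sum_{j\le 6}\lnm\nabla^j\ddfl{\gamma}\rnm^2_{\LS_I^2(S_{u,v})}\lesssim \z\,\sum_{j\le 6}\lnm \Omega^{-2}\Lie_v\nabla^j\ddfl{\gamma}\rnm^2_{\LS_I^2(H_u^v)}.
\end{equation*}
Here the signature bookkeeping makes $\Omega^{-2}\Lie_v$ play the role of $\Omega^{-1}\nabla_4$ up to the $\Omega^{-1}\chi$ term, which is already included in \eqref{transport_4}. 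We also use that $w_I$ is decreasing in $v$ and in $-u$ in the required sense.

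Next we bound the right-hand side using \eqref{R_I_EST_Eq_Lie_v_nabla_gamma}. The source $\Omega^{-2}\nabla^j\df{\Omega^2\psi}$ is handled by Lemma \ref{RI_Lemma_Calculation_of_df}, applied with $\psi_1=\Omega^2$ (renormalized as $\Omega^2/(-u)^{\k}$ in the new gauge) and $\psi_2\in\{\etab-\eta,\Omega^{-1}\chi,\Omega^{-1}\omega\}$. This gives
\begin{equation*}
\sum_{j\le6}\lnm\nabla^j\df{\Omega^2\psi}\rnm_{\LS_I^2(H_u^v)}\lesssim \sum_{j\le 6}\lnm\nabla^j\df{\psi},\nabla^j\df{\Omega^2}\rnm_{\LS_I^2(H_u^v)}+\epsilon\z^{\k/(1-\k)}.
\end{equation*}
The first term is controlled by the bootstrap assumption \eqref{RI_Bootstrap_Assumptions} as $B^{1/2}\epsilon\z^{\k/(1-\k)}$. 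The Christodoulou increment $[\psi^c]_0^v$ has rate $a=\k/(1-\k)$ from the background asymptotics in Appendix \ref{Appendix_Chr94}.

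The term $\df{\Omega^2}$ is where the argument couples back to the unknown. We write $\df{\Omega^2}=v(\Lie_v\ol{\Omega^2})_0+\ddfl{\Omega^2}$ and express $\ddfl{\Omega^2}$ through $\ddfl{\log\Omega^2}$. The first piece is $\epsilon\z$, which is better than $\epsilon\z^{\k/(1-\k)}$ because $\k<1/3$ gives $\k/(1-\k)<1$. The second piece is bounded by $\sup_{v'\le v}\lnm\nabla^j\ddfl{\gamma}\rnm_{\LS^2_I(S)}$. The linear terms $\nabla^{i_1}\Omega^2\nabla^{i_2}(\Omega^{-1}\chi)\nabla^{i_3}\ddfl{\gamma}$ are treated the same way: the non-difference factors are bounded by the consequences of \eqref{RI_Bootstrap_Assumptions_2} through Lemma \ref{Lemma_Sobolev_Ineq_product}. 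Altogether we obtain
\begin{equation*}
X(u,v):=\sup_{v'\le v}\sum_{j\le6}\lnm\nabla^j\ddfl{\gamma}\rnm^2_{\LS_I^2(S_{u,v'})}\lesssim \z\Big(B\epsilon^2\z^{2\k/(1-\k)}+\z X(u,v)\Big).
\end{equation*}
Since $z\le\epsilon_1$ in Region I, the $X$ term on the right can be absorbed, which yields the claimed $B\epsilon^2\z^{1+2\k/(1-\k)}$. The final comparison $\ll\epsilon^2\z^{1+2\delta}$ then follows from $2\k/(1-\k)>2\delta$ together with the smallness of $z$.

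The main obstacle is the top order $j=6$. The transport estimate loses no derivatives only if the sources $\nabla^6\df{\psi}$ are available in $H_u^v$ flux norms, which requires the six-derivative bootstrap line for $\eta,\etab,\Omega^{-1}\chi,\Omega^{-1}\omega$. This is why \eqref{RI_Bootstrap_Assumptions} is imposed up to order $6$ on those quantities. One must also check that the commutator $[\Lie_v,\nabla^j]$ produces only terms of the form $\nabla^{i}(\Omega\chi)\nabla^{j-i}\ddfl{\gamma}$, as in Corollary \ref{CORROLLARY_COMMUTATION_FORMULA_LIE}, with $i\le 6$. A further point needing care is the case $\gamma=g$: there $\nabla$ itself depends on $g$, so the equation must be read in terms of coordinate-derivative or Christoffel differences. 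I would handle this by estimating $\ddfl{\Gamma}$ alongside $\ddfl{g}$ through its own transport equation with source $\nabla\df{\Omega^2\Omega^{-1}\chi}$, which closes at the same rate.
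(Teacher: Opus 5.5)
Your proposal is correct and follows essentially the paper's route. You integrate \eqref{R_I_EST_Eq_Lie_v_nabla_gamma} from the vanishing data at $v=0$ via \eqref{transport_4}, expand the sources with Lemma \ref{RI_Lemma_Calculation_of_df} at rate $a=\k/(1-\k)$ against the order-six flux bootstrap, and absorb the self-coupling terms using the smallness of $\z$; these come from $\df{\Omega^2}$ and from the top-order commutator $\nabla^6(\Omega^{-1}\chi)\,\ddfl{\gamma}$, with $\ddfl{\gamma}$ placed in a weighted sup norm.

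The differences from the paper are cosmetic. The paper closes the absorption with an auxiliary bootstrap constant $B_1$ on $\ddfl{\gamma}$, which also handles the mildly nonlinear dependence through $\Omega^2=e^{\log\Omega^2}$. It also needs no separate Christoffel estimate for $\gamma=g$, because the commutator formula of Corollary \ref{CORROLLARY_COMMUTATION_FORMULA_LIE} already treats $\nabla^j\ddfl{g}$ directly.
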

\begin{proof}
    We first make bootstrap assumption for this proposition: 
    \begin{equation}
        \sum_{i\leq 6}\lnm \nabla^i\ddfl{\gamma}\rnm^2_{\LS_I^2(S_{u,v})}\leq B_1\epsilon^2\z^{1+2\delta}\ll \epsilon^2.
    \end{equation}
    The metric $g$ is close to $g^c$, and especially we have
    $$\lnm \nabla\left({\sqrt{\det g(u,v)\cdot\det (g^c)^{-1}(u,v)}}\right)\rnm_{L^\infty(\SS)}\leq \epsilon.$$

    Because
$\log\left(\Omega/\Omega^c\right)=\log\left(\Omega/\Omega^c\right)_0-2v\left(\ol{\Omega\omega}\right)_0+\ddfl{\log\Omega},$ which gives in particular 
    $$\sum_{i\leq 6}\lnm \nabla^i\log(\Omega/\Omega^c)\rnm^2_{\LS_I^2(S_{u,v})}\ll \epsilon^2.$$
    Therefore the lapse $\Omega$ is close to $\Omega^c$ with estimate $|\Omega/\Omega^c-1|\sim |\log(\Omega/\Omega^c)|\ll \epsilon$. Moreover, the relation below holds for $j\in\mathbb{N}$,
    \begin{equation}
        \nabla^{j+1}\left(\Omega/\Omega^c\right)\sim\sum_{i=0}^j\left(\Omega/\Omega^c\right)\nabla^{i} \left(\nabla\ol{\log\Omega}\right)^{j-i+1}.
    \end{equation}
We hence obtain $\sum_{j\leq 6}\lnm \nabla^j(\Omega/\Omega^c-(\Omega/\Omega^c)_0)\rnm_{\LS_1^2(S_{u,v})}\sim \sum_{j\leq 6}\lnm \nabla^j\ol{\log\Omega}\rnm_{\LS_1^2(S_{u,v})}\ll \epsilon$, and 
\begin{equation}
    \sum_{j\leq 6}\lnm \nabla^j(\Omega^2)\rnm_{\LS_1^2(S_{u,v})}\lesssim \sum_{j\leq 6}\lnm \nabla^j\ol{\log\Omega}\rnm_{\LS_1^2(S_{u,v})}+\lnm (\Omega^c)^2\rnm_{\LS_1^2(S_{u,v})}\sim 1.
\end{equation} 
The estimate just before this point is used here to treat every factor $\Omega^2$ in \eqref{R_I_EST_Eq_Lie_v_nabla_gamma} as uniformly controlled. Thus the transport estimate \eqref{transport_4} is applied only to the unknown metric remainder, while the remaining source term is reduced to the difference of a product. 
    From equation \eqref{R_I_EST_Eq_Lie_v_nabla_gamma}, it follows that 
    \begin{equation}
        \begin{aligned}
            \sum_{j\leq 6}\lnm \nabla^j\ddfl{\gamma}\rnm^2_{\LS_I^2(S_{u,v})}
            \lesssim & \frac{v}{(-u)^{1-\k}}\sum_{j\leq 6}\lnm \Lie_v\nabla^j\ddfl{\gamma}\rnm^2_{\LS_I^2(H_u^v)}\\
            \lesssim & \frac{v}{(-u)^{1-\k}}\sum_{j\leq 6}\lnm \nabla^j\ddfl{\gamma},\nabla^j\df{\Omega^2(\Omega^{-1}\omega,\Omega^{-1}\chi,\eta,\etab)}\rnm^2_{\LS_I^2(H_u^v)}\\
            &+\frac{v}{(-u)^{1-\k}}\sum_{i\leq 2}\sup_{v^\prime\leq v}\lnm \nabla^i\ddfl{\gamma}\rnm^2_{\LS_1^2(S_{u,v^\prime})}\lnm \nabla^6(\Omega^2,\Omega^{-1}\chi)\rnm^2_{\LS_I^2(H_u^v)}.
        \end{aligned}
    \end{equation}
    The last term comes from the top order terms of $\nabla^6\Omega$ and $\nabla^6(\Omega^{-1}\chi)$ from the commutators. Using Lemma \ref{RI_Lemma_Calculation_of_df} with $a=\frac{\k}{1-\k}$, we then obtain  
    \begin{equation}
        \begin{aligned}
            &\sum_{j\leq 6}\lnm \nabla^j\ddfl{\gamma}\rnm^2_{\LS_I^2(S_{u,v})}\\
            \lesssim & \frac{v}{(-u)^{1-\k}}\sum_{i\leq 6}\lnm \nabla^6\left(\df{\Omega^2},\df{\Omega^{-1}\chi},\df{\Omega^{-1}\omega},\df{\eta},\df\etab\right)\rnm^2_{\LS_I^2(H_u^v)}\\
            &+\frac{v}{(-u)^{1-\k}}\sum_{i\leq 6}\sup_{v^\prime\leq v}\lnm \nabla^i\ddfl{\gamma}\rnm^2_{\LS_I^2(S_{u,v^\prime})}+\epsilon^2\z^{1+2\k/(1-\k)}\\
            \lesssim & (B+1)\epsilon^2\z^{1+2\k/(1-\k)}+B_1\epsilon^2\z^{2+2\delta} \ll \epsilon^2\z^{1+2\delta}.
        \end{aligned}
    \end{equation} 
The application of Lemma \ref{RI_Lemma_Calculation_of_df} converts each term of the form $\df{\Omega^2\psi}$ into a sum involving $\df{\Omega^2}$, the difference of $\psi$, and the known Christodoulou increment. The estimates used at this point are the bootstrap bounds \eqref{RI_Bootstrap_Assumptions}, the uniform norm consequence above, and the small metric bootstrap assumption made at the start of the proof.
\end{proof}

Following arguments strictly analogous to those for ${\Omega}$ in the preceding proof, we can efficiently estimate $\df{\Omega}$.
\begin{corollary}
    For $n\in \mathbb{Z}$, the following estimate holds for the difference $\df{\Omega^{2n}}$:
    \begin{equation}\label{Estimate_of_df_Omega}
        \sum_{j\leq 6}\lnm \nabla^j\left(\df{\Omega^{2n}}\right)\rnm^2_{\LS^2_1(S_{u,v})}\sim (-u)^{-2n\k}\lnm \df{\Omega^{2n}}\rnm^2_{H^6(\SS)}\lesssim \epsilon^2\z^{2}.
    \end{equation}
\end{corollary}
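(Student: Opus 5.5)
The estimate follows from the decomposition of $\log(\Omega/\Omega^c)$ already used in the proof of Proposition \ref{R_I_Estimate_metric}. Recall that
\[
\log\frac{\Omega}{\Omega^c}=\Big(\log\frac{\Omega}{\Omega^c}\Big)_0-2v\left(\ol{\Omega\omega}\right)_0+\ddfl{\log\Omega},
\]
where $\ddfl{\log\Omega}=\frac12\ddfl{\log\Omega^2}$. Subtracting the value at $v=0$ gives
\[
\df{\log\Omega}=-2v\left(\ol{\Omega\omega}\right)_0+\ddfl{\log\Omega}.
\]
In the renormalized gauge the combination $v\,\Omega\omega$ carries signature $s=1-\k$ relative to $\Omega^{-1}\omega$. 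Hence $v(\ol{\Omega\omega})_0$ is the quantity $v$ times an initial datum on $\Hb_0$. That datum is $O(\epsilon)$ in $H^6(\SS)$ after scaling, by the construction of the initial data and Lemma \ref{Lie_Propagation_Sol}. It therefore contributes $\epsilon\,\z$ in the $\LS_1^2(S_{u,v})$ norm, together with its angular derivatives up to order $6$. The term $\ddfl{\log\Omega^2}$ is bounded by Proposition \ref{R_I_Estimate_metric}, which gives a rate at least $\epsilon\,\z^{1/2+\k/(1-\k)}$ in the weighted norm. Passing from the weight $w_I$ to the weight $1$ costs a factor $w_I^{-1/2}=\z^{1/2-\delta}$. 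The net power is then $\z^{1+\k/(1-\k)-\delta}$, which is $\lesssim\z$ for $\delta<\k/(1-\k)$, since $\z\le\epsilon_1$. Altogether $\|\df{\log\Omega}\|_{H^6}\lesssim\epsilon\z$.

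Next we pass from $\log\Omega$ to $\Omega^{2n}$. We write $\Omega^{2n}=(\Omega^c)^{2n}e^{2n\log(\Omega/\Omega^c)}$. Then
\[
\df{\Omega^{2n}}=(\Omega^c)^{2n}\Big(e^{2n\log(\Omega/\Omega^c)}-e^{2n\log(\Omega/\Omega^c)_0}\Big)+\Big((\Omega^c)^{2n}-(\Omega^c)_0^{2n}\Big)\Big(e^{2n(\log\Omega/\Omega^c)_0}-1\Big).
\]
For the first term we use the mean value theorem. By the Sobolev product Lemma \ref{Lemma_Sobolev_Ineq_product}, with $\log(\Omega/\Omega^c)$ small in $H^6$, it is bounded by $C(n)(-u)^{n\k}\|\df{\log\Omega}\|_{H^6}$. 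The angular derivatives of the spherically symmetric factor $(\Omega^c)^{2n}$ vanish, so the same chain-rule relation $\nabla^{j+1}(\Omega/\Omega^c)\sim\sum(\Omega/\Omega^c)\nabla^i(\nabla\ol{\log\Omega})^{j-i+1}$ from the preceding proof applies.

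The second term is $\epsilon$ times the background increment $[(\Omega^c)^{2n}]_0^v$. We do not expect this term to follow from the metric proposition alone; it requires the explicit Christodoulou asymptotics. In the new gauge we have $(\Omega^c)^2=v^\k(\Omega^c_{old})^2$. Using Theorem \ref{thm:background_properties} and the bound $v\Omega\omega-\k/4=O(z^{1-\k})$ near $z=0$ from Appendix \ref{Appendix_Chr94}, the increment is $(-u)^{n\k}\,O(\z)$, which yields $\epsilon\,\z$.

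Squaring the two contributions gives \eqref{Estimate_of_df_Omega}. The equivalence
\[
\sum_{j}\|\nabla^j\cdot\|^2_{\LS_1^2}\sim(-u)^{-2n\k}\|\cdot\|^2_{H^6(\SS)}
\]
follows from $s(\Omega^{2n})=n(\k+1)$ by the signature convention, together with $g\approx(-u)^2g^{\SS}$.
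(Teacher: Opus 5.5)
Your proposal is correct and follows the paper's own route. You bound $\df{\log\Omega}=-2v(\ol{\Omega\omega})_0+\ddfl{\log\Omega}$ by $\epsilon\z$, transfer this to $(\Omega/\Omega^c)^{2n}$ by the chain rule in $H^6(\SS)$, and control separately the cross term, namely the initial lapse difference times the Christodoulou increment of $(\Omega^c)^{2n}$. Your splitting identity is the correctly normalized version of the paper's display, which drops a factor $(\Omega^c)^{2n}$ on its first two terms. Your appeal to Appendix~\ref{Appendix_Chr94} for the $O(\z)$ background increment also supplies a justification that the paper only asserts.

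There is one slip in the final step. By the product convention, $s(\Omega^{2n})=n\k+1$, not $n(\k+1)$. Only the value $n\k+1$ gives $(-u)^{2-2s}=(-u)^{-2n\k}$, as required by the stated equivalence.
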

\begin{proof}
    Since $\left|\df{\log\Omega}\right|=\left|\log\left(\Omega(\Omega^c)^{-1}(\Omega^c)_0(\Omega_0)^{-1}\right)\right|\lesssim \epsilon\z$ and $\Omega_0/(\Omega^c)_0\sim 1$, the following relation holds 
    \[|(\Omega/\Omega^c)^{2n} - (\Omega_0/(\Omega^c)_0)^{2n}|\lesssim \epsilon\z.\]
    From identity 
    \begin{equation}
        \begin{aligned}
            &\nabla^{j+1}\left((\Omega/\Omega^c)^{2n}-(\Omega_0/(\Omega^c)_0)^{2n}\right)\\
            = &\sum_{i=0}^j C_i \left((\Omega/\Omega^c)^{2n}\nabla^{i}\left(\nabla\log(\Omega/\Omega^c)\right)^{j-i+1}-(\Omega_0/(\Omega^c)_0)^{2n}\nabla^{i}\left(\nabla\log(\Omega_0/(\Omega^c)_0)\right)^{j-i+1}\right)\\
            = & \sum_{i=0}^jC_i\left(\left((\Omega/\Omega^c)^{2n}-(\Omega_0/(\Omega^c)_0)^{2n}\right)\nabla^{i}\left(\nabla{\log(\Omega/\Omega^c)}\right)^{j-i+1}\right.\\
            &\qquad\qquad\qquad\left.+(\Omega_0/(\Omega^c)_0)^{2n}\nabla^{i}\left[\left(\nabla\log\left(\Omega/\Omega^c\right)\right)^{j-i+1}\right]_0^v\right),
        \end{aligned}
    \end{equation}
    where $C_i$'s are some constantg coefficients,
    we can obtain that $$\lnm (\Omega/\Omega^c)^{2n}-(\Omega_0/(\Omega^c)_0)^{2n}\rnm_{H^6(\SS)}\lesssim\epsilon\z.$$ Because of
    \begin{equation}
        \df{\Omega^{2n}}=\frac{\Omega^{2n}}{(\Omega^c)^{2n}}-\frac{(\Omega_0)^{2n}}{(\Omega^c)_0^{2n}}-(\Omega^c)^{2n}\left((\Omega_0)^{2n}-(\Omega^c)_0^{2n}\right)\left(\frac{1}{(\Omega^c)^{2n}}-\frac{1}{(\Omega^c)_0^{2n}}\right),
    \end{equation}
    and $\lnm \left((\Omega_0)^{2n}-(\Omega^c)_0^{2n}\right)\left(\frac{1}{(\Omega^c)^{2n}}-\frac{1}{(\Omega^c)_0^{2n}}\right)\rnm_{H^6(\SS)}\lesssim\epsilon\z$, we can further deduce that 
    $$\lnm \df{\Omega^{2n}}\rnm_{H^6(\SS)}\lesssim (\Omega^c)^{2n}\cdot\epsilon \frac{v}{(-u)^{1-\k}}.$$
    This gives the desired bound.
\end{proof}
\begin{remark}
    Because we have proved $\Omega/\Omega^c\sim 1$ and $\nabla^j\Omega\sim o_{L^2}(1)$ for $j\leq 6$, which means $\Omega^2(u,v)\sim (-u)^\k$, we can then ignore $\Omega$ in the scale-invariant norms, $$\lnm \Omega^s\psi\rnm^2_{\LS_w^2(S_{u,v})}=(1+o_{\epsilon,\epsilon_1}(1))\lnm \psi\rnm^2_{\LS_w^2(S_{u,v})},\ \sum_{j\leq 6}\lnm \nabla^j(\Omega^s\psi)\rnm^2_{\LS_w^2(S_{u,v})}\sim \sum_{j\leq 6}\lnm \nabla^j\psi\rnm^2_{\LS_w^2(S_{u,v})}. $$
\end{remark}

From the estimates above, the following estimates holds 
\begin{equation}
    \sum_{i\leq 6}\lnm \nabla^i\ol{g}\rnm^2_{\LS_1^2(S_{u,v})}\lesssim \epsilon^2,\ \sum_{i\leq 6}\lnm \nabla^i\left[g\right]_0^v\rnm^2_{\LS_1^2(S_{u,v})}\lesssim \z^2,
\end{equation}
which indicate the estimates for $\ol{\nabla}$ and $\left[\nabla\right]_0^v$.
\begin{corollary}\label{Estimate_of_tilde_nabla}
    For any tensor $\psi\in H^6(\SS)$, we find that 
    \begin{equation}
        \sum_{i\leq 6}\lnm \ol{{\nabla}^i}\psi\rnm_{L^2(\SS)}^2\lesssim \epsilon^2\lnm \psi\rnm^2_{H^5(\SS)},\ \sum_{i\leq 6}\lnm \left[{{\nabla}^i}\right]_0^v\psi\rnm_{L^2(\SS)}^2\lesssim  \z^2\lnm \psi\rnm^2_{H^5(\SS)},
    \end{equation}
    where $\ol{\nabla^i}_{A_1\cdots A_i}={(\nabla_g)}^i_{A_1\cdots A_i}-(\nabla_{{g^c}})^i_{A_1\cdots A_i}$ and $\left[{{\nabla}^i}\right]_0^v={(\nabla_g)}^i_{A_1\cdots A_i}-(\nabla_{{g_0}})^i_{A_1\cdots A_i}$. 
\end{corollary}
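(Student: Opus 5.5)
The plan is to express $\ol{\nabla^i}\psi$ and $\left[\nabla^i\right]_0^v\psi$ entirely through differences of Christoffel symbols. The estimates just proved control these differences through $\ol{g}$ and $\left[g\right]_0^v$.

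For two metrics $g,\tilde g$ on $\SS$ (with $\tilde g$ either $g^c$ or $g_0$), the difference of their Levi-Civita connections is the tensor
\[
\Gamma-\tilde\Gamma=\tfrac12 g^{-1}\bigl(\tilde\nabla(g-\tilde g)\bigr),
\]
where the bracket stands for the usual combination $\tilde\nabla_A(g-\tilde g)_{BC}+\tilde\nabla_B(g-\tilde g)_{AC}-\tilde\nabla_C(g-\tilde g)_{AB}$. Thus $\Gamma-\tilde\Gamma$ involves one derivative of the metric difference. For $i=1$, $(\nabla-\tilde\nabla)\psi=(\Gamma-\tilde\Gamma)\ast\psi$, with no derivative falling on $\psi$.

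For general $i$, I would argue by induction, writing $\nabla^i-\tilde\nabla^i=\nabla(\nabla^{i-1}-\tilde\nabla^{i-1})+(\nabla-\tilde\nabla)\tilde\nabla^{i-1}$. This gives the schematic expansion
\[
\nabla^i\psi-\tilde\nabla^i\psi=\sum_{\substack{k\ge1,\ j_1+\cdots+j_k+l=i-k}}\tilde\nabla^{j_1}(\Gamma-\tilde\Gamma)\ast\cdots\ast\tilde\nabla^{j_k}(\Gamma-\tilde\Gamma)\ast\tilde\nabla^{l}\psi,
\]
with coefficients built from $g^{-1}$ and $\tilde g$. In particular at most $i-1$ derivatives land on $\psi$, at most $i$ derivatives land on $g-\tilde g$, and every term carries at least one factor of $\Gamma-\tilde\Gamma$.

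Next I would estimate the terms in $L^2(\SS)$ for $i\le6$ using the product inequality (Lemma \ref{Lemma_Sobolev_Ineq_product}, or its $H^n$ form with $n\ge3$). I place the highest-order factor in $L^2$ and all others in $L^\infty$ via the Sobolev embedding \eqref{Sobolev_W22}.

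The metric factors are controlled by $\sum_{i\le6}\|\nabla^i\ol g\|^2_{\LS_1^2}\lesssim\epsilon^2$, respectively $\sum_{i\le6}\|\nabla^i[g]_0^v\|^2_{\LS_1^2}\lesssim z^2$. Since the scale-invariant norm with weight $1$ is equivalent to $H^6(\SS)$ up to the fixed powers of $(-u)$ from the signatures, this gives $\|\Gamma-\tilde\Gamma\|_{H^5}\lesssim\epsilon$, respectively $\lesssim z$. Because each term contains at least one such factor and the rest are $O(1)$, each term is bounded by $\epsilon\|\psi\|_{H^5}$ (respectively $z\|\psi\|_{H^5}$). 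Squaring and summing gives the claim.

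The main obstacle is the top order $i=6$. The worst terms are $\tilde\nabla^5(\Gamma-\tilde\Gamma)\ast\psi$, which needs six derivatives of $g-\tilde g$ and is fine since we control $H^6$, and $(\Gamma-\tilde\Gamma)\ast\tilde\nabla^5\psi$, which needs only five derivatives of $\psi$, explaining the $H^5$ norm on the right side.

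One further point needs care: the bounds are stated with $\nabla_g$, while the expansion uses $\tilde\nabla$ and $\tilde g^{-1}$. Since $g$, $g^c$ and $g_0$ are uniformly comparable to $(-u)^2g^{\SS}$ by the bootstrap bounds, the $H^k$ norms taken with respect to any of these metrics are equivalent. This equivalence itself follows from the same expansion at lower order, by induction on $k$.
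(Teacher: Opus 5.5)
Your proof is correct and follows essentially the same route as the paper: both write the first-order difference as the Christoffel-difference tensor $\ol{g^{-1}\partial g}$ acting on $\psi$, induct on $i$ via a telescoping identity, and close with the product estimate together with the $H^6$ bounds on $\ol g$ and $[g]_0^v$. The differences are cosmetic. The paper peels off the innermost derivative, $\ol{\nabla^i}\psi=\nabla_g^{i-1}\ol{\nabla}\psi+\ol{\nabla^{i-1}}\nabla_{g^c}\psi$, and applies the induction hypothesis to $\nabla_{g^c}\psi$, whereas you peel off the outermost one and write out the full expansion. You also spell out the equivalence of the $H^k$ norms taken with respect to the different metrics, which the paper leaves implicit.
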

\begin{proof}
    Note that 
    \begin{equation}
        \begin{aligned}
            \ol{\nabla^i}\psi=&{\nabla_g}^{i-1}\ol{\nabla}\psi+ \ol{\nabla^{i-1}}{\nabla_{\ot{g}}}\psi            ={\nabla_g}^{i-1}\ \left(\ol{g^{-1}\partial_{\SS}g}\psi\right) + \ol{\nabla^{i-1}}\nabla_{\ot{g}}\psi.
        \end{aligned}
    \end{equation}
    We have for $i\leq 6$ 
    \begin{equation}
        \begin{aligned}
            \lnm \ol{\nabla^i}\psi\rnm^2_{L^2(\SS)}\lesssim &\sum_{i_1+i_2\leq i-1}\lnm \nabla_g^{i_1}\left(\ol{g^{-1}\partial_{\SS}g}\right)\nabla_g^{i_2}\psi\rnm^2_{L^2(\SS)}+\epsilon^2 \lnm \nabla_{\ot{g}}\psi\rnm^2_{H^{i-2}(\SS)}\\
            \lesssim & \sum_{i_1,i_2\leq 5}\lnm \nabla_g^{i_1}\psi\rnm^2_{L^2(\SS)}\lnm \nabla_g^{i_2}\left(\ol{g^{-1}\partial_{\SS}g}\right)\rnm^2_{L^2(\SS)}+\epsilon^2 \lnm \psi\rnm^2_{H^{i-1}(\SS)}\\
            \lesssim & \epsilon^2 \lnm \psi\rnm^2_{H^{5}(\SS)}.
        \end{aligned}
    \end{equation}
    Here we used that $\lnm g^{-1}\ol{g}\rnm_{H^6(\SS)}\lesssim\epsilon$. Similarly, the estimate for $\left[\nabla\right]_0^v$ follows by the same argument. 
This corollary is the mechanism used whenever an estimate contains $\left[\dv\right]_0^v$, $\left[\nabla\right]_0^v$, or a commutator produced by changing the connection. It should be read together with Proposition \ref{R_I_Estimate_metric} and will be invoked repeatedly in top-order arguments without restating the metric calculation. 
\end{proof}
We can get higher order regularity estimates if $\psi$ is a function.
\begin{corollary}
    For any function $f\in H^7(\SS)$, this yields 
    \begin{equation}
        \lnm \ol{{\nabla}^i} f\rnm_{H^7(\SS)}^2\lesssim \epsilon^2\lnm f\rnm^2_{H^6(\SS)},\ \lnm \left[\nabla^i\right]f\rnm_{H^7(\SS)}^2\lesssim \z^2\lnm f\rnm^2_{H^6(\SS)}.
    \end{equation}
\end{corollary}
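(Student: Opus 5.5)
The key observation is that on a function $f$ the operator $\ol{\nabla^i}$ acts only through the Christoffel symbols of $g$ and $g^c$, so it costs one derivative fewer than on a general tensor. Since $\nabla f=df$ is independent of the metric, $\ol{\nabla^1}f=0$. For $i\ge 2$ I would write, as in the proof of Corollary \ref{Estimate_of_tilde_nabla},
\[
\ol{\nabla^i}f=\nabla_g^{i-2}\big(\ol{\Gamma}\cdot \nabla f\big)+\ol{\nabla^{i-1}}(\nabla f),
\qquad \ol{\Gamma}\sim \ol{g^{-1}\partial_{\SS}g},
\]
and iterate. This gives the schematic expansion
\[
\ol{\nabla^i}f=\sum_{\substack{i_1+i_2+i_3\le i-2}}\nabla^{i_1}(\ol\Gamma)\,\nabla^{i_2}(\Gamma,\Gamma^c)^{\le i}\,\nabla^{i_3+1}f .
\]
In every term, $\ol\Gamma$ appears at least once and $f$ carries at least one and at most $i-1$ derivatives.

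Taking the $H^7(\SS)$ norm of these expressions (with $i\le 6$, as in the previous corollary), the total derivative count on a product is at most $7+i-2\le 11$. I would distribute it with the product estimate Lemma \ref{Lemma_Sobolev_Ineq_product}, putting the high-order factor in $L^2$ and the others in $L^\infty$ or low Sobolev norms. The factor $f$ can absorb up to $7$ derivatives, but only up to its $H^6$ norm if the count is arranged with the $\ol\Gamma$ factor taking the excess. The remaining derivatives fall on $\ol\Gamma$, where the estimates for $\ol g$ following Proposition \ref{R_I_Estimate_metric} give $\|\ol g\|_{H^{7}(\SS)}\lesssim\epsilon$, i.e. $\|\ol\Gamma\|_{H^6}\lesssim\epsilon$. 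Since $\Omega,b,g$ are controlled up to $\nabla^6$ of a $\ddfl{\cdot}$-remainder plus explicitly smooth background and initial data, we also have $\|g^c\|_{H^N}$ and $\|g_0\|_{H^N}\lesssim 1$. Each term is then bounded by $\epsilon\|f\|_{H^6}$. The $[\nabla^i]_0^v$ statement follows identically, replacing $\ol\Gamma$ by $[\Gamma]_0^v$ and using $\|[g]_0^v\|\lesssim \z$.

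I expect the main obstacle to be the derivative count: the full $H^7$ norm of $\ol{\nabla^i}f$ for $i=6$ seems to need $\ol g$ at very high order. I would first check whether the intended statement is really $\|\ol{\nabla^i}f\|_{L^2}$ for $i\le 7$, i.e. one more derivative than the tensor corollary, in which case the count closes exactly. The top term $\nabla^{5}\ol\Gamma\cdot\nabla f$ needs only $\|\ol g\|_{H^6}$, and $\ol{\nabla^1}f=0$ provides the gain of one derivative on $f$. I would prove that version, bounding the top term with $\nabla f\in L^\infty$ via the Sobolev embedding \eqref{Sobolev_W22}, and read the displayed $H^7$ in that sense.
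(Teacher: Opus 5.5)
Your final version is correct and is essentially the paper's proof: because $\nabla f=df$ does not depend on the metric, one has exactly $\ol{\nabla^i}f=\ol{\nabla^{i-1}}(\nabla f)$ (so in your first display the second term should be $\ol{\nabla^{i-2}}(\nabla_{g^c}\nabla f)$, not $\ol{\nabla^{i-1}}(\nabla f)$), and the paper simply applies Corollary \ref{Estimate_of_tilde_nabla} to the $1$-form $\psi=\nabla f$ rather than re-expanding, which gives $\sum_{i\le 7}\lnm\ol{\nabla^i}f\rnm^2_{L^2(\SS)}\lesssim\epsilon^2\lnm f\rnm^2_{H^6(\SS)}$ and the analogous bound for $\left[\nabla^i\right]_0^v$. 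Your reading of the displayed $H^7$ norm as this $L^2$ statement for $i\le 7$ is the right one (it is the form stated in the Region II analogue), and you should discard the first-paragraph plan: the paper only provides $\lnm\ol g\rnm_{H^6(\SS)}\lesssim\epsilon$, not $H^7$, and the literal bound cannot hold in general, since already $\ol{\nabla^2}f=-\ol\Gamma\cdot\nabla f$ has an $H^7$ norm involving $\nabla^8 f$.
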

\begin{proof}
    Note that $({\nabla_{g^\prime}})_A f=e_A(f)$ for any metric $g^\prime$. The result follows by the previous corollary.
\end{proof}

\subsection{Estimates of non top order terms.}
With the metric and operator comparisons in place, we next estimate all quantities below the top energy level. These bounds provide the lower-order input for the scalar-field and curvature energy estimates that follow.
\begin{proposition}\label{L2S_est_for_non_top_order_terms}
    With assumption \eqref{RI_Bootstrap_Assumptions}, the following estimates hold:
    \begin{equation}
        \begin{aligned}
            &\sum_{i\leq 5}\lnm \nabla^i\left(\ddfl{\Omega^{-1}\tr\chi},\ddfl\eta,\ddfl\etab,\ddfl{\Omega\chib},\ddfl{\Omega D_3\phi},\ddfl{\nabla\phi},\ddfl{\Omega\omegab}\right)\rnm^2_{\LS_I^2(S_{u,v})} \\
            &\qquad\qquad\qquad\qquad\qquad\qquad\qquad\qquad\qquad\lesssim B\epsilon^2\left(\frac{v}{(-u)^{1-\k}}\right)^{1+\frac{2\k}{1-\k}},
        \end{aligned}
    \end{equation}
    \begin{equation}
        \sum_{ j\leq 5}\lnm\nabla^{j}\wt{\Omega^{-1}\chih},\nabla^j\wt{\Omega^{-1}D_4\phi},\nabla^i\wt{\Omega^{-1}\beta^r} \rnm^2_{\LS_I^2(S_{u,v})}\lesssim \epsilon^2\left(\frac{v}{(-u)^{1-\k}}\right)^{1+2\delta},
    \end{equation}
    \begin{equation}
        \sum_{i\leq 4 }\lnm \nabla^i\left(\df{\sigma^r}, \df{K}, \df{\Omega\betab^r}, \df{\Omega^2\alphab}\right) \rnm^2_{\LS_I^2(S_{u,v})}\lesssim \epsilon^2\left(\frac{v}{(-u)^{1-\k}}\right)^{1+2\delta}.
    \end{equation}
\end{proposition}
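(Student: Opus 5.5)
The three estimates in Proposition \ref{L2S_est_for_non_top_order_terms} will be derived from the hypersurface bootstrap bounds \eqref{RI_Bootstrap_Assumptions}. The basic step is to pass from the null-cone norms $\LS_I^2(H_u^v)$ and $\LS_I^2(\Hb_v^u)$ to the sphere norms $\LS_I^2(S_{u,v})$. We do this by the transport estimates \eqref{transport_3}--\eqref{transport_4}, using only one derivative fewer than the energy level, so that no top-order term appears.

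\emph{Step 1: the $\ddfl{\cdot}$ quantities.} For each $\psi\in\{\Omega^{-1}\tr\chi,\eta,\etab,\Omega\chib,\Omega D_3\phi,\nabla\phi,\Omega\omegab\}$ we use the outgoing transport estimate \eqref{transport_4}, since $\ddfl{\psi}$ vanishes at $v=0$. This gives
\[
\lnm\nabla^i\ddfl{\psi}\rnm^2_{\LS_I^2(S_{u,v})}\lesssim \z\lnm\Omega^{-1}\nabla_4\nabla^i\ddfl{\psi}\rnm^2_{\LS_I^2(H_u^v)}+\text{l.o.t.}
\]
By the Remark on difference equations, $\Omega^{-1}\nabla_4\ddfl{\psi}=\Omega^{-1}\chi\ddfl{\psi}+\Omega^{-2}\df{\Omega^2(F+\Omega^{-1}\chi\psi)}$. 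The source $F$ is the right-hand side of the null structure equation or of \eqref{intrinsic_wave_equation}. For $\Omega D_3\phi$ and $\Omega\tr\chib$ it involves only $\Delta\phi$, $\eta$, $K$, $\rho$ and $D\phi$, and never $\alpha$. We commute $\nabla^i$ past $\Omega^{-1}\nabla_4$ with Lemma \ref{LEMMA_COMMUTATION_FORMULAE_NABLA}. We then expand $\df{\Omega^2 F}$ with Lemma \ref{RI_Lemma_Calculation_of_df}, taking $a=\k/(1-\k)$, and bound $\df{\Omega^2}$ by \eqref{Estimate_of_df_Omega}. These give $\lnm\nabla^i\df{F}\rnm^2_{\LS_I^2(H)}\lesssim B\epsilon^2\z^{2\k/(1-\k)}$ for $i\le 5$, by the second and third lines of \eqref{RI_Bootstrap_Assumptions}. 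Multiplying by the extra factor $\z$ yields the claimed power $1+2\k/(1-\k)$. The term $\Omega^{-1}\chi\ddfl{\psi}$ is absorbed by Gr\"onwall's inequality, as in the Remark after \eqref{transport_4}, because $z\le\epsilon_1$.

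For the quantities that come with a $\nabla_3$-equation (namely $\Omega\omegab$, $\Omega D_3\phi$ and $\nabla\phi$), we may alternatively use \eqref{transport_3} in $u$. Here the initial values on $H_{-1}$ are bounded by the outgoing data construction, specifically \eqref{333444000} and \eqref{ID_triangle_chih_e4phi}.

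\emph{Step 2: the $\wt{\cdot}$ quantities.} For $\wt{\Omega^{-1}\chih}$ we again apply \eqref{transport_4}. By construction, $\Omega^{-1}\nabla_4\wt{\Omega^{-1}\chih}$ equals $-\wt{\Omega^{-2}\alpha}$ plus products. Its $H$-norm is bounded by $B\epsilon^2\z^{2\delta}$, which together with the factor $\z$ gives $\z^{1+2\delta}$. For $\wt{\Omega^{-1}D_4\phi}$ we have its $H$-bound $\z^{1+2\k/(1-\k)-2\delta}$ directly, but this is a spacetime norm. We therefore use the incoming equation instead: the difference of \eqref{intrinsic_wave_equation} with the defining equation of $\ot{\Omega^{-1}e_4\phi}$, combined with \eqref{transport_3}. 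The data on $H_{-1}$ are $O(\epsilon v^{\k-\delta})$ by \eqref{ID_triangle_chih_e4phi}, and $\k-\delta>(1+2\delta)/2$ fails in general. For this reason I would instead take the $v$-route: use \eqref{transport_4} applied to $\nabla_4$ of $\wt{\Omega^{-1}D_4\phi}$, where $\nabla_4\nabla_4\phi$ is controlled through the wave equation and $\Delta\phi$.

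The quantity $\wt{\Omega^{-1}\beta^r}$ is handled by the Codazzi equation \eqref{Gauss_Codazzi} minus the definition \eqref{R_I_Def_ot_beta}. This expresses it through $\dv\wt{\Omega^{-1}\chih}$, $\nabla\df{\Omega^{-1}\tr\chi}$ and products, with one derivative consumed.

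\emph{Step 3: the curvature $\df{\cdot}$ quantities.} For $\df{\sigma^r}$, $\df{K}$ and $\df{\Omega\betab^r}$ we integrate the $\nabla_4$ Bianchi equations \eqref{Equations_K}, \eqref{Equations_reduced_sigma} and \eqref{Equations_nab3beta_nab4betab} with \eqref{transport_4}. Their sources contain $\dv\beta^r$ and $\nabla K$, so one derivative is lost, which is why the sum stops at $i\le4$. For $\df{\Omega^2\alphab}$ we use the $\nabla_4\alphab$ equation \eqref{Equations_nab3alpha_nab4alphab}.

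\emph{Main obstacle.} The delicate point is the power $1+2\delta$ in the last two lines. It requires that every source carrying only the weaker rate $\z^{2\k/(1-\k)}$ be shown to contribute at least $\z^{2\delta}$. This holds when $2\k/(1-\k)\ge 2\delta$, which is true since $\delta\ll\k$. The explicit Christodoulou increments $[\psi^c]_0^v$ appearing through Lemma \ref{RI_Lemma_Calculation_of_df} carry an $\epsilon$ factor times $\z^{a}$ with $a\ge 2\k/(1-\k)$, so they fit within this bound.
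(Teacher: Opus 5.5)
The genuine gap is in how you treat $\wt{\Omega^{-1}D_4\phi}$; the rest is sound. Your Step 1, the Codazzi treatment of $\wt{\Omega^{-1}\beta^r}$, and Step 3 follow the paper's argument in substance. The paper reads off $\df{\sigma^r}$ and $\df{\Omega\betab^r}$ algebraically, from $\sigma^r=-\cl\etab$ and the Codazzi equation, rather than by $\nabla_4$-transport. Your transport version also closes, since $\nabla^5\df{\Omega^{-1}\beta^r}$, $\nabla^5\df K$ and $\nabla^5\df{\sigma^r}$ are bootstrapped on $H_u^v$.

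\textbf{The route you settle on for $\wt{\Omega^{-1}D_4\phi}$ does not exist.} The wave equation \eqref{intrinsic_wave_equation} gives only the mixed derivatives $e_3(D_4\phi)$ and $e_4(D_3\phi)$. No structure equation or bootstrap quantity controls $\Omega^{-1}\nabla_4\wt{\Omega^{-1}D_4\phi}$. In fact $(\Omega^{-1}e_4)^2\phi$ is exactly the quantity that is singular as $z=v/(-u)^{1-\k}\to0$ (Theorem \ref{thm:background_properties}(5), Corollary \ref{increasing_rate_D4phi_c}). The approximant is singular too: $[\ot{\Omega^{-1}e_4\phi}]_0^v$ has only a H\"older rate in $v$, so $\Lie_v\ot{\Omega^{-1}e_4\phi}$ blows up. Hence \eqref{transport_4} cannot be applied to this quantity.

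\textbf{The incoming route you discarded is the right one, and it is the paper's.} Your reason for discarding it misreads \eqref{ID_triangle_chih_e4phi}. That estimate bounds $[\ot{\Omega^{-1}e_4\phi}]_0^v-[(\Omega^{-1}e_4\phi)^c]_0^v$, not $\wt{\Omega^{-1}e_4\phi}$. On $H_{-1}\cap\R_I$ the data are prescribed as $\Omega^{-1}e_4\phi(-1,v)=\ot{\Omega^{-1}e_4\phi}(-1,v)$. So the data term for $\wt{\Omega^{-1}e_4\phi}$ is trivial; the paper only uses that it is $\lesssim\epsilon^2v^{1+2\delta}$.

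Two ingredients are then needed, and both are missing from your proposal.
\begin{itemize}
\item \emph{The sign.} Consider $\Omega\nabla_3\wt{\Omega^{-1}e_4\phi}+(\frac12\Omega\tr\chib-4\Omega\omegab)\wt{\Omega^{-1}e_4\phi}=\dots$ with $s=0$ and weight $w_I$. The zeroth-order coefficient in the $\Hb$ energy identity is
\[
-2+2s+(-u)\partial_u\log w_I-(-u)\Omega\tr\chib+8(-u)\Omega\omegab=2\k-(1-\k)(1-2\delta)+o_{\epsilon,\epsilon_1}(1).
\]
This is negative precisely because $\k<1/3$ and $\delta\ll1-3\k$, so it produces a damping term rather than growth.
\item \emph{The source bounds.} The sources $\df{\Omega\tr\chib}$, $\df{\Omega\omegab}$, $\df b$, $\df{\Omega^{-1}\tr\chi}$, $\df\eta$ and $\df{\Omega e_3\phi}$ must be bounded on $\Hb_v^u$ by $\epsilon^2z^{1+2\delta}$. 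For several of them the bootstrap alone gives only $z^{2\k/(1-\k)}$, and $2\k/(1-\k)<1$. You get the needed bound by proving Step 1 first, writing $\df\psi=\ddfl\psi+v(\Lie_v\ol\psi)_0$ with Taylor term $O(\epsilon z)$, and integrating in $u$.
\end{itemize}
The top-order source $\nabla^5\df{\dv\nabla\phi}$ is handled by the $\ddfl{\nabla\phi}$ flux on $\Hb_v^u$ from the fourth line of \eqref{RI_Bootstrap_Assumptions}.

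\textbf{Smaller points.}
\begin{itemize}
\item In Step 2, ``plus products'' hides the term $-[\Omega^2]_0^v\ot{\Omega^{-2}\alpha}$ in $\Lie_v\wt{\Omega^{-1}\chih}$. Since $\ot{\Omega^{-2}\alpha}\sim\epsilon z^{\k/(1-\k)-1}$, it is not in $\LS_I^2(H_u^v)$ on its own. Its bound needs the extra factor of $z$ from the lapse increment \eqref{Estimate_Omega/Omega_0}.
\item $\etab$ has no $\nabla_4$ equation. Treat it via $\ddfl\etab=-\ddfl\eta+2\nabla\ddfl{\log\Omega}$ and Proposition \ref{R_I_Estimate_metric}, as the paper does.
\item Drop the optional $\nabla_3$ alternative in Step 1: $\Omega\omegab$ and $\Omega D_3\phi$ have no $\nabla_3$ equations.
\end{itemize}
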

\begin{proof}
    We use $\varphi$ below to represent Ricci coefficients except $\Omega^{-1}\omega$ and first order derivatives of $\phi$. The proof uses four inputs throughout: the metric bounds from Proposition \ref{R_I_Estimate_metric}, the lapse difference estimate \eqref{Estimate_of_df_Omega}, the product-difference estimate Lemma \ref{RI_Lemma_Calculation_of_df}, and the bootstrap assumptions \eqref{RI_Bootstrap_Assumptions}. The rates of the explicit Christodoulou increments are supplied by the construction of the approximate quantities and the background estimates quoted earlier. 
    For $\varphi_2=\Omega^{-1}\tr\chi$, $\Omega\chib,\eta$, $\Omega\omegab$, $\Omega D_3\phi,\nabla\phi$, the schematic equation reads
    $\Omega^{-1}\Lie_4{\varphi_2}\sim \Omega^2(\Omega^{-1}\beta^r+K+\nabla\eta+\varphi^2)$, and after taking difference and using commutation formula it turns to
    \begin{equation}
        \begin{aligned}
            \Lie_v \nabla^j\ddfl{\varphi_2}=&\sum_{i_1+i_2+i_3=j}\nabla^{i_1}(\Omega^2)\nabla^{i_2}(\Omega^{-1}\chi )\nabla^i_3\ddfl{\varphi_2}\\
            & +\nabla^j\df{\Omega^2\left(\Omega^{-1}\beta^r,K^r,\nabla\eta,\nabla^2\phi,\varphi^2\right)}.
        \end{aligned}
    \end{equation}
By transport estimate \eqref{transport_4}, and Lemma \ref{RI_Lemma_Calculation_of_df}, it follows that 
\begin{equation}
        \begin{aligned}
            &\sum_{i\leq 5}\lnm \nabla^i \ddfl{\varphi_2}\rnm^2_{\LS_I^2(S_{u,v})}\\
            \lesssim & \z \sum_{i\leq 5}\lnm \nabla^i \left(\df{\Omega^2},\df{\Omega^{-1}\beta^r},\df{K},\df{\nabla\eta},\df{\nabla^2\phi},\df{\Omega^{-1}\chi},\df{\varphi}\right)\rnm^2_{\LS_I^2(H_u^v)}\\
            &+\epsilon^2\z^{1+\frac{2\k}{1-\k}}.
        \end{aligned}
    \end{equation}
Here \eqref{transport_4} supplies the integration factor in the outgoing $v$-direction, while Lemma \ref{RI_Lemma_Calculation_of_df} is responsible for expanding the difference of the nonlinear source. The only term with a background increment that has to be watched is the one containing $\Omega^{-1}D_4\phi$, whose growth rate is recalled in the next sentence.
We also note that the increasing rate of $\left[\varphi^c\right]_0^v$ is controlled by $\z^{\frac{\k}{1-\k}}$, when $\varphi=\Omega^{-1}D_4\phi$. 
Hence the bootstrap assumption \eqref{RI_Bootstrap_Assumptions} implies the estimates for $\varphi_2$. 
For $\Omega^{-1}\chih$, the equation arising from the construction reads
\begin{equation}
    \Lie_v \wt{\Omega^{-1}\chi}=-\left[\Omega^2\right]_0^v \ot{\Omega^2\alpha}-\Omega^2\wt{\Omega^{-2}\alpha}-2\Omega^2\Omega^{-1}\chih\Omega^{-1}\chih.
\end{equation}
and we can derive the estimate,
    \begin{equation}
        \begin{aligned}
            \sum_{i\leq 5}\lnm \nabla^i \wt{\Omega^{-1}\chih}\rnm^2_{\LS_I^2(S_{u,v})}\lesssim \frac{v}{(-u)^{1-\k}}\sum_{i\leq 5}\lnm \nabla^i \left(\left[\Omega^2\right]_0^v\ot{\Omega^{-2}\alpha},\wt{\Omega^{-2}\alpha},\Omega^{-1}\chih\right)\rnm^2_{\LS_I^2(H_u^v)}.
        \end{aligned}
    \end{equation} 
The reason for isolating this product is that the approximating curvature may be singular near $v=0$, while the lapse increment supplies an additional $v$-gain. The estimate \eqref{Estimate_Omega/Omega_0} below is therefore paired with the explicit size of $\ot{\Omega^{-2}\alpha}$ coming from the construction of $\ot{\Omega^{-1}\chih}$. 
From $\sum_{i\leq 5}\lnm \nabla^i(\Omega^{-1}\chi)\rnm^2_{\LS_1^2(H_u^v)}\lesssim 1$, it follows that $\z\sum_{i\leq 5}\lnm \nabla^i(\Omega^{-1}\chi)\rnm^2_{\LS_I^2(H_u^v)}\lesssim \z^{1+2\delta}.$ $\wt{\Omega^{-2}\alpha}$ is directly estimated by the bootstrap assumption. For $\left[\Omega^2\right]_0^v\ot{\Omega^{-2}\alpha}$, we recall that 
\begin{equation}
    \log\left(\Omega/\Omega_0\right)=\log\left(\Omega^c/(\Omega^c)_0\right)+v\left(\Lie_v\ol{\log\Omega}\right)_0+\ddfl{\log\Omega},
\end{equation}
hence the previous estimate on $\ddfl{\log\Omega}$ implies 
\begin{equation}\label{Estimate_Omega/Omega_0}
    \lnm \Omega^2/(\Omega_0)^2-1\rnm^2_{H^6(\SS)}\lesssim \lnm \log(\Omega^c/(\Omega^c)_0)\rnm^2_{H^6(\SS)}+\epsilon^2\z^2\lesssim \epsilon^2\z^2.
\end{equation}
Because $\left|\nabla^j\ot{\Omega^{-2}\alpha}\right|_{g}\sim \epsilon\z^{\k/(1-\k)-1}(-u)^{s(\Omega^{-2}\alpha)-j-1}$, the following estimate holds:
\begin{equation}
    \begin{aligned}
        &\sum_{j\leq 6}\int_{0}^v\int_{\SS}(-u)^{1+\k-2s(\nabla^j\alpha)}\left(\frac{(-u)^{1-\k}}{v^\prime}\right)^{1-2\delta}\left|\nabla^j\left(\left[\Omega^{2}\right]_0^v\ot{\Omega^{-2}\alpha}\right)\right|^2\\
        \lesssim &\sum_{j\leq 6}\int_{0}^v\int_{\SS}(-u)^{1+\k-2s(\Omega^2)}\epsilon^2\left(\frac{v^\prime}{(-u)^{1-\k}}\right)^{-1+2\delta+2\k/(1-\k)-2}\left|\nabla^j\left(\Omega^2/(\Omega_0)^2-1\right)\right|^2(\Omega_0)^4\\
        \lesssim &\int_0^v \epsilon^2\frac{1}{(-u)^{1-\k}}\left(\frac{v^\prime}{(-u)^{1-\k}}\right)^{-1+2\delta+2\k/(1-\k)}\lesssim \epsilon^2 \z^{2\delta+2\k/(1-\k)}.
    \end{aligned}
\end{equation}
Combining the estimates above gives 
\[\sum_{i\leq 5}\lnm \nabla^i \wt{\Omega^{-1}\chih}\rnm^2_{\LS_I^2(S_{u,v})}\lesssim \epsilon^2\z^{1+2\delta}.\]
For $\ddfl{\etab},$ if suffices to apply the fifth order estimates of $\ddfl\eta$ and sixth order of $\ddfl{\log\Omega}$ to $\ddfl{\etab_A}=-\ddfl{\eta_A}+2e_A\ddfl{\log\Omega}$.  
For $\wt{\Omega^{-1}e_4\phi}$, this yields 
\begin{equation}
    \begin{aligned}
        &\Omega\nabla_3\wt{\Omega^{-1}e_4\phi}+\left(\frac{1}{2}\Omega\tr\chib-4\Omega\omegab\right)\wt{\Omega^{-1}e_4\phi}\\
        =& -\df{\left(\frac{1}{2}\Omega\tr\chib-4\Omega\omegab\right)}\ot{\Omega^{-1}e_4\phi}-\Lie_{\df{b}}\ot{\Omega^{-1}e_4\phi} +\df{\dv\nabla\phi-\frac{1}{2}\tr\chi e_3\phi-2\eta\nabla\phi}.
    \end{aligned}
\end{equation} 
The following transport step is the first place where the sign of the coefficient is essential. All source terms in the displayed equation are already controlled either by the preceding estimates for $\varphi_2$, by the metric estimate, or by the bootstrap assumptions; the remaining issue is to ensure that the homogeneous coefficient gives a favorable contribution. 
Since coefficient of the borderline term in transport estimate of $\Omega^{-1}e_4\phi$ is 
$$-2+2s(\Omega^{-1}D_4\phi)+(-u)\partial_u\log I-(-u)\Omega\tr\chib+8(-u)\Omega\omegab=-(1-\k)(1-2\delta)+2\k+o_{\epsilon,\epsilon_1}(1),$$ this term will be cancelled if we require $0<\delta\ll 1-3\k$ and $\epsilon,\epsilon_1$ sufficiently small. Hence the estimate follows: 
\begin{equation}
        \begin{aligned}
            &\sum_{i\leq 5}\lnm \nabla^i \wt{\Omega^{-1}D_4\phi}\rnm^2_{\LS_I^2(S_{u,v})}\\
            \lesssim & \sum_{i\leq 5}\lnm  \nabla^i \left(\df{b},\df{\Omega^{-1}\tr\chi},\df{\Omega\chib},\df{\eta},\df{\nabla^2\phi} ,\df{\Omega e_3\phi}\right)\rnm^2_{\LS_I^2(\Hb_v^u)}+\epsilon^2v^{1+2\delta}\\
            &-\delta\sum_{i\leq 5}\lnm \nabla^i\wt{\Omega^{-1}D_4\phi}\rnm^2_{\LS^2(\Hb_v^u)}\\
            \lesssim & \epsilon^2\z^{1+2\delta}.
        \end{aligned}
    \end{equation} 
Since $\ot{\Omega^{-1}\beta^r}$ is defined by $\ot{\Omega^{-1}\chih}$, the difference satisfies 
\begin{equation}
    \wt{\Omega^{-1}\beta^r}=-\dv\wt{\Omega^{-1}\chih}-\left[\dv\right]_0^v\ot{\Omega^{-1}\chih}+\frac{1}{2}e_A\left[\Omega^{-1}\tr\chi\right]_0^v-\eta\wt{\Omega^{-1}\chih}-\left[\eta\right]_0^v\ot{\Omega\chih}+\left[\Omega^{-1}\tr\chi\eta\right]_0^v.
\end{equation}
The estimate for $\wt{\Omega^{-1}\beta^r}$ follows by previous estimates for $\Omega^{-1}\chih$ and $\varphi_2$. 
More explicitly, the divergence term uses the just-proved estimate for $\wt{\Omega^{-1}\chih}$, the connection-difference term uses Corollary \ref{Estimate_of_tilde_nabla}, and all remaining $\df{\varphi_2}$ contributions are covered by the first part of this proposition. 
The final low-order curvature bounds are obtained from the constraint equations rather than from a separate energy argument. The estimates used here are the non-top-order Ricci bounds already proved in this proposition, the Codazzi and Gauss identities \eqref{Gauss_Codazzi}, and the bootstrap bounds for the curvature components appearing on the right-hand side. 
From $\etab^c\equiv 0$, it follows that $\df{\sigma^r}=-\left[\cl\right]_0^v\etab_0-\cl\df\etab$. The estimate of $\sigma^r$ can be obtained immediately. Similarly, $\Omega\betab^r$ can be estimated with $\Omega\chib,\eta$ and the Codazzi equation \eqref{Gauss_Codazzi}. From $\Omega^{-1}\nabla_4(\Omega^2\alphab)$ and $\Omega\nabla_4 \ol{K}+\frac{1}{2}\Omega\tr\chi \ol{K}$ equations, the low order terms on right hand side are all of size $O(\epsilon)$ in $\LS^2_1(S)$ norm, and the higher order terms, $\Omega\betab^r,\ol{K}$. are bounded by the bootstrap assumptions. Then we have estimate 
\begin{equation}
    \begin{aligned}
        &\sum_{i\leq 4}\lnm\nabla^i\left(\df{\Omega^{2}\alphab},\df{K}\right)\rnm^2_{\LS_I^2(S_{u,v})}\\
        \lesssim & \frac{v}{(-u)^{1-\k}} \sum_{i\leq 5}\lnm \nabla^{i}\left(\ol{\Omega^{-1}\chi},\etab,\eta,\ol{\Omega\chib},\ol{\Omega e_3\phi},\nabla\phi\right),\nabla^{i}\left(\ol{\Omega\betab^r},\ol{K}\right) \rnm^2_{\LS_I^2(H_u^v)}\\
        \lesssim &  \epsilon^2\left(\frac{v}{(-u)^{1-\k}}\right)^{1+2\delta}.
    \end{aligned}
\end{equation}
Here we only list the leading terms of the estimates and omit the expressions of the schematical equations. Readers can refer to Proposition \ref{R_I_Estimate_beta_sigma_K_betab_alphab} to check the equations.

\end{proof}

We can further estimate $\Omega^{-1}\omega$.
\begin{corollary}
    We can estimate $\Omega^{-1}\omega$, 
    \begin{equation}
        \sum_{i\leq 5}\lnm \nabla^i\df{\Omega^{-1}\omega}\rnm^2_{\LS^2_1(S_{u,v})}\lesssim \epsilon^2\z^{2\k/(1-\k)},
    \end{equation}
    and hence 
    \begin{equation}
        \sum_{i\leq 5}\lnm \nabla^i\df{\Omega^{-1}\omega}\rnm^2_{\LS^2_I(H_u^v)}\lesssim \epsilon^2\z^{2\k/(1-\k)+2\delta}.
    \end{equation}
    Note that we use weight $w=1$ when estimating the norms on sphere.
\end{corollary}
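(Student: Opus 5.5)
We would estimate $\Omega^{-1}\omega$ through its incoming transport equation $\nabla_3\omega=\dots$ rather than through $\nabla_4$, since $\Omega^{-1}\omega$ has no useful outgoing equation. In the renormalized Region I gauge, $\Omega^{-1}\omega$ is the quantity $\hat\Omega^{-1}\hat\omega=(v\Omega\omega-\kappa/4)/(v\Omega^2)$ in the old gauge. The equation for $\omega$ in the preliminaries gives, after multiplication by the appropriate lapse powers, a schematic equation
\[
\Omega\nabla_3(\Omega^{-1}\omega)+c\,\Omega\omegab\,\Omega^{-1}\omega = -\eta\cdot\etab+\tfrac12|\etab|^2-\tfrac12K-\tfrac18\Omega^{-1}\tr\chi\,\Omega\tr\chib+\tfrac14\Omega^{-1}\chih\cdot\Omega\chibh+\tfrac14\Omega D_3\phi\,\Omega^{-1}D_4\phi+\tfrac14|\nabla\phi|^2,
\]
with $c$ a constant. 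We would take $\df{\cdot}$ of this equation using the $\Omega\nabla_3$ difference formula in the remark following Lemma \ref{Product_rule_for_difference}. The right-hand side then splits into three groups. The first is $\df{F}$, expanded by Lemma \ref{RI_Lemma_Calculation_of_df} with $a=\kappa/(1-\kappa)$. The second is the commutator terms $\ol{\Omega\chib}[\psi^c]_0^v$ and $([\Omega\chib]_0^v+\Lie_{[b]_0^v})\ol{\psi}(0)$. The third is $\df{\Omega\chib}\psi^c(0)$. Every factor appearing is controlled at the required order: $\df{K}$ and the Ricci differences by Proposition \ref{L2S_est_for_non_top_order_terms} and the bootstrap \eqref{RI_Bootstrap_Assumptions}; $\wt{\Omega^{-1}D_4\phi}$, $\wt{\Omega^{-1}\chih}$ together with the explicit increments $[\ot{\cdot}]_0^v\lesssim\epsilon z^{\kappa/(1-\kappa)}$ from \eqref{ID_triangle_chih_e4phi}; and $[b]_0^v$ by Proposition \ref{R_I_Estimate_metric}.

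We would then apply the transport estimate \eqref{transport_3} with weight $w=1$ to $\nabla^i\df{\Omega^{-1}\omega}$ for $i\le5$. The commutators come from Lemma \ref{LEMMA_COMMUTATION_FORMULAE_NABLA} and involve only $\nabla^{\le5}(\Omega\chib)$, which is bounded. The initial term on $\Hb_0$ vanishes, because $\df{\cdot}$ vanishes at $v=0$. The data term on $u=-1$ is of size $\epsilon z^{\kappa/(1-\kappa)}$ by the outgoing data lemma. Integrating the source over $\Hb_v^u$ with measure $du'/(-u')$ produces $\epsilon^2z^{2\kappa/(1-\kappa)}$, since $z=v/(-u')^{1-\kappa}$ is monotone in $u'$ and the integral of the power converges.

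The main obstacle is the sign of the homogeneous coefficient. In the transport identity it is roughly $-2+2s+4\cdot\frac{c\kappa}{4}$, and it must be nonpositive, or at least integrable against the $z$-growth we want. If it is not, we would absorb it as in the model argument of the introduction: the target rate $z^{2\kappa/(1-\kappa)}$ grows in $u$ like $(-u)^{-2\kappa}$, and this absorbs a positive coefficient that is smaller than $2\kappa$. The second potential difficulty is the $\Omega^{-1}\chih\cdot\Omega\chibh$ term, where $\ot{\Omega^{-1}\chih}$ has the increment $z^{\kappa/(1-\kappa)-\delta}$. We would handle it by pairing it with $\Omega\chibh=O(\epsilon)$, which keeps this contribution at rate $\epsilon z^{\kappa/(1-\kappa)-\delta}$ times an extra $\epsilon$. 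If that still loses the $\delta$, we would accept the slightly weaker rate, since only the integrated bound is needed downstream.

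Finally, the $H_u^v$ bound follows by integrating the sphere bound with the weight $w_I=z^{-(1-2\delta)}$:
\[
\int_0^v(-u)^{-(1-\kappa)}\epsilon^2 z'^{2\kappa/(1-\kappa)-1+2\delta}\,dv'\lesssim\epsilon^2z^{2\kappa/(1-\kappa)+2\delta}.
\]
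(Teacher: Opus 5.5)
Your route is the paper's: take $\df{\cdot}$ of the $\nabla_3\omega$ equation, expand the products with Lemma~\ref{RI_Lemma_Calculation_of_df} at rate $a=\kappa/(1-\kappa)$, apply \eqref{transport_3} with $w=1$ along $\Hb_v^u$, and integrate the sphere bound against $w_I$ to get the $H_u^v$ bound. Your last computation is correct.

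The point you call the main obstacle is harmless. Since $\nabla_3\log\Omega=-2\omegab$, one has $\Omega\nabla_3(\Omega^{-1}\omega)=4\Omega\omegab\,\Omega^{-1}\omega+\dots$. With $4\Omega\omegab\approx\kappa/(-u)$, $s=0$ and $w=1$, the bulk coefficient in the sphere identity is $-2+2\kappa<0$ at every order; the $\frac{i}{2}\Omega\tr\chib$ commutator term offsets the signature shift of $\nabla^i$. So nothing needs to be absorbed.

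The step that fails is your treatment of the approximants. Estimate \eqref{ID_triangle_chih_e4phi} bounds $[\ot{\Omega^{-1}\chih}]_0^v$ and $[\ot{\Omega^{-1}e_4\phi}]_0^v-[(\Omega^{-1}e_4\phi)^c]_0^v$ only by $\epsilon v^{\kappa-\delta}$ in the original outgoing coordinate. With $z=v/(-u)^{1-\kappa}$ this is $\epsilon z^{(\kappa-\delta)/(1-\kappa)}$, not $\epsilon z^{\kappa/(1-\kappa)}$ as you cite. Consequently $\df{\Omega^{-1}\chih}$ and $\df{\Omega^{-1}D_4\phi}$ are only known at this lossy rate; the $\wt{\cdot}$ parts are fine.

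This loss enters not only through $\Omega^{-1}\chih\cdot\Omega\chibh$. It enters more seriously through $\tfrac14\Omega D_3\phi\,\Omega^{-1}D_4\phi$, whose difference contains $(\Omega D_3\phi)_0\,\df{\Omega^{-1}D_4\phi}$, and $(\Omega D_3\phi)_0\approx-\sqrt{2\kappa}/(-u)$ is of size one. Your ``extra $\epsilon$'' does not repair the $\chih$ term either. The inequality $\epsilon^2 z^{(\kappa-\delta)/(1-\kappa)}\lesssim \epsilon z^{\kappa/(1-\kappa)}$ fails as $z\to 0$, because a negative power of $z$ cannot be paid for with a power of $\epsilon$.

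So, as written, your argument gives the sphere estimate only with $z^{2(\kappa-\delta)/(1-\kappa)}$ in place of $z^{2\kappa/(1-\kappa)}$. The paper's proof does not get around this either. It asserts that the only large increments are those of $(\Omega^{-1}e_4\phi)^c$ and $(\Omega^{-1}\omega)^c$, at exactly the rate $z^{\kappa/(1-\kappa)}$, and does not discuss the increments of the approximants.

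Your fallback does give the $H_u^v$ bound in the form used later. Call the $\delta$ of \eqref{ID_triangle_chih_e4phi} $\delta'$ and choose it much smaller than the $\delta$ in $w_I$. Your final integration then yields $\epsilon^2 z^{2\kappa/(1-\kappa)+2\delta-2\delta'/(1-\kappa)}$, which is still no worse than the bootstrap rate $z^{2\kappa/(1-\kappa)}$. Both displayed bounds, however, come out with a small loss relative to what the statement claims.
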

\begin{proof}
    
For $\df{\Omega^{-1}\omega}$, the schematic equation reads  
\begin{equation}
    \begin{aligned}
        \Omega\nabla_3\df{\Omega^{-1}\omega}\sim & \df{K+ e_3\phi e_4\phi+\nabla\phi\nabla\phi+\chi\chib+\eta\etab+\etab^2}\\
        &+\ol{\Omega\chib}\left[(\Omega^{-1}\omega)^c\right]_0^v+\df{\Omega\chib}{(\Omega^{-1}\omega)^c}_0+\left(\left[{\Omega\chib}\right]_0^v+\Lie_{\left[b\right]_0^v}\right)\ol{\Omega^{-1}\omega}_0.
    \end{aligned}
\end{equation}
The product expansion in this corollary uses the same mechanism as the preceding proposition. The potentially large background increments are only $(\Omega^{-1}e_4\phi)^c$ and $(\Omega^{-1}\omega)^c$; their rates are precisely $\z^{\k/(1-\k)}$. We can apply Lemma \ref{RI_Lemma_Calculation_of_df} to obtain that 
\begin{equation}
    \begin{aligned}
        & \sum_{i\leq 5}\lnm \nabla^i\df{\Omega^{-1}\omega}\rnm^2_{\LS^2_1(S_{u,v})}\\
        \lesssim&\sum_{i\leq 5}\lnm \nabla^i\left(\df{K},\df{\Omega e_3\phi},\df{\Omega e_4\phi},\df{\nabla\phi},\df{\Omega\chib},\df{\Omega^{-1}\chi},\df{\eta},\df{\etab}\right)\rnm^2_{\LS^2_1(\Hb_v^u)}\\
        &+\epsilon^2\z^{2\k/(1-\k)}
        \lesssim  \epsilon^2\z^{2\k/(1-\k)}.
    \end{aligned}
\end{equation}
\end{proof}

\begin{remark}
    It is not easy to refine the approximation for $\Omega^{-1}\omega$. If we mimic the approximation of $\Omega^{-1}\chih$, $\Omega^{-1}e_4\phi$ and consider equation
\begin{equation}
    \begin{aligned}
            &\left({(\Omega\nabla_3)_0}-4({\Omega\omegab})_0-4\left[(\Omega\omegab)^c\right]_0^v\right)\ot{\Omega^{-1}\omega}=\left(\frac{1}{4}|\nabla\phi|^2+|\etab|^2-\eta\etab\right)_0+\frac{1}{4}(\Omega\chibh)_0\cdot\ot{\Omega^{-1}\chih}\\
         & +\frac{1}{4}\left((\Omega e_3\phi)_0+\left[(\Omega e_3\phi)^c\right]_0^v\right)\ot{\Omega^{-1}e_4\phi}+\left(-\frac{1}{2}K+\frac{1}{8}\tr\chi\tr\chib\right)_0+\left[\left(-\frac{1}{2}K+\frac{1}{8}\tr\chi\tr\chib\right)^c\right]_0^v,
    \end{aligned}
\end{equation}
which can be written in the context of self-similar along $H_{-1}$,
\begin{equation}
    \left(v\Lie_v+\Lie_{b_0}\right)\ot{\Omega^{-1}\omega}+(1-\k-4v(\Omega\omega)^c)\ot{\Omega^{-1}\omega}=R.H.S.,
\end{equation}
the constant on the left hand side is positive. Therefore the equation cannot be solved backward in $v$. We can avoid dealing with $\Omega^{-1}\omega$ by choosing appropriate $\Omega^s$ in front of tensors.
However, we can define better approximation of $\Omega$ with another approach. Define 
$\ot{\Omega\omegab}$ by solving $\partial_v\ot{\Omega\omegab}$ equation with $\Omega^{-1}\chih,\Omega^{-1}e_4\phi$ on the right hand side substituted by $\ot{\Omega^{-1}\chih},\ot{\Omega^{-1}e_4\phi}$ and others by $(\psi)_0+\left[\psi^c\right]_0^v$. Solve $\ot{\Omega}$ by $\left(\partial_u+b_0+v(\Lie_v b)_0\right)\log\ot{\Omega}=-2\ot{\Omega\omegab}$ and define $\ot{\Omega^{-1}\omega}=-\frac{1}{2}\ot{\Omega^{-2}}\Lie_v\log\ot{\Omega}$. In such way one can show that along $H_{-1}$, $\wt{\Omega^{-1}\omega}$ is of size $O(\epsilon v)$, which is the same as $\wt{\Omega^{-1}\chih}$ and $\wt{\Omega^{-1}e_4\phi}$. 
\end{remark} 
 
\subsection{Top order energy estimates for scalar field}
We now turn from lower-order transport estimates to the top angular derivatives. The scalar-field estimates are coupled: the outgoing derivative of the scalar field is controlled together with the angular derivatives propagated along the incoming direction.
\begin{proposition}
    We establish the following estimates for the relevant quantities:
    \begin{equation}
        \lnm \nabla^6\wt{\Omega^{-1}D_4\phi}\rnm^2_{\LS^2(H_{u}^v)}+\lnm \nabla^6\ddfl{\nabla\phi}\rnm^2_{\LS^2(\Hb_{v}^u)}\leq \epsilon^2\left(\frac{v}{(-u)^{1-\k}}\right)^{1+2\k/(1-\k)-2\delta}.
    \end{equation}
\end{proposition}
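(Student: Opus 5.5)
We treat $(\Phi_1,\Phi_2)=(\nabla^6\wt{\Omega^{-1}D_4\phi},\nabla^6\ddfl{\nabla\phi})$ as a Bianchi-type pair and apply the Energy Estimates lemma with weight $w_I$. For the first component we commute $\nabla^5$ with the $\Omega\nabla_3$ equation for $\wt{\Omega^{-1}e_4\phi}$ (the one written in the proof of Proposition \ref{L2S_est_for_non_top_order_terms}) and pull out one more $\nabla$. The leading structure is then
\[
\Omega\nabla_3\nabla^6\wt{\Omega^{-1}D_4\phi}+\Big(\tfrac{1}{2}+\tfrac{6}{2}\Big)\Omega\tr\chib\,\nabla^6\wt{\Omega^{-1}D_4\phi}-4\Omega\omegab\,\nabla^6\wt{\Omega^{-1}D_4\phi}-\dv\nabla^6\ddfl{\nabla\phi}=F .
\]
To see that the principal term is a divergence of the second component, note that the source $\df{\Delta\phi}=\df{\dv\nabla\phi}$ equals $\dv\ddfl{\nabla\phi}$ plus a term $\dv(v(\Lie_v\ol{\nabla\phi})_0)$ and plus $[\dv]_0^v$-type connection errors. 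The $v$-linear term is explicit initial data of size $\epsilon\z$, and the connection errors are handled by Corollary \ref{Estimate_of_tilde_nabla}.

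For the second component we use $\nabla_A$ of the trivial commutation $\Omega^{-1}\nabla_4\nabla\phi=\nabla(\Omega^{-1}D_4\phi)+\Omega^{-1}\chi\cdot\nabla\phi$. Rewriting it through the remark on $\ddfl{\cdot}$-equations, and then applying \eqref{Comm_Formula_D4,nab}, gives
\[
\Omega^{-1}\nabla_4\nabla^6\ddfl{\nabla\phi}-\nabla\nabla^6\wt{\Omega^{-1}D_4\phi}=G .
\]
Here $G$ collects several pieces: $\Omega^{-2}$ times $\nabla^6$ of the product differences; the approximant contribution $\nabla^7(\ot{\Omega^{-1}e_4\phi}-\Omega^{-1}e_4\phi_0-v(\Lie_v\,\cdot)_0)$; and the commutator terms $\nabla^{i}(\Omega^{-1}\chi)\nabla^{7-i}\phi$. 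In the $v$-direction the approximant is explicit: by the construction in Section \ref{Initial Values and Approximating Solution} and Lie propagation it is self-similar, and its Taylor remainder is of size $\epsilon\z^{\k/(1-\k)-\delta}$ in all angular derivatives. Since $\nabla$ and $\dv$ are Hodge dual up to the $\epsilon$-error controlled in the lemma's hypotheses, the structure matches the Energy Estimates lemma.

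\textbf{Bulk coefficient.} We compute the coefficient in front of $\|\Phi_1\|^2_{\LS^2(\R)}$ using $s(\Phi_1)=0$ (in the scale-invariant convention), $(-u)\partial_u\log w_I=-(1-\k)(1-2\delta)$, $\lambda=7/2$ and $\Lambda=4$. This is the same borderline computation as for the lower-order $\wt{\Omega^{-1}D_4\phi}$ transport, shifted by the $\tr\chib$ contributions from commutation. I expect the coefficient to come out negative after combining with the $-\k$ and $\Lambda\k/2$ terms together with the weight; if the sign were wrong, one would instead use the factor $\z^{1+2\k/(1-\k)-2\delta}$ on the right, as in the introduction's model-equation discussion, to absorb it. The initial fluxes are the $\epsilon^2 v^{1+2\k/(1-\k)-2\delta}$ bound on $H_{-1}$ from \eqref{ID_triangle_chih_e4phi}, and zero on $\Hb_0$.

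\textbf{Error terms.} $\|F\|,\|G\|$ in $\LS^2_I(\R)$ are bounded using Lemmas \ref{RI_Lemma_Calculation_of_df}--\ref{RI_Lemma_Calculation_of_ddfl}, the metric bounds of Proposition \ref{R_I_Estimate_metric}, and Proposition \ref{L2S_est_for_non_top_order_terms}. The $\R$-norm integrates one extra factor $1/(-u)$ in $u$, which converts the $\Hb$ bootstrap bounds into the required power $\z^{1+2\k/(1-\k)-2\delta}$.

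\textbf{Main obstacle.} The main obstacle is the top-order loss of derivatives. $F$ contains $\nabla^6\df{\Omega^{-1}\tr\chi}$, $\nabla^6\df{\eta}$ and $\nabla^6\df{b}$ multiplied by $\Omega e_3\phi$, $\nabla\phi$ and $\ot{\Omega^{-1}e_4\phi}$; $G$ contains $\nabla^6\Omega^{-1}\chi\cdot\nabla\phi$. These are seventh-order quantities in the metric sense. The plan is to control them by the sixth-order bootstrap flux bounds for the $\df{\cdot}$ Ricci coefficients in \eqref{RI_Bootstrap_Assumptions}, which are available along both $H$ and $\Hb$ precisely at order $6$, and to use $\|\nabla\phi\|_{\LS^\infty}\lesssim\epsilon$ on the coupling.

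The dangerous piece is $\nabla^6\df{\Omega^{-1}\chi}$ in $G$. Its bootstrap rate is only $\z^{2\k/(1-\k)}$, worse than the target. I would try to compensate with the small factor $(\ol{\nabla\phi})$, noting that $\nabla\phi^c=0$: each occurrence of such a term carries an extra $\epsilon$ from $\nabla\phi$ and an extra $\z$ from the bulk integration. This yields $\epsilon^2\cdot\epsilon^2\z^{1+2\k/(1-\k)}$, which suffices. Closing with $\epsilon$ small then gives the stated bound.
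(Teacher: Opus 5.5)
Your proposal has a genuine gap: a loss of one derivative on the metric in the connection error. You split $\df{\Delta\phi}$ into $\dv\ddfl{\nabla\phi}$, the explicit term $\dv\bigl(v(\Lie_v\ol{\nabla\phi})_0\bigr)$ and the error $[\dv]_0^v(\nabla\phi)_0$ \emph{before} differentiating, and only then apply $\nabla^6$. This produces $\nabla^6\bigl([\dv]_0^v(\nabla\phi)_0\bigr)$. Its leading part is $\nabla^6[g^{-1}\Gamma]_0^v\cdot(\nabla\phi)_0$, where $\Gamma\sim g^{-1}\partial g$ are the angular Christoffel symbols, so it involves seven angular derivatives of $[g]_0^v$. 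Nothing available reaches this:
\begin{itemize}
\item Proposition \ref{R_I_Estimate_metric} controls $\nabla^i\ddfl{g}$ only for $i\le 6$.
\item Corollary \ref{Estimate_of_tilde_nabla} treats $[\nabla^i]_0^v$ only for $i\le 6$, which involves at most five derivatives of $[\Gamma]_0^v$, while you need six.
\item Recovering $\nabla^6[\Gamma]_0^v$ by integrating $\Lie_v\Gamma\sim\nabla(\Omega\chi)$ would require $\nabla^7\chi$.
\end{itemize}
Integrating by parts in the energy identity only moves the extra derivative onto $\nabla^7\wt{\Omega^{-1}D_4\phi}$, which is not controlled either. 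The smallness $(\nabla\phi)_0=O(\epsilon)$ does not help against a loss of regularity. This is exactly the obstruction the paper's proof begins by pointing out.

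The missing idea is to take one coordinate derivative $e_A=\partial_{\theta^A}$ of the $\Omega\nabla_3$ equations for $\Omega^{-1}e_4\phi$ and $\ot{\Omega^{-1}e_4\phi}$ \emph{before} subtracting. This uses $[\Lie_{\Omega e_3},\Lie_{e_A}]=-\Lie_{\partial_Ab^C\partial_C}$ and $(\nabla_A\dv\nabla\phi)_0=(\dv\nabla_A\nabla\phi-K\nabla_A\phi)_0$. The principal term becomes $\dv\df{\nabla_A\nabla\phi}$, the other connection difference is absorbed into the unknown $\df{\nabla^2\phi}$, and the remaining error is $[\dv]_0^v(\nabla_A\nabla\phi)_0$. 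One then runs the energy estimate for the pair $\bigl(\nabla^6\wt{\Omega^{-1}D_4\phi},\nabla^5\ddfl{\nabla^2\phi}\bigr)$. Now $\nabla^5[\dv]_0^v(\nabla^2\phi)_0\sim\epsilon\,\nabla^5[g^{-1}\partial g]_0^v$ costs only six derivatives of the metric, and the cross term $\nabla^7\bigl(\wt{\Omega^{-1}D_4\phi}-\df{\Omega^{-1}D_4\phi}\bigr)$ is explicit.

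Your bookkeeping also needs two corrections.
\begin{itemize}
\item The bulk coefficient is $2\k+2\delta(1-\k)+o(1)>0$, so your fallback (absorption through the $\z$-power) is what is actually needed. It works because the flux of $\nabla^6\wt{\Omega^{-1}D_4\phi}$ on $H_{-1}$ vanishes identically, since the data there are $\Omega^{-1}e_4\phi=\ot{\Omega^{-1}e_4\phi}$. It does not follow from \eqref{ID_triangle_chih_e4phi}.
\item Your treatment of $G$ is off. In the $\ddfl{\cdot}$ form of the equation, $\Omega\nabla_4\ddfl{\nabla_B\phi}=e_B\df{\Omega e_4\phi}-\Omega\chi\cdot\ddfl{\nabla\phi}$, the top-order $\chi$ factors multiply $\ddfl{\nabla\phi}$ rather than $(\nabla\phi)_0$. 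The compensation you propose would fail anyway: $\df{\Omega^{-1}\chi}$ has only $H$-flux bounds, so it gains no power of $\z$ from bulk integration. The $\z$-gain for $G$ comes instead from pairing it with $\Phi_2$ and using $\lnm\nabla^6\ddfl{\nabla\phi}\rnm^2_{\LS^2(\R_{u,v})}\lesssim\z\sup_{v^\prime\le v}\lnm\nabla^6\ddfl{\nabla\phi}\rnm^2_{\LS^2(\Hb_{v^\prime}^u)}$.
\end{itemize}
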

\begin{proof}
    If we consider $\Omega\nabla_3\nabla^6\wt{\Omega^{-1}e_4\phi}$ equation, there will be $\nabla^6\left[\dv\right]_0^v(\nabla\phi)_0$ on the right hand side, which will contribute to $\left[g^{-1}\partial^7g\right]_0^v$ and cannot be estimated. Instead, we consider $\Omega \nabla_3$ equation for
    $\nabla_A \left(\ot{\Omega^{-1}e_4\phi}\right)=e_A\left(\ot{\Omega^{-1}e_4\phi}\right).$ For $e_A =\partial_{\theta^A}$, we have $\left[\Lie_{\Omega e_3},\Lie_{e_A}\right]=-\Lie_{\partial_A b^C\partial_C}$. This choice moves the dangerous top derivative of the divergence operator into terms involving the metric difference and the shift $b$. Those terms are exactly the ones controlled by Proposition \ref{R_I_Estimate_metric} and Corollary \ref{Estimate_of_tilde_nabla}.
    \begin{equation}
        \begin{aligned}
            &\Omega\nabla_3\nabla_A \left(\ot{\Omega^{-1}e_4\phi}\right)=\Lie_{\partial_u+b} (\nabla_A)_0\left(\ot{\Omega^{-1}e_4\phi}\right)-\Omega\chib_A^B\nabla_B\left(\ot{\Omega^{-1}e_4\phi}\right)\\
            =& \nabla_A \left(\Omega e_3\left(\ot{\Omega^{-1}e_4\phi}\right)\right)-\Lie_A b^C\nabla_C\ot{\Omega^{-1}e_4\phi} -\Omega\chib_A^B\nabla_B\left(\ot{\Omega^{-1}e_4\phi}\right)\\
            =&\left(-\frac{1}{2}\left(\Omega\tr\chib-\df{\Omega\tr\chib}\right)+4\left(\Omega\omegab-\df{\Omega\omegab}\right)\right)\nabla_A\left(\ot{\Omega^{-1}e_4\phi}\right)\\
            &+\nabla_A\left(-\frac{1}{2}\left(\Omega\tr\chib-\df{\Omega\tr\chib}\right)+4\left(\Omega\omegab-\df{\Omega\omegab}\right)\right)\left(\ot{\Omega^{-1}e_4\phi}\right)\\
            &+(\nabla_A\dv\nabla\phi)_0+2\left(\nabla_A\left(\eta\cdot\nabla\phi\right)\right)_0-\frac{1}{2}\nabla_A\left(\Omega^{-1}\tr\chi\Omega e_3\phi-\df{\Omega^{-1}\tr\chi\Omega e_3\phi}\right)\\
            &-\Lie_A b^C\nabla_C\ot{\Omega^{-1}e_4\phi} -\Omega\chib_A^B\nabla_B\left(\ot{\Omega^{-1}e_4\phi}\right).
        \end{aligned}
    \end{equation}
    Using $\left(\nabla_A\dv\nabla\phi\right)_0=\left(\dv\nabla_A\nabla\phi-K\nabla_A\phi\right)_0$ difference $e_A\left(\wt{\Omega^{-1}e_4\phi}\right)$ obeys
    \begin{equation}\label{eq:G_1_nabla_phi}
        \begin{aligned}
            F_1(e_4\phi):=&\Omega\nabla_3\nabla_A \left(\wt{\Omega^{-1}e_4\phi}\right)+\left(\Omega\tr\chib-4\Omega\omegab\right)\nabla_A\wt{\Omega^{-1}e_4\phi} -\dv \df{\nabla_A\nabla\phi}\\
            =& \left(-\frac{1}{2}\df{\Omega\tr\chib}+4\df{\Omega\omegab}\right)\nabla_A\ot{\Omega^{-1}e_4\phi}+\nabla_A\left(-\frac{1}{2}\df{\Omega\tr\chib}+4\df{\Omega\omegab}\right)\ot{\Omega^{-1}e_4\phi}\\
            &+\nabla_A \left(-\frac{1}{2}{\Omega\tr\chib}+4{\Omega\omegab}\right)\wt{\Omega^{-1}e_4\phi}-\df{K\nabla_A\phi}+\left[\dv\right]_0^v(\nabla_A\nabla\phi)_0\\
            &+2\df{\nabla_A(\eta\cdot\nabla\phi)}-\frac{1}{2}\nabla_A\df{\Omega^{-1}\tr\chi\Omega e_3\phi}-\left(\Omega\chibh_A^C+\Lie_A b^C\right) \nabla_C\wt{\Omega^{-1}e_4\phi}.
        \end{aligned}
    \end{equation}
    For $\nabla^2\phi$, it holds that $\Omega\nabla_4\ddfl{\nabla_B\phi}= e_B\df{(\Omega e_4\phi)}-\Omega\chi\cdot \ddfl{\nabla\phi}.$ We then compute: 
    \begin{equation}
        \begin{aligned}
            G_1(\nabla\phi):=&\Omega^{-1}\nabla_4\nabla_A\ddfl{\nabla_B\phi}-\nabla_B\nabla_A\left(\df{\Omega^{-1} e_4\phi}\right)\\
            \sim& \Omega^{-2}\nabla(\Omega\chi)\ddfl{\nabla\phi}+\Omega^{-1}\chi\nabla\ddfl{\nabla\phi}\\
            &+\left(\Omega^{-2}\nabla_A\nabla_B\df{\Omega e_4\phi}-\nabla_A\nabla_B\df{\Omega^{-1} e_4\phi}\right).
        \end{aligned}
    \end{equation}
    Due to spherical symmetry of $(\cdot)^c$, we have  $$\nabla^2\df{\Omega e_4\phi}=\nabla^2\left(\Omega^2\left[{\Omega^{-1}e_4\phi}\right]_0^v+\left[\Omega^2\right]_0^v(\Omega^{-1}e_4\phi)_0\right),$$ and the last term of \eqref{eq:G_1_nabla_phi} can be written as 
    \begin{equation}
        \begin{aligned}
            &\Omega^{-2}\nabla_A\nabla_B\df{\Omega e_4\phi}-\nabla_A\nabla_B\df{\Omega^{-1} e_4\phi}\\
            \sim&\sum_{i+j=2}\Omega^{-2}\nabla^i\left(\Omega^2-(\Omega_0)^2\right)\nabla^j(\Omega^{-1}e_4\phi)_0+\sum_{i+j=1}\nabla^i(\eta+\etab)\nabla^j\left[\Omega^{-1}e_4\phi\right]_0^v.
        \end{aligned}
    \end{equation}
    Using commutation formulae, we find that 
    \begin{equation}
        \begin{aligned}
            F_6(e_4\phi):=&\Omega\nabla_3\nabla^6\wt{\Omega^{-1}e_4\phi}+\left(\frac{7}{2}\Omega\tr\chib-4\Omega\omegab\right)\nabla^6\wt{\Omega^{-1}e_4\phi}-\dv\nabla^5\df{\nabla^2\phi}\\
            \sim& \left(\Omega\chibh+\nabla b\right)\nabla^6\wt{\Omega^{-1}e_4\phi}+\sum_{i_1+i_2=4}\nabla^{i_1+1}(\Omega\chib,\Omega\omegab)\nabla^{i_2+1}\wt{\Omega^{-1}e_4\phi}+\nabla^5F_1(e_4\phi) \\
            &+\sum_{i_1+i_2+i_3= 4}\nabla^{i_1}K^{i_2+1}\nabla^{i_3}\df{\nabla^2\phi},\\
            G_6(\nabla\phi):=&\Omega^{-1}\nabla_4 \nabla^6\ddfl{\nabla\phi}-\nabla\left(\nabla^6\df{\Omega^{-1}e_4\phi}\right)\\
            \sim & \nabla^5 G_1(\nabla\phi)+\sum_{i_1+i_2+i_3= 4}\nabla^{i_1}K^{i_2+1}\nabla^{i_3+1}\df{\Omega^{-1}e_4\phi}\\
            &+\sum_{i_1+i_2+i_3+i_4=5}\nabla^{i_1}(\eta+\etab)^{i_2}\nabla^{i_3}(\Omega^{-1}\chi)\nabla^{i_4}\ddfl{\nabla\phi}.
        \end{aligned}
    \end{equation} 
At this stage the pair $(\wt{\Omega^{-1}D_4\phi},\ddfl{\nabla\phi})$ has been arranged so that the top-order derivatives appear in a genuine energy pair. The terms denoted by $F_6$ and $G_6$ will be estimated after the energy inequality using the non-top-order bounds of Proposition \ref{L2S_est_for_non_top_order_terms} and the metric bounds already established. 
Using energy estimate \eqref{Energy Estimates}, we deduce that 
    \begin{equation}\label{energy est nabla 3 D4phi nabla 3 nablaphi}
        \begin{aligned}
            &\lnm \nabla^6\wt{\Omega^{-1}D_4\phi}\rnm_{\LS_I^2(H_u^v)}^2+\lnm \nabla^5\ddfl{\nabla^2\phi}\rnm_{\LS_I^2(\Hb_v^u)}^2\\
            \leq & \left(o_{\epsilon,\epsilon_1}(1)-1-\k-(1-\k)(1-2\delta)+2+2\k \right)\int_{-1}^u\frac{1}{-u^\prime}\lnm\nabla^6\wt{\Omega^{-1}D_4\phi}\rnm_{\LS_I^2(H_{u^\prime}^v)}^2\\
            &+\lnm \nabla^6\wt{\Omega^{-1}D_4\phi}\rnm_{\LS_I^2(\mathcal{R}_{u,v})}\cdot \lnm F_6(e_4\phi)\rnm_{\LS_I^2(\mathcal{R}_{u,v})}\\
            &+\lnm \nabla^5\ddfl{\nabla^2\phi}\rnm_{\LS_I^2(\mathcal{R}_{u,v})}\cdot \lnm G_6(\nabla\phi) \rnm_{\LS_I^2(\mathcal{R}_{u,v})}\\
            &+\lnm\nabla^7\left(\wt{\Omega^{-1}D_4\phi}-\df{\Omega^{-1}D_4\phi}\right)\rnm_{\LS_I^2(\R_{u,v})}\lnm \nabla^5\ddfl{\nabla^2\phi}\rnm_{\LS_I^2(\R_{u,v})}\\
            \leq &\left(2\delta+2\k-2\delta\k+o_{\epsilon,\epsilon_1}(1) \right)\int_{-1}^u\frac{1}{-u^\prime}\lnm\nabla^6\wt{\Omega^{-1}D_4\phi}\rnm_{\LS_I^2(H_{u^\prime}^v)}^2\\
            &+\lnm \nabla^6\wt{\Omega^{-1}D_4\phi}\rnm_{\LS_I^2(\mathcal{R}_{u,v})}\cdot \lnm F_6(e_4\phi)\rnm_{\LS_I^2(\mathcal{R}_{u,v})}\\
            &+\left(\frac{v}{(-u)^{1-\k}}\right)^{1/2}\lnm \nabla^5\ddfl{\nabla^2\phi}\rnm_{\LS_I^2(\Hb_v^u)}\cdot \lnm G_6(\nabla\phi) \rnm_{\LS_I^2(\mathcal{R}_{u,v})}\\
            &+\lnm\nabla^6\wt{\Omega^{-1}D_4\phi}\rnm_{\LS_I^2(\R_{u,v})}\z^{1+\delta}\epsilon+\lnm \nabla^5\ddfl{\nabla^2\phi}\rnm_{\LS_I^2(\R_{u,v})}\z^{1+\delta}\epsilon.
        \end{aligned}
    \end{equation}
We next prove that 
    \begin{equation}\label{Main Est of nab 3 D4phi}
        \lnm F_6(D_4\phi)\rnm^2_{\LS_I^2(\mathcal{R}_{u,v})}\lesssim \epsilon \lnm \nabla^6\wt{\Omega^{-1}D_4\phi}\rnm_{\LS_I^2(\mathcal{R}_{u,v})}^2+C(B)\frac{v}{(-u)^{1-\k}}\cdot \epsilon^2 \left(\frac{v}{(-u)^{1-\k}}\right)^{2\k/(1-\k)},
    \end{equation}
    and 
    \begin{equation}\label{Main Est of nab 3 nablaphi}
        \lnm G_6(\nabla\phi)\rnm^2_{\LS_I^2(\mathcal{R}_{u,v})}\lesssim B\epsilon^2 \left(\frac{v}{(-u)^{1-\k}}\right)^{2\k/(1-\k)}.
    \end{equation}
Since $\lnm \nabla^6\wt{\Omega^{-1}D_4\phi}\rnm_{\LS_I^2(H_{-1}^v)}^2\equiv 0$, soloving the ODE model $(-u)\partial_uF\leq \lambda F+o(1)(-u)^{-\lambda-\delta}$ gives the estimate
    \begin{equation}
        \lnm \nabla^6\wt{\Omega^{-1}D_4\phi}\rnm_{\LS_I^2(\mathcal{R}_{u,v})}^2\leq C_0\epsilon^2\left(\frac{v }{(-u)^{1-\k}}\right)^{1+2\k/(1-\k)-2\delta},
    \end{equation}
    and thus we can show that 
    \begin{equation}
        \lnm \nabla^6\wt{\Omega^{-1}D_4\phi}\rnm_{\LS_I^2(H_u^v)}^2+\lnm \nabla^5\ddfl{\nabla^2\phi}\rnm_{\LS_I^2(\Hb_v^u)}^2\leq C_0\epsilon^2\left(\frac{v }{(-u)^{1-\k}}\right)^{1+2\k/(1-\k)-2\delta}.
    \end{equation} 
It remains only to estimate the source terms in the preceding two displayed bounds. In the following two subarguments, every non-top-order factor is controlled by Proposition \ref{L2S_est_for_non_top_order_terms}, every metric or connection-difference factor by Proposition \ref{R_I_Estimate_metric} and Corollary \ref{Estimate_of_tilde_nabla}, and every product difference by Lemma \ref{RI_Lemma_Calculation_of_df}.

\vspace{2mm}
\noindent\textit{Proof of \eqref{Main Est of nab 3 D4phi}:} 
The terms in this expansion have been grouped by the estimate that controls them: Ricci and scalar differences use Proposition \ref{L2S_est_for_non_top_order_terms}, the curvature term uses the bootstrap bound for $K$, and the divergence-operator difference is reduced to the metric estimate by Corollary \ref{Estimate_of_tilde_nabla}.  Employing Lemma \ref{RI_Lemma_Calculation_of_df}, we obtain 
\begin{equation}\label{R_I_energy_estimate_D4phi_Nameless_eq_1}
    \begin{aligned}
        &\lnm \nabla^5F_1(e_4\phi)\rnm^2_{\LS_I^2(\R_{u,v})}\\
        \lesssim & \sum_{i\leq 5}\lnm \nabla^i\left(\df{\Omega^{-1}\tr\chi},\df{\Omega\chib}, \df\eta,\df{\Omega e_3\phi},\df{\nabla\phi},\df{\Omega\omegab}\right)\rnm^2_{\LS_I^2(\R_{u,v})}\\
        &+\sum_{i\leq 4}\lnm \nabla^i \df{K} \rnm^2_{\LS_I^2(\R_{u,v})}+\sum_{i\leq 6}\lnm \nabla^i\df{b}\rnm^2_{\LS_I^2(\R_{u,v})}+\lnm \nabla^5\left[\dv\right]_0^v\left({\nabla^2\phi} \right)_0\rnm^2_{\LS_I^2(\R_{u,v})}\\
        &+\lnm \nabla^6\left(\df{\Omega\omegab},\df{\Omega^{-1}\tr\chi},\df{\Omega\tr\chib}\right),\nabla^5\df{K} \rnm^2_{\LS_I^2(\R_{u,v})}.
    \end{aligned}
\end{equation}
By Proposition \ref{L2S_est_for_non_top_order_terms} and \ref{R_I_Estimate_metric}, for the $\psi, D\phi$ appearing in the equation we can estimate that
\begin{equation}
    \begin{aligned}
        &\sum_{i\leq 5}\lnm \nabla^i\df{\psi} ,\nabla^i\df{D\phi}\rnm^2_{\LS_I^2(S_{u,v})}+\sum_{i\leq 4}\lnm \nabla^i\df{K}\rnm^2_{\LS_I^2(S_{u,v})}+\sum_{i\leq 6}\lnm \nabla^i\df{b}\rnm^2_{\LS_I^2(S_{u,v})}\\
        &\qquad\lesssim \epsilon^2\z^{1+2\delta}.
    \end{aligned}
\end{equation}
Since $\nabla^5\left[\dv\right]_0^v (\nabla^2\phi)_0\sim \epsilon \nabla^5\left[g^{-1}\partial g\right]_0^v$, we also have  
\begin{equation}
    \lnm \nabla^5\left[\dv\right]_0^v (\nabla^2\phi)_0\rnm^2_{\LS_I^2(S_{u,v})}\lesssim \epsilon^2 \sum_{1\leq i\leq 6}\lnm \left[\partial^i g\right]_0^v\rnm^2_{\LS_I^2(S_{u,v})} \ll \epsilon^2\z^{1+2\delta}.
\end{equation}
For the top-order terms in $\nabla^5F_1(e_4\phi)$, it follows that
\begin{equation}
    \begin{aligned}
        &\lnm \nabla^6\left(\df{\Omega\omegab},\df{\Omega^{-1}\tr\chi},\df{\Omega\tr\chib}\right),\nabla^5\df{K} \rnm^2_{\LS_I^2(\R_{u,v})}\\
        \lesssim &\frac{v}{(-u)^{1-\k}}\lnm \nabla^6\left(\df{\Omega\omegab},\df{\Omega^{-1}\tr\chi},\df{\Omega\tr\chib}\right),\nabla^5\df{K} \rnm^2_{\LS_I^2(\Hb_v^u)}\\
        \lesssim &  B\epsilon^2\z^{1+2\k/(1-\k)}.
    \end{aligned}
\end{equation}
    The other terms in $F_6(D_4\phi)$, except $(\Omega\chibh+\nabla b)\nabla^6\wt{\Omega^{-1}e_4\phi}$, can be bounded by the spherical norms for non-top-order terms,
    \begin{equation}
        \begin{aligned}
            &\sum_{i\leq 4,j\leq 5}\lnm \nabla^i\left(\df{K},\df{\nabla^2\phi}\right),\nabla^j(\df{\eta},\df{\Omega\chib},\df{\Omega\omegab},\df{\Omega^{-1}D_4\phi})\rnm^2_{\LS_I^2(\R_{u,v})}\\
        &\qquad\lesssim \epsilon^2\z^{2+2\delta}.
        \end{aligned}
    \end{equation}
    Finally for $(\Omega\chibh+\nabla b)\nabla^6\wt{\Omega^{-1}e_4\phi}$, we use $\left|\Omega\chibh,\nabla b\right|\sim \epsilon$ to obtain  
    \begin{equation}
        \begin{aligned}
            \lnm(\Omega\chibh+\nabla b)\nabla^6\wt{\Omega^{-1}e_4\phi}\rnm^2_{\LS^2_I(\R_{u,v})}\lesssim \epsilon^2\lnm \nabla^6\wt{\Omega^{-1}e_4\phi}\rnm^2_{\LS^2_I(\R_{u,v})}.
        \end{aligned}
    \end{equation}
    \noindent\textit{Proof of \eqref{Main Est of nab 3 nablaphi}:}
    Using Lemma \ref{RI_Lemma_Calculation_of_df} and the bootstrap assumptions \eqref{RI_Bootstrap_Assumptions}, the following inequality holds 
    \begin{equation}
        \begin{aligned}
            \lnm G_6(\nabla\phi)\rnm^2_{\LS^2(\R_{u,v})}\lesssim & \lnm \nabla^5\ddfl{\nabla^2\phi},\nabla^6\df{\Omega^{-1}D_4\phi},\nabla^6\df{(\eta+\etab)}\rnm^2_{\LS^2(\R_{u,v})}\\
            &+\sum_{i\leq 5,j\leq 4}\lnm \nabla^j \df{K},\nabla^i\left(\df{\eta},\df{\etab},\df{\Omega^{-1}\chi},\df{D\phi}\right)\rnm^2_{\LS^2(\R_{u,v})}\\
            \lesssim & B\epsilon^2\z^{\frac{2\k}{1-\k}}.
        \end{aligned}
    \end{equation}

\end{proof}

We proceed to estimate the top-order derivatives of scalar field.
\begin{proposition}\label{R_I_Energy_Est_nabphi_D_3phi}
    We establish the following estimates for the relevant quantities:
    \begin{equation}
        \lnm\nabla^6\df{\nabla\phi}\rnm^2_{\LS_I^2(H_u^v)}+\lnm\nabla^6\ddfl{(\Omega D_3\phi)}\rnm^2_{\LS_I^2(\Hb_v^u)}\leq \epsilon^2\left(\frac{v}{(-u)^{1-\k}}\right)^{1+2\k/(1-\k)-2\delta}.
    \end{equation}
\end{proposition}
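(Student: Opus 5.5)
We will treat $(\ddfl{\Omega D_3\phi},\df{\nabla\phi})$ as the second Bianchi-type pair of the scalar field, in the same way the preceding proposition treated $(\wt{\Omega^{-1}D_4\phi},\ddfl{\nabla\phi})$, but with the roles of $e_3$ and $e_4$ exchanged. Here $\Phi_1=\nabla^6\ddfl{\Omega D_3\phi}$ is propagated along $\Omega\nabla_3$ (energy on $\Hb$), and $\Phi_2=\nabla^5\df{\nabla^2\phi}$ is propagated along $\Omega^{-1}\nabla_4$ (energy on $H$).

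\textbf{Equations.} We start from the second wave equation in \eqref{intrinsic_wave_equation}, written for $\Omega D_3\phi$ along $\Omega^{-1}\nabla_4$. We rewrite it as
\[
\Omega^{-1}\nabla_4(\Omega e_3\phi)=\dv\nabla\phi-\tfrac12\Omega\tr\chib\,\Omega^{-1}e_4\phi-\tfrac12\Omega^{-1}\tr\chi\,\Omega e_3\phi+2\etab\cdot\nabla\phi .
\]
Together with $\Omega\nabla_3\nabla_A\phi=\nabla_A(\Omega e_3\phi)-\Omega\chib_A^{\ B}\nabla_B\phi$, this forms the pair.

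To avoid the uncontrollable top-order term $\nabla^6[\dv]_0^v(\nabla\phi)_0$, we will, as before, first commute one coordinate derivative $e_A=\partial_{\theta^A}$ through the $\ddfl{\cdot}$ and $\df{\cdot}$ operations. This converts connection differences into $\Lie_{\partial_Ab}$-terms and metric differences, which are controlled by Proposition \ref{R_I_Estimate_metric} and Corollary \ref{Estimate_of_tilde_nabla}. We then apply $\nabla^5$ using Lemma \ref{LEMMA_COMMUTATION_FORMULAE_NABLA} and Corollary \ref{CORROLLARY_COMMUTATION_FORMULA_LIE}, collecting the errors into $F_6(\nabla\phi)$ and $G_6(e_3\phi)$. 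Using the remark on difference equations, the $\Omega\nabla_3$ equation for $\ddfl{\cdot}$ only produces $\Lie_{\ddfl{b}}$ and $\Lie_{v\Lie_vb(0)}$ terms acting on initial data, plus $\ddfl{F}$. The $\Omega^{-1}\nabla_4$ equation for $\df{\cdot}$ produces $\ol{\Omega^{-2}}\Lie_v\psi^c$ and $\ol{\Omega^{-1}\chi}\psi^c$ terms.

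\textbf{Energy estimate.} We will apply the energy estimate \eqref{Energy Estimates}, choosing the signs so that the bulk coefficient is negative or, failing that, dominated. The coefficient is $-1+2s(\cdot)-\k+\partial_u\log w_I+4\lambda+\Lambda\k/2+o(1)$, where $\lambda=\tfrac72$ comes from $\nabla^6$ commutators. Here $\partial_u\log w_I$ contributes $-(1-\k)(1-2\delta)$, and the $\Omega\tr\chib$ coupling gives a positive contribution. As in the previous proposition, we expect a positive residual of size $O(\k+\delta)$. We will absorb it via the ODE argument $(-u)\partial_u F\le\lambda F+o(1)(\cdot)$, using the target rate $z^{1+2\k/(1-\k)-2\delta}$ with $2\tau(1-\k)$ exceeding the residual. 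The initial contribution on $H_{-1}$ comes from the prescribed data (of size $\epsilon^2 v^{1+\dots}$ by the comparison lemma and \eqref{334466}). On $\Hb_0$ it vanishes since $\ddfl{\cdot}|_{v=0}=0$.

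\textbf{Sources.} We bound $\lnm F_6\rnm_{\LS_I^2(\R)}$ and $\lnm G_6\rnm_{\LS_I^2(\R)}$ by $B\epsilon^2 z^{1+2\k/(1-\k)}$ times small factors. For this we use Lemma \ref{RI_Lemma_Calculation_of_df} and Lemma \ref{RI_Lemma_Calculation_of_ddfl} (with $a=\k/(1-\k)$ for background increments), the non-top-order bounds of Proposition \ref{L2S_est_for_non_top_order_terms}, and the just-proved top-order bound on $\nabla^6\wt{\Omega^{-1}D_4\phi}$. The bulk norms gain a factor $z$ over the cone norms, as in \eqref{energy est nabla 3 D4phi nabla 3 nablaphi}.

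\textbf{Main obstacle.} The hardest term is $\tfrac12\Omega\tr\chib\,\Omega^{-1}e_4\phi$ inside $\ddfl{\cdot}$. Its background increment $[\Omega^{-1}e_4\phi]^c_0^v$ grows only like $z^{\k/(1-\k)}$, so the second-order Taylor remainder must be reconstituted from $\wt{\Omega^{-1}D_4\phi}$ together with $\ddfl{\Omega\tr\chib}$. We will handle it by splitting $\Omega^{-1}e_4\phi=\ot{\Omega^{-1}e_4\phi}+\wt{\Omega^{-1}e_4\phi}$ and using property (2) of the approximating solution to show that the explicit part contributes at order $\epsilon z^{1+\k/(1-\k)-\delta}$.
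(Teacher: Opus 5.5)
You use the same architecture as the paper: the same pair, commuting $\partial_{\theta^A}$ first so that $[\dv]_0^v$ only meets $\nabla^5$, Lemma~\ref{RI_Lemma_Calculation_of_df} for the sources, and the $\z$-gain of bulk over cone norms. However, your opening paragraph and the last two sentences of your Equations paragraph attach the transport directions the wrong way round. $\nabla^6\ddfl{\Omega D_3\phi}$ has no usable $\Omega\nabla_3$ equation, since it would involve $(\Omega e_3)^2\phi$, for which the system only provides an $e_4$-transport equation. Conversely, an $\Omega^{-1}\nabla_4$ equation for $\df{\nabla^2\phi}$ pairs with $\nabla^2(\Omega^{-1}e_4\phi)$, which is the previous proposition.

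The correct assignment is the one your wave-equation sentence, your data discussion and your value $\lambda=\tfrac72$ actually use. The $\Omega\nabla_3$-component is $\nabla^5\df{\nabla^2\phi}$, with flux on $H_u^v$; the $\Omega\nabla_3\df{\psi}$ formula gives the errors $\Lie_{\df{b}}(\nabla^2\phi)_0-[\Omega\chib]_0^v(\nabla^2\phi)_0$. The $\Omega^{-1}\nabla_4$-component is $\nabla^6\ddfl{\Omega D_3\phi}$, with flux on $\Hb_v^u$, via $\Lie_v\ddfl{\psi}=[\Lie_v\ol{\psi}]_0^v$. Since the $\nabla_3$-equation contains $\nabla^7\df{\Omega D_3\phi}$ rather than $\nabla^7\ddfl{\Omega D_3\phi}$, you also get an explicit cross term with $\nabla^7\bigl(v(\Lie_v\ol{\Omega D_3\phi})_0\bigr)$.

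The same observation removes your ``main obstacle''. The $\ddfl{\cdot}$-equation only has the first increment $\Omega^{-2}\df{\Omega^2\Omega\tr\chib\,\Omega^{-1}e_4\phi}$ as source, so no second-order Taylor remainder and no $\ddfl{\Omega\tr\chib}$ is needed. Your splitting into $\wt{\Omega^{-1}e_4\phi}$ plus an explicit part controlled by property (2) is right. That explicit part has sphere size $\epsilon\z^{\k/(1-\k)-\delta}$, though, so you only get $\lnm G_6\rnm^2_{\LS_I^2(\R_{u,v})}\lesssim C(B)\epsilon^2\z^{2\k/(1-\k)}$, not $B\epsilon^2\z^{1+2\k/(1-\k)}$. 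The missing factor $\z$ must come from the other factor, $\lnm\nabla^6\ddfl{\Omega D_3\phi}\rnm^2_{\LS_I^2(\R_{u,v})}\lesssim\z\sup\lnm\nabla^6\ddfl{\Omega D_3\phi}\rnm^2_{\LS_I^2(\Hb)}$, as in the paper.

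Where you genuinely differ is the bulk term of the $\nabla_3$-component. The paper never uses its sign. It keeps $\lnm\nabla^6\df{\nabla\phi}\rnm^2_{\LS_I^2(\R_{u,v})}$ unsigned and bounds it by $\epsilon^2\z^{2+2\delta}$ straight from the previous proposition: integrate the $\Hb$-flux of $\nabla^6\ddfl{\nabla\phi}$ in $v'$, and use $\ddfl{\nabla\phi}-\df{\nabla\phi}=-v(\Lie_v\ol{\nabla\phi})_0=O(\epsilon\z)$. This also gives the $F_6$ cross term an extra positive power of $\z$ without any absorption.

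Your signed-coefficient/Gronwall route is self-contained and also works. The residual is in fact $2\delta(1-\k)+o(1)$, not $O(\k+\delta)$: the $\nabla_3$-equation of $\nabla\phi$ has no $\omegab$-term, so there is no $2\k$ as in the previous proposition. This is far below the growth rate $1+\k-2\delta(1-\k)$ of the target. The price is that if you bound $F_6$ as the paper does ($\lesssim B\epsilon^2\z^{1+2\k/(1-\k)}$), Cauchy--Schwarz leaves a term that beats the target only by $\z^{2\delta}$. So $\epsilon_1$ must be small relative to $B$, which the paper needs anyway for its $G_6$ cross term.
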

\begin{proof}
    This proposition is parallel to the previous scalar-field energy estimate, but now the differentiated pair is $(\df{\nabla\phi},\ddfl{\Omega D_3\phi})$.
    We recall the equations  
    \begin{equation}
        \begin{aligned}
            F_1(\nabla\phi):=&\Omega\nabla_3\df{\nabla^2\phi}-\nabla^2\left[\ol{\Omega D_3\phi}\right]_0^v\\
            \sim &\Lie_{\df{b}}(\nabla^2\phi)_0 -\left[\Omega\chib\right]_0^v(\nabla^2\phi)_0 -\df{\nabla\left({\Omega\chib\cdot \nabla\phi}\right)}.
        \end{aligned}
    \end{equation}
    Commutation formula \eqref{Comm_Formula_D3,nab} implies that 
    \begin{equation}
        \begin{aligned}
            F_6(\nabla\phi):=& \Omega\nabla_3\nabla^5\df{\nabla^2\phi}-\nabla^7\df{\Omega D_3\phi} \\
            =& \nabla^5F_1(\nabla\phi)+ \sum_{i_1+i_2=5}\nabla^{i_1}(\Omega\chib)\nabla^{i_2}\df{\nabla^2\phi}+\sum_{i_1+i_2+i_3=4}\nabla^{i_1}K^{i_2+1}\nabla^{i_3}\df{\Omega D_3\phi}.
        \end{aligned}
    \end{equation}
    Next, we write the relevant evolution equation for $\Omega D_3\phi$. Since $\left[\partial_v,{\partial_A}\right]=0,$ this yields 
    \begin{equation}
        \begin{aligned}
            G_1(e_3\phi):=&\Omega^{-1}\nabla_4\nabla_A\ddfl{\Omega D_3\phi}-\dv\df{\nabla_A\nabla\phi} \\
            = &-\Omega^{-1}\chi\cdot \nabla\ddfl{\Omega D_3\phi} -\frac{1}{2}\Omega^{-2}\partial_A\df{\Omega^2\Omega^{-1}\tr\chi\Omega D_3\phi}\\
            &-\frac{1}{2}\Omega^{-2}\partial_A\df{\Omega^2\Omega\tr\chib\Omega^{-1}e_4\phi}+2\Omega^{-2}\partial_A\df{\etab\cdot\nabla\phi}-\Omega^{-2}\df{\Omega^2 K\nabla_A\phi}\\
            &+\Omega^{-2}\left[\Omega^2\right]_0^v \left(\dv\nabla_A\nabla\phi\right)_0+\left[\dv\right]_0^v\left(\nabla_A\nabla\phi\right)_0 ,
        \end{aligned}
    \end{equation}
    and 
    \begin{equation}
        \begin{aligned}
            G_6(e_3\phi):=&\Omega^{-1}\nabla_4\nabla^6\ddfl{\Omega D_3\phi}-\dv\nabla^5\df{\nabla^2\phi} \\
            =&\nabla^5 G_1(\nabla\phi)+\sum_{i_1+i_2+i_3+i_4=5}\nabla^{i_1}(\eta+\etab)^{i_2}\nabla^{i_3}(\Omega^{-1}\chi)\nabla^{i_4+1}\ddfl{\Omega D_3\phi}\\
            &+\sum_{i_1+i_2+i_3=4}\nabla^{i_1}K^{i_2+1}\nabla^{i_3}\df{\nabla^2\phi}.
        \end{aligned}
    \end{equation}
    From the preceding estimates, the bound for the top-order term $\nabla^6\ddfl{\nabla\phi}$ is known:
    \begin{equation}
        \begin{aligned}
            \lnm \nabla^6\ddfl{\nabla\phi}\rnm^2_{\LS_I^2(\R_{u,v})}\lesssim & \int_0^v\frac{1}{(-u)^{1-\k}}\lnm \nabla^6\ddfl{\nabla\phi}\rnm^2_{\LS_I^2(\Hb_{v^\prime}^u)} \\
            \lesssim &   \epsilon^2\z^{2+2\k/(1-\k)-2\delta}.
        \end{aligned}
    \end{equation}
    Together with $\ddfl{\nabla\phi}-\df{\nabla\phi}=-v\cdot(\Lie_v\nabla\phi)_0$, which is of size $\epsilon\z$ in the $\LS_1^2(S_{u,v})$ norm, we thus have $$\lnm \nabla^6\df{\nabla\phi}\rnm^2_{\LS_I^2(\R_{u,v})}\lesssim \epsilon^2\z^{2+2\delta}.$$ 
    Employing the energy estimates \eqref{Energy Estimates}, we deduce that 
    \begin{equation}
        \begin{aligned}
            &\lnm \nabla^5\df{\nabla^2\phi}\rnm_{\LS_I^2(H_u^v)}^2+\lnm \nabla^6\ddfl{\Omega D_3\phi}\rnm_{\LS_I^2(\Hb_v^u)}^2\\
            \leq & \lnm \nabla^6\df{\nabla\phi} \rnm_{\LS^2_I\left(\mathcal{R}_{u,v}\right)}\lnm F_6(\nabla\phi)\rnm_{\LS^2_I\left({\mathcal{R}_{u,v}}\right)}+\lnm \nabla^6\ddfl{\Omega D_3\phi} \rnm_{\LS^2_I\left({\mathcal{R}_{u,v}}\right)}\lnm G_6(e_3\phi)\rnm_{\LS^2_I\left({\mathcal{R}_{u,v}}\right)}\\
            &+\lnm \nabla^6\df{\nabla\phi} \rnm_{\LS^2_I\left({\mathcal{R}_{u,v}}\right)}^2+\lnm \nabla^6\df{\nabla\phi} \rnm_{\LS^2_I\left(\mathcal{R}_{u,v}\right)}\lnm \nabla^7\left(\ddfl{\Omega D_3\phi}-\df{\Omega D_3\phi}\right) \rnm_{\LS^2_I\left(\mathcal{R}_{u,v}\right)}\\
            \lesssim & \epsilon\left(\frac{v}{(-u)^{1-\k}}\right)^{1+\delta}\lnm F_6(\nabla\phi)\rnm_{\LS^2_I\left(\mathcal{R}_{u,v}\right)}+C(B)\epsilon\left(\frac{v}{(-u)^{1-\k}}\right)^{1+\k/(1-\k)-\delta}\lnm G_6( D_3\phi)\rnm_{\LS^2_I\left(\mathcal{R}_{u,v}\right)}\\
            &+\epsilon^2\z^{2+2\delta}.
        \end{aligned}
    \end{equation}
    It suffices to show that 
     \begin{equation}\label{Main est F6 nabphi}
        \lnm F_6(\nabla\phi)\rnm^2_{\LS^2_I\left(\mathcal{R}_{u,v}\right)}\lesssim C(B)\epsilon^2\left(\frac{v}{(-u)^{1-\k}}\right)^{1+2\k/(1-\k)},
     \end{equation}
     \begin{equation}\label{Main est G6 D3phi}
        \lnm G_6(e_3\phi)\rnm^2_{\LS^2_I\left(\mathcal{R}_{u,v}\right)}\lesssim C(B)\epsilon^2\left(\frac{v}{(-u)^{1-\k}}\right)^{2\k/(1-\k)}.
     \end{equation}
The source estimates below use the same bookkeeping as before: products are expanded by Lemma \ref{RI_Lemma_Calculation_of_df}, non-top-order factors are controlled by Proposition \ref{L2S_est_for_non_top_order_terms}, and the remaining top-order flux terms are precisely those included in the bootstrap assumptions.
Lemma \ref{RI_Lemma_Calculation_of_df} implies:
\begin{equation}
    \begin{aligned}
        &\lnm F_6(\nabla\phi)\rnm^2_{\LS_I^2(\R_{u,v})}\\
        \lesssim & \sum_{i\leq 6}\lnm  \nabla^i\df{\Omega\chib},\nabla^i\df{\nabla\phi}\rnm^2_{\LS_I^2(\R_{u,v})}+ \sum_{i\leq 5,j\leq 4}\lnm \nabla^i\df{b}, \nabla^j\df{K},\nabla^i\df{\Omega D_3\phi}\rnm^2_{\LS_I^2(\R_{u,v})}\\
        \lesssim & \lnm \nabla^6\left(\df{\Omega\chib},\df{\nabla\phi}\right)\rnm^2_{\LS_I^2(\R_{u,v})}+\epsilon^2\z^{2+2\delta}
        \lesssim  B\epsilon^2\z^{1+\frac{2\k}{1-\k}},
    \end{aligned}
\end{equation}
and 
\begin{equation}
    \begin{aligned}
        &\lnm G_6(e_3\phi)\rnm^2_{\LS_I^2(\R_{u,v})}\\
        \lesssim &\lnm \nabla^6\left(\df{\Omega^{-1}e_4\phi},\df{\Omega e_3\phi},\df{\nabla\phi},\df{\Omega\tr\chib},\df{\Omega^{-1}\tr\chi},\df\etab\right),\nabla^5\wt{K} \rnm^2_{\LS_I^2(\R_{u,v})}\\
        &+\sum_{i\leq 5,j\leq 4}\lnm \nabla^i\df{\psi},\nabla^i\df{D\phi},\nabla^j\df{K}\rnm^2_{\LS_I^2(\R_{u,v})} \\
        &+\epsilon^2\lnm \nabla^5\left(\Omega^{-2}\left[\Omega^2\right]_0^v,\left[g^{-1}\partial g\right]_0^v\right)\rnm^2_{\LS_I^2(\R_{u,v})}\\
        \lesssim & B\epsilon^2\z^{\frac{2\k}{1-\k}}.
    \end{aligned}
\end{equation}
We thus finish the proof.
\end{proof}

\subsection{Top-order energy estimates for curvature components} 
The curvature estimates use the same energy mechanism, now applied to the Bianchi pairs. The main point is that the approximants remove the leading singular part, leaving source terms that are either lower order or already controlled by the bootstrap assumptions.

\begin{proposition}
    We establish the following estimates for the relevant quantities:
    \begin{equation}\label{estimate:wt_alpha_wt_beta}
        \lnm\nabla^5\wt{\Omega^{-2}\alpha}\rnm^2_{\LS_I^2(H_u^v)}+\lnm\nabla^5\wt{\Omega^{-1}\beta^r}\rnm^2_{\LS_I^2(\Hb_v^u)}\lesssim \epsilon^2\left(\frac{v}{(-u)^{1-\k}}\right)^{2\delta}.
    \end{equation}
\end{proposition}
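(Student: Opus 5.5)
We prove this by the Bianchi-pair energy estimate \eqref{Energy Estimates}, applied to $(\Phi_1,\Phi_2)=(\nabla^5\wt{\Omega^{-2}\alpha},\nabla^5\wt{\Omega^{-1}\beta^r})$, along the lines of the two scalar-field propositions just proved.

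\textbf{Deriving the difference equations.} We start from the renormalized Bianchi equations \eqref{Equations_nab3alpha_nab4alphab} and \eqref{Equations_nab3betab_nab4beta}, rescaled to the variables $\Omega^{-2}\alpha$ and $\Omega^{-1}\beta^r$. This gives
\[
\Omega\nabla_3(\Omega^{-2}\alpha)+\left(\tfrac12\Omega\tr\chib-8\Omega\omegab\right)\Omega^{-2}\alpha-\nabla\hat\otimes\Omega^{-1}\beta^r=\cdots
\]
and the corresponding $\Omega^{-1}\nabla_4(\Omega^{-1}\beta^r)-\dv(\Omega^{-2}\alpha)=\cdots$. We then subtract the approximant identities \eqref{Eq_for_ot_alpha} and \eqref{Eq_for_ot_beta}. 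The $\alpha$ equation for the approximant carries no $\nabla\hat\otimes\beta$ term. We therefore add and subtract $\nabla\hat\otimes\ot{\Omega^{-1}\beta^r}$ and absorb it, via \eqref{R_I_Def_ot_beta}, into the source as $\nabla\dv_0\ot{\Omega^{-1}\chih}$ plus lower-order terms. This term must be compared with the corresponding piece coming from the exact $\nabla_3\chih$ equation.

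\textbf{Structure of the sources.} The source $F$ of the $\nabla_3$ equation collects:
\begin{itemize}
\item coefficient increments $\left[\Omega\tr\chib\right]_0^v$, $\left[\Omega\omegab\right]_0^v$ and $\Lie_{[b]_0^v}$ multiplying $\ot{\Omega^{-2}\alpha}$;
\item quadratic terms $\chih\rho$, $(\zeta+4\eta)\beta$;
\item the scalar-field terms coming from $\mathrm{Ric}_{44}$ and $\mathrm{Ric}_{4A}$. These involve $\nabla(\Omega^{-1}D_4\phi\cdot\nabla\phi)$, which produces the derivative-loss term $\nabla^6\wt{\Omega^{-1}D_4\phi}$.
\end{itemize}
The source $G$ of the $\nabla_4$ equation contains $\left[\dv\right]_0^v\ot{\Omega^{-2}\alpha}$, $(\Omega^2/\Omega_0^2-1)\ot{\Omega^{-2}\alpha}$ and $\eta\cdot\alpha$-type terms.

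\textbf{Commuting and applying the energy estimate.} After commuting with $\nabla^5$ using Lemma \ref{LEMMA_COMMUTATION_FORMULAE_NABLA}, we apply \eqref{Energy Estimates}. The data on $H_{-1}$ and $\Hb_0$ vanish, since the approximants agree with the data there. We compute the borderline coefficient with $s(\Omega^{-2}\alpha)=-1$, $\lambda=\tfrac12$, $\Lambda=8$ and $\partial_u\log w_I=-(1-\k)(1-2\delta)/(-u)$:
\[
-1-2-\k+(1-\k)(1-2\delta)+2+4\k+o(1)=2\k-2\delta(1-\k)+o(1).
\]
This coefficient is not negative. So, as in the $\wt{\Omega^{-1}D_4\phi}$ estimate, we keep the bulk term and close with the ODE model $(-u)\partial_u E\le \lambda E+\text{source}$. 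We take the target growth $\z^{2\delta}$, whose $u$-exponent $2\delta(1-\k)$ must beat $\lambda$. If that fails, we instead choose the weight or the signature shift ($\Omega^s$ normalization) so that the coefficient becomes negative, using $\k<1/3$.

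\textbf{Main obstacle.} The hardest part is bounding the source terms by $\epsilon^2\z^{2\delta}$ in $\LS_I^2(\R_{u,v})$. The difficulty is that $\ot{\Omega^{-2}\alpha}\sim\epsilon\z^{\k/(1-\k)-1}$ is singular at $v=0$. Each coefficient increment supplies a factor $\epsilon\z$ by \eqref{Estimate_Omega/Omega_0}, Proposition \ref{R_I_Estimate_metric} and Corollary \ref{Estimate_of_tilde_nabla}. This yields an integrand $\epsilon^2\z^{2\k/(1-\k)-1+2\delta}$, which is integrable, exactly as in the computation in Proposition \ref{L2S_est_for_non_top_order_terms}.

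\textbf{Remaining terms.}
\begin{itemize}
\item The top-order $\nabla^6\wt{\Omega^{-1}D_4\phi}$ contribution is controlled by the previous proposition, at the better rate $\z^{1+2\k/(1-\k)-2\delta}$.
\item The $\nabla^6\wt{\Omega^{-1}\chih}$ terms arising from $\wt{\Omega^{-1}\beta^r}$ via Codazzi are controlled by the $\Hb$ flux in the bootstrap.
\item All other products are handled by Lemmas \ref{RI_Lemma_Calculation_of_ol} and \ref{RI_Lemma_Calculation_of_df}.
\end{itemize}
Combining these bounds gives $B\lesssim1$ for this component of \eqref{RI_Bootstrap_Assumptions}.
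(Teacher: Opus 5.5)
There is a genuine gap, and it sits at the decisive step: you get the sign of the borderline coefficient wrong, and the closing argument you build on that sign cannot work.

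\textbf{The sign error.} You correctly state $(-u)\partial_u\log w_I=-(1-\k)(1-2\delta)$, but you then insert it with a plus sign. The correct value is
\[
-1-2-\k-(1-\k)(1-2\delta)+2+4\k+o(1)=-2+4\k+2\delta(1-\k)+o(1).
\]
Since $\k<1/3$, this is at most $-\tfrac23+2\delta+o(1)<-\tfrac12$, so the coefficient is negative.

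\textbf{Why your ODE route fails.} With your positive value $\lambda=2\k-2\delta(1-\k)$, the Gronwall argument cannot reach the claimed rate. It produces growth $(-u)^{-\lambda}$, whereas the target $\z^{2\delta}$ only allows $(-u)^{-2\delta(1-\k)}$, and $2\delta(1-\k)<\lambda$ for small $\delta$. In the $\wt{\Omega^{-1}D_4\phi}$ estimate the target exponent $1+\k-2\delta(1-\k)$ does beat the coefficient, which is why the ODE model worked there. Your fallback (change the weight or the $\Omega^s$ normalization) is left unspecified, and $w_I$ is fixed by \eqref{RI_Bootstrap_Assumptions}, so as written the proof does not close.

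\textbf{Why the sign is essential.} $\alpha$ is the $H$-flux member of the pair, and $\lnm\Phi\rnm^2_{\LS_I^2(\R_{u,v})}=\int_{-1}^u\frac{1}{-u'}\lnm\Phi\rnm^2_{\LS_I^2(H_{u'}^v)}du'$. So $\lnm\nabla^5\wt{\Omega^{-2}\alpha}\rnm_{\LS_I^2(\R_{u,v})}$ carries no small factor. Without a favorable bulk term there is nothing to absorb the cross term with $F_6(\alpha)$.

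\textbf{How it closes with the correct sign.} This is the paper's argument.
\begin{itemize}
\item The energy inequality produces a good term $-\tfrac12\lnm\nabla^5\wt{\Omega^{-2}\alpha}\rnm^2_{\LS_I^2(\R_{u,v})}$. By Cauchy--Schwarz this absorbs $\lnm\nabla^5\wt{\Omega^{-2}\alpha}\rnm_{\R}\lnm F_6(\alpha)\rnm_{\R}$.
\item For the $\beta$ cross term, note that $\lnm\nabla^5\wt{\Omega^{-1}\beta^r}\rnm_{\LS_I^2(\R_{u,v})}$ is bounded by a positive power of $\z$ times its $\Hb$-flux. After Cauchy--Schwarz with a small parameter, this term is absorbed into the left-hand side.
\item You are then left to show $\lnm F_6(\alpha)\rnm^2_{\LS_I^2(\R_{u,v})}\lesssim\epsilon^2\z^{2\delta}$ and merely $\lnm G_6(\beta)\rnm^2_{\LS_I^2(\R_{u,v})}\lesssim\epsilon^2$.
\end{itemize}
Your source analysis essentially supplies these two bounds. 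It uses the extra $\z$ from each $[\cdot]_0^v$ increment against $\ot{\Omega^{-2}\alpha}\sim\epsilon\z^{\k/(1-\k)-\delta-1}$, and the $O(\epsilon)$ size of spherically vanishing terms such as $\nabla\hat\otimes\ot{\Omega^{-1}\beta^r}$ against the integrable weight $\z^{2\delta-1}$. No comparison with the $\nabla_3\chih$ equation is needed.

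\textbf{Two smaller inaccuracies.}
\begin{itemize}
\item The scalar derivative-loss term $\Omega^{-1}e_4\phi\,\nabla(\Omega^{-1}e_4\phi)$ comes from $\mathrm{Ric}_{44}$ and $\mathrm{Ric}_{4A}$ in the $\nabla_4\beta^r$ equation. It therefore belongs to $G$, not $F$.
\item $\wt{\Omega^{-2}\alpha}$ does not vanish identically on $H_{-1}$. The data for $\Omega^{-1}\chih$ is the symmetrized $g\,\oc{g}^{-1}\ot{\Omega^{-1}\chih}$, and $\ot{\Omega^{-2}\alpha}$ uses $\Omega_0^{-2}$, so the difference is only small there.
\end{itemize}
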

\begin{proof} 
This is the first top-order curvature estimate in Region I where the explicitly constructed approximating variables are essential. Equations \eqref{Eq_for_ot_alpha} and \eqref{Eq_for_ot_beta} remove the most singular background part, so the remaining energy pair contains only terms with either a genuine difference factor or an explicit increment from $v=0$. We write the equations as:
    \begin{equation}\label{eq:R_I_nab1_alpha_beta}
        \begin{aligned}
            F_1(\alpha):=&\left(\Omega\nabla_3+\frac{1}{2}\Omega\tr\chib-8\Omega\omegab\right)\wt{\Omega^{-2}\alpha}-\nabla\hat\otimes\wt{\Omega^{-1}\beta^r}\\
            = & \nabla\hat\otimes \left(\ot{\Omega^{-1}\beta^r}\right)-\left(\Lie_{\df{b}}+2\left[\Omega\chib\right]_0^v+\frac{1}{2}\left[\Omega\tr\chib\right]_0^v-8\left[\Omega\omegab\right]_0^v\right)\ot{\Omega^{-2}\alpha}+\varphi^2\ol\varphi+\ol\varphi\Psi_2 ,\\
            G_1(\beta):=& \Omega^{-1}\nabla_4\wt{\Omega^{-1}\beta^r}-\dv\wt{\Omega^{-2}\alpha}\\
            =&\Omega^{-1}\chi\wt{\Omega^{-1}\beta^r}+\Omega^{-2}\left[\Omega^{2}\dv\right]_0^v\ot{\Omega^{-2}\alpha}+(2\eta+\etab)\wt{\Omega^{-2}\alpha}+\left[2\eta+\etab\right]_0^v \ot{\Omega^{-2}\alpha}\\
            &+\varphi\Omega^{-1}\beta^r +\varphi^2\ol\varphi-\Omega^{-1}e_4\phi \nabla\left(\Omega^{-1}e_4\phi\right).
        \end{aligned}
    \end{equation}
    We use the schematical notation $\varphi$ to represent the Ricci coefficients except $\Omega^{-1}\omega$ and derivatives of scalar field, and $\Psi_2$ represent $\Omega^{-1}\beta^r,K,\sigma^r$ in these equations. $\ol{\varphi}$ stands for $\Omega^{-1}\chih, \eta,\etab$, which itself can be seen as difference.
    Commutation formulae gives that 
    \begin{equation}\label{eq:R_I_nab6_alpha_beta}
        \begin{aligned}
            F_6(\alpha):=&\left(\Omega\nabla_3+3\Omega\tr\chib+5\Omega\chibh-8\Omega\omegab\right)\wt{\Omega^{-2}\alpha}-\nabla\hat\otimes\left(\nabla^5\wt{\Omega^{-1}\beta^r}\right)\\
            =& \nabla^{5}F_1(\alpha)+\sum_{i=0}^4\nabla^{i+1}(\Omega\chib,\Omega\omegab)\nabla^{4-i}\wt{\Omega^{-2}\alpha}+\sum_{i_1+i_2+i_3=4} \nabla^{i_1}K^{i_2+1}\nabla^{i_3}\wt{\Omega^{-1}\beta^r},\\
            G_6(\beta):=&\Omega^{-1}\nabla_4\nabla^5\wt{\Omega^{-1}\beta^r}-\dv\nabla^5\wt{\Omega^{-2}\alpha}\\
            =&\nabla^5G_1(\beta)+\sum_{i=0}^5\Omega^{-2}\nabla^{i}(\Omega^2\Omega^{-1}\chi)\nabla^{5-i}\wt{\Omega^{-1}\beta^r}+\sum_{i_1+i_2+i_3=4} \nabla^{i_1}K^{i_2+1}\nabla^{i_3}\wt{\Omega^{-2}\alpha}.
        \end{aligned}
    \end{equation}
    The crucial observation here is that the vast majority of the terms in $F_6(\alpha)$ and $G_6(\beta)$ inherently contain difference quantities, with the notable exceptions of $\nabla\ot{\Omega^{-1}\beta},\varphi^2\ol\varphi$, $\ol\varphi\Psi_2$, $\varphi\Omega^{-1}\beta^r$, and $\nabla(\Omega^{-1}e_4\phi)^2$ arising from $F_1$ and $G_1$. Because these specific terms are identically zero in any exactly spherically symmetric spacetime, they can be rigorously treated as essentially difference quantities themselves. Consequently, their $\LS^2_1(S_{u,v})$ norms are straightforwardly bounded by $O(\epsilon)$, implying that their corresponding $\LS^2_I(\R_{u,v})$ norms are comfortably bounded by $\epsilon\z^{\delta}$. Another potentially dangerous term that requires careful handling is $\left[b,\Omega\chib,\Omega\omegab\right]_0^v\ot{\Omega^{-2}\alpha}$. Drawing upon our prior estimates for $\ot{\Omega^{-1}\chih}$, we know that $\lnm\ot{\Omega^{-2}\alpha}\rnm_{\LS^2_1(S_{u,v})}\lesssim\epsilon \z^{\k/(1-\k)-\delta -1}$. The leading $\left[\cdot \right]_0^v$ difference factor provides a crucial extra power of $\z$. Therefore, this composite term exhibits an $\LS^2_1(S_{u,v})$ norm bounded by $\epsilon\z^{\k/(1-\k)-\delta}$, meaning its full spacetime $\LS^2_I(\R_{u,v})$ norm is securely bounded by $\epsilon\z^{\k/(1-\k)}$. With these structural insights established, we now proceed to the rigorous details of the estimate. The structural paragraph above explains why the sources can be treated as small after measuring them in the weighted spacetime norm. The energy inequality below uses this fact together with Proposition \ref{L2S_est_for_non_top_order_terms}, Corollary \ref{Estimate_of_tilde_nabla}, and the bootstrap bounds for the remaining curvature components. 
    By energy estimates \eqref{Energy Estimates}, we deduce that 
        \begin{equation}
            \begin{split}
                &\lnm \nabla^5\wt{\Omega^{-2}\alpha}\rnm_{\LS_I^2(H_u^v)}^2+\lnm \nabla^5\wt{\Omega^{-1}\beta^r}\rnm_{\LS_I^2(\Hb_v^u)}^2\\
                \leq & (o_{\epsilon,\epsilon_1}(1) -1-2-\k-(1-\k)(1-2\delta)+2+4\k)\lnm \nabla^5\wt{\Omega^{-2}\alpha}\rnm_{\LS_I^2(\mathcal{R}_{u,v})}^2\\
                &+\lnm \nabla^5\wt{\Omega^{-2}\alpha}\rnm_{\LS_I^2(\mathcal{R}_{u,v})}\lnm F_6(\alpha)\rnm_{\LS_I^2(\mathcal{R}_{u,v})}+\lnm \nabla^5\wt{\Omega^{-1}\beta^r}\rnm_{\LS_I^2(\mathcal{R}_{u,v})}\lnm G_6(\beta)\rnm_{\LS_I^2(\mathcal{R}_{u,v})}\\
                \leq & \lnm \nabla^5\wt{\Omega^{-2}\alpha}\rnm_{\LS_I^2(\mathcal{R}_{u,v})}\lnm F_6(\alpha)\rnm_{\LS_I^2(\mathcal{R}_{u,v})}\\
                &+\frac{v}{(-u)^{1-\k}}\lnm \nabla^5\wt{\Omega^{-1}\beta^r}\rnm_{\LS_I^2(\Hb_v^u)}\lnm G_6(\beta)\rnm_{\LS_I^2(\mathcal{R}_{u,v})}-\frac{1}{2}\lnm \nabla^5\wt{\Omega^{-2}\alpha}\rnm_{\LS_I^2(\mathcal{R}_{u,v})}^2\\
                \leq & \lnm F_6(\alpha)\rnm^2_{\LS_I^2(\R_{u,v})}+\delta \lnm \nabla^5\wt{\Omega^{-1}\beta^r}\rnm^2_{\LS_I^2(\Hb_v^u)}\\
            &+\frac{1}{\delta}\z^2\lnm G_6(\beta)\rnm^2_{\LS_I^2(\R_{u,v})}-\frac{1}{4}\lnm \nabla^5\wt{\Omega^{-2}\alpha}\rnm_{\LS_I^2(\mathcal{R}_{u,v})}^2.
            \end{split}
        \end{equation} 
The preceding energy inequality has reduced the argument to two source estimates. The estimate for $F_6(\alpha)$ uses the low-order Ricci and scalar bounds, while the estimate for $G_6(\beta)$ additionally uses the lapse increment estimate \eqref{Estimate_Omega/Omega_0} for the terms containing the explicit approximating curvature. 
It remains to prove that 
\begin{equation}
    \lnm F_6(\alpha)\rnm_{\LS_I^2(\mathcal{R}_{u,v})}^2\lesssim \epsilon^2\z^{2\delta},\ \lnm G_6(\beta)\rnm^2_{\LS_I^2(\R_{u,v})}\lesssim\epsilon^2.
\end{equation}
Since we already have fifth order $\LS^2_1(S_{u,v})$-estimate for Ricci coefficients and derivatives of scalar field from Proposition \ref{L2S_est_for_non_top_order_terms}, $$\sum_{i\leq 5}\lnm \nabla^i\left(\ol{\psi},\ol{D\phi}\right)\rnm^2_{\LS^2_1(S_{u,v})}\lesssim\epsilon^2,$$
Lemma \ref{RI_Lemma_Calculation_of_ol} implies that
\begin{equation}
    \begin{aligned}
        &\lnm G_6(\beta)\rnm^2_{\LS^2_I(R_{u,v})}\\
        \lesssim & \sum_{i\leq 5,j\leq 4}\lnm\nabla^i \left({\Omega^{-1}\beta^r},\ol{\varphi},\nabla\left(\Omega^{-1}e_4\phi\right)\right),\nabla^j\wt{\Omega^{-2}\alpha}\rnm^2_{\LS^2_I(R_{u,v})}\\
        &+\lnm\nabla^5 \left(\Omega^{-2}\left[\Omega^2\dv\right]_0^v\ot{\Omega^{-2}\alpha},\left[2\eta+\etab\right]_0^v\ot{\Omega^{-2}\alpha}\right)\rnm^2_{\LS^2_I(R_{u,v})}+\lnm(\eta,\etab)\nabla^5 \wt{\Omega^{-2}\alpha}\rnm^2_{\LS^2_I(R_{u,v})}.
    \end{aligned}
\end{equation}
Because of the $\z$-power in $\left[\cdot\right]_0^v$, the terms with $\ot{\Omega^{-2}\alpha}$ can be controlled with estimate:
\begin{equation}
\begin{aligned}
        &\lnm\nabla^5 \left(\Omega^{-2}\left[\Omega^2\dv\right]_0^v\ot{\Omega^{-2}\alpha},\left[2\eta+\etab\right]_0^v\ot{\Omega^{-2}\alpha}\right)\rnm^2_{\LS^2_I(R_{u,v})}\\
        \lesssim & \int_0^v\int_{-1}^u\frac{1}{(-u^\prime)^{2-\k}}{\left(\frac{v^\prime}{(-u^\prime)^{1-\k}}\right)}^{2+2\k/(1-\k)-2\delta-2+2\delta-1}\epsilon^2\lesssim\epsilon^2\z^{2\k/(1-\k)}.
\end{aligned}
\end{equation}
The last term is bounded by $\epsilon^2\lnm \nabla^5\wt{\Omega^{-2}\alpha}\rnm^2_{\LS^2_I(\R_{u,v})}$ for the smallness of $\eta,\etab$. For the first term on the right hand side, we note that 
\begin{equation}
    \ol{\psi}=\df{\psi}+\ol{\psi}_0=\ddfl{\psi}+\ol{\psi}_0+v\cdot\left(\Lie_v\ol{\psi}\right)_0=\wt{\psi}-\left(\ot{\psi}-\psi^c\right).
\end{equation}
From the bootstrap assumptions, and the initial value construction, we deduce that 
\begin{equation}
    \begin{aligned}
        &\sum_{i\leq 5}\lnm\nabla^i \left({\Omega^{-1}\beta^r},\ol{\varphi},\nabla\left(\Omega^{-1}e_4\phi\right)\right)\rnm^2_{\LS^2_I(R_{u,v})}\\
        \lesssim & \sum_{i\leq 5}\lnm\nabla^i \left(\wt{\Omega^{-1}\beta^r},\df{\varphi},\nabla\left(\wt{\Omega^{-1}e_4\phi}\right)\right)\rnm^2_{\LS^2_I(R_{u,v})}+\int_0^v \int_{-1}^u\frac{1}{(-u^\prime)^{2-\k}}\left(\frac{v^\prime}{(-u^\prime)^{1-\k}}\right)^{2\delta-1}\epsilon^2\\
        \lesssim &\epsilon^2\z^{2\delta}.
    \end{aligned}
\end{equation}
Therefore we have proved that $\lnm G_6(\beta)\rnm^2_{\LS^2_I(\R_{u,v})}\lesssim \epsilon^2\z^{2\delta}$.
As for $F_6(\alpha)$, the estimates read
\begin{equation}
    \begin{aligned}
        \lnm F_6(\alpha)\rnm^2_{\LS_I^2(\R_{u,v})}
        \lesssim & \sum_{i\leq 4,j\leq 5}\lnm \nabla^i\left(\wt{\Omega^{-2}\alpha},\wt{\Omega^{-1}\beta^r}\right),\nabla^j\left(\ol{\varphi},\ol{\varphi}\Psi_2\right)\rnm^2_{\LS_I^2(\R_{u,v})}\\
        &+\sum_{\psi\in\{ \Omega\chib,\Omega\omegab\}}\lnm \nabla^6\ot{\Omega^{-1}\beta^r},\nabla^5\left(\Lie_{\left[b\right]_0^v}+\left[\psi\right]_0^v\right)\ot{\Omega^{-2}\alpha}\rnm^2_{\LS^2_I(\R_{u,v})}.
    \end{aligned}
\end{equation}
The $\ot{\Omega^{-2}\alpha}$-term is bounded similarly as $\left[2\eta+\etab\right]_0^v\ot{\Omega^{-2}\alpha}$, and using $\lnm \nabla^i\ot{\Omega^{-1}\beta^r}\rnm_{\LS_1(S_{u,v})}\lesssim\epsilon$, we have its $\LS_I^2(\R_{u,v})$ norm is bounded by $\epsilon\z^\delta$. 
\begin{equation}
    \sum_{j\leq 5}\lnm \nabla^j\left(\ol{\varphi}\Psi_2\right)\rnm^2_{\LS_I^2(\R_{u,v})}\lesssim \sum_{i,j\leq 5}\sup_{u^\prime,v^\prime}\lnm \nabla^i\ol{\varphi}\rnm^2_{\LS_1^2(S_{u,v})}\lnm \nabla^j\Psi_2\rnm^2_{\LS_I^2(\R_{u,v})}\lesssim\epsilon^2\z^{2\delta}.
\end{equation}
We finally obtain that 
\begin{equation}
    \lnm\nabla^5\wt{\Omega^{-2}\alpha}\rnm^2_{\LS_I^2(H_u^v)}+\lnm\nabla^5\wt{\Omega^{-1}\beta^r}\rnm^2_{\LS_I^2(\Hb_v^u)}\lesssim \epsilon^2\left(\frac{v}{(-u)^{1-\k}}\right)^{2\delta}.
\end{equation}

\end{proof}

The next proposition closes the bootstrap assumption for other curvature components.
\begin{proposition}\label{R_I_Estimate_beta_sigma_K_betab_alphab}
    For $\Psi_1\in\{\Omega^{-1}\beta^r,\sigma^r,K,\Omega\betab^r\}$ and $\Psi_2\in\{\sigma^r,K,\Omega\betab^r,\Omega^2\alphab\}$, we have
    \begin{equation}\label{R_I_Top_order_Est_beta_sigma_K_betab_alphab}
        \begin{split}
            &\lnm\nabla^5\df{\Psi_1}\rnm^2_{\LS_I^2(H_u^v)}+\lnm\nabla^5\df{\Psi_2}\rnm^2_{\LS_I^2(\Hb_v^u)}
            \leq  \epsilon^2\left(\frac{v}{(-u)^{1-\k}}\right)^{2\k/(1-\k)}.
        \end{split}
    \end{equation}
\end{proposition}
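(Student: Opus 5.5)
We will prove \eqref{R_I_Top_order_Est_beta_sigma_K_betab_alphab} by the same mechanism as in the preceding proposition for $(\wt{\Omega^{-2}\alpha},\wt{\Omega^{-1}\beta^r})$. The Bianchi system is organised into three energy pairs of the form required by \eqref{Energy Estimates}, each with a $\Omega\nabla_3$ equation for the first member and a $\Omega^{-1}\nabla_4$ equation for the second:
\[
(\Psi_1,\Psi_2)=(\Omega^{-1}\beta^r,(K,\sigma^r)),\quad ((K,\sigma^r),\Omega\betab^r),\quad (\Omega\betab^r,\Omega^2\alphab).
\]
These come from \eqref{Equations_nab3beta_nab4betab}, \eqref{Equations_K}, \eqref{Equations_reduced_sigma} and \eqref{Equations_nab3betab_nab4beta}, using the $\Omega\nabla_3$ equations for $\Omega^{-1}\beta^r,K,\sigma^r,\Omega\betab^r$ and the $\Omega^{-1}\nabla_4$ equations for $K,\sigma^r,\Omega\betab^r,\Omega^2\alphab$.

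\textbf{Step 1: difference equations.} We pass to the difference $\df{\cdot}$ using the $\Omega\nabla_3\df{\psi}$ and $\Omega^{-1}\nabla_4\df{\psi}$ formulas in the remark following Lemma \ref{Product_rule_for_difference}. We then commute with $\nabla^5$ via Lemma \ref{LEMMA_COMMUTATION_FORMULAE_NABLA} and the $[\mathcal{D},\nabla^i]$ lemma. This produces $F_6,G_6$ sources. Their linear parts are $\nabla^{i}(\Omega\chib,\Omega\omegab,\Omega^{-1}\chi)\nabla^{5-i}\df{\Psi}$ and $K$-commutator terms. Their nonlinear parts are $\df{\varphi\Psi}$, $\df{\varphi\nabla\varphi}$ and the scalar-field Ricci terms $\df{\nabla(D\phi\, D\phi)}$. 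Finally there are increment terms $(\Lie_{[b]_0^v}+[\Omega\chib]_0^v)\Psi_0$ and $[\dv]_0^v\Psi_0$.

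\textbf{Step 2: sign of the borderline coefficient.} For each pair we compute $-1+2s(\Phi_1)-\k+(-u)\partial_u\log w_I+4\lambda+\Lambda\k/2$ with $s=-1$ and $(-u)\partial_u\log w_I=-(1-\k)(1-2\delta)$.
\begin{itemize}
\item For $\Omega^{-1}\beta^r$ we have $\lambda=1$, since $\tr\chib-2\omegab$ in $\nabla_3\beta^r$ gives $\Lambda=2$. The coefficient is roughly $-3-\k-(1-\k)+4+\k=\k<0$? No, it equals $\k$, so it is not negative. It must therefore be handled as in the $D_4\phi$ estimate.
\item The $\Omega\betab^r$ pair has $\lambda=2$ and is even worse.
\end{itemize}
For the positive-coefficient cases we follow the Region I ODE argument already used in \eqref{energy est nabla 3 D4phi nabla 3 nablaphi}. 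Since the data on $H_{-1}$ satisfies $\df{\Psi}|_{H_{-1}}\lesssim\epsilon v^{\k/(1-\k)}$ and $\df{\Psi}|_{\Hb_0}=0$, and the forcing is $O(\epsilon^2 z^{2\k/(1-\k)})$, the Gronwall-type inequality $(-u)\partial_u E\le \mathcal C E+o(1)\epsilon^2 z^{2\k/(1-\k)}$ closes. This works provided $2\k\cdot\frac{1-\k}{1-\k}\cdot$(the $u$-decay of $z^{2\k/(1-\k)}$, namely $2\k$) exceeds $(-u)\mathcal C$. Where it does not, we instead rescale: we put an extra weight $\Omega^{2}$ or $(-u)^{a}$ in front, as the paper suggests for avoiding $\Omega^{-1}\omega$. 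Alternatively we estimate $\Omega\betab^r$ and $\Omega^2\alphab$ only through the incoming flux and use the smallness factor $\z$ of $\R_{u,v}$ against $\Hb$ norms, as in the proof for $\wt{\Omega^{-2}\alpha}$.

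\textbf{Step 3: sources.} We bound $\|F_6\|_{\LS_I^2(\R)}$ and $\|G_6\|_{\LS_I^2(\R)}$.
\begin{itemize}
\item Products are expanded by Lemma \ref{RI_Lemma_Calculation_of_df} with $a=\k/(1-\k)$, the minimal background increment rate coming from $(\Omega^{-1}e_4\phi)^c$.
\item Non-top-order Ricci and scalar factors are controlled by Proposition \ref{L2S_est_for_non_top_order_terms}.
\item Top-order $\nabla^6\df{\varphi}$ terms are controlled by the bootstrap \eqref{RI_Bootstrap_Assumptions}, converted from flux to bulk at the cost of $\z$.
\item Connection increments $[\nabla^i]_0^v$ are handled by Corollary \ref{Estimate_of_tilde_nabla}.
\item Terms in $\alpha$ appearing in $\nabla_3\beta^r$ (through $\chih\cdot\betab$, $\eta\rho$) are harmless. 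The $\nabla_4\betab^r$ equation contains $\chibh\cdot\beta$, where we use $\wt{\Omega^{-1}\beta^r}$ together with $\ot{\Omega^{-1}\beta^r}=O(\epsilon)$.
\end{itemize}

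\textbf{Main obstacle.} The main obstacle is the derivative loss in the Ricci terms $\nabla^B\mathrm{Ric}_{BA}$ and $\nabla_A\mathrm{Ric}_{CD}g^{CD}$, which contribute $\nabla^6\df{\nabla\phi}$ and $\nabla^6\df{D\phi}$ at fifth curvature order. These must be taken from the top-order scalar estimates already proved, namely Proposition \ref{R_I_Energy_Est_nabphi_D_3phi} and the estimate for $\nabla^6\wt{\Omega^{-1}D_4\phi}$. Those estimates have rate $z^{1+2\k/(1-\k)-2\delta}$ on fluxes, which is more than enough. The delicate term is $\nabla^6\df{\Omega^{-1}D_4\phi}$: it differs from $\wt{\cdot}$ by $\ot{\cdot}-(\cdot)^c$, whose increment is only $\epsilon z^{\k-\delta}$ by \eqref{ID_triangle_chih_e4phi}. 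We expect this to be exactly the source fixing the rate $z^{2\k/(1-\k)}$ in \eqref{R_I_Top_order_Est_beta_sigma_K_betab_alphab}, and we check it against the $\LS_I$ weight integrated over $\R_{u,v}$.
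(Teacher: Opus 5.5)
There is a genuine gap in Step 2, and it sits in the pair $(\Omega^{-1}\beta^r,(K,\sigma^r))$, which you believe closes by the ODE argument.

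Your value $\Lambda=2$ ignores the lapse weight. Since $e_3(\Omega^{-1})=2\omegab\,\Omega^{-1}$, one has $\Omega\nabla_3(\Omega^{-1}\beta^r)+(\Omega\tr\chib-4\Omega\omegab)\Omega^{-1}\beta^r=\dots$, so $\lambda=1$ and $\Lambda=4$. This is the same mechanism by which the $\Omega^{-2}$ weight turns $-4\omegab$ into $-8\Omega\omegab$ in \eqref{eq:R_I_nab1_alpha_beta}. The bulk coefficient is therefore $2\k+2\delta(1-\k)+o_{\epsilon,\epsilon_1}(1)$, not $\k$.

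The target is $\epsilon^2\z^{2\k/(1-\k)}=\epsilon^2 v^{2\k/(1-\k)}(-u)^{-2\k}$, which grows only like $(-u)^{-2\k}$. A Gronwall argument in the style of \eqref{energy est nabla 3 D4phi nabla 3 nablaphi} therefore needs the coefficient strictly below $2\k$. That argument worked there only because the target $\z^{1+2\k/(1-\k)-2\delta}$ has $u$-exponent $1+\k-2\delta(1-\k)$, which is far above the coefficient. Here the margin is negative, and you lose a factor $(-u)^{-2\delta(1-\k)}$, which is unbounded on Region I as $v\to 0$. The loss cannot be removed by dropping $\delta$, because $\delta>0$ is exactly what makes $w_I$ integrable at $v=0$.

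For $(\Omega\betab^r,\Omega^2\alphab)$ one has $\lambda=2$, $\Lambda=-4$, and the coefficient is about $4-2\k$. Inserting extra $\Omega^2$ or $(-u)^a$ weights changes the quantity being estimated, so that fallback does not yield \eqref{R_I_Top_order_Est_beta_sigma_K_betab_alphab} either.

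The missing idea is that no sign condition is needed at all. The paper keeps $\lnm\nabla^5\df{\Psi_1}\rnm^2_{\LS_I^2(\R_{u,v})}$ on the right-hand side of \eqref{Energy Estimates} with an $O(1)$ constant and bounds it directly from estimates already available.

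\begin{itemize}
\item For $\Psi_1\in\{\sigma^r,K,\Omega\betab^r\}$: these also belong to the $\Psi_2$ list, so their $\LS_I^2(\Hb)$ bootstrap bounds, integrated in $v$, give $\lnm\nabla^5\df{\Psi_1}\rnm^2_{\LS_I^2(\R_{u,v})}\lesssim B\epsilon^2\z^{1+2\k/(1-\k)}$. This is your ``alternative'', and it is the right mechanism.
\item For $\Omega^{-1}\beta^r$: there is no $\Hb$ bootstrap for $\df{\Omega^{-1}\beta^r}$. You must instead combine the just-proved \eqref{estimate:wt_alpha_wt_beta} for $\wt{\Omega^{-1}\beta^r}$ with the explicit identity $\df{\Omega^{-1}\beta^r}-\wt{\Omega^{-1}\beta^r}=\ot{\Omega^{-1}\beta^r}-(\Omega^{-1}\beta^r)_0$. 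The right-hand side is $O(\epsilon\z^{\k/(1-\k)-\delta/2})$ in $\LS_1^2(S_{u,v})$. This gives \eqref{R_I_ENERGY_EST_df_betar}, namely a bulk bound $\epsilon^2\z^{1+2\delta}+\epsilon^2\z^{2\k/(1-\k)+\delta}$.
\end{itemize}

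With these bulk bounds the estimate closes using $\lnm F_6\rnm^2_{\LS_I^2(\R_{u,v})}\ll\epsilon^2\z^{2\k/(1-\k)}$ and $\lnm G_6\rnm^2_{\LS_I^2(\R_{u,v})}\lesssim\epsilon^2$. The cross term satisfies $\lnm\nabla^5\df{\Psi_2}\rnm_{\LS_I^2(\R_{u,v})}\lnm G_6\rnm_{\LS_I^2(\R_{u,v})}\lesssim B^{1/2}\epsilon^2\z^{1/2+\k/(1-\k)}$. This is acceptable because $\k/(1-\k)<1/2$ when $\k<1/3$.

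Your treatment of the sources, including the scalar-field derivative-loss terms, is broadly consistent with the paper. The failure is confined to how you propose to absorb the bulk term.
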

\begin{proof}
    For pair $(\Psi_1,\Psi_2)$ taking values in $(\Omega^{-1}\beta^r,(\sigma^r,K^r))$, $((\sigma^r,K^r),\Omega\betab^r)$,\\
    and $(\Omega\betab^r,\Omega^2\alphab)$, the schematic equations are 
    \begin{equation}
        \begin{aligned}
            F_1(\Psi_1):=&\Omega\nabla_3\df{\Psi_1}-\D \df{\Psi_2}\\
            =& \df{\psi\Psi}+\df{\varphi\nabla\varphi}+\df{\varphi^3} \\
            &+\left[\D\right]_0^v \ol{\Psi_2}(0)+\ol{\Omega\chib}\left[\Psi_1^c\right]_0^v+\left(\left[\Omega\chib\right]_0^v+\Lie{\left[b\right]_0^v}\right)\ol{\Psi_1}(0)+\df{\Omega\chib}\Psi_1^c(0),\\
            G_1(\Psi_2):=&\Omega^{-1}\nabla_4\df{\Psi_2}+{}^*\D\df{\Psi_1}\\
            = &\ol{\psi\Psi}+\ol{\varphi\nabla\varphi}+\ol{\varphi^3}  -{}^*\D\ol{\Psi_1}(0) -\ol{\Omega^{-2}}\Lie_v\Psi_2^c+\ol{\Omega^{-1}\chi}\Psi_2^c,
        \end{aligned}
    \end{equation}
    where $\varphi$ can be $\Omega\chib,\Omega^{-1}\chi,\eta,\etab,\Omega\omegab$ and $\nabla\phi,\Omega\nabla_3\phi,\Omega^{-1}D_4\phi$ and $\Psi_2$ represents $\Omega^{-1}\beta^r$, $\sigma^r$, $K^r$, $\Omega\betab^r$, $\Omega^2\alphab$. The three pairs are handled together because each Bianchi system has the same energy structure after the spherical symmetry terms have been subtracted. The estimates used below are the preceding top-order control of $\wt{\Omega^{-1}\beta^r}$, the non-top-order bounds, and Lemma \ref{RI_Lemma_Calculation_of_df}. By commutation formulae, it follows that 
\begin{equation}
    \begin{aligned}
        F_6(\Psi_1):=& \Omega\nabla_3\nabla^5\df{\Psi_1}-\D \nabla^5\df{\Psi_2}\\
        =& \sum_{{}^{i_1+i_2+i_3+i_4=5}}\nabla^{i_1}\eta^{i_2}\nabla^{i_3}\Omega\chib\nabla^{i_4}\df{\Psi_1}+\sum_{{}^{i_1+i_2+i_3=4}}\nabla^{i_1}K^{i_2+1}\nabla^{i_3}\df{\Psi_2}+\nabla^5 F_1(\Psi_1),\\
        G_6(\Psi_2):=& \Omega^{-1}\nabla_4\nabla^5\df{\Psi_2}+{}^*\D\nabla^5\df{\Psi_1}\\
        =& \sum_{i_1+i_2+i_3+i_4=5}\nabla^{i_1}(\eta,\etab)^{i_2}\nabla^{i_3}\Omega^{-1}\chi\nabla^{i_4}\df{\Psi_2}+\sum_{i_1+i_2+i_3=4}\nabla^{i_1}K^{i_2+1}\nabla^{i_3}\df{\Psi_1}\\
        &+\sum_{i_1+i_2+i_3=5}\nabla^{i_1}(\eta,\etab)^{i_2}\nabla^{i_3}G_1(\Psi_2).
    \end{aligned}
\end{equation}
The energy estimate \eqref{Energy Estimates} gives that 
\begin{equation}
    \begin{aligned}
        &\lnm \nabla^5\df{\Psi_1}\rnm^2_{\LS_I^2(H_u^v)}-\lnm \nabla^5\df{\Psi_1}\rnm^2_{\LS_I^2(H_{-1}^v)}+\lnm \nabla^5\df{\Psi_2}\rnm^2_{\LS_I^2(\Hb_v^u)}-\lnm \nabla^5\df{\Psi_2}\rnm^2_{\LS_I^2(\Hb_v^u)}\\
        \lesssim & \lnm \nabla^5\df{\Psi_1}\rnm^2_{\LS_I^2(\R_{u,v})}+\lnm \nabla^5\df{\Psi_1}\rnm_{\LS_I(\R_{u,v})}\lnm F_6(\Psi_1)\rnm_{\LS_I^2(\R_{u,v})}\\
        &+\lnm \nabla^5\df{\Psi_2}\rnm_{\LS_I^2(\R_{u,v})}\lnm G_6(\Psi_2)\rnm_{\LS_I^2(\R_{u,v})}.
    \end{aligned}
\end{equation}
Because of $\df{\Omega^{-1}\beta^r}-\wt{\Omega^{-1}\beta^r}=\ot{\Omega^{-1}\beta^r}-\left(\Omega^{-1}\beta^r\right)_0\sim_{\LS^2_1(S_{u,v})} \epsilon\z^{\k/(1-\k)-\delta/2}, $ we can use \eqref{estimate:wt_alpha_wt_beta} to obtain the bulk estimate for $\df{\Omega^{-1}\beta^r}$,
\begin{equation}\label{R_I_ENERGY_EST_df_betar}
    \begin{aligned}
        &\lnm \nabla^5\df{\Omega^{-1}\beta^r}\rnm^2_{\LS_I^2(\R_{u,v})}\\
        \lesssim & \lnm \nabla^5\wt{\Omega^{-1}\beta^r}\rnm^2_{\LS_I^2(\R_{u,v})}+\int_{0}^v\int_{-1}^u(-u^\prime)^{\k-2}\left(\frac{v^\prime}{(-u^\prime)^{1-\k}}\right)^{2\delta-1}\cdot \epsilon^2 \left(\frac{v^\prime}{(-u^\prime)^{1-\k}}\right)^{\frac{2\k}{1-\k}-\delta}\\
        \lesssim &\epsilon^2\z^{1+2\delta}+ \epsilon^2\z^{\delta+\frac{2\k}{1-\k}},
    \end{aligned}
\end{equation}
and by bootstrap assumption, we can derive that
\begin{equation}
    \lnm \nabla^5\df{\Psi_2}\rnm^2_{\LS_I^2(\R_{u,v})}\lesssim B\epsilon^2\z^{1+\frac{2\k}{1-\k}}.
\end{equation}
It remains to show 
\begin{equation}
    \lnm  G_6(\Psi_2)\rnm_{\LS_I^2(\mathcal{R}_{u,v})}^2\lesssim \epsilon^2,\quad 
    \lnm F_6(\Psi_1)\rnm_{\LS_I^2(\mathcal{R}_{u,v})}^2\lesssim \epsilon^2\left(\frac{v}{(-u)^{1-\k}}\right)^{\frac{2\k}{1-\k}}.
\end{equation}
First for $G_6$, we deduce that 
\begin{equation}
    \begin{aligned}
        &\lnm G_6(\Psi_2)\rnm^2_{\LS_I^2(\R_{u,v})}\\\lesssim & \sum_{i\leq 5}\lnm \nabla^i G_1(\Psi_2)\rnm^2_{\LS_I^2(\R_{u,v})}+\sum_{i\leq 5}\lnm \nabla^i(\eta,\etab)\rnm^2_{\LS_I^2(\R_{u,v})}\\
        &+\sum_{i\leq 5}\lnm \nabla^i\left(\df{\eta},\df\etab,\df{\Omega^{-1}\chi},\df{{\Psi_2}}\right)\rnm^2_{\LS_I^2(\R_{u,v})}+\sum_{i\leq 4}\lnm \nabla^i\ol{\Psi_1}\rnm^2_{\LS_I^2(\R_{u,v})}.
    \end{aligned}
\end{equation}
The last three terms can be estimated by the bootstrap assumption so that 
\begin{equation}
    \begin{aligned}
        &\sum_{i\leq 5}\lnm \nabla^i(\eta,\etab)\rnm^2_{\LS_I^2(\R_{u,v})}+\sum_{i\leq 5}\lnm \nabla^i\left(\df{\eta},\df\etab,\df{\Omega^{-1}\chi},\df{{\Psi_2}}\right)\rnm^2_{\LS_I^2(\R_{u,v})}\\
        &\qquad+\sum_{i\leq 4}\lnm \nabla^i\ol{\Psi_1}\rnm^2_{\LS_I^2(\R_{u,v})}
        \lesssim \epsilon^2\z^{2\delta}<\epsilon^2,
    \end{aligned}
\end{equation}
and the first term satisfies
\begin{equation}
    \begin{aligned}
        \sum_{i\leq 5}\lnm \nabla^iG_1(\Psi_2)\rnm^2_{\LS_I^2(\R_{u,v})}\lesssim & \sum_{i\leq 5}\lnm \nabla^i\left(\ol{\Psi},\ol{\varphi},\ol{\nabla\varphi},O_{H^{N-3}(\epsilon)}\right)\rnm^2_{\LS_I^2(\R_{u,v})}\\
        \lesssim & \epsilon^2\z^{2\delta}+\sum_{i\leq 5}\lnm \nabla^i\left(\df{\Psi},\df{\varphi},\df{\nabla\varphi}\right)\rnm^2_{\LS_I^2(\R_{u,v})}\\
        \lesssim & \epsilon^2\z^{2\delta}<\epsilon^2.
    \end{aligned}
\end{equation} 
The estimate \eqref{R_I_ENERGY_EST_df_betar} is the only place where the approximation of $\Omega^{-1}\beta^r$ enters this combined curvature argument. It upgrades the tilde estimate from the previous proposition to the $\df{\cdot}$-estimate required in the present Bianchi pair. 
Using estimate \eqref{R_I_ENERGY_EST_df_betar}, the bound for $F_6(\Psi_1)$ is controlled by:
\begin{equation}
    \begin{aligned}
        \sum_{i_1+i_2+i_3+i_4=5}&\lnm\nabla^{i_1}\eta^{i_2}\nabla^{i_3}\Omega\chib\nabla^{i_4}\df{\Psi_1}\rnm^2_{\LS_I^2(\R_{u,v})}+\sum_{i_1+i_2+i_3=4}\lnm\nabla^{i_1}K^{i_2+1}\nabla^{i_3}\df{\Psi_2}\rnm^2_{\LS_I^2(\R_{u,v})}\\
        \lesssim & \sum_{i\leq 5}\lnm \nabla^i\left(\df{\eta},\df{\Omega\chib},\df{\Psi_2}\right)\rnm^2_{\LS_I^2(\R_{u,v})}+\sum_{i\leq 5}\lnm \nabla^i\df{\Omega^{-1}\beta^r}\rnm^2_{\LS_I^2(\R_{u,v})}\\
        \lesssim & B\epsilon^2\z^{1+\frac{2\k}{1-\k}}+ \epsilon^2\z^{\frac{2\k}{1-\k}+\delta}\ll \epsilon^2\z^{\frac{2\k}{1-\k}},
    \end{aligned}
\end{equation}
and 
\begin{equation}
    \begin{aligned}
        &\lnm \nabla^5F_1(\Psi_1) \rnm^2_{\LS_I^2(\R_{u,v})}\\
        \lesssim & \sum_{i\leq 5}\lnm \left(\df{\varphi},\df{\nabla\varphi},\df{\Psi}\right) \rnm^2_{\LS_I^2(\R_{u,v})}+\lnm \nabla^5\left(\left[\D\right]_0^v\ol{\Psi_2}(0) \right)\rnm^2_{\LS_I^2(\R_{u,v})}\\
        \lesssim & \epsilon^2\z^{\frac{2\k}{1-\k}+\delta}+\sum_{i\leq 5}\lnm \left(\wt{\varphi},\wt{\nabla\varphi},\wt{\Psi}\right) \rnm^2_{\LS_I^2(\R_{u,v})}+\epsilon^2\sum_{i\leq 5}\lnm \partial^i\left(\left[g\right]_0^v+\left[g\partial g\right]_0^v\right) \rnm^2_{\LS_I^2(\R_{u,v})}\\
        \lesssim & \epsilon^2\z^{\frac{2\k}{1-\k}+\delta}+B\epsilon^2\z^{1+2\delta}+B\epsilon^2\z^{2+2\delta}\\
        \ll & \epsilon^2\z^{\frac{2\k}{1-\k}}.
    \end{aligned}
\end{equation}
We thus finish the proof.

\end{proof}

To successfully close the energy estimates for all related Ricci coefficients, we require an enhanced result specifically for $\Omega\betab^r$. 
\begin{proposition}
    More precisely, the following top-order estimate holds:
    \begin{equation}\label{R_I_Top_order_est_for_ddfl_betab}
        \lnm \nabla^5(\df{\sigma^r},\df{K})\rnm^2_{\LS^2(H_{u}^v)}+\lnm \nabla^5\ddfl{\Omega\betab^r}\rnm^2_{\LS^2(\Hb_v^u)}\leq \epsilon^2\left(\frac{v}{(-u)^{1-\k}}\right)^{1+2\k/(1-\k)-\delta}.
    \end{equation}
\end{proposition}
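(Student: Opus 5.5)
We run the same Bianchi-pair energy argument as in Proposition \ref{R_I_Estimate_beta_sigma_K_betab_alphab}, applied to the pair $(\Psi_1,\Psi_2)=((\sigma^r,K),\Omega\betab^r)$. The one change is that the incoming variable is now the second-order difference $\ddfl{\Omega\betab^r}$ instead of $\df{\Omega\betab^r}$; this is what produces the extra power $\z^{1-\delta}$.

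\textbf{Equations.} We write the $\Omega\nabla_3$ equations \eqref{Equations_K}, \eqref{Equations_reduced_sigma} for $\df{K},\df{\sigma^r}$ and the $\Omega^{-1}\nabla_4$ equation \eqref{Equations_nab3beta_nab4betab} for $\Omega\betab^r$. For the latter we take the difference in the $\ddfl{\cdot}$ form given in the remark on difference equations, namely
\[
\Omega^{-1}\nabla_4\ddfl{\psi}=\Omega^{-1}\chi\,\ddfl{\psi}+\Omega^{-2}\df{\Omega^2(F+\Omega^{-1}\chi\psi)} .
\]
The principal part ${}^*\mathcal{D}(K,\sigma^r)$ then appears only as $\df{\cdot}$. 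The mismatch between $\df{\Omega\betab^r}$ on the $\nabla_3$ side and $\ddfl{\Omega\betab^r}$ on the $\nabla_4$ side is $v(\Lie_v\ol{\Omega\betab^r})_0$. This is an explicit initial-data term, of size $\epsilon\z$ in $\LS_1^2(S)$ to all orders. Any top-order contribution it makes, after integration by parts in the energy identity, therefore costs at most $\epsilon^2\z^{2}$. This is handled exactly as the term $\nabla^7(\ddfl{\Omega D_3\phi}-\df{\Omega D_3\phi})$ was in Proposition \ref{R_I_Energy_Est_nabphi_D_3phi}. We commute with $\nabla^5$ using Lemma \ref{LEMMA_COMMUTATION_FORMULAE_NABLA} and obtain sources $F_6,G_6$ as before.

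\textbf{Energy identity.} We apply \eqref{Energy Estimates} with weight $w_I$. The initial terms vanish: $\df{\cdot}=0$ on $\Hb_0$, and $\ddfl{\cdot}$ vanishes on $H_{-1}$ up to the prescribed $O(\epsilon v^{1+})$ data, which is consistent with the rate. The bulk coefficient is of order one but harmless. We solve the model inequality $(-u)\partial_u F\le\lambda F+\text{source}$ as in the scalar-field top-order estimate. This is possible provided the source is bounded by $\epsilon^2\z^{1+2\k/(1-\k)-\delta}$ with the exponent satisfying $(1-\k)(1+2\k/(1-\k)-\delta)>\lambda$, which holds as in Proposition \ref{R_I_Energy_Est_nabphi_D_3phi}.

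\textbf{Source estimates.} In $F_6$, the curvature factor $\nabla^5\df{\Omega\betab^r}$ is split as $\ddfl{\cdot}+v(\Lie_v\ol{\cdot})_0$. The remaining pieces are the products $\df{\psi\Psi}$, $\df{\varphi\nabla\varphi}$, the connection term $[\mathcal{D}]_0^v\ol{\Psi}(0)$, and the $b$-increment terms. These are bounded by Lemma \ref{RI_Lemma_Calculation_of_df} together with Propositions \ref{L2S_est_for_non_top_order_terms} and \ref{R_I_Estimate_metric} and Corollary \ref{Estimate_of_tilde_nabla}; each carries a factor $\z^{1/2}$ in the bulk norm relative to the hypersurface norm.

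The $G_6$ sources consist of:
\begin{itemize}
\item the $\df{\cdot}$ of quadratic terms;
\item the scalar-field terms $\nabla\mathrm{Ric}$, $\tr\chi\,\mathrm{Ric}_{3A}$, which by the top-order scalar estimates already proved are bounded by $\epsilon^2\z^{1+2\k/(1-\k)-2\delta}$;
\item $\nabla\df{K},{}^*\nabla\df{\sigma^r}$, which are absorbed by the energy pairing itself;
\item the terms $\ddfl{\Omega\chib},\ddfl{\Omega\tr\chib}$, which are controlled by the last line of \eqref{RI_Bootstrap_Assumptions}.
\end{itemize}

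\textbf{Main obstacle.} The expected difficulty is the lower-order product $\Omega^{-2}\df{\Omega^2\,\Omega^{-1}\chi\cdot\Omega\betab^r}$ and the coupling $\chibh\cdot\beta$. These contain $\df{\Omega^{-1}\chi}$ and $\df{\Omega^{-1}\beta^r}$, which are only of size $\z^{\k/(1-\k)}$. To gain the needed extra $\z^{1-\delta}$, I would use that $\Omega\betab^r$ and $\Omega\chibh$ are $O(\epsilon)$ (they vanish in spherical symmetry) and the factor $v$ from integrating along $H_u$ (the transport estimate \eqref{transport_4}). If this falls short, I would fall back on replacing $\df{\Omega^{-1}\beta^r}$ by $\wt{\Omega^{-1}\beta^r}$ plus the explicit increment, using \eqref{R_I_ENERGY_EST_df_betar}.
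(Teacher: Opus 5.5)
Your architecture matches the paper's: the same pair $((\sigma^r,K),\Omega\betab^r)$, with $\df{\cdot}$ on the $\Omega\nabla_3$ side and $\ddfl{\cdot}$ on the $\Omega^{-1}\nabla_4$ side, the mismatch $\nabla^6\bigl(v\Lie_v\ol{\Omega\betab^r}(0)\bigr)$ treated as explicit data, and commuted sources $F_6$, $G_6$. The gap is in where the extra power of $\z$ comes from, which is the point of this proposition. You place it on the wrong side.

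\emph{The $G_6$ side.} No extra power is needed here. The pairing is $\langle\nabla^5\ddfl{\Omega\betab^r},G_6\rangle$, and $\nabla^5\ddfl{\Omega\betab^r}$ is measured on $\Hb$. Its bulk norm therefore satisfies $\lnm\cdot\rnm^2_{\LS_I^2(\R_{u,v})}\lesssim\int_0^v(-u)^{\k-1}\lnm\cdot\rnm^2_{\LS_I^2(\Hb_{v'}^u)}dv'$, which already carries a factor $\z$. After Young's inequality the flux is absorbed and only $\z\,\lnm G_6\rnm^2_{\LS_I^2(\R_{u,v})}$ remains. So the paper needs only $\lnm G_6(\betab)\rnm^2_{\LS_I^2(\R_{u,v})}\lesssim C(B)\epsilon^2\z^{2\k/(1-\k)}$. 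Lemma \ref{RI_Lemma_Calculation_of_df}, Proposition \ref{L2S_est_for_non_top_order_terms} and \eqref{R_I_ENERGY_EST_df_betar} give this directly for the terms containing $\df{\Omega^{-1}\chi}$ and $\df{\Omega^{-1}\beta^r}$.

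Had the extra $\z^{1-\delta}$ really been needed there, your remedies would not supply it:
\begin{itemize}
\item Smallness of $\Omega\chibh$ and $\Omega\betab^r$ gains powers of $\epsilon$, not of $\z$, and $\epsilon$, $\epsilon_1$ are independent.
\item \eqref{transport_4} is a sphere estimate; inside the pairing the only available $v$-gain is the one attached to $\nabla^5\ddfl{\Omega\betab^r}$ described above.
\item \eqref{R_I_ENERGY_EST_df_betar} gives only $\epsilon^2\z^{2\k/(1-\k)+\delta}$, because $\ot{\Omega^{-1}\beta^r}-(\Omega^{-1}\beta^r)_0$ is genuinely of size $\epsilon\z^{\k/(1-\k)-\delta/2}$.
\end{itemize}

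\emph{The $F_6$ side.} This is where the full extra power is required. The partner $\nabla^5(\df{K},\df{\sigma^r})$ has bulk norm of the same order as the target, so the pairing gains no $\z$, and one needs $\lnm F_6(\sigma,K)\rnm^2_{\LS_I^2(\R_{u,v})}\lesssim C(B)\epsilon^2\z^{1+2\k/(1-\k)}$. Your justification, that each term ``carries a factor $\z^{1/2}$ in the bulk norm relative to the hypersurface norm'', covers genuine difference factors controlled on $\Hb$. It does not cover two other kinds of term:
\begin{itemize}
\item the explicit term $\epsilon\z^{a}$ that Lemma \ref{RI_Lemma_Calculation_of_df} produces from the Christodoulou increments $\left[\psi^c\right]_0^v$, which is a sphere bound whose $\LS_I^2(\R_{u,v})$-norm squared is only $\epsilon^2\z^{2a+2\delta}$;
\item differences such as $\df{\Omega^{-1}\chih}$, which are controlled only on $H$.
\end{itemize}
Had any of $\Omega^{-1}\chih$, $\Omega^{-1}D_4\phi$, $\Omega^{-1}\beta^r$ entered $F_1(\sigma,K)$, these contributions would be of size roughly $\epsilon^2\z^{2\k/(1-\k)}$ in $\LS_I^2(\R_{u,v})$, and the estimate would not close.

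The missing idea is the structural fact the paper invokes: the $\Omega\nabla_3$ equations for $K$ and $\sigma^r$ contain none of these singular quantities. Hence every increment entering $F_1(\sigma,K)$ satisfies $\lnm\left[\Psi^c\right]_0^v,\left[\varphi^c\right]_0^v\rnm^2_{\LS_1^2}\lesssim\z^2$, and the explicit term is $\epsilon^2\z^{2+2\delta}$.

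A minor point: the relevant initial terms are $\nabla^5(\df{K},\df{\sigma^r})$ on $H_{-1}$, which is small data, and $\nabla^5\ddfl{\Omega\betab^r}$ on $\Hb_0$, which vanishes identically. You have the two cones interchanged.
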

\begin{proof}
    This proof separates the incoming curvature component $\Omega\betab^r$ because the first $v$-Taylor term must be removed before the energy estimate has the desired gain.
    Recall that  
    \begin{align*}
        G_1(\betab):=&\Omega^{-1}\nabla_4 \ddfl{\Omega\betab^r} -\nabla \df{K}-{}^*\nabla\df{\sigma^r}\\
    =& \Omega^{-2}\left[\Omega^2\right]_0^v\nabla\left(\ol{\Omega^{-1}\nabla_4\Omega\betab^r}(0)\right)+\varphi\ddfl{\Omega\betab^r} +\df{\varphi\Psi_2}+\df{\varphi^3}+\df{\varphi\nabla\varphi},
    \end{align*}
    where $\varphi\in\{\Omega\chib,\Omega^{-1}\chi,\etab,\nabla\phi,\Omega D_3\phi,\Omega^{-1}D_4\phi\}$.
\begin{align*}
    G_6(\betab) :=&\Omega^{-1}\nabla_4 \nabla^5\ddfl{\Omega\betab^r} -\nabla\nabla^5 \df{K}-{}^*\nabla\nabla^5\df{\sigma^r} \\
    =& \sum_{i_1+i_2+i_3=5}\nabla^{i_1}(\eta+\etab)^{i_2}\nabla^{i_3}G_1(\betab)+\sum_{i_1+i_2+i_3+i_4=5}\nabla^{i_1}(\eta,\etab)^{i_2}\nabla^{i_3}(\Omega^{-1}\chi)\nabla^{i_4}\ddfl{\Omega\betab^r}\\
    &+\sum_{i_1+i_2+i_3=4}\nabla^{i_1}K^{i_2+1}\nabla^{i_3}\left(\df{K},\df{\sigma^r}\right).
    \end{align*}
    Also, we obtain 
    \begin{equation}
        \begin{aligned}
            &F_6(\sigma,K)=: \Omega\nabla_3\nabla^5\left(\df{\sigma^r},\df{K}\right)+{}^*\left(\nabla,{}^*\nabla\right) \nabla^5\df{\Omega\betab^r}\\
        &\qquad= \sum_{i_1+i_2 =5}\nabla^{i_1}(\Omega\chib)\nabla^{i_2}\left(\df{\sigma^r},\df{K}\right)+\sum_{{i_1+i_2+i_3=4}}\nabla^{i_1}K^{i_2+1}\nabla^{i_3}\df{\Omega\betab^r}+\nabla^5 F_1(\sigma,K),
        \end{aligned}
    \end{equation}
    with 
    \begin{equation}
        F_1(\sigma,K)= \df{\varphi^3}+\df{\varphi\nabla(\eta,\Omega D_3\phi)}+\df{\psi(\sigma^r,K^r,\Omega\betab^r)}.
    \end{equation}
    By energy estimates, we have 
    \begin{align*}
        &\lnm \nabla^5\df{\sigma^r},\nabla^5\df{K}\rnm^2_{\LS_I^2(H_u^v)}+\lnm \nabla^5\ddfl{\Omega\betab^r}\rnm^2_{\LS_I^2(\Hb_v^u)}\\
        \lesssim & \lnm \nabla^5\df{\sigma^r},\nabla^5\df{K^r}\rnm_{\LS_I^2(\mathcal{R}_{u,v})}\lnm F_6(\sigma,K)\rnm_{\LS_I^2(\mathcal{R}_{u,v})}\\
        &+\lnm \nabla^5\ddfl{\Omega\betab^r}\rnm_{\LS_I^2(\mathcal{R}_{u,v})}\lnm G_6(\betab)\rnm_{\LS_I^2(\mathcal{R}_{u,v})}+\lnm \nabla^5\df{\sigma^r},\nabla^5\df{K}\rnm_{\LS_I^2(\mathcal{R}_{u,v})}^2\\
        &+\lnm \nabla^5\df{\sigma^r},\nabla^5\df{K}\rnm_{\LS_I^2(\mathcal{R}_{u,v})}\lnm \nabla^6\left(v\cdot\Lie_v\ol{\Omega\betab^r}(0)\right)\rnm_{\LS_I^2(\mathcal{R}_{u,v})}.
    \end{align*}
    The source bounds repeatedly use Proposition \ref{L2S_est_for_non_top_order_terms}, the combined curvature estimate \eqref{R_I_Top_order_Est_beta_sigma_K_betab_alphab}, and the product lemma for $\ddfl{\cdot}$. It suffices to prove that 
    \begin{equation}
        \begin{split}
            \lnm F_6(\sigma,K)\rnm_{\LS_I^2(\mathcal{R}_{u,v})}^2\leq & C(B)\epsilon^2\left(\frac{v}{(-u)^{1-\k}}\right)^{1+2\k/(1-\k)},\\
            \lnm G_6(\betab)\rnm_{\LS_I^2(\mathcal{R}_{u,v})}^2\leq & C(B)\epsilon^2\left(\frac{v}{(-u)^{1-\k}}\right)^{2\k/(1-\k)}.
        \end{split}
    \end{equation} 
The goal is now reduced to estimating the two commuted sources. In $G_6(\betab)$, terms with $\Omega\betab^r$ are controlled by the current bootstrap and then improved, while terms with $K$ and $\sigma^r$ use the combined curvature estimate just proved. 
    We first estimate $G_6(\betab)$. Using Lemma \ref{RI_Lemma_Calculation_of_df} and Proposition \ref{L2S_est_for_non_top_order_terms}, this yields 
    \begin{equation}
        \begin{aligned}
            &\sum_{i_1+i_2+i_3+i_4=5}\lnm\nabla^{i_1}(\eta,\etab)^{i_2}\nabla^{i_3}(\Omega^{-1}\chi)\nabla^{i_4}\ddfl{\Omega\betab^r}\rnm^2_{\LS_I^2(\R_{u,v})} \\
            &\qquad\qquad\qquad\qquad+\sum_{i_1+i_2+i_3=4}\lnm\nabla^{i_1}K^{i_2+1}\nabla^{i_3}\left(\df{K},\df{\sigma^r}\right)\rnm^2_{\LS_I^2(\R_{u,v})} \\
            \lesssim & \sum_{i\leq 5,j\leq 4}\lnm \nabla^i\left(\df{\eta},\df{\etab},\df{\Omega^{-1}\chi}\right),\nabla^j\left(\df{K^r},\df{\sigma^r},\ddfl{\Omega\betab^r}\right) \rnm^2_{\LS_I^2(\R_{u,v})}\\
            &+\lnm \nabla^5\ddfl{\Omega\betab^r} \rnm^2_{\LS_I^2(\R_{u,v})}\\
            \lesssim & \epsilon^2\z^{2}+\epsilon^2\z^{1+\frac{2\k}{1-\k}}\leq C_0\epsilon^2\z^{1+\frac{2\k}{1-\k}},
        \end{aligned}
    \end{equation}
    and 
    \begin{equation}
        \begin{aligned}
            &\sum_{i_1+i_2+i_3=5}\lnm\nabla^{i_1}(\eta+\etab)^{i_2}\nabla^{i_3} G_1(\betab) \rnm^2_{\LS_I^2(\R_{u,v})}\\
            \lesssim & \sum_{i\leq 5,j\leq 4}\lnm \nabla^{i}\left(\df{\varphi},\epsilon\cdot\left[\Omega^2\right]_0^v\right),\nabla^j\left(\df{\Psi_2}\right)\rnm^2_{\LS_I^2(\R_{u,v})}\\
            &\qquad\qquad\qquad+\lnm \nabla^6\df\varphi,\nabla^5\df{\Psi_2}\rnm^2_{\LS_I^2(\R_{u,v})}+\epsilon^2\z^{2+2\delta}\\
            \lesssim & \epsilon^2\z^{2+2\delta}+B\epsilon^2\z^{1+2\delta}<\epsilon^2\z^{\frac{2\k}{1-\k}}.
        \end{aligned}
    \end{equation}
    When estimating $F_6(\sigma,K)$, a key algebraic observation is that the potentially deeply singular terms ${\Omega^{-1}\beta^r},{\Omega^{-1}\chih},{\Omega^{-1}D_4\phi}$ do not appear at all in the $\Omega\nabla_3({K^r},{\sigma^r})$ evolution equations. This structural absence crucially implies that 
    \begin{equation}
        \lnm\left[\Psi^c\right]_0^v,\left[\varphi^c\right]_0^v\rnm^2_{\LS_1^2(\R_{u,v})}\lesssim \z^2.
    \end{equation}
    Using Lemma \ref{RI_Lemma_Calculation_of_df}, we obtain 
    \begin{equation}
        \begin{aligned}
            &\lnm F_6(\sigma,K) \rnm^2_{\LS_I^2(\R_{u,v})}\\
            \lesssim & \sum_{i\leq 5}\lnm\nabla^i\left(\df{\eta},\df{\Omega\chib},\df{\varphi}\right) \rnm^2_{\LS_I^2(\R_{u,v})}+\sum_{j\leq 4}\lnm\nabla^j\left(\df{\sigma^r},\df{K},\df{\Omega\betab^r}\right) \rnm^2_{\LS_I^2(\R_{u,v})}\\
            +&\lnm\nabla^6\left(\df{\eta},\df{\Omega D_3\phi}\right),\nabla^5\left(\df{\sigma^r},\df{K},\df{\Omega\betab^r}\right) \rnm^2_{\LS_I^2(\R_{u,v})}+\epsilon^2\left(\frac{v}{(-u)^{1-\k}}\right)^{2+2\delta}.
        \end{aligned}
    \end{equation}
    By Proposition \ref{L2S_est_for_non_top_order_terms}, the non top order terms can be bounded by $\epsilon^2\z^{2}$, and with the bootstrap assumption \eqref{RI_Bootstrap_Assumptions}, the top order terms are bounded by $B\epsilon^2\z^{1+\frac{2\k}{1-\k}}$. We thus finish the proof.

\end{proof}

\subsection{Top order estimates for Ricci coefficients}
After the scalar-field and curvature energies have been closed, the Ricci coefficients are recovered from their transport and elliptic equations. The estimates below complete the improvement of the Region I bootstrap assumptions.

\begin{proposition}\label{R_I_Energy_Est_omegab}
    With the previous propositions, we have estimate for $\ddfl{\Omega\omegab}$,
    \begin{equation}
        \lnm \nabla^6\ddfl{\Omega\omegab} \rnm^2_{\LS_I^2(\Hb_v^u)}\lesssim \epsilon^2\left(\frac{v}{(-u)^{1-\k}}\right)^{1+2\k/(1-\k)-\delta}.
    \end{equation}
\end{proposition}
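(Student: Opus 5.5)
The plan is to estimate $\nabla^6\ddfl{\Omega\omegab}$ through its outgoing transport equation, integrated in $v$ and then in $u$, rather than through an energy pair. The relation
\[
\Omega^{-1}\nabla_4\Omega\omegab\sim \Omega^{2}\Big(-\tfrac12 K+\eta\cdot\etab-\tfrac12|\eta|^2-\tfrac18\tr\chi\tr\chib+\tfrac14\chih\cdot\chibh+\tfrac14 D_3\phi D_4\phi+\tfrac14|\nabla\phi|^2\Big)
\]
involves $K$ undifferentiated. So $\nabla^6$ of the source contains only $\nabla^6 K$, and $\nabla^6$ of first-order quantities of scalar field and Ricci coefficients. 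Using the remark on $\ddfl{\cdot}$ equations, I would write
\[
\Lie_v\nabla^6\ddfl{\Omega\omegab}=\sum\nabla^{i_1}(\Omega^2)\nabla^{i_2}(\Omega^{-1}\chi)\nabla^{i_3}\ddfl{\Omega\omegab}+\nabla^6\df{\Omega^2 F},
\]
with $F$ the bracket above. I would then apply the transport estimate \eqref{transport_4} on each sphere (the data vanish at $v=0$ by construction of $\ddfl{\cdot}$). Integrating the resulting sphere bound against $\frac{du'}{-u'}$ gives the $\Hb_v^u$ norm, gaining the factor $\z$ from \eqref{transport_4}.

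The product expansions of $\df{\Omega^2F}$ would use Lemma \ref{RI_Lemma_Calculation_of_df}, with the background increment rate $a=\k/(1-\k)$ coming from $(\Omega^{-1}D_4\phi)^c$. The remaining lapse factors would be handled by \eqref{Estimate_of_df_Omega}. This reduces the problem to bounding, in $\LS_I^2(H_u^v)$, the quantity
\[
\nabla^6\big(\df{\eta},\df{\etab},\df{\Omega^{-1}\chi},\df{\Omega\chib},\df{\Omega D_3\phi},\df{\nabla\phi},\df{\Omega^{-1}D_4\phi}\big),\quad \nabla^5\df{\nabla K}.
\]
The lower-order factors come from Proposition \ref{L2S_est_for_non_top_order_terms}. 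The $H_u^v$-fluxes of top-order Ricci terms come from the bootstrap \eqref{RI_Bootstrap_Assumptions}, which gives $B\epsilon^2\z^{2\k/(1-\k)}$. The top-order scalar terms come from the two preceding scalar propositions. For $\Omega^{-1}D_4\phi$, I would split $\df{\cdot}=\wt{\cdot}+(\ot{\cdot}-\cdot_0-[\cdot^c]_0^v)$ and use \eqref{ID_triangle_chih_e4phi}-type control of the approximant.

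Each source is thus $\lesssim\epsilon^2\z^{2\k/(1-\k)}$, and the extra factor $\z$ gives $\epsilon^2\z^{1+2\k/(1-\k)}$. The weaker exponent $-\delta$ in the statement absorbs losses from $\tfrac{1}{-u'}$ integration and from the $\wt{\cdot}$ versus $\df{\cdot}$ conversion for $e_4\phi$.

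The main obstacle is $\nabla^6 K$, since only $\nabla^5\df{K}$ is controlled by Proposition \ref{R_I_Estimate_beta_sigma_K_betab_alphab} and \eqref{R_I_Top_order_est_for_ddfl_betab}. I would first try to absorb it by working with the renormalized combination $\Omega\omegab$ plus a potential for $K$. The simplest version is to estimate instead $\nabla^5$ of $\nabla\Omega\omegab$, using the Hodge-type relation between $\nabla\Omega\omegab$, $\nabla K$ and $\Omega\betab^r$ obtained from the $\nabla_3\etab$ and $\nabla_4\etab$ equations (equivalently $\nabla\omegab\sim\nabla_3(\eta+\etab)$ structure, as in mass-aspect arguments). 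This trades $\nabla K$ for $\nabla_3\etab$, $\Omega\betab^r$ and $\nabla_3$ of lower-order terms. The resulting $\nabla^5\ddfl{\Omega\betab^r}$ term is exactly what \eqref{R_I_Top_order_est_for_ddfl_betab} controls on $\Hb_v^u$ with rate $\z^{1+2\k/(1-\k)-\delta}$, which matches the target exponent. If that route fails, the fallback is to estimate $\nabla^6\ddfl{\Omega\omegab}$ with one derivative less via \eqref{transport_4} and recover the top order by elliptic estimates \eqref{Ellip Est} for the $\etab$ Hodge system, with curl given by $\sigma^r$ and divergence by the mass aspect $\dv\etab-K$.
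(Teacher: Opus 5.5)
There is a genuine gap. You correctly isolate the obstruction: transporting $\nabla^6\ddfl{\Omega\omegab}$ along $e_4$ produces $\nabla^6 K$, while only $\nabla^5\df{K}$ is controlled. Everything therefore depends on how that term is removed, and neither of your two mechanisms removes it.

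\emph{The ``simplest version''.} The relation $\nabla(\Omega\omegab)\sim\Omega\nabla_3(\eta+\etab)$ is just the commutator identity for $\nabla\log\Omega$. The structure equations supply only $\Omega\nabla_3\etab$ and $\Omega\nabla_4\eta$. There is no independent $\nabla_4\etab$ or $\nabla_3\eta$ equation, precisely because those contain $\nabla\omega$ and $\nabla\omegab$. So controlling $\nabla_3\eta$ is equivalent to controlling $\nabla\omegab$; the argument is circular and trades $\nabla K$ for nothing.

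\emph{The fallback.} This fails for a different reason. The $\etab$ Hodge system ($\cl\etab=-\sigma^r$, $\dv\etab=K-\mub$) recovers $\nabla^6\etab$. But on $S_{u,v}$ nothing ties the transversal quantity $\Omega\omegab=-\frac{1}{2}\Omega e_3\log\Omega$ to $\etab$. To gain the top derivative elliptically you need a Hodge system for $\nabla\ddfl{\Omega\omegab}$ itself, i.e.\ control of $\Delta\ddfl{\Omega\omegab}$ at fourth order. That is the original problem again.

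\emph{The missing idea} is a renormalized one-form whose $\nabla_4$ equation contains no top-order curvature derivative. Since $\Omega^{-1}\nabla_4(\Omega\omegab)$ contains $-\frac{1}{2}K$ and $\Omega^{-1}\nabla_4(\Omega\betab^r)$ contains $\nabla K+{}^*\nabla\sigma^r$, the combination $-\nabla\Omega\omegab-\frac{1}{2}\Omega\betab^r$ cancels $\nabla K$. It still leaves $-\frac{1}{2}{}^*\nabla\sigma^r$, which is one derivative too many, since only $\nabla^5\df{\sigma^r}$ is controlled. The paper therefore introduces $\Omega\omegab^\dagger$ with $\Omega^{-1}\nabla_4\Omega\omegab^\dagger=\frac{1}{2}\sigma^r$, $\Omega\omegab^\dagger|_{\Hb_0}=0$, and sets
\[
\wt{\omegab^r}=-\nabla\ddfl{\Omega\omegab}+{}^*\nabla\ddfl{\Omega\omegab^\dagger}-\frac{1}{2}\ddfl{\Omega\betab^r}.
\]
Its outgoing equation has only the sources $\Omega^{-2}\df{\Omega^2\varphi\Psi_2}$, $\Omega^{-2}\df{\Omega^2\varphi\nabla\varphi}$ and lower-order terms. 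Then:
\begin{itemize}
\item \eqref{transport_4} bounds $\nabla^5\wt{\omegab^r}$, using the $\wt{\cdot}$-to-$\df{\cdot}$ conversion you planned for $\Omega^{-1}D_4\phi$, and likewise for $\Omega^{-1}\chih$.
\item $\nabla^6\ddfl{\Omega\omegab}$ is recovered from the Hodge system $\dv\nabla\ddfl{\Omega\omegab}=-\dv\wt{\omegab^r}-\frac{1}{2}\dv\ddfl{\Omega\betab^r}$, $\cl\nabla\ddfl{\Omega\omegab}=0$, via \eqref{Ellip Est} and \eqref{R_I_Top_order_est_for_ddfl_betab}.
\end{itemize}
The last estimate is where the rate $\z^{1+2\k/(1-\k)-\delta}$ you correctly predicted actually enters. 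Working with the scalar $\Delta\ddfl{\Omega\omegab}+\frac{1}{2}\dv\ddfl{\Omega\betab^r}$ would also work, since $\dv{}^*\nabla=0$ kills the $\sigma^r$ term. Either way, some renormalization of this kind is indispensable, and your proposal does not contain one.
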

\begin{proof}
    We introduce auxiliary function ${\Omega\omegab^\dagger}$:
        \[\Omega^{-1}\nabla_4{\Omega\omegab^\dagger}=\frac{1}{2}{\sigma^r},\quad {\Omega\omegab^\dagger}|_{\Hb_0}=0.\] 
        The auxiliary Hodge potential converts the estimate of $\Omega\omegab$ into a transport estimate for a curl-divergence pair. The estimates used in this proof are the enhanced $\Omega\betab^r$ bound \eqref{R_I_Top_order_est_for_ddfl_betab}, the non-top-order Proposition \ref{L2S_est_for_non_top_order_terms}, and the lapse estimate \eqref{Estimate_of_df_Omega}.
        It follows immediately by the transport estimate \eqref{transport_4},
        \begin{equation}\label{non_top_est_reduced_omegab}
            \begin{aligned}
                \sum_{i\leq 5}\lnm \nabla^i\ddfl{\Omega\omegab^\dagger}\rnm^2_{\LS_I^2(S_{u,v})}\lesssim & \frac{v}{(-u)^{1-\k}}\sum_{i\leq 5,j\leq 4}\lnm \nabla^i\left(\df{\sigma^r},\df{\Omega\omegab^\dagger}\right),\nabla^j\left(\df\eta,\df\etab\right)\rnm^2_{\LS_I^2(H_u^v)}\\
                \lesssim & \epsilon^2\z^{1+2\k/(1-\k)}.
            \end{aligned}
        \end{equation}
We further write 
\begin{equation}
            \begin{split}
                \wt{\omegab^r}_A=&-\nabla_A\ddfl{\Omega \omegab}+{}^*\nabla_A\ddfl{\Omega \omegab^\dagger}-\frac{1}{2}\ddf{\ol{\Omega \betab^r}_A},
            \end{split}
        \end{equation}
        and we can derive schematical equation that 
        \begin{equation}
            \begin{aligned}
                \Omega^{-1}\Lie_{e_4}\wt{\omegab^r}\sim & \Omega^{-2}\df{\Omega^2\varphi\Psi_2}+\Omega^{-2}\df{\Omega^2\varphi\nabla\varphi}+\Omega^{-2}\nabla(\Omega^2\Omega^{-1}\chi)\left(\df{\Omega\omegab},\df{\Omega\omegab^\dagger}\right).
            \end{aligned}
        \end{equation}
        where $\Psi_2\in \{\Omega^{-1}\beta^r,K,\sigma^r,\Omega\betab^r\}$, $\varphi\in\{\Omega\chib,\Omega^{-1}\chi,\eta,\etab,\Omega e_3\phi,\Omega^{-1}e_4\phi,\nabla\phi\}$. By commutation formulae, we obtain 
        \begin{equation}
            \begin{aligned}
                &\Omega^{-1}\Lie_{e_4}\nabla^5\wt{\omegab^r}
                    \\=&\sum_{i_1+i_2+i_3=5}\nabla^{i_1}(\eta+\etab)^{i_2}\nabla^{i_3}\left(\Omega^{-1}\Lie_{e_4}\wt{\omegab^r}\right)+\sum_{i_1+i_2+i_3+i_4=5}\nabla^{i_1}(\eta+\etab)^{i_2}\nabla^{i_3}(\Omega^{-1}\chi)\nabla^{i_4}\wt{\omegab^r}.
            \end{aligned}
        \end{equation} 
The right-hand side of the transport equation contains only products in which at least one factor is already a difference or a known background increment. This is why Lemma \ref{RI_Lemma_Calculation_of_df} applies directly after the commutation. Using \eqref{transport_4} and Lemma \ref{RI_Lemma_Calculation_of_df}, the following estimate holds:
    \begin{equation}
        \begin{aligned}
            &\lnm \nabla^5\wt{\omegab^r}\rnm^2_{\LS^2_I(S_{u,v})}\lesssim \z\sum_{i\leq 5}\lnm \nabla^i\left(\Lie_{\Omega^{-1}e_4}\wt{\omegab^r},\wt{\omegab^r}\right)\rnm^2_{\LS^2_I(H_{u}^v)}\\
            \lesssim & \z\sum_{i\leq 5}\lnm \nabla^i\left(\df{\varphi},\df{\nabla\varphi},\df{\Omega^2},\df{\Psi_2},\wt{\omegab^r}\right)\rnm^2_{\LS^2_I(H_u^v)}\\
            &+\epsilon^2\z^{1+2\k/(1-\k)}.
        \end{aligned}
    \end{equation}
Here $\Psi_2\in \{\Omega^{-1}\beta^r,K,\sigma^r,\Omega\betab^r\}$, $\varphi\in\{\Omega\chib,\Omega^{-1}\chi,\eta,\etab,\Omega e_3\phi,\Omega^{-1}e_4\phi,\nabla\phi\}$. Therefore it holds that  
\begin{equation}
    \lnm \left[\varphi^c\right]_0^v,\left[\Psi_2^c\right]_0^v\rnm_{\LS_1(S_{u,v})}\lesssim \z^{\k/(1-\k)},
\end{equation}
and the correpsonding $\LS^2_{I}(H_u^v)$-norms are bounded by $\z^{\k/(1-\k)+\delta}$.
To estimate $\lnm \nabla^5\wt{\omegab^r}\rnm_{\LS^2_I(\Hb_v^u)}$, it suffices to prove 
\begin{equation}
    \sum_{i\leq 5}\lnm \nabla^i\left(\df{\varphi},\df{\nabla\varphi},\df{\Omega^2},\df{\Psi_2},\wt{\omegab^r}\right)\rnm^2_{\LS^2_I(H_{u}^v)}\lesssim\epsilon^2 \z^{2\k/(1-\k)}.
\end{equation}
For $\varphi\neq \Omega^{-1}\chih$ and $\Omega^{-1}e_4\phi$, the estimate follows directly by the bootstrap assumptions for $\df{\varphi}$ or $\ddfl{\varphi}$. For $\Omega^{-1}\chih$ and $\Omega^{-1}e_4\phi$, we can use $\df{\varphi}-\wt\varphi=\left(\ot\varphi-\varphi_0\right)-\left(\varphi^c-\varphi_0\right)$ to obtain that  
\begin{equation}
    \begin{aligned}
        \lnm \nabla^6\df{\varphi}\rnm^2_{\LS_I^2(\R_{u,v})}\lesssim & \lnm \nabla^6\wt{\varphi}\rnm^2_{\LS_I^2(\R_{u,v})}+\int_0^v\int_{-1}^u\frac{1}{(-u^\prime)^{2-\k}}\left(\frac{v^\prime}{(-u^\prime)}\right)^{2\delta-1+\frac{2\k}{1-\k}-\delta}\epsilon^2\\
        \lesssim & \epsilon^2\z^{2\k/(1-\k)+\delta}.
    \end{aligned}
\end{equation}
The term $\nabla^5\wt{\omegab^r}$ appearing on the right-hand side can be easily absorbed into the left-hand side via a strictly standard bootstrap argument, while all lower-order terms are robustly controlled by the pre-established non-top-order estimates for $\Omega\omegab,\Omega\omegab^\dagger,\Omega\betab^r$ given in \ref{L2S_est_for_non_top_order_terms} and \eqref{non_top_est_reduced_omegab}. Finally, the estimate for $\df{\Omega^2}$ is completed precisely as in \eqref{Estimate_of_df_Omega}.
Therefore we conclude that 
        \begin{equation}\label{R_I_Est_omegab_r}
            \lnm \nabla^5\wt{\omegab^r}\rnm^2_{\LS^2(\Hb_v^u)}\lesssim \epsilon^2\z^{1+\frac{2\k}{1-\k}-\delta}.
        \end{equation}
        Because we have Hodge systems for $\wt{\Omega\omegab}$ and $\wt{\Omega\omegab^\dagger}$:
        $$\dv\nabla\ddfl{\Omega\omegab}=-\dv\wt{\omegab^r}-\frac{1}{2}\dv\ddfl{\Omega\betab^r},\quad \cl\nabla\ddfl{\Omega\omegab}=0,$$ $$\dv\nabla\ddfl{\Omega\omegab^\dagger}=\cl\wt{\omegab^r}+\cl\ddfl{\Omega\betab^r},\quad \cl\nabla\ddfl{\Omega\omegab^\dagger}=0,$$
        the estimate \eqref{R_I_Est_omegab_r} gives sixth order derivative estimates for $\wt{\Omega\omegab},\wt{\Omega\omegab^\dagger}$ by elliptic estimats \eqref{Ellip Est} and top order estimate \eqref{R_I_Top_order_est_for_ddfl_betab} for $\ddfl{\Omega\betab^r}$.

\end{proof}

We proceed to estimate $\Omega\chib$. Because of the constant in energy estimates \eqref{Energy Estimates} for pair $(\Omega\chib,0)$, $\df{\Omega\chib}$ is not enough to improve the bootstrap assumption. That is the main reason we need the difference $\ddfl{\cdot}$.
\begin{proposition}
    We have estimates for $\Omega\tr\chib$ and $\Omega\chibh$:
    \begin{equation}
        \lnm \nabla^6\ddfl{\Omega\tr\chib}\rnm^2_{\LS_I^2(S_{u,v})}+\lnm \nabla^6\ddfl{\Omega\chibh}\rnm^2_{\LS_I^2(\Hb_v^u)}\lesssim \epsilon^2 \left(\frac{v}{(-u)^{1-\k}}\right)^{1+2\k/(1-\k)-2\delta},
    \end{equation}
    \begin{equation}
        \lnm \nabla^6\df{\Omega\chibh}\rnm^2_{\LS_I^2(H_u^v)}\lesssim \epsilon^2 \left(\frac{v}{(-u)^{1-\k}}\right)^{2\k/(1-\k)}.
    \end{equation}
\end{proposition}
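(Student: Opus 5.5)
We treat $\Omega\tr\chib$ and $\Omega\chibh$ separately. $\Omega\tr\chib$ has a $\nabla_4$ equation free of $\alpha$, and we run it as a transport estimate. $\Omega\chibh$ is then recovered from the Codazzi equation for $\dv\chibh$, which is an elliptic Hodge system on each $S_{u,v}$.

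\textbf{Step 1: transport for $\nabla^6\ddfl{\Omega\tr\chib}$.} We use the equation
\[\nabla_4\tr\chib+\tr\chi\tr\chib=2\omega\tr\chib-2K+2\dv\etab+2|\etab|^2+2|\nabla\phi|^2.\]
Here $K$ and $\dv\etab$ appear at the same order as $\nabla\tr\chib$, so at top order we first combine them with the mass aspect. By Gauss and $\etab=-\eta+2\nabla\log\Omega$, the combination $\dv\etab-K$ has its own $\nabla_3$ equation. Along $H_u$ it is better to write
\[2\dv\etab-2K=2(\dv\etab-K),\]
and to control $\nabla^5\df{\dv\etab - K}$ through the already-closed curvature estimates \eqref{R_I_Top_order_Est_beta_sigma_K_betab_alphab} and \eqref{R_I_Top_order_est_for_ddfl_betab}, together with $\nabla^6\ddfl{\log\Omega}$ from Proposition \ref{R_I_Estimate_metric}.

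Commuting with $\nabla^6$ via Lemma \ref{LEMMA_COMMUTATION_FORMULAE_NABLA}, the equation for $\ddfl{\cdot}$ takes the form
\[\Lie_v\nabla^6\ddfl{\Omega\tr\chib}=\text{(linear in }\ddfl{\cdot})+\nabla^6\df{\Omega^2(\cdots)}.\]
We then apply \eqref{transport_4} and Lemma \ref{RI_Lemma_Calculation_of_df}. The gain of one factor of $z$ from \eqref{transport_4}, together with the bootstrap rate $\z^{2\k/(1-\k)}$ for the $\df{\cdot}$ sources, gives the sphere bound $\epsilon^2\z^{1+2\k/(1-\k)-2\delta}$.

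\textbf{Step 2: the shear $\Omega\chibh$.} On each sphere we use
\[\dv(\Omega\chibh)=\tfrac12\nabla(\Omega\tr\chib)+\Omega\zeta\cdot\chib+\Omega\betab^r.\]
Taking $\ddfl{\cdot}$ of this relation and applying the elliptic estimate \eqref{Ellip Est} (the background term vanishes by spherical symmetry) bounds $\nabla^6\ddfl{\Omega\chibh}$ on $S_{u,v}$ by three quantities:
\begin{itemize}
\item $\nabla^6\ddfl{\Omega\tr\chib}$ from Step 1;
\item $\nabla^5\ddfl{\Omega\betab^r}$;
\item lower-order product terms, handled by Lemma \ref{RI_Lemma_Calculation_of_ddfl} with the explicit Taylor coefficients at $v=0$.
\end{itemize}
Integrating in $u$ with weight $(-u)^{-1}$ converts the $\ddfl{\Omega\betab^r}$ term into its $\Hb$-flux, which \eqref{R_I_Top_order_est_for_ddfl_betab} controls at rate $\z^{1+2\k/(1-\k)-\delta}$. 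The other terms integrate to the same rate, which proves the first estimate on $\Hb_v^u$.

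\textbf{Step 3: the $H_u^v$ estimate for $\df{\Omega\chibh}$.} Since
\[\df{\cdot}=\ddfl{\cdot}+v\,(\Lie_v\ol\cdot)_0,\]
and the Taylor coefficient has size $\epsilon\z$ in $\LS_1^2$, the $H$-flux only needs the weaker rate $\z^{2\k/(1-\k)}$. Alternatively we can use the $\nabla_4\chibh$ equation, whose source $\nabla\hat\otimes\etab$ is controlled at top order by the $\df{\etab}$ bootstrap.

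\textbf{Main obstacle.} The hard step is the top-order loss in Step 1. The term $\nabla^6\dv\etab$ is one derivative too many; this is why we substitute the mass aspect and $\nabla^7\log\Omega$. The latter is not directly available either, so we would instead use the Hodge system $\dv\etab=K-\rho$-type identities and the sixth-order $\Omega\omegab$ bound of Proposition \ref{R_I_Energy_Est_omegab} to close.
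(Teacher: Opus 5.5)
Your Steps 2 and 3 are essentially what the paper does: taking $\ddfl{\cdot}$ of the Codazzi equation for $\dv(\Omega\chibh)$, applying \eqref{Ellip Est} with $\nabla^5\ddfl{\Omega\betab^r}$ from \eqref{R_I_Top_order_est_for_ddfl_betab}, and passing from $\ddfl{\cdot}$ to $\df{\cdot}$ for the $H_u^v$ flux.

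\textbf{The gap is Step 1.} Transporting $\nabla^6\ddfl{\Omega\tr\chib}$ along $e_4$ with \eqref{transport_4} needs six angular derivatives of the right-hand side of the $\nabla_4\tr\chib$ equation. That right-hand side contains $2\dv\etab-2K=-2\mub$. So you need $\nabla^6\mub$, i.e.\ $\nabla^7\etab$ together with $\nabla^6K$. The bootstrap and the preceding propositions control only $\nabla^6\etab$, five derivatives of the curvature components, and at best $\nabla^5\mub$. Your own bookkeeping writes $\nabla^5\df{\dv\etab-K}$, which is one derivative short of what the transport requires.

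None of your suggested remedies recovers that derivative:
\begin{itemize}
\item $\nabla^7\log\Omega$ is just $\frac12\nabla^6(\eta+\etab)$ again, and Proposition \ref{R_I_Estimate_metric} stops at six derivatives of $\ddfl{\log\Omega^2}$.
\item Trading $\dv\etab$ for $K$ or $\rho$ through Hodge identities still costs $\nabla^6$ of a curvature component.
\item The $\nabla_3$ equation for $\mub$ has $\nabla(\Omega\tr\chib)$ and $\dv(\Omega\chibh)$ on its right side, so $\nabla^6\mub$ would cost $\nabla^7$ of $\tr\chib$ and $\chibh$.
\end{itemize}
The $e_4$ route is exactly what the paper uses for $i\leq 5$ in Proposition \ref{L2S_est_for_non_top_order_terms}, and it breaks precisely at $i=6$. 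The same objection defeats your alternative for Step 3 via the $\nabla_4\chibh$ equation, whose source $\nabla\hat\otimes\etab$ costs $\nabla^7\etab$.

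\textbf{The missing idea} is to use the incoming Raychaudhuri equation $\nabla_3\tr\chib+\frac12(\tr\chib)^2=-2\omegab\tr\chib-|\chibh|^2-(D_3\phi)^2$. Its right side has no angular derivatives, and its $\ddfl{\cdot}$ form is Corollary \ref{Appen nabla 3 tr chib cor}. At top order the only sources are:
\begin{itemize}
\item $\nabla^6\ddfl{\Omega\omegab}$, from Proposition \ref{R_I_Energy_Est_omegab};
\item $\nabla^6\ddfl{\Omega D_3\phi}$, from Proposition \ref{R_I_Energy_Est_nabphi_D_3phi};
\item $\Omega\chibh\cdot\nabla^6\ddfl{\Omega\chibh}$, which your Codazzi step returns as $\epsilon^2\lnm\nabla^6\ddfl{\Omega\tr\chib}\rnm^2$ plus controlled terms, to be absorbed.
\end{itemize}
This is why the paper performs the elliptic step first.

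The price is that in \eqref{transport_3} the coefficient of the $\Hb_v^u$ flux has the unfavorable sign, about $(1-\k)(1+2\delta)$. So the estimate does not close by data plus forcing alone. The paper instead integrates a differential inequality in $u$ for the weighted flux. This closes because the forcing and the data on $S_{-1,v}$ are $O(\epsilon^2\z^{1+2\k/(1-\k)-2\delta})$, a power of $\z$ large enough to dominate the bad coefficient. Producing that rate of vanishing is exactly what the first-order Taylor subtraction in $\ddfl{\cdot}$, rather than $\df{\cdot}$, is for.
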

\begin{proof}
    For a tensor $t_{A_1\cdots A_r}$ on $S_{u,v}$ and $\ddfl{t}=t-t_0-v\left(\Lie_v t\right)_0$, we first deduce that 
    \begin{equation}
        \begin{aligned}
            &\ddf{\dv t}-\dv\ddf{t}\\
            =&\dv t_0-\dv_0 t_0+v\dv(\Lie_v t)_0-v(\Lie_v\dv t)_0\\
            =&\ddf{\dv t_0}+v\left(\Lie_v\left(\dv t_0\right)\right)_0+v\ddf{\dv(\Lie_v t)_0}+v^2\left(\Lie_v(\dv(\Lie_v t)_0)\right)\\
            &+v(\dv \Lie_v t)_0-v(\Lie_v\dv t)_0\\
            =&\ddf{\dv t_0}+v\ddf{\dv(\Lie_v t)_0}+v^2\left(\Lie_v(\dv(\Lie_v t)_0)\right)_0+v\left(\left[\Lie_v,\dv\right](t_0-t)\right)_0\\
            = & \ddf{\dv t_0}+v\ddf{\dv(\Lie_v t)_0}+v^2\left(\Lie_v(\dv(\Lie_v t)_0)\right)_0.
        \end{aligned}
    \end{equation}
    Here we note that $\left[\Lie_v,\dv\right] t\sim \nabla(\Omega\chi) t+\Omega\chi\nabla t$ contains no derivative with respect to $e_4$, therefore $\lim_{v\rightarrow 0}\left(\left[\Lie_v,\dv\right] (t(v)-t(0))\right)=0$.
    After making difference to Gauss-Codazzi equation \eqref{Gauss_Codazzi}, we deduce that 
    \begin{equation}
        \begin{aligned}
            \ddfl{\Omega\betab^r}=&\dv\ddfl{\Omega\chibh}-\frac{1}{2}\nabla\ddfl{\Omega\tr\chib}-\ddfl{\eta\left(\Omega\chibh-\frac{1}{2}\Omega\tr\chib\right)}\\
            &+\ddf{\dv(\Omega\chibh)_0} +v\ddf{\dv(\Lie_v(\Omega\chibh))_0}+v^2\left(\Lie_v(\dv(\Lie_v (\Omega\chibh))_0)\right)_0 .
        \end{aligned}
    \end{equation}
    In case of spherical symmetry, $\eta,\nabla\phi$ vanish, and $\Omega\tr\chib$ satisfies $$\Omega^{-1}\nabla_4(\Omega\tr\chib)+\Omega^{-1}\tr\chi\Omega\tr\chib+2K=0,$$ 
    which contains no $\Omega^{-1}e_4\phi $ and $\Omega^{-1}\omega$. Therefore we have 
    $$\lnm \ddf{(\Omega\tr\chib)^c}\rnm_{\LS_1(S_{u,v})}\lesssim\z^2.$$
    Applying elliptic estimate \eqref{Ellip Est} and using Lemma \ref{RI_Lemma_Calculation_of_ddfl}, we obtain that  
    \begin{equation}
        \begin{aligned}
            \lnm \nabla^6\ddfl{\Omega\chibh} \rnm^2_{\LS_I^2(S_{u,v})}
            \lesssim & \sum_{i\leq 5}\lnm \nabla^6\ddfl{\Omega\tr\chib},\nabla^i\ddfl{\Omega\betab^r},\nabla^i\ddfl{\eta},\nabla^i\ddfl{\Omega\chib} \rnm^2_{\LS_I^2(S_{u,v})}\\
            &+\sum_{i\leq 5}\epsilon^2\lnm \nabla^i\ddf{g^{-1}\partial g}\rnm^2_{\LS_I^2(S_{u,v})} +\epsilon^2\z^{3+2\delta} \\
            & +\z^{4}\sum_{i\leq 5}\lnm \nabla^i(\ol{\eta},\ol{\Omega\chib})\rnm^2_{\LS_I^2(S_{u,v})}.
        \end{aligned}
    \end{equation} 
All extra terms in the previous elliptic estimate have already been assigned to one of the three available controls: the non-top-order proposition, the metric estimate, or the enhanced incoming curvature estimate. Thus the only unclosed quantity is the trace term displayed on the right-hand side. Applying Proposition \ref{L2S_est_for_non_top_order_terms}, the previous estimate \ref{R_I_Estimate_metric} for $\ddfl{g}$, and the specific top-order estimate \eqref{R_I_Top_order_est_for_ddfl_betab} for $\Omega\betab^r$, we promptly deduce that 
    \begin{equation}\label{R_I_energy_est_chibh}
        \lnm \nabla^6\ddfl{\Omega\chibh}\rnm^2_{\LS_I^2(\Hb_v^u)}\lesssim \lnm \nabla^6\ddfl{\Omega\tr\chib}\rnm^2_{\LS_I^2(\Hb_v^u)}+\epsilon^2\z^{1+\frac{2\k}{1-\k}-\delta},
    \end{equation}
    and 
    \begin{equation}
        \lnm \nabla^6\ddfl{\Omega\chibh}\rnm^2_{\LS_I^2(H_u^v)}\lesssim \lnm \nabla^6\ddfl{\Omega\tr\chib}\rnm^2_{\LS_I^2(H_u^v)}+\epsilon^2\z^{\frac{2\k}{1-\k}}.
    \end{equation}
    We next estimate $\nabla^6\ddfl{\Omega\tr\chib}$. Throughout this proof, $\varphi$ only takes values in $\Omega\tr\chib,\Omega\chibh,\Omega D_3\phi$.
    From Corollary \ref{Appen nabla 3 tr chib cor}, we have 
    \begin{equation}
        \begin{aligned}
            F_0(\chib):=&\Omega\nabla_3\ddfl{\Omega\tr\chib}+(\Omega\tr\chib+4\Omega\omegab)\ddfl{\Omega\tr\chib}\\
            \sim & \Omega\chibh\cdot\ddfl{\Omega\chibh}+\Omega e_3\phi\ddfl{\Omega e_3\phi}+(\Omega\tr\chib)^c\ddfl{\Omega\omegab}+O(\epsilon)\ddfl{b}\\
            &+\sum_{\varphi\in\{\Omega\chib,\Omega\omegab,\Omega e_3\phi\}}\left((\ol{\varphi}+O(v))\ddfl{\varphi}+\ddf{\varphi^c}\ol{\varphi}\right)+O\left(\epsilon\z^2\right).
        \end{aligned}
    \end{equation}
    By Corollary \ref{increasing_rate_D4phi_c}, we find that 
    \begin{equation}
        \left|\left[(\Omega^{-1}D_4\phi)^c\right]_0^v\right| \lesssim (-u)^{-1}\z^{\frac{\k}{1-\k}},
    \end{equation}
    which implies that 
    \begin{equation}
        \partial_v\ddf{(\Omega D_3\phi)^c}=-\frac{1}{2}\left[(\Omega^2)^c\tr\chi^c (D_3\phi)^c+(\Omega^2)^c\tr\chib^c (D_4\phi)^c\right]_0^v\lesssim (-u)^{\k-2}\z^{\frac{\k}{1-\k}}.
    \end{equation}
    Therefore, the derivatives of $\ddf{\varphi^c}\ol{\varphi}$ has $\LS_I^2(S_{u,v})$ norm bounded by $\epsilon^2\z^{1+\frac{2\k}{1-\k}+2\delta}$.
    Using commutation formula, we compute $F_6(\chib)$:
    \begin{equation}
        \begin{aligned}
            F_6(\chib):=&\Omega\nabla_3\nabla^6\ddfl{\Omega\tr\chib}+\left(4\Omega\tr\chib+4\Omega\omegab\right)\nabla^3\ddfl{\Omega\tr\chib}\\
        =&\nabla^6F_0(\chib)+ \sum_{\substack{i_1+i_2+i_3+i_4=6\\1\leq i_4\leq 5}}\nabla^{i_1}\eta^{i_2}\nabla^{i_3}\Omega\chib\nabla^{i_4}\ddfl{\Omega\tr\chib}.
        \end{aligned}
    \end{equation}
    Because the constant in $\nabla_3$-transport estimate \eqref{transport_3} satisfies
    \begin{align*}
        &\left(-2+2s(\nabla^6\Omega\tr\chib)-(1-\k)(1-2\delta)-(-u)\left(8\Omega\tr\chib+8\Omega\omegab+o_{\epsilon,\epsilon_1}(1)\right)\right)\\
        &\qquad\qquad\qquad = (1-\k)(1+2\delta)+o_{\epsilon,\epsilon_1} (1),
    \end{align*}
    we have by transport estimate,
    \begin{equation}
        \begin{split}
            \lnm \nabla^6\ddfl{\Omega\tr\chib}\rnm^2_{\LS_I^2(S_{u,v})}\leq & \lnm \nabla^6\ddfl{\Omega\tr\chib}\rnm^2_{\LS_I^2(S_{-1,v})}+\lnm \nabla^6\ddfl{\Omega\tr\chib}\rnm_{\LS_I^2(\Hb_v^u)}\lnm F_6(\chib)\rnm_{\LS_I^2(\Hb_v^u)}\\
            &+\left((1-\k)(1+2\delta)+o(1)\right)\lnm \nabla^6\ddfl{\Omega\tr\chib}\rnm^2_{\LS_I^2(\Hb_v^u)}.
        \end{split}
    \end{equation}
    If we can prove that
    \begin{equation}
        \lnm F_6(\chib)\rnm_{\LS_I^2(\Hb_v^u)}^2\lesssim \epsilon^2\lnm \nabla^6\ddfl{\Omega\tr\chib}\rnm^2_{\LS_I^2(\R_{u,v})} +\epsilon^2\left(\frac{v}{(-u)^{1-\k}}\right)^{1+2\k/(1-\k)-2\delta},
    \end{equation}
    then it holds that 
    \begin{equation}
        \frac{d}{du}\left((-u)^{(1-\k)(1+2\k/(1-\k)-3\delta)}\lnm\nabla^3\ddfl{\Omega\tr\chib}\rnm^2_{\LS_I^2(\Hb_v^u)}\right)\leq C(\delta,\k) \epsilon^2 v^{1+2\k/(1-\k)-2\delta}(-u)^{-\delta-1},
    \end{equation}
    and integration yields the desired estimate:
    \[\lnm\nabla^6\ddfl{\Omega\tr\chib}\rnm^2_{\LS_I^2(\Hb_v^u)}\leq C(\delta,\k) \epsilon^2 \z^{1+2\k/(1-\k)-2\delta}.\] 
The commutator terms in $F_6(\chib)$ are lower order in the differentiated trace factor. They are controlled by \eqref{Calcu_Diff_product_ddfl} together with the non-top-order estimates, while the top differentiated source terms are deferred to the next displays. 
    With Lemma \eqref{Calcu_Diff_product_ddfl} and Proposition \ref{L2S_est_for_non_top_order_terms}, we obtain 
    \begin{equation}
        \begin{aligned}
            &\sum_{\substack{i_1+i_2+i_3+i_4=6\\1\leq i_4\leq 5}}\lnm\nabla^{i_1}\eta^{i_2}\nabla^{i_3}\Omega\chib\nabla^{i_4}\ddfl{\Omega\tr\chib}\rnm^2_{\LS_I^2(\Hb_{v}^u)}\\
            \lesssim &\sum_{i\leq 5}\lnm \nabla^i\left(\ddfl{\Omega\chib}\right)\rnm^2_{\LS_I^2(\Hb_{v}^u)} 
            \lesssim B\epsilon^2\z^{1+\frac{2\k}{1-\k}}\leq \epsilon^2\z^{1+\frac{2\k}{1-\k}-2\delta}.
        \end{aligned}
    \end{equation}
Also for $\nabla^6F_0(\chib)$, the following inequality holds 
\begin{equation}
    \begin{aligned}
        &\lnm \nabla^6 F_0(\chib)\rnm^2_{\LS_I^2(\Hb_v^u)}\\
        \leq & \epsilon^2\z^{1+\frac{2\k}{1-\k}+2\delta} + \sum_{i\leq 5}\lnm \nabla^i\left(\ddfl{\Omega\chibh},\ddfl{\Omega D_3\phi},\ddfl{\Omega\omegab}\right)\rnm^2_{\LS_I^2(\Hb_v^u)}\\
        &+\lnm \epsilon\cdot\nabla^6\ddfl{\Omega\chibh},\nabla^6\ddfl{\Omega D_3\phi},\nabla^6\ddfl{\Omega\omegab}\rnm^2_{\LS_I^2(\Hb_v^u)}\\
        &+\sum_{i\leq 6}\lnm \nabla^i\ol{\varphi}\cdot\z^{1+\frac{\k}{1-\k}},\epsilon\nabla^i\ddfl{b} \rnm^2_{\LS_I^2(\Hb_v^u)} +\sum_{i\leq 6}\lnm \nabla^i\left(\ddfl{\varphi}\ol{\varphi}\right) \rnm^2_{\LS_I^2(\Hb_v^u)}.
    \end{aligned}
\end{equation}
From estimate \ref{R_I_Estimate_metric}, we know that $\sum_{i\leq 6}\lnm \nabla^i\ddfl{b}\rnm^2_{\LS_I^2(\Hb_{v}^u)}\leq \epsilon^2\z^{1+\frac{2\k}{1-\k}-2\delta}$. 
For $\varphi=\Omega\chib,\Omega D_3\phi,\Omega\omegab$, this yields 
\begin{equation}
    \begin{aligned}
        &\sum_{i\leq 6}\lnm \nabla^i\ol{\varphi}\cdot\z^{1+\frac{\k}{1-\k}},\nabla^i\df{\varphi}\cdot\z \rnm^2_{\LS^2(\Hb_v^u)}
        \lesssim \epsilon^2\z^{1+\frac{2\k}{1-\k}+2\delta}.
    \end{aligned}
\end{equation}
The non-top-order terms are controlled by applying Proposition \ref{L2S_est_for_non_top_order_terms},
\begin{equation}
    \begin{aligned}
        \sum_{i\leq 5}\lnm \nabla^i\left(\ddfl{\Omega\chibh},\ddfl{\Omega D_3\phi},\ddfl{\Omega\omegab}\right)\rnm^2_{\LS^2(\Hb_v^u)}\lesssim & B\epsilon^2\z^{1+\frac{2\k}{1-\k}}\\
        \leq & \epsilon^2\z^{1+\frac{2\k}{1-\k}-2\delta}.
    \end{aligned}
\end{equation} 
Employing Proposition \ref{R_I_Energy_Est_nabphi_D_3phi}, Proposition \ref{R_I_Energy_Est_omegab}, and relation \eqref{R_I_energy_est_chibh}, we can estimate the top order terms.
\begin{equation}\label{temp_5-161}
    \begin{aligned}
        &\lnm \epsilon\cdot\nabla^6\ddfl{\Omega\chibh},\nabla^6\ddfl{\Omega D_3\phi},\nabla^6\ddfl{\Omega\omegab}\rnm^2_{\LS^2(\Hb_v^u)}\\
        \lesssim & \epsilon^2\lnm \nabla^6\ddfl{\Omega\tr\chib}\rnm^2_{\LS^2(\Hb_v^u)}+\epsilon^2\z^{1+\frac{2\k}{1-\k}-2\delta}.
    \end{aligned}
\end{equation}
For $\ddfl{\varphi}\ol\varphi$, we estimate:
\begin{equation}
    \begin{aligned}
        &\sum_{i\leq 6}\lnm \nabla^i\left(\ddfl{\varphi}\ol{\varphi}\right) \rnm^2_{\LS_I^2(\Hb_v^u)}\\
        \lesssim& \sum_{i\leq 6}\lnm \nabla^i\left(\ddfl{\varphi}\df{\varphi}\right)\rnm^2_{\LS_I^2(\Hb_v^u)}+\lnm \nabla^i\left(\ddfl{\varphi}\ol{\varphi_0}\right)\rnm^2_{\LS_I^2(\Hb_v^u)}\\
        \lesssim & \sum_{i\leq 4,j\leq 6}\sup_{u'<u}\lnm \nabla^i\ddfl{\varphi}\rnm^2_{\LS_1^2(S_{u',v})}\lnm \nabla^j \df{\varphi}\rnm^2_{\LS_I(\Hb_v^u)}\\
        &+ \sum_{i\leq 4,j\leq 6}\sup_{u'<u}\lnm \nabla^i\df{\varphi},\nabla^j\varphi_0\rnm^2_{\LS_1^2(S_{u',v})}\lnm \nabla^j \ddfl{\varphi}\rnm^2_{\LS_I(\Hb_v^u)}.
    \end{aligned}
\end{equation}
Using bootstrap assumptions and Proposition \ref{L2S_est_for_non_top_order_terms}, we have 
\begin{equation}
    \begin{aligned}
        & \sum_{i\leq 4,j\leq 6}\sup_{u'<u}\lnm \nabla^i\ddfl{\varphi}\rnm^2_{\LS_1^2(S_{u',v})}\lnm \nabla^j \df{\varphi}\rnm^2_{\LS_I(\Hb_v^u)}\\
        \lesssim & B\epsilon^2\z^{2+2\k/(1-\k)-2\delta} B\epsilon^2\z^{2\k/(1-\k)}\ll \epsilon^2\z^{2+2\k/(1-\k)},
    \end{aligned}
\end{equation}
and the last term is $\epsilon^2\lnm \nabla^j \ddfl{\varphi}\rnm^2_{\LS_I(\Hb_v^u)}$, which is already estimated as \eqref{temp_5-161}.
We thus conclude that 
\begin{equation}
    \lnm \nabla^6\ddfl{\Omega\tr\chib}\rnm^2_{\LS^2(S_{u,v})}\leq C_0\epsilon^2\z^{1+\frac{2\k}{1-\k}-2\delta}.
\end{equation}

\end{proof}

Next proposition establishes the top-order estimates for $\Omega^{-1}\chi$.
\begin{proposition}
    For $\Omega^{-1}\chi$, the estimates below hold,
    \begin{equation}
         \lnm \nabla^6\ddfl{\Omega^{-1}\tr\chi}\rnm^2_{\LS^2(S_{u,v})}\lesssim \epsilon^2\z^{1+2\k/(1-\k)}.
    \end{equation}
    \begin{equation}
        \lnm \nabla^6\df{\Omega^{-1}\chih} \rnm^2_{\LS^2(H_u^v)}\lesssim \epsilon^2\left(\frac{v}{(-u)^{1-\k}}\right)^{2\k/(1-\k)},\quad\lnm \nabla^6\wt{\Omega^{-1}\chih}\rnm^2_{\LS^2(\Hb_v^u)}\lesssim \epsilon^2\left(\frac{v}{(-u)^{1-\k}}\right)^{2\delta}.
    \end{equation}
\end{proposition}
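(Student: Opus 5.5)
We treat the three bounds separately, in the order trace, then shear, then the tilde shear, since each feeds the next.

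\textbf{Trace part.} We first derive the $\ddfl{\cdot}$ version of the Raychaudhuri equation $\Omega^{-1}\nabla_4(\Omega^{-1}\tr\chi)+\frac12(\Omega^{-1}\tr\chi)^2=-|\Omega^{-1}\chih|^2-(\Omega^{-1}e_4\phi)^2$, using the second difference formula of the Remark following Lemma \ref{Product_rule_for_difference}. This gives
\[\Lie_v\ddfl{\Omega^{-1}\tr\chi}=\df{\Omega^2(\ldots)}.\]
The dangerous sources are $\df{|\Omega^{-1}\chih|^2}$ and $\df{(\Omega^{-1}e_4\phi)^2}$. We do not expand them as $\df{\cdot}$ of a product of two $\df{\cdot}$'s. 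Instead we split $\Omega^{-1}\chih=\wt{\Omega^{-1}\chih}+\ot{\Omega^{-1}\chih}$, and likewise for $\Omega^{-1}e_4\phi$. Then:
\begin{itemize}
\item[(a)] The tilde parts are controlled by the Region~I bootstrap at level $\epsilon\z^{\delta}$.
\item[(b)] The approximants obey \eqref{ID_triangle_chih_e4phi}. Combined with Corollary \ref{increasing_rate_D4phi_c}, their $\left[\cdot\right]_0^v$-increments relative to the background are $O(\epsilon\z^{\k/(1-\k)-\delta})$.
\end{itemize}
Since these factors are squared against the uniformly bounded ones, the source $\df{\cdot}$ is $O(\epsilon\z^{\k/(1-\k)})$ in $\LS_I^2$. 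Commuting with $\nabla^6$ via Corollary \ref{CORROLLARY_COMMUTATION_FORMULA_LIE} and applying \eqref{transport_4} then yields the factor $\z$ times the $H_u^v$ norm. Two top-order terms appear, $\nabla^6\wt{\Omega^{-1}\chih}$ and $\nabla^6\wt{\Omega^{-1}e_4\phi}$. The latter is already closed by the scalar-field proposition; the former is handled by the coupled argument below. This gives $\epsilon^2\z^{1+2\k/(1-\k)}$.

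\textbf{Shear part via Codazzi.} For $\nabla^6\df{\Omega^{-1}\chih}$ on $H_u^v$ we use the elliptic estimate \eqref{Ellip Est} on the differenced Codazzi equation
\[\dv\,\Omega^{-1}\chih=\tfrac12\nabla\Omega^{-1}\tr\chi-\zeta\cdot(\ldots)-\Omega^{-1}\beta^r.\]
After taking $\df{\cdot}$, the connection-change term $[\dv]_0^v(\Omega^{-1}\chih)_0$ is handled by Corollary \ref{Estimate_of_tilde_nabla} and Proposition \ref{R_I_Estimate_metric}. The curvature term $\nabla^5\df{\Omega^{-1}\beta^r}$ is controlled in $\LS_I^2(H_u^v)$ by Proposition \ref{R_I_Estimate_beta_sigma_K_betab_alphab}, at size $\epsilon^2\z^{2\k/(1-\k)}$. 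The trace term is controlled by the first step, integrated in $v$. Lower-order terms come from Proposition \ref{L2S_est_for_non_top_order_terms}. This closes the middle estimate.

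\textbf{Tilde shear on $\Hb_v^u$.} This is the main obstacle. The rate $\epsilon^2\z^{2\delta}$ is better than what the $\df{\cdot}$ Codazzi route gives, because $\ot{\Omega^{-1}\chih}-(\Omega^{-1}\chih)_0$ is only $O(\epsilon\z^{\k/(1-\k)-\delta})$. We would therefore proceed as follows.
\begin{itemize}
\item[(a)] Recall that $\ot{\Omega^{-1}\beta^r}$ was defined in \eqref{R_I_Def_ot_beta} exactly as the Codazzi expression built from $\ot{\Omega^{-1}\chih}$ with frozen coefficients. Subtracting, we obtain a Hodge system
\[\dv\,\wt{\Omega^{-1}\chih}=-\wt{\Omega^{-1}\beta^r}+\tfrac12\nabla\left[\Omega^{-1}\tr\chi\right]_0^v-[\dv]_0^v\ot{\Omega^{-1}\chih}-[\eta]_0^v\ot{\Omega^{-1}\chih}+\ldots,\]
together with the curl equation. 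This is the identity already used in Proposition \ref{L2S_est_for_non_top_order_terms}, now read as an elliptic system for $\wt{\Omega^{-1}\chih}$.
\item[(b)] Apply \eqref{Ellip Est} on each $S_{u,v}$ and integrate in $u$. Then $\nabla^6\wt{\Omega^{-1}\chih}$ on $\Hb_v^u$ is bounded by $\nabla^5\wt{\Omega^{-1}\beta^r}$ on $\Hb_v^u$, which is $\epsilon^2\z^{2\delta}$ by \eqref{estimate:wt_alpha_wt_beta}.
\item[(c)] The remaining terms must each be shown to be $\lesssim\epsilon^2\z^{2\delta}$:
\begin{itemize}
\item $\nabla^6[\Omega^{-1}\tr\chi]_0^v$ in $\Hb$, which follows from the first step;
\item $[\nabla]_0^v$-commutator terms acting on $\ot{\Omega^{-1}\chih}$, of size $\epsilon\z$ by Corollary \ref{Estimate_of_tilde_nabla};
\item products with $[\eta]_0^v$, which are small by the bootstrap.
\end{itemize}
\end{itemize}

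The delicate point is that $\nabla^6[\dv]_0^v$ involves seven derivatives of $[g]_0^v$. This is the same derivative-loss issue met for $e_4\phi$. We would avoid it exactly as there: write the divergence in coordinates $e_A=\partial_{\theta^A}$ so that only $\partial^6[g]_0^v$ falls on the metric and the extra derivative lands on the smooth approximant $\ot{\Omega^{-1}\chih}$. Controlling $\partial^6[g]_0^v$ via $\Lie_v g=2\Omega^2\Omega^{-1}\chi$ then requires $\nabla^6\Omega^{-1}\chi$ on $H$. This is supplied by the first two steps, giving a mildly coupled system that closes for $\epsilon,\epsilon_1$ small.
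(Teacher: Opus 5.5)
Your proposal follows the paper's proof up to ordering. The Codazzi elliptic estimate bounds $\nabla^6\df{\Omega^{-1}\chih}$ on $H_u^v$ by $\nabla^6\df{\Omega^{-1}\tr\chi}$ and $\nabla^5\df{\Omega^{-1}\beta^r}$. Subtracting the definition \eqref{R_I_Def_ot_beta} gives a Hodge system that bounds $\nabla^6\wt{\Omega^{-1}\chih}$ on $\Hb_v^u$ via \eqref{estimate:wt_alpha_wt_beta}. Finally, the $e_4$-transport \eqref{transport_4} for $\ddfl{\Omega^{-1}\tr\chi}$ closes through the small factor $\z$, after trading $\nabla^6\df{\Omega^{-1}\tr\chi}$ for $\nabla^6\ddfl{\Omega^{-1}\tr\chi}$ at cost $O(\epsilon\z)$.

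Two corrections. First, the genuine top-order coupling in the Raychaudhuri source is the self-term $\Omega^{-1}\tr\chi\,\nabla^6\df{\Omega^{-1}\tr\chi}$. The shear enters quadratically with an $O(\epsilon)$ coefficient, so its top-order term is harmless; the $\epsilon\z^{\delta}$ tilde level you cite, multiplied by bounded factors, would not by itself give the $\z^{\k/(1-\k)}$ rate. Second, there is no seven-derivative loss in the Hodge step. The estimate \eqref{Ellip Est} for $\nabla^6\wt{\Omega^{-1}\chih}$ only needs $\nabla^5$ of $\left[\dv\right]_0^v\ot{\Omega^{-1}\chih}$, that is $\partial^6\left[g\right]_0^v$. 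Proposition \ref{R_I_Estimate_metric} and Corollary \ref{Estimate_of_tilde_nabla} already control this, so your coordinate workaround and the extra coupling through $\Lie_v g$ are unnecessary.
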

\begin{proof}
Recall the Gauss-Codazzi equation $\dv(\Omega^{-1}\chih)-\frac{1}{2}\nabla(\Omega^{-1}\tr\chi)+\eta(\Omega^{-1}\chih-\frac{1}{2}\Omega^{-1}\tr\chi)=-\Omega^{-1}\beta^r$; the corresponding equation for the difference is 
\begin{equation}
    \dv\df{\Omega^{-1}\chih}=\frac{1}{2}\nabla\df{\Omega^{-1}\tr\chi}-\df{\eta\left(\Omega^{-1}\chih-\frac{1}{2}\Omega^{-1}\tr\chi\right)}-\df{\Omega^{-1}\beta^r}-\left[\dv(\Omega^{-1}\chih)_0\right]_0^v.
\end{equation}
We can use elliptic estimate \eqref{Ellip Est} to obtain that 
    \begin{equation}
        \begin{aligned}
            \lnm \nabla^6\df{\Omega^{-1}\chih}\rnm^2_{\LS_I^2(H_u^v)}
            \lesssim & \lnm \nabla^6\df{\Omega^{-1}\tr\chi},\nabla^5\df{\Omega^{-1}\beta^r}\rnm^2_{\LS_I^2(H_u^v)}\\
            &+\sum_{i\leq 5,j\leq 4}\lnm \nabla^i\left(\df{\Omega^{-1}\chi},\df{\eta},O(\epsilon)\left[{g^{-1}\partial g}\right]_0^v\right),\nabla^j\df{\Omega^{-1}\beta^r}\rnm^2_{\LS_I^2(H_u^v)}\\
            \lesssim & \lnm \nabla^6\df{\Omega^{-1}\tr\chi}\rnm^2_{\LS_I^2(H_u^v)}+\epsilon^2\z^{\frac{2\k}{1-\k}}.
        \end{aligned}
    \end{equation}
    Recall the definition \eqref{R_I_Def_ot_beta} of $\ot{\Omega^{-1}\beta^r}$, we find that 
    \begin{equation}
        \dv\wt{\Omega^{-1}\chih}=-\left[\dv\right]_0^v\ot{\Omega^{-1}\chih}+\frac{1}{2}\nabla\df{\Omega^{-1}\tr\chi}+\frac{1}{2}\df{\eta\Omega^{-1}\tr\chi}-\eta\wt{\Omega^{-1}\chih}-\df{\eta}\ot{\Omega^{-1}\chih}-\wt{\Omega^{-1}\beta^r},
    \end{equation}
    and the estimate follows similarly,
    \begin{equation}
        \begin{aligned}
            &\lnm \nabla^6\wt{\Omega^{-1}\chih}\rnm^2_{\LS_I^2(\Hb_v^u)}\\
            \lesssim & \lnm \nabla^6\df{\Omega^{-1}\tr\chi},\nabla^5\wt{\Omega^{-1}\beta^r}\rnm^2_{\LS_I^2(\Hb_v^u)}+\sum_{i\leq 5,j\leq 4}\lnm \nabla^i\left(\wt{\Omega^{-1}\chi},\df{\eta},O(\epsilon)\left[{g^{-1}\partial g}\right]_0^v\right),\nabla^j\wt{\Omega^{-1}\beta^r}\rnm^2_{\LS_I^2(\Hb_v^u)}\\
            \lesssim & \lnm \nabla^6\df{\Omega^{-1}\tr\chi}\rnm^2_{\LS_I^2(\Hb_v^u)}+\epsilon^2\z^{2\delta}.
        \end{aligned}
    \end{equation}
    Let $\varphi$ take values in $\Omega^{-1}\tr\chi,\Omega^{-1}\chih,\Omega^{-1}D_4\phi$ and $\psi\in\{\eta,\etab,\Omega^{-1}\chi\}$. We can write the equation schematically:
    \begin{equation}
        \Omega^{-1}e_4\ddfl{\Omega^{-1}\tr\chi}=\Omega^{-2}\df{\Omega^2\varphi^2},
    \end{equation}
    and using commutation formula, 
    \begin{equation}
        \Omega^{-1}\nabla_4\nabla^6\ddfl{\Omega^{-1}\tr\chi}=\sum_{\substack{i_1+i_2+i_3=6\\1\leq i_3}}\nabla^{i_1}\psi^{i_2+1}\nabla^{i_3}\ddfl{\Omega^{-1}\tr\chi}+\Omega^{-2}\nabla^6\left(\Omega^2\Omega^{-1}e_4\ddfl{\Omega^{-1}\tr\chi}\right).
    \end{equation} 
The trace equation is an outgoing transport equation, so the same transport estimate as for the metric applies. The difference product in $\Omega^{-2}\df{\Omega^2\varphi^2}$ is expanded by \eqref{Calcu_Diff_product_df}, while the trace-free and scalar-field terms are already controlled by the preceding top-order estimates. Using \eqref{transport_4} and \eqref{Calcu_Diff_product_df}, we obtain
\begin{equation}
        \begin{aligned}
            &\lnm \nabla^6\ddfl{\Omega^{-1}\tr\chi} \rnm^2_{\LS_I^2(S)}\\
            \lesssim &\frac{v}{(-u)^{1-\k}}\lnm G_6(\chi),\nabla^6\ddfl{\Omega^{-1}\tr\chi}\rnm^2_{\LS_I^2(H_u^v)}\\
            \lesssim &\frac{v}{(-u)^{1-\k}}\sum_{i\leq 5}\lnm \nabla^i\left(\df{\psi},\df{\varphi},\df{\Omega^{-2}}\right)\rnm^2_{\LS^2(H_u^v)}+\epsilon^2\z^{1+2\k/(1-\k)} \\
            &+\frac{v}{(-u)^{1-\k}}\lnm \nabla^6\left(\df{\Omega^{-1}\tr\chi},\df{\Omega^{-1}\chih},\df{\Omega^{-1}D_4\phi},\df{\Omega^{-2}}\right)\rnm^2_{\LS^2(H_u^v)}\\
            \lesssim & \z\lnm \nabla^6\df{\Omega^{-1}\tr\chi}\rnm^2_{\LS^2(H_{u,v})}+\epsilon^2\z^{1+\frac{2\k}{1-\k}}.
        \end{aligned}
    \end{equation}
    Note that $\nabla^6\left(\ddfl{\Omega^{-1}\tr\chi}-\df{\Omega^{-1}\tr\chi}\right)\sim_{\LS^2_1(S_{u,v})} \epsilon\z$, this yields 
    \begin{equation*}
        \begin{aligned}
            &\z\lnm \nabla^6\df{\Omega^{-1}\tr\chi}\rnm^2_{\LS^2(H_{u,v})}\\
            &\qquad\lesssim \z\lnm \nabla^6\ddfl{\Omega^{-1}\tr\chi}\rnm^2_{\LS^2(H_{u,v})}+\epsilon^2\z^{3+2\delta}.
        \end{aligned}
    \end{equation*}
    Therefore $\lnm \nabla^6\ddfl{\Omega^{-1}\tr\chi}\rnm^2_{\LS^2(S_{u,v})}\lesssim \epsilon^2\z^{1+2\k/(1-\k)}$.

\end{proof}

With the top-order estimates for $\chi,\chib,\omegab$, we can now bound $\eta$ and $\etab$.
\begin{proposition}
    The following estimates hold for $\eta$ and $\etab$:
    \begin{equation}
        \lnm \nabla^6\df{\eta},\nabla^6\df{\etab} \rnm^2_{\LS_I^2(H_u^v)}+\lnm \nabla^6\df{\eta},\nabla^6\df{\etab} \rnm^2_{\LS_I^2(\Hb_v^u)}\lesssim \epsilon^2\left(\frac{v}{(-u)^{1-\k}}\right)^{\frac{2\k}{1-\k}}.
    \end{equation}
\end{proposition}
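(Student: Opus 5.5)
We will not derive $\eta,\etab$ from their $\nabla_3$/$\nabla_4$ transport equations at top order, because those equations lose a derivative through $\beta^r,\betab^r$. Instead we recover them from a div--curl Hodge system and the mass aspect functions. Note first that $\eta^c=\etab^c=0$, so $\df{\eta}=\eta-\eta_0$ and $\df{\etab}=\etab-\etab_0$. Moreover $\eta+\etab=\nabla\log\Omega^2$, so it suffices to treat $\eta$. The bound for $\etab$ then follows from $\df{\etab}=-\df{\eta}+\nabla\df{\log\Omega^2}$, using Proposition \ref{R_I_Estimate_metric} for $\ddfl{\log\Omega^2}$ at order $6$ together with the explicit first Taylor term $v(\Lie_v\ol{\log\Omega^2})_0$. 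That Taylor term has size $\epsilon\z$ in $\LS_1^2(S_{u,v})$, which is better than $\z^{\k/(1-\k)}$.

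\textbf{Step 1 (renormalized mass aspect).} Set $\mu=\dv\eta-K$, which satisfies the $\Omega\nabla_4$ equation in the Remark following \eqref{Equations_nab3alpha_nab4alphab}. Taking $\df{\cdot}$ and commuting with $\nabla^5$ via Corollary \ref{CORROLLARY_COMMUTATION_FORMULA_LIE}, we obtain
\[\Omega^{-1}\nabla_4\nabla^5\df{\mu}\sim\nabla^5\df{\Omega^{-1}\chi\cdot\mu}+\nabla^6\df{\Omega^{-1}\chi\cdot\eta}+\nabla^5\dv\df{\Omega^{-1}\Rc_{4\cdot}}+\nabla^6\df{\etab\cdot\Omega^{-1}\chih}.\]
Here $\Omega^{-1}\Rc_{4A}=\Omega^{-1}e_4\phi\nabla_A\phi$. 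The apparent top-order terms are $\nabla^6\df{\Omega^{-1}\chi}$, $\nabla^6\df{\nabla\phi}$ and $\nabla^6\wt{\Omega^{-1}e_4\phi}$. Each is multiplied by a factor that is either a difference (namely $\eta$, $\etab$, $\nabla\phi$) or bounded; in the latter case the conjugate factor is a difference. These top-order terms are already controlled in $\LS_I^2(H_u^v)$ by the preceding $\chi$ proposition and the scalar-field energy estimates. We apply the transport estimate \eqref{transport_4} together with Lemma \ref{RI_Lemma_Calculation_of_df}, taking $a=\k/(1-\k)$ for the background increments of $\Omega^{-1}e_4\phi$ and $\Omega^{-1}\omega$. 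The terms containing $\wt{\cdot}$ or $\ot{\cdot}$ are converted to $\df{\cdot}$ up to errors of size $\epsilon\z^{\k/(1-\k)-\delta/2}$, as was done in \eqref{R_I_ENERGY_EST_df_betar}. This gives
\[\lnm\nabla^5\df{\mu}\rnm^2_{\LS_I^2(S_{u,v})}\lesssim\epsilon^2\z^{1+2\k/(1-\k)-\delta}.\]
Integrating this bound gives the $H_u^v$ and $\Hb_v^u$ fluxes with the gain $\z^{1-\delta}$.

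\textbf{Step 2 (elliptic recovery).} Solve the Hodge system
\[\dv\df{\eta}=\df{\mu}+\df{K}-[\dv]_0^v\eta_0,\qquad \cl\df{\eta}=\df{\sigma^r}-[\cl]_0^v\eta_0.\]
The elliptic estimate \eqref{Ellip Est} bounds $\nabla^6\df{\eta}$ by $\nabla^5(\df{\mu},\df{K},\df{\sigma^r})$, plus lower-order terms, plus $\epsilon\nabla^5[g^{-1}\partial g]_0^v$. The lower-order terms are handled by Proposition \ref{L2S_est_for_non_top_order_terms}. The connection term is handled by Corollary \ref{Estimate_of_tilde_nabla} and Proposition \ref{R_I_Estimate_metric}. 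For $\nabla^5(\df{K},\df{\sigma^r})$ we use \eqref{R_I_Top_order_Est_beta_sigma_K_betab_alphab} on $H_u^v$ and \eqref{R_I_Top_order_est_for_ddfl_betab} on $\Hb_v^u$; both carry the rate $\z^{2\k/(1-\k)}$. Integrating the sphere estimate over the hypersurfaces then yields the claimed bound on both $H_u^v$ and $\Hb_v^u$.

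\textbf{Main obstacle.} The delicate point is the term $\nabla^6\df{\Omega^{-1}\chih}\cdot\etab$ in Step 1, together with the scalar-field flux. On $H_u^v$ we have only the $\df{\cdot}$ bound with rate $\z^{2\k/(1-\k)}$, and the weighted-norm computation has to keep the factor $\z$ supplied by \eqref{transport_4}. A further difficulty arises on $\Hb_v^u$, where only the $\wt{\cdot}$ bound (rate $\z^{2\delta}$) is available. We will route around it: on $\Hb_v^u$ the term is integrated in $v$ through the bulk norm, and the required smallness comes from the undifferentiated factor $\etab$, which satisfies $|\etab|\lesssim\epsilon$.
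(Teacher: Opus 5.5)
Your treatment of $\eta$ matches the paper: you transport $\nabla^5\df{\mu}$ with \eqref{transport_4} and then recover $\nabla^6\df{\eta}$ from the Hodge system with $\df{\mu},\df{K},\df{\sigma^r}$ and \eqref{Ellip Est}. The gap is the reduction of $\etab$ to $\eta$, which loses one derivative.

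From $\df{\etab}=-\df{\eta}+\nabla\df{\log\Omega^2}$, the top-order quantity $\nabla^6\df{\etab}$ requires $\nabla^7\df{\log\Omega^2}$. Proposition \ref{R_I_Estimate_metric} controls $\nabla^i\ddfl{\log\Omega^2}$ only for $i\le 6$. That is exactly why the paper uses this identity only for the fifth-order bound on $\ddfl{\etab}$ in Proposition \ref{L2S_est_for_non_top_order_terms}. Seven angular derivatives of $\log\Omega$ are not available anywhere in the Region I argument:
\begin{itemize}
\item transporting $\Lie_v\log\Omega^2=-4\Omega^2(\Omega^{-1}\omega)$ would need $\nabla^7\Omega^{-1}\omega$;
\item transporting $\Omega\nabla_3\log\Omega^2=-4\Omega\omegab$ would need $\nabla^7\Omega\omegab$;
\item recovering it elliptically from $\Delta\log\Omega^2=\dv(\eta+\etab)$ is circular.
\end{itemize}
The first two are one order beyond anything the paper estimates. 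So $\nabla\log\Omega$ sits at the same regularity level as $\eta,\etab$, and ``it suffices to treat $\eta$'' is false at top order.

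This feeds back into your Step 1. Your own commuted equation contains $\nabla^6\df{\Omega^{-1}\chi\cdot\eta}$ and $\nabla^6\df{\etab\cdot\Omega^{-1}\chih}$, hence $\nabla^6\df{\eta}$ and $\nabla^6\df{\etab}$ themselves. Your list of top-order terms omits them. They are harmless under a bootstrap on $\nabla^6(\df{\eta},\df{\etab})$, as in \eqref{R_I_eta_etab_Bootstrap}, thanks to the factor $\z$ in \eqref{transport_4} and the smallness of $\Omega^{-1}\chih$. But that bootstrap can only be improved if $\nabla^6\df{\etab}$ is estimated independently.

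The missing ingredient is the incoming analogue of your Step 1:
\begin{itemize}
\item Set $\mub=-\dv\etab+K$, derive the $\Omega\nabla_3$ equation for $\nabla^5\df{\mub}$, and apply \eqref{transport_3}.
\item The borderline coefficient there is $-(1-\k)(1-2\delta)+o_{\epsilon,\epsilon_1}(1)<0$, so the sign is favourable.
\item The other top-order sources, $\nabla^6(\df{\Omega\chib},\df{\Omega D_3\phi},\df{\nabla\phi})$ on $\Hb_v^u$, come from the earlier propositions.
\item The only top-order $\etab$-term is $\Omega\chibh\cdot\nabla^6\etab$. Its coefficient is of size $\epsilon$, and that is what lets the bootstrap constant be improved.
\end{itemize}
The Hodge system $\dv\df{\etab}=-[\dv\etab_0]_0^v-\df{\mub}+\df{K}$, $\cl\df{\etab}=-[\cl\etab_0]_0^v-\df{\sigma^r}$ then gives $\nabla^6\df{\etab}$ on both $H_u^v$ and $\Hb_v^u$.

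Two minor points:
\begin{itemize}
\item The $\Hb_v^u$ bounds for $\nabla^5(\df{K},\df{\sigma^r})$ come from \eqref{R_I_Top_order_Est_beta_sigma_K_betab_alphab}, where they appear as $\Psi_2$. They do not come from \eqref{R_I_Top_order_est_for_ddfl_betab}, which gives them only on $H_u^v$.
\item The $\Hb_v^u$ difficulty in your ``main obstacle'' does not arise. The $\mu$-source is only needed on $H_u^v$, and the $\Hb_v^u$ flux of $\nabla^5\df{\mu}$ follows by integrating the sphere bound in $u$.
\end{itemize}
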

\begin{proof}
    We make bootstrap assumption in the proof: 
    \begin{equation}\label{R_I_eta_etab_Bootstrap}
        \lnm \nabla^6\df{\eta},\nabla^6\df{\etab} \rnm^2_{\LS_I^2(H_u^v)}+\lnm \nabla^6\df{\eta},\nabla^6\df{\etab} \rnm^2_{\LS_I^2(\Hb_v^u)}\leq B_2\epsilon^2\left(\frac{v}{(-u)^{1-\k}}\right)^{\frac{2\k}{1-\k}}.
    \end{equation}
    We define auxiliary functions 
    \begin{equation}
        \begin{split}
            {\mub}=-\dv{\etab}+{K},\quad
            {\mu}=-\dv{\eta}+{K}.
        \end{split}
    \end{equation}
    The Hodge systems for $\eta,\etab$ are 
    \begin{equation}
        \begin{aligned}
            \dv\df\etab=-\left[\dv\etab_0\right]_0^v-\df\mub+\df{K},&\ \cl\df\etab=-\left[\cl\etab_0\right]_0^v-\df{\sigma^r},\\
            \dv\df\eta=-\left[\dv\eta_0\right]_0^v-\df\mu+\df{K},&\ \cl\df\eta=-\left[\cl\eta_0\right]_0^v+\df{\sigma^r}.
        \end{aligned}
    \end{equation}
    Combining the standard elliptic estimates \eqref{Ellip Est}, the existing estimates \ref{L2S_est_for_non_top_order_terms} for all lower-order terms, the targeted estimate \ref{R_I_Estimate_metric} specifically for $\left[g^{-1}\partial g\right]_0^v$, and the precise top-order bounds \eqref{R_I_Top_order_Est_beta_sigma_K_betab_alphab} and \eqref{R_I_Top_order_est_for_ddfl_betab} bounding $\df{K}$ and $\df{\sigma^r}$, we conclude that 
    \begin{equation}
        \begin{aligned}
            \lnm \nabla^6\df\eta\rnm^2_{\LS^2_I(H_u^v)}\lesssim & \lnm \nabla^5\df\mu\rnm^2_{\LS^2_I(H_u^v)}+\epsilon^2\z^{1+2\k/(1-\k)-\delta},\\
            \lnm \nabla^6\df\etab\rnm^2_{\LS^2_I(H_u^v)}\lesssim & \lnm \nabla^5\df\mub\rnm^2_{\LS^2_I(H_u^v)}+\epsilon^2\z^{1+2\k/(1-\k)},\\
            \lnm \nabla^6\df\eta\rnm^2_{\LS^2_I(\Hb_v^u)}\lesssim & \lnm \nabla^5\df\mu\rnm^2_{\LS^2_I(\Hb_v^u)}+\epsilon^2\z^{2\k/(1-\k)},\\
            \lnm \nabla^6\df\etab\rnm^2_{\LS^2_I(\Hb_v^u)}\lesssim & \lnm \nabla^5\df\mub\rnm^2_{\LS^2_I(\Hb_v^u)}+\epsilon^2\z^{2\k/(1-\k)}.
        \end{aligned}
    \end{equation}
    We can derive their equation by commutation formulae,
    \begin{equation}
        \begin{aligned}
            \Omega\nabla_3\mub+&\Omega\tr\chib \mub =\Omega\tr\chib \mub-\dv(\Omega\nabla_3\etab)+\Omega\nabla_3 K-\left[\Omega\nabla_3,\dv\right]\etab\\
            =&\Omega\tr\chib \mub+\dv\left(\Omega\chib(\etab-\eta)-\Omega\betab^r-\Omega e_3\phi\nabla\phi\right) +\dv\left(\Omega\betab^r-\frac{1}{2}\eta\Omega\tr\chib+\eta\Omega\chibh\right)\\
            &-\Omega\tr\chib K+\frac{1}{2}\Omega\tr\chib\dv\etab+\Omega\chibh\nabla\etab+\etab\left(\dv(\Omega\chibh)-\frac{1}{2}\nabla(\Omega\tr\chib)\right)\\
            =&2\Omega\chibh\nabla\etab+\dv(\nabla\phi\Omega e_3\phi)+2\etab\dv(\Omega\chibh)-\eta\nabla(\Omega\tr\chib).
        \end{aligned}
    \end{equation}
    Similarly, the $\Omega\nabla_4\mu$ equation is 
    \begin{equation}
        \Omega\nabla_4\mu=-\Omega\tr\chi \mu+ 2\Omega\chih\nabla\eta+\dv(\nabla\phi\Omega e_4\phi)+2\eta\dv(\Omega\chih)-\etab\nabla(\Omega\tr\chi).
    \end{equation}
    With commutation formulae, we obtain 
    \begin{equation}
        \begin{aligned}
            \Omega\nabla_3\nabla^5\df\mub+&\frac{7}{2}\Omega\tr\chib\nabla^5\df\mub\sim \sum_{i_1+i_2=4}\nabla^{i_1+1}(\Omega\chib)\nabla^{i_2}\df\mub +\Omega\chibh\df{\nabla^6\etab}+\Omega\chibh\nabla^5\df\mub\\
            &+\nabla^5\df{\etab\nabla(\Omega\chib),\eta\nabla(\Omega\chib)}+\nabla^5\df{\nabla(\nabla\phi \Omega e_3\phi)}\\
            &+\nabla^5\left((\Lie_{\left[b\right]_0^v}+\left[\Omega\chib\right]_0^v)\ol{\mub}_0+\df{\Omega\chib}\mub^c+\ol{\Omega\chib}\left[\mub^c\right]_0^v\right),\\
            \Omega^{-1}\nabla_4\nabla^5\df{\mu}\sim & \sum_{i_1+i_2=5}\Omega^{-2}\nabla^{i_1}(\Omega^2\Omega^{-1}\chi)\nabla^{i_2}\df{\mu}+\Omega^{-2}\nabla^5\left(\ol{\Omega^2\Omega^{-1}\chi\mu}+\ol{\Omega^2\varphi\nabla\varphi}\right),
        \end{aligned}
    \end{equation}
    where $\varphi\in\{\Omega^{-1}\chi,\eta,\etab,\nabla\phi,\Omega^{-1}e_4\phi\}$.
The $\mu$ equation is estimated in the outgoing direction. All products in its source contain either a previously estimated lower-order Ricci coefficient or one of the controlled top-order quantities $\Omega^{-1}\chi$, $\Omega^{-1}e_4\phi$, and $\nabla\phi$.
    Using transport estimate \eqref{transport_4}, we find that 
    \begin{equation}
        \begin{aligned}
            \lnm \nabla^5\df{\mu}\rnm^2_{\LS_I^2(S_{u,v})}\lesssim & \frac{v}{(-u)^{1-\k}}\lnm \nabla^5\df\mu,\Omega^{-1}\nabla_4\nabla^5\df{\mu}\rnm^2_{\LS_I^2(H_u^v)}\\
            \lesssim & \frac{v}{(-u)^{1-\k}}\sum_{i\leq 5,j\leq 4}\lnm \nabla^i\left(\ol{\varphi},\ol{\Omega^2}\right),\nabla^j\df{\mu}\rnm^2_{\LS_I^2(H_u^v)}\\
            &+\frac{v}{(-u)^{1-\k}}\lnm \nabla^6\left(\ol{\Omega^{-1}\chi},\ol{\Omega^{-1}e_4\phi},\ol{\nabla\phi}\right),\nabla^5\df\mu\rnm^2_{\LS_I^2(H_u^v)}\\
            \lesssim &\epsilon^2\z^{1+2\delta}\leq \epsilon^2\z^{\frac{2\k}{1-\k}}.
        \end{aligned}
    \end{equation} 
The $\mub$ equation is harder because it contains the top derivative of $\etab$. The coefficient of this term is $\Omega\chibh$, whose smallness has just been obtained from the $\Omega\tr\chib$ and $\Omega\chibh$ estimates, allowing the bootstrap constant to be improved. 
We proceed to estimate $\mub$. Since $-2+2s(\nabla^5\mub)-(1-\k)(1-2\delta)-7(-u)\Omega\tr\chib=-(1-\k)(1-2\delta)+o(1)<-\frac{1}{10}$, the transport estimate \eqref{transport_3} gives
    \begin{equation}
        \begin{aligned}
            &\lnm \nabla^{5}\df\mub\rnm^2_{\LS_I^2(S_{u,v})}\\
            \lesssim & \lnm \nabla^{5}\df\mub\rnm^2_{\LS_I^2(S_{-1,v})}+\lnm |\dv b|^{1/2}\nabla^5\df\mub,\Omega\nabla_3\nabla^5\df\mub+\frac{7}{2}\Omega\tr\chib\nabla^5\df\mub\rnm^2_{\LS_I^2(\Hb_v^u)}\\
            \lesssim & \epsilon^2\z^{\frac{2\k}{1-\k}}+\epsilon\lnm \nabla^5\df{\mub}\rnm^2_{\LS_I^2(\Hb_v^u)}+\lnm \nabla^6\left(\df{\Omega\chib},\df{\Omega D_3\phi},\df{\nabla\phi}\right)\rnm^2_{\LS_I^2(\Hb_v^u)}\\
            &+\sum_{i\leq 5}\lnm \nabla^i\left(\df{\eta},\df\etab,\df{\Omega\chib},\df{\Omega e_3\phi},\df{\nabla\phi},\df{b}\right)\rnm^2_{\LS_I^2(\Hb_v^u)}+\epsilon^2.\lnm \nabla^6\df\etab\rnm^2_{\LS_I^2(\Hb_v^u)}\\
            \lesssim & (1+\epsilon+\epsilon^2 B_2)\epsilon^2\z^{\frac{2\k}{1-\k}}\lesssim \epsilon^2\z^{\frac{2\k}{1-\k}}.
        \end{aligned}
    \end{equation}
    Here we emphasize that $\Omega\chibh\,\nabla^6\etab$ is the only top-order term of $\etab$ or $\mub$ on the right hand side of $\Omega\nabla_3\nabla^5\mub+\frac{7}{2}\Omega\tr\chib\nabla^5\mub$ equation, and the smallness of $\Omega\chibh$ improves the bootstrap constant $B_2$.
    
\end{proof}

Finally, we estimate $\nabla^6\df{\Omega^{-1}\omega}$. Although this quantity is not used in this section, it will play the role in region III. 
\begin{proposition}
    The following inequality holds:
    \begin{equation}
        \lnm \nabla^{6}\df{\Omega^{-1}\omega}\rnm^2_{\LS_I^2(H_{u}^{v})}\lesssim \epsilon^2\z^{2\k/(1-\k)}.
    \end{equation}
\end{proposition}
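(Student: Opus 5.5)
Since $\Omega^{-1}\omega$ satisfies only a $\nabla_3$ equation (there is no $\nabla_4$ equation for $\omega$), I would estimate $\nabla^6\df{\Omega^{-1}\omega}$ by transport along the incoming direction. The plan follows the corollary proved after Proposition \ref{L2S_est_for_non_top_order_terms}, now carried one derivative higher. Start from the schematic equation
\[
\Omega\nabla_3\df{\Omega^{-1}\omega}\sim \df{K+e_3\phi\, e_4\phi+\nabla\phi\nabla\phi+\chi\chib+\eta\etab+\etab^2}+\ol{\Omega\chib}\left[(\Omega^{-1}\omega)^c\right]_0^v+\df{\Omega\chib}(\Omega^{-1}\omega)^c_0+\left(\left[\Omega\chib\right]_0^v+\Lie_{\left[b\right]_0^v}\right)\ol{\Omega^{-1}\omega}_0 .
\]
Commute with $\nabla^6$ using \eqref{Comm_Formula_D3,nab}. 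The commutator terms have the form $\nabla^{i_1}(\Omega\chib)\nabla^{i_2}\df{\Omega^{-1}\omega}$ with $i_2\le 5$, which the corollary controls, plus $\Omega\chibh\,\nabla^6\df{\Omega^{-1}\omega}$, which can be absorbed because it carries a factor of size $\epsilon$.

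Next, apply the transport estimate \eqref{transport_3} to $\nabla^6\df{\Omega^{-1}\omega}$ on each sphere. For this step I would use the weight $w=1$, exactly as in the lower-order corollary, since the sign of the borderline coefficient is harmless there. Then integrate in $v$ with the $\LS_I^2(H_u^v)$ weight, which gains the factor $\z^{2\delta}$, or simply $\lesssim$. The initial data on $H_{-1}$ and on $\Hb_0$ are of size $\epsilon\z^{\k/(1-\k)}$ and $0$ respectively. The sources are expanded by Lemma \ref{RI_Lemma_Calculation_of_df} with $a=\k/(1-\k)$, which is the increment rate of $(\Omega^{-1}e_4\phi)^c$ and $(\Omega^{-1}\omega)^c$. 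After expansion, each source factor is one of the following, all already controlled.
\begin{itemize}
\item Top-order differences: $\nabla^6\df{\Omega\chib}$ and $\nabla^6\df{\eta},\nabla^6\df{\etab}$ from the preceding propositions, $\nabla^6\df{\Omega^{-1}\chi}$ from the $\chi$ proposition, $\nabla^6\df{\nabla\phi}$ and $\nabla^6\ddfl{\Omega D_3\phi}$ from Proposition \ref{R_I_Energy_Est_nabphi_D_3phi}, and $\nabla^6\wt{\Omega^{-1}D_4\phi}$.
\item $\nabla^6\df{K}$.
\item Lower-order terms, handled by Proposition \ref{L2S_est_for_non_top_order_terms}.
\item The $\left[b\right]_0^v$ and $\left[\Omega\chib\right]_0^v$ factors, handled by Proposition \ref{R_I_Estimate_metric}.
\end{itemize}
For $\Omega^{-1}e_4\phi$, convert $\df{\cdot}$ to $\wt{\cdot}$ plus the explicit increment, which has rate $\epsilon\z^{\k/(1-\k)-\delta}$, as in \eqref{R_I_ENERGY_EST_df_betar}.

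The main obstacle is $\nabla^6\df{K}$. Only $\nabla^5\df{K}$ is controlled, through the Bianchi energy estimates \eqref{R_I_Top_order_Est_beta_sigma_K_betab_alphab} and \eqref{R_I_Top_order_est_for_ddfl_betab}, so one derivative is missing. To get around this, I would renormalize as in Proposition \ref{R_I_Energy_Est_omegab}. Work with $\nabla\df{\Omega^{-1}\omega}$ combined with a Hodge potential $\Omega^{-1}\omega^\dagger$ solving $\Omega\nabla_3\Omega^{-1}\omega^\dagger=\frac12\sigma^r$, and form
\[
\omega^r=\nabla\df{\Omega^{-1}\omega}+{}^*\nabla\df{\Omega^{-1}\omega^\dagger}-\tfrac12\df{\Omega^{-1}\beta^r},
\]
so that $\nabla K$ and ${}^*\nabla\sigma^r$ cancel against $\Omega\nabla_3\beta^r$ from \eqref{Equations_nab3beta_nab4betab}. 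The $\nabla_3$-transport of $\nabla^5\omega^r$ then involves only top-order Ricci and scalar terms, which the above bounds control. Finally, the div--curl system for $\nabla\df{\Omega^{-1}\omega}$ together with \eqref{Ellip Est} recovers $\nabla^6\df{\Omega^{-1}\omega}$. The $\df{\Omega^{-1}\beta^r}$ contribution is bounded by \eqref{estimate:wt_alpha_wt_beta} and \eqref{R_I_ENERGY_EST_df_betar}.

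Collecting the bounds, all contributions come out at most $\epsilon^2\z^{2\k/(1-\k)}$, which is the estimate in the statement.
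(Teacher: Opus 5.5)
Your renormalization is the same as the paper's: the Hodge potential $\Omega^{-1}\omega^\dagger$, the combination $\nabla\df{\Omega^{-1}\omega}+{}^*\nabla\Omega^{-1}\omega^\dagger-\frac12\df{\Omega^{-1}\beta^r}$ (in which $\nabla K$ and ${}^*\nabla\sigma^r$ cancel against $\Omega\nabla_3\beta^r$), and recovery of $\nabla^6\df{\Omega^{-1}\omega}$ from the two div--curl systems. The gap is in how you close the $\nabla_3$ transport.

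You propose, as in the lower-order corollary, a weight-$1$ \emph{sphere} bound for $\nabla^5\omega^r$ via \eqref{transport_3}, which you then integrate in $v$. That requires every top-order source in $\LS^2(\Hb_v^u)$.

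The renormalization does not remove the term $\frac14\nabla(D_3\phi D_4\phi)$ coming from the $\nabla_3\omega$ equation. The scalar-field terms in the $\nabla_3\beta^r$ equation involve only $\Rc_{AB}$ and $\Rc_{4A}$, so nothing cancels it. Hence the equation for $\nabla^5\omega^r$ contains $\frac14\Omega D_3\phi\,\nabla^6\df{\Omega^{-1}D_4\phi}$, with coefficient $\Omega D_3\phi\approx-\sqrt{2\k}/(-u)$, which is not small.

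At top order, $\Omega^{-1}D_4\phi$ is controlled only along outgoing cones. The quantity $\nabla^6\wt{\Omega^{-1}D_4\phi}$ is the $H_u^v$-member of its energy pair with $\ddfl{\nabla\phi}$, and no $\Hb_v^u$-bound for it is available. Its own $\nabla_3$ equation would not help, since it would cost $\nabla^8\phi$.

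So the sphere-by-sphere estimate cannot be closed: a top-order sphere bound for $\nabla^6\df{\Omega^{-1}\omega}$ is not available at this regularity. This is why the proposition asserts only the outgoing flux.

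The missing step is to integrate the transport identity for $\nabla^5\omega^r$ over the whole region $\{-1\le u'\le u,\ 0\le v'\le v\}$ \emph{before} estimating the sources. That is, use the energy estimate \eqref{Energy Estimates} for the degenerate pair $(\nabla^5\omega^r,0)$. With weight $w_I$ the linear coefficient is
$-2+2s(\nabla^5\omega^r)-(1-\k)(1-2\delta)-(-u)(6\Omega\tr\chib-8\Omega\omegab)=-(1-\k)(3-2\delta)$,
which gives a good bulk term $-\frac12\|\nabla^5\omega^r\|^2_{\LS_I^2(\R_{u,v})}$. All sources then enter only through
$\|\cdot\|^2_{\LS_I^2(\R_{u,v})}=\int_{-1}^u\frac{1}{-u'}\|\cdot\|^2_{\LS_I^2(H_{u'}^v)}\,du'$.
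These norms are controlled by the outgoing fluxes of $\nabla^6\wt{\Omega^{-1}D_4\phi}$ (plus the explicit increment), $\nabla^6\df{\Omega^{-1}\chi}$ and $\nabla^5\df{\Psi}$. This is the paper's argument.

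A smaller point concerns your choice of potential. The paper's potential solves $\Omega\nabla_3\Omega^{-1}\omega^\dagger=\frac12\df{\sigma^r}+4\Omega\omegab\,\Omega^{-1}\omega^\dagger$ with zero data on $H_{-1}$. The extra term matches the linear term $4\Omega\omegab\,\Omega^{-1}\omega$ in $\Omega\nabla_3(\Omega^{-1}\omega)$. With your choice $\Omega\nabla_3\Omega^{-1}\omega^\dagger=\frac12\sigma^r$, a top-order term $4\Omega\omegab\,{}^*\nabla\Omega^{-1}\omega^\dagger$ with non-small coefficient is left over in the $\omega^r$ equation. It would have to be fed back through the Hodge system.
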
 
\begin{proof}
    Along $u=-1$, we have $$\df{\Omega^{-1}\omega}=\ol{\Omega^{-2}}\left[(\Omega\omega)^c\right]_0^v+\df{\Omega^{-2}}(\Omega\omega)^c(0)\sim O_{H^{N-1}}\left(\epsilon v^{\k/(1-\k)}\right).$$ See \eqref{increasing_rate_omega_c} in Appendix \ref{Appendix_Chr94} for details. 
Consider auxiliary function ${\Omega^{-1}\omega^\dagger}$ defined by 
\[\Omega\nabla_3{\Omega^{-1}\omega^\dagger}=\frac{1}{2}\df{\sigma^r}+4\Omega\omegab{\Omega^{-1}\omega^\dagger},\quad {\Omega^{-1}\omega^\dagger}|_{H_{-1}}\equiv 0.\]
The $\LS_I^2(S)$ estimate follows by directly using transport estimate \eqref{transport_3} and the $\LS^2_I(\Hb_v^u)$ estimate for $\df{\sigma^r}$,
\begin{equation}
    \sum_{i\leq 5}\lnm \nabla^{i}(\Omega^{-1}\omega^\dagger)\rnm^2_{\LS_I^2(S_{u,v})}\lesssim \epsilon^2\z^{2\k/(1-\k)}.
\end{equation}
To fulfill the $\LS_I^2(H)$ estimate, we 
write $\wt{\omega^r}_A=\nabla_A\df{\Omega^{-1}\omega}+{}^*\nabla_A({\Omega^{-1}\omega^\dagger})-\frac{1}{2}\left[\ol{\Omega^{-1}\beta^r}_{A}\right]_0^v$, then two Hodge systems appear by the definition:
\begin{equation}
    \begin{split}
        \dv\nabla\df{\Omega^{-1}\omega}=\dv\wt{\omega^r}+\frac{1}{2}\dv\df{\Omega^{-1}\beta^r},&\quad \cl\nabla\df{\Omega^{-1}\omega}=0,\\
\cl{}^*\nabla(\Omega^{-1}\omega^\dagger)=\cl\wt{\omega^r}+\frac{1}{2}\cl\df{\Omega^{-1}\beta^r},&\quad \dv{}^*\nabla(\Omega^{-1}\omega^\dagger)=0.
    \end{split}
\end{equation}
It suffices to prove that 
\begin{equation}
    \lnm\nabla^5\wt{\omega^r} \rnm^2_{\LS^2(H_u^v)}\lesssim\epsilon^2\z^{2\k/(1-\k)}.
\end{equation}
We derive transport equation for $\wt{\omega^r}$,
\begin{equation}
    \begin{aligned}
        &\Omega\nabla_3\wt{\omega^r}+\left(\frac{1}{2}\Omega\tr\chib-4\Omega\omegab\right)\wt{\omega^r}\\
        \sim & \Omega\chibh\wt{\omega^r}+\nabla(\Omega\chib,\Omega\omegab)\left(\df{\Omega^{-1}\omega},\Omega^{-1}\omega^\dagger\right)+\df{\varphi^3}+\df{\varphi\nabla\varphi}+\df{\varphi\Psi_2}\\
        &+\nabla\left((\ol{\Omega\chib},\ol{\Omega\omegab})\left[(\Omega^{-1}\omega)^c\right] +\left(\left[\Omega\chib\right]_0^v+\Lie_{\left[b\right]_0^v}\right)(\Omega^{-1}\omega)_0+\df{\Omega\chib,\Omega\omegab}(\Omega^{-1}\omega)^c\right)\\
        =&:F_1(\omega),\\
        &\Omega\nabla_3\nabla^5\wt{\omega^r}+\left(3\Omega\tr\chib-4\Omega\omegab\right)\nabla^5\wt{\omega^r}\sim\nabla^5 F_1(\omega)+\Omega\chibh\nabla^5\wt{\omega^r} +\sum_{i_1+i_2=4}\nabla^{i_1+1}\varphi\nabla^{i_2}\wt{\omega^r} \\
        =&:F_6(\omega).
    \end{aligned}
\end{equation}
Here, we use $\varphi$ to collectively represent the quantities $\Omega\chib,\Omega^{-1}\chi,\eta,\etab,\Omega e_3\phi,\nabla\phi,\Omega^{-1}e_4\phi,\Omega\omegab$, while $\Psi_2$ stands for any of $\Omega^{-1}\beta^r$, $\Omega\betab^r$, $K$, or $\sigma^r$. 
Since $-2+2s(\nabla^5\omega^r)-(1-\k)(1-2\delta)-(-u)\left(6\Omega\tr\chib-8\Omega\omegab\right)=-(1-\k)(3-2\delta)<-\frac{1}{2}$,
the energy estimates yield
\begin{equation}
    \begin{aligned}
        \lnm\nabla^5\wt{\omega^r}\rnm^2_{\LS_I^2(H_u^v)}\lesssim & \epsilon^2\z^{1+2\delta}+ \lnm F_6(\omega),|\dv b|^{1/2}\nabla^5\wt{\omega^r}\rnm^2_{\LS_I^2(\mathcal{R}_{u,v})}-\frac{1}{2}\lnm \nabla^5\wt{\omega^r}\rnm^2_{\LS_I^2(\mathcal{R}_{u,v})}\\
        \lesssim & \epsilon^2\z^{1+2\delta}+\sum_{i\leq 5,j\leq 4}\lnm \nabla^i\left(\df{\varphi},\df{\Omega^{-1}\omega},{\Omega^{-1}\omega^\dagger}\right),\nabla^j\df{\Psi_2}\rnm^2_{\LS_I^2(\R_{u,v})} \\
        &+\lnm \nabla^6\left(\df{\Omega\omegab},\df{\Omega\chib},\df{\Omega^{-1}\chi},\df{\Omega D_3\phi},\df{\Omega^{-1}D_4\phi},\df{\nabla^2\phi}\right)\rnm^2_{\LS_I^2(\R_{u,v})}\\
        &+\lnm \nabla^5\df{\Psi_2}\rnm^2_{\LS_I^2(\R_{u,v})}+(\epsilon^2+\epsilon-\frac{1}{2})\lnm \nabla^5\wt{\omega^r} \rnm^2_{\LS_I^2(\R_{u,v})}\\
        \lesssim & \epsilon^2\z^{1+2\delta}.
    \end{aligned}
\end{equation}
We thus conclude $\lnm \nabla^5\wt{\omega^r} \rnm^2_{\LS_I^2(\R_{u,v})}\lesssim \epsilon^2\z^{1+2\delta}$.
\end{proof}

This concludes the proof of Proposition \ref{Main Proposition R_I}. We can furthermore study the asymptotic behavior of $\ol{(\Omega e_3)^2\phi}$. For notational simplicity, we denote ${\Omega^4(\Omega^{-1} e_3)^2\phi}$ by $\Phi$. By Proposition \ref{Appen_A_eq_D3D3phi} in Appendix A, it holds that 
\begin{equation}
    \Omega^{-1}e_4\left(\ol{\Phi}\right)+\frac{1}{2}\Omega^{-1}\tr\chi\ol{\Phi}=-\frac{1}{2}\ol{\Omega^{-1}\tr\chi}{\Phi}^c+\nabla^2(\Omega D_3\phi)+\varphi\nabla\varphi+\ol{\varphi^3}+\ol{\varphi\cdot\Psi}.
\end{equation}
We now consider the trivial weight $w\equiv 1$. The transport estimate then yields 
\begin{equation}\label{R_I_L2_est_D3D3phi}
    \begin{aligned}
        \lnm \ol{\Phi}\rnm^2_{\LS_1^2(S_{u,v})}\lesssim & \lnm \ol{\Phi}\rnm^2_{\LS_1^2(S_{u,0})}+\frac{v}{(-u)^{1-\k}}\sum_{i\leq 2}\lnm \ol{\Phi},\nabla^i\ol{\varphi},\ol{\Psi}\rnm^2_{\LS_0^2(H_u^v)}\lesssim \epsilon^2.
    \end{aligned}
\end{equation}

\subsection{Estimates along boundary of region I}
Our goal in this part is to estimate the difference quantities $\ol{\cdot}$ along the future boundary of Region I.
\begin{corollary}\label{R_I End Values}
    If $\epsilon=\epsilon(\epsilon_1)$ is sufficiently small, the following estimates hold along $z=v^{1/(1-\k)}/(-u)=\epsilon_1$:
    \begin{equation}\label{eq:R_I End Values 1}
        \begin{split}
            \sum_{i\leq 5}\int_{\SS}(-u)^{2-2s(\nabla^i\varphi)}\left|\nabla^i\ol{\varphi}\right|^2
            \lesssim \epsilon^{2-\delta/2},
        \end{split}
    \end{equation}
    and for any constant $a>0$,
    \begin{equation}\label{eq:R_I End Values 2}
        \int_{-\frac{v^{1/(1-\k)}}{z}}^u\int_{\SS} (-u^\prime)^{1+a-2s(\nabla^i\Phi)}\left|\nabla^i\ol{\Phi}\right|^2\left(u^\prime,|zu^\prime|^{1-\k}\right)\lesssim \epsilon^{2-\delta/2}v^{\frac{a}{1-\k}}.
    \end{equation}
    Here $\varphi=\Omega\chib,\Omega^{-1}\chi,\Omega\omegab,\Omega^{-1}\omega,\eta,\etab,$ $\Omega D_3\phi,\Omega^{-1} D_4\phi,\nabla\phi$, and $\Phi=\Omega^{-2}\alpha$, $\Omega^{2}\alphab$, $\Omega^{-1}\beta^r$, $\Omega\betab^r$, $\sigma^r$, $K^r,\nabla\varphi$.
\end{corollary}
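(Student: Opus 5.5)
The plan is to convert every Region I estimate, which is stated for the refined differences $\wt{\cdot}$, $\df{\cdot}$, $\ddfl{\cdot}$ in the weighted norms $\LS_I$, into an estimate for the plain background difference $\ol{\cdot}$ on the boundary $z=\epsilon_1$. Along that boundary $v/(-u)^{1-\k}=\epsilon_1^{1-\k}$ is a fixed constant. Hence the weight $w_I=\z^{2\delta-1}$ is comparable to $\epsilon_1^{-(1-\k)(1-2\delta)}$, and all the powers $\z^{c}$ appearing in Proposition \ref{Main Proposition R_I} become constants $C(\epsilon_1)$. The loss $\epsilon^{-\delta/2}$ is then paid by choosing $\epsilon=\epsilon(\epsilon_1)$ so small that $C(\epsilon_1)\le\epsilon^{-\delta/2}$.

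\textbf{Step 1: the sphere estimate \eqref{eq:R_I End Values 1}.} For each $\varphi$ I decompose $\ol{\varphi}=\df{\varphi}+(\ol{\varphi})_0$, or $\ol\varphi=\ddfl{\varphi}+(\ol\varphi)_0+v(\Lie_v\ol\varphi)_0$.
\begin{itemize}
\item The boundary values $(\ol\varphi)_0$ and $(\Lie_v\ol\varphi)_0$ on $\Hb_0$ are $O(\epsilon)$ in $\LS_1^2$ by the construction of the initial data.
\item The sphere bounds for $\df{\cdot}$ and $\ddfl{\cdot}$ up to order five come from Proposition \ref{L2S_est_for_non_top_order_terms}, the metric estimate Proposition \ref{R_I_Estimate_metric}, and the corollary for $\Omega^{-1}\omega$.
\item For $\Omega^{-1}\chih$ and $\Omega^{-1}D_4\phi$ I instead write $\ol\varphi=\wt\varphi+(\ot\varphi-\varphi^c)$. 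The term $\ot\varphi-\varphi^c$ is $O(\epsilon)$ by the self-similar construction and \eqref{ID_triangle_chih_e4phi}.
\end{itemize}
Removing the weight costs at most $\epsilon_1^{-C}$, which gives $\epsilon^2C(\epsilon_1)\le\epsilon^{2-\delta/2}$.

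\textbf{Step 2: the hypersurface estimate \eqref{eq:R_I End Values 2}.} This is an $\LS^2(\Sigma^{\epsilon_1})$-type integral along the self-similar hypersurface.
\begin{itemize}
\item For the Ricci quantities $\nabla\varphi$ (at most six derivatives), I bound the integrand pointwise in $u'$ by Step 1, or by top-order sphere bounds where available. The factor $(-u')^{a-1}$ is integrable from $-v^{1/(1-\k)}/z$ to $u$ and produces $(v^{1/(1-\k)}/\epsilon_1)^{a}\lesssim_{\epsilon_1} v^{a/(1-\k)}$, using $a>0$.
\item For the curvature components $\Phi$, only null-hypersurface and bulk energy bounds are available at top order. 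On each sphere I therefore use lower-order sphere bounds where they exist: Proposition \ref{L2S_est_for_non_top_order_terms} up to order four, and \eqref{R_I_L2_est_D3D3phi} for $\Phi=\Omega^4(\Omega^{-1}e_3)^2\phi$.
\item At fifth order I will express the $\Sigma$ integral via the divergence identity on the region bounded by $H_u$, $\Hb_v$ and $\Sigma^{\epsilon_1}$. The Bianchi energy identity behind \eqref{Energy Estimates} gives a flux through the spacelike or null hypersurface $\Sigma^{\epsilon_1}$, controlled by the incoming and outgoing fluxes of Propositions \ref{R_I_Estimate_beta_sigma_K_betab_alphab} and \eqref{estimate:wt_alpha_wt_beta}.
\end{itemize}

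\textbf{Step 3: the main obstacle.} I expect the hardest terms to be $\Omega^{-2}\alpha$ and $\Omega^{-1}\beta^r$. Their approximations $\ot{\Omega^{-2}\alpha}$ and $\ot{\Omega^{-1}\beta^r}$ do not coincide with the background, and $\ot{\Omega^{-2}\alpha}$ is singular like $\epsilon\z^{\k/(1-\k)-1-\delta}$. On $z=\epsilon_1$ this singular factor is merely a constant depending on $\epsilon_1$, so it is again absorbed into $\epsilon^{-\delta/2}$ by the choice of $\epsilon$. I also need $\ot{\Omega^{-2}\alpha}$ and $\ot{\Omega^{-1}\beta^r}$ to be $O(\epsilon)$ at $z=\epsilon_1$; for this I will use the boundary conditions $\ot{\Omega^{-1}\chih}(\epsilon_1)=0$ and the Lie propagation estimates of Lemma \ref{Lie_propagation_eq_II}. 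The flux-through-$\Sigma$ argument must be checked to have a favourable sign of the weight derivative along $\Sigma$. If it does not, I will fall back on a trace or Sobolev-in-$u$ argument that combines the $\Hb$-flux with the bulk norm.
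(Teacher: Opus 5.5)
Your Step 1 and the fifth-order curvature part of Step 2 are the paper's argument. On $\Sigma^{\epsilon_1}$ the weight and every power of $\z$ are constants, and $\ol{\varphi}$ is recovered from $\df{\varphi}$, $\ddfl{\varphi}$ or $\wt{\varphi}$ plus $O(\epsilon)$ data. The $\Sigma$-integral of a Bianchi pair is then rewritten as two Region I fluxes plus a bulk integral over the triangle they cut off below $\Sigma$. This comes from integrating $\partial_u$ of the $\Phi_1$-density from the outgoing cone through one end of the $\Sigma$-segment, and $\partial_v$ of the $\Phi_2$-density from the incoming cone through the other end.

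The gap is in which quantities you send through the pointwise-on-spheres route. Region I gives sphere bounds only up to $\nabla^5\varphi$. At sixth order only $\nabla^6\ddfl{\Omega\tr\chib}$, $\nabla^6\ddfl{\Omega^{-1}\tr\chi}$ and the metric are controlled on spheres. The quantities $\nabla^6\eta$, $\nabla^6\etab$, $\nabla^6(\Omega\chibh)$, $\nabla^6\wt{\Omega^{-1}\chih}$, $\nabla^6\wt{\Omega^{-1}D_4\phi}$, $\nabla^6\nabla\phi$, $\nabla^6\ddfl{\Omega D_3\phi}$, $\nabla^6\ddfl{\Omega\omegab}$ and $\nabla^6\df{\Omega^{-1}\omega}$ are known only as fluxes through $H$ or $\Hb$. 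So your first bullet of Step 2 does not give $\nabla^5\Phi$ for $\Phi=\nabla\varphi$, and this is exactly the top order Region II needs.

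Likewise $\wt{\Omega^{-2}\alpha}$ has no sphere bound at any order; it does not appear in Proposition \ref{L2S_est_for_non_top_order_terms}. For $\alpha$ the flux route is therefore needed for every $i\le 5$, not only $i=5$. The paper runs the same surface-to-bulk conversion at all orders for an enlarged list of pairs:
\begin{itemize}
\item $(\nabla(\Omega^{-1}e_4\phi),\nabla^2\phi)$ and $(\nabla^2\phi,\nabla(\Omega e_3\phi))$;
\item $(\nabla(\Omega\tr\chib),0)$ and $(0,\nabla(\Omega^{-1}\tr\chi))$;
\item the auxiliary potentials $\Omega^{-1}\omega^\dagger$ and $\Omega\omegab^\dagger$;
\item the mass aspects $(\mub,0)$ and $(0,\mu)$.
\end{itemize}
The remaining sixth-order coefficients are then recovered on the spheres of $\Sigma$ through the Region I Hodge systems.

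Two smaller points. First, you need no favourable sign. The terms from differentiating $(-u^\prime)^{1+a-2s}w$, the $\dv b$ terms, and $|F|^2,|G|^2$ are all bounded in absolute value by $\int(-u^\prime)^{a-1}\lnm\Phi\rnm^2_{\LS_I^2(H_{u^\prime})}du^\prime$ or $\int (v^\prime)^{\frac{a}{1-\k}-1}\lnm\Phi\rnm^2_{\LS_I^2(\Hb_{v^\prime})}dv^\prime$. Either gives $C(\epsilon_1)\epsilon^2v^{a/(1-\k)}$, and this is where $a>0$ is used.

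Second, your Step 3 treats $\ot{\Omega^{-2}\alpha}$ in the wrong place. Its size on $\Sigma$ only matters for converting $\wt{\Omega^{-2}\alpha}$ into $\ol{\Omega^{-2}\alpha}$ there. The triangle, however, reaches down toward $v=0$, where $\ot{\Omega^{-2}\alpha}$ is singular. There the sources $\left[b,\Omega\chib,\Omega\omegab,\eta,\etab\right]_0^v\ot{\Omega^{-2}\alpha}$ in the $(\alpha,\beta)$ system must be absorbed using the extra power of $\z$ supplied by the $\left[\cdot\right]_0^v$ factor, as in the proof of \eqref{estimate:wt_alpha_wt_beta}.
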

\begin{proof}
    We first consider $\varphi=\Omega^{-1}\chih$ or $\Omega^{-1}e_4\phi$.
    Construction of approximating solution reveals that 
    \begin{equation}
       \begin{split}
            \lnm \ot{\varphi}-\varphi^c\rnm^2_{\LS_I^2\left(S_{u,(-\epsilon_1u)^{1-\k}}\right)}\lesssim& \int_{\SS}\left(\epsilon_1\right)^{-(1-\k)(1-2\delta)}\epsilon^2
         \leq  \epsilon^{2-\delta/2}.
       \end{split}
    \end{equation}
    Together with the estimates of $\wt{\varphi}$ and the fact $\wt{\varphi}-\ol{\varphi}=\varphi^c-\ot{\varphi}$, we deduce that 
    $$\lnm \nabla^i\ol{\varphi}\rnm^2_{\LS_I^2\left(S_{u,(-\epsilon_1u)^{1-\k}}\right)}\leq  \epsilon^{2-\delta/2}.$$
    For quantities with difference $\df{\cdot}$, we note that $\ol{\varphi}-\df{\varphi}=\ol{\varphi}_0$. The estimates follow the bound of $\df{\cdot}$ and the initial data construction. We therefore conclude \eqref{eq:R_I End Values 1}.

    To estimate $\LS_I^2(\Sigma)$ norm, we consider the Bianchi pair structure repeatedly used in the proof:
    \begin{equation}
        \begin{split}
            \Omega\nabla_3{\Phi_1}=\mathcal{D}{\Phi_2}+F,\quad \Omega^{-1}\nabla_4{\Phi_1}=-{}^*\mathcal{D}{\Phi_1}+G,
        \end{split}
    \end{equation}
    where $\mathcal{D}$ is a first order differential operator on $S_{u,v}$ and ${}^*\mathcal{D}$ is the Hodge dual operator. We use differentiation to convert the surface integration into bulk integration:
    \begin{equation}
        \begin{split}
            &(1-\k)z^{1+\k}\int_{-\frac{v^{1/(1-\k)}}{z}}^u \int_{\SS} (-u^\prime)^{1+a-2s(\Phi_1)}w\left|{\Phi_1}\right|^2(u^\prime,|zu^\prime|^{1-\k})\\
            =&\int_{|zu|^{1-\k}}^{v}\int_{\SS}\left(\frac{(v^\prime)^{1/(1-\k)}}{z}\right)^{1+a+\k-2s(\Phi_1)}w\left|{\Phi_1}\right|^2\left(-\left(\frac{(v^\prime)^{1/(1-\k)}}{z}\right),v^\prime\right)dv^\prime\\
            \sim &\int_{|zu|^{1-\k}}^{v}\int_{\SS}\left(\frac{(v^\prime)^{1/(1-\k)}}{z}\right)^{1+a-2s(\Phi_1)}\Omega^2w\left|{\Phi_1}\right|^2\left(-\left(\frac{(v^\prime)^{1/(1-\k)}}{z}\right),v^\prime\right)dv^\prime\\
            =&\int_{|zu|^{1-\k}}^{v}\int_{-\frac{v^{1/(1-\k)}}{z}}^{-\frac{(v^\prime)^{1/(1-\k)}}{z}}\int_{\SS}\partial_u\left((-u)^{1+a-2s(\Phi_1)}\Omega^2w\left|{\Phi_1}\right|^2\right)(u^\prime,v^\prime)du^\prime dv^\prime\\
            &+\int_{|zu|^{1-\k}}^{v}\int_{\SS}\left((-u)^{1+a-2s(\Phi_1)}\Omega^2w\left|{\Phi_1}\right|^2\right)\left(-\frac{v^{1/(1-\k)}}{z},v^\prime\right) dv^\prime.
        \end{split}
    \end{equation}
    For ${\Phi_2}$, we deduce that
    \begin{equation}
        \begin{split}
            & \int_{-\frac{v^{1/(1-\k)}}{z}}^u \int_{\SS} (-u^\prime)^{1+a-2s(\Phi_2)}w\left|{\Phi_2}\right|^2(u^\prime,|zu^\prime|^{1-\k})du^\prime\\
            =&\int_{-\frac{v^{1/(1-\k)}}{z}}^u\int_{|zu|^{1-\k}}^{|zu^\prime|^{1-\k}}\int_{\SS}\partial_v \left((-u)^{1+a-2s(\Phi_2)}w\left|{\Phi_2}\right|^2 \right)(u^\prime,v^\prime)dv^\prime du^\prime\\
            &+\int_{-\frac{v^{1/(1-\k)}}{z}}^u\int_{\SS}\left((-u)^{1+a-2s(\Phi_2)}w\left|{\Phi_2}\right|^2 \right)(u^\prime,|zu|^{1-\k})du^\prime\\
            =&\int_{|zu|^{1-\k}}^{v}\int_{-\frac{v^{1/(1-\k)}}{z}}^{-\frac{(v^\prime)^{1/(1-\k)}}{z}}\int_{\SS}\partial_v \left((-u)^{1+a-2s(\Phi_2)}w\left|{\Phi_2}\right|^2 \right)(u^\prime,v^\prime) du^\prime dv^\prime\\
            &+\int_{-\frac{v^{1/(1-\k)}}{z}}^u\int_{\SS}\left((-u)^{1+a-2s(\Phi_2)}w\left|{\Phi_2}\right|^2 \right)(u^\prime,|zu|^{1-\k})du^\prime.
        \end{split}
    \end{equation}
Using the above identities, the following inequalities holds 
\begin{equation}
    \begin{split}
        &\int_{|zu|^{1-\k}}^{v}\int_{-\frac{v^{1/(1-\k)}}{z}}^{-\frac{(v^\prime)^{1/(1-\k)}}{z}}\int_{\SS}\partial_u\left((-u)^{1+a-2s(\Phi_1)}\Omega^2w\left|{\Phi_1}\right|^2\right)(u^\prime,v^\prime)\\
        &\qquad\qquad\qquad +\partial_v \left((-u)^{1+a-2s(\Phi_2)}w\left|{\Phi_2}\right|^2 \right)(u^\prime,v^\prime)\\
        \lesssim & \int_{|zu|^{1-\k}}^{v}\int_{-\frac{v^{1/(1-\k)}}{z}}^{-\frac{(v^\prime)^{1/(1-\k)}}{z}}\int_{\SS} \left((-u^\prime)^{1+a-2s(\Phi_1)}\Omega^2w\left|{\Phi_1}\right|^2\right)\dv b+C (-u^\prime)^{a-2s(\Phi_1)} w\Omega^2\left|{\Phi_1}\right|^2\\
        &\quad\quad\quad\quad\quad\quad +(-u^\prime)^{a-2s(F_1)} w\Omega^2\left|{F}\right|^2+(-u^\prime)^{a-2s(F_2)} w\Omega^2\left|{G}\right|^2\quad du^\prime dv^\prime.
    \end{split}
\end{equation}
The bulk integration can be deduced from energy estimate along $H_u^v$ or $\Hb_v^u$. Specificly, if we have either 
$\lnm \Phi\rnm_{\LS_I^2(H_u^v)}\leq\epsilon$ or $\lnm \Phi\rnm_{\LS_I^2(\Hb_v^u)}\leq\epsilon$, then the bulk integration satisfies either
\begin{equation}
    \begin{split}
        &\int_{|zu|^{1-\k}}^{v}\int_{-\frac{v^{1/(1-\k)}}{z}}^{-\frac{(v^\prime)^{1/(1-\k)}}{z}}\int_{\SS}(-u^\prime)^{a-2s(\Phi)} w\Omega^2\left|{\Phi}\right|^2\\
        \leq & C(z)\int_{-\frac{v^{1/(1-\k)}}{z}}^u{(-u^\prime)^{a-1}}\lnm {\Phi}\rnm^2_{\LS_I^2\left(H_{u^\prime}^{|zu^\prime|^{1-\k}}\right)}\\
        \leq & C(z)\epsilon^2\int_{|zu|^{1-\k}}^v {(v^\prime)}^{\frac{a}{1-\k}-1}dv^\prime \leq  C(z)\epsilon^2 v^{\frac{a}{1-\k}}\ll \epsilon^{2-\delta/2}v^{\frac{a}{1-\k}},
    \end{split}
\end{equation}
or
\begin{equation}
    \begin{split}
        &\int_{|zu|^{1-\k}}^{v}\int_{-\frac{v^{1/(1-\k)}}{z}}^{-\frac{(v^\prime)^{1/(1-\k)}}{z}}\int_{\SS} (-u^\prime)^{a-2s(\Phi)} w\Omega^2\left|{\Phi}\right|^2\\
        \leq & C(z) \int_{|zu|^{1-\k}}^v{(v^\prime)^{\frac{a}{1-\k}-1}}\lnm {\Phi}\rnm^2_{\LS_I^2\left(\Hb_{v^\prime}^{-\frac{{(v^\prime)}^{1/(1-\k)}}{z}}\right)}\\
        \leq & C(z)\epsilon^2\int_{|zu|^{1-\k}}^v {(v^\prime)}^{\frac{a}{1-\k}-1}dv^\prime \leq  C(z)\epsilon^2 v^{\frac{a}{1-\k}}\ll \epsilon^{2-\delta/2}v^{\frac{a}{1-\k}}.
    \end{split}
\end{equation}
We are now ready to prove \eqref{eq:R_I End Values 2}, especially the top order estimates. 
For $(\Psi_1,\Psi_2)$ taking values in below Bianchi pairs, 
$$({\Omega^{-1}\beta^r},({\sigma^r},{K})),(({\sigma^r},{K}),{\Omega\betab^r}),({\Omega\betab^r},{\Omega^2\alphab}),(\nabla(\Omega^{-1}e_4\phi),{\nabla^2 \phi}),$$
$$(\nabla^2\phi,{\nabla(\Omega e_3\phi)}),(\nabla(\Omega\tr\chib),0),(0,{\nabla(\Omega^{-1}\tr\chi)}),({\Omega^{-1}\omega^\dagger},0), (0,{\Omega\omega^\dagger}),({\mub},0),(0,{\mu}),$$
their equations can be written as 
\begin{equation*}
    \Omega\nabla_3\Psi_1-\mathcal{D}\left[\Psi_2\right]_0^v\sim {\varphi^3} +{\varphi\Phi}+\nabla\Phi_0,\, 
\end{equation*}
and 
\begin{equation*}
    \begin{aligned}
        \Omega^{-1}\nabla_4\left[\Psi_2\right]_0^v+{}^*\mathcal{D}\ol{\Psi_1}\sim &{\varphi^3} +{\varphi\Phi}.
    \end{aligned}
\end{equation*}
Here we use $\varphi$ to represent all Ricci coefficients and first order derivatives of scalar field and $\Phi$ to represent $\nabla\varphi$ and curvature components except $\Omega^{-2}\alpha$. Besides, we recall that $\Phi_0$ means $\Phi|_{v=0}$.
We have equations for the top-order terms
\begin{equation*}
    \begin{aligned}
        &\Omega\nabla_3\nabla^5\Psi_1-\mathcal{D}\nabla^5\left[\Psi_2\right]_0^v\sim  \nabla^5(\varphi^3)+\nabla^5(\varphi\Phi)+\nabla^6\Phi_0+\sum_{i_1+i_2+i_3=4}\nabla^{i_1}K^{i_2+1}\nabla^{i_3}\df{\Psi_2},\\
        &\Omega^{-1}\nabla_4\nabla^5\df{\Psi_2}+{}^*\mathcal{D}\nabla^5\Psi_1\sim \nabla^5(\varphi^3)+\sum_{i_1+i_2+i_3=4} \nabla^{i_1}\varphi^{i_2+1}\nabla^{i_3}\Phi\\
        &\qquad\qquad\qquad\qquad\qquad\qquad\qquad+\sum_{i_1+i_2+i_3=4}\nabla^{i_1}K^{i_2+1}\nabla^{i_3}\ol{\Psi_1}.
    \end{aligned}
\end{equation*}
We prove that for every term on the right hand side satisfies that its $\LS_I^2(H_u^v)$ norm or $\LS_I^2(\Hb_v^u)$ norm is bounded by $\epsilon$. For non-top-order terms such as $\nabla^5(\varphi^3)$, initial value term $\nabla^6\Phi_0$, and $\nabla^{i_1}K^{i_2+1}\nabla^{i_3}(\ol{\Psi_1},\df{\Psi_2})$, we can use their spherical estimates to obtain that
\[\lnm l.o.t.\rnm^2_{\LS^2_1(S_{u,v})}\lesssim\epsilon^2.\]
Here we use $l.o.t.$ to represent these terms. We can thus estimate
\begin{equation}\label{eq:tmp 5-203}
    \begin{aligned}
        \lnm l.o.t.\rnm^2_{\LS^2_I(H_u^v)}\lesssim\epsilon^2\int_0^v\frac{1}{(-u)^{1-\k}}\left(\frac{v^\prime}{(-u)^{1-\k}}\right)^{2\delta-1}dv^\prime\lesssim\epsilon^2\z^{2\delta}\ll\epsilon^2.
    \end{aligned}
\end{equation}
If we decompose the top-order term $\varphi\nabla^5\Phi$ into $\varphi\nabla^5\ol\Phi_0$ and $\varphi\nabla^5\df{\Phi}$ (or $\varphi\nabla^5\ot\Phi$ and $\varphi\nabla^5\wt{\Phi}$), the first term can be dealt in the same way as \eqref{eq:tmp 5-203} and the difference term satisfies at least one of the following estimates
\[\lnm \nabla^6\df{\Phi}\rnm^2_{\LS_I^2(H_u^v)}\ll\epsilon^2,\,\lnm \nabla^6\df{\Phi}\rnm^2_{\LS_I^2(\Hb_v^u)}\ll\epsilon^2,\]
\[\lnm \nabla^6\wt{\Phi}\rnm^2_{\LS_I^2(H_u^v)}\ll\epsilon^2,\,\lnm \nabla^6\wt{\Phi}\rnm^2_{\LS_I^2(\Hb_v^u)}\ll\epsilon^2.\]
We thus arrive at \eqref{eq:R_I End Values 2} for these quantities.

The special pair is $\wt{\Omega^{-2}\alpha}$ and $\wt{\Omega^{-1}\beta}$. Recall the equations \eqref{eq:R_I_nab1_alpha_beta} and \eqref{eq:R_I_nab6_alpha_beta}. We need to deal with some extra terms:
\[\nabla^5\left[\left(\Lie_{\df{b}},\df{\Omega\chib},\df{\Omega\omegab},\df{\eta},\df{\etab},\Omega^{-2}\left[\Omega^2\dv\right]_0^v\right)\ot{\Omega^{-2}\alpha}\right],\,\nabla^6\ot{\Omega^{-1}\beta}.\]
Because $\nabla^6\ot{\Omega^{-1}\beta}$ has $ {\LS^2_1(S_{u,v})}$ norm of size $O(\epsilon)$, this term can be controlled via \eqref{eq:tmp 5-203}.
With the difference $\left[\cdot\right]_0^v$ which contributes to extra $\z$, the other term can be bounded via
\[\int_0^v\frac{1}{(-u)^{1-\k}}\left(\frac{v^\prime}{(-u)^{1-\k}}\right)^{2\delta-1}\left(\frac{v^\prime}{(-u)^{1-\k}}\right)^{2}\cdot\epsilon^2\left(\frac{v^\prime}{(-u)^{1-\k}}\right)^{2\k/(1-\k)-2\delta-2}\ll\epsilon^2.\]
The three $\z$ terms come from the weight function, $\left[\cdot\right]_0^v$ difference, and $\ot{\Omega^{-2}\alpha}$ separately. We can thus obtain \eqref{eq:R_I End Values 2} for $\Omega^{-2}\alpha$ and $\Omega^{-1}\beta$.

\end{proof}

\section{Estimates in Region II}\label{Estimates in Region II}

\subsection{From region I to II}
From now on, let $(U,V):=\left(-|u|^{1-\k},v^{\frac{1}{1-\k}}\right)$ and self-similar parameter $z:=V/(-U)^{1/(1-\k)},$
$$\mathcal{R}_{II}:=\left\{-1\leq U\leq 0,\epsilon_1\leq \frac{V}{(-U)^{1/(1-\k)}}\leq \epsilon_1^{-1}\right\}.$$ 
For a given $(U,V)$-convex domain, let $U_0(\widetilde{V})$ denote the intersection of the hypersurface $V=\widetilde{V}$ with the past boundary, and similarly let $V_0(\widetilde{U})$ denote the intersection of $U=\widetilde{U}$ with the past boundary.
We use the notation $^{(uv)}\psi, ^{(uv)}\Psi, ^{(uv)}\Upsilon$ to denote tensors evaluated in the old gauge, while $\psi, \Psi, \Upsilon$ refer to the corresponding quantities expressed in the $(U,V)$ coordinates.

We first derive some properties for Christodoulou's soluition.
\begin{lemma}
    In ${U\in(-1,0),V\in(0,\infty)}\setminus\mathcal{R}_{I}$, the following estimates hold.
    \[(\Omega^2)^{(c)}\sim V^{\k}(1+z^{-1-\k}),\]
    $$|(\Omega^{-1}\chib)^{(c)},(\Omega\chi)^{(c)},(\Omega^{-1}\omegab)^{(c)}|_g\lesssim V^{-1},\quad |(\Omega\omega)^{(c)}|_g\lesssim V^{-1},\quad |\rho^{(c)},\sigma^{(c)}|_g\lesssim V^{-2}.$$
    We write $A\lesssim_x B$ if there exists a constant $C(x)>0$ such that $A\leq C(x)B$. We say $A\sim_x B$ if and only if both $A\lesssim_x B$ and $B\lesssim_x A$ hold. The notations $A\lesssim B$ and $A\sim B$ are reserved for cases where the implicit constants are strictly universal.
\end{lemma}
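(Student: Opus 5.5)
The plan is to derive these bounds directly from the self-similar structure of Christodoulou's solution together with the boundary asymptotics recorded in Theorem \ref{thm:background_properties}. Every background quantity in the new gauge $(U,V)$ has the form $V^{p}f(z)$, where $p$ is fixed by the $V$-signature and $f$ depends only on $z=V/(-U)^{1/(1-\k)}$. The whole task therefore reduces to showing that each profile $f$ is bounded, and bounded below in the case of $\Omega^2$, on the complement of $\mathcal{R}_I$, namely on $z\in[\epsilon_1,\infty]$.

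The first step is to record the gauge change from $(u,v)$ to $(U,V)$, in analogy with the lemma at the start of Section \ref{Estimates in Region I}. Since $U=-(-u)^{1-\k}$ and $V=v$, the identities are $\Omega^2=(1-\k)^{-1}(-u)^{\k}\,{}^{(uv)}\Omega^2$, $\Omega^{-1}\chib={}^{(uv)}\Omega^{-1}\chib$ and $\Omega\chi={}^{(uv)}\Omega\chi$. The lapse-weighted $\omega$ quantities pick up the usual $\k$-shift, of the form $\Omega^{-1}\omegab=\bigl({}^{(uv)}(-u)\Omega\omegab-\k/4\bigr)/\bigl((-u)\Omega^2\bigr)$ up to constants. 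The curvature components $\rho,\sigma$ are gauge invariant.

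The second step applies homogeneity. Self-similarity $\mathcal{L}_Sg=2g$ gives ${}^{(uv)}\Omega^2=f_\Omega(z)$ with $z$ measured via $v/(-u)$, and $r=(-u)\check r(z)$. Consequently each Ricci coefficient scales like $V^{-1}$ times a function of $z$, and $\rho,\sigma$ scale like $V^{-2}$ times a function of $z$. Continuity of the profiles on a compact interval $z\in[\epsilon_1,\epsilon_1^{-1}]$ then gives boundedness away from the endpoints.

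The third step treats the limit $z\to\infty$, i.e. $u\to0$. Here I will use item (4) of Theorem \ref{thm:background_properties}: $(-u)^\k\Omega^2\to v^\k$, $\Omega\tr\chi\sim 2/v$, $\Omega^{-1}\tr\chib\sim -2/((1+\k)v)$, $K\sim v^{-2}$, together with the stated limit of $\Omega^{-1}D_3\log((-u)^\k\Omega^2)$. Together these give bounded profiles, with $\Omega^2\sim V^\k$ in the new gauge, consistent with $V^\k(1+z^{-1-\k})\to V^\k$. The $\k/4$ shift in $\Omega^{-1}\omegab$ is precisely what cancels the singular part. The Hawking mass relation then controls $\rho$ through $K$ and $\tr\chi\tr\chib$, while $\sigma^c\equiv0$ by spherical symmetry.

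The fourth step handles the lower endpoint, where $z$ is near $\epsilon_1$ and $v/(-u)$ is small. Item (3) of Theorem \ref{thm:background_properties} gives $v^\k\,{}^{(uv)}\Omega^2\to(-u)^\k$. After the rescaling this yields $\Omega^2\sim(-u)^{\k}v^{-\k}(-u)^{\k}\cdot(\ldots)$, which in terms of $V,z$ is $V^\k z^{-1-\k}$ up to constants. This is the source of the $z^{-1-\k}$ term. The remaining quantities are bounded by $(-u)^{-1}\lesssim_{\epsilon_1} V^{-1}$.

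I expect the main obstacle to be matching the exponents of the lapse in the two asymptotic regimes, which means checking carefully the power $1+z^{-1-\k}$ under the conversion between the two gauges. I will do this by explicit substitution, using the appendix formulas for Christodoulou's solution in double null gauge, Appendix \ref{Appendix_Chr94}.
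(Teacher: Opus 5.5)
Your strategy is the same as the paper's. You use self-similarity to write each background quantity in the $(U,V)$ gauge as $V^{p}f(z)$, bound the profiles by continuity on compact $z$-ranges, and read off the two ends from items~(3) and~(4) of Theorem~\ref{thm:background_properties}. The paper tabulates the transformation rules under $(u,v)=(F(U),G(V))$, evaluates along $z=\epsilon_1$ using the $v=0$ asymptotics, and defers $z\geq 1$ to Appendix~\ref{Appendix_Chr94}.

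However, the step you flagged as the main obstacle fails as written. In your fourth step, $(-u)^{\kappa}v^{-\kappa}(-u)^{\kappa}=(-u)^{2\kappa}v^{-\kappa}$. Since $-u=V/z$ and $v\simeq V$, this is $V^{\kappa}z^{-2\kappa}$, not $V^{\kappa}z^{-1-\kappa}$. What your argument actually proves is $(\Omega^2)^{(c)}\sim V^{\kappa}(1+z^{-2\kappa})$ on $\{z\geq\epsilon_1\}$. At $z=\epsilon_1$ the ratio $(1+z^{-1-\kappa})/(1+z^{-2\kappa})$ is of order $\epsilon_1^{\kappa-1}$. So the stated form with $z^{-1-\kappa}$ cannot hold with universal constants, which is the lemma's own convention for $\sim$. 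It holds only with $\epsilon_1$-dependent constants, and then it says no more than $(\Omega^2)^{(c)}\sim_{\epsilon_1}V^{\kappa}$. That is the statement you should prove, and it is all that is used later, through $S(\Omega^2)=1+\kappa$.

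The paper's proof is no sharper here. It records $(\Omega^2)^{(c)}\sim|u|^{\kappa}\sim\epsilon_1^{-\kappa}V^{\kappa}$ on $z=\epsilon_1$, dropping the factor $F'G'=(-u)^{\kappa}V^{-\kappa}=z^{-\kappa}$. With that factor one gets $\epsilon_1^{-2\kappa}V^{\kappa}$, consistent with $z^{-2\kappa}$. Either way, only the $\epsilon_1$-dependent comparison with $V^{\kappa}$ is established or needed.

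There are two smaller slips.
\begin{itemize}
\item The $\kappa/4$ shift has the wrong sign. The correct formula is $\Omega^{-1}\underline{\omega}=\bigl({}^{(uv)}(-u)\Omega\underline{\omega}+\kappa/4\bigr)/\bigl((-u)\,{}^{(uv)}\Omega^2\bigr)=-\tfrac14\Omega^{-1}D_3\log\bigl((-u)^{\kappa}\Omega^2\bigr)$. By self-similarity $(-u)\Omega\underline{\omega}=v\Omega\omega\to-\kappa/4$ as $u\to0$. With your minus sign the numerator tends to $-\kappa/2$ while the denominator behaves like $(-u)^{1-\kappa}v^{\kappa}\to0$, so nothing cancels. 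With the plus sign, item~(4) gives the limit $-1/(2(1+\kappa)v)$ directly.
\item For $\rho$, the Gauss equation in spherical symmetry reads $\rho=-K-\tfrac14\tr\chi\tr\chib+\tfrac16D_3\phi D_4\phi$. You therefore also need the scalar term $(\Omega^{-1}D_3\phi)(\Omega D_4\phi)\lesssim V^{-2}$. This follows from the scalar-field limits in items~(3) and~(4).
\end{itemize}
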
 
\begin{proof}
    The precise behavior of these geometric quantities in the regime $1\leq z<\infty$ is explicitly derived from Christodoulou's exact background solution, details of which will be further discussed in Appendix B. Next, we rigorously compute their limiting behavior near the specific hypersurface $V^{1-\k}/(-U)=\epsilon_1$.
    Write $(u,v)=(F(U),G(V))=(-|U|^{\frac{1}{1-\k}},V^{1-\k})$, then we find that 
\[e_3=\sqrt{\frac{F^\prime}{G^\prime}} \cdot {}^{(uv)}e_3,\quad e_4=\sqrt{\frac{G^\prime}{F^\prime}}\cdot ^{(uv)}e_4,\quad \Omega^2=F^\prime G^\prime \cdot^{(uv)}\Omega^2=\epsilon_1^{-1}\cdot ^{(uv)}\Omega^2,\]
\[\chib_{AB}=\sqrt{\frac{F^\prime}{G^\prime}}\cdot^{(uv)}\chib_{AB},\quad \chi_{AB}=\sqrt{\frac{G^\prime}{F^\prime}}\cdot^{(uv)}\chi_{AB},\]
\[\eta_A=^{(uv)}\eta_A,\quad \etab_A=^{(uv)}\etab_A,\]
\[\Omega\omega=G^\prime \cdot^{(uv)}(\Omega\omega)-\frac{G^{\prime\prime}}{4G^\prime},\quad \Omega\omegab=F^\prime \cdot^{(uv)}(\Omega\omegab)-\frac{F^{\prime\prime}}{4F^\prime},\]
\[\alpha_{AB}=\frac{G^\prime}{F^\prime}\cdot^{(uv)}\alpha_{AB},\quad \alphab_{AB}=\frac{F^\prime}{G^\prime}\cdot^{(uv)}\alphab_{AB},\]
\[\Omega\beta_A=\sqrt{\frac{G^\prime}{F^\prime}}\cdot ^{(uv)}(\Omega\beta_A),\quad \Omega\betab_A=\sqrt{\frac{F^\prime}{G^\prime}}\cdot ^{(uv)}(\Omega\betab_A),\]
\[\sigma=^{(uv)}\sigma,\quad \rho=^{(uv)}\rho.\]
Along $V/(-U)^{1/(1-\k)}=\epsilon_1$, we deduce that 
\[(\Omega^2)^{(c)}\sim |u|^\k\sim \epsilon_1^{-\k}V^{\k},\]
\[|\Omega^{-1}\chib^{(c)}|_g=\frac{1}{G^\prime}|^{(uv)}(\Omega^{-1}\chib)^{(c)}|_g\sim V^\k |u|^{-1-\k}\sim V^{-1}\epsilon_1^{1+\k},\]
\[|\Omega\chi^{(c)}|_g=G^\prime|^{(uv)}(\Omega\chi)^{(c)}|_g\sim V^{-\k}|u|^{-1+\k}\sim \epsilon_1^{1-\k} V^{-1},\]
\[|\Omega^{-1}\omegab^{(c)}|\sim \epsilon_1^{1+\k}V^{-\k}\left(-\frac{1}{1-\k}|U|^{\frac{\k}{1-\k}}|u|^{-1}+\frac{\k}{4(1-\k)}\frac{1}{-U}\right)\sim \epsilon_1^{2}V^{-1},\]
\[|(\Omega\omega)^{(c)}|\lesssim (1-\k)V^{-\k}|u|^{\k-1}+\frac{\k}{V}\lesssim V^{-1},\]
\[|\rho^{(c)},\sigma^{(c)}|\sim |u|^{-2}\sim\epsilon_1^{1+\k}V^{-2}.\]
\end{proof}

Based on the properties, we can define the $V$-signatures in region II and III.
\begin{definition}
    We define the $V$-signature $S(\psi)$ as follows:
    \[S(\Omega^2)= S(b)=1+\k,\quad S(g)=S(\phi)=1,\]
    \[S(\psi)=0,\quad \psi=\Omega^{-1}\tr\chib,\Omega^{-1}\chibh,\Omega\tr\chi,\Omega\chih,\Omega^{-1}\omegab,\Omega\omega,\eta,\etab,\]
    \[S(\Psi)=-1,\quad \Psi=\Omega^2\alpha,\Omega^{-2}\alphab,\Omega\beta^r,\Omega^{-1}\betab^r,K,\sigma^r.\]
    The signature $S$ then automatically satisfies the following simple arithmetic rules:
    \[S(\psi_1\psi_2)=S(\psi_1)+S(\psi_2)-1,\quad S(\Omega^{-1}\nabla_3\psi)=S(\Omega\nabla_4\psi)=S(\nabla\psi)=S(\psi)-1.\]
    We also define $S(V^a(-U)^b \psi)=S(\psi)+a+(1-\k)b$ and especially $S(z^a\psi)=S(\psi)$.
\end{definition}
\begin{remark}
    Note that $\Omega e_3=F^\prime {}^{(uv)}(\Omega e_3)$. We thus have $b\sim (-u)^{\k}\left({}^{(uv)}b\right)$ and $\left|b\right|\lesssim_{\epsilon_1} V^\k$ along $z=\epsilon_1$.
\end{remark}
We next introduce some minor adjustments to the norm system to properly accommodate the coordinate gauge change. Define:
\begin{equation}\label{R II norm sys}
    \begin{split}
        &\|\psi\|_{\LS_W^2(S_{U,V})}^2:=\int_{\SS}|\psi|^2V^{2-2S(\psi)}W(U,V)d\V,\\
        &\lnm \psi\rnm_{\LS_W^\infty(S_{u,v})}:=\sup_{S_{u,v}}\left|\psi\right|\cdot V^{1-S(\psi)}W(U,V)^{\frac{1}{2}},\\
        &\|\Psi\|_{\LS_W^2(H_U^V)}^2:=\int_{V_0(U)}^V\int_{\SS}|\Psi|^2(V^\prime)^{1-2S(\Psi)}W(U,V^\prime)d\V dV^\prime,\\
        &\|\Psi\|_{\LS_W^2(\Hb_V^U)}^2:=\int_{U_0(V)}^U\int_{\SS}|\Psi|^2V^{1-2S(\Psi)}W(U^\prime,V)\Omega^2 d\V dU^\prime,\\
        &\|{\Phi}\|^2_{\LS_W^2(\Sigma^z_{V})}:=\int_{0}^V\frac{1}{V^\prime}\|{\Phi}\|^2_{\LS_W^2\left(S_{-(V^\prime z^{-1})^{1-\k},V^\prime}\right)}dV^\prime,\\
        &\|\Psi\|_{\LS_W^2(\mathcal{R}_{U,V})}^2:=\int_{U_0(V)}^U\int_{V_0(U^\prime)}^V\int_{\SS}|\Psi|^2(V^\prime)^{-2S(\Psi)}\Omega^2W(U^\prime,V^\prime)d\V dV^\prime dU^\prime.
    \end{split}
\end{equation}
In particular, we define the convenient shorthand $\|\wt{\Phi}\|^2_{\LS_W^2(\Sigma^z)}=\|\wt{\Phi}\|^2_{\LS_W^2(\Sigma^z_{0,z})}.$
We use weight $W_{II}$ in region II,  
\begin{equation}
    W_{II}(U,V)=V^a\exp\left(-\frac{V^{1-\k}}{-U}D\right)=V^a W_0(U,V),
\end{equation}
where $D$ is a suitably large parameter to be determined later, and $a$ can be chosen as any fixed, strictly positive number.
We denote the norm system with weight $W_{II}$ by $\LSS$, and trivial weight $W=1$ by $\LS_1$.
Throughout this specific section, we will primarily focus on analyzing the fundamental difference quantities $\ol{\Phi}=\Phi-\Phi^c$.

The estimates in region I imply the initial data estimates in region II.
\begin{lemma}
    For the initial values of $\mathcal{R}_{II}$, the following estimates hold for $\epsilon=\epsilon(\epsilon_1)$ sufficiently small,
    \begin{equation}
        \sum_{i\leq 5}\lnm \nabla^i\ol{\psi},\nabla^i\ol{D\phi} \rnm^2_{\LSS^2\left(S_{U_0(V),V}\right)}+\lnm \nabla^i\ol{\Psi},\nabla^{i+1}\ol{D\phi},\nabla^{i+1}\ol\psi\rnm^2_{\LSS^2(\Sigma^{\epsilon_1}_{V})}\leq \epsilon^{2-\delta}V^a.
    \end{equation}
     Here 
    $\psi=\Omega^{-1}\chib,\Omega\chi,\Omega^{-1}\omegab,\Omega\omega,\eta,\etab$, $\Psi=\Omega^{2}\alpha,\Omega^{-2}\alphab,\Omega\beta^r,\Omega^{-1}\betab^r,\sigma^r,K^r$,\\ and $D\phi=\Omega^{-1}D_3\phi, \Omega D_4\phi,\nabla\phi$.
\end{lemma}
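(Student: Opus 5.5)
The plan is to obtain the lemma as a direct translation of Corollary \ref{R_I End Values} from the Region I gauge $(u,v)$, with weight $w_I$ and signature $s$, into the Region II gauge $(U,V)$, with weight $W_{II}$ and $V$-signature $S$, along the interface $z=\epsilon_1$. No new dynamical estimate is needed. The work is bookkeeping of the gauge transformation and of the relative size of all weights on a single self-similar hypersurface.

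\emph{Step 1: sphere norms.} I would use the transformation rules recorded in the preceding lemma, with $F(U)=-|U|^{1/(1-\k)}$ and $G(V)=V^{1-\k}$. On $z=\epsilon_1$, each of $F'$, $G'$, $F''/F'$ and $G''/G'$ is a power of $V$ times a constant depending only on $(\epsilon_1,\k)$. The relations then behave as follows.
\begin{itemize}
\item For $\psi=\Omega^{-1}\chib,\Omega\chi,\eta,\etab$ and for $\Psi$, the relations are purely multiplicative, so $\ol{\psi}=c(\epsilon_1,\k)V^{p}\,\ol{{}^{(uv)}\psi}$.
\item The inhomogeneous terms in $\Omega\omega$ and $\Omega\omegab$ are exactly the same for our solution and for Christodoulou's, so they cancel in $\ol{\cdot}$.
\item On $z=\epsilon_1$ we have $-u\sim_{\epsilon_1}V$, so the $s$-weight $(-u)^{2-2s}$ becomes $V^{2-2S}$, up to $\epsilon_1$-constants. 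The conversion between the two gauges makes the exponents match; this is the consistency built into the definition of $S$.
\item The factor $W_0=\exp(-\epsilon_1^{1-\k}D)$ is a constant, and $w_I=\epsilon_1^{-(1-\k)(1-2\delta)}$ is a constant.
\end{itemize}
Combining these, the first bound of Corollary \ref{R_I End Values} gives
\[
\sum_{i\le5}\lnm\nabla^i\ol\psi,\nabla^i\ol{D\phi}\rnm^2_{\LSS^2(S_{U_0(V),V})}\le C(\epsilon_1,\k,D)\,\epsilon^{2-\delta/2}V^a .
\]
The factor $V^a$ comes directly from $W_{II}$.

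\emph{Step 2: the $\Sigma^{\epsilon_1}_V$ norms.} The $\Sigma$ integral in the $V'$ variable, $\int_0^V\frac{dV'}{V'}$, equals a constant multiple of the Region I integral $\int\frac{du'}{-u'}$ along $z=\epsilon_1$. The integrand carries the extra weight $V'^a\sim_{\epsilon_1}(-u')^{a(1-\k)}$. The second estimate of Corollary \ref{R_I End Values} was stated for an arbitrary $a>0$; I would apply it with exponent $a(1-\k)$ in place of $a$. It then yields exactly
\[
C(\epsilon_1)\,\epsilon^{2-\delta/2}V^{a}.
\]
This applies to $\nabla^i\ol\Psi$, $\nabla^{i+1}\ol{D\phi}$ and $\nabla^{i+1}\ol\psi$, since these are among the $\Phi$ listed there. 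The $\alpha$-type components need care: $\Omega^2\alpha$ in the new gauge is a $V$-power times $\Omega^{-2}\alpha$ in the old gauge, up to constant factors, so it is again covered.

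\emph{Step 3: absorbing constants.} Every constant $C(\epsilon_1,\k,D)$ is absorbed by choosing $\epsilon=\epsilon(\epsilon_1)$ small, using
\[
C\epsilon^{2-\delta/2}\le\epsilon^{2-\delta}.
\]
For this to work, $D$ must be fixed as a function of $\epsilon_1$ before $\epsilon$ is chosen.

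\emph{Main obstacle.} The hardest part is the derivative bookkeeping for the $\omega$-type coefficients under the gauge change. The additive terms $-G''/(4G')$ and $-F''/(4F')$ cancel in $\ol{\cdot}$, but only because both solutions are expressed in the same coordinates $(U,V)$. In addition, the angular derivatives $\nabla^i$ commute with the gauge change, since $g$ and $\theta$ are unchanged. I would check these two points explicitly; they are what makes Step 1 a genuine identity rather than an estimate.
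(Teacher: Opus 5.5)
Your route is the paper's: its proof is the one line ``apply Corollary~\ref{R_I End Values} on $z=\epsilon_1$, where $V=\epsilon_1(-u)$'', and your handling of $W_0$, $w_I$ and the sphere weights is right. However, Step~1 is not a genuine identity, and treating it as one hides a missing ingredient. Corollary~\ref{R_I End Values} controls the Region~I normalizations $\Omega\chib,\Omega^{-1}\chi,\Omega\omegab,\Omega^{-1}\omega,\Omega D_3\phi,\Omega^{-1}D_4\phi,\Omega^{-2}\alpha,\Omega^{2}\alphab,\Omega^{-1}\beta^r,\Omega\betab^r$. The lemma asks for the opposite ones, $\Omega^{-1}\chib,\Omega\chi,\Omega^{-1}\omegab,\Omega\omega,\Omega^{-1}D_3\phi,\Omega D_4\phi,\Omega^{2}\alpha,\Omega^{-2}\alphab,\Omega\beta^r,\Omega^{-1}\betab^r$. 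The transformation rules give, for example, $\Omega^{-1}\chib=(G')^{-1}({}^{(uv)}\Omega)^{-2}\,{}^{(uv)}(\Omega\chib)$ and $\Omega^{2}\alpha=(G')^{2}({}^{(uv)}\Omega)^{4}\,{}^{(uv)}(\Omega^{-2}\alpha)$. These lapse factors depend on $\theta$ and differ from the background, so taking $\ol{\cdot}$ produces extra terms such as $(G')^{-1}\,\ol{({}^{(uv)}\Omega)^{-2}}\;{}^{(uv)}(\Omega\chib)$. In particular, your claim that $\Omega^2\alpha$ is a $V$-power times the old $\Omega^{-2}\alpha$ ``up to constant factors'' is false. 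Likewise, $-F''/(4F')$ cancels in $\ol{\Omega\omegab}$ but not in $\ol{\Omega^{-1}\omegab}=\ol{\Omega^{-2}\cdot\Omega\omegab}$, where it leaves $-\frac{F''}{4F'}\,\ol{\Omega^{-2}}$. The fix is to bring in the Region~I lapse bound, namely the $H^6$ control of $\log(\Omega/\Omega^c)$ on each sphere obtained in the proof of Proposition~\ref{R_I_Estimate_metric}, and then expand with Lemma~\ref{Product_rule_for_difference}. Six derivatives are needed because of the terms $\nabla^{i+1}\ol\psi$ in the $\Sigma^{\epsilon_1}_V$ norm. For the top-order $\alpha$ term, put the full $H^6$ norm on the lapse difference, so that you only use the $\Sigma$-integrated $H^5$ control of $\Omega^{-2}\alpha$ that the corollary supplies.

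Step~2 also contains an exponent slip. On $z=\epsilon_1$ you have $V'=\epsilon_1(-u')$, as you say in Step~1, so $V'^{a}=\epsilon_1^{a}(-u')^{a}$, not $(-u')^{a(1-\k)}$. Since $dV'/V'=d(-u')/(-u')$, after the pointwise gauge conversion the Region~II integrand $V'^{1+a-2S}|\Phi|^2\,dV'$ matches the Region~I integrand $(-u')^{1+a-2s}|\Phi|^2\,du'$ of Corollary~\ref{R_I End Values} with the \emph{same} $a$. Because the Region~I coordinate is $v=V^{1-\k}$, the corollary's right side $v^{a/(1-\k)}$ is then exactly $V^{a}$. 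Applying the corollary with $a(1-\k)$ instead, as you propose, yields only $V^{a(1-\k)}$, which is not $\lesssim V^{a}$ as $V\to0$ along the interface. Two smaller remarks. First, the $D$-dependence you worry about in Step~3 is illusory, since $W_0\le 1$; this matches the statement $\epsilon=\epsilon(\epsilon_1)$. Second, for $V\in[\epsilon_1,\epsilon_1^{-1}]$ the sphere $S_{U_0(V),V}$ lies on $H_{-1}$, not on $z=\epsilon_1$, so the corollary does not reach it. There you must use the prescribed outgoing data and integrate the $\nabla_4$ null structure and wave equations along $H_{-1}$ from $V=\epsilon_1$; the paper's one-line proof passes over this as well.
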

\begin{proof}
    Directly use Corollary \ref{R_I End Values} and the fact $V=\epsilon_1(-u)$. 
\end{proof}

\subsection{Bootstrap assumptions for Region II} We now state the bootstrap assumption with constant $B$ to be improved.
Assume that 
\begin{equation}\label{RII_Bootstrap_Assumptions}
    \begin{aligned}
        \sum_{i=0}^5\lnm \nabla^{i}\ol{\Phi_1}\rnm^2_{\LSS^2\left(H_U^V\right)}+\lnm \nabla^{i}\ol{\Phi_2}\rnm^2_{\LSS^2\left(\Hb_V^U\right)}\leq B\epsilon^{2-\delta}V^a \cdot W_0(U,V)^{-1}\ll  \epsilon V^a,
    \end{aligned}
\end{equation}
where $\Phi_1$ takes value in 
$$\nabla (\Omega^{-1}\chib),\nabla(\Omega\chi),\nabla(\Omega\omega),\nabla\etab,\nabla\eta, \Omega^2\alpha,\Omega\beta^r,\Omega^{-1}\betab^r,\sigma^r,K^r, \nabla^2\phi,\nabla (\Omega D_4\phi),$$
and $\Phi_2$ takes value in $$\nabla(\Omega^{-1}\chib),\nabla(\Omega\chi),\nabla(\Omega^{-1}\omegab),\nabla\etab,\nabla\eta, \Omega^{-2}\alphab,\Omega\beta^r,\Omega^{-1}\betab^r,\sigma^r,K^r, \nabla^2\phi,\nabla(\Omega^{-1} D_3\phi).$$ 
For the estimates in region II, we write conventionally that 
\begin{equation}\label{R_II_Convention}
\begin{aligned}
        &\varphi_1\in\{\Omega\chi,\Omega^{-1}\chib,\Omega\omega,\etab,\eta,\nabla\phi,\Omega D_4\phi\},\ \varphi_2\in\{\Omega\chi,\Omega^{-1}\chib,\Omega^{-1}\omegab,\etab,\eta,\nabla\phi,\Omega^{-1} D_3\phi\},\\
        &\Psi_1\in\{\Omega^2\alpha,\Omega\beta^r,\Omega^{-1}\betab^r,\sigma^r,K^r\},\Psi_2\in\{\Omega^{-2}\alphab,\Omega\beta^r,\Omega^{-1}\betab^r,\sigma^r,K^r\},
\end{aligned}
\end{equation} 
$\varphi=\varphi_1$ or $\varphi_2$, $\Psi=\Psi_1$ or $\Psi_2$
in this section to simplify the schematical equations.

\vspace{2mm}
Our main proposition for this analytical section is stated as follows:
\begin{proposition}\label{Main Proposition R_II}
    Building directly upon Proposition \ref{Main Proposition R_I}, we make the bootstrap assumption that \eqref{RII_Bootstrap_Assumptions} consistently holds throughout $\mathcal{R}_{II}$. Under this assumption, we can choose $D=D(\epsilon_1,a)$ to be sufficiently large and subsequently select $\epsilon=\epsilon(a,D,\epsilon_1)$ to be sufficiently small, such that the bootstrap constant in \eqref{RII_Bootstrap_Assumptions} can be strictly improved to $B\leq 1$.
\end{proposition}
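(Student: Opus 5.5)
The plan is to run the same hierarchy as in Region I: transport estimates for the non-top-order Ricci coefficients and scalar derivatives, energy estimates for the Bianchi-type pairs, and elliptic recovery of the top-order Ricci coefficients. Everything is expressed in the $(U,V)$ gauge with the $V$-signature norms \eqref{R II norm sys} and the weight $W_{II}=V^aW_0$. All differences are the plain background differences $\ol{\cdot}$. Initial data on the past boundary $z=\epsilon_1$ come from the preceding lemma, which gives size $\epsilon^{2-\delta}V^a$. On $H_{-1}$ they come from the prescribed exterior data.

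\emph{Lower-order bounds and uniform background control.} Since $z\in[\epsilon_1,\epsilon_1^{-1}]$, the earlier lemma on Christodoulou's solution makes every background coefficient $\psi^c$ bounded in $\LS_1$ by a constant $C(\epsilon_1)$. The bootstrap \eqref{RII_Bootstrap_Assumptions} carries the factor $W_0^{-1}\le e^{D/\epsilon_1}$. Choosing $\epsilon\ll e^{-D/\epsilon_1}$, the unweighted norms of the differences are still $o(1)$. Hence the Sobolev bound \eqref{Sobolev_W22} and the product Lemma \ref{Lemma_Sobolev_Ineq_product} (in the form of Lemma \ref{RI_Lemma_Calculation_of_ol}) apply, and every nonlinear source is bounded by a constant $C(\epsilon_1)$ times a linear combination of difference norms. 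I then estimate $\ol{g},\ol{b},\ol{\log\Omega}$ and then $\nabla^{\le 4}\ol{\varphi}$ on spheres. For the $e_3$-equations I use \eqref{transport_3}; for the $e_4$-equations I use its $V$-analogue. Because $\partial_U\log W_0=-DV^{1-\k}/U^2<0$, this weight derivative contributes a favorable bulk term.

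\emph{Energy estimates for the pairs $(\Phi_1,\Phi_2)$.} For each pair I commute five angular derivatives using Lemma \ref{LEMMA_COMMUTATION_FORMULAE_NABLA} and apply the analogue of \eqref{Energy Estimates} with weight $W_{II}$. This produces a good bulk term with coefficient $-D\frac{V^{1-\k}}{(-U)^2}$, which is at least $D\,c(\epsilon_1)$ times the $\R_{U,V}$-density. All remaining coefficients, namely signature terms, $\Omega\tr\chib$, $\omegab$, $\dv b$, and the $V^a$ factor, are $O_{\epsilon_1,a}(1)$. They are absorbed by taking $D=D(\epsilon_1,a)$ large, which realises the toy mechanism of the introduction. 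Source terms are bounded by $C(\epsilon_1)$ times bulk norms of differences, giving the schematic inequality
\[
\|\nabla^5\ol{\Phi_1}\|^2_{\LSS^2(H)}+\|\nabla^5\ol{\Phi_2}\|^2_{\LSS^2(\Hb)}\le C\,\epsilon^{2-\delta}V^aW_0^{-1}\Big(\tfrac{C(\epsilon_1)}{D}B+1\Big)+\text{data}.
\]
This improves $B$ once $D$ is large and $\epsilon$ is small. The top-order Ricci coefficients are recovered by the Hodge systems \eqref{Ellip Est} for $\chih$ and $\chibh$. For $\eta,\etab$ I use the mass aspects $\mu,\mub$, and for $\omega,\omegab$ the auxiliary $\omega^\dagger$ construction, exactly as in Region I.

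\emph{Main obstacle.} The hardest step is the scalar field's loss of derivative. In the pairs $(\nabla(\Omega D_4\phi),\nabla^2\phi)$ and $(\nabla^2\phi,\nabla(\Omega^{-1}D_3\phi))$, and in the curvature sources via $\mathrm{Ric}=D\phi D\phi$, the $n$-th order curvature estimate needs $n+1$ derivatives of $\phi$. I first close the scalar-field pairs at order $6$, using only $\nabla^{\le 5}$ of Ricci coefficients and treating $\nabla e_A$ through coordinate derivatives as in Region I to avoid $\nabla^6[\dv]$ terms. I then feed these bounds into the curvature pairs. Throughout I must check that every top-order term appears either inside an energy pair or with a coefficient that is small (such as $\Omega\chibh$) or damped by $1/D$.
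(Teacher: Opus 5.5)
Your proposal is correct and is essentially the paper's argument. The ingredients match: background differences in the $(U,V)$ gauge, transport for the lower-order quantities, weighted energy estimates for the scalar-field and Bianchi pairs, and elliptic recovery via Codazzi, the mass aspects $\mu,\mub$ and the potentials $\omega^\dagger,\omegab^\dagger$. All smallness comes from the factor $W_0$ with $D$ large; the paper isolates this as Lemma~\ref{R_II Key Lemma} (bulk norm at most $\epsilon_2$ times the flux bound), which is what your $C(\epsilon_1)/D$ factor amounts to.

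The one correction concerns your ordering. The order-six scalar pairs do contain top-order Ricci coefficients with non-small coefficients: $\nabla^6(\Omega^{-1}\tr\chib)$, $\nabla^6(\Omega\tr\chi)$, and $\nabla^6(\Omega\omega)$, the last controlled only through $\nabla^5(\Omega\beta^r)$. So they are closed simultaneously with the curvature pairs under the bootstrap, not before them. Also, the Region~I coordinate-derivative device is unnecessary here, since the differences are taken against a spherically symmetric background.
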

The core estimates rely fundamentally on the following key functional lemma:
\begin{lemma}\label{R_II Key Lemma}
    Suppose that either \[\lnm \Phi\rnm^2_{\LSS^2\left(\Hb_V^U\right)}\leq C_1V^a W_0^{-1}(U,V)\ { or}\ \lnm \Phi\rnm^2_{\LSS^2\left(H_U^V\right)}\leq C_1V^a W_0^{-1}(U,V)\]
    holds.
    Then for any $\epsilon_2>0$, we can set $D=D(\epsilon_1,\epsilon_2)$ sufficiently large, so that
    \begin{equation}
        \lnm \Phi\rnm^2_{\LSS^2\left(\mathcal{R}_{U,V}\right)}\leq \epsilon_2 C_1V^a W_0^{-1}(U,V).
    \end{equation}
\end{lemma}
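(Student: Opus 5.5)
The plan is to write the bulk norm as an iterated integral of the hypersurface norm, insert the hypothesis, and integrate the resulting explicit weight $V^a W_0^{-1}=V^a\exp\!\left(\frac{V^{1-\k}}{-U}D\right)$. The smallness will come from the fact that this weight is exponentially increasing in the integration variable at rate $\sim D$. I treat the two hypotheses separately, since each one matches a different order of integration in the definition of $\LSS^2(\mathcal{R}_{U,V})$ in \eqref{R II norm sys}.

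\emph{Case 1: the $\Hb$ hypothesis.} By Fubini, and because $\Omega^2$ and the powers of $V$ agree up to constants depending only on $\epsilon_1$ inside $\mathcal{R}_{II}$ (by the lemma on Christodoulou's solution, $(\Omega^2)^c\sim_{\epsilon_1} V^\k$ there, and $\Omega$ is close to $\Omega^c$), I will bound
\[
\lnm \Phi\rnm^2_{\LSS^2(\mathcal{R}_{U,V})}\lesssim_{\epsilon_1}\int_{V_0(U)}^{V}\frac{1}{V'}\lnm\Phi\rnm^2_{\LSS^2(\Hb_{V'}^{U})}\,dV'\le C_1\int_{V_0(U)}^{V}(V')^{a-1}\exp\!\left(\frac{(V')^{1-\k}}{-U}D\right)dV'.
\]
Set $t=(V')^{1-\k}/(-U)$, which lies in a compact range determined by $\epsilon_1$ on $\mathcal{R}_{II}$. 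Then $dV'/V'=dt/((1-\k)t)$, and $(V')^a\le V^a$ when $a>0$. This gives the bound $\frac{C(\epsilon_1)}{D}V^a e^{Dt(V)}$, since $\int^{t}e^{Ds}ds\le D^{-1}e^{Dt}$.

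\emph{Case 2: the $H$ hypothesis.} Here I integrate in $U'$ instead:
\[
\lnm \Phi\rnm^2_{\LSS^2(\mathcal{R}_{U,V})}\lesssim_{\epsilon_1}\int_{U_0(V)}^{U}(\cdot)\lnm\Phi\rnm^2_{\LSS^2(H_{U'}^{V})}\,dU'.
\]
The measure factor $(\cdot)$ is comparable to $V^{\k-1}$ or $1/(-U')$ after matching the weights; these two are equivalent in $\mathcal{R}_{II}$. The integral is then bounded by $C_1V^a\int_{U_0}^{U}\frac{1}{-U'}\exp\!\left(\frac{V^{1-\k}}{-U'}D\right)dU'$. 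With $s=V^{1-\k}/(-U')$ one has $dU'/(-U')=ds/s$, and $s$ increases as $U'$ increases toward $U$. The same Laplace-type bound therefore gives $\frac{C(\epsilon_1)}{D}V^aW_0^{-1}(U,V)$.

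In both cases, choosing $D\ge C(\epsilon_1)/\epsilon_2$ gives the claim.

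The main obstacle is the weight bookkeeping. I need to check that the $V$- and $\Omega^2$-powers in the bulk norm match those in the hypersurface norms up to $\epsilon_1$-dependent constants, and that the lower limits $U_0,V_0$ do not contribute boundary terms. Neither should cause trouble, since $t$ and $s$ stay in $[\epsilon_1^{1-\k},\epsilon_1^{-(1-\k)}]$ and the lower endpoint only lowers the integral.
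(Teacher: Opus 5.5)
Your proposal is correct and follows the paper's route. You rewrite the bulk norm as an integral of the $\Hb$- (resp.\ $H$-) fluxes, insert the hypothesis, pass to the self-similar variable, and extract a factor $C(\epsilon_1)/D$ from the exponential weight. The one difference is that you bound the resulting one-dimensional integral directly via $t^{-1}\le\epsilon_1^{\k-1}$ and $\int^{T}e^{D(t-T)}\,dt\le D^{-1}$, which is cleaner than the paper's maximum-point argument for $F$ and $G$ and gives $D=D(\epsilon_1,\epsilon_2)$ with no dependence on $a$, as the statement claims.
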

\begin{proof}
    We first assume $\lnm \Phi\rnm^2_{\LSS^2\left(\Hb_V^U\right)}\leq C_1V^a W_0^{-1}(U,V)$. 
    By definition, we have
    \begin{align*}
         \|\Phi\|_{\LSS^2(\mathcal{R}_{U,V})}^2W_0(U,V)
        \leq & \int_{V_0(U)}^V\frac{1}{V^\prime}\|\Phi\|^2_{\LSS^2(\Hb_{V^\prime}^U)}W_0(U,V)dV^\prime\\
        \leq & C_1\int_{V_0(U)}^V (V^\prime)^{a-1}\exp\left(D\frac{(V^\prime)^{1-\k}}{-U}-D\frac{V^{1-\k}}{-U}\right)dV^\prime\\
        =&\frac{C_1}{1-\k}\int_{\epsilon_1^{1-\k}}^{\frac{V^{1-\k}}{-U}}(-U)^{\frac{a}{1-\k}}z^{\frac{a}{1-\k}-1}\exp\left(Dz-D\frac{V^{1-\k}}{-U}\right)dz\\
        \leq &\frac{C_1}{1-\k} V^a \epsilon_1^{-a}\int_{\epsilon_1^{1-\k}}^{\epsilon_1^{\k-1}}z^{\frac{a}{1-\k}-1}\exp(Dz-D/\epsilon_1^{1-\k})dz.
    \end{align*}
    If function 
    \[F(Z)=\int_{\epsilon_1^{1-\k}}^{Z}z^{\frac{a}{1-\k}-1}\exp\left(D(z-Z)\right)dz\]
    attains its maximum at $Z_0\in (\epsilon_1^{1-\k},\epsilon_1^{\k-1})$, then we have $F^\prime(Z_0)=0$, which means 
    \[\int_{\epsilon_1^{1-\k}}^{Z}z^{\frac{a}{1-\k}-1}\exp\left(D(z-Z)\right)dz\leq \frac{Z_0^{\frac{a}{1-\k}-1}}{D}<\frac{\epsilon_1^{a-1+\k}}{D}.\]
    We can take $D=D(\epsilon_1,\epsilon_2)$ sufficiently large so that $\frac{\epsilon_1^{a-1+\k}}{D}\ll \epsilon_1^{a}\epsilon_2$. If $F(Z)$ attains its maximum on $(\epsilon_1^{\k-1},\infty)$, we can estimate $F(\epsilon_1^{\k-1})$.
    Since $\lim_{D\rightarrow\infty}\int_{\epsilon_1^{1-\k}}^{\epsilon_1^{\k-1}}z^{\frac{a}{1-\k}-1}\exp(Dz-D/\epsilon_1^{1-\k})dz=0$, we can assert that for any $D=D(a,\epsilon_1,\epsilon_2)$ chosen sufficiently large, $$ \frac{1}{1-\k}\epsilon_1^{-a}\int_{\epsilon_1^{1-\k}}^{\epsilon_1^{\k-1}}z^{-1}\exp(Dz-D/\epsilon_1)dz\leq\epsilon_2,$$
    which finishes the first part.
 
    Next, assuming instead that $\lnm \Phi\rnm^2_{\LSS^2\left(H_U^V\right)}\leq C_1V^a W_0^{-1}(U,V)$, we can similarly deduce
    \begin{equation}
        \begin{aligned}
            \lnm \Phi\rnm^2_{\LSS^2(\R_{U,V})}V^{-a}W_0(U,V)\leq &\epsilon_1^{-1+\k}\int_{U_0(V)}^U\frac{1}{-U^\prime}\exp\left(V^{1-\k}D\left(\frac{1}{-U^\prime}-\frac{1}{-U}\right)\right)dU^\prime\\
            \leq & C(\epsilon_1)\int_{-\left(\epsilon_1^{-1}V\right)^{1-\k}}^{U}\frac{1}{-U^\prime}\exp\left(D\left(\frac{V^{1-\k}}{-U^\prime}-\frac{V^{1-\k}}{-U}\right)\right)dU^\prime\\
            \leq & C(\epsilon_1)\int_{-U/V^{1-\k}}^{\epsilon_1^{\k-1}}\frac{1}{z}\exp\left(D\left(z-\frac{V^{1-\k}}{-U}\right)\right)dz.
        \end{aligned}
    \end{equation}
    Let $G(Z)=\int_{1/Z}^{\epsilon_1^{\k-1}}\frac{1}{z}\exp\left(D\left(z-Z\right)\right)dz$. $G^\prime(Z)=\frac{1}{Z^3}-D\cdot G(Z)$. If $G$ attains its maximum on $Z_0\in(\epsilon_1^{1-\k},\epsilon_1^{\k-1})$, we have 
    $G(Z_0)\leq \frac{\epsilon_1^{3\k-3}}{D}$. We can choose $D=D(\epsilon_1,\epsilon_2)$ sufficiently large so that $\frac{\epsilon_1^{3\k-3}}{D}\ll \frac{\epsilon_2}{C(\epsilon_1)}$. Since 
    $\lim_{D\rightarrow\infty}G(\epsilon_1^{\k-1})=0$, we can choose $D=D(\epsilon_1,\epsilon_2)$ such that 
    $G(\epsilon_1^{\k-1})\ll \frac{\epsilon_2}{C(\epsilon_1)}$ also holds. 
\end{proof}
Thus both null directions enjoy the same gain: a flux estimate along either family of cones produces a spacetime estimate with an arbitrarily small prefactor once $D$ is chosen sufficiently large. This small prefactor is the key tool used in the Region II bootstrap closure.
By employing identical mathematical arguments, we can rigorously prove the following powerful lemma.

\begin{lemma}\label{R_II Key Lemma Corollary}
    Suppose that $\lnm \Phi\rnm^2_{\LSS^2\left(S_{U,V}\right)}\leq C_1V^a W_0^{-1}(U,V).$
    Then for any $\epsilon_2>0$, we can set $D=D(a,\epsilon_1,\epsilon_2)$ sufficiently large, so that
    \begin{equation}
        \lnm \Phi\rnm^2_{\LSS^2\left(H_U^V\right)}+\lnm \Phi\rnm^2_{\LSS^2\left(\Hb_V^U\right)}\leq \epsilon_2 C_1V^a W_0^{-1}(U,V).
    \end{equation}

\end{lemma}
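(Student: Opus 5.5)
The plan is to reduce both flux bounds to the same one-dimensional weighted integral that appears in the proof of Lemma \ref{R_II Key Lemma}. We then let the exponential factor $W_0=\exp(-DV^{1-\k}/(-U))$ supply the small constant.

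\textbf{Incoming flux.} By the definitions in \eqref{R II norm sys}, $\|\Phi\|^2_{\LSS^2(\Hb_V^U)}$ is the $U$-integral of $\|\Phi\|^2_{\LSS^2(S_{U',V})}$. The only changes are the extra factor $\Omega^2$ and the power $V^{1-2S}$ in place of $V^{2-2S}$. In $\mathcal{R}_{II}$ the background bound $(\Omega^2)^c\sim V^\k(1+z^{-1-\k})$ and the bootstrap smallness of $\ol{\Omega^2}$ give $\Omega^2\lesssim_{\epsilon_1}V^\k\sim_{\epsilon_1}(-U)^{\k/(1-\k)}$. Hence
\[
\|\Phi\|^2_{\LSS^2(\Hb_V^U)}\lesssim_{\epsilon_1}\int_{U_0(V)}^U\frac{1}{-U'}\|\Phi\|^2_{\LSS^2(S_{U',V})}\,dU'\le C_1V^a\int_{U_0(V)}^U\frac{1}{-U'}W_0^{-1}(U',V)\,dU'.
\]
Multiplying by $W_0(U,V)$ and substituting $z=V^{1-\k}/(-U')$ turns the right-hand side into exactly the function $G(Z)$ with $Z=V^{1-\k}/(-U)$. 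In the proof of Lemma \ref{R_II Key Lemma} we showed that $G\to 0$ uniformly on $[\epsilon_1^{1-\k},\epsilon_1^{\k-1}]$ as $D\to\infty$. One treats an interior maximum via $G'=0$, which gives $G\le \epsilon_1^{3\k-3}/D$, and the endpoint via dominated convergence. This yields the factor $\epsilon_2$ once $D=D(\epsilon_1,\epsilon_2)$ is large.

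\textbf{Outgoing flux.} Here $\|\Phi\|^2_{\LSS^2(H_U^V)}=\int_{V_0(U)}^V\frac{1}{V'}\|\Phi\|^2_{\LSS^2(S_{U,V'})}dV'$ by definition. It is therefore bounded by
\[
C_1\int_{V_0(U)}^V(V')^{a-1}\exp\bigl(D(V')^{1-\k}/(-U)\bigr)\,dV'.
\]
After multiplying by $W_0(U,V)$ and setting $z=(V')^{1-\k}/(-U)$, this becomes $\epsilon_1^{-a}V^a$ times the function $F(Z)$ from the same proof. On $\mathcal{R}_{II}$ we have $(-U)^{a/(1-\k)}\le \epsilon_1^{-a}V^a$. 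The function $F$ was already shown to tend to zero as $D\to\infty$, uniformly for $Z$ in the compact range: at an interior critical point $F\le Z_0^{a/(1-\k)-1}/D$, and the endpoint value tends to zero by dominated convergence. Choosing $D=D(a,\epsilon_1,\epsilon_2)$ large makes this at most $\epsilon_2/2$, and the same holds for the incoming part.

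\textbf{Main obstacle.} The delicate point is uniformity. We need the constants, including the comparison $\Omega^2\sim_{\epsilon_1}V^\k$ and the factor $\epsilon_1^{-a}$, to depend only on $\epsilon_1$ and $a$, and not on $D$. Only then is the order of choices ($D$ first, then $\epsilon$) in Proposition \ref{Main Proposition R_II} consistent. The lapse comparison requires the bootstrap bound \eqref{RII_Bootstrap_Assumptions} for $\ol{\Omega^2}$, obtained by integrating the $\Omega\omega$ and $\Omega^{-1}\omegab$ equations. Its constant is independent of $D$ because that bound is measured against $W_0^{-1}$ and $W_0\le 1$.
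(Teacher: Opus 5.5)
Your proposal is correct and is essentially the paper's argument, which simply repeats the proof of Lemma~\ref{R_II Key Lemma}: integrate the sphere bound along $H_U^V$ or $\Hb_V^U$ (using $\Omega^2/V\lesssim_{\epsilon_1}1/(-U')$ on the incoming cone, as you do) and reduce to the one-variable integrals $F(Z)$ and $\int_{\epsilon_1^{1-\kappa}}^{Z}z^{-1}e^{D(z-Z)}\,dz\le\epsilon_1^{\kappa-1}/D$. One correction to your last paragraph: the smallness of $\overline{\Omega^2}$ is not $D$-independent, because the weighted bound only gives $|\overline{\Omega^2}|/(\Omega^2)^c\lesssim_{\epsilon_1,B}\epsilon^{1-\delta/2}W_0^{-1}$ with $W_0^{-1}\le e^{D\epsilon_1^{\kappa-1}}$ in $\mathcal{R}_{II}$; so the comparison $\Omega^2\lesssim_{\epsilon_1}V^{\kappa}$ (whose constant is indeed $D$-independent) holds only once $\epsilon$ is chosen small depending on $D$, which the order of choices in Proposition~\ref{Main Proposition R_II} allows.
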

Besides these algebraic preliminaries, we must make certain necessary adjustments to the standard energy estimates, transport estimates, and trace formulae to properly align with the $\LSS$ norm systems. In a manner strictly analogous to the energy estimates derived in Region I ($\R_{I}$), we obtain the lemma below.
\begin{lemma}\label{R_II_ENERGY_ESTIMATE_LEMMA}
     Consider equations system:
    \begin{equation}
        \left\{
        \begin{aligned}
            \Omega^{-1}\nabla_3\ol{\Phi_1}+\psi_1\cdot\ol{\Phi_1}=& \mathcal{D}\ol{\Phi_2}+ F_1,\\
            \Omega \nabla_4\ol{\Phi_2}+\psi_2\cdot\ol{\Phi_2}=& -{}^*\mathcal{D}\ol{\Phi_1}+F_2.
        \end{aligned}
        \right.
    \end{equation}
    Here $\psi_1$ stands for $\Omega^{-1}\chib,\Omega^{-1}\omegab$, $\psi_2$ for $\Omega\chi,\Omega\omega$, and $\mathcal{D}$ is a differential operator with Hodge dual ${}^*\mathcal{D}$.
    Suppose that $\lnm\dv b,\psi_i\rnm_{\LS_1^\infty(S)} \leq C_0$, and 
    \begin{equation}
        \lnm \nabla\left(\frac{\sqrt{\det g(U,V)}}{V^2}\right)\rnm_{L^2(\SS)}\leq C_1.
    \end{equation}
    Then for $D=D(C_0,C_1,\epsilon_1)$ sufficiently large, the following energy estimate holds:
    \begin{equation}\label{R_II Energy Estimates}
        \begin{split}
            \lnm \ol{\Phi_1}\rnm^2_{\LSS^2\left(H_U^V\right)}+\lnm \ol{\Phi_2}\rnm^2_{\LSS^2\left(\Hb_V^U\right)}
            \leq  \lnm \ol{\Phi_1},\ol{\Phi_2}\rnm^2_{\LSS^2\left(\Sigma^{\epsilon_1}_{U,V}\right)}+\lnm F_1,F_2\rnm^2_{\LSS^2\left(\mathcal{R}_{U,V}\right)}.
        \end{split}
    \end{equation}
\end{lemma}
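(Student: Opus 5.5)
We follow the Region I energy estimate \eqref{Energy Estimates} and adapt it to the $(U,V)$ gauge and the exponential weight $W_{II}=V^aW_0$. The idea is that the derivative of $W_0$ produces a large negative bulk term. This term absorbs every borderline coefficient coming from $\psi_1,\psi_2$, $\dv b$ and the volume-form correction.

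\textbf{Step 1: the pointwise divergence identity.} Multiply the first equation by $\ol{\Phi_1}$ with weight $V^{1-2S(\Phi_1)}W$, and the second by $\ol{\Phi_2}$ with weight $V^{1-2S(\Phi_2)}W\Omega^2$. We integrate over $\SS$ using $d\V$. The derivative $\Omega^{-1}\nabla_3$ is converted to $\partial_U$ through $\Omega e_3=\partial_U+b^A\partial_A$, and $\Omega\nabla_4$ to $\partial_V$.

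Adding the two identities, we integrate over the domain bounded by the future boundaries $H_U^V$ and $\Hb_V^U$ and the past boundary $\Sigma^{\epsilon_1}_{U,V}$. By Stokes, the boundary terms give the left-hand side of \eqref{R_II Energy Estimates} minus the flux through $\Sigma^{\epsilon_1}$. That flux is controlled by $\|\ol{\Phi_1},\ol{\Phi_2}\|^2_{\LSS^2(\Sigma^{\epsilon_1}_{U,V})}$, since the normal of $\Sigma^{\epsilon_1}$ is a combination of $\partial_U$ and $\partial_V$ with coefficients $\sim_{\epsilon_1}1$.

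The cross terms $\langle\ol{\Phi_1},\mathcal{D}\ol{\Phi_2}\rangle-\langle{}^*\mathcal{D}\ol{\Phi_1},\ol{\Phi_2}\rangle$ are handled by integration by parts on $S_{U,V}$. As in Region I, this leaves an error weighted by $\nabla(\sqrt{\det g}/V^2)$, which is bounded by $C_1|\ol{\Phi_1}||\ol{\Phi_2}|$. The matching powers of $V$ and $\Omega^2$ work out by the signature rules $S(\mathcal{D}\Phi)=S(\Phi)-1$ together with $\Omega^2\sim_{\epsilon_1}V^\k$ in $\R_{II}$.

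\textbf{Step 2: the bulk terms.} Collect all zeroth-order contributions. These come from $\psi_i\cdot\ol{\Phi_i}$, from $\dv b$, from differentiating $V^{1-2S}$ and $V^a$, from $\partial\log\Omega^2$, and from the volume-form error. Each has size at most $C(C_0,C_1,\epsilon_1)\,V^{-1}$ times the integrand of $\|\cdot\|^2_{\LSS^2(\R_{U,V})}$; here we use $\|\psi_i,\dv b\|_{\LS^\infty_1}\le C_0$ and $-U\sim_{\epsilon_1}V^{1-\k}$ in $\R_{II}$.

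The weight derivatives are the good terms:
\[-\partial_U\log W_0=\frac{DV^{1-\k}}{U^2}\gtrsim_{\epsilon_1}\frac{D}{V^{1-\k}}\gtrsim\frac{D}{V\Omega^{-2}}, \qquad \partial_V\log W_0=-\frac{D(1-\k)V^{-\k}}{-U}\lesssim -\frac{D}{V}\cdot c(\epsilon_1).\]
Both enter with a favorable sign, producing $-c(\epsilon_1)D\,\|\ol{\Phi_1},\ol{\Phi_2}\|^2_{\LSS^2(\R_{U,V})}$. We need to check that the $\partial_U$-term, after transport along $\Omega e_3$, indeed lands in the bulk with the same $\Omega^2$ measure; this is consistent with the $\Hb$-norm carrying $\Omega^2$.

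\textbf{Step 3: closing.} The source terms are bounded by Cauchy--Schwarz:
\[2\langle\ol\Phi_i,F_i\rangle\le \|\ol\Phi_i\|^2+\|F_i\|^2 \quad\text{in } \LSS^2(\R_{U,V}).\]
Choosing $D=D(C_0,C_1,\epsilon_1)$ large makes $-c(\epsilon_1)D+C(C_0,C_1,\epsilon_1)+1\le0$, so all bulk terms in $\ol\Phi$ are discarded and \eqref{R_II Energy Estimates} follows.

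The main obstacle is Step 2. We must verify that the signature bookkeeping makes every non-weight bulk coefficient exactly of order $V^{-1}$ (equivalently $(-U)^{-1}$), with constants depending only on $C_0,C_1,\epsilon_1$. We must also verify that both weight derivatives have the good sign on the correct measure. This relies on $\R_{II}$ being a compact range of $z$, so that all powers of $z$ are absorbed into $\epsilon_1$-dependent constants.
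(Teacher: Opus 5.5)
Your proposal is correct and follows essentially the same route as the paper: differentiate the weighted norms along $\partial_U$ and $\partial_V$ and integrate over the domain. You then control the $\Sigma^{\epsilon_1}$ flux and the integration-by-parts error, apply Cauchy--Schwarz to the sources, and use that $V\Omega^{-2}\partial_U\log W_0$ and $V\partial_V\log W_0$ are $\lesssim -c(\epsilon_1)D$ in $\R_{II}$ to absorb every other bulk coefficient once $D$ is large. One bookkeeping point that both you and the paper pass over: the common weight $\Omega^2V^{1-2S}W$ of the cross terms is not constant on $S_{U,V}$, so integrating by parts also produces a $\nabla\Omega^2=\Omega^2(\eta+\etab)$ term (absorbed by the same large-$D$ bulk using the bounds on $\eta,\etab$ available in $\R_{II}$), and the $\Sigma^{\epsilon_1}$ flux really comes with an $\epsilon_1$-dependent constant.
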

\begin{proof}
    Through a standard, systematic process of geometric differentiation and subsequent spacetime integration, we derive
    \begin{equation*}
        \begin{aligned}
            &\lnm \ol{\Phi_1}\rnm^2_{\LSS^2\left(H_U^V\right)}+\lnm \ol{\Phi_2}\rnm^2_{\LSS^2\left(\Hb_V^U\right)}\\ 
            \lesssim & \int_{V_0(U)}^V \int_{U_0(V^\prime)}^U\int_{\SS} 2\Omega^2 \ol{\Phi_1}\cdot F_1{V^\prime}^{-2S(\Phi_1)}W_{II}+2\ol{\Phi_2}\cdot F_2{V^\prime}^{-2S(\Phi_2)}\Omega^2\\
            &+\int_{V_0(U)}^V \int_{U_0(V^\prime)}^U\int_{\SS} \Omega^2 \left|\ol{\Phi_1}\right|^2{V^\prime}^{1-2S(\Phi_1)}W_{II}\left( \Omega^{-2}\partial_U \log\left(V^{1-2S(\Phi_1)}W_{II}\right)-2\psi_1 +\Omega^{-2}\dv b\right)\\
            &+\int_{V_0(U)}^V \int_{U_0(V^\prime)}^U\int_{\SS}\left|\ol{\Phi_2}\right|^2{V^\prime}^{1-2S(\Phi_2)}\Omega^2\left(\partial_V\log\left(V^{1-2S(\Phi_2)}\Omega^2 W_{II}\right)-2\psi_2\right)\\
            &+\lnm \ol{\Phi_1}\rnm^2_{\LSS^2\left(\Sigma^{\epsilon_1}_{U,V}\right)}+\left|\partial_VU_0\right|\cdot\lnm \ol{\Phi_2}\rnm^2_{\LSS^2\left(\Sigma^{\epsilon_1}_{U,V}\right)} + C(C_1) \lnm \ol{\Phi_1}\rnm_{\LSS^2(\R_{U,V})}\lnm \ol{\Phi_2}\rnm_{\LSS^2(\R_{U,V})} \\
            \lesssim & \int_{V_0(U)}^V \int_{U_0(V^\prime)}^U\int_{\SS} \Omega^2 \left|F_1\right|^2{V^\prime}^{1-2S(F_1)}W_{II}+\left|F_2\right|^2{V^\prime}^{1-2S(F_2)}\Omega^2W_{II}\\
            &+\int_{V_0(U)}^V \int_{U_0(V^\prime)}^U\int_{\SS} \Omega^2 \left|\ol{\Phi_1}\right|^2{V^\prime}^{1-2S(\Phi_1)}W_{II}\left( \Omega^{-2}\partial_U \log\left(V^{1-2S(\Phi_1)}W_{II}\right)-2\psi_1+C(C_1)V^{-1} \right)\\
            &+\int_{V_0(U)}^V \int_{U_0(V^\prime)}^U\int_{\SS}\left|\ol{\Phi_2}\right|^2{V^\prime}^{1-2S(\Phi_2)}\Omega^2\left(\partial_V\log\left(V^{1-2S(\Phi_2)}\Omega^2 W_{II}\right)-2\psi_2+C(C_1)V^{-1}\right)\\
            &+\lnm \ol{\Phi_1}\rnm^2_{\LSS^2\left(\Sigma^{\epsilon_1}_{U,V}\right)}+\left|\partial_VU_0\right|\cdot\lnm \ol{\Phi_2}\rnm^2_{\LSS^2\left(\Sigma^{\epsilon_1}_{U,V}\right)}.
        \end{aligned}
    \end{equation*} 
    Since $\Omega^{-2}\partial_U \log\left(V^{1-2S(\Phi_1)}W_{II}\right)\lesssim -C({\epsilon_1})D V^{-1}$, and $$\partial_V\log\left(V^{1-2S(\Phi_2)}\Omega^2 W_{II}\right)\lesssim \left(-C(\epsilon_1)D +C_0\right)V^{-1},$$ we can choose $D=D(\epsilon_1)$ sufficiently large so that 
    \[\Omega^{-2}\partial_U \log\left(V^{1-2S(\Phi_1)}W_{II}\right)\ll \psi_1-C(C_1)V^{-1},\quad \partial_V\log\left(V^{1-2S(\Phi_2)}\Omega^2 W_{II}\right)\ll \psi_2-C(C_1)V^{-1}.\]
\end{proof}
By closely following similar mathematical arguments to those utilized in Region I ($\R_I$), we can establish the necessary transport estimates. We choose to omit the detailed proof here, as the structural differences from the previous estimates \eqref{transport_3} and \eqref{transport_4} in $\R_I$ are only very slight.
\begin{lemma}
    We consider transport equations
    \begin{equation}
        \begin{aligned}
            \Omega^{-1}\nabla_3\ol{\Phi_1}+\psi_1\cdot\ol{\Phi_1}= F_1,\quad 
            \Omega \nabla_4\ol{\Phi_2}+\psi_2\cdot\ol{\Phi_2}=F_2,
        \end{aligned}
    \end{equation}
    where $\psi_1$ contains $\Omega^{-1}\chib,\Omega^{-1}\omegab$ and $\psi_2$ contains $\Omega\chi,\Omega\omega$.
    Suppose that $$\lnm\psi_i\rnm_{\LS_1^\infty(S)} \leq C_0,\ \lnm \nabla b,\Omega^2\rnm_{\LS_1^\infty(S)}\leq C_0.$$
    Then we can choose $D$ sufficiently large so that the following estimates hold:
    \begin{equation}\label{R_II transport estimate e_3}
        \lnm \ol{\Phi_1}\rnm^2_{\LSS^2\left(S_{U,V}\right)}\leq \lnm \ol{\Phi_1}\rnm^2_{\LSS^2\left(S_{U_0(V),V}\right)}+\lnm F_1\rnm^2_{\LSS^2\left(\Hb_V^U\right)},
    \end{equation}
    \begin{equation}\label{R_II transport estimate e_4}
        \lnm \ol{\Phi_2}\rnm^2_{\LSS^2\left(S_{U,V}\right)}\leq \lnm \ol{\Phi_2}\rnm^2_{\LSS^2\left(S_{U,V_0(U)}\right)}+\lnm F_2\rnm^2_{\LSS^2\left(H_U^V\right)}.
    \end{equation} 
\end{lemma}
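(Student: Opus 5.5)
We follow the proof of \eqref{transport_3}--\eqref{transport_4} in Region I, but now with the weight $V^{1-2S}W_{II}$. The large damping parameter $D$ replaces the sign condition on $\partial_u\log w$ used there.

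For \eqref{R_II transport estimate e_3}, fix $V$ and differentiate $V^{2-2S(\Phi_1)}W_{II}(U,V)|\ol{\Phi_1}|_g^2$ in $U$. Since $\Omega^{-1}e_3=\Omega^{-2}(\partial_U+b^A\partial_A)$, we write $\partial_U=\Omega^2\,\Omega^{-1}e_3-b^A\partial_A$ and integrate over $\SS$ against $d\V$. The $b$-term, after integration by parts, gives $\int\dv b\,|\ol{\Phi_1}|^2$. Differentiating $|\ol{\Phi_1}|^2_g$ along $\Omega^{-1}e_3$ produces $2\langle\ol{\Phi_1},\Omega^{-1}\nabla_3\ol{\Phi_1}\rangle$ plus terms of the form $\Omega^{-1}\chib\cdot\ol{\Phi_1}\ol{\Phi_1}$. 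Substituting the equation $\Omega^{-1}\nabla_3\ol{\Phi_1}=-\psi_1\ol{\Phi_1}+F_1$ leaves a quadratic term with coefficient bounded by $C(C_0)\,\Omega^2 V^{-1}$, using $\lnm\psi_i,\nabla b,\Omega^2\rnm_{\LS_1^\infty}\le C_0$ and the $\LS^\infty$ scaling. The cross term $2\Omega^2\langle\ol{\Phi_1},F_1\rangle$ is handled by Cauchy--Schwarz: one half is $\Omega^2V^{-1}|\ol{\Phi_1}|^2$ times the weight, and the other gives exactly the $\LSS^2(\Hb_V^U)$ integrand of $F_1$, because $S(F_1)=S(\Phi_1)-1$ and the $\Hb$ norm carries $V^{1-2S}\Omega^2$.

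The key point is the weight derivative. We have $\partial_U\log W_{II}=-DV^{1-\k}/(-U)^2$. In $\R_{II}$, $V^{1-\k}/(-U)\in[\epsilon_1^{1-\k},\epsilon_1^{\k-1}]$, and $\Omega^2\sim V^\k(1+z^{-1-\k})\lesssim_{\epsilon_1}V^\k$ by the background lemma together with $\Omega\approx\Omega^c$. Hence $\partial_U\log W_{II}\le -c(\epsilon_1)D\,\Omega^2 V^{-1}$. Choosing $D\ge D(C_0,\epsilon_1)$ makes this term absorb all the quadratic error terms. Integrating from $U_0(V)$ to $U$ then yields \eqref{R_II transport estimate e_3}.

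For \eqref{R_II transport estimate e_4}, fix $U$ and differentiate in $V$ along $\Omega e_4=\partial_V$; there is no shift, since $[e_4,\partial_A]=0$. Here $\partial_V\log(V^{2-2S}W_{II})=(2-2S+a)V^{-1}-(1-\k)DV^{-\k}/(-U)$, and the second piece is at most $-c(\epsilon_1)DV^{-1}$. Since $\Omega\chi,\Omega\omega$ are $O(C_0V^{-1})$, the same absorption works for $D$ large, and the $F_2$ term matches the $H_U^V$ norm because that norm carries no $\Omega^2$ factor.

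The main obstacle is checking that every coefficient is uniformly $O(V^{-1})$ with constants depending only on $C_0$ and $\epsilon_1$. This includes the commutation of $\partial_U$ with $\sqrt{\det g}/V^2$, which is controlled as in Lemma \ref{R_II_ENERGY_ESTIMATE_LEMMA}. With that in place, the $-D$ damping from $W_0$ dominates. Any residual multiplicative constant is removed by rescaling $D$, exactly as in Lemma \ref{R_II_ENERGY_ESTIMATE_LEMMA}.
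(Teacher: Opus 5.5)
Your proposal is correct and is essentially the proof the paper intends. The paper omits this proof, saying only that it follows the Region I transport estimates \eqref{transport_3}--\eqref{transport_4} with slight changes. You supply that computation: the factor $\exp(-DV^{1-\kappa}/(-U))$ in $W_{II}$ gives damping of size $Dz^{2(1-\kappa)}V^{\kappa-1}$ in the $U$-direction and $D(1-\kappa)z^{1-\kappa}V^{-1}$ in the $V$-direction, which absorbs the $O(V^{\kappa-1})$ and $O(V^{-1})$ coefficients as in Lemma \ref{R_II_ENERGY_ESTIMATE_LEMMA}. Splitting the cross term with weights $V^{\mp 1}$, together with $S(F_i)=S(\Phi_i)-1$ and the presence of $\Omega^2$ in the $\underline{H}$-norm but not the $H$-norm, correctly gives constant exactly $1$ in front of the $F$-norms.
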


\vspace{2mm}
Similar to estimate \eqref{Calcu_Diff_product_ol}, the following holds:
\begin{lemma}\label{R_II_Estimate_Lemma_for_difference_of_Product}
    Let $n\geq 3$. Suppose that ${\psi_i}$ satisfies $$\sum_{i\leq n-1}\lnm \ol{\psi_i}\rnm_{\LS_1^2(S_{U,V})}+\sum_{i\leq n}\lnm\nabla^i \psi_i^c\rnm^2_{\LS_1^\infty(S_{U,V})}\lesssim 1.$$ Then for $X=S_{U,V},H_U^V,\Hb_V^U,\R_{U,V}$, we have 
    \begin{equation}
        \sum_{i\leq n}\lnm \nabla^i\left(\psi_1\cdots \psi_{m-1}\ol{\psi_m}\right)\rnm^2_{\LSS^2(X)}+\sum_{i\leq n}\lnm \nabla^i\left(\ol{\psi_1\cdots \psi_m}\right)\rnm^2_{\LSS^2(X)}\lesssim \sum_{j=1}^m\sum_{i\leq n}\lnm \nabla^i\ol{\psi_j}\rnm^2_{\LSS^2(X)}.
    \end{equation}
\end{lemma}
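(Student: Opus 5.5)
The plan is to reduce the estimate to the product Sobolev inequality of Lemma \ref{Lemma_Sobolev_Ineq_product}, rewritten for the $\LSS$ system. The first step is to transfer that lemma from the $(u,v)$ norm system to the $V$-signature norms. Since the weight $W_{II}$ does not depend on the angles, and $S(\psi_1\psi_2)=S(\psi_1)+S(\psi_2)-1$, the power $V^{2-2S}$ splits exactly as $V^{2-2S(\psi_1)}\cdot V^{2-2S(\psi_2)}\cdot V^{-2}$. This factor $V^{-2}$ is precisely the scaling factor that appears in the rescaled Sobolev embedding on a sphere of radius comparable to $V$. For this I will use the bootstrap bound on $\ol{g}$ in Region II, which gives $g\sim V^2 g^{\SS}$ up to constants depending on $\epsilon_1$. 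The rescaled embedding then yields
\[
\sum_{j\leq n}\lnm \nabla^j(\psi_1\psi_2)\rnm_{\LSS^2(X)}\lesssim \sup_{S\subset X}\Big(\sum_{i\leq n-1}\lnm\nabla^i\psi_1\rnm_{\LS_1^2(S)}\Big)\sum_{j\leq n}\lnm\nabla^j\psi_2\rnm_{\LSS^2(X)}+(1\leftrightarrow 2).
\]
This is first proved for $X=S_{U,V}$. The cases $X=H_U^V,\Hb_V^U,\R_{U,V}$ follow by integrating in $V'$ or $U'$ against the same measures, taking the $\LS_1$ factor out as a supremum.

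Next I treat the first term, $\psi_1\cdots\psi_{m-1}\ol{\psi_m}$, by induction on $m$. Leibniz distributes the $n$ derivatives. In every term, either the difference factor $\ol{\psi_m}$ carries at most $n-1$ derivatives, or all the other factors do, since $n\geq 3$. Factors with at most $n-1$ derivatives are bounded in $\LS_1^\infty$ (and hence in $\LS_1^2$) by Sobolev. For $\ol{\psi_i}$ this uses the hypothesis $\sum_{i\leq n-1}\lnm\nabla^i\ol{\psi_i}\rnm_{\LS_1^2}\lesssim 1$. For $\psi_i^c$ it uses the $n$-th order $\LS_1^\infty$ hypothesis, and $\psi_i=\psi_i^c+\ol{\psi_i}$ then gives a uniform bound for each $\psi_i$ up to order $n-1$. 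The only term not covered by these uniform bounds is the one with $\nabla^n$ falling on a non-difference factor $\psi_j$ with $j<m$. There I decompose $\nabla^n\psi_j=\nabla^n\psi_j^c+\nabla^n\ol{\psi_j}$. The background piece is bounded pointwise by hypothesis, and the difference piece contributes $\lnm\nabla^n\ol{\psi_j}\rnm_{\LSS^2(X)}$ times $\sup_S\lnm \ol{\psi_m}\rnm_{\LS_1^\infty}\lesssim 1$. This is exactly why the right-hand side is the sum over all $j$.

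For the second term I use the algebraic telescoping identity
\[
\ol{\psi_1\cdots\psi_m}=\sum_{j=1}^m \psi_1\cdots\psi_{j-1}\,\ol{\psi_j}\,\psi_{j+1}^c\cdots\psi_m^c .
\]
This reduces $\ol{\psi_1\cdots\psi_m}$ to $m$ products of the first type, with some factors replaced by background quantities; those are even easier since all their derivatives are bounded. Squaring and summing gives the stated bound.

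The main obstacle I expect is the first step: checking that the $V$-signature weights are compatible with the Sobolev scaling uniformly across Region II. Two things need care there. The comparison $\Omega^2\sim V^{\k}(1+z^{-1-\k})$ must be used so that the $\Omega^2$ in the $\Hb$ and $\R$ measures is harmless, with constants depending only on $\epsilon_1$. And the bootstrap control of $\ol{g}$ must be shown to give the required uniform metric equivalence. Everything else is routine Leibniz bookkeeping.
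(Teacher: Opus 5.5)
Your overall route matches the paper's. The paper states this lemma without proof and refers to \eqref{Calcu_Diff_product_ol}, whose proof combines the product inequality of Lemma~\ref{Lemma_Sobolev_Ineq_product} with the splitting $\psi=\psi^c+\ol{\psi}$. Your telescoping identity is simply the $m$-factor version of $\ol{\psi_1\psi_2}=\ol{\psi_1}\psi_2^c+\psi_1^c\ol{\psi_2}+\ol{\psi_1}\,\ol{\psi_2}$.

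However, one step of your Leibniz bookkeeping is false as written, and it is the step where $n\geq 3$ is actually used. From $\sum_{k\le n-1}\|\nabla^k\ol{\psi_i}\|_{\LS_1^2}\lesssim 1$, the Sobolev inequality \eqref{Sobolev_W22} controls $\nabla^k\ol{\psi_i}$ in $\LS_1^\infty$ only for $k\le n-3$, not for $k\le n-1$. So $\nabla^n\psi_j$ is not the only term that escapes your $L^\infty\times L^2$ pairing. For $n=3$, for example, neither factor of $\nabla^2\ol{\psi_j}\cdot\nabla\ol{\psi_m}$ is bounded in $\LS_1^\infty$ by the hypotheses.

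These middle terms close only after splitting $\psi_j=\psi_j^c+\ol{\psi_j}$ and letting the other difference factor carry both the $\LSS$ norm and the good Lebesgue exponent, via the weighted Sobolev inequality (the weight is constant on each sphere). Concretely:
\begin{itemize}
\item When $n-1$ derivatives fall on $\ol{\psi_j}$, use $\|\nabla^{n-1}\ol{\psi_j}\,\nabla\ol{\psi_m}\|_{\LSS^2(S_{U,V})}\le\|\nabla^{n-1}\ol{\psi_j}\|_{\LS_1^2(S_{U,V})}\,\|\nabla\ol{\psi_m}\|_{\LSS^\infty(S_{U,V})}$, together with $\|\nabla\ol{\psi_m}\|_{\LSS^\infty}\lesssim\sum_{k\le 3}\|\nabla^k\ol{\psi_m}\|_{\LSS^2}$.
\item When $n-2$ derivatives fall on $\ol{\psi_j}$, use an $L^4\times L^4$ pairing with the $\LSS$ norm on $\nabla^b\ol{\psi_m}$, $b\le 2$.
\end{itemize}

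For $m\ge 3$ your claim that only the top-order term is delicate also fails. For $n=3$, the term $\nabla^2\ol{\psi_j}\,\nabla\ol{\psi_l}\,\ol{\psi_m}$ genuinely needs the top-order $\LSS$ norm of $\ol{\psi_j}$ or $\ol{\psi_l}$. This is harmless, since the right-hand side contains all of them.

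Relatedly, you cannot feed $\psi_j\cdot\ol{\psi_m}$ directly into your step-one inequality. Its swapped term contains $\sum_{k\le n}\|\nabla^k\psi_j\|_{\LSS^2(X)}$, i.e.\ the weighted norm of the background $\psi_j^c$, which is not on the right-hand side. The clean version, as in the paper, has three steps:
\begin{enumerate}
\item Split every non-difference factor first.
\item Remove the background factors using the $\LS_1^\infty$ bounds on $\nabla^{\le n}\psi^c$.
\item Apply the product inequality only to products of differences, inducting on $m$ and using that $H^{n-1}(\SS)$ is an algebra (since $n-1\ge 2$) to keep the unweighted factors bounded.
\end{enumerate}

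A smaller slip in your first step concerns the stray $V^{-2}$. Since $S(\psi_1\psi_2)=S(\psi_1)+S(\psi_2)-1$, one has $V^{2-2S(\psi_1\psi_2)}=V^{2-2S(\psi_1)}V^{2-2S(\psi_2)}$ exactly. Hence $\|\psi_1\psi_2\|_{\LSS^2(S_{U,V})}\le\|\psi_1\|_{\LS_1^\infty(S_{U,V})}\|\psi_2\|_{\LSS^2(S_{U,V})}$ with no leftover $V^{-2}$. Because these norms are built on $d\V$ rather than the induced area form, \eqref{Sobolev_W22} carries no power of $V$ either. The only Region II inputs needed for the transfer are the ones you name: the $\epsilon_1$-dependent equivalence $g\sim V^2g^{\SS}$, and the comparability of $\Omega^2$ with an angle-independent function in the $\Hb$ and $\R$ measures.
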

\begin{remark}
    Since for all quantities in interest, it holds that $\nabla\psi^c=0$, we obtain have 
    \begin{equation}
        \sum_{1\leq i\leq n}\lnm \nabla^i\left({\psi_1\psi_2}\right)\rnm^2_{\LSS^2(X)}\lesssim  \sum_{j=1,2}\sum_{i\leq n}\lnm \nabla^i\ol{\psi_j}\rnm^2_{\LSS^2(X)}.
    \end{equation}
\end{remark}

\subsection{Estimates of spherical norms}

We first estimate the metric components $b,g,\Omega$.
\begin{proposition}
    The following estimate holds for the metric perturbation components:
    \begin{equation}
        \sum_{i\leq 6}\lnm \nabla^i\ol{g},\nabla^i b,\nabla^i\ol{\Omega^2}\rnm^2_{\LSS^2(S_{U,V})}\leq C(B,\epsilon_1)\epsilon^{2-\delta}V^a W_0(U,V)^{-1}.
    \end{equation}
\end{proposition}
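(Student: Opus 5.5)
The plan is to integrate the first variational equations for the metric along the outgoing direction $e_4$, as in Proposition \ref{R_I_Estimate_metric}, now with the $\LSS$ norms. In the $(U,V)$ gauge one has the schematic transport equations
\[
\Lie_V \ol{g}_{AB}=2\,\ol{\Omega^2(\Omega^{-1}\chi)}_{AB},\qquad \Lie_V b^A=2\,\Omega^2(\etab^A-\eta^A),\qquad \partial_V\ol{\log\Omega^2}=-4\,\ol{\Omega\omega}.
\]
Each right-hand side is either $\Omega^2$ times a difference of a Ricci coefficient in $\varphi_1$, or a difference of a product that Lemma \ref{R_II_Estimate_Lemma_for_difference_of_Product} splits into such differences. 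The transport estimate \eqref{R_II transport estimate e_4} is applied to $\nabla^i$ of these quantities for $i\leq 6$. The commutators $[\Lie_V,\nabla^i]$ are handled by Corollary \ref{CORROLLARY_COMMUTATION_FORMULA_LIE}; they produce $\sum\nabla^{i_1}(\Omega\chi)\nabla^{i_2}\ol{\gamma}$ with $\gamma\in\{g,b,\log\Omega^2\}$. The data on the past boundary $\Sigma^{\epsilon_1}$ are controlled by the Region I end values (Corollary \ref{R_I End Values} and the lemma after it), giving $\epsilon^{2-\delta}V^a$.

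First step: make a local bootstrap assumption $\sum_{i\leq 6}\|\nabla^i(\ol g,b,\ol{\Omega^2})\|^2_{\LSS^2(S)}\leq B_1\epsilon^{2-\delta}V^aW_0^{-1}$. This gives $\Omega^2\sim V^\k$, the $g$ comparison, and $\|\nabla b\|_{\LS^\infty_1}\lesssim 1$, so the hypotheses of the transport lemma and of Lemma \ref{R_II_Estimate_Lemma_for_difference_of_Product} hold.

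Second step: bound the sources. The terms $\nabla^i\ol{\varphi_1}$ with $i\leq 6$ are, by \eqref{RII_Bootstrap_Assumptions} for $\Phi_1=\nabla\varphi_1$ with at most five more derivatives, controlled in $\LSS^2(H_U^V)$ by $B\epsilon^{2-\delta}V^aW_0^{-1}$. The undifferentiated $\ol{\varphi_1}$ is recovered from $\nabla\ol{\varphi_1}$ through spherical symmetry of the background together with a lower-order transport estimate for the mean. The metric-difference terms $\ol{\gamma}\cdot\varphi^c$ and the commutator terms only carry the unknown $\ol\gamma$ itself in $\LSS^2(H_U^V)$. Lemma \ref{R_II Key Lemma Corollary} converts the sphere bootstrap into an $H$-flux bound with a small factor $\epsilon_2B_1$. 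Choosing $D$ large then absorbs these terms, giving $B_1\lesssim C(B,\epsilon_1)$.

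The main obstacle is the top-order contributions $\nabla^6(\Omega\chi)$ and $\nabla^6\log\Omega$ in the commutators, and the fact that $b$ has signature $1+\k$ and grows like $V^\k$. For the first, I would use $\nabla\log\Omega^2=\eta+\etab$, so that $\nabla^6\ol{\log\Omega}$ is reduced to $\nabla^5(\eta,\etab)$, which the bootstrap controls. The term $\nabla^6(\Omega\chi)$ is paired with a low-order sup of $\ol\gamma$ and handled by the product lemma. For $b$, the weight $V^{2-2S(b)}$ is checked to compensate exactly, using $\Lie_V b=2\Omega^2(\etab-\eta)$ with $S(\Omega^2\eta)=S(b)-1$; the $\Lie_b$ contributions enter only through the already controlled $\nabla b$.
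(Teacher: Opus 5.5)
Your proposal is correct and follows essentially the paper's argument: transport the first-variation equations for $\ol{\Omega^2}$, $b$, $\ol g$ along $\Omega e_4=\partial_V$ with the Region II estimate \eqref{R_II transport estimate e_4}, commute with $\nabla^i$, and bound the sources by the flux bootstrap for $\nabla^{\le 6}$ of $\Omega\omega,\eta,\etab,\Omega\chi$. The differences are minor. The paper treats the terms linear in the unknown ($4\Omega\omega\,\ol{\Omega^2}$, $\Omega\chi\cdot b$) through the $\psi_2\cdot\ol{\Phi_2}$ structure of that transport lemma, proceeding $\ol{\Omega^2}$, then $b$, then $\ol g$ by induction on derivatives, rather than through your auxiliary bootstrap with Lemma \ref{R_II Key Lemma Corollary}; both rely on the same large-$D$ mechanism. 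The paper also transports $\nabla^6\Omega^2$ with source $\nabla^6\ol{\Omega\omega}$ instead of using $\nabla\log\Omega^2=\eta+\etab$.
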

\begin{proof}
    Recall that  
    \begin{equation}
        \begin{aligned}
            \Lie_v b=2\Omega^2(\etab-\eta),\ \Lie_v\ol{g_{AB}}=2\ol{\Omega\chi_{AB}},\ \Lie_v\ol{\Omega^2}+4\Omega\omega\ol{\Omega^2}=-4\ol{\Omega\omega}(\Omega^2)^c.
        \end{aligned}
    \end{equation}
    For $\ol{\Omega^2}$, by transport estimate \eqref{R_II transport estimate e_4},
    \begin{equation*}
        \begin{aligned}
            \lnm \ol{\Omega^2}\rnm^2_{\LSS^2(S_{U,V})}\lesssim & \epsilon^{2-\delta}V^a+\lnm \ol{\Omega\omega}\cdot(\Omega^2)^c\rnm^2_{\LSS^2(H_U^V)}\lesssim C(\epsilon_1,B)\epsilon^{2-\delta}V^a W_0(U,V)^{-1},
        \end{aligned}
    \end{equation*}
    where we use that $(\Omega^2)^c\lesssim V^\k (z^{-\k}+1)$. 
    For $\epsilon=\epsilon(\epsilon_1,B)$ sufficiently small, $\lnm \Omega^2\rnm^2_{\LSS^2(S_{U,V})}\lesssim C(\epsilon_1) W(U,V)$.
    For higher order derivatives, by commutation formula \eqref{Comm_Formula_D4,nab},
    \begin{equation}
        \Omega\nabla_4\nabla^I\Omega^2=\nabla^I(\Omega\omega\cdot\Omega^2) +\sum_{\substack{i_1+i_2=I\\i_2\geq 1}}\nabla^{i_1}(\Omega\chi)\nabla^{i_2}\Omega^2.
    \end{equation}
    Therefore 
    \begin{equation*}
        \begin{aligned}
            \sum_{1\leq i\leq 6}\lnm \nabla^i\Omega^2\rnm^2_{\LSS^2(S_{U,V})}\lesssim &\epsilon^{2-\delta}V^a+C\left(\epsilon_1\right)\sum_{i\leq 6,j\leq 5}\lnm \nabla^j(\ol{\Omega\chi}),\nabla^i\ol{\Omega\omega} \rnm^2_{\LSS^2(H_U^V)}\\
            \lesssim & C(\epsilon_1,B)\epsilon^{2-\delta}V^a W_0(U,V)^{-1}.
        \end{aligned}
    \end{equation*} 
In the source terms below, the top-order quantities $\nabla^6\varphi_2$ and $\nabla^5K$ are handled by the spacetime gain of Lemma \ref{R_II Key Lemma}, while all lower-order products are controlled by Proposition \ref{R_II_non_top_order_L2S_est}. The metric estimates from the preceding subsection justify the use of Lemma \ref{R_II_ENERGY_ESTIMATE_LEMMA}. Using commutaion formula \eqref{Comm_Formula_D4,nab} for $b$, we obtain 
    \begin{equation}
        \Omega\nabla_4 \nabla^I b\sim \nabla^I(\Omega^2(\etab-\eta))+\nabla^I(\Omega\chi b).
    \end{equation}
    We hence have transport estimate
    \begin{equation}
        \begin{aligned}
            \sum_{i\leq 6}\lnm \nabla^i b\rnm^2_{\LSS^2(S_{U,V})}\leq \epsilon^{2-\delta}V^a+\sum_{i\leq 6}\lnm \nabla^i\left(\eta,\etab,\Omega^2,\Omega\chi\right)\rnm^2_{\LSS^2(H_U^V)}\lesssim B \epsilon^{2-\delta}V^a W_0(U,V)^{-1}.
        \end{aligned}
    \end{equation}
    For $g_{AB}$, we find that 
    \begin{equation}
        \Lie_v\nabla^I\ol{g}\sim \nabla^I\left(\Omega\chi\ol{g}\right) .
    \end{equation}
    The transport estimate holds:
    \begin{equation}
        \sum_{i\leq 6}\lnm \nabla^i\ol g\rnm^2_{\LSS^2(S_{U,V})}\leq \epsilon^{2-\delta}V^a+\sum_{i\leq 6}\lnm \nabla^i\ol{\Omega\chi}\rnm^2_{\LSS^2(H_U^V)}\lesssim B\epsilon^{2-\delta}V^a W_0(U,V)^{-1}. 
    \end{equation}

\end{proof}
By Sobolev inequality \eqref{Sobolev_W22}, we can choose $\epsilon=\epsilon(\epsilon_1,B)$ sufficiently small so that 
\begin{equation}\label{R_II_metric_L4infty}
    \sum_{i\leq 4}\lnm \nabla^i\left(\ol{\Omega^2},\ol{g},{b}\right)\rnm^2_{\LS_1^\infty(S)}\leq \epsilon^{2-2\delta}.
\end{equation}
The metric condition of energy estimate \eqref{R_II Energy Estimates} are thus verified. Consequently, all later applications of the Region II energy estimate may invoke \eqref{R_II Energy Estimates} without separately reproving the bounds on $\dv b$, the lapse, or the angular volume form. 

With the estimates of metric, we can control the difference of covariant derivatives.
\begin{corollary}
    For any $(a,b)$ tensor $\psi\in H^6(\SS)$, it follows that 
    \begin{equation}
        \sum_{1\leq i\leq 6}\lnm \ol{\nabla^i}\psi\rnm^2_{L^2(\SS)}\lesssim \epsilon\lnm \psi\rnm^2_{H^5(\SS)},
    \end{equation}
    where $\ol{\nabla^i}_{A_1\cdots A_i}\psi$ is defined to be the $(a,r+b)$ tensor $(\nabla_g)^i_{A_1\cdots A_i}\psi-(\nabla_{g^c})^i_{A_1\cdots A_i}\psi$.
    
    For any function $f\in H^7(\SS)$, we also have 
    \begin{equation}
        \sum_{1\leq i\leq 7}\lnm \ol{\nabla^i}f\rnm^2_{L^2(\SS)}\lesssim \epsilon\lnm f\rnm^2_{H^6(\SS)}.
    \end{equation}
\end{corollary}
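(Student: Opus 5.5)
We follow the Region I analogue, Corollary \ref{Estimate_of_tilde_nabla}, and write the difference operator as a telescoping sum in which every term contains at least one Christoffel symbol difference. Define $\ol{\Gamma}=\Gamma(g)-\Gamma(g^c)$. In angular coordinates this is schematically $g^{-1}\partial_{\SS}\ol{g}+\ol{g^{-1}}\,\partial_{\SS}g^c$, and it is a genuine tensor. For $i=1$ we have $\ol{\nabla}\psi=\ol{\Gamma}\cdot\psi$, contracted on each index of $\psi$. For $i\ge 2$ we use the recursion
\[
\ol{\nabla^i}\psi=\nabla_g^{i-1}\bigl(\ol{\Gamma}\cdot\psi\bigr)+\ol{\nabla^{i-1}}\bigl(\nabla_{g^c}\psi\bigr).
\]
By induction this expands $\ol{\nabla^i}\psi$ into a sum of terms of the form
\[
\nabla^{j_1}\ol{\Gamma}\cdots\nabla^{j_k}\ol{\Gamma}\cdot\nabla_{g^c}^{m}\psi\cdot(\text{background }\Gamma\text{'s}),\qquad j_1+\dots+j_k+k+m\le i,\quad k\ge 1,\quad m\le i-1.
\]

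Next we bound $\ol{\Gamma}$ using the metric estimate of the preceding proposition, together with its consequence \eqref{R_II_metric_L4infty}. The top term $\nabla^{5}\ol{\Gamma}$ needs $\partial^6\ol{g}$, which that proposition controls in $L^2$. On a single sphere $S_{U,V}$ the weight $V^{2-2S}W_{II}$ is a constant, so the bound transfers to the unweighted $H^6(\SS)$ norm of $\ol{g}$ relative to the round metric. The constant is $C(B,\epsilon_1)\epsilon^{2-\delta}V^aW_0^{-1}$ times a factor from the rescaling $g\sim V^2 g^{\SS}$. This is at most $\epsilon^{1-\delta}$ once $\epsilon=\epsilon(\epsilon_1,B,D,a)$ is small, since $W_0^{-1}\le e^{D\epsilon_1^{\k-1}}$ and $V^a\le 1$ in the bounded region. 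After rescaling, this gives $\|\ol{\Gamma}\|_{H^5(\SS)}\lesssim\epsilon^{1-\delta}$, hence $\lesssim\epsilon^{1/2}$ in square-root form, consistent with the stated $\epsilon\|\psi\|^2$. The background Christoffel symbols are smooth and spherically symmetric, so all their derivatives are $O(1)$.

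With these bounds we estimate each product term in $L^2(\SS)$ by the product Sobolev inequality, as in Lemma \ref{Lemma_Sobolev_Ineq_product}. Because $m\le i-1\le 5$, the factor $\psi$ never carries more than $5$ derivatives. The $\ol{\Gamma}$ factors carry at most $5$ derivatives, and whichever factor has the most derivatives is taken in $L^2$ while the others go in $L^\infty$ via Sobolev ($H^2\subset L^\infty$). Squaring gives $\sum_{1\le i\le6}\|\ol{\nabla^i}\psi\|^2_{L^2}\lesssim\epsilon\|\psi\|^2_{H^5}$.

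For a function $f$ the first derivative gives $\ol{\nabla}f=0$, since $\nabla_A f=\partial_A f$ for any metric. So the first step of the recursion contributes nothing, and $\ol{\nabla^i}f=\ol{\nabla^{i-1}}(df)$. Applying the tensor statement to the one-form $df$, with $i-1\le 6$, gives the second claim, except that we now need $\ol{\nabla^{6}}$ on a tensor. That requires $\nabla^5\ol{\Gamma}$, which is still within the $\partial^6\ol{g}$ control.

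The main obstacle is the derivative count at the top order. In $\ol{\nabla^6}\psi$ we must make sure that no term requires $\partial^7\ol{g}$ and that $\psi$ only needs $H^5$. If the term $\nabla^5\ol{\Gamma}\cdot\psi$ combined with other factors causes trouble, we fall back on pairing $\|\nabla^5\ol{\Gamma}\|_{L^2}\|\psi\|_{L^\infty}$, with $\|\psi\|_{L^\infty}\lesssim\|\psi\|_{H^2}$.
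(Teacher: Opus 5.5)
Your proposal is correct and follows essentially the paper's own argument: the recursion $\ol{\nabla^i}\psi=\nabla_g^{i-1}(\ol{\Gamma}\cdot\psi)+\ol{\nabla^{i-1}}(\nabla_{g^c}\psi)$, the Region II metric bounds (including \eqref{R_II_metric_L4infty}), a Sobolev split that puts the most-differentiated factor in $L^2$ so that only $\nabla^{\le 5}\ol{\Gamma}$ and $\nabla^{\le 5}\psi$ are needed, and the identity $\ol{\nabla^i}f=\ol{\nabla^{i-1}}(\nabla f)$ with $\ol{\nabla}f=0$ for functions. One slip in the weight bookkeeping is harmless: dividing the sphere bound by $W_{II}=V^aW_0$ leaves $C(B,\epsilon_1)\epsilon^{2-\delta}W_0^{-2}$, not $V^aW_0^{-1}$, and in $\R_{II}$ one only has $V\le\epsilon_1^{-1}$, not $V\le 1$; but this is still a constant depending only on $\epsilon_1,D$ times $\epsilon^{2-\delta}$, so it is absorbed by taking $\epsilon$ small, as you intend.
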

\begin{proof}
    For $i=1$, $\ol{\nabla_A}\psi=\ol{g^{-1}\partial_{\SS}g}\psi=\psi\cdot o_{L^\infty(\SS)}(\epsilon^{1-\delta}),$ the statement holds obviously,
    \begin{equation}
        \lnm \ol{\nabla}\psi\rnm^2_{L^2(\SS)}\lesssim \epsilon\lnm \psi\rnm^2_{L^2(\SS)}.
    \end{equation}
    Since we can algebraically decompose the difference as $\ol{\nabla^i}\psi=\nabla_g^{i-1}\ol{\nabla}\psi+\ol{\nabla^{i-1}}\nabla_{g^c}\psi=\nabla_g^{i-1}\left(\ol{g^{-1}\partial_{\SS}g}\psi\right)+\ol{\nabla^{i-1}}\nabla_{g^c}\psi,$ 
    we naturally proceed to estimate it inductively. For the lower-order cases $i\leq 3$, we directly apply the previously established metric estimate \eqref{R_II_metric_L4infty},
    \begin{equation}
        \begin{aligned}
            \lnm \ol{\nabla^i}\psi\rnm^2_{L^2(\SS)}\lesssim & \lnm \nabla_g^{i-1}\left(\ol{g^{-1}\partial_{\SS}g}\psi\right)\rnm^2_{L^2(\SS)}+\epsilon\lnm \nabla_{g^c}\psi\rnm^2_{H^{i-2}(\SS)}\\
            \lesssim & \sum_{j\leq i-1}\epsilon^{2-2\delta}\lnm \nabla^{j}\psi\rnm^2_{L^2(\SS)}+\epsilon\lnm \nabla_{g^c}\psi\rnm^2_{H^{i-2}(\SS)}\lesssim \epsilon\lnm \psi\rnm^2_{H^{i-1}(\SS)}.
        \end{aligned}
    \end{equation}
    When $i\geq 4$, we can use $\lnm \psi\rnm_{W^{i-3,\infty}(\SS)}\lesssim \lnm \psi\rnm_{H^{i-1}(\SS)}$ for those term cannot be estimated by \eqref{R_II_metric_L4infty} to derive that
    \begin{equation}
        \begin{aligned}
            \lnm \ol{\nabla^i}\psi\rnm^2_{L^2(\SS)}\lesssim & \lnm \nabla_g^{i-1}\left(\ol{g^{-1}\partial_{\SS}g}\psi\right)\rnm^2_{L^2(\SS)} +\epsilon\lnm \nabla_{g^c}\psi\rnm^2_{H^{i-2}(\SS)}\\
            \lesssim & \sum_{j\leq 3}\epsilon\lnm \nabla^{i-1-j}\psi\rnm^2_{L^2(\SS)}+\sum_{4\leq j\leq i}\lnm \nabla^{i-1-j}\psi\rnm^2_{L^\infty(\SS)}\epsilon +\epsilon\lnm \nabla_{g^c}\psi\rnm^2_{H^{i-2}(\SS)}\\
            \lesssim &  \epsilon\lnm \psi\rnm^2_{H^{i-1}(\SS)}.
        \end{aligned}
    \end{equation}
    To establish the second statement, we simply notice the identity $\ol{\nabla^i}f=\ol{\nabla^{i-1}}(\nabla f)$.
\end{proof}
By Lemma \ref{R_II Key Lemma Corollary}, when $D=D(\epsilon_1,a,B)$ is sufficiently large, difference of lapse, $\ol{\Omega^2}$, satisfies
\begin{equation}
    \lnm \nabla^i\ol{\Omega^2}\rnm^2_{\LSS^2(H_U^V)}\leq  \epsilon^{2-\delta} V^a\cdot W_0(U,V)^{-1},\, \text{for}\, i\leq 6.
\end{equation}

We are now formally prepared to systematically estimate the various non-top-order terms. 
\begin{proposition}\label{R_II_non_top_order_L2S_est}
    With notations \eqref{R_II_Convention}, the following relation holds 
    \begin{equation}
        \sum_{i\leq 5}\lnm\nabla^i\ol\varphi \rnm^2_{\LSS^2(S_{U,V})}+\sum_{j\leq 4}\lnm\nabla^j\ol\Psi \rnm^2_{\LSS^2(S_{U,V})}\lesssim B\epsilon^{2-\delta}W_0(U,V)^{-1}.
    \end{equation}
\end{proposition}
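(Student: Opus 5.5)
The plan is to derive the sphere bounds from the Region II bootstrap fluxes \eqref{RII_Bootstrap_Assumptions} by integrating the null structure and Bianchi equations along one null direction. For each quantity we pick whichever of \eqref{R_II transport estimate e_3} or \eqref{R_II transport estimate e_4} has its flux controlled by the bootstrap. Data come from the initial-value lemma on $z=\epsilon_1$: $S_{U_0(V),V}$ for $\nabla_3$ equations and $S_{U,V_0(U)}$ for $\nabla_4$ equations. Background products are handled with Lemma \ref{R_II_Estimate_Lemma_for_difference_of_Product}, and connection differences with the preceding corollary on $\ol{\nabla^i}$.

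\textbf{Ricci coefficients and scalar derivatives.} Consider first $\varphi_1$, which obeys a $\Omega\nabla_4$ equation. Take $\Omega\chi$, $\etab$ and $\Omega D_4\phi$ with their $\nabla_4$ equations; $\nabla\phi$ is treated through $\nabla_A(\Omega D_4\phi)$, and $\Omega\omega$ through its $\nabla_3$ equation. Commute with $\nabla^i$, $i\le5$, using \eqref{Comm_Formula_D4,nab} and \eqref{Comm_Formula_D3,nab}, and take differences with the background. Each source is then a product involving at least one $\ol{\cdot}$ factor, a curvature difference $\ol\Psi$, a first derivative of $\ol\varphi$, or $\ol{\Omega^2}$.

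By Lemma \ref{R_II_Estimate_Lemma_for_difference_of_Product} these sources are bounded in $\LSS^2(H_U^V)$ or $\LSS^2(\Hb_V^U)$ by the bootstrap quantities. The bootstrap controls $\nabla^{i}\ol{\Phi_1}$ and $\nabla^i\ol{\Phi_2}$ with $\Phi\ni\nabla\varphi$, which covers up to six derivatives of $\varphi$ and five of $\Psi$ on the cones. The metric terms use the previous proposition together with the flux bound for $\ol{\Omega^2}$. Quantities with a $\nabla_3$ equation, namely $\Omega^{-1}\chib$, $\Omega^{-1}\omegab$, $\eta$ and $\Omega^{-1}D_3\phi$, are handled symmetrically through \eqref{R_II transport estimate e_3} and $\Phi_2$-fluxes. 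When a cone flux along the needed direction is not directly available, Lemma \ref{R_II Key Lemma} or Lemma \ref{R_II Key Lemma Corollary} converts the available flux into a small spacetime bound, which is then reinserted.

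\textbf{Curvature at four derivatives.} Each $\Psi_1$ satisfies a $\nabla_3$ Bianchi equation whose right-hand side contains $\mathcal D\Psi_2$. With $j\le4$ angular derivatives this costs at most five derivatives of $\Psi_2$, which lie in the $\Hb$-flux. Similarly, each $\Psi_2$ satisfies a $\nabla_4$ equation, and its right-hand side is controlled by $\Psi_1$ in the $H$-flux. Alternatively, $\sigma^r$, $K$, $\Omega\beta^r$ and $\Omega^{-1}\betab^r$ can be read off from the Gauss–Codazzi identities \eqref{Gauss_Codazzi} once the $\varphi$ bounds are known, which avoids transport entirely.

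\textbf{Main obstacle.} The hardest part is the loss of weight. The transport estimates only return $V^aW_0^{-1}$ up to the $B$-dependent constant, so the linear terms $\psi_i\ol\Phi$ with the background coefficients $\Omega\chi^c$ and $\Omega^{-1}\chib^c$, of size $V^{-1}$, must be absorbed. The plan is to absorb them into the damping produced by $\partial\log W_0$, which is proportional to $D V^{-1}$. This is exactly the mechanism in the proof of Lemma \ref{R_II_ENERGY_ESTIMATE_LEMMA}, so $D$ is taken large depending on $\epsilon_1$ and the universal background bounds. With that choice the stated bound, with constant $\lesssim B$, follows after summing over $i\le5$ and $j\le4$.
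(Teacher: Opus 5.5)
There is a genuine gap in your choice of transport directions. You propagate $\Omega D_4\phi$ by a $\nabla_4$ equation, and $\Omega^{-1}D_3\phi$, $\Omega^{-1}\omegab$ by $\nabla_3$ equations, but the system contains none of these equations. The wave equation \eqref{intrinsic_wave_equation} only gives $e_3(D_4\phi)$ and $e_4(D_3\phi)$, and the lapse only satisfies $\nabla_3\omega$ and $\nabla_4\omegab$ equations. Applying \eqref{R_II transport estimate e_4} to $\ol{\Omega D_4\phi}$ would need the $H_U^V$-flux of $\Omega e_4(\Omega e_4\phi)$. That second derivative is controlled by no equation and no bootstrap quantity; in the background it is the quantity that blows up as $v/(-u)\to 0$. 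The same problem arises for $e_3(e_3\phi)$ and $e_3\omegab$.

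These three quantities are exactly what separates the two families in \eqref{R_II_Convention}. $\Omega\omega$ and $\Omega D_4\phi$ must be transported by $\Omega^{-1}\nabla_3$, with sources $\nabla\varphi_2$ and $\Psi_2$ lying in the $\Hb$-fluxes of \eqref{RII_Bootstrap_Assumptions}. $\Omega^{-1}\omegab$ and $\Omega^{-1}D_3\phi$ must be transported by $\Omega\nabla_4$, with sources $\nabla\varphi_1$ and $\Psi_1$ in the $H$-fluxes. The paper's remark that no $\omega^2$ or $\omegab^2$ terms occur is what makes every quadratic source of the form $\varphi_1\varphi_2$ compatible with this split. Your remaining assignments can be made compatible with it; for $\eta$ and $\etab$ this goes through $\eta+\etab=2\nabla\log\Omega$.

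Second, your claim that every source is bounded by the bootstrap quantities fails for undifferentiated differences. The fluxes in \eqref{RII_Bootstrap_Assumptions} involve $\nabla\ol\varphi$ and $\ol\Psi$, never $\ol\varphi$ itself. Yet the linearized sources contain zeroth-order differences of the other family. For example, the $\nabla_4$ equation for $\Omega^{-1}D_3\phi$ contains $\ol{\Omega\omega}\,(\Omega^{-1}D_3\phi)^c$ and $\ol{\Omega D_4\phi}\,(\Omega^{-1}\tr\chib)^c$.

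Lemma \ref{R_II Key Lemma} does not supply these terms. It bounds a norm on $\R_{U,V}$, whereas \eqref{R_II transport estimate e_3}--\eqref{R_II transport estimate e_4} need a flux on a single cone. The missing step is a continuity argument on the sphere norms themselves, as in the paper:
\begin{itemize}
\item assume the sphere bound with a constant $B'$ (the paper does this for $\ol{\varphi_2}$ and $\ol{\Psi_2}$);
\item for $D$ large, convert the sphere bounds into small $H$- and $\Hb$-fluxes of the undifferentiated differences via Lemma \ref{R_II Key Lemma Corollary};
\item transport the two families in a consistent order and improve $B'$.
\end{itemize}

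Finally, what you call the main obstacle is already handled. Absorbing the linear terms $\psi_i\cdot\ol\Phi$ into the $W_0$-damping is built into the transport lemma, which is stated with these terms on the left and $D$ large. The real difficulty is the cross-coupling of zeroth-order quantities between the two families.
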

\begin{proof} 
    We first assume \begin{equation}
        \sum_{i\leq 5,j\leq 4}\lnm\nabla^i\ol{\varphi_2} ,\nabla^j\ol{\Psi_2}\rnm^2_{\LSS^2(S_{U,V})}\lesssim B^\prime\epsilon^{2-\delta}W_0(U,V)^{-1},
    \end{equation}
    then Lemma \ref{R_II Key Lemma Corollary} implies that 
    \begin{equation}
        \sum_{i\leq 5,j\leq 4}\lnm\nabla^i\ol{\varphi_2} ,\nabla^j\ol{\Psi_2}\rnm^2_{\LSS^2(H_U^V)}\leq \epsilon^{2-\delta}W_0(U,V)^{-1}.
    \end{equation}
    Crucially, in the full set of null structure equations, there are absolutely no isolated $\omega^2$ or $\omegab^2$ terms present. Consequently, all quadratic product terms involving the generic connection coefficients $\psi$ and scalar derivatives $D\phi$ can be cleanly rewritten in the structured form $\varphi_1\varphi_2$.
    We have schematical equations 
    \begin{equation}
        \begin{aligned}
            \Omega^{-1}\nabla_3\varphi_1+\varphi_2\varphi_1=\nabla\varphi_2+\varphi_2^2+\Psi_2,&\ \Omega\nabla_4\varphi_2+\varphi_1\varphi_2=\nabla\varphi_1+\varphi_2^2+\Psi_1,\\
            \Omega^{-1}\nabla_3\Psi_1+\varphi\Psi_1=\nabla\Psi_2+\varphi\Psi_2+\varphi\nabla\varphi_2,&\ \Omega\nabla_4\Psi_2+\varphi\Psi_2=\nabla\Psi_1+\varphi\Psi_2+\varphi\nabla\varphi_1.
        \end{aligned}
    \end{equation}
    The commutation formula are simply written as 
    \begin{equation}\label{R_II_simplified_Comm_for}
        \begin{aligned}
            \Omega^{-1}\nabla_3\nabla^i\psi=&\sum_{i_1+i_2+i_3=i}\nabla^{i_1}\varphi_2^{i_2}\nabla^{i_3}\left(\Omega^{-1}\nabla_3\psi\right)+\nabla^{i_1}\varphi_2^{i_2+1}\nabla^{i_3}\psi,\\
            \Omega\nabla_4\nabla^i\psi=&\nabla^i\left(\Omega^{-1}\nabla_3\psi\right)+\sum_{i_1+i_2+i_3=i}\nabla^{i_1}\varphi_2^{i_2+1}\nabla^{i_3}\psi.
        \end{aligned}
    \end{equation}
    With transport estimates \eqref{R_II transport estimate e_3} and \eqref{R_II transport estimate e_4}, we obtain 
    \begin{equation}
        \begin{aligned}
            \lnm \ol{\varphi_2}\rnm^2_{\LSS^2(S_{U,V})}\lesssim \epsilon^{2-\delta}V^a+\lnm \nabla\varphi_1,\ol{\varphi_2},\ol{\varphi_1},\ol{\Psi_1}\rnm^2_{\LSS^2(H_U^V)} \lesssim B\epsilon^{2-\delta}V^aW_0(U,V)^{-1},
        \end{aligned}
    \end{equation}
    and for higher order derivatives, we can use formula \eqref{R_II_simplified_Comm_for} to obtain
    \begin{equation}
        \sum_{1\leq i\leq 5}\lnm \nabla^i\varphi_2\rnm^2_{\LSS^2(S_{U,V})}\lesssim \epsilon^{2-\delta}V^a+\sum_{i\leq 5}\lnm \nabla^{i}\left(\nabla\varphi_1,\ol{\varphi_2},\ol{\varphi_1},\ol{\Psi_1}\right)\rnm^2_{\LSS^2(H_U^V)} \lesssim B\epsilon^{2-\delta}V^aW_0(U,V)^{-1}.
    \end{equation}
    Then for $\varphi_1$, the equation for the difference takes the form: $\Omega^{-1}\nabla_3\ol{\varphi_1}+\varphi_2\ol{\varphi_1}=\nabla\varphi_2+\ol{\varphi_2^2}+\ol{\Psi_2}-\ol{\Omega^{-1}\nabla_3}\varphi_1^c-\ol{\varphi_2}\varphi_1^c.$
    Because $\lnm\varphi^c,\Psi^c\rnm_{\LSS^\infty(S_{U,V})}\lesssim 1$, it holds that 
    \begin{equation}
        \lnm \ol{\varphi_1}\rnm^2_{\LSS^2(S_{U,V})}\lesssim \epsilon^{2-\delta}V^a+\lnm \nabla\varphi_2,\ol{\varphi_2},\ol{\Psi_2}\rnm^2_{\LSS^2(H_U^V)} \lesssim B\epsilon^{2-\delta}V^aW_0(U,V)^{-1}.
    \end{equation}
    With \eqref{R_II_simplified_Comm_for}, the higher order derivatives satisfy
    \begin{equation}
        \sum_{1\leq i\leq 5}\lnm \nabla^i\varphi_1\rnm^2_{\LSS^2(S_{U,V})}\lesssim \epsilon^{2-\delta}V^a+\sum_{i\leq 5}\lnm \nabla^{i}\left(\nabla\varphi_2,\ol{\varphi_2},\ol{\Psi_1}\right)\rnm^2_{\LSS^2(H_U^V)} \lesssim B\epsilon^{2-\delta}V^aW_0(U,V)^{-1}.
    \end{equation}
    For $\Psi_1,\Psi_2$, repeating the arguments of $\varphi_1$, $\varphi_2$ respectively gives the estimates.
\end{proof}

\subsection{Top order energy estimates}

We are now ready to estimate the top-order terms.
\begin{proposition}\label{R_II_top_order_est_scalar_field}
    The following inequalities hold:
    \begin{equation}
        \lnm \nabla^6(\Omega D_4\phi)\rnm^2_{\LSS^2(H_U^V)}+\lnm \nabla^7\phi\rnm^2_{\LSS^2(\Hb_V^U)}\lesssim \epsilon^{2-\delta} V^a\cdot W_0(U,V)^{-1},
    \end{equation}
    \begin{equation}
        \lnm \nabla^7\phi\rnm^2_{\LSS^2(H_U^V)}+\lnm \nabla^6(\Omega^{-1}D_3\phi)\rnm^2_{\LSS^2(\Hb_V^U)}\lesssim \epsilon^{2-\delta} V^a\cdot W_0(U,V)^{-1}+\lnm \nabla^6(\Omega\omega)\rnm^2_{\LSS^2(\mathcal{R}_{U,V})}.
    \end{equation}
\end{proposition}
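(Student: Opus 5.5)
Both inequalities are Bianchi-pair energy estimates to which Lemma \ref{R_II_ENERGY_ESTIMATE_LEMMA} applies. Initial data come from the Region I interface bounds, and the bulk sources are made small by Lemma \ref{R_II Key Lemma}.

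\textbf{First pair.} We use $\Phi_1=\nabla^6\ol{\Omega D_4\phi}$ and $\Phi_2=\nabla^5\ol{\nabla^2\phi}$. Start from the wave equation \eqref{intrinsic_wave_equation}, written in the $(U,V)$ gauge:
\[
\Omega^{-1}\nabla_3(\Omega D_4\phi)+\tfrac12\Omega^{-1}\tr\chib\,\Omega D_4\phi=\Delta\phi-\tfrac12\Omega\tr\chi\,\Omega^{-1}D_3\phi+2\eta\cdot\nabla\phi .
\]
Apply $\nabla$ and use $[\Omega^{-1}\nabla_3,\nabla]$ from Lemma \ref{LEMMA_COMMUTATION_FORMULAE_NABLA}. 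This gives $\Omega^{-1}\nabla_3\nabla(\Omega D_4\phi)=\dv\nabla^2\phi+\dots$. Its partner is
\[
\Omega\nabla_4\nabla_A\nabla_B\phi=\nabla_A\nabla_B(\Omega D_4\phi)+\dots ,
\]
which follows from \eqref{Comm_Formula_D4,nab}. Together these form a Hodge-dual pair $(\mathcal D,{}^*\mathcal D)=(\dv,-\nabla)$ up to symmetrization. Subtract the spherically symmetric background, where $\nabla\phi^c=0$ and $\nabla(\Omega D_4\phi)^c=0$, and commute five more times using \eqref{R_II_simplified_Comm_for}. The resulting sources $F_1,F_2$ are sums of three kinds of terms:
\begin{itemize}
\item lower-order products $\nabla^{i_1}\varphi^{i_2}\nabla^{i_3}\ol\varphi$, controlled in $\LSS^2(\R)$ by Proposition \ref{R_II_non_top_order_L2S_est} and Lemma \ref{R_II_Estimate_Lemma_for_difference_of_Product};
\item connection-difference terms $\ol{\nabla^i}$, handled by the corollary on $\ol{\nabla^i}$;
\item the top-order terms $\nabla^6(\Omega\tr\chi,\eta,\Omega^{-1}\tr\chib)$ times $O(1)$ background factors, together with $\nabla^5K$ from commutators.
\end{itemize}

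The top-order terms in the third group appear among the $\Phi_1$ or $\Phi_2$ of the bootstrap \eqref{RII_Bootstrap_Assumptions}, so their fluxes are bounded by $B\epsilon^{2-\delta}V^aW_0^{-1}$. Lemma \ref{R_II Key Lemma} with $\epsilon_2\ll B^{-1}$ then bounds their bulk norms by $\epsilon^{2-\delta}V^aW_0^{-1}$. The $\Sigma^{\epsilon_1}$ data are $\le\epsilon^{2-\delta}V^a$ by the Region I to II lemma, and this bound is at most $\epsilon^{2-\delta}V^aW_0^{-1}$. Choosing $D$ large as in Lemma \ref{R_II_ENERGY_ESTIMATE_LEMMA} absorbs the terms linear in $\Phi_i$. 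This gives the first inequality with no dependence on $B$.

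\textbf{Second pair.} We use $\Phi_1=\nabla^6\ol{\nabla\phi}$ and $\Phi_2=\nabla^6\ol{\Omega^{-1}D_3\phi}$. The structure is symmetric:
\[
\Omega^{-1}\nabla_3\nabla^2\phi=\nabla^2(\Omega^{-1}D_3\phi)+\dots,\qquad
\Omega\nabla_4\nabla(\Omega^{-1}D_3\phi)=\dv\nabla^2\phi+\dots ,
\]
the second coming from the other half of \eqref{intrinsic_wave_equation}. The difference is that the $e_4$-equation for $\Omega^{-1}D_3\phi$ contains $\omega$. In the $(U,V)$ gauge, $\Omega\nabla_4(\Omega^{-1}D_3\phi)$ produces $2\Omega\omega\,\Omega^{-1}D_3\phi$. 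Commuting six derivatives therefore yields $\nabla^6(\Omega\omega)\cdot(\Omega^{-1}D_3\phi)^c$, where the background factor is nonvanishing and of size $O(V^{-1})$.

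The term $\nabla^6\Omega\omega$ is not among the top-order bootstrap quantities: only $\nabla(\Omega\omega)$ at five derivatives appears in $\Phi_1$. So it cannot be absorbed, and we keep it explicitly as $\|\nabla^6(\Omega\omega)\|^2_{\LSS^2(\R_{U,V})}$ on the right-hand side. This is exactly the extra term in the statement; it is later closed by the separate $\Omega\omega$ argument. All other sources are treated as in the first pair, including $\nabla^6\eta$, $\nabla^6\Omega^{-1}\tr\chib$, $\nabla^5K$, and now $\nabla^7\phi$ times $O(1)$ factors.

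\textbf{Main obstacle.} The main difficulty is bookkeeping the derivative loss. The commuted sources contain seventh-order scalar terms, $\nabla^7\phi$, and sixth-order Ricci terms. We must check that each of them is either one of the energy pair itself, absorbed by the large-$D$ damping, or a bootstrap flux quantity that gains smallness through Lemma \ref{R_II Key Lemma}. The only exception should be $\nabla^6\Omega\omega$. In doing so, we must also verify that no $\nabla^6\Omega^{-1}\omegab$ term arises in the first pair, since that would require a further extra term.
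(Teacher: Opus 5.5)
Your proposal is correct and follows the paper's proof. The paper likewise runs two Bianchi-pair energy estimates via Lemma~\ref{R_II_ENERGY_ESTIMATE_LEMMA}, controls the top-order sources by the bootstrap plus Lemma~\ref{R_II Key Lemma}, and leaves $\nabla^6(\Omega\omega)$ on the right of the second inequality. Two small corrections: $\nabla^6(\Omega\omega)=\nabla^5\nabla(\Omega\omega)$ \emph{is} a bootstrap flux quantity along $H$ (since $\nabla(\Omega\omega)\in\Phi_1$ with $i\le5$), so it could be absorbed by Lemma~\ref{R_II Key Lemma} just like $\nabla^6\eta$ — keeping it is harmless, and the paper later closes it exactly that way; and your lower-order terms need the large-$D$ gain of Lemma~\ref{R_II Key Lemma Corollary}, not only Proposition~\ref{R_II_non_top_order_L2S_est}, for their contribution to be independent of $B$.
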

\begin{proof}
    Using commutaion formula \eqref{R_II_simplified_Comm_for}, we can write schematical equations for pair $(\Omega D_4\phi,\nabla\phi)$:
    \begin{equation}
        \begin{aligned}
            &\Omega^{-1}\nabla_3\nabla^6(\Omega D_4\phi)-\dv \nabla^7\phi\\
            &\qquad\qquad=\sum_{i_1+i_2+i_3=6}\nabla^{i_1}\varphi_2^{i_2+1}\nabla^{i_3}(\Omega D_4\phi,\varphi_2)+\sum_{i_1+i_2+i_3=5}\nabla^{i_1}K^{i_2+1}\nabla^{i_3+1}\phi,\\
            &\Omega\nabla_4\nabla^7\phi-\nabla^7\Omega D_4\phi\\
            &\qquad\qquad=\sum_{i_1+i_2+i_3=6}\nabla^{i_1}\varphi_2^{i_2+1}\nabla^{i_3}\varphi_2+\sum_{i_1+i_2+i_3=5}\nabla^{i_1}K^{i_2+1}\nabla^{i_3}\Omega D_4\phi.
        \end{aligned}
    \end{equation}
    Then the energy estimates from Lemma \ref{R_II_ENERGY_ESTIMATE_LEMMA} give
    \begin{equation}
        \begin{aligned}
            &\lnm \nabla^6\Omega D_4\phi\rnm^2_{\LSS^2(H_U^V)}+\lnm \nabla^7\phi\rnm^2_{\LSS^2(\Hb_V^U)}\\
            &\qquad\qquad\lesssim  \epsilon^{2-\delta} V^a+\sum_{i\leq 6,j\leq 5}\lnm \nabla^i\ol{\varphi_2},\nabla^j\ol{\Psi_2},\nabla^j(\Omega D_4\phi)\rnm^2_{\LSS^2(\R_{U,V})}.
        \end{aligned}
    \end{equation}
    With Lemma \ref{R_II Key Lemma}, we can estimate the top order terms $\nabla^6\varphi_2$ and $\nabla^5K$. Using Corollary \ref{R_II Key Lemma Corollary} and Proposition \ref{R_II_non_top_order_L2S_est}, the non-top-order terms can be estimated. We have thus 
    \begin{equation}
        \lnm \nabla^6\Omega D_4\phi\rnm^2_{\LSS^2(H_U^V)}+\lnm \nabla^7\phi\rnm^2_{\LSS^2(\Hb_V^U)}
            \lesssim  \epsilon^{2-\delta} V^a W_0(U,V)^{-1}.
    \end{equation}
For $(\nabla\phi,\Omega^{-1} D_3\phi)$, the equations read
\begin{equation}
    \begin{aligned}
        &\Omega^{-1}\nabla_3\nabla^7\phi-\nabla^7(\Omega^{-1}D_3\phi)\\ &\qquad\qquad=\sum_{i_1+i_2=6}\nabla^{i_1}\varphi_2^{i_2+2}+\sum_{i_1+i_2+i_3=5}\nabla^{i_1}K^{i_2+1}\nabla^{i_3}(\Omega^{-1}D_3\phi),\\
        &\Omega \nabla_4\nabla^6(\Omega^{-1}D_3\phi)-\dv\nabla^7\phi\\ &\qquad\qquad=\sum_{i_1+i_2+i_3=6}\nabla^{i_1}\varphi_2^{i_2+1}\nabla^{i_3}(\Omega \omega,\varphi_2)+\sum_{i_1+i_2+i_3=5}\nabla^{i_1}K^{i_2+1}\nabla^{i_3+1}\phi.
    \end{aligned}
\end{equation}
With similar arguments, we obtain 
\begin{equation}
    \lnm \nabla^7\phi\rnm^2_{\LSS^2(H_U^V)}+\lnm \nabla^6\Omega^{-1}D_3\phi\rnm^2_{\LSS^2(\Hb_V^U)}\lesssim \epsilon^{2-\delta} V^a\cdot W_0(U,V)^{-1}+\lnm \nabla^6\Omega\omega\rnm^2_{\LSS^2(\mathcal{R}_{U,V})}.
\end{equation}
\end{proof}
\begin{proposition}
    We establish the following estimate for the relevant quantities:
    \begin{equation}
        \lnm \nabla^6(\Omega\omega)\rnm^2_{\LSS^2(H_U^V)}\lesssim \epsilon^{2-\delta} V^a\cdot W_0(U,V)^{-1}+\lnm \nabla^5(\Omega\beta^r)\rnm^2_{\LSS^2(H_U^V)}.
    \end{equation}
\end{proposition}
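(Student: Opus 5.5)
The estimate for $\nabla^6(\Omega\omega)$ is the Region II counterpart of the auxiliary-function argument used for $\Omega\omegab$ in Proposition \ref{R_I_Energy_Est_omegab} and for $\Omega^{-1}\omega$ at the end of Region I, with the roles of $e_3$ and $e_4$ exchanged. There is no direct $\Omega\nabla_4$ equation for $\Omega\omega$ with a good top-order structure: the $\nabla_3\omega$ equation contains $K$ and therefore loses a derivative. I therefore introduce an auxiliary function $\Omega\omega^\dagger$ defined by
\[\Omega^{-1}\nabla_3(\Omega\omega^\dagger)=\tfrac12\sigma^r,\qquad \Omega\omega^\dagger|_{\Sigma^{\epsilon_1}}=\text{(Region I value, which is small)},\]
and the renormalized one-form
\[\omega^r_A=\nabla_A\ol{\Omega\omega}+{}^*\nabla_A(\Omega\omega^\dagger)-\tfrac12\ol{\Omega\beta^r}_A .\]
Using the $\Omega^{-1}\nabla_3\omega$ null structure equation, the Bianchi equation for $\Omega^{-1}\nabla_3(\Omega\beta^r)$, the commutation formula \eqref{R_II_simplified_Comm_for}, and $\cl\eta=\sigma^r$, the terms $\nabla K$ and ${}^*\nabla\sigma^r$ cancel. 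This leaves the schematic transport equation
\[\Omega^{-1}\nabla_3\omega^r+\varphi_2\,\omega^r\sim \ol{\varphi\Psi}+\ol{\varphi\nabla\varphi}+\ol{\varphi^3}+\nabla\varphi_2\,(\ol{\Omega\omega},\Omega\omega^\dagger),\]
in which no derivative of curvature and no $\nabla^2$ of a Ricci coefficient appears at top order, except through $\varphi\nabla\varphi$ with $\varphi\in\{\varphi_1,\varphi_2\}$. Renormalized quantities are used so that $D_3D_3\phi$ does not appear.

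The steps are as follows.

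\textbf{Step 1: lower order bounds for $\Omega\omega^\dagger$.} Apply the $e_3$ transport estimate \eqref{R_II transport estimate e_3} to $\nabla^i\Omega\omega^\dagger$ for $i\le 5$, using the bootstrap bound on $\nabla^i\sigma^r$ along $\Hb$. This gives $\LSS^2(S)$ control, and Lemma \ref{R_II Key Lemma Corollary} then converts it to $\LSS^2(H)$ control with arbitrarily small constant.

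\textbf{Step 2: top order estimate for $\omega^r$.} Commute the $\omega^r$ equation with $\nabla^5$ and apply \eqref{R_II transport estimate e_3} to bound $\nabla^5\omega^r$ in $\LSS^2(S_{U,V})$ by three contributions:
\begin{itemize}
\item the Region I boundary data, which are $O(\epsilon^{2-\delta}V^a)$ by the lemma on initial values of $\R_{II}$;
\item the $\LSS^2(\Hb_V^U)$ norms of the sources;
\item the non-top-order bounds of Proposition \ref{R_II_non_top_order_L2S_est}.
\end{itemize}
The top-order sources $\nabla^6\varphi$ and $\nabla^5\Psi$ carry spacetime integrals. They are made small by Lemma \ref{R_II Key Lemma} once $D$ is large, and $\nabla^6(\Omega D_4\phi)$ is controlled by Proposition \ref{R_II_top_order_est_scalar_field}. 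Then Lemma \ref{R_II Key Lemma Corollary} passes from the sphere bound to the $\LSS^2(H_U^V)$ bound for $\nabla^5\omega^r$.

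\textbf{Step 3: Hodge recovery.} Recover $\nabla^6\ol{\Omega\omega}$ from the two Hodge systems
\[\dv\nabla\ol{\Omega\omega}=\dv\omega^r+\tfrac12\dv\ol{\Omega\beta^r},\qquad \cl\nabla\ol{\Omega\omega}=0,\]
together with the analogous system for $\Omega\omega^\dagger$, using the elliptic estimate \eqref{Ellip Est} (valid since the metric bounds \eqref{R_II_metric_L4infty} hold). The term $\nabla^5\ol{\Omega\beta^r}$ on the right is not improved here; it is simply kept, which produces the extra term in the statement.

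\textbf{Main obstacle.} The hardest point is verifying the exact cancellation in the $\omega^r$ equation. I must check that the $\nabla K$ from $\nabla\nabla_3\omega$ and the $\nabla K$ from $\nabla_3\beta^r$ match with coefficient $\tfrac12$, and that the scalar-field terms in $\nabla_3\beta^r$ (namely $\nabla\mathrm{Ric}$) combine with $\nabla(D_3\phi D_4\phi+|\nabla\phi|^2)$ into terms of the form $\varphi\nabla\varphi$ without $D_3D_3\phi$. A second concern is the coefficient of $\nabla^5\omega^r$ in the transport estimate: it must be harmless, which the large $D$ ensures through the $\partial_U\log W_{II}$ damping, exactly as in Lemma \ref{R_II_ENERGY_ESTIMATE_LEMMA}.
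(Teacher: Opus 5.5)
Your overall structure is the paper's: the same auxiliary function $\Omega\omega^\dagger$ with $\Omega^{-1}\nabla_3(\Omega\omega^\dagger)=\frac12\sigma^r$, the same renormalized one-form $\omega^r=\nabla\Omega\omega+{}^*\nabla\Omega\omega^\dagger-\frac12\Omega\beta^r$ with the $\nabla K$ and ${}^*\nabla\sigma^r$ cancellation, and the same Hodge recovery in which $\nabla^5(\Omega\beta^r)$ is simply carried along. The gap is in Step 2, where you route the top-order estimate through a sphere bound. You apply \eqref{R_II transport estimate e_3} to $\nabla^5\omega^r$ to control $\lnm\nabla^5\omega^r\rnm_{\LSS^2(S_{U,V})}$ for every $V$, and only afterwards use Lemma \ref{R_II Key Lemma Corollary} to reach $H_U^V$. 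That sphere estimate needs two inputs you do not have.

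\emph{Top-order sphere data.} You would need sphere data for $\nabla^5\omega^r$, i.e.\ for $\nabla^6\ol{\Omega\omega}$, on $S_{U_0(V),V}$. The interface lemma for $\R_{II}$ (coming from Corollary \ref{R_I End Values}) gives sphere control only up to five derivatives of the Ricci coefficients. Sixth derivatives are controlled only as fluxes on $\Sigma^{\epsilon_1}$.

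\emph{Incoming-cone norms of the sources.} You would need $\Hb_V^U$ norms of all sources. The $\nabla_3\omega$ equation contributes $\frac14\nabla(\Omega^{-1}D_3\phi\,\Omega D_4\phi)$, which survives the renormalization; the paper keeps it as $\varphi\nabla(\Omega D_4\phi)$. After commuting, you must bound $\Omega^{-1}D_3\phi\,\nabla^6(\Omega D_4\phi)$ on incoming cones. But both the bootstrap \eqref{RII_Bootstrap_Assumptions} and Proposition \ref{R_II_top_order_est_scalar_field} control $\nabla^6(\Omega D_4\phi)$ only in $\LSS^2(H_U^V)$.

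Your remark that these sources ``carry spacetime integrals'' made small by Lemma \ref{R_II Key Lemma} does not fit your route. The quantity produced by \eqref{R_II transport estimate e_3} is a single $\Hb_V^U$ integral, not a spacetime integral.

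The fix, which is what the paper does, is to never pass through sphere norms at top order. Integrate the $e_3$-transport identity also in $V$, i.e.\ apply the energy estimate \eqref{R_II Energy Estimates} to the degenerate pair $(\nabla^5\omega^r,0)$. This bounds $\lnm\nabla^5\omega^r\rnm^2_{\LSS^2(H_U^V)}$ by the $\LSS^2(\Sigma^{\epsilon_1}_{U,V})$ flux data, which is exactly what the interface lemma supplies, plus the $\LSS^2(\R_{U,V})$ norm of the sources. In that spacetime norm:
\begin{itemize}
\item $\nabla^6(\Omega D_4\phi)$ is handled by its $H$-flux bound from Proposition \ref{R_II_top_order_est_scalar_field} combined with Lemma \ref{R_II Key Lemma}, which accepts either an $H$- or an $\Hb$-flux bound;
\item $\nabla^6\varphi_2$ and $\nabla^5\Psi_2$ are handled by their $\Hb$-fluxes and the same lemma;
\item lower-order terms are handled by Proposition \ref{R_II_non_top_order_L2S_est} and Lemma \ref{R_II Key Lemma Corollary}.
\end{itemize}
With that change, Steps 1 and 3 go through as you wrote them. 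Also, there is no Region I value of $\Omega\omega^\dagger$ to inherit; simply take it to vanish on the past boundary of $\R_{II}$.
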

\begin{proof}
    Consider auxiliary function $\Omega\omega^\dagger$ defined by  
    \[\Omega^{-1}\nabla_3\Omega\omega^\dagger=\frac{1}{2}\sigma^r,\  \Omega\omega^\dagger|_{H_{-1}}\equiv 0.\]
    By transport estimate \eqref{R_II transport estimate e_3}, the following inequality holds 
    \begin{equation}
        \sum_{i\leq 5}\lnm \nabla^i(\Omega\omega^\dagger)\rnm^2_{\LSS^2(S_{U,V})}\lesssim B\epsilon^{2-\delta} V^a\cdot W_0(U,V)^{-1}.
    \end{equation}
    Let $\omega^r=\nabla\omega+{}^*\nabla\Omega\omega^\dagger-\frac{1}{2}\Omega\beta^r$. The schematic equation then takes the form
    \[\Omega^{-1}\nabla_3\omega^r+\varphi\omega^r=\varphi\Psi_2+\varphi^3+\varphi\nabla(\Omega D_4\phi)+\varphi\nabla\varphi_2,\]
    where $\varphi$ can be $\Omega\omega^\dagger$ additionally.
    Moreover, 
    \[\Omega^{-1}\nabla_3\nabla^5\omega^r+\varphi\nabla^5\omega^r=\sum_{\substack{i_1+i_2+i_3=5}}\nabla^{i_1}\varphi^{i_2}\nabla^{i_3}(\Omega^{-1}\nabla_3\omega^r)+\sum_{\substack{i_1+i_2+i_3=5\\i_3\leq 4}}\nabla^{i_1}\varphi^{i_2+1}\nabla^{i_3}\omega^r.\]
    Energy estimates gives that 
    \begin{equation}
        \begin{aligned}
            \lnm \nabla^5\omega^r\rnm^2_{\LSS^2(H_U^V)}\lesssim & \epsilon^{2-\delta}V^a+\sum_{i\leq 5,j\leq 4}\lnm \nabla^i\ol{\varphi},\nabla^j\left(\omega^r,\ol{\Psi_2}\right)\rnm^2_{\LSS^2(\R_{U,V})}\\
            &+\lnm \nabla^5\Psi_2,\nabla^6\Omega D_4\phi,\nabla^6\varphi_2\rnm^2_{\LSS^2(\R_{U,V})}.
        \end{aligned}
    \end{equation}
    $\nabla^6\Omega D_4\phi$ is bounded by Proposition \ref{R_II_top_order_est_scalar_field}, and $\nabla^5\Psi_2,\nabla^6\varphi_2$ can be estimated by Lemma \ref{R_II Key Lemma}. The non top order terms are estimated by Corollary \ref{R_II Key Lemma Corollary} and Proposition \ref{R_II_non_top_order_L2S_est}. 
    The elliptic estimate finally asserts that $\lnm \nabla^6\Omega\omega\rnm^2_{\LSS^2(H_U^V)}\lesssim \epsilon^{2-\delta} V^a\cdot W_0(U,V)^{-1}+\lnm \nabla^5(\Omega\beta^r)\rnm^2_{\LSS^2(H_U^V)}.$
\end{proof}
\begin{corollary}
    The following estimates hold for the relevant quantities:
    \begin{equation}
        \lnm \nabla^7\phi\rnm^2_{\LSS^2(H_U^V)}+\lnm \nabla^6\Omega^{-1}D_3\phi\rnm^2_{\LSS^2(\Hb_V^U)}+\lnm \nabla^6\varphi\rnm^2_{\LSS^2(\mathcal{R}_{U,V})}\lesssim \epsilon^{2-\delta} V^a\cdot W_0(U,V)^{-1}.
    \end{equation}
\end{corollary}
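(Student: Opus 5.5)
The corollary follows by closing the loop between the two preceding propositions. The second estimate of Proposition \ref{R_II_top_order_est_scalar_field} bounds the pair $(\nabla^7\phi,\nabla^6\Omega^{-1}D_3\phi)$ by $\epsilon^{2-\delta}V^aW_0^{-1}+\lnm\nabla^6(\Omega\omega)\rnm^2_{\LSS^2(\R_{U,V})}$. The subsequent proposition bounds $\lnm\nabla^6(\Omega\omega)\rnm^2_{\LSS^2(H_U^V)}$ by $\epsilon^{2-\delta}V^aW_0^{-1}+\lnm\nabla^5(\Omega\beta^r)\rnm^2_{\LSS^2(H_U^V)}$. Since $\Omega\beta^r$ is one of the $\Phi_1$ in the bootstrap assumption \eqref{RII_Bootstrap_Assumptions}, this flux is at most $B\epsilon^{2-\delta}V^aW_0^{-1}$. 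Hence $\nabla^6(\Omega\omega)$ obeys an $H_U^V$-flux bound of the form $C(B)\epsilon^{2-\delta}V^aW_0^{-1}$.

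Next apply Lemma \ref{R_II Key Lemma} to convert this flux bound into a bulk bound. For any $\epsilon_2>0$, choosing $D=D(\epsilon_1,\epsilon_2)$ large gives $\lnm\nabla^6(\Omega\omega)\rnm^2_{\LSS^2(\R_{U,V})}\le\epsilon_2 C(B)\epsilon^{2-\delta}V^aW_0^{-1}$. Taking $\epsilon_2\sim C(B)^{-1}$ makes the constant universal, and inserting this into Proposition \ref{R_II_top_order_est_scalar_field} yields the claimed bound on $\lnm\nabla^7\phi\rnm^2_{\LSS^2(H_U^V)}+\lnm\nabla^6\Omega^{-1}D_3\phi\rnm^2_{\LSS^2(\Hb_V^U)}$.

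For the bulk term $\lnm\nabla^6\varphi\rnm^2_{\LSS^2(\R_{U,V})}$, treat each $\varphi$ in \eqref{R_II_Convention} separately. For the scalar-field components $\nabla\phi$, $\Omega D_4\phi$ and $\Omega^{-1}D_3\phi$, use the flux bounds just obtained together with the first estimate of Proposition \ref{R_II_top_order_est_scalar_field}, then apply Lemma \ref{R_II Key Lemma} once more. For $\Omega\omega$, use the flux bound above. For the remaining Ricci coefficients, $\nabla\varphi$ appears among the $\Phi_1$ or $\Phi_2$ in \eqref{RII_Bootstrap_Assumptions} (for instance $\nabla(\Omega\chi)$, $\nabla\eta$, $\nabla(\Omega^{-1}\omegab)$). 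Their fifth-order flux bounds therefore control $\nabla^6\varphi$ with constant $B$, and Lemma \ref{R_II Key Lemma} again gives a bulk bound with prefactor $\epsilon_2 B$.

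The main obstacle is making sure no constant is circular. The $B$-dependence has to be absorbed through the small factor $\epsilon_2$ from Lemma \ref{R_II Key Lemma}, which is why $D$ must be chosen depending on $B$ (as allowed in Proposition \ref{Main Proposition R_II}). The implicit constant in the corollary is then independent of $B$, which is what the later improvement of the bootstrap constant requires.
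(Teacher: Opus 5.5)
Your proposal is correct and follows essentially the paper's argument. You combine the second estimate of Proposition~\ref{R_II_top_order_est_scalar_field} with the $\Omega\omega$ proposition, control the $\Omega\beta^r$ term through the bootstrap assumption \eqref{RII_Bootstrap_Assumptions}, and use Lemma~\ref{R_II Key Lemma}, with $D$ chosen depending on $B$, to turn flux bounds into bulk bounds with a small prefactor.

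The differences are only in bookkeeping. The paper first passes to the bulk norm $\|\nabla^5(\Omega\beta^r)\|^2_{\LSS^2(\R_{U,V})}$ and applies the key lemma to that term, whereas you bound the $H_U^V$-flux of $\Omega\beta^r$ by the bootstrap before applying the lemma. You also spell out the bulk bounds for the remaining $\varphi$, which the paper leaves implicit.
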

\begin{proof}
    It remains to estimate $\lnm \nabla^6(\Omega\omega)\rnm^2_{\LSS^2(\mathcal{R}_{U,V})}$. Using Lemma \ref{R_II Key Lemma} and the bootstrap assumption \eqref{RII_Bootstrap_Assumptions}, it follows that 
    \begin{equation}
        \lnm \nabla^6(\Omega\omega)\rnm^2_{\LSS^2(\mathcal{R}_{U,V})}\lesssim \epsilon^{2-\delta} V^a\cdot W_0(U,V)^{-1}+\lnm \nabla^5(\Omega\beta^r)\rnm^2_{\LSS^2(\R_{U,V})}\lesssim \epsilon^{2-\delta} V^a\cdot W_0(U,V)^{-1}.
    \end{equation}
\end{proof}

We proceed to estimate the curvature components.
\begin{corollary}
    We have following energy estimates:
    \begin{equation}
        \begin{split}
            &\lnm \nabla^5\left(\Omega^{2}\alpha,\Omega\beta^r,K^r,\sigma^r,\Omega^{-1}\betab^r\right)\rnm^2_{\LSS^2(H_U^V)}+\lnm \nabla^5\left(\Omega\beta^r,K^r,\sigma^r,\Omega^{-1}\betab^r,\Omega^{-2}\alphab\right)\rnm^2_{\LSS^2(\Hb_V^U)}\\
            \lesssim & \epsilon^{2-\delta} V^a\cdot W_0(U,V)^{-1}.
        \end{split}
    \end{equation}
\end{corollary}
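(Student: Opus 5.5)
We will treat the five Bianchi pairs in the renormalized $(U,V)$ frame, namely $(\Omega^2\alpha,\Omega\beta^r)$, $(\Omega\beta^r,(K^r,\sigma^r))$, $((K^r,\sigma^r),\Omega^{-1}\betab^r)$ and $(\Omega^{-1}\betab^r,\Omega^{-2}\alphab)$. Each is written in the form required by Lemma \ref{R_II_ENERGY_ESTIMATE_LEMMA}, with $\Phi_1$ the component carrying an $\Omega^{-1}\nabla_3$ equation and $\Phi_2$ the one carrying an $\Omega\nabla_4$ equation. For each pair we take the difference with the Christodoulou background; since every component except $K$ vanishes in spherical symmetry, this only changes $K^r$ to $\ol{K^r}$. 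We then commute with $\nabla^5$ using \eqref{R_II_simplified_Comm_for} and the commutator $[\mathcal{D},\nabla^i]\sim\sum\nabla^{i_1}K^{i_2+1}\nabla^{i_3}$. The resulting sources schematically take the form
\[
F,G\sim \sum_{i_1+i_2+i_3=5}\nabla^{i_1}\varphi^{i_2+1}\nabla^{i_3}\ol{\Psi}+\nabla^5(\varphi\nabla\varphi)+\nabla^5(\varphi^3)+\sum_{i_1+i_2+i_3=4}\nabla^{i_1}K^{i_2+1}\nabla^{i_3}\ol{\Psi}.
\]
Here the scalar-field terms coming from $\mathrm{Ric}=D\phi D\phi$ appear as $\varphi\nabla\varphi$ with $\varphi\in\{\nabla\phi,\Omega D_4\phi,\Omega^{-1}D_3\phi\}$. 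This is possible because the renormalized $\beta^r,\betab^r,\sigma^r$ remove all $D_4D_4\phi$ and $D_3D_3\phi$ terms.

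Applying \eqref{R_II Energy Estimates}, the initial terms on $\Sigma^{\epsilon_1}$ are bounded by $\epsilon^{2-\delta}V^a$ by the Region II initial-data lemma. It then remains to bound $\lnm F,G\rnm^2_{\LSS^2(\R_{U,V})}$. We split the source terms by order. The lower-order products, with at most four derivatives on $\Psi$ and five on $\varphi$, are handled by Lemma \ref{R_II_Estimate_Lemma_for_difference_of_Product} together with Proposition \ref{R_II_non_top_order_L2S_est}, which places one factor in $\LS_1^\infty$. Lemma \ref{R_II Key Lemma Corollary} then upgrades the sphere bounds to cone and bulk bounds with a constant independent of $B$.

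The top-order terms come in two kinds. The first is $\varphi\,\nabla^5\ol\Psi$, which is bounded by $\lnm\nabla^5\ol\Psi\rnm^2_{\LSS^2(\R)}$. The second is $\varphi\,\nabla^6\ol\varphi$, which includes $\nabla^6(\Omega D_4\phi)$, $\nabla^7\phi$ and $\nabla^6(\Omega^{-1}D_3\phi)$. For the second kind we use the preceding corollary, which already gives $\lnm\nabla^6\varphi\rnm^2_{\LSS^2(\R)}\lesssim\epsilon^{2-\delta}V^aW_0^{-1}$ with a universal constant. For the curvature bulk term we invoke Lemma \ref{R_II Key Lemma} on the bootstrap assumption \eqref{RII_Bootstrap_Assumptions}, which yields $\epsilon_2 B\epsilon^{2-\delta}V^aW_0^{-1}$. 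Choosing $D=D(\epsilon_1,a,B)$ large makes $\epsilon_2B\le 1$, and this closes the estimate.

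The main obstacle is the top-order scalar-field coupling, that is, the derivative loss $\psi\nabla\psi$ in the $\beta^r,\betab^r$ equations, because it needs one more derivative on $\varphi$ than the curvature level provides. It is resolved only because Proposition \ref{R_II_top_order_est_scalar_field} and the $\Omega\omega$ proposition were closed first. There is a subtlety here: the $\Omega\omega$ bound was itself stated modulo $\lnm\nabla^5(\Omega\beta^r)\rnm^2_{\LSS^2(H)}$. We therefore track $\nabla^6(\Omega\omega)$ only through its bulk norm, which by Lemma \ref{R_II Key Lemma} carries the small factor $\epsilon_2$. This lets the resulting $\epsilon_2\lnm\nabla^5\Omega\beta^r\rnm^2_{\LSS^2(H)}$ term be absorbed into the left side of the $(\Omega^2\alpha,\Omega\beta^r)$ energy inequality. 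A secondary issue is the $\ol{\Omega^{-2}}\partial_V K^c$-type background terms in the $K$ equation, which are controlled by the metric proposition of this section.
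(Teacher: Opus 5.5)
Your proposal is correct and follows essentially the same route as the paper: write each pair in Bianchi form and apply the Region II energy estimate \eqref{R_II Energy Estimates}. The bulk sources are then bounded by the bulk norms of $\nabla^{\le 6}\ol{\varphi}$, from the preceding corollary, and of $\nabla^{\le 5}\ol{\Psi}$, from Lemma \ref{R_II Key Lemma} applied to the bootstrap \eqref{RII_Bootstrap_Assumptions}, with lower-order terms handled by Proposition \ref{R_II_non_top_order_L2S_est}. Your closing ``subtlety'' about $\nabla^6(\Omega\omega)$ is unnecessary, since the preceding corollary already closed its bulk norm using the bootstrap for $\nabla^5(\Omega\beta^r)$; in any case, an absorption of $\lnm\nabla^5\Omega\beta^r\rnm^2_{\LSS^2(H_U^V)}$ would belong to the $(\Omega\beta^r,(K^r,\sigma^r))$ inequality, not the $(\Omega^2\alpha,\Omega\beta^r)$ one, where $\Omega\beta^r$ is controlled on $\Hb_V^U$.
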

\begin{proof}
    Note that all curvature pairs have equations in the form  
    \begin{equation}
        \left\{\begin{aligned}
            \Omega^{-1}\nabla_3\Psi_1+\varphi\cdot\Psi_1=&\mathcal{D}\Psi_2+\varphi\Psi_2+\varphi\nabla\varphi,\\
            \Omega \nabla_4\Psi_2=&-{}^*\mathcal{D}\Psi_1+\varphi\cdot\Psi_2+\varphi\nabla\varphi.
        \end{aligned}\right.
    \end{equation}
    Employing energy estimates, we deduce that
    \begin{equation}
        \begin{aligned}
            \lnm \nabla^5\Psi_1\rnm^2_{\LSS^2(H_U^V)}+\lnm \nabla^5\Psi_2\rnm^2_{\LSS^2(\Hb_V^U)}\lesssim & \epsilon^{2-\delta} V^a+\sum_{i\leq 6,j\leq 5}\lnm \nabla^i\ol{\varphi},\nabla^j\ol{\Psi_2}\rnm^2_{\LSS^2(\R_{U,V})}\lesssim \epsilon^{2-\delta} V^a.
        \end{aligned}
    \end{equation}
\end{proof}

We next estimate $\Omega^{-1}\omegab$, which is avoided in other equations in region II and III.
\begin{proposition}
    We establish the following estimate:
    \begin{equation}
        \lnm \nabla^6\left(\Omega^{-1}\omegab\right)\rnm^2_{\LSS^2(\Hb_V^U)}\lesssim \epsilon^{2-\delta} V^a\cdot W_0(U,V)^{-1}.
    \end{equation}
\end{proposition}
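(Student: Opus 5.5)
We estimate $\Omega^{-1}\omegab$ the way $\Omega\omega$ was treated in the preceding proposition, with the roles of $e_3$ and $e_4$ exchanged. Introduce the auxiliary function $\Omega^{-1}\omegab^\dagger$ defined by
\[
\Omega\nabla_4\left(\Omega^{-1}\omegab^\dagger\right)=\frac{1}{2}\sigma^r,\qquad \Omega^{-1}\omegab^\dagger|_{\Hb_{V_0}}\equiv 0 .
\]
The transport estimate \eqref{R_II transport estimate e_4}, together with the $\LSS^2(H_U^V)$ bounds for $\nabla^i\sigma^r$ ($i\leq 5$) from the curvature corollary, gives the sphere bound $\sum_{i\leq 5}\lnm \nabla^i(\Omega^{-1}\omegab^\dagger)\rnm^2_{\LSS^2(S_{U,V})}\lesssim \epsilon^{2-\delta}V^aW_0^{-1}$.

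Next we form the renormalized one-form
\[
\omegab^r:=\nabla\ol{\Omega^{-1}\omegab}+{}^*\nabla\left(\Omega^{-1}\omegab^\dagger\right)+\frac{1}{2}\Omega^{-1}\betab^r .
\]
The sign is chosen to match the $\nabla_3\etab$ equation. Differentiating the $\nabla_4\omegab$ equation, the terms $\nabla K$ and ${}^*\nabla\sigma^r$ appear. They are cancelled by the $\Omega\nabla_4(\Omega^{-1}\betab^r)$ Bianchi equation \eqref{Equations_nab3beta_nab4betab} and by the definition of $\omegab^\dagger$. This leaves a schematic transport equation
\[
\Omega\nabla_4\omegab^r+\varphi\,\omegab^r=\varphi\Psi+\varphi^3+\varphi\nabla(\Omega^{-1}D_3\phi)+\varphi\nabla\varphi ,
\]
in which no top-order curvature or $\nabla^2\omegab$ term survives. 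The term $\nabla\Rc_{3A}$ from the Bianchi equation is exactly of the form $\varphi\nabla(\Omega^{-1}D_3\phi)+\varphi\nabla^2\phi$.

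We then commute with $\nabla^5$ using \eqref{R_II_simplified_Comm_for} and apply \eqref{R_II transport estimate e_4}. Integrating along $\Hb$ requires a sphere estimate first. We bound $\lnm\nabla^5\omegab^r\rnm^2_{\LSS^2(S_{U,V})}$ by the initial term $\epsilon^{2-\delta}V^a$ from Region I plus the $\LSS^2(H_U^V)$ norm of the sources. The top-order sources are $\nabla^6(\Omega^{-1}D_3\phi)$, $\nabla^6\varphi$ and $\nabla^5\Psi$. They are controlled as follows.
\begin{itemize}
\item $\nabla^6\varphi$ is controlled in $\LSS^2(\R_{U,V})$ by the last corollary.
\item $\nabla^5\Psi$ is controlled by the curvature energy corollary combined with Lemma \ref{R_II Key Lemma}.
\item $\nabla^6(\Omega^{-1}D_3\phi)$ is controlled along $\Hb$ by Proposition \ref{R_II_top_order_est_scalar_field} and its corollary.
\end{itemize}
Lower-order products are handled by Proposition \ref{R_II_non_top_order_L2S_est} and Lemma \ref{R_II_Estimate_Lemma_for_difference_of_Product}. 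Then Lemma \ref{R_II Key Lemma Corollary}, with $D$ large, converts the sphere bound into the $\LSS^2(\Hb_V^U)$ bound for $\nabla^5\omegab^r$.

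Finally, we apply the elliptic estimate \eqref{Ellip Est} to the Hodge systems
\begin{gather*}
\dv\nabla\ol{\Omega^{-1}\omegab}=\dv\omegab^r-\frac{1}{2}\dv\,\Omega^{-1}\betab^r,\qquad \cl\nabla\ol{\Omega^{-1}\omegab}=0,\\
\cl{}^*\nabla\left(\Omega^{-1}\omegab^\dagger\right)=\cl\omegab^r-\frac{1}{2}\cl\,\Omega^{-1}\betab^r .
\end{gather*}
Together with the $\Hb$ flux bound on $\nabla^5(\Omega^{-1}\betab^r)$ from the curvature corollary, this gives the claimed bound on $\nabla^6\ol{\Omega^{-1}\omegab}$.

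The main obstacle is that the sources are naturally measured on $H$ or $\R$, whereas the estimate is wanted on $\Hb$. The transport runs in $e_4$, so we must pass through the sphere norm and the exponential weight gain. We must also check that $\nabla^6(\Omega^{-1}D_3\phi)$, which is only available on $\Hb$, enters only through a bulk norm, where Lemma \ref{R_II Key Lemma} applies.
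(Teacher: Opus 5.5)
The overall architecture (auxiliary $\Omega^{-1}\omegab^\dagger$, renormalized one-form $\omegab^r$, Hodge systems plus \eqref{Ellip Est}) matches the paper. However, the step that produces the $\Hb$-flux of $\nabla^5\omegab^r$ fails as you set it up.

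You first want a sphere bound for $\nabla^5\omegab^r$ from the $e_4$-transport estimate \eqref{R_II transport estimate e_4}. That estimate requires the sources in $\LSS^2(H_U^V)$. The commuted equation contains the top-order source $\Omega D_4\phi\,\nabla^6(\Omega^{-1}D_3\phi)$, which comes from $\nabla(D_3\phi\,D_4\phi)$ in the $\nabla_4\omegab$ equation. At this point $\nabla^6(\Omega^{-1}D_3\phi)$ is controlled only on $\Hb_V^U$ and in $\LSS^2(\R_{U,V})$ (Proposition \ref{R_II_top_order_est_scalar_field} and the corollary after it). Its flux through a fixed $H_U$ is not available, and cannot be, since it is the incoming member of the pair $(\nabla^2\phi,\nabla(\Omega^{-1}D_3\phi))$.

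So the sphere bound on $\nabla^5\omegab^r$ is not justified. This quantity sits at the top level of $\nabla^5\Psi$, where only fluxes are ever obtained. Lemma \ref{R_II Key Lemma Corollary} is therefore being applied to a bound you do not have. Your closing remark that this term must enter only through a bulk norm is correct, but your scheme routes it through an $H$-flux instead.

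The missing move is to drop the sphere norm altogether. Apply Lemma \ref{R_II_ENERGY_ESTIMATE_LEMMA} to the degenerate pair $(0,\nabla^5\omegab^r)$, i.e.\ integrate the $e_4$ transport identity over $\R_{U,V}$. This gives $\lnm\nabla^5\omegab^r\rnm^2_{\LSS^2(\Hb_V^U)}\lesssim \epsilon^{2-\delta}V^aW_0^{-1}+\lnm F\rnm^2_{\LSS^2(\R_{U,V})}$ directly. Every source, including $\nabla^6(\Omega^{-1}D_3\phi)$, now sits in the bulk norm, where Lemma \ref{R_II Key Lemma} gives a small factor from either an $H$- or an $\Hb$-flux. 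The leftover bulk term $\nabla^6(\Omega^{-1}\omegab,\Omega^{-1}\omegab^\dagger)$ is then closed through the Hodge systems and the $\Hb$-flux of $\nabla^5(\Omega^{-1}\betab^r)$. This is the paper's argument.

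Also recheck your signs. In the paper's equations, $\nabla_4\omegab$ contains $-\frac12K$ and no $\sigma^r$, and $\nabla_4\betab^r$ contains $\nabla K+{}^*\nabla\sigma^r$. Your combination $\nabla\ol{\Omega^{-1}\omegab}+{}^*\nabla(\Omega^{-1}\omegab^\dagger)+\frac12\Omega^{-1}\betab^r$ then cancels $\nabla K$ but leaves ${}^*\nabla\sigma^r$, which after $\nabla^5$ loses a derivative. The ${}^*\nabla(\Omega^{-1}\omegab^\dagger)$ term must carry the opposite sign relative to $\nabla\ol{\Omega^{-1}\omegab}$, as in the paper's $-\nabla(\Omega^{-1}\omegab)+{}^*\nabla(\Omega^{-1}\omegab^\dagger)-\frac12\Omega^{-1}\betab^r$.
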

\begin{proof}
    We consider auxiliary function defined by  $$\Omega\nabla_4\left(\Omega^{-1}\omegab^\dagger\right)=\frac{1}{2}\sigma^r+4\Omega\omega\Omega^{-1}\omegab^\dagger,\  \Omega\omegab^\dagger|_{\Hb_0}\equiv 0.$$  
    Using transport estimate \eqref{R_II transport estimate e_4}, one can easily show that 
    \begin{equation}
        \sum_{i\leq 5}\lnm \nabla^i\left(\Omega^{-1}\omegab^\dagger\right)\rnm^2_{\LSS^2(S_{U,V})}\lesssim B\epsilon^{2-\delta} V^a\cdot W_0(U,V)^{-1}.
    \end{equation}
    Let $\omegab^r=-\nabla\left(\Omega^{-1}\omegab\right)+{}^*\nabla\left(\Omega^{-1}\omegab^\dagger\right)-\frac{1}{2}\Omega^{-1}\betab^r$. The corresponding equation is
    \begin{equation}
        \Omega\nabla_4\omegab^r=\varphi\Psi_2+\varphi\nabla\varphi+\varphi^3,
    \end{equation}
    where $\varphi$ can be $\Omega^{-1}\omega^\dagger$ additionally.
    Using energy estimate, we deduce that 
    \begin{equation}
        \begin{split}
            \lnm \nabla^5\omegab\rnm^2_{\LSS^2(\Hb_V^U)}\lesssim & \sum_{i\leq 6,j\leq 5}\lnm \nabla^i\varphi,\nabla^j\Psi_2\rnm^2_{\LSS^2(\mathcal{R}_{U,V})}+\epsilon^{2-\delta} V^a\cdot W_0(U,V)^{-1}\\
            \lesssim & \epsilon^{2-\delta} V^a\cdot W_0(U,V)^{-1}+\lnm \nabla^6\left(\Omega^{-1}\omegab,\Omega^{-1}\omegab^\dagger\right)\rnm^2_{\LSS^2(\mathcal{R}_{U,V})}.
        \end{split}
    \end{equation}
    Elliptic estimate reveals that 
    \begin{equation}
        \lnm \nabla^6\left(\Omega^{-1}\omegab,\Omega^{-1}\omegab^\dagger\right)\rnm^2_{\LSS^2(\Hb_V^U)}\lesssim \epsilon^{2-\delta} V^a\cdot W_0(U,V)^{-1}+\lnm \nabla^5\omegab^r,\nabla^5(\Omega^{-1}\betab^r)\rnm^2_{\LSS^2(\Hb_V^U)}.
    \end{equation}
    By Lemma \ref{R_II Key Lemma}, $\lnm \nabla^5\omegab^r,\nabla^5(\Omega^{-1}\betab^r)\rnm^2_{\LSS^2(\Hb_V^U)}\lesssim \epsilon^{2-\delta} V^a\cdot W_0(U,V)^{-1}$. We thus finish the proof.

\end{proof}

We proceed to estimate other Ricci coefficients.
\begin{proposition}
    We derive the following estimates for $\Omega\chi,\Omega^{-1}\chib,\eta,\etab$:
    \begin{equation}
        \lnm \nabla^6(\Omega\chi,\Omega^{-1}\chib,\eta,\etab)\rnm^2_{\LSS^2(H_U^V)}+\lnm \nabla^6(\Omega\chi,\Omega^{-1}\chib,\eta,\etab)\rnm^2_{\LSS^2(\Hb_V^U)}\lesssim \epsilon^{2-\delta} V^a\cdot W_0(U,V)^{-1}.
    \end{equation}
 \end{proposition}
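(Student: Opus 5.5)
We will close these estimates with the elliptic-transport structure of Region I, but without the approximating corrections, since in $\R_{II}$ the basic difference $\ol{\cdot}$ suffices. The order is: first the traces $\Omega\tr\chi$ and $\Omega^{-1}\tr\chib$ by transport; then the shears $\Omega\chih$, $\Omega^{-1}\chibh$ by Codazzi; and last $\eta,\etab$ through the mass aspects $\mu,\mub$.

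\textbf{Step 1: traces.} Commuting the Raychaudhuri equations with $\nabla^6$ via \eqref{R_II_simplified_Comm_for} gives
\[
\Omega\nabla_4\nabla^6\ol{\Omega\tr\chi}+\tfrac{7}{2}\Omega\tr\chi\,\nabla^6\ol{\Omega\tr\chi}\sim\sum\nabla^{i_1}\varphi^{i_2+1}\nabla^{i_3}\ol{\varphi}+\nabla^6\ol{(\Omega D_4\phi)^2},
\]
with the analogous equation for $\Omega^{-1}\tr\chib$ along $e_3$. Note that $\omega\,\tr\chi$ pairs as $\varphi_1\varphi_2$.
\begin{itemize}
\item The only top-order sources are $\nabla^6(\Omega D_4\phi)$, $\nabla^6(\Omega\omega)$ (resp.\ $\nabla^6(\Omega^{-1}D_3\phi)$, $\nabla^6(\Omega^{-1}\omegab)$), and $\nabla^6\chih$ from $|\chih|^2$. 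These are bounded by Proposition \ref{R_II_top_order_est_scalar_field}, the $\Omega\omega$ and $\Omega^{-1}\omegab$ propositions, and Lemma \ref{R_II Key Lemma} applied to the bootstrap assumption \eqref{RII_Bootstrap_Assumptions}.
\item The transport estimates \eqref{R_II transport estimate e_3}, \eqref{R_II transport estimate e_4} then give sphere bounds $\lesssim\epsilon^{2-\delta}V^aW_0^{-1}$. The bad sign of the linear term is absorbed by the large $D$ in $W_{II}$.
\item Lemma \ref{R_II Key Lemma Corollary} converts these sphere bounds into flux bounds on both $H_U^V$ and $\Hb_V^U$.
\item Lower-order products are handled by Lemma \ref{R_II_Estimate_Lemma_for_difference_of_Product} and Proposition \ref{R_II_non_top_order_L2S_est}.
\end{itemize}

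\textbf{Step 2: shears.} The Codazzi equations \eqref{Gauss_Codazzi} give Hodge systems
\[
\dv\ol{\Omega\chih}=\tfrac12\nabla\ol{\Omega\tr\chi}-\ol{\Omega\beta^r}+\ol{\zeta\cdot\Omega\chi},\qquad \cl\ol{\Omega\chih}\ \text{(trivial)},
\]
and similarly for $\Omega^{-1}\chibh$ with $\Omega^{-1}\betab^r$. Apply the elliptic estimate \eqref{Ellip Est}, using the $\ol{\nabla^i}$ corollary and the metric bounds \eqref{R_II_metric_L4infty}. The right-hand side is then controlled by Step 1 and by the curvature flux corollary for $\nabla^5\Omega\beta^r$ and $\nabla^5\Omega^{-1}\betab^r$. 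Both of these flux bounds are available on both $H_U^V$ and $\Hb_V^U$.

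\textbf{Step 3: $\eta$ and $\etab$.} These are the expected obstacle, because the equation for $\mub=-\dv\etab+K$ contains the top-order term $\Omega\chibh\,\nabla^6\etab$.
\begin{itemize}
\item Commute the $\mu$ and $\mub$ transport equations from the Region I proof (now in $\Omega\nabla_4$ and $\Omega^{-1}\nabla_3$ form) with $\nabla^5$.
\item Estimate them by the Region II transport estimates. The sources are $\nabla^6\chih$ and $\nabla^6\chibh$ (Step 2), $\nabla^6 D\phi$ (scalar-field estimates), and $\nabla^5\Psi$ (curvature corollary).
\item The term $\Omega\chibh\,\nabla^6\etab$ is handled with a local bootstrap constant $B_2$ on $\nabla^6(\eta,\etab)$, exactly as in Region I. Since $|\ol{\Omega^{-1}\chibh}|\lesssim\epsilon^{1-\delta}$ and $\chibh^c=0$, this term contributes only $\epsilon^{2-2\delta}B_2$, which improves $B_2$.
\item Lemma \ref{R_II Key Lemma} supplies an extra small factor whenever this term enters through the bulk norm.
\item Finally, invert the div–curl systems
\[
\dv\ol\eta=-\ol\mu+\ol K,\qquad \cl\ol\eta=\ol{\sigma^r},
\]
and the analogous system for $\etab$, using \eqref{Ellip Est}. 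The $\nabla^5\sigma^r$ and $\nabla^5K$ fluxes come from the curvature corollary.
\item Since every sphere estimate upgrades to fluxes on both cone families by Lemma \ref{R_II Key Lemma Corollary}, this yields the stated bound on both $H_U^V$ and $\Hb_V^U$.
\end{itemize}
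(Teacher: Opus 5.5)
Your architecture matches the paper's: transport for the traces, Codazzi plus \eqref{Ellip Est} for the shears, the mass aspects $\mu,\mub$ with div--curl systems for $\eta,\etab$, and the large-$D$ gain to close everything.

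\textbf{Steps 1--2.} You only reverse the paper's order. The paper puts a local sphere bootstrap on $\nabla^6(\Omega\tr\chi,\Omega^{-1}\tr\chib)$, gets the shears from Codazzi first, and then closes the traces. Bounding the quadratic shear terms in the Raychaudhuri equations directly by \eqref{RII_Bootstrap_Assumptions} is also fine. The justification, though, is the small coefficient $|\Omega\chih|,|\Omega^{-1}\chibh|\lesssim\epsilon^{1-\delta}$, not Lemma~\ref{R_II Key Lemma}. That lemma only produces bulk norms, whereas \eqref{R_II transport estimate e_4} needs the source as a flux on $H_U^V$.

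\textbf{Step 3: the obstacle is misidentified.} Combine $\Omega\nabla_3\etab=-\Omega\chib\cdot(\etab-\eta)+\Omega\betab^r-\Omega\Rc_{3\cdot}$, the commutator formula, and the first line of \eqref{Equations_K}. The $\chibh$-parts of $-\dv(\Omega\chib\cdot\eta)$ and $\dv(\eta\cdot\Omega\chibh)$ cancel, but the $\tr\chib$-parts add. This gives
\[
\Omega\nabla_3\mub+\Omega\tr\chib\,\mub=2\dv(\Omega\chibh\cdot\etab)-\dv(\Omega\tr\chib\,\eta)+\dv(\Omega\Rc_{3\cdot}).
\]
Symmetrically,
\[
\Omega\nabla_4\mu+\Omega\tr\chi\,\mu=2\dv(\Omega\chih\cdot\eta)-\dv(\Omega\tr\chi\,\etab)+\dv(\Omega\Rc_{4\cdot}).
\]
For the second identity, use $\Omega\nabla_4K+\Omega\tr\chi K=\dv\dv(\Omega\chih)-\frac12\Delta(\Omega\tr\chi)$ together with Codazzi.

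So the commuted equations contain $-\Omega^{-1}\tr\chib\,\nabla^5\dv\eta$ and $-\Omega\tr\chi\,\nabla^5\dv\etab$, both with $O(1)$ coefficients. The Region I displays you copied omit these terms. The paper's Region II proof keeps them as the $\varphi\nabla\eta$ and $\varphi\nabla\etab$ sources, and closes through the cross-coupled chain $\nabla^5\mub\to\nabla^6\eta\to\nabla^5\mu\to\nabla^6\etab\to\nabla^5\mub$. For these terms the smallness of $\ol{\Omega^{-1}\chibh}$ gives nothing: they feed back $CB_2$ with $C=O(1)$.

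\textbf{Why your bootstrap still closes.} It closes only because of your last step. The transport estimates give sphere bounds for $\nabla^5\mu$ and $\nabla^5\mub$ of size $C(1+B_2)\epsilon^{2-\delta}V^aW_0(U,V)^{-1}$. Lemma~\ref{R_II Key Lemma Corollary} then returns their fluxes with an arbitrarily small factor $\epsilon_2$. The Hodge estimates now yield $B_2\lesssim 1+\epsilon_2B_2$. That $\epsilon_2$-gain, not $|\ol{\Omega^{-1}\chibh}|\lesssim\epsilon^{1-\delta}$, is the closing mechanism you should state. Equivalently, substitute $\dv\eta=K-\mu$ and $\dv\etab=K-\mub$, and close $(\nabla^5\mu,\nabla^5\mub)$ as a coupled pair.
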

 \begin{proof}
    We firstly estimate $\Omega\chi,\Omega^{-1}\chib$.  
    Assume that 
    \begin{equation}\label{R_II_Bootstrap_trchi_trchib}
        \lnm \nabla^6\Omega\tr\chi,\nabla^6\Omega^{-1}\tr\chib\rnm^2_{\LSS^2(S_{U,V})}\leq B_1\epsilon^{2-\delta} V^aW_0(U,V)^{-1}.
    \end{equation}
    By elliptic estimate \eqref{Ellip Est}, Codazzi equations give the estimates for $\Omega\chih,\Omega^{-1}\chibh$: 
    \begin{equation}
        \begin{split}
            \lnm \nabla^6\Omega\chih\rnm^2_{\LSS^2(S_{U,V})}\lesssim &\lnm \nabla^6\Omega\tr\chi,\nabla^5\Omega\beta^r,\nabla^5K,\nabla^5(\etab\cdot\Omega\chi)\rnm^2_{\LSS^2(S_{U,V})},\\
            \lnm \nabla^6\Omega^{-1}\chibh\rnm^2_{\LSS^2(S_{U,V})}\lesssim &\lnm \nabla^6\Omega^{-1}\tr\chib,\nabla^5\Omega^{-1}\betab^r,\nabla^5(\etab\cdot\Omega^{-1}\chib)\rnm^2_{\LSS^2(S_{U,V})}.
        \end{split}
    \end{equation}
    The energy estimates for $\Omega\beta^r$, $\Omega^{-1}\betab^r$, the non top order $\LSS^2(S)$ estimates Proposition \ref{R_II_non_top_order_L2S_est}, and assumption \eqref{R_II_Bootstrap_trchi_trchib} together give that 
    \begin{equation}
        \begin{split}
            \lnm \nabla^6(\Omega\chih,\Omega^{-1}\chibh)\rnm^2_{\LSS^2(H_U^V)}+
            \lnm \nabla^6(\chih,\chibh)\rnm^2_{\LSS^2(\Hb_V^U)}\lesssim & \epsilon^{2-\delta} V^a\cdot W_0(U,V)^{-1}.
        \end{split}
    \end{equation}
    From $\nabla_4\tr\chi$ and $\nabla_3\tr\chib$ equations, we obtain 
    \begin{equation*}
        \begin{aligned}
            \lnm \nabla^6\Omega\tr\chi\rnm^2_{\LSS^2(S_{U,V})}\lesssim & \lnm \nabla^6\tr\chi\rnm^2_{\LSS^2(S_{U,V_0(U)})}+\sum_{i\leq 5}\lnm \nabla^i\ol\varphi\rnm^2_{\LSS^2(H_U^V)}\\
            &+\lnm \nabla^6\left(\Omega D_4\phi,\Omega\omega,\Omega\tr\chi\right)\rnm^2_{\LSS^2(H_U^V)} +\lnm \nabla^6\left(\Omega\chih\cdot\Omega\chih\right)\rnm^2_{\LSS^2(H_U^V)},
        \end{aligned}
    \end{equation*}
    \begin{equation*}
        \begin{aligned}
            \lnm \nabla^6\Omega^{-1}\tr\chib\rnm^2_{\LSS^2(S_{U,V})}\lesssim & \lnm \nabla^6\tr\chib\rnm^2_{\LSS^2(S_{U_0(V),V})}+\sum_{i\leq 5}\lnm \nabla^i(\eta,\etab,\ol{\Omega^{-1}\chib}) \rnm^2_{\LSS^2(\Hb_V^U)}\\
            &+\lnm \nabla^6(\Omega^{-1}D_3\phi,\Omega^{-1}\tr\chib)\rnm^2_{\LSS^2(\Hb_V^U)}
            +\lnm \nabla^6(\Omega^{-1}\chibh\cdot\Omega^{-1}\chibh)\rnm^2_{\LSS^2(\Hb_V^U)}.
        \end{aligned}
    \end{equation*}
    Using Corollary \ref{R_II Key Lemma Corollary}, the result follows immediately. 
    To estimate $\eta,\etab$, we consider 
    \[\mub=-\dv\etab+K^r,\quad \mu=-\dv\eta+K^r.\]
    Then the equations read 
    \begin{equation}
        \begin{split}
            \Omega^{-1}\nabla_3\mub+\varphi\mub=&\varphi\Psi_2+\varphi\nabla(\Omega^{-1}\chib,\eta,\nabla\phi,\Omega^{-1} D_3\phi)+\varphi^3,\\
            \Omega\nabla_4\mu+\varphi\mu=&\varphi\Psi_1+\varphi\nabla(\Omega\chi,\etab,\nabla\phi,\Omega D_4\phi)+\varphi^3.
        \end{split}
    \end{equation}
    Employing energy estimate \eqref{R_II Energy Estimates}, it follows that 
    \begin{equation}\label{R_II est eta etab 1}
        \begin{split}
            \lnm \nabla^5\mub\rnm^2_{\LSS^2(S_{U,V})}\lesssim & \lnm \nabla^6(\Omega^{-1}\chib,\eta,\nabla\phi,\Omega^{-1} D_3\phi),\nabla^5\Psi_2\rnm^2_{\LSS^2(\Hb_V^U)}+\epsilon^{2-\delta} V^a\cdot W_0(U,V)^{-1}\\
            \lesssim & \lnm \nabla^6\eta\rnm^2_{\LSS^2(\Hb_V^U)}+\epsilon^{2-\delta} V^a\cdot W_0(U,V)^{-1},\\
            \lnm \nabla^5\mu\rnm^2_{\LSS^2(S_{U,V})}\lesssim & \lnm \nabla^6(\Omega\chi,\etab,\nabla\phi,\Omega D_4\phi),\nabla^5\Psi_2\rnm^2_{\LSS^2(H_U^V)}+\epsilon^{2-\delta} V^a\cdot W_0(U,V)^{-1}\\
            \lesssim & \lnm \nabla^6\etab\rnm^2_{\LSS^2(H_U^V)}+\epsilon^{2-\delta} V^a\cdot W_0(U,V)^{-1}.
        \end{split}
    \end{equation}
    Elliptic estimate \eqref{Ellip Est} asserts that 
    \begin{equation}\label{R_II est eta etab 2}
        \begin{split}
            \lnm \nabla^6\eta\rnm^2_{\LS^2(H_U^V)}\lesssim & \lnm \nabla^5\mu\rnm^2_{\LS^2(H_U^V)}+\epsilon^{2-\delta} V^a\cdot W_0(U,V)^{-1},\\
            \lnm \nabla^6\etab\rnm^2_{\LS^2(H_U^V)}\lesssim & \lnm \nabla^5\mub\rnm^2_{\LS^2(H_U^V)}+\epsilon^{2-\delta} V^a\cdot W_0(U,V)^{-1},\\
            \lnm \nabla^6\eta\rnm^2_{\LS^2(\Hb_V^U)}\lesssim & \lnm \nabla^5\mu\rnm^2_{\LS^2(\Hb_V^U)}+\epsilon^{2-\delta} V^a\cdot W_0(U,V)^{-1},\\
            \lnm \nabla^6\etab\rnm^2_{\LS^2(\Hb_V^U)}\lesssim & \lnm \nabla^5\mub\rnm^2_{\LS^2(\Hb_V^U)}+\epsilon^{2-\delta} V^a\cdot W_0(U,V)^{-1}.\\
    \end{split}
    \end{equation} 
    Combining \eqref{R_II est eta etab 1} and \eqref{R_II est eta etab 2}, we obtain 
    \begin{equation}
        \begin{split}
            \lnm \nabla^5\mub\rnm^2_{\LSS^2(S_{U,V})}\lesssim & \lnm \nabla^5\mu\rnm^2_{\LSS^2(\Hb_V^U)}+\epsilon^{2-\delta} V^a\cdot W_0(U,V)^{-1}\\
            \lesssim & \lnm \nabla^5\mub\rnm^2_{\LSS^2(\mathcal{R}_{U,V})}+\epsilon^{2-\delta} V^a\cdot W_0(U,V)^{-1},\\
            \lnm \nabla^5\mu\rnm^2_{\LSS^2(S_{U,V})}\lesssim & \lnm \nabla^5\mub\rnm^2_{\LSS^2(H_U^V)}+\epsilon^{2-\delta} V^a\cdot W_0(U,V)^{-1}.
        \end{split}
    \end{equation}
    Finally, a straightforward bootstrap argument combined with Lemma \ref{R_II Key Lemma Corollary} allows us to definitively conclude that 
    \begin{equation}
        \lnm \nabla^5\mub,\nabla^5\mu\rnm^2_{\LSS^2(S_{U,V})}\lesssim \epsilon^{2-\delta} V^a\cdot W_0(U,V)^{-1}.
    \end{equation}
    This rigorously completes the proof.
 \end{proof}
\section{Estimates in Region III}\label{Estimates in Region III}
In this section, we continue to employ the double null gauge $(U,V)$ introduced in $\mathcal{R}_{II}$, alongside the rescaled self-similar coordinate $z=V/(-U)^{1/(1-\k)}$. We direct our focus to the specific region defined by $$\mathcal{R}_{III}=\left\{\frac{V}{(-U)^{\frac{1}{1-\k}}}\geq \epsilon_1^{-1},-1<U<0\right\}.$$ 
Christodoulou's solution satisfies the following lemma.
\begin{lemma}\label{R_III_Chr_sol_lemma}
    In $\mathcal{R}_{III}$, this yields 
    \[{g^c}_{AB}\sim V^2{g^{\SS}}_{AB} ,\quad  (\Omega^{c})^2\sim V^\k,\quad (\Omega\tr\chi)^{c}\sim \frac{1}{V},\quad (\Omega^{-1}\tr\chib)^{c}\sim -\frac{1}{V},\quad {K}^c\sim \frac{1}{V^2}.\]
    Especially, we have $$\lim_{z\rightarrow\infty}V\cdot (\Omega\tr\chi)^{c}=2,\quad\lim_{z\rightarrow\infty}V\cdot (\Omega\omega)^c=-\frac{\k}{4}.$$
\end{lemma}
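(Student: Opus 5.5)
We reduce the lemma to the asymptotics of Christodoulou's solution at the Cauchy horizon recorded in Theorem \ref{thm:background_properties}(4), and translate them to the gauge $(U,V)$ with the gauge-change formulas from the Region II preliminaries. By self-similarity, every background quantity in the old gauge $(u,v)$ has the form $(-u)^{s}$ or $v^{s}$ times a profile in the self-similar variable. The region $\mathcal{R}_{III}$ corresponds to $v/(-u)\geq \epsilon_1^{-(1-\k)}$, where this variable is large. The task is therefore to show that each profile has a finite nonzero limit as $-u/v\to 0$, and to read off the powers of $V$.

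\emph{Step 1: metric and lapse.} Spherical symmetry and the homothety $\mathcal{L}_S g=2g$ give $r^c(u,v)=v\,\check r(-u/v)$. Regularity of the background at $u=0$ (item (2) of Theorem \ref{thm:background_properties}) gives $\check r(0)>0$. Since $v=V^{1-\k}$ and $r$ is gauge invariant, this yields $g^c_{AB}=r^2 g^{\SS}_{AB}\sim V^2 g^{\SS}_{AB}$ on $\mathcal{R}_{III}$ (after the $V$-normalization of the new gauge). For the lapse, the relation $\Omega^2=F'G'\,{}^{(uv)}\Omega^2$ with $F'\sim (-u)^{\k}$ and $G'\sim V^{-\k}$ combines with $\lim_{u\to0}(-u)^\k\,{}^{(uv)}\Omega^2=v^\k$. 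The powers of $(-u)$ cancel, and we obtain $(\Omega^c)^2\to$ const$\cdot V^{\k}$ uniformly for $z\geq\epsilon_1^{-1}$.

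\emph{Step 2: the remaining quantities.} Under the gauge change, $\Omega\tr\chi$ transforms as $\Omega\tr\chi = G'\,{}^{(uv)}(\Omega\tr\chi)$, and ${}^{(uv)}\Omega\tr\chi(0,v)=2/v$. Since $G'v=d\log v/d\log V$ is constant, this gives $V\Omega\tr\chi\to$ const. The constant is $2$ once the new gauge is normalized so that $r$ grows like $V$ along $U=0$, which is what $\lim V\Omega\tr\chi=2$ expresses, namely $2\,\partial_V\log r$. Similarly $\Omega^{-1}\tr\chib=(G')^{-1}\,{}^{(uv)}\Omega^{-1}\tr\chib$, and this quantity is $\sim -1/V$ by item (4). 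The Gauss equation gives $K^c=r^{-2}\sim V^{-2}$. For $\Omega\omega$, the formula $\Omega\omega = G'\,{}^{(uv)}(\Omega\omega)-\frac{G''}{4G'}$ applies. The term $G''/(4G')$ is an explicit multiple of $1/V$, and ${}^{(uv)}(v\Omega\omega)$ has a limit at $u=0$ by the symmetric counterpart of the relation $\lim_{v\to0}\frac vu\Omega\omega=\frac{\k}{4u}$ from the algebraic relations lemma, with the roles of $u$ and $v$ swapped. Adding the two contributions yields $V\Omega\omega\to -\k/4$. Here one must keep track of the exponent convention of $G$ in the $(U,V)$ gauge.

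\emph{Main obstacle.} The asymptotics are simple. The delicate point is uniformity together with the correct exact constants. One needs the profiles to converge as $-u/v\to0$ uniformly, with bounded two-sided ratios on the whole range $z\geq\epsilon_1^{-1}$, and not only in the limit. For this I would invoke the $C^{1,\k/(1-\k)}$ regularity across $u=0$ and the ODE description of the background in Appendix \ref{Appendix_Chr94}. Continuity of the profiles on the compact interval $[0,\epsilon_1]$ then gives the two-sided bounds.
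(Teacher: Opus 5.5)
Your strategy is the right one, and it is what the paper relies on, since the lemma is stated there without a separate argument. You transport the Cauchy-horizon asymptotics of Christodoulou's solution (Theorem~\ref{thm:background_properties}(4) and the $u=0$ analysis of Appendix~\ref{Appendix_Chr94}) into the $(U,V)$ gauge, and you get uniformity on $z\ge\epsilon_1^{-1}$ from continuity of the self-similar profiles. But the gauge bookkeeping, which is the entire content of the lemma, is wrong.

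You set $v=V^{1-\k}$ while feeding in data valid in Christodoulou's \emph{original} gauge: $\lim_{u\to0}(-u)^\k\Omega^2=v^\k$, $\Omega\tr\chi(0,v)=2/v$, $\Omega\omega(0,v)=-\k/(4v)$, and $r=v\check r(-u/v)$. Carried out consistently, your own formulas give:
\begin{itemize}
\item $r^2\sim V^{2-2\k}$;
\item $(\Omega^c)^2\to\mathrm{const}\cdot V^{-\k}v^{\k}=\mathrm{const}\cdot V^{-\k^2}$;
\item $V\Omega\tr\chi\to 2VG'/v=2(1-\k)$ (note $d\log v/d\log V=VG'/v$, not $G'v$);
\item $V\Omega\omega\to-(1-\k)\frac{\k}{4}+\frac{\k}{4}=\frac{\k^2}{4}$.
\end{itemize}
Moreover, $z=V/(-U)^{1/(1-\k)}$ would equal $v^{1/(1-\k)}/(-u)$, which is not a function of $v/(-u)$. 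So your description of $\mathcal{R}_{III}$ as $\{v/(-u)\ge\epsilon_1^{-(1-\k)}\}$ does not follow either. Appealing to a ``$V$-normalization'' cannot repair this. A rescaling $V\mapsto cV$ changes neither exponents nor the scale-invariant quantities $V\Omega\tr\chi=2V\partial_V\log r$ and $V\Omega\omega=-\frac14V\partial_V\log\Omega^2$. Normalizing ``so that $r$ grows like $V$'' therefore just assumes the conclusion.

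The missing fact concerns the $(u,v)$ in the formula $(u,v)=(F(U),G(V))=(-|U|^{1/(1-\k)},V^{1-\k})$ from the Region II preliminaries. That is the Region I renormalized gauge, whose outgoing coordinate is $\hat v=\frac{1}{1-\k}v^{1-\k}$. Hence $V=\hat v^{1/(1-\k)}=(1-\k)^{-1/(1-\k)}v$ is a constant multiple of Christodoulou's $v$; this is the $(U,V)=(-(-u)^{1-\k},v)$ of the introduction, and $z$ is a constant multiple of $v/(-u)$. Relative to the original gauge, $G'$ is then constant and $G''=0$, and everything is read off directly:
\begin{itemize}
\item $r=v\check r(-u/v)$ with $\check r(0)=1$ gives $g^c\sim V^2g^{\SS}$ and $K^c=r^{-2}\sim V^{-2}$;
\item $(\Omega^c)^2=\mathrm{const}\cdot(-u)^\k\Omega^2_{\rm orig}\to\mathrm{const}\cdot V^\k$;
\item $\Omega\tr\chi$, $\Omega\omega$ and $\Omega^{-1}\tr\chib$ change only by constant factors, so $V\Omega\tr\chi\to2$, $V\Omega\omega\to-\k/4$ and $(\Omega^{-1}\tr\chib)^c\sim-1/V$ follow from the horizon values $2/v$, $-\k/(4v)$ and $-\frac{2}{(1+\k)v}$.
\end{itemize}
Your $G''$ term belongs only to the comparison with the hat gauge. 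There $-G''/(4G')=\k/(4V)$ is correct, but it must be paired with the hat-gauge horizon value $\hat\Omega\hat\omega=v^{\k-1}(v\Omega\omega-\k/4)=-\frac{\k}{2}v^{\k-1}$. After multiplication by $G'$ this becomes $-\frac{\k}{2V}$, and the sum is again $-\frac{\k}{4V}$. Mixing the hat-gauge $G$ with original-gauge data is what breaks your computation. With this corrected, your continuity argument for the profiles on $-u/v\in[0,C\epsilon_1]$ supplies the uniform two-sided bounds.
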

\subsection{From II to III}
For any point $(U,V)\in\mathcal{R}_{III}$, let $U_0(\widetilde{V})$ denote the $U$-coordinate of the intersection between the incoming null hypersurface $V=\widetilde{V}$ and the past boundary of $\mathcal{R}_{III}$. Furthermore, we let $U_{-1}(\widetilde{V})$ denote its intersection with the past boundary of $\mathcal{R}_{II}$. The quantities $V_0(U)$ and $V_{-1}(U)$ are defined entirely analogously.
\begin{lemma}
    Building upon the results of Proposition \ref{Main Proposition R_II}, we can systematically select $\epsilon=\epsilon(\epsilon_1,D)$ to be sufficiently small, such that for any $(U,V)\in\mathcal{R}_{III}$ the following initial value estimates hold:
    \begin{equation}\label{R III initial value 1}
        \sum_{i\leq 2}\int_{\SS}\left|\nabla^i\ol{\varphi}\right|^2V^{a+1-2S(\Phi)}\Omega^2(U_0(V),V)d\V\leq \epsilon^{2-3\delta/2}V^a,
    \end{equation}
    \begin{equation}\label{R III initial value 2}
        \sum_{i\leq 2}\int_{V_0(U)}^V-U_0^\prime(V^\prime)\int_{\SS}\left|\nabla^i\ol{\Phi}\right|^2(V^\prime)^{a+1-2S(\Phi)}\Omega^2d\V dV^\prime\leq \epsilon^{2-3\delta/2}V^a,
    \end{equation}
    \begin{equation}\label{R III initial value 3}
        \sum_{i\leq 2}\int_{U_0(V)}^U -V_0^\prime(U^\prime)\int_{\SS}\left|\nabla^i\ol{\Phi}\right|^2\left(V_0(U^\prime)\right)^{a+1-2S(\Phi)}d\V dU^\prime\leq \epsilon^{2-3\delta/2}V^a,
    \end{equation}
    where $\varphi=\psi,D\phi$ and $\Phi=\nabla\varphi,\Psi$.
\end{lemma}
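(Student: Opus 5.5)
The plan is to read all three bounds off the Region II estimates of Proposition \ref{Main Proposition R_II}, evaluated on the future boundary $\{z=\epsilon_1^{-1}\}$ of $\mathcal{R}_{II}$, which is the past boundary of $\mathcal{R}_{III}$. The key point is the size of the exponential weight there. On $\{z=\epsilon_1^{-1}\}$ we have $V^{1-\k}/(-U)=\epsilon_1^{-(1-\k)}$, so $W_0=\exp(-D\epsilon_1^{\k-1})$ is a fixed constant depending only on $(D,\epsilon_1)$. The Region II bounds therefore read $\lesssim C(D,\epsilon_1)\,\epsilon^{2-\delta}V^{a}$ on that hypersurface. Choosing $\epsilon=\epsilon(\epsilon_1,D)$ so small that $C(D,\epsilon_1)\epsilon^{\delta/2}\le 1$ converts $\epsilon^{2-\delta}$ into $\epsilon^{2-3\delta/2}$. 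This is exactly the loss allowed in \eqref{R III initial value 1}--\eqref{R III initial value 3}.

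For \eqref{R III initial value 1}, I would take Proposition \ref{R_II_non_top_order_L2S_est} together with the metric proposition of Region II, at $(U_0(V),V)$. These give $\sum_{i\le 5}\|\nabla^i\ol\varphi\|^2_{\LSS^2(S)}\lesssim B\epsilon^{2-\delta}V^aW_0^{-1}$ with $B\le 1$. There $\Omega^2\sim_{\epsilon_1}V^{\k}$ and $z$ is fixed, so the extra factor $V\Omega^2$ in the integrand of \eqref{R III initial value 1} is comparable, up to constants depending on $\epsilon_1$, to the normalisation of the $\LSS^2(S)$ norm. That normalisation is $V^{2-2S}$ with the lapse factored appropriately, and I would check the powers of $V$ against the signature convention. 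After this check the bound follows with constant $C(\epsilon_1,D)$.

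For \eqref{R III initial value 2} and \eqref{R III initial value 3}, the integrals are fluxes of $\nabla^i\ol\Phi$ through the boundary hypersurface $\{z=\epsilon_1^{-1}\}$. Parametrized by $V'$, respectively by $U'$, with Jacobians $-U_0'$, $-V_0'$, they amount to the $\LSS^2(\Sigma^{\epsilon_1^{-1}})$-type norm of $\Phi$. As in the proof of Corollary \ref{R_I End Values}, I would convert each into a bulk integral plus data. Differentiating $V^{1+a-2S}\Omega^2W|\Phi|^2$ along $\partial_U$, or $V^{1+a-2S}W|\Phi|^2$ along $\partial_V$, and integrating over the part of $\mathcal{R}_{II}$ below the boundary, yields three kinds of terms. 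The first is the past-boundary data on $\Sigma^{\epsilon_1}$ from Region I. The second is bulk terms $\|\Phi\|^2_{\LSS^2(\R)}$ and $\|\Omega^{\mp1}\nabla_{3,4}\Phi\|^2_{\LSS^2(\R)}$. The third is cross terms, which I would handle by Cauchy--Schwarz. The null derivatives are replaced using the Bianchi, transport and structure equations, with $\Phi=\nabla\varphi$ or $\Psi$ and at most two angular derivatives. This reduces everything to the bounds on $\nabla^{\le 5}\ol\Psi$ and $\nabla^{\le 6}\ol\varphi$ in $\LSS^2(H)$, $\LSS^2(\Hb)$ and $\LSS^2(\R)$ from Region II, with Lemma \ref{R_II Key Lemma} for the bulk. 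Since only $i\le 2$ is needed, there is plenty of room in regularity.

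The main obstacle is bookkeeping the weights uniformly as $U\to 0$. The $W_0^{-1}$ factor is harmless only on the boundary itself; inside $\mathcal{R}_{II}$ it is compensated by the $W_0$ carried in the $\LSS$ norms. The worry is that $V^a$ could be traded for a worse power. I would verify this using the freedom, noted in the introduction, to run the Region II estimates with parameter $a/2$. Any loss of the form $z^{C}$ with $z\in[\epsilon_1,\epsilon_1^{-1}]$ is only an $\epsilon_1$-dependent constant. It is then absorbed by the smallness of $\epsilon(\epsilon_1,D)$.
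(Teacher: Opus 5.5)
Your proposal follows the paper's proof essentially step for step. Estimate \eqref{R III initial value 1} is read off from the Region II bounds on $\{z=\epsilon_1^{-1}\}$, where $W_0^{-1}=e^{D\epsilon_1^{\k-1}}$ is a fixed constant absorbed by choosing $\epsilon=\epsilon(\epsilon_1,D)$. Estimates \eqref{R III initial value 2}--\eqref{R III initial value 3} are obtained as in Corollary \ref{R_I End Values}, by differentiating along $\partial_U$ or $\partial_V$ and integrating over the part of $\mathcal{R}_{II}$ below that hypersurface. The bulk is bounded using Region II fluxes along the cones $\Hb_{V'}$ (or $H_{U'}$) that end on $\{z=\epsilon_1^{-1}\}$, each of size $C(\epsilon_1,D)\epsilon^{2-\delta}(V')^a$, summed over the foliation.

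This foliation is what makes the weights work, so the $a/2$ device you mention is not needed here. You should also add that for $V>\epsilon_1^{-1}$ one has $U_0(V)=-1$, so there \eqref{R III initial value 1} comes from the data on $H_{-1}$ rather than from Region II.
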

\begin{proof}
    The bound \eqref{R III initial value 1} is an immediate and straightforward consequence of Proposition \ref{Main Proposition R_II}.  
    We restrict our detailed analysis strictly to the case where $V\leq \epsilon_1^{-1}$, given that $U_0(V)\equiv -1$ identically whenever $V>\epsilon_1^{-1}$. In this regime, the boundary curves are explicitly given by $V_0(U)=\frac{\left(-U\right)^{1/(1-\k)}}{\epsilon_1}$ and $U_0(V)=-\left(\epsilon_1 V\right)^{1-\k}$.
    Through straightforward direct computation, we obtain
\begin{equation}
    \begin{aligned}
        \int_{V_0(U)}^V-U_0^\prime &(V^\prime)\int_{\SS}|\ol{\Phi}|^2(V^\prime)^{a+1-2S(\Phi)}\Omega^2d\V dV^\prime\\
        &=(1-\k)\epsilon_1^{1-\k}\int_{V_0(U)}^V\int_{\SS}|\ol{\Phi}|^2(V^\prime)^{a+1-\k-2S(\Phi)}\Omega^2d\V dV^\prime,
    \end{aligned}
\end{equation}
and 
\begin{equation*}
    \begin{aligned}
        &\int_{U_0(V)}^U -V_0^\prime(U^\prime)\int_{\SS}|\ol{\Phi}|^2[V_0(U^\prime)]^{a+1-2S(\Phi)}d\V dU^\prime\\
        &\qquad\qquad\qquad=\frac{1}{1-\k}\int_{V_0(U)}^V\int_{\SS}|\ol{\Phi}|^2(V^\prime)^{a+1-2S(\Phi)}d\V dV^\prime.
    \end{aligned}
\end{equation*}
To establish the remaining estimates \eqref{R III initial value 2} and \eqref{R III initial value 3}, we analyze the following simplified schematic system of equations:
\begin{equation}\label{R III initial est eq sys}
    \begin{split}
        \Omega\nabla_3\ol{\Phi_1}=\mathcal{D}\ol{\Phi_2}+F_1,\quad \Omega^{-1}\nabla_4\ol{\Phi_1}=-{}^*\mathcal{D}\ol{\Phi_1}+F_2.
    \end{split}
\end{equation}
A differentiation and integration argument yields:
\begin{equation}\label{R III initial est 1.1}
    \begin{split}
        &\int_{V_0(U)}^V F(U_0(V^\prime),V^\prime)dV^\prime\\
    =&\int_{V_0(U)}^{V_{-1}\circ U_0(V)}\int_{U_{-1}(V^\prime)}^{U_0(V^\prime)}\partial_U F(U^\prime,V^\prime)dU^\prime dV^\prime+\int_{V_{-1}\circ U_0(V)}^V\int_{U_0(V)}^{U_0(V^\prime)}\partial_U F(U^\prime,V^\prime)dU^\prime dV^\prime \\
    &+\int_{V_0(U)}^{V_{-1}\circ U_0(V)}F(U_{-1}(V^\prime),V^\prime)dV^\prime+\int_{V_{-1}\circ U_0(V)}^V F(U_0(V),V^\prime)dV^\prime,
    \end{split}
\end{equation}
and
\begin{equation}\label{R III initial est 1.2}
    \begin{split}
        &\int_{V_0(U)}^V F(U_0(V^\prime),V^\prime)dV^\prime\\
    =&-\int_{U_0(V)}^U F(U^\prime,V_0(U^\prime))V_0^\prime(U^\prime)dU^\prime\\
    \sim &\epsilon_1^{\k-1} \int_{U_0(V)}^UF(U^\prime,V_0(U^\prime))\Omega^2dU^\prime\\
    =&\epsilon_1^{\k-1} \int_{U_0(V)}^{U_{-1}\circ V_0(U)} \int_{V_{-1}\circ U_0(V)}^{V_0(U^\prime)}\partial_V(\Omega^2 F)(U^\prime,V^\prime)dV^\prime dU^\prime \\
    &+ \epsilon_1^{\k-1}  \int_{U_{-1}\circ V_0(U)}^{U}\int_{V_0(U)}^{V_0(U^\prime)}\partial_V(\Omega^2 F)(U^\prime,V^\prime)dV^\prime dU^\prime\\
    &+\epsilon_1^{\k-1}  \int_{U_0(V)}^{U_{-1}\circ V_0(U)}\Omega^2 F (U_{-1}\circ V_{0}(U^\prime),U^\prime)dU^\prime +\epsilon_1^{\k-1}  \int_{U_{-1}\circ V_0(U)}^{U}\Omega^2 F(U_0(V),U^\prime)dU^\prime,
    \end{split}
\end{equation}
where we used that $-V_0^\prime(U)=\frac{1}{1-\k}\epsilon_1^{\k-1}(V_0(U))^\k\sim\frac{1}{1-\k}\epsilon_1^{\k-1}\Omega^2(U,V_0(U))$.
Assuming the bound $\lnm \ol\Phi\rnm^2_{\LSS^2\left(\Hb_V^U\right)}\lesssim C(\epsilon_1,D)\epsilon^{2-\delta} V^a$ holds, we can directly deduce that 
\begin{equation}\label{R III initial est 1.3}
    \begin{split}
        &\left(\int_{V_0(U)}^{V_{-1}\circ U_0(V)}\frac{1}{V^\prime}\int_{U_{-1}(V^\prime)}^{U_0(V^\prime)}+\int_{V_{-1}\circ U_0(V)}^V\frac{1}{V^\prime}\int_{U_0(V)}^{U_0(V^\prime)}\right) |\ol\Phi|^2(V^\prime)^{a+1-2S(\Phi)}\Omega^2W_0(U^\prime,V^\prime) dU^\prime dV^\prime\\
    \lesssim &C(\epsilon_1,D)\int_{V_0(U)}^{V}\frac{1}{V^\prime} \lnm \ol\Phi\rnm^2_{\LSS^2\left(\Hb_{V^\prime}^{U_0(V^\prime)}\right)}\\
    \lesssim & C(\epsilon_1,D)\epsilon^{2-\delta} V^a.
    \end{split}
\end{equation}
If $\lnm \ol\Phi\rnm^2_{\LSS^2\left(H_U^V\right)}\lesssim C(\epsilon_1,D)\epsilon^{2-\delta} V^a,$ we have also
\begin{equation}\label{R III initial est 1.4}
    \begin{split}
        &\left(\int_{U_0(V)}^{U_{-1}\circ V_0(U)} \int_{V_{-1}\circ U_0(V)}^{V_0(U^\prime)} +  \int_{U_{-1}\circ V_0(U)}^{U}\int_{V_0(U)}^{V_0(U^\prime)}\right)|\ol\Phi|^2(V^\prime)^{a-2S(\Phi)}\Omega^2W_0(U^\prime,V^\prime) dU^\prime dV^\prime\\
        \lesssim & \int_{U_{0}(V)}^U \frac{1}{\left(V_{-1}(U^\prime)\right)^{1-\k}} \lnm \ol\Phi\rnm^2_{\LSS^2\left(H_{U^\prime}^{V_0(U^\prime)}\right)}\\
        \lesssim & C(\epsilon_1, D)\epsilon^{2-\delta} V^a.
    \end{split}
\end{equation} 
Combining \eqref{R III initial est 1.1}, \eqref{R III initial est 1.2}, \eqref{R III initial est 1.3}, \eqref{R III initial est 1.4}, and Proposition \ref{Main Proposition R_II}, we have for equations system \eqref{R III initial est eq sys}
\begin{equation}
  \begin{split}
    &\int_{V_0(U)}^V-U_0^\prime(V^\prime)\int_{\SS}\left|\nabla^i\ol{\Phi_1}\right|^2(V^\prime)^{a+1-2S(\Phi_1)}\Omega^2d\V dV^\prime\\
    &\qquad +
      \int_{U_0(V)}^U -V_0^\prime(U^\prime)\int_{\SS}\left|\nabla^i\ol{\Phi_2}\right|^2\left(V_0(U^\prime)\right)^{a+1-2S(\Phi_2)}d\V dU^\prime\\
      \lesssim & \left(\int_{U_0(V)}^{U_{-1}\circ V_0(U)} \int_{V_{-1}\circ U_0(V)}^{V_0(U^\prime)} +  \int_{U_{-1}\circ V_0(U)}^{U}\int_{V_0(U)}^{V_0(U^\prime)}\right)|F_1|^2(V^\prime)^{a+\k-2S(F_1)}W_0(U^\prime,V^\prime) dU^\prime dV^\prime\\
      &+\left(\int_{U_0(V)}^{U_{-1}\circ V_0(U)} \int_{V_{-1}\circ U_0(V)}^{V_0(U^\prime)} +  \int_{U_{-1}\circ V_0(U)}^{U}\int_{V_0(U)}^{V_0(U^\prime)}\right)|F_2|^2(V^\prime)^{a+\k-2S(F_2)}W_0(U^\prime,V^\prime) dU^\prime dV^\prime\\
      \lesssim& C(\epsilon_1, D)\epsilon^{2-\delta} V^a\leq \epsilon^{2-3\delta/2} V^a .
  \end{split}
\end{equation}

\end{proof}

\subsection{Norms and lemmas}

We are now firmly positioned to systematically establish the core estimates within Region III. The functional definitions for the norms $\lnm \cdot\rnm_{\LS_W^2(S_{U,V})}$, $\lnm \cdot\rnm_{\LS_W^\infty(S_{U,V})}$, $\lnm \cdot\rnm_{\LS_W^2(H_U^V)}$, and $\lnm \cdot\rnm_{\LS_W^2(\Hb_V^U)}$ remain entirely identical to those presented in \eqref{R II norm sys}. However, we modify the definition of $\lnm \cdot\rnm_{\LS_W^2(\R_{U,V})}$:

\begin{equation*}
            \begin{split}
        \lnm \psi\rnm^2_{\LS_W^2(\R_{U,V})}=&\int_{U_0(V)}^U\int_{V_0(U^\prime)}^V (V^\prime)^{\k-2} \lnm \psi\rnm^2_{\LS_W^2(S_{U^\prime,V^\prime})}dU^\prime dV^\prime.
    \end{split}
\end{equation*}
Here, compared to the corresponding definition in \eqref{R II norm sys}, we have explicitly substituted the volume element factor $\Omega^2$ with $V^\k$ to better reflect the asymptotic behavior. Furthermore, we define an analogous integrated norm along outgoing cones: 
        \begin{equation*}
            \begin{split}
        \lnm \psi\rnm^2_{\LS_W^2(\Rb_{U,V})}=&\int_{U_0(V)}^U\int_{U_0(V)}^{U^\prime} V^{2\k-2}\lnm \Omega\psi\rnm^2_{\LS_W^2(S_{U^{\prime\prime},V})}dU^{\prime\prime} dU^\prime.
    \end{split}
\end{equation*}
In order to effectively estimate the combined sum $\lnm \Psi_1\rnm^2_{\LS_W^2(\R_{U,V})}+\lnm \Psi_2\rnm^2_{\LS_W^2(\Rb_{U,V})}$ specifically for a Bianchi pair $(\Psi_1,\Psi_2)$, we must define the corresponding specialized bulk energy norm:
        \begin{equation*}
            \begin{split}
        \lnm \psi\rnm^2_{\LS_W^2(\D_{U,V})}=& \int_{U_0(V)}^U\int_{V_0(U^\prime)}^V\int_{U_0(V^\prime)}^U (V^\prime)^{2\k-3} \lnm \Omega\psi\rnm^2_{\LS_W^2(S_{U^{\prime\prime},V^\prime})}dU^{\prime\prime} dV^\prime dU^\prime,
    \end{split}
\end{equation*}
and the initial energy along $z=\epsilon_1^{-1}$,
        \begin{equation*}
            \begin{split}
        \lnm \psi\rnm^2_{\LS_W^2(\Sigma_{U,V})}=&\int_{V_0(U)}^V\frac{1}{V^\prime} \lnm \psi\rnm^2_{\LS_W^2(S_{U_0(V^\prime),V^\prime})}dV^\prime,
    \end{split}
\end{equation*}
and especially $\lnm \psi\rnm^2_{\LS_W^2(\Sigma_{V})}=\lnm \psi\rnm^2_{\LS_W^2(\Sigma_{0,V})}$. Moreover, we define 
        \begin{equation*}
            \begin{split}
        \lnm \psi\rnm^2_{\LS_W^2(\Xi_{V})}=& \int_{U_0(V)}^0\int_{U_0(V)}^{U^\prime} \frac{1}{V_0(U^{\prime\prime})^{2-2\k}}\lnm \psi\rnm^2_{\LS_W^2(S_{U^{\prime\prime},V_0(U^{\prime\prime})})}dU^{\prime\prime} dU^\prime\\
        &+\int_{U_0(V)}^0\int_{V_0(U^\prime)}^V (V^\prime)^{\k-2}\lnm \psi\rnm^2_{\LS_W^2(S_{U_0(V^\prime),V^\prime})} dV^\prime dU^\prime.
    \end{split}
\end{equation*}
In this section, we use weight \begin{equation}
    W_{III}(U,V)=V^a\cdot\left(\frac{-U}{V^{1-\k}}\right)^\tau,
\end{equation} 
and we will denote the corresponding newly re-weighted norms uniformly by $\LT$. In parallel, we will continue to use $\LS_1$ to unambiguously denote those norms evaluated with the trivial constant weight $W\equiv 1$.
Furthermore, we establish the corresponding properly weighted inner products: 
\begin{equation}
    \begin{split}
        \langle \psi_1,\psi_2\rangle_{S_{U,V}}&=\int_{\SS}V^{2-S(\psi_1)-S(\psi_2)}W(U,V)\langle\psi_1,\psi_2\rangle d\V,\\
        \langle \psi_1,\psi_2\rangle_{H_U^V}&=\int_{V_0(U)}^V(V^\prime)^{-1}\langle \psi_1,\psi_2\rangle_{S_{U,V^\prime}}dV^\prime,\\
        \langle \psi_1,\psi_2\rangle_{\Hb_V^U}&=\int_{U_0(V)}^U V^{\k-1}\langle\Omega\psi_1,\Omega\psi_2\rangle_{S_{U^\prime,V}} dU^\prime,\\
        \langle \psi_1,\psi_2\rangle_{\R_{U,V}}&=\int_{V_0(U)}^V\int_{U_0(V^\prime)}^U (V^\prime)^{\k-2}\langle \psi_1,\psi_2\rangle_{S_{U^\prime,V^\prime}}dU^\prime  dV^\prime.
    \end{split}
\end{equation} 
Before embarking on the rigorous sequence of estimates, we first analyze some illustrative toy models to clearly demonstrate the core methodology continuously employed throughout this section.
\begin{lemma}
    Consider the below transport equations 
    \begin{equation}
        \Omega^{-1}\nabla_3\varphi_1=F,\quad \Omega \nabla_4\varphi_2+\frac{1}{2}\lambda \Omega\tr\chi\varphi_2+4\Lambda\Omega\omega\varphi_2=G.
    \end{equation}
    Suppose that $\lnm \ol{\Omega\tr\chi},\ol{\Omega\omega}\rnm_{\LS_1^\infty(S)}=o_{\epsilon}(1)$,
    then the following estimates hold
    \begin{equation}\label{R III transport estimate nabla 3}
        \begin{aligned}
            \lnm \varphi_1\rnm^2_{\LT^2(S_{U,V})}\leq &\lnm \varphi_1\rnm^2_{\LT^2(S_{U_0(V),V})}+ \lnm \varphi_1\rnm^2_{\LT^2(\Hb_V^U)}+ \lnm F\rnm^2_{\LT^2(\Hb_V^U)}\\
            &+ \lnm \varphi_1\lnm \Omega^{-2}\dv b\rnm^{1/2}_{\LS_1^\infty(S)}\rnm^2_{\LT^2(\Hb_V^U)},
        \end{aligned}
    \end{equation}
    and for any $\delta_1>0$, we deduce that
    \begin{equation}\label{R III transport estimate nabla 4}
        \begin{split}
            \lnm \varphi_2\rnm^2_{\LT^2(S_{U,V})} \leq & \lnm \varphi_2\rnm^2_{\LT^2(S_{U,V_0(U)})}+\frac{1}{\delta_1}\lnm G\rnm^2_{\LT^2(H_U^V)}\\
            &+\left(2-2S(\varphi_2)+a-\tau(1-\k)-2\lambda +2\Lambda \k+\delta_1+o_{\epsilon_1,\epsilon}(1)\right)\cdot \\
            &\qquad\qquad\lnm \varphi_2\rnm^2_{\LT^2(H_U^V)}.
        \end{split}
    \end{equation}
\end{lemma}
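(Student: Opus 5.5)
Both estimates follow the pattern of the transport lemmas \eqref{transport_3}, \eqref{transport_4} and \eqref{R_II transport estimate e_3}--\eqref{R_II transport estimate e_4}. We differentiate the weighted sphere norm $\lnm\cdot\rnm^2_{\LT^2(S_{U,V})}$ along the relevant null generator, integrate, and read off the sign of the zeroth-order coefficient produced by the weight $V^{2-2S}W_{III}$ with $W_{III}=V^a(-U)^\tau V^{-\tau(1-\k)}$.

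\emph{Incoming estimate \eqref{R III transport estimate nabla 3}.} We write $\partial_U=\Omega e_3-b^A e_A$ and differentiate
\[
\int_{\SS}V^{2-2S(\varphi_1)}W_{III}\,|\varphi_1|_g^2\,d\V
\]
in $U$ at fixed $V$. This produces four contributions.
\begin{itemize}
\item The derivative of the weight gives $\partial_U\log W_{III}=-\tau/(-U)\le 0$, which can be dropped.
\item The term $2\langle\Omega\nabla_3\varphi_1,\varphi_1\rangle$ becomes $2\Omega^2\langle F,\varphi_1\rangle$. We bound it by Cauchy--Schwarz, splitting it into $\Omega^2|F|^2+\Omega^2|\varphi_1|^2$. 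This is where the two terms $\lnm F\rnm^2_{\LT^2(\Hb_V^U)}$ and $\lnm\varphi_1\rnm^2_{\LT^2(\Hb_V^U)}$ come from, since the $\Hb$-norm carries the factor $\Omega^2$.
\item The terms involving $\chib$ from $\partial_U|\varphi_1|^2_g$ (through $\partial_U g^{-1}$) are absorbed by rewriting the derivative covariantly. We will check that, after Cauchy--Schwarz, they only contribute to the $\lnm\varphi_1\rnm^2_{\LT^2(\Hb)}$ term, using $|\Omega^{-1}\chib|\lesssim V^{-1}$ from Lemma \ref{R_III_Chr_sol_lemma} together with the smallness of $\ol{\Omega^{-1}\chib}$.
\item The divergence term from $e_A$ acting on the volume form is $\dv b\,|\varphi_1|^2 = \Omega^2(\Omega^{-2}\dv b)|\varphi_1|^2$. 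This is bounded by $\lnm\Omega^{-2}\dv b\rnm_{\LS_1^\infty}$ times the $\Hb$-density, which gives the last term.
\end{itemize}
Integrating from $U_0(V)$ to $U$ yields \eqref{R III transport estimate nabla 3}. No sign condition is needed because every term except the favourable weight term is placed on the right-hand side.

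\emph{Outgoing estimate \eqref{R III transport estimate nabla 4}.} Here $\partial_V=\Omega e_4$ with no shift. Differentiating in $V$ at fixed $U$ gives three contributions.
\begin{itemize}
\item The weight contributes $\partial_V\log(V^{2-2S}W_{III})=\big(2-2S(\varphi_2)+a-\tau(1-\k)\big)V^{-1}$.
\item The equation contributes $-\lambda\,\Omega\tr\chi-8\Lambda\,\Omega\omega$ from the linear terms, plus $2\langle G,\varphi_2\rangle$.
\item The volume form and $\partial_V g^{-1}$ contribute $\Omega\tr\chi$ terms. These cancel in the standard way: the area-element derivative $\tr\chi$ combines with $-2\chi$ from $\partial_V g^{-1}$ acting on the tensor indices. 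The signature normalization is designed so that these cancellations are already encoded in the weight $V^{2-2S}$, since $g\sim V^2 g^{\SS}$ by Lemma \ref{R_III_Chr_sol_lemma}.
\end{itemize}
The key input is Lemma \ref{R_III_Chr_sol_lemma}: $V\,\Omega\tr\chi^c\to 2$ and $V\,\Omega\omega^c\to-\k/4$ as $z\to\infty$, and in $\mathcal{R}_{III}$ we have $z\ge\epsilon_1^{-1}$. Combined with the hypothesis $\lnm\ol{\Omega\tr\chi},\ol{\Omega\omega}\rnm_{\LS_1^\infty}=o_\epsilon(1)$, this gives
\[
V(\lambda\,\Omega\tr\chi+8\Lambda\,\Omega\omega)=2\lambda-2\Lambda\k+o_{\epsilon_1,\epsilon}(1).
\]
Hence the total coefficient of $V^{-1}|\varphi_2|^2$ is $2-2S+a-\tau(1-\k)-2\lambda+2\Lambda\k+o(1)$. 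The cross term is bounded by Cauchy--Schwarz with parameter $\delta_1$:
\[
2|\langle G,\varphi_2\rangle|\le \delta_1 V^{-1}|\varphi_2|^2+\delta_1^{-1}V|G|^2 .
\]
The weights match because $S(G)=S(\varphi_2)-1$, so the $G$-part becomes $\delta_1^{-1}\lnm G\rnm^2_{\LT^2(H_U^V)}$. Integrating in $V$ from $V_0(U)$ then yields the stated inequality, with the coefficient multiplying $\lnm\varphi_2\rnm^2_{\LT^2(H_U^V)}$, since the $H$-norm carries the measure $V^{-1}dV$ relative to the sphere density.

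\emph{Main obstacle.} The bookkeeping of the $\Omega\tr\chi$ contributions in the outgoing estimate is the main difficulty. One must verify that the metric-derivative terms for a rank-$r$ tensor, namely $\tr\chi$ from the area element and $-2r\chi$ from the inverse metrics, combine with the factor $V^{-2r}$ hidden in the comparison $|\cdot|_g$ versus $|\cdot|_{\SS}$ so as to leave exactly $-2S(\varphi_2)$. We will check this first for scalars and then argue by signature additivity. The $\chih$ part and the deviations $\ol{\Omega\tr\chi}$ are absorbed into the $o_{\epsilon_1,\epsilon}(1)$ term.
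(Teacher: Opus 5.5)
Your proposal is essentially the paper's own proof. You differentiate $\int_{\SS}V^{2-2S}W_{III}|\varphi|_g^2\,d\V$ along $\partial_U=\Omega e_3-b^A\partial_A$ (resp.\ $\partial_V=\Omega e_4$), drop $\partial_U\log W_{III}=-\tau/(-U)\le 0$, and split the source pairing by Cauchy--Schwarz. You then bound the $\dv b$ term by $\lnm\Omega^{-2}\dv b\rnm_{\LS_1^\infty}$ and read off $-2\lambda+2\Lambda\k$ from $V\Omega\tr\chi\to 2$, $V\Omega\omega\to-\k/4$; your final coefficients are correct.

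The one thing to correct is your ``main obstacle'', which does not exist. The norm integrates $|\cdot|_g$ against the fixed round measure $d\V$, and $\nabla_3,\nabla_4$ are compatible with $g$. Hence $\Omega e_4|\varphi_2|_g^2=2\langle\Omega\nabla_4\varphi_2,\varphi_2\rangle_g$ and $\Omega e_3|\varphi_1|_g^2=2\langle\Omega\nabla_3\varphi_1,\varphi_1\rangle_g$ hold exactly. There are no $\chi$, $\chib$ or area-element terms to track. An area-element term, as you describe it, would not cancel: it would add $V\Omega\tr\chi\approx 2$ to the coefficient.
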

\begin{proof}
    In a manner strictly analogous to standard transport estimates, we integrate along the null cones to find 
    \begin{equation}
        \begin{split}
            &\lnm \varphi_1\rnm^2_{\LT^2(S_{U,V})}\\
            = &\lnm \varphi_1\rnm^2_{\LT^2(S_{U_0(V),V})}+\int_{U_0(V)}^U\int_{\SS}\left(\Omega^2\cdot\Omega^{-1}e_3-b^A\nabla_A\right)\left(V^{2-2S(\varphi_1)}\left|\varphi_1\right|^2W_{III}\right)\\
            \leq & \lnm \varphi_1\rnm^2_{\LT^2(S_{U_0(V),V})}+2\langle \varphi_1,F\rangle_{\Hb_V^U}+\int_{U_0(V)}^U\int_{\SS}\dv b V^{2-2S(\varphi_1)}\left|\varphi_1\right|^2W_{III}\\
            \leq & \lnm \varphi_1\rnm^2_{\LT^2(S_{U_0(V),V})}+ \lnm \varphi_1\rnm^2_{\LT^2(\Hb_V^U)}+ \lnm F\rnm^2_{\LT^2(\Hb_V^U)}+ \lnm \varphi_1\lnm \dv b\Omega^{-2}\rnm^{1/2}_{\LS_1^\infty(S)}\rnm^2_{\LT^2(\Hb_V^U)}.
        \end{split}
    \end{equation}
    For the signature calculation, we note that 
    \begin{equation}
        \begin{aligned}
            \int_{U_0(V)}^U\int_{\SS}\dv b V^{2-2S(\varphi_1)}\left|\varphi_1\right|^2W_{III}\leq &  \int_{U_0(V)}^U\sup_{S_{U^\prime,V}}\left|\Omega^{-2}\dv b \cdot V\right|\int_{\SS} V^{1-2S(\varphi_1)}\left|\Omega\varphi_1\right|^2W_{III}\\
            =& \lnm \varphi_1\lnm \dv b\Omega^{-2}\rnm^{1/2}_{\LS_1^\infty(S)}\rnm^2_{\LT^2(\Hb_V^U)}.
        \end{aligned}
    \end{equation}
    For $\varphi_2$, we use Lemma \ref{R_III_Chr_sol_lemma} and the assumption $\lnm \ol{\Omega\tr\chi},\ol{\Omega\omega}\rnm_{\LS_1^\infty(S)}=o_{\epsilon}(1)$ to obtain that 
    \begin{equation}
        \begin{split}
            &\lnm \varphi_2\rnm^2_{\LT^2(S_{U,V})} \\\leq & \lnm \varphi_2\rnm^2_{\LT^2(S_{U,V_0(U)})}+\frac{1}{\delta_1}\lnm G\rnm^2_{\LT^2(H_U^V)}\\
            &+\left(2-2S(\varphi_2)+V\partial_V\log W_{III}-\lambda V\cdot\Omega\tr\chi-8\Lambda V\cdot\Omega\omega+\delta_1\right)\lnm \varphi_2\rnm^2_{\LT^2(H_U^V)}\\
            \leq & \lnm \varphi_2\rnm^2_{\LT^2(S_{U,V_0(U)})}+\frac{1}{\delta_1}\lnm G\rnm^2_{\LT^2(H_U^V)}\\
            &+\left(2-2S(\varphi_2)+a-\tau(1-\k)-2\lambda +2\Lambda \k+\delta_1+o_{\epsilon_1,\epsilon}(1)\right)\lnm \varphi_2\rnm^2_{\LT^2(H_U^V)}.
        \end{split}
    \end{equation}
\end{proof}

This lemma will be invoked repeatedly for spherical estimates. In applications to $\nabla_4$-transport, the choice of $\tau$ and the limiting values of $V\Omega\tr\chi$ and $V\Omega\omega$ determine whether the bulk term has a favorable sign.
\begin{lemma}\label{R III energy est lemma H Hb}
    Consider below Bianchi pair equations 
    \begin{equation}
        \left\{
        \begin{aligned}
            &\Omega^{-1}\nabla_3\Phi_1+\D \Phi_2=F,\\
            &\Omega \nabla_4\Phi_2+\frac{1}{2}\lambda \Omega\tr\chi\Phi_2+4\Lambda\Omega\omega\Phi_2-{}^*\D \Phi_1=G.
        \end{aligned}
        \right.
    \end{equation}
    If $1+\k-2S(\Phi_2)+a-\tau(1-\k)-2\lambda +2\Lambda \k<-2\delta_1<0$, then the estimate holds
    \begin{equation}\label{R III energy estimate}
        \begin{aligned}
            &\lnm\Phi_1\rnm^2_{\LT^2(H_U^V)}+\lnm \Phi_2\rnm^2_{\LT^2(\Hb_V^U)} -\lnm\Phi_1\rnm^2_{\LT^2(\Sigma_{U,V})} - \left|V^\k\partial_VU_0\right|\cdot\lnm \Omega\Phi_2\rnm^2_{\LT^2(\Sigma_{U,V})}\\
            \leq & \langle \Omega^2\Phi_1,F\rangle_{\R_{U,V}}+\lnm \Phi_1 \lnm\dv b\rnm_{\LS_1^\infty}^{1/2}\rnm^2_{\LT^2(\R_{U,V})}+\frac{1}{\delta_1}\lnm \Omega G\rnm^2_{\LT^2(\R_{U,V})}\\
            &+\langle\left|\Omega\Phi_1(\eta+\etab)\right| ,\left|\Omega\Phi_2\right|\rangle_{\R_{U,V}}-\delta_1\lnm \Omega\Phi_2\rnm^2_{\LT^2(\R_{U,V})}.
        \end{aligned}
    \end{equation}
\end{lemma}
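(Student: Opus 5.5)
The plan is to run the standard double null energy identity for a Bianchi pair in the weighted $\LT$ norms: multiply the first equation by $\Phi_1$ and the second by $\Phi_2$ with the appropriate weights, then integrate over the spacetime region bounded by $\Sigma_{U,V}$, $H_U^V$ and $\Hb_V^U$.

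\textbf{Step 1: the $\Phi_1$ identity.} Write $\partial_U=\Omega^2\cdot\Omega^{-1}e_3-b^A\partial_A$. I compute $\partial_U\big(V^{1-2S(\Phi_1)}W_{III}|\Phi_1|^2\,d\V\big)$ and integrate in $U$ from $U_0(V')$ to $U$, then against $(V')^{-1}dV'$. The $\Omega^{-1}\nabla_3\Phi_1$ term produces
\[
-2\langle\Omega^2\Phi_1,\D\Phi_2\rangle+2\langle\Omega^2\Phi_1,F\rangle
\]
in the $\R_{U,V}$ inner product; this bookkeeping is why the measure $(V')^{\k-2}$ carries $\Omega^2\sim V^\k$. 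The $b^A\partial_A$ term, after integration by parts on the sphere, gives $\dv b\,|\Phi_1|^2$. This is bounded by $\lnm\Phi_1\lnm\dv b\rnm^{1/2}_{\LS_1^\infty}\rnm^2_{\LT^2(\R_{U,V})}$. The derivative $\partial_U W_{III}=\tau W_{III}/U\le 0$ has a favorable sign, so I simply drop it. The boundary terms are $\lnm\Phi_1\rnm^2_{\LT^2(H_U^V)}$ and $\lnm\Phi_1\rnm^2_{\LT^2(\Sigma_{U,V})}$.

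\textbf{Step 2: the $\Phi_2$ identity.} I compute $\partial_V\big(V^{1-2S(\Phi_2)}\Omega^2W_{III}|\Phi_2|^2\big)$ and integrate against $V^{\k-1}$-type weights in $U$, matching the $\Hb$ norm. The zeroth-order coefficient is
\[
V\partial_V\log\big(V^{1-2S(\Phi_2)}\Omega^2W_{III}\big)-\lambda V\,\Omega\tr\chi-8\Lambda V\,\Omega\omega.
\]
By Lemma \ref{R_III_Chr_sol_lemma}, with $V\partial_V\log\Omega^2=-4V\Omega\omega\to\k$, $V\Omega\tr\chi\to2$ and $V\Omega\omega\to-\k/4$, this tends to
\[
1+\k-2S(\Phi_2)+a-\tau(1-\k)-2\lambda+2\Lambda\k.
\]
The smallness of the differences $\ol{\Omega\tr\chi}$ and $\ol{\Omega\omega}$ is absorbed into the $\delta_1$ margin. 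The hypothesis then makes this term at most $-2\delta_1\lnm\Omega\Phi_2\rnm^2_{\LT^2(\R)}$. The source term $2\langle\Omega\Phi_2,\Omega G\rangle$ is split by Cauchy–Schwarz into $\delta_1^{-1}\lnm\Omega G\rnm^2+\delta_1\lnm\Omega\Phi_2\rnm^2$, which leaves the net $-\delta_1$ term. The boundary terms are $\lnm\Phi_2\rnm^2_{\LT^2(\Hb_V^U)}$ on the future side and, on the curve $U=U_0(V)$, the term weighted by $|V^\k\partial_VU_0|$.

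\textbf{Step 3: cancellation of the principal terms.} The $\D\Phi_2$ term from Step 1 and the ${}^*\D\Phi_1$ term from Step 2 cancel after integrating by parts on $S_{U,V}$, up to the derivative of the density $V^{-2}\sqrt{\det g}$. That derivative is $\nabla\log\sqrt{\det g}$, and via $\nabla\log\Omega^2=\eta+\etab$ it is expressed through $\eta+\etab$ and $\Omega$. This yields the error $\langle|\Omega\Phi_1(\eta+\etab)|,|\Omega\Phi_2|\rangle_{\R_{U,V}}$.

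I expect the main obstacle to be making the weights match in Step 3. The cancellation must occur with identical measures, but the two identities are naturally integrated against $\Omega^2(V')^{-1}$ and $V^{\k-1}\Omega^2$ respectively. This is exactly why $\R_{U,V}$ was redefined with $V^{\k-2}$ in place of $\Omega^2$. I would first check that $\Omega^2/V^\k$ is close to a function of $U$ alone in $\R_{III}$. If it is not, any discrepancy would be absorbed into the $\eta+\etab$ error term through $\nabla\log\Omega$, together with an $o(1)$ loss placed in the $\delta_1$ margin.
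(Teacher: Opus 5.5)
Your Steps 1 and 2 match the paper's argument. For $\Phi_1$ you use $\partial_U=\Omega^2\,\Omega^{-1}e_3-b^A\partial_A$, drop the term with $\partial_UW_{III}\le 0$, and bound the shift term by $\dv b$. For $\Phi_2$ you use $\partial_V=\Omega e_4$, whose zeroth-order coefficient is exactly the one in the hypothesis, and then apply Cauchy--Schwarz to $G$.

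\textbf{The error is in Step 3.} The claim that the density derivative is ``expressed through $\eta+\etab$'' is false. The angular gradient of $\log(V^{-2}\sqrt{\det g})$, taken relative to the round density $d\V$, is a difference of contracted Christoffel symbols. It is controlled by angular derivatives of the metric difference $\ol{g}$. The identity $\nabla\log\Omega^2=\eta+\etab$ concerns the lapse, not the area element. So this density contribution is a separate error term that needs its own control. Compare the explicit hypotheses on $\nabla\big((-u)^2/\sqrt{\det g}\big)$ and $\nabla\big(\sqrt{\det g}/V^2\big)$ in the Region I and II energy lemmas; the paper's Region III proof does not display this term at all.

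\textbf{Where the $\eta+\etab$ term actually comes from.} After differentiating the two fluxes, the principal bulk terms are
$2(V')^{1-2S}W_{III}\,\Omega^2\big(\langle\Phi_2,{}^*\D\Phi_1\rangle-\langle\Phi_1,\D\Phi_2\rangle\big)$, and they share one weight:
\begin{itemize}
\item $S(\Phi_1)=S(\Phi_2)$, because $S(\Omega^{-1}\nabla_3\Phi_1)=S(\D\Phi_2)$;
\item $\Omega^2$ enters the $\Phi_1$ identity through $\partial_U=\Omega^2\,\Omega^{-1}e_3-\dots$;
\item $\Omega^2$ enters the $\Phi_2$ identity through the $\Omega^2$ in the $\Hb$ flux.
\end{itemize}
Since $V^{1-2S}W_{III}$ is constant on each sphere, integrating by parts leaves only the angular derivative of $\Omega^2$. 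This is a term of the form $\langle\Phi_1\,\D\Omega,\Omega\Phi_2\rangle_{\R_{U,V}}$, and $\nabla_A\Omega=\tfrac12\Omega(\eta_A+\etab_A)$ turns it into the stated error.

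Consequently the obstacle you anticipate is not there:
\begin{itemize}
\item There is no mismatch of measures. The $V^{\k-2}$ in $\R_{U,V}$ only moves $\Omega^2$ into the arguments $\langle\Omega^2\Phi_1,F\rangle$, $\|\Omega G\|$, $\|\Omega\Phi_2\|$.
\item You should not try to show that $\Omega^2/V^\k$ is independent of $\theta$. It is not, and its angular gradient is precisely what produces the error term.
\item Nothing from this step needs to go into the $\delta_1$ margin.
\end{itemize}
Your fallback sentence about $\nabla\log\Omega$ is in fact the whole mechanism and should replace the density argument.
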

\begin{proof}
    Following the systematic approach of standard energy estimates, we deduce
    \begin{equation}
        \begin{aligned}
            &\lnm\Phi_1\rnm^2_{\LT^2(H_U^V)}\\ 
            \leq & \lnm\Phi_1\rnm^2_{\LT^2(\Sigma_{U,V})}+2\int_{V_0(U)}^V\frac{1}{V^\prime}\langle\Phi_1,F-\D \Phi_2\rangle_{\Hb_{V^\prime}^{U}}+\lnm \Phi_1\lnm\dv b\rnm_{\LS_1^\infty}^{1/2}\rnm^2_{\LT^2(\R_{U,V})}\\
            \leq & \lnm\Phi_1\rnm^2_{\LT^2(H_{U_0(V)}^V)}+\langle \Omega^2\Phi_1,F\rangle_{\R_{U,V}}+\lnm \Phi_1 \left|\dv b\right|^{1/2}\rnm^2_{\LT^2(\R_{U,V})}\\
            &-2\int_{U_0(V)}^U\int_{V_0(U^\prime)}^V V^{\k-2}\langle\Omega^2 \Phi_1, \D\Phi_2\rangle_{S_{U^\prime,V^\prime}} .
        \end{aligned}
    \end{equation}
    With coordinate change $U^\prime=U_0(V^\prime)$, this yields 
    \begin{equation}
        \begin{aligned}
            &\int_{U_0(V)}^U \frac{1}{(V_0(U^\prime))^{1-\k}}\lnm \Omega \Phi_2\rnm^2_{\LT(U^\prime,V_0(U^\prime))}d U^\prime\\
            &\qquad\qquad=(-V^\k\partial_V U_0)\int_{V_0(U)}^U\frac{1}{V^\prime}\lnm \Omega \Phi_2\rnm^2_{\LT(U_0(V^\prime),V^\prime)}d V^\prime,
        \end{aligned}
    \end{equation}
    and therefore 
    \begin{equation}
        \begin{aligned}
            &\lnm \Phi_2\rnm^2_{\LT^2(\Hb_V^U)}\\
            \leq & \left|V^\k\partial_V U_0\right|\cdot\lnm \Omega\Phi_2 \rnm^2_{\LT^2(\Sigma_{U,V})}\\
            &+\frac{1}{\delta_1}\lnm \Omega G\rnm^2_{\LT^2(\R_{U,V})} +2\int_{U_0(V)}^U\int_{V_0(U^\prime)}^V {(V^\prime)}^{\k-2}\langle \Omega^2\Phi_2,{}^*\D\Phi_1\rangle_{S_{U^\prime,V^\prime}}\\
            &+\left(1-2S(\Phi_2)+V\partial_V\log\Omega^2+V\partial_V\log W-\lambda V\Omega\tr\chi-8\Lambda V\Omega\omega+\delta_1\right)\lnm \Omega\Phi_2\rnm^2_{\LT^2(\R_{U,V})}.
        \end{aligned}
    \end{equation}
    Adding them together and applying Cauchy inequality, the following inequality holds 
    \begin{equation}
        \begin{aligned}
            &\lnm\Phi_1\rnm^2_{\LT^2(H_U^V)}+\lnm \Phi_2\rnm^2_{\LT^2(\Hb_V^U)} -\lnm\Phi_1\rnm^2_{\LT^2(\Sigma_{U,V})} - \left|V^\k\partial_VU_0\right|\cdot\lnm \Omega\Phi_2\rnm^2_{\LT^2(\Sigma_{U,V})}\\
            \leq & \langle \Omega^2\Phi_1,F\rangle_{\R_{U,V}}+\lnm \Phi_1 \lnm\dv b\rnm_{\LS_1^\infty}^{1/2}\rnm^2_{\LT^2(\R_{U,V})}+\frac{1}{\delta_1}\lnm \Omega G\rnm^2_{\LT^2(\R_{U,V})}\\
            &-2\langle\Phi_1\D\Omega ,\Omega\Phi_2\rangle_{\R_{U,V}}-\delta_1\lnm \Omega\Phi_2\rnm^2_{\LT^2(\R_{U,V})}.
        \end{aligned}
    \end{equation}
    Using $\nabla_A\Omega=\frac{1}{2}\Omega\left(\eta_A+\etab_A\right)$, we obtain the result.
\end{proof}

The following fundamental lemma provides the essential control over the initial energy as measured by the specifically tailored $\LT^2(\Xi_{V})$ norm.
\begin{lemma}
    Let $a$ be a positive number with $a<1-\k$. Suppose that 
    \begin{equation}
        \int_0^V\frac{1}{V^\prime}\int_{\SS}(V^\prime)^{2-2S(\psi)+a/2}\left|\psi\right|^2(U_0(V^\prime),V^\prime)\leq C_1 V^{a/2},
    \end{equation}
    then we deduce that 
    \begin{equation}
        \lnm \psi\rnm^2_{\LT^2(\Xi_{V})}\leq C(\epsilon_1)\cdot C_1 V^a.
    \end{equation}
\end{lemma}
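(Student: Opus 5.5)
The plan is to compute both pieces of $\lnm \psi\rnm^2_{\LT^2(\Xi_{V})}$ directly. Each time I will exchange the order of integration and reparametrize the past boundary $z=\epsilon_1^{-1}$ of $\R_{III}$ by $V^\prime$, so that every contribution becomes an integral of the hypothesis quantity
\[
f(V^\prime):=(V^\prime)^{a/2}\int_{\SS}(V^\prime)^{2-2S(\psi)}\left|\psi\right|^2(U_0(V^\prime),V^\prime)\,d\V
\]
against $dV^\prime/V^\prime$, times a factor bounded by $C(\epsilon_1)V^{a/2}$.

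\textbf{Step 1: the weight on the boundary.} I first record that on the boundary $-U_0(V^\prime)=(\epsilon_1V^\prime)^{1-\k}$ holds for $V^\prime\le\epsilon_1^{-1}$. Hence $W_{III}(U_0(V^\prime),V^\prime)=(V^\prime)^a\epsilon_1^{\tau(1-\k)}$, and therefore
\[
\lnm\psi\rnm^2_{\LT^2(S_{U_0(V^\prime),V^\prime})}\lesssim (V^\prime)^{a/2}f(V^\prime).
\]
The same formula also gives $-U_0^\prime(V^\prime)\sim_{\epsilon_1}(V^\prime)^{-\k}$. For $V^\prime>\epsilon_1^{-1}$ we have $U_0\equiv-1$. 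The Xi-integrals are taken only over $U\ge U_0(V)$, so on the first piece only $V^\prime\le\epsilon_1^{-1}$ contributes. On the second piece I will use $-U_0(V^\prime)\le(\epsilon_1V^\prime)^{1-\k}$, which holds for all $V^\prime$.

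\textbf{Step 2: the second term.} Its domain is $\{U_0(V)<U^\prime<0,\ V_0(U^\prime)<V^\prime<V\}$, which I rewrite as $\{0<V^\prime<V,\ U_0(V^\prime)<U^\prime<0\}$. The inner $U^\prime$-integral then equals $-U_0(V^\prime)\lesssim(\epsilon_1V^\prime)^{1-\k}$. This cancels $(V^\prime)^{\k-2}$ down to $(V^\prime)^{-1}$, and the term is bounded by
\[
C(\epsilon_1)\int_0^V (V^\prime)^{a/2}f(V^\prime)\frac{dV^\prime}{V^\prime}\le C(\epsilon_1)V^{a/2}\cdot C_1V^{a/2}.
\]
Here I used the monotonicity $(V^\prime)^{a/2}\le V^{a/2}$ and then the hypothesis.

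\textbf{Step 3: the first term.} I first exchange the $U^\prime$ and $U^{\prime\prime}$ integrations, which produces the factor $-U^{\prime\prime}$. I then substitute $U^{\prime\prime}=U_0(V^\prime)$, so that $V_0(U^{\prime\prime})=V^\prime$ and $dU^{\prime\prime}=|U_0^\prime|\,dV^\prime$. The resulting integrand factor is
\[
(-U_0)\,(V^\prime)^{2\k-2}\,|U_0^\prime|\sim_{\epsilon_1}(V^\prime)^{1-\k+2\k-2-\k}=(V^\prime)^{-1}.
\]
Thus the first term reduces to the same integral as in Step 2 and obeys the same bound $C(\epsilon_1)C_1V^a$.

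\textbf{Main obstacle.} The only delicate point is bookkeeping: checking that the exponents of $V^\prime$ cancel exactly to $(V^\prime)^{-1}$, and handling the corner where $U_0$ switches to the constant $-1$. I expect the restriction $a<1-\k$ to enter only through the convergence and finiteness of these boundary integrals near $V^\prime\to0$, where the powers $(V^\prime)^{a/2}$ must remain integrable against the geometry of the curve $V_0(U)$. Beyond that, the argument needs only Fubini and the explicit form of $U_0$.
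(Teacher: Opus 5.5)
Your argument is correct. For the first piece of the $\Xi_V$-norm it is essentially the paper's: exchange the $U'$ and $U''$ integrations, then substitute $U''=U_0(V')$; the paper likewise transforms this piece into a constant multiple of the second piece.

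For the second piece you take a different and slightly sharper route. The paper does not evaluate the $U'$-integral. It relaxes $(V')^{a+\kappa-1}\le (V')^{a/2}\,V_0(U')^{a/2+\kappa-1}$ on $\{V'\ge V_0(U')\}$ and then bounds the double integral by the product
$\int_{U_0(V)}^0 V_0(U')^{a/2+\kappa-1}\,dU'\cdot\int_0^V (V')^{a/2-1}\int_{\mathbb{S}}(V')^{2-2S(\psi)}|\psi|^2\,dV'$,
with each factor contributing $V^{a/2}$. That relaxation is exactly where the paper uses the hypotheses on $a$:
\begin{itemize}
\item $a/2+\kappa-1\le 0$, guaranteed by $a<1-\kappa$, is needed for the pointwise inequality;
\item $a>0$ is needed for $\int_0^V (V')^{a/2-1}dV'=\tfrac{2}{a}V^{a/2}$, so the paper's constant is really $C(\epsilon_1,a)$.
\end{itemize}

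Your exact Fubini step instead produces the factor $-U_0(V')\le(\epsilon_1V')^{1-\kappa}$. This reduces both pieces to $C(\epsilon_1)\int_0^V (V')^{-1}\|\psi\|^2_{\LT^2(S_{U_0(V'),V'})}\,dV'$, after which you only need $(V')^{a/2}\le V^{a/2}$. So your route uses only $a\ge 0$, never $a<1-\kappa$, and gives a constant independent of $a$, which matches the $C(\epsilon_1)$ in the statement. Your closing expectation about where $a<1-\kappa$ enters is therefore off: in your argument it does not enter at all.

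One small point to add in Step 1. The second piece with $V>\epsilon_1^{-1}$ also involves spheres with $V'>\epsilon_1^{-1}$, where $U_0(V')=-1$. There $W_{III}(-1,V')=(V')^{a-\tau(1-\kappa)}\le \epsilon_1^{\tau(1-\kappa)}(V')^{a}$, so your bound $\|\psi\|^2_{\LT^2(S_{U_0(V'),V'})}\lesssim (V')^{a/2}f(V')$ holds for those $V'$ as well.
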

\begin{proof}
    First we note that $(-V^\k\partial_V U_0)=(1-\k)\epsilon_1^{\k-1}$ and compute 
    \begin{equation}
        \begin{aligned}
            &\int_{U_0(V)}^0\frac{1}{V_0(U^\prime)^{1-\k-a/2}} dU^\prime
            = \int_0^V (V^\prime)^{a/2-1}(-(V^\prime)^\k\partial_V U_0(V^\prime)) dV^\prime \leq C(\epsilon_1,a)V^{a/2}.
        \end{aligned}
    \end{equation}
    For the second part in the definition of $\LT^2(\Xi_V)$, we relax $(V^\prime)^{a+\k-1}$ into\\ $(V^\prime)^{a/2}\cdot\frac{1}{V_0(U^\prime)^{1-\k-a/2}}$ and obtain
    \begin{equation}
        \begin{split}
            & \int_{U_0(V)}^0\int_{V_0(U^\prime)}^V (V^\prime)^{\k-2}\lnm \psi\rnm^2_{\LT^2(S_{U_0(V^\prime),V^\prime})} dV^\prime dU^\prime\\
            \leq &C(\epsilon_1)\left(\int_{U_0(V)}^0\frac{1}{V_0(U^\prime)^{1-\k-a/2}} dU^\prime\right)\cdot\left(\int_{0}^V (V^\prime)^{-1}\int_{\SS}(V^\prime)^{2-2S(\psi)+a/2}|\psi|^2 dV^\prime\right)\\
            \leq &    C(\epsilon_1,a)C_1 V^{a}.
        \end{split}
    \end{equation}
    The first part of $\lnm \psi\rnm_{\LT^2(\Sigma^V)}$ satisfies
    \begin{equation}
        \begin{split}
            &\int_{U_0(V)}^0\int_{U_0(V)}^{U^\prime} \frac{1}{V_0(U^{\prime\prime})^{2-2\k}}\lnm \psi\rnm^2_{\LT^2(S_{U^{\prime\prime},V_0(U^{\prime\prime})})}dU^{\prime\prime} dU^\prime\\
            = &  C(\epsilon_1)\int_{U_0(V)}^0\int_{V_0(U^\prime)}^{V}(V^\prime)^{\k-2}\lnm \psi\rnm^2_{\LT^2(S_{U_0(V^\prime),V^\prime})} dV^\prime dU^\prime,
        \end{split}
    \end{equation}
    which is transformed into the second part.
\end{proof}
The following crucial lemma explicitly reveals the analytical mechanism by which we can strictly improve the bootstrap constants governing the bulk energy distributions.
\begin{lemma}\label{R III Key Lemma}
    Suppose that for $\tau<1$, it holds 
    \begin{equation*}
        \lnm \Omega\psi\cdot\left(\frac{-U}{V^{1-\k}}\right)^{\tau/2}\rnm^2_{\LT^2(\R_{U,V})}\leq C_1 V^a,
    \end{equation*}
     then
    we for $\lnm \psi\rnm^2_{\LT^2(\D_{U,V})}$ have the estimate:
    \begin{equation}
        \lnm \psi\rnm^2_{\LT^2(\D_{U,V})}\leq \epsilon_1^{(1-\k)(1-\tau)}C_1V^a.
    \end{equation}
\end{lemma}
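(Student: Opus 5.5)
The plan is to unwind the definition of $\lnm \psi\rnm^2_{\LT^2(\D_{U,V})}$ and compare it with the bulk norm $\LT^2(\R_{U,V})$ by interchanging the order of integration. Writing out both norms, the integrand in $\D_{U,V}$ is $(V')^{2\k-3}\lnm\Omega\psi\rnm^2_{\LT^2(S_{U'',V'})}$. It is integrated over $U''\in(U_0(V'),U)$, over $V'\in(V_0(U'),V)$, and over $U'\in(U_0(V),U)$. The integrand does not depend on $U'$, so the first step is to perform the $U'$-integral explicitly, at fixed $(U'',V')$. The admissible set is $\{U' : U_0(V)\le U'\le U,\ V_0(U')\le V'\}$, i.e.\ $U'\le U_0(V')$ up to the boundary conventions. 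Its length is at most $\min(U,U_0(V'))-U_0(V)$, which is in particular bounded by $-U_0(V')\le (\epsilon_1 V')^{1-\k}$. This uses the explicit boundary $U_0(V)=-(\epsilon_1V)^{1-\k}$ for $V\le\epsilon_1^{-1}$, with $U_0\equiv-1$ beyond that.

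After this reduction the $\D$-norm is bounded by
\[
\int\!\!\int (V')^{2\k-3}\,\ell(V')\,\lnm\Omega\psi\rnm^2_{\LT^2(S_{U'',V'})}\,dU''\,dV',
\]
where $\ell(V')$ denotes the length just computed. The target is to match this against the hypothesis
\[
\lnm \Omega\psi\,(-U''/V'^{1-\k})^{\tau/2}\rnm^2_{\LT^2(\R_{U,V})},
\]
whose integrand is $(V')^{\k-2}(-U''/V'^{1-\k})^{\tau}\lnm\Omega\psi\rnm^2_{\LT^2(S)}$. One must therefore show
\[
(V')^{\k-1}\,\ell(V')\lesssim \epsilon_1^{(1-\k)(1-\tau)}\Big(\frac{-U''}{V'^{1-\k}}\Big)^{\tau}
\]
pointwise on the domain. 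The key observation is that in $\R_{III}$ one has $-U''\le -U_0(V')$, so $\frac{-U''}{V'^{1-\k}}\le\epsilon_1^{1-\k}$. The natural choice is to bound $\ell$ by the distance from $U''$ rather than by the full interval. After swapping the roles of $U'$ and $U''$ (both range in the same $U$-interval, with $U''\ge U_0(V')$), the effective length can be taken as $\lesssim -U''$, giving $(V')^{\k-1}(-U'')$. Splitting $(-U''/V'^{1-\k}) = (-U''/V'^{1-\k})^{\tau}\,(-U''/V'^{1-\k})^{1-\tau}$ and using $(-U''/V'^{1-\k})^{1-\tau}\le\epsilon_1^{(1-\k)(1-\tau)}$, which needs $\tau<1$, yields exactly the claimed factor.

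The main obstacle is the bookkeeping of the $U$-integration domains. One has to verify that the triple region really allows the $U'$-length to be controlled by $-U''$ and not merely by $-U_0(V')$, which would lose the $\epsilon_1$ gain. I would first try to write the region as $\{U_0(V')\le U''\le U,\ U_0(V)\le U'\le U,\ V'\ge V_0(U')\}$ and then use monotonicity of $U_0$. Doing so confines $U'$ to $[U_0(V),U_0(V')]$ while $U''\in[U_0(V'),U]$. If the direct bound only gives the $-U_0(V')$ length, the fallback is a change of variables $U'\mapsto U_0^{-1}$. This converts $dU'$ into $(V'')^{\k-1}dV''$ and produces an extra $\int_{V'}^{V}$ integral of $(V'')^{\k-1}$, which is then absorbed using the decay $(V')^{\k-2}$ and the same $\epsilon_1$ smallness of $-U/V^{1-\k}$ in $\R_{III}$. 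All constants depend only on $\k$, and the weight $W_{III}$ passes through unchanged since it is evaluated at $(U'',V')$ on both sides.
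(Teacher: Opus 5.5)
\textbf{Gap: the step that at fixed $(U'',V')$ the $U'$-range has length $\lesssim -U''$ is never justified, and your own bookkeeping points the other way.} $U_0$ and $V_0$ are nonincreasing and describe the same boundary curve $z=\epsilon_1^{-1}$. Hence $V'\ge V_0(U')$ is equivalent to $U'\ge U_0(V')$, not $U'\le U_0(V')$. With your reading, $U'$ would lie in $[U_0(V),U_0(V')]$, entirely below $U''$, and the length $U_0(V')-U_0(V)$ has nothing to do with $-U''$.

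With the correct reading, and with the inner limit as printed in the definition of the $\mathcal{D}$-norm ($U''\in[U_0(V'),U]$), $U'$ and $U''$ range independently over $[U_0(V'),U]$. The integrand depends only on $U''$, so the $U'$-integral is exactly $U-U_0(V')$, and relabelling $U'\leftrightarrow U''$ on this product domain cannot change that.

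A length bound uniform in $U''$ is useless. The hypothesis controls the sphere norm $f(U'',V')$ of $\Omega\psi$ only with the weight $(-U''/(V')^{1-\kappa})^{\tau}$, which degenerates as $U''\to 0$. Concentrate $f$ near $U''=U$ with $V'\in[1,2]$ and let $U\to 0$: the ratio of the two sides is $\gtrsim \epsilon_1^{1-\kappa}(-U)^{-\tau}$. So no such estimate can hold with the printed limits, and your fallback change of variables does not touch this issue.

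\textbf{What is missing is the nesting $U_0(V')\le U''\le U'\le U$.} The innermost integral of the $\mathcal{D}$-norm must be read as $\int_{U_0(V')}^{U'}$, which is how the paper's proof uses it. After Fubini the region becomes $\{V_0(U)\le V'\le V,\ U_0(V')\le U''\le U'\le U\}$. The $U'$-integral at fixed $(U'',V')$ then runs over $[U'',U]$ and has length $U-U''\le -U''$. Your pointwise comparison
\[
(V')^{\kappa-1}(-U'')\Big(\frac{(V')^{1-\kappa}}{-U''}\Big)^{\tau}=\Big(\frac{-U''}{(V')^{1-\kappa}}\Big)^{1-\tau}\le \epsilon_1^{(1-\kappa)(1-\tau)}
\]
then closes the proof, even without a factor $(1-\tau)^{-1}$.

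The paper uses the same ordering differently:
\begin{itemize}
\item it bounds $(-U'')^{-\tau}\le(-U')^{-\tau}$;
\item it enlarges the $U''$-integral to $[U_0(V'),U]$ to recover the $\mathcal{R}$-norm;
\item it integrates $\int_{U_0(V')}^{U}(-U')^{-\tau}\,dU'\le (-U_0(V'))^{1-\tau}/(1-\tau)$.
\end{itemize}
Either way you must state and use $U''\le U'$ explicitly. As written, your proposal neither identifies this ordering nor could work without it.
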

\begin{proof}
    Recall that the signature of $\psi\cdot\left(\frac{-U}{V^{1-\k}}\right)^\lambda$ for any $\lambda\in\mathbb{R}$ is equal to $S(\psi)$. 
    We compute directly and obtain
    \begin{equation*}
        \begin{aligned}
            &\lnm \psi\rnm^2_{\LT^2(\D_{U,V})}\\
            \leq & \int_{V_0(U)}^V\int_{U_0(V^\prime)}^U\frac{(-U^\prime)^{-\tau}}{(V^\prime)^{(1-\k)(1-\tau)}}dU^\prime \left(\int_{U_0(V^\prime)}^{U^\prime} (V^\prime)^{\k-2}\lnm \Omega\psi\left(\frac{-U}{V^{1-\k}}\right)^{\tau/2}\rnm^2_{\LT^2(S_{U^{\prime\prime},V^\prime})} d U^{\prime\prime}\right)dV^\prime\\
            \leq & \sup_{V^\prime}\left(\frac{-U_0(V^\prime)}{(V^\prime)^{1-\k}}\right)^{1-\tau}\lnm\Omega\psi\cdot\left(\frac{-U}{V^{1-\k}}\right)^{\tau/2}\rnm^2_{\LT^2(\R_{U,V})}\leq \epsilon_1^{(1-\k)(1-\tau)}C_1V^a.
        \end{aligned}
    \end{equation*}
\end{proof} 
The small factor in this lemma is the reason for the separate $\mathcal{D}$-norm. It converts an already known $\mathcal{R}$-type estimate into a genuinely smaller bulk estimate, which is then inserted back into the energy inequalities for top-order pairs. Besides, the bound of $\LT^2(\Hb_V^U)$ norm also yiels the result. See the corollary below.
\begin{corollary}\label{R III Key Lemma Cor}
    If $\tau<1$ and 
    \begin{equation*}
        \lnm \psi\cdot \left(\frac{-U}{V^{1-\k}}\right)^{\tau/2}\rnm^2_{\LT^2(\Hb_V^U)}\leq C_1 V^a,
    \end{equation*}
    we also have estimate for $\lnm \psi\rnm^2_{\LT^2(\D_{U,V})}$:
    \begin{equation}
        \lnm \psi\rnm^2_{\LT^2(\D_{U,V})}\leq \epsilon_1^{(1-\k)(1-\tau)}a^{-1}C_1V^a.
    \end{equation}
\end{corollary}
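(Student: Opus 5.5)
The natural route is to reduce Corollary \ref{R III Key Lemma Cor} to Lemma \ref{R III Key Lemma}. Set $\Theta(U,V)=\frac{-U}{V^{1-\k}}$. The goal is to show that an $\LT^2(\Hb_V^U)$ flux bound for $\psi\Theta^{\tau/2}$, holding for every $V$, yields an $\LT^2(\R_{U,V})$ bound for $\Omega\psi\Theta^{\tau/2}$. The loss in that step should be only a constant of the form $a^{-1}$.

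Once this is done, Lemma \ref{R III Key Lemma} supplies the factor $\epsilon_1^{(1-\k)(1-\tau)}$ and finishes the proof. Multiplying by $\Theta^{\lambda}$ leaves the signature unchanged, so no signature bookkeeping is needed.

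The first step is to compare the two norm definitions directly. By definition,
\[
\lnm \Omega\psi\,\Theta^{\tau/2}\rnm^2_{\LT^2(\R_{U,V})}=\int_{V_0(U)}^V (V^\prime)^{\k-2}\int_{U_0(V^\prime)}^U \lnm \Omega\psi\,\Theta^{\tau/2}\rnm^2_{\LT^2(S_{U^\prime,V^\prime})}\,dU^\prime\, dV^\prime .
\]
The inner $U^\prime$-integral, multiplied by $(V^\prime)^{\k-1}$, is exactly $\lnm \psi\,\Theta^{\tau/2}\rnm^2_{\LT^2(\Hb_{V^\prime}^U)}$, after Fubini to put the $V^\prime$-integration outside. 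This gives
\[
\lnm \Omega\psi\,\Theta^{\tau/2}\rnm^2_{\LT^2(\R_{U,V})}=\int_{V_0(U)}^V \frac{1}{V^\prime}\lnm \psi\,\Theta^{\tau/2}\rnm^2_{\LT^2(\Hb_{V^\prime}^U)}\,dV^\prime .
\]

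Next, insert the hypothesis at each level $V^\prime$, namely $\lnm \psi\Theta^{\tau/2}\rnm^2_{\LT^2(\Hb_{V^\prime}^U)}\le C_1 (V^\prime)^a$. The resulting integral is
\[
C_1\int_0^V (V^\prime)^{a-1}\,dV^\prime=a^{-1}C_1V^a .
\]
This computation is the source of the factor $a^{-1}$ in the statement, and it uses $a>0$.

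Finally, apply Lemma \ref{R III Key Lemma} with $C_1$ replaced by $a^{-1}C_1$, using $\tau<1$. This gives
\[
\lnm\psi\rnm^2_{\LT^2(\D_{U,V})}\le \epsilon_1^{(1-\k)(1-\tau)}a^{-1}C_1V^a ,
\]
which is the claim.

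The only delicate point is the first step, matching the domains of integration in $\R_{U,V}$ and $\Hb_{V^\prime}^U$. The region is bounded by $U_0(V^\prime)\le U^\prime\le U$ and $V_0(U)\le V^\prime\le V$, so the interchange of integrals is legitimate. The powers $(V^\prime)^{\k-2}$ and $(V^\prime)^{\k-1}\cdot(V^\prime)^{-1}$ agree. The hypothesis must also be read as holding uniformly for all $V^\prime\le V$ with the same $U$.
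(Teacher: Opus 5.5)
Your proposal is correct and follows the paper's own proof. The paper likewise observes that $\lnm \Omega\psi\rnm^2_{\LT^2(\R_{U,V})}=\int_{V_0(U)}^V (V^\prime)^{-1}\lnm\psi\rnm^2_{\LT^2(\Hb_{V^\prime}^U)}\,dV^\prime\leq a^{-1}C_1V^a$ and then applies Lemma~\ref{R III Key Lemma} with $C_1$ replaced by $a^{-1}C_1$; your version is a bit more careful than that one-line argument, since you carry the weight $\left(\frac{-U}{V^{1-\k}}\right)^{\tau/2}$ throughout and check the interchange of integration domains.
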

\begin{proof}
    Just notice that \begin{equation}
        \lnm \Omega\psi\rnm^2_{\LT^2(\R_{U,V})}=\int_{V_0(U)}^V(V^\prime)^{-1}\lnm \psi\rnm^2_{\LT^2(\Hb_{V^\prime}^U)}\leq C_1 a^{-1}V^a.
    \end{equation}
\end{proof}
The next lemma explains how the spacetime norms \(\R_{U,V}\) and \(\Rb_{U,V}\) are recovered from the boundary norm \(\Xi_V\) and from the differentiated source terms.

\begin{lemma}
    Suppose that $\lnm \Omega^{-2}\nabla b \rnm_{\LS_1^\infty(S_{U,V})}<\left(\frac{-U}{V^{1-\k}}\right)^{-\tau/2}$. Then we have estimates
    \begin{equation}\label{R III energy est R norm}
        \begin{split}
            \lnm \psi\rnm^2_{\LT^2(\R_{U,V})}\leq & C(\epsilon_1)\lnm \psi\rnm^2_{\LT^2(\Xi_V)}+ \lnm \left(\frac{V^{1-\k}}{-U}\right)^{\tau/2}\psi\rnm^2_{\LT^2(\D_{U,V})}\\
            &+\lnm \psi\rnm^2_{\LT^2(\D_{U,V})}+\lnm  \Omega^{-1} \nabla_3\psi\rnm^2_{\LT^2(\D_{U,V})},
        \end{split}
    \end{equation}
    \begin{equation}\label{R III energy est Rb norm}
        \begin{split}
            \lnm \psi\rnm^2_{\LT^2(\Rb_{U,V})}\leq & C(\epsilon_1)\lnm \psi\rnm^2_{\LT^2(\Xi_V)}+\lnm \psi\rnm^2_{\LT^2(\D_{U,V})}+\lnm \nabla_4(\Omega\psi)\rnm^2_{\LT^2(\D_{U,V})}.
        \end{split}
    \end{equation}
\end{lemma}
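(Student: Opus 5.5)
The plan is a fundamental-theorem-of-calculus argument: write the integrand of each bulk norm at a point as its value on the past boundary of $\R_{III}$ plus an integral of a derivative. The boundary values produce the $\Xi_V$ term and the derivative integrals produce the $\D$-norms.

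For \eqref{R III energy est R norm}, fix $(U',V')$ in the domain of integration and integrate in $U$ from $U_0(V')$ to $U'$:
\[
\lnm \psi\rnm^2_{\LT^2(S_{U',V'})}=\lnm \psi\rnm^2_{\LT^2(S_{U_0(V'),V'})}+\int_{U_0(V')}^{U'}\partial_U\lnm \psi\rnm^2_{\LT^2(S_{U'',V'})}dU''.
\]
As in the proof of \eqref{R III transport estimate nabla 3}, we will write $\partial_U=\Omega^2\,\Omega^{-1}e_3-b^A\nabla_A$ and integrate the $b$-term by parts on $\SS$. This expresses $\partial_U\lnm\psi\rnm^2$ as the sum of three pieces:
\begin{itemize}
\item $2\langle \Omega^2\psi,\Omega^{-1}\nabla_3\psi\rangle_{S}$;
\item a $\dv b$ term, bounded by $\lnm \Omega^{-2}\nabla b\rnm_{\LS_1^\infty}\lnm\Omega\psi\rnm^2$ times the appropriate power of $V$;
\item a term from $\partial_U\log W_{III}=-\tau/(-U)$, which has a good sign and can be discarded.
\end{itemize}

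Next we multiply by $(V')^{\k-2}$ and integrate over $U'\in(U_0(V),U)$ and $V'\in(V_0(U'),V)$. The boundary piece is exactly the second summand of $\lnm\psi\rnm^2_{\LT^2(\Xi_V)}$. The remaining triple integral has the form of the $\D_{U,V}$-norm, since $\Omega^2\sim V^\k$ turns the weight $(V')^{\k-2}$ into $(V')^{2\k-3}$ times an extra factor $V'$ absorbed by the signature. Cauchy--Schwarz then bounds the cross term by $\lnm\psi\rnm^2_{\LT^2(\D)}+\lnm \Omega^{-1}\nabla_3\psi\rnm^2_{\LT^2(\D)}$. The $\dv b$ term is controlled using the hypothesis $\lnm\Omega^{-2}\nabla b\rnm_{\LS_1^\infty}<(\frac{-U}{V^{1-\k}})^{-\tau/2}$, which gives exactly $\lnm(\frac{V^{1-\k}}{-U})^{\tau/2}\psi\rnm^2_{\LT^2(\D)}$; this is the origin of the unusual weighted term in \eqref{R III energy est R norm}. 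The change of order of integration between $(U',V',U'')$ is routine, because the integration region is a $(U,V)$-convex domain bounded by $U_0,V_0$.

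For \eqref{R III energy est Rb norm}, fix $U''$ and integrate $\partial_V$ from $V_0(U'')$ to $V$ on the quantity $V^{2\k-2}\lnm\Omega\psi\rnm^2_{\LT^2(S_{U'',V})}$. The boundary value at $V_0(U'')$ gives the first summand of $\Xi_V$, using $V_0(U'')^{2\k-2}\Omega^2\sim V_0^{\k-2}\cdot V_0^{\k}$, up to $C(\epsilon_1)$. The $V$-derivative produces two kinds of terms:
\begin{itemize}
\item $2\langle\Omega\psi,\nabla_4(\Omega\psi)\rangle$ weighted by $V^{2\k-2}$;
\item terms from differentiating $V^{2\k-2}$, $V^{2-2S}$, $W_{III}$ and the volume form. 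Each is $O(V^{-1})$ times $\lnm\Omega\psi\rnm^2$, since $\Omega\tr\chi\lesssim V^{-1}$ by Lemma \ref{R_III_Chr_sol_lemma} and the $\LS_1^\infty$ smallness of $\ol{\Omega\tr\chi}$.
\end{itemize}
Integrating over $U'\in(U_0(V),U)$ and $U''\in(U_0(V),U')$ matches the $\D$-norm weight $(V')^{2\k-3}$, and Cauchy--Schwarz gives the stated bound. No $b$-term appears here, because $e_4=\Omega^{-1}\partial_V$ has no shift.

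The main obstacle I expect is the bookkeeping of the integration domains and weights: verifying that after Fubini each derivative integral is dominated by the $\D_{U,V}$-norm, with the triple-integral ordering and $(V')^{2\k-3}$ weight as in its definition, rather than by a norm over a larger region. I would check this first on the $\R$ estimate, where the $U''$-integration runs over $(U_0(V'),U')\subset(U_0(V'),U)$, so the integrand is nonnegative and enlarging the range is harmless. I would then reuse the same domain monotonicity for $\Rb$.
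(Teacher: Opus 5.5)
Your proposal is correct and follows the paper's own proof. The paper also writes the $\R$-integrand as its value at $U''=U_0(V')$ plus the $U''$-integral of $\partial_U=\Omega e_3-b$, bounds the $\dv b$ piece using the hypothesis on $\Omega^{-2}\nabla b$, and handles $\Rb$ by integrating in $V$ from $V_0(U'')$ with principal term $\langle \Omega\psi,\nabla_4(\Omega\psi)\rangle$. Your domain check is the bookkeeping the paper leaves implicit: both derivative regions lie inside the $\D_{U,V}$ region because $V'\geq V_0(U'')$ is equivalent to $U''\geq U_0(V')$. One small correction: the $b$-hypothesis puts the factor $\bigl(V^{1-\k}/(-U)\bigr)^{\tau/2}$ on $|\psi|^2$, which is only dominated by (not equal to) the squared weight $\bigl(V^{1-\k}/(-U)\bigr)^{\tau}$, since that ratio exceeds $1$ in $\R_{III}$.
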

\begin{proof} We compute the norms from definition and obtain
    \begin{equation}
        \begin{split}
            \lnm \psi\rnm^2_{\LT^2(\R_{U,V})}=& \int_{U_0(V)}^U\int_{V_0(U^\prime)}^V\int_{U_0(V^\prime)}^{U^\prime}\int_{\SS}(V^\prime)^{\k-2}\left( \Omega e_3 -b\right)\left((V^\prime)^{2-2S(\psi)}W(V^\prime)\left|\psi\right|^2\right)\\
            &+\int_{U_0(V)}^U\int_{V_0(U^\prime)}^V\int_{\SS}(V^\prime)^{\k-2}\lnm \psi\rnm^2_{\LT^2(S_{U_0(V^\prime),V^\prime})}\\
            \leq & C(\epsilon_1)\lnm \psi\rnm^2_{\LT^2(\Xi_V)}+\lnm \psi\rnm^2_{\LT^2(\D_{U,V})}+\lnm \Omega^{-1}\nabla_3\psi\rnm^2_{\LT^2(\D_{U,V})}\\
            &+\int_{U_0(V)}^U\int_{V_0(U^\prime)}^V\int_{U_0(V^\prime)}^{U^\prime}\int_{\SS}(V^\prime)^{\k-2}\Omega^{-2}\dv b(V^\prime)^{2-2S(\psi)}W(V^\prime)\left|\Omega\psi\right|^2\\
            \leq & C(\epsilon_1)\lnm \psi\rnm^2_{\LT^2(\Xi_V)}+\lnm \left(\frac{V^{1-\k}}{-U}\right)^{\tau/2}\psi\rnm^2_{\LT^2(\D_{U,V})}+\lnm \psi\rnm^2_{\LT^2(\D_{U,V})}\\
            &+\lnm \Omega^{-1}\nabla_3\psi\rnm^2_{\LT^2(\D_{U,V})},
        \end{split}
    \end{equation}
    \begin{equation}
       \begin{split}
         \lnm \psi\rnm^2_{\LT^2(\Rb_{U,V})}=&\int_{U_0(V)}^U\int_{U_0(V)}^{U^\prime} V^{2\k-2} \lnm \Omega\psi\rnm^2_{\LT^2(S_{U^{\prime\prime},V})}dU^{\prime\prime} dU^\prime\\
         =&\int_{U_0(V)}^U\int_{U_0(V)}^{U^\prime} V_0(U^{\prime\prime})^{2\k-2} \lnm \Omega\psi\rnm^2_{\LT^2(S_{U^{\prime\prime},V_0(U^{\prime\prime})})}dU^{\prime\prime} dU^\prime\\
         &+\lnm \Omega\psi\rnm^2_{\LT^2(\D_{U,V})}+\lnm \Omega\nabla_4\Omega\psi\rnm^2_{\LT^2(\D_{U,V})}\\
         \leq & C(\epsilon_1)\lnm  \psi\rnm^2_{\LT^2(\Xi_V)}+\lnm  \psi\rnm^2_{\LT^2(\D_{U,V})}+\lnm  \nabla_4(\Omega\psi)\rnm^2_{\LT^2(\D_{U,V})}.
       \end{split}
    \end{equation}
\end{proof}

We can now rigorously synthesize the complete set of bootstrap assumptions that serve as the foundation of our local analysis:\\
Let $\psi_1=\Omega^{-1}\tr\chib,\Omega\chi,\etab,\Omega\omega$, $\psi_2=\Omega^{-1}\chib,\Omega\chi,\etab,\eta$, $\Psi_1=\Omega^{2}\alpha,\Omega\beta^r,K,\sigma^r$,\\ $\Psi_2=\Omega\beta^r,K,\sigma^r,\Omega^{-1}\betab^r$.
\begin{equation}\label{R_III_BOOTSTRAP_ASSUMPTIONS_1}
    \begin{split}
        &\sum_{i\leq 6,j\leq 5}\lnm \nabla^{i}\left(\ol\psi_1,\ol{\Omega e_4\phi},\nabla\phi\right)\rnm^2_{\LT^2(H_U^V)}+\lnm \nabla^{i}\left(\ol\psi_2,\ol{\Omega^{-1} e_3\phi},\nabla\phi\right)\rnm^2_{\LT^2(\Hb_V^U)}\\
        &+\lnm \nabla^{i}\left(\ol\psi_1,\ol{\Omega e_4\phi},\nabla\phi\right)\rnm^2_{\LT^2(\R_{U,V})} +\lnm \nabla^{j}\ol\Psi_1\rnm^2_{\LT^2(H_U^V)}+\lnm \nabla^{j}\ol\Psi_1\rnm^2_{\LT^2(\R_{U,V})}+\lnm \nabla^{j}\ol\Psi_2\rnm^2_{\LT^2(\Hb_V^U)}\\
        &\leq  B\epsilon^{2-2\delta} V^a.
    \end{split}
\end{equation}
Moreover, we assume for those quantities with $\nabla_3$ equations except $\chibh,\betab^r$, it holds that 
\begin{equation}\label{R_III_BOOTSTRAP_ASSUMPTIONS_2}
    \sum_{i\leq 4}\lnm \nabla^i(\ol{\Omega\chi},\ol{\Omega^{-1}\tr\chib},\etab,\ol{\Omega\omega},\ol{\Omega D_4\phi},\nabla\phi)\rnm^2_{\LT^2(S_{U,V})}+\sum_{j\leq 3}\lnm \nabla^j\ol{\Psi_1}\rnm^2_{\LT^2(S_{U,V})}\leq B^\prime\epsilon^{2-2\delta} W_{III}(U,V).
\end{equation}
Finally, we assume 
\begin{equation}\label{R_III_BOOTSTRAP_ASSUMPTIONS_3}
    \sum_{i\leq 3}\lnm \nabla^i(\ol{\Omega^{-1}e_3\phi},\Omega^{-1}\chibh)\rnm^2_{\LT^2(S_{U,V})} \leq B^\prime\epsilon^{2-2\delta} W_{III}(U,V).
\end{equation}
In this final critical phase, our overarching objective will be to systematically demonstrate the following crucial proposition:
\begin{proposition}
    Let $W_{III}(U,V)=V^a\cdot\left(\frac{-U}{V^{1-\k}}\right)^\tau$ with $0<a\ll \tau(1-\k)\ll 1$ fixed. Based on Proposition \ref{Main Proposition R_II} and assuming \eqref{R_III_BOOTSTRAP_ASSUMPTIONS_1}, \eqref{R_III_BOOTSTRAP_ASSUMPTIONS_2}, and \eqref{R_III_BOOTSTRAP_ASSUMPTIONS_3} hold, we can choose $\epsilon_1=\epsilon_1(a,\tau)$ and $\epsilon=\epsilon(\epsilon_1,a)$ sufficiently small so that the bootstrap constants can be improved to $B+B^\prime\leq C_0$.
\end{proposition}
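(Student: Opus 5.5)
We close the Region III bootstrap in the same order as Regions I and II: metric, then non-top-order spherical bounds, then top-order energies. The new ingredient is the bulk-first mechanism of Lemma \ref{R III Key Lemma}.

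\textbf{Step 1: metric and low-order quantities.} We first integrate the $\nabla_4$ equations for $b$, $\ol{g}$ and $\ol{\Omega^2}$ with \eqref{R III transport estimate nabla 4}, seeded by the initial bounds \eqref{R III initial value 1}. For $b$ and $\eta$ the $V$-weight is not integrable, so we aim for the weaker growth $z^\lambda$. This is consistent with choosing $\tau$ so that $a-\tau(1-\k)<0$, which makes the coefficient in \eqref{R III transport estimate nabla 4} negative for quantities with $S\geq 0$ and $\lambda,\Lambda$ from Lemma \ref{R_III_Chr_sol_lemma}. The output is the hypothesis $\lnm\Omega^{-2}\nabla b\rnm_{\LS_1^\infty}<(\frac{-U}{V^{1-\k}})^{-\tau/2}$ needed in \eqref{R III energy est R norm}, together with the closeness of $\sqrt{\det g}/V^2$.

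Next come the $\nabla_3$-quantities in \eqref{R_III_BOOTSTRAP_ASSUMPTIONS_2}, namely $\ol{\Omega\chi}$, $\ol{\Omega^{-1}\tr\chib}$, $\etab$, $\ol{\Omega\omega}$, $\ol{\Omega D_4\phi}$, $\nabla\phi$ and $\Psi_1$. We bound them in $\LT^2(S)$ via \eqref{R III transport estimate nabla 3}, with sources given by products controlled by \eqref{R_III_BOOTSTRAP_ASSUMPTIONS_1} and Sobolev. Since integration in $U$ runs over a short interval and gains a factor $(-U)$, this improves $B'$ to $C_0$ plus $o(1)B$. The quantities $\Omega^{-1}\chibh$ and $\Omega^{-1}e_3\phi$ in \eqref{R_III_BOOTSTRAP_ASSUMPTIONS_3} are treated one derivative lower. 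We get $\chibh$ from Codazzi with $\Omega^{-1}\betab^r$ and $\tr\chib$, and we integrate $\Omega^{-1}e_3\phi$ along $\nabla_4$ using the wave equation, which is where the loss of one derivative enters.

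\textbf{Step 2: top-order Bianchi and scalar pairs.} This is the step we expect to be the main obstacle. The pairs are $(\Omega^2\alpha,\Omega\beta^r)$, $(\Omega\beta^r,(K,\sigma^r))$, $((K,\sigma^r),\Omega^{-1}\betab^r)$, $(\nabla(\Omega e_4\phi),\nabla^2\phi)$ and $(\nabla^2\phi,\nabla(\Omega^{-1}e_3\phi))$. For each we apply Lemma \ref{R III energy est lemma H Hb} after commuting five times, checking the sign condition $1+\k-2S+a-\tau(1-\k)-2\lambda+2\Lambda\k<-2\delta_1$ pair by pair.

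The damping is too weak to absorb $\langle\Omega^2\Phi_1,F\rangle_{\R}$ directly. We therefore first prove the weighted bulk bound on $\lnm\Phi_1 z^{\lambda}\rnm_{\LT^2(\R)}+\lnm\Phi_2z^\lambda\rnm_{\LT^2(\Rb)}$ from \eqref{R III energy est R norm} and \eqref{R III energy est Rb norm}. The boundary term there is controlled by $\Xi_V$ through \eqref{R III initial value 2}, \eqref{R III initial value 3} and the $\Xi_V$ lemma. The $\D$-terms are handled by Lemma \ref{R III Key Lemma} and Corollary \ref{R III Key Lemma Cor}, which supply the small factor $\epsilon_1^{(1-\k)(1-\tau)}$. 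Feeding the bulk bounds back into \eqref{R III energy estimate} closes the flux part of \eqref{R_III_BOOTSTRAP_ASSUMPTIONS_1}.

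The delicate points are the derivative-loss source terms $\psi\nabla\psi$ and the term $\langle|\Omega\Phi_1(\eta+\etab)|,|\Omega\Phi_2|\rangle$. For the first we use the top-order Ricci bounds; for the second we use the $z^\lambda$-control of $\eta,\etab$ from Step 1, which is exactly why $\lambda<(1-\k)\tau/2$ is required.

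\textbf{Step 3: top-order Ricci coefficients.} We recover the Ricci coefficients exactly as in Region II. The quantities $\tr\chi$ and $\tr\chib$ come from transport, $\chih$ and $\chibh$ from Codazzi with elliptic estimates \eqref{Ellip Est}, $\eta$ and $\etab$ from the mass aspects $\mu,\mub$ with the Hodge system, and $\Omega\omega$ from $\omega^\dagger$ and $\omega^r$.

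Collecting the bounds gives $B\lesssim C_0+o_{\epsilon_1}(1)B$ and $B'\lesssim C_0$. We then choose $\epsilon_1(a,\tau)$ small so that the $o_{\epsilon_1}(1)$ factor is absorbed, and afterwards $\epsilon(\epsilon_1,a)$ small to handle the nonlinear terms.
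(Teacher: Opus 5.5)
Your overall architecture matches the paper: spherical bounds by transport, $z^\lambda$-growth for $b,\eta,\Omega^2$, a bulk-first argument via the $\R/\Rb/\D$ hierarchy, and then fluxes. However, the closing mechanism has a genuine gap in how the top-order Ricci coefficients couple to the curvature and scalar fluxes.

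Several of the $\nabla_4$-equations for your $\Phi_2$'s contain top-order $\nabla_3$-Ricci coefficients multiplied by \emph{background}, not small, coefficients. Two examples:
\begin{itemize}
\item $\frac12\Omega\tr\chi\,\dv\etab$ in the $\Omega\nabla_4K$ equation;
\item $4\Omega\omega\,\Omega^{-1}D_3\phi$ in the wave equation, which after commutation produces $(\Omega^{-1}D_3\phi)^c\,\nabla^6(\Omega\omega)$.
\end{itemize}
In \eqref{R III energy estimate} these enter through $\delta_1^{-1}\lnm\Omega G\rnm^2_{\LT^2(\R_{U,V})}$. At the stage of your Step 2 the only bound available for them is \eqref{R_III_BOOTSTRAP_ASSUMPTIONS_1}, so you get $B\lesssim C_0+CB$, not $C_0+o_{\epsilon_1}(1)B$.

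A small $\R$-bound does not come for free for these quantities. From an $H$-flux bound one only gets $\lnm\psi\rnm^2_{\LT^2(\R_{U,V})}\le\epsilon_1^{1-\k}\int_{U_0(V)}^U\lnm\psi\rnm^2_{\LT^2(H_{U'}^V)}\frac{dU'}{-U'}$, which diverges logarithmically as $U\to0$. The Region II device ($D\to\infty$) that you appeal to in Step 3 does not exist in Region III.

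The paper breaks this loop by proving small-factor bulk bounds for the Ricci coefficients \emph{before} any flux is closed:
\begin{itemize}
\item The degenerate pairs $(\nabla(\Omega^{-1}\tr\chib),0)$, $(\nabla\Omega\omega+{}^*\nabla\Omega\omega^\dagger-\frac12\Omega\beta^r,0)$, $(-\dv\etab+K,0)$, $(0,-\dv\eta+K)$ and $(0,\nabla(\Omega\tr\chi))$ go through the same bulk-first argument as the Bianchi pairs in Proposition \ref{R III Phi_1 R Phi_2 Rb Est}. This gives \eqref{R_III_Enhanced_Estimate_e_3_eq_Connections}.
\item $\Omega\chi$ is handled by a $\nabla_4$-transport of $\nabla^6(\Omega\tr\chi)$ with a controlled $(V^{1-\k}/(-U))^{C\tau}$ loss, converted into \eqref{R_III_Enhanced_Estimate_OMegachi}.
\item Only then are the fluxes closed. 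The remaining dangerous term $\Omega\chih\,\nabla^6(\Omega^{-1}\chibh)$ is absorbed via Codazzi into the $-\delta_1$ bulk term.
\item The $\Hb$-fluxes of $\eta$ and $\Omega\chi$ come last, from weighted estimates for $\nabla^6(\Omega\tr\chi)$ and $\nabla^5\mu$.
\end{itemize}
Your Steps 2 and 3 have to be interleaved in this way.

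Two further steps fail as written.

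First, in the bulk-first step you cannot apply \eqref{R III energy est R norm} and \eqref{R III energy est Rb norm} separately at top order. For $\psi=\nabla^5\Phi_1$, the term $\lnm\Omega^{-1}\nabla_3\psi\rnm^2_{\LT^2(\D_{U,V})}$ contains $\mathcal D\nabla^5\Phi_2$, that is $\nabla^6\Phi_2$, one derivative beyond what is controlled. The paper keeps the bilinear terms $\langle\nabla^5\Phi_1,\mathcal D\nabla^5\Phi_2\rangle$ and $\langle{}^*\mathcal D\nabla^5\Phi_1,\nabla^5\Phi_2\rangle$. These arise from expanding the $\R$-norm of $\Phi_1$ along $\Omega e_3$ and the $\Rb$-norm of $\Phi_2$ along $\Omega e_4$. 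Adding the two identities makes them cancel. Only the lower-order sources are then placed in $\D$-norms, where Lemma \ref{R III Key Lemma} and Corollary \ref{R III Key Lemma Cor} supply the small factor.

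Second, recovering the no-growth bound \eqref{R_III_BOOTSTRAP_ASSUMPTIONS_3} for $\Omega^{-1}\chibh$ from Codazzi would need a no-growth bound on $\Omega^{-1}\betab^r$. That bound is not in \eqref{R_III_BOOTSTRAP_ASSUMPTIONS_2}. In fact the paper derives the spherical bounds for $\Omega^{-1}\betab^r$ from Codazzi \emph{using} $\chibh$, so this route is circular. The paper instead integrates the $\Omega\nabla_4(\Omega^{-1}\chibh)$ equation. Its transport coefficient is negative since $a\ll\k$, and its source $\nabla\varphi_1+\varphi_1\ol{\varphi_1}$ involves only quantities controlled by \eqref{R_III_BOOTSTRAP_ASSUMPTIONS_2}.
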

\begin{remark}
    We use additional notation for those with enhanced estimates in \eqref{R_III_BOOTSTRAP_ASSUMPTIONS_2}: $$\varphi_1\in\{\Omega\chi,\Omega^{-1}\tr\chib,\etab,\Omega\omega,\Omega e_4\phi,\nabla\phi\},$$
    and for all connection components and first order derivatives of scalar field: $$\varphi\in\{\psi_1,\psi_2, \nabla\phi,\Omega^{-1} e_3\phi,\Omega e_4\phi\}.$$
    In the whole section, we will avoid $\Omega^{-1}\omegab$ and no notation refers to it. The three bootstrap assumptions are hierarchical. The first controls flux and ordinary spacetime norms, the second gives stronger spherical control for quantities with favorable $\nabla_3$ equations, and the third records the weaker but still sufficient control for the incoming components $\Omega^{-1}\chibh$ and $\Omega^{-1}e_3\phi$.
\end{remark}

\subsection{Estimates of spherical norms}
To securely utilize the previously established preparatory lemmas, we formulate the additional metric bootstrap assumptions:
\begin{equation}
    \lnm \Omega^{-2}\nabla b, \ol{\Omega^2}\rnm_{\LS_1^\infty(S_{U,V})}\leq B^{\prime\prime}\epsilon^{1-\delta}\left(\frac{V^{1-\k}}{-U}\right)^{\tau/8},
\end{equation}
and we will vigorously derive a strictly stronger optimal estimate mathematically bounding both $b$ and $\Omega^2$ later in the analysis, which will retroactively justify these initial hypotheses.
We commence the rigorous analysis with a structural product lemma deeply analogous to \eqref{Calcu_Diff_product_ol}.
\begin{lemma}\label{R_III_Lemma_Estimate_without_eta}
    With the bootstrap assumptions \eqref{R_III_BOOTSTRAP_ASSUMPTIONS_1}, \eqref{R_III_BOOTSTRAP_ASSUMPTIONS_2}, \eqref{R_III_BOOTSTRAP_ASSUMPTIONS_3} and notations above, it follows that 
    \begin{equation}
        \sum_{i\leq 5}\lnm \nabla^i\left(\ol{\varphi^{(1)}_{\neq \eta}\varphi^{(2)}_{\neq \eta}}\right)\rnm^2_{\LT^2(S_{U,V})}\lesssim \sum_{i\leq 5}\lnm \nabla^i\ol{\varphi^{(1)}_{\neq \eta}}\rnm^2_{\LT^2(S_{U,V})}+\sum_{i\leq 5}\lnm \nabla^i\ol{\varphi^{(2)}_{\neq \eta}} \rnm^2_{\LT^2(S_{U,V})}.
    \end{equation}
    Here $\varphi^{(i)}_{\neq \eta}$ represents $D\phi$ and the connection components except $\eta$. Note that all arguments in this section will not involve $\Omega^{-1}\omegab$.
\end{lemma}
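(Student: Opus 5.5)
The plan is to decompose the product difference algebraically and then apply the weighted product Sobolev inequality, Lemma \ref{Lemma_Sobolev_Ineq_product}. The identity
\[
\ol{\varphi^{(1)}\varphi^{(2)}}=\ol{\varphi^{(1)}}\,(\varphi^{(2)})^c+(\varphi^{(1)})^c\,\ol{\varphi^{(2)}}+\ol{\varphi^{(1)}}\,\ol{\varphi^{(2)}}
\]
splits the problem into two linear terms and one quadratic term. Each will be estimated in $\sum_{i\leq 5}\lnm\nabla^i\cdot\rnm_{\LT^2(S_{U,V})}$.

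\textbf{Linear terms.} Since the background is spherically symmetric, the background factors $(\varphi^{(j)})^c$ are constant on each sphere, so all their angular derivatives vanish. By Lemma \ref{R_III_Chr_sol_lemma}, they satisfy $\lnm(\varphi^{(j)})^c\rnm_{\LS_1^\infty(S_{U,V})}\lesssim 1$ uniformly in $\R_{III}$. The scale-invariant signatures are additive: $S(\psi_1\psi_2)=S(\psi_1)+S(\psi_2)-1$. Hence the $V$-weights match exactly, and
\[
\lnm\nabla^i(\ol{\varphi^{(1)}}(\varphi^{(2)})^c)\rnm_{\LT^2(S)}\lesssim\lnm\nabla^i\ol{\varphi^{(1)}}\rnm_{\LT^2(S)}.
\]
The same holds with the roles of the two factors exchanged. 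For $\nabla$ acting on tensorial background factors such as $g^c$, I will use $\nabla_g g=0$, or absorb the error through the connection-difference estimate of the Region II corollary, whose bound is $O(\epsilon)$.

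\textbf{Quadratic term.} I apply \eqref{Sobolev_Product_form} with $n=5$, putting the weight $W_{III}$ on the factor carrying five derivatives. The other factor then needs to be controlled only in $\sum_{i\leq 4}\lnm\nabla^i\cdot\rnm_{\LS_1^2(S)}$. This is the step that forces the exclusion of $\eta$. The factors allowed here are $D\phi$ and the connection coefficients other than $\eta$, and all of them are covered by \eqref{R_III_BOOTSTRAP_ASSUMPTIONS_2} or \eqref{R_III_BOOTSTRAP_ASSUMPTIONS_3}, both of which have $\LT^2(S)$ control at order $\geq 3$. For instance, \eqref{R_III_BOOTSTRAP_ASSUMPTIONS_2} gives
\[
\lnm\nabla^i\ol\varphi_1\rnm^2_{\LS_1^2(S)}=W_{III}^{-1}\lnm\nabla^i\ol\varphi_1\rnm^2_{\LT^2(S)}\leq B'\epsilon^{2-2\delta}\ll 1.
\]
By contrast, $\eta$ only has flux control and, later, a bound with a growing factor $z^\lambda$. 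That is why it is excluded here.

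\textbf{Main obstacle.} The delicate point is $\Omega^{-1}\chibh$ and $\Omega^{-1}e_3\phi$, which are controlled only up to order $3$ by \eqref{R_III_BOOTSTRAP_ASSUMPTIONS_3}, whereas the crude product estimate needs order $4$ in $\LS_1^2$. To handle this, I would split $\nabla^5$ by Leibniz. When the lower factor carries at most $2$ derivatives, I bound it in $\LS_1^\infty$ via \eqref{Sobolev_W22}, using derivatives up to order $4\geq 2+2$; wait, only $3$ are available for these two quantities, so in fact I need it to carry at most $1$. For the middle terms, with $(3,2)$ or $(2,3)$ derivatives, I use an $L^4$–$L^4$ Hölder bound with the $L^4$ Sobolev inequality from the remark, which costs one derivative on each factor. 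Both factors then require at most $4$ derivatives, within the available range once the higher-derivative factor carries the weight. If this accounting turns out to be one derivative short for the pair $(\Omega^{-1}\chibh,\Omega^{-1}e_3\phi)$, I would instead estimate it via the sharper spherical bounds on $\ol{\Omega^{-1}\tr\chib}$ together with the Codazzi equation. Finally, since $\epsilon\ll 1$, the quadratic contribution is absorbed into the right-hand side, which gives the stated inequality.
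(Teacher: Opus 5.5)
Your proposal is correct and follows essentially the paper's argument: you decompose $\ol{\varphi^{(1)}\varphi^{(2)}}$ so that each term carries a difference factor, use $\nabla\psi^c=0$ for background factors, and put the weight on the highly differentiated factor while controlling the other factor only through order-$3$ $\LS_1^2$ bounds, which \eqref{R_III_BOOTSTRAP_ASSUMPTIONS_2}--\eqref{R_III_BOOTSTRAP_ASSUMPTIONS_3} supply for every quantity except $\eta$. Your Leibniz/$L^4$ bookkeeping is exactly the justification the paper leaves implicit when it writes $i_1\leq 3$ instead of $i_1\leq n-1=4$ in the product estimate, so the Codazzi fallback is not needed.
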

\begin{proof}
    Since $\ol{\varphi^{(1)}\varphi^{(2)}}=\varphi^{(1)}\ol{\varphi^{(2)}}+\left(\varphi^{(1)}\right)^c \ol{\varphi^{(2)}}$, we have for the first term  
\begin{equation}
    \begin{aligned}
        \begin{aligned}
            \sum_{i\leq 5}\lnm \nabla^i\left(\varphi^{(1)}\ol{\varphi^{(2)}}\right)\rnm^2_{\LT^2(S_{U,V})}\lesssim & \sum_{i_1\leq 3,i_2\leq 5}\lnm \nabla^{i_1}{\varphi^{(1)}}\rnm^2_{\LS_1^2(S_{U,V})}\lnm \nabla^{i_2}{\ol{\varphi^{(2)}}}\rnm^2_{\LT^2(S_{U,V})}\\
            &+\sum_{i_1\leq 3,i_2\leq 5}\lnm \nabla^{i_1}{\varphi^{(2)}}\rnm^2_{\LS_1^2(S_{U,V})}\lnm \nabla^{i_2}{\ol{\varphi^{(1)}}}\rnm^2_{\LT^2(S_{U,V})}.
        \end{aligned}
    \end{aligned}
\end{equation}
For $\varphi^{(i)}\neq \eta$, we have by \eqref{R_III_BOOTSTRAP_ASSUMPTIONS_2} and \eqref{R_III_BOOTSTRAP_ASSUMPTIONS_3} that  
\begin{equation}
    \sum_{j\leq 3}\lnm \nabla^{j}{\varphi^{(i)}}\rnm^2_{\LS_1^2(S_{U,V})}\lesssim 1.
\end{equation}
Then we obtain the estimate for $\varphi^{(1)}\ol{\varphi^{(2)}}$, and the other term is similarly estimated.
\end{proof}

We are ready to estimate the non-top-order terms.
\begin{proposition}\label{R III L2S est non top order}
    The following estimates hold for the non-top-order Ricci coefficients and scalar field derivatives:
    \begin{equation}
        \sum_{i\leq 5}\lnm \nabla^i\ol\psi_{\neq\eta},\nabla^i\ol{D\phi}\rnm^2_{\LT^2(S_{U,V})}+\sum_{i\leq 4}\lnm \nabla^i\ol\Psi,\nabla^i\eta\rnm^2_{\LT^2(S_{U,V})}\lesssim B\epsilon^{2-2\delta} V^a,
    \end{equation}
    and
    \begin{equation}
        \lnm \nabla^5\eta\rnm^2_{\LT^2(S_{U,V})}\lesssim B\epsilon^{2-2\delta} V^a+\lnm \eta\nabla^5(\Omega\chi)\rnm^2_{\LT^2(S_{U,V})}\lesssim B\epsilon^{2-2\delta} V^a\left(1+\epsilon\left(\frac{V^{1-\k}}{-U}\right)^\tau\right) .
    \end{equation}
\end{proposition}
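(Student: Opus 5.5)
We obtain the spherical bounds of Proposition \ref{R III L2S est non top order} from the flux bootstrap assumption \eqref{R_III_BOOTSTRAP_ASSUMPTIONS_1}, using the transport estimates \eqref{R III transport estimate nabla 3} and \eqref{R III transport estimate nabla 4}, with the initial data on $z=\epsilon_1^{-1}$ controlled by \eqref{R III initial value 1}. We follow the order used in Proposition \ref{R_II_non_top_order_L2S_est}. For each quantity we choose the null direction whose transport equation either has a favorable bulk coefficient or a right-hand side already controlled in flux norm.

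\emph{Step 1: quantities with a $\nabla_3$ equation.} These are $\Omega\chi$, $\Omega^{-1}\tr\chib$, $\etab$, $\Omega\omega$, $\Omega e_4\phi$, $\nabla\phi$ and $\Psi_1$. We commute the equations with $\nabla^i$, $i\le 5$ (respectively $i\le 4$ for curvature), using \eqref{R_II_simplified_Comm_for}, and subtract the Christodoulou equations. The right-hand sides are then linear combinations of $\nabla^{i}\ol{\varphi}$, $\nabla^{i+1}\varphi_2$, $\nabla^{i}\ol\Psi_2$ and products; the products are expanded with Lemma \ref{R_III_Lemma_Estimate_without_eta}, whose lower-order factors are controlled by \eqref{R_III_BOOTSTRAP_ASSUMPTIONS_2} and \eqref{R_III_BOOTSTRAP_ASSUMPTIONS_3}. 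Applying \eqref{R III transport estimate nabla 3} bounds $\lnm\nabla^i\ol{\varphi_1}\rnm^2_{\LT^2(S_{U,V})}$ by the initial value plus $\LT^2(\Hb_V^U)$ norms of the sources. Those source norms are at most $B\epsilon^{2-2\delta}V^a$ by \eqref{R_III_BOOTSTRAP_ASSUMPTIONS_1}, including the top-order terms $\nabla^6\psi_2$ and $\nabla^5\Psi_2$. The shift term $\Omega^{-2}\dv b$ in \eqref{R III transport estimate nabla 3} is small by the metric bootstrap, up to the factor $(V^{1-\k}/(-U))^{\tau/8}$. That factor is absorbed because $\Hb$-flux weights carry $(-U)^{\tau}$, which is exactly the reason the weight $W_{III}$ contains $\left(\frac{-U}{V^{1-\k}}\right)^\tau$.

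\emph{Step 2: quantities transported in $e_4$.} These are $\Omega^{-1}\chibh$, $\Omega^{-1}e_3\phi$, $\eta$ and $\Psi_2$. We use \eqref{R III transport estimate nabla 4} and must check the sign of its bulk coefficient
\[2-2S+a-\tau(1-\k)-2\lambda+2\Lambda\k+\delta_1.\]
For curvature ($S=-1$) and for the terms with $\lambda\ge 1$ supplied by $\tr\chi$ damping, the coefficient is negative, so these quantities close directly from the $H$-flux bounds. Where it is not negative, as for $\eta$ and $\Omega^{-1}e_3\phi$ with weak damping, we keep the term $\lnm\varphi_2\rnm^2_{\LT^2(H_U^V)}$. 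That term is bounded by \eqref{R_III_BOOTSTRAP_ASSUMPTIONS_1}, since the corresponding first derivatives are among the $\Phi_1$ fluxes, or by integrating the spherical bound in $V$ and using that $V^{-1}dV$ integrates $V^a$ to $a^{-1}V^a$. This gives at most $B\epsilon^{2-2\delta}V^a$ with a constant depending on $a$.

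\emph{Step 3: the top derivative $\nabla^5\eta$ (main obstacle).} Its $\nabla_4$ equation contains $\chi\cdot(\eta-\etab)$, and after five commutations this produces $\eta\nabla^5(\Omega\chi)$. This factor cannot be put into Lemma \ref{R_III_Lemma_Estimate_without_eta}, because $\eta$ itself is only controlled with the weaker growth of the metric bootstrap. We therefore isolate it and estimate everything else as in Step 2, which gives the first inequality of the second display. For the remaining term we use Sobolev for $\eta$ in $\LS_1^\infty$, of size $\epsilon^{1-\delta}\left(\frac{V^{1-\k}}{-U}\right)^{\tau/2}$ from Step 2 at order $\le 4$ (the $\LT$ weight loses $(-U/V^{1-\k})^{\tau}$ relative to $\LS_1$). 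We combine this with the Step 1 bound on $\nabla^5\ol{\Omega\chi}$, together with $\nabla^5(\Omega\chi)^c=0$. The result is the factor $\epsilon\left(\frac{V^{1-\k}}{-U}\right)^\tau$.

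If the sign check in Step 2 fails for some component, we fall back on estimating it via the Codazzi and Hodge systems, as in Region II.
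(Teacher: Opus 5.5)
Your Step 2 has a genuine gap. You say the bulk coefficient in \eqref{R III transport estimate nabla 4} is not negative for $\eta$ and $\Omega^{-1}e_3\phi$. In fact it is negative, and producing that sign is exactly the job of the factor $\left(\frac{-U}{V^{1-\k}}\right)^\tau$ in $W_{III}$, since $V\partial_V\log W_{III}=a-\tau(1-\k)$.

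\begin{itemize}
\item For $\eta$, the term $\Omega\chi\cdot\eta$ gives $\lambda=1$, $\Lambda=0$. The coefficient is then $a-\tau(1-\k)+\delta_1+o_{\epsilon,\epsilon_1}(1)<0$, because $a\ll\tau(1-\k)$.
\item For $\Omega^{-1}e_3\phi$ one has $\lambda=1$, $\Lambda=-1$, and the coefficient is $a-\tau(1-\k)-2\k+\delta_1<0$.
\end{itemize}

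The paper therefore discards the bulk term. It closes using only the $H_U^V$-fluxes of the sources: $\ol{\varphi_1}$ and $\Omega\beta^r$ for $\eta$, and $\nabla^2\phi$, $\ol{\Omega D_4\phi}$, $\ol{\Omega^{-1}\tr\chib}$ for $\Omega^{-1}e_3\phi$. All of these are in \eqref{R_III_BOOTSTRAP_ASSUMPTIONS_1}.

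Your fallback does not work.
\begin{itemize}
\item In \eqref{R_III_BOOTSTRAP_ASSUMPTIONS_1}, $\eta$ and $\Omega^{-1}e_3\phi$ are controlled only along $\Hb_V^U$, never along $H_U^V$. A flux of first derivatives would not control the undifferentiated quantity anyway.
\item ``Integrating the spherical bound in $V$'' presupposes the estimate you are proving.
\item Run honestly as a Gr\"onwall argument, $F(V)\le AV^a+c\int_{V_0(U)}^V F(V')\,\frac{dV'}{V'}$ gives $F\lesssim AV^a$ only when $c<a$. For $\eta$ this is again the condition $\tau(1-\k)>\delta_1+o(1)$ coming from the weight, which you did not identify. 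Otherwise you pick up $(V/V_0(U))^{c-a}$ or $\log(V/V_0(U))$, both unbounded as $U\to0$ at fixed $V$.
\end{itemize}
Your Step 3 uses the order-$\le 4$ bound on $\eta$ from Step 2, so it inherits this gap. A smaller point: for $S=-1$ the coefficient is negative only when $\lambda\ge 2$, not $\lambda\ge1$. It works for the $\Psi_2$ components because $K,\sigma^r,\Omega^{-1}\betab^r$ have $\lambda=2$ and $\Omega\beta^r$ has $\lambda=4$.

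There is a second, smaller gap in Step 1. Lemma \ref{R_III_Lemma_Estimate_without_eta} explicitly excludes $\eta$. Yet the $\nabla_3$ equations for $\etab$, $\tr\chi$, $\chih$, $\Omega\omega$, $\Omega D_4\phi$, and the commutator terms, all contain $\eta$-products: $\Omega^{-1}\chib\cdot\eta$, $|\eta|^2$, $\eta\hat\otimes\eta$, $\eta\cdot\etab$, $\eta\cdot\nabla\phi$, $\nabla^{i_1}(\eta+\etab)^{i_2}$. In Region III the only pointwise bound on $\eta$ is the growing one, $\epsilon^{1-\delta}\left(\frac{V^{1-\k}}{-U}\right)^{\tau/2}$.

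The paper therefore treats the $e_4$-transported quantities, including $\eta$, first. Then, in the $\nabla_3$ estimate for $\varphi_1$, it puts the growing factors in $L^\infty$ and gains smallness from the short $U$-range of Region III:
\[
\int_{U_0(V)}^U V^{\k-1}\left(\frac{V^{1-\k}}{-U'}\right)^{N\tau}dU'\lesssim\epsilon_1^{(1-\k)(1-N\tau)}.
\]
In your ordering no bound on $\eta$ is available at Step 1. Also, the role you assign to the $\tau$-weight, absorbing the growth of $\dv b$, is not the mechanism at work. Finally, the paper obtains $\Omega\beta^r$, $\sigma^r$, $\Omega^{-1}\betab^r$ from the Codazzi and curl identities, which contain no $\eta$, rather than by transport.
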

\begin{proof}
This proof uses the Region III transport estimates \eqref{R III transport estimate nabla 3} and \eqref{R III transport estimate nabla 4}, the initial-value estimates \eqref{R III initial value 1}--\eqref{R III initial value 3}, and the product Lemma \ref{R_III_Lemma_Estimate_without_eta}. The only component not covered directly by that product lemma is $\eta$, which is treated separately below.
    We write $\varphi_1\in\{\Omega\chi,\Omega^{-1}\tr\chib,\etab,\Omega\omega,\Omega D_4\phi,\nabla\phi\}$. For $\varphi_{2,\neq\eta}=\Omega^{-1}\chib,\Omega D_3\phi$, the schematic equation reads 
    \begin{equation}
        \Omega\nabla_4\varphi_{2,\neq\eta}\sim \varphi_1\varphi_{2,\neq\eta}+\varphi_1^2+\nabla\etab+\Psi_1.
    \end{equation}
    After making difference, we deduce that 
    \begin{equation}
        \Omega\nabla_4\ol{\varphi_{2,\neq\eta}}\sim {\varphi_1}\ol{\varphi_{2,\neq\eta}}+\ol{\varphi_1}\ol{\varphi_{2,\neq\eta}}+\ol{\varphi_1^2}+\nabla\etab+\ol{\Psi_1}.
    \end{equation}
    By bootstrap assumption \eqref{R_III_BOOTSTRAP_ASSUMPTIONS_2} and \eqref{R_III_BOOTSTRAP_ASSUMPTIONS_3}, we have $\lnm \ol{\varphi_1},\ol{\varphi_{2,\neq\eta}}\rnm^2_{\LS_1^\infty(S)}\leq B^\prime\epsilon^{2-2\delta}\ll 1$ . Transport estimate \eqref{R III transport estimate nabla 4} implies that 
     \begin{equation}
        \begin{aligned}
            \lnm \ol\varphi_2\rnm^2_{\LT^2(S_{U,V})}\lesssim & \lnm \ol\varphi_2\rnm^2_{\LT^2(S_{U,V_0(U)})}+\lnm \ol\varphi_2,\nabla\etab,\ol{\Psi}_1\rnm^2_{\LT^2(H_U^V)}\lesssim B \epsilon^{2-2\delta} V^a.
        \end{aligned}
     \end{equation}
    
    For $I\geq 1$,
    we use commutation formula to obtain that  
    \begin{equation}
        \begin{split}
            \Omega\nabla_4\nabla^I\varphi_{2,\neq\eta}\sim &\nabla^I\left(\varphi_1\varphi_{2,\neq\eta}+\nabla\etab+\Psi_1\right)+\nabla^I(\Omega\chi\varphi_{2,\neq\eta}).
        \end{split}
    \end{equation}
    By transport estimate \eqref{R III transport estimate nabla 4} and Lemma \ref{R_III_Lemma_Estimate_without_eta}, we deduce that 
    \begin{equation}
        \begin{split}
            \sum_{1\leq i\leq 5}\lnm \nabla^i\varphi_{2,\neq\eta}\rnm^2_{\LT^2(S_{U,V})}\leq & B\epsilon^{2-2\delta}V^a\\
            &+\sum_{i\leq 5}\lnm\nabla^{i}\left(\ol{\varphi_1},\ol{\varphi_{2,\neq\eta}},\ol{\Psi_1}\right)\rnm^2_{\LT^2(H_U^V)}+\lnm\nabla^{6}\etab\rnm^2_{\LT^2(H_U^V)},
        \end{split}
    \end{equation}
    which is bounded by $B\epsilon^{2-2\delta}V^a$ by assumption \eqref{R_III_BOOTSTRAP_ASSUMPTIONS_1}. 
For $\eta$, we recall equation 
\begin{equation}
    \Omega\nabla_4\eta+\Omega\chi\cdot\eta\sim \ol{\varphi_1\varphi_1}-\Omega\beta^r.
\end{equation}
Then it follows that 
\begin{equation}
    \begin{aligned}
        \partial_V \lnm \eta\rnm^2_{\LT^2(S_{U,V})}\leq& \int_{\SS} \left((2+a-(1-\k)\tau)/V-\Omega\tr\chi+2|\Omega\chih|\right)V^{2-2S(\eta)}W_{III}|\eta|^2 \\
        &\qquad\qquad+\frac{2}{V}\langle \eta,\ol{\varphi_1\varphi_1}-\Omega\beta^r\rangle_{S_{U,V}} \\
        \leq & \int_{\SS} \frac{1}{V}\langle \eta,\ol{\varphi_1\varphi_1}-\Omega\beta^r\rangle_{S_{U,V}}.
    \end{aligned}
\end{equation}
Here we use that $(2+a-(1-\k)\tau)/V-\Omega\tr\chi+2|\Omega\chih|=(a-(1-\k)\tau+o_{\epsilon,\epsilon_1}(1))/V<0$. 
Hence the following relation holds 
\begin{equation}
    \lnm \eta\rnm^2_{\LT^2(S_{U,V})}\lesssim \lnm \eta\rnm^2_{\LT^2(S_{U,V_0(U)})}+\lnm \ol{\varphi_1},\Omega\beta^r\rnm^2_{\LT^2(H_U^V)}\lesssim B\epsilon^{2-2\delta}V^a.
\end{equation}
For $1\leq i\leq 5$, the commutation formula \eqref{Comm_Formula_D4,nab} gives
\begin{equation}
    \Omega\nabla_4\nabla^i\eta+(1+i)\Omega\chi\nabla^i\eta \sim \nabla^i\left(\varphi_1\varphi_1-\Omega\beta^r\right)+\sum_{i_1+i_2=i-1}\nabla^{i_1+1}(\Omega\chi)\nabla^{i_2}\eta.
\end{equation}
When $i\leq 4$, we can similarly obtain
\begin{equation}
    \begin{aligned}
        \lnm \nabla^i\eta\rnm^2_{\LT^2(S_{U,V})}\lesssim & \epsilon^{2-2\delta}V^a+ \sum_{j\leq i}\lnm \nabla^j(\ol{\varphi_1},\Omega\beta^r)\rnm^2_{\LT^2(H_U^V)}\\
        &+\sum_{i_1\leq 4,i_2\leq i-1} \sup\left(\lnm \nabla^{i_1}\Omega\chi\rnm^2_{\LS_1^2(S)}\right) \lnm \nabla^{i_2}\eta\rnm^2_{\LT^2(H_U^V)}\\
        \lesssim & B \epsilon^{2-2\delta} V^a,
    \end{aligned}
\end{equation}
and for $i=5$, it holds
\begin{equation}
    \begin{aligned}
        \lnm \nabla^5\eta\rnm^2_{\LT^2(S_{U,V})}\lesssim & \epsilon^{2-2\delta}V^a+ \sum_{j\leq 5}\lnm \nabla^j(\ol{\varphi_1},\Omega\beta^r)\rnm^2_{\LT^2(H_U^V)}+\lnm \eta\nabla^5(\Omega\chi)\rnm^2_{\LT^2(H_U^V)} \\
        &+\sum_{i_1\leq 4,i_2\leq 4} \sup\left(\lnm \nabla^{i_1}\Omega\chi\rnm^2_{\LS_1^2(S)}\right) \lnm \nabla^{i_2}\eta\rnm^2_{\LT^2(H_U^V)}\\
        \lesssim & B \epsilon^{2-2\delta} V^a+ \lnm \eta\nabla^5(\Omega\chi)\rnm^2_{\LT^2(H_U^V)}.
    \end{aligned}
\end{equation}
Since $\lnm \eta\rnm^2_{\LS_1^\infty(S_{u,v})}\lesssim B\epsilon^{2-2\delta}\left(\frac{V^{1-\k}}{-U}\right)^\tau$, we deduce that 
\begin{equation}
    \begin{aligned}
        \lnm \nabla^5\eta\rnm^2_{\LT^2(S_{U,V})}
        \lesssim & B \epsilon^{2-2\delta} V^a\left(1+\epsilon\left(\frac{V^{1-\k}}{-U}\right)^\tau\right),
    \end{aligned}
\end{equation}
As for $\varphi_1$, the following relation holds 
\begin{equation}
    \Omega^{-1}\nabla_3\varphi_1=\varphi_1\varphi_2+\varphi_2^2+\nabla\eta+\Psi_2,
\end{equation}
and using commutation formula \eqref{Comm_Formula_D3,nab}, we obtain
\begin{equation}
    \begin{split}
        \Omega^{-1}\nabla_3\nabla^I\varphi_1=&\sum_{i_1+i_2+i_3=I}\nabla^{i_1}(\eta+\etab)^{i_2}\nabla^{i_3} \left(\varphi_1\varphi_2+\varphi_2^2+\nabla\eta+\Psi_2\right)\\
        &+\sum_{i_1+i_2+i_3+i_4=I}\nabla^{i_1}(\eta+\etab)^{i_2}\nabla^{i_3}(\Omega^{-1}\chib)\nabla^{i_4}\varphi_1 \\
        =&\sum_{i_1+i_2=I}\nabla^{i_1}\varphi^{2+i_2}+\nabla^{I+1}\eta+\nabla^I\Psi_2+\nabla^I\Upsilon_2.
    \end{split}
\end{equation} 
For the $\varphi_1$ family we now switch to the incoming transport estimate \eqref{R III transport estimate nabla 3}. The quantities $\nabla\eta$, $\Psi_2$, and $\Upsilon_2$ in the source have already been controlled in the preceding paragraphs or by the bootstrap, and the remaining integral gains the small $\epsilon_1$-factor from the geometry of Region III. 
By transport estimate \eqref{R III transport estimate nabla 3}, we have the following estimate
\begin{equation}
    \begin{split}
            \lnm \nabla^I\varphi_1\rnm^2_{\LT^2(S_{U,V})}\leq & \lnm \nabla^{I}(\nabla\eta,\Psi_2,\Upsilon_2)\rnm^2_{\LT^2(\Hb_V^U)}+\sum_{i_1+i_2=I}\lnm \nabla^{i_1}\varphi^{2+i_2}\rnm^2_{\LT^2(\Hb_V^U)}+\epsilon^{2-2\delta} V^a\\
            \lesssim & B\epsilon^{2-2\delta} V^a + \int_{U_0(V)}^U\frac{1}{V^{1-\k}} C(B)\left(\frac{V^{1-\k}}{-U^\prime}\right)^{N(I)\tau} \epsilon^{2-2\delta} V^a\\
            \lesssim & B\epsilon^{2-2\delta} V^a+B\epsilon^{2-2\delta} V^a C(B)\epsilon_1^{(1-\k)(1-N(I)\tau)}\\
            \lesssim & B\epsilon^{2-2\delta} V^a.
        \end{split}
\end{equation}
For $\Psi=\Omega^2\alpha,K$, the equations are
\begin{equation}
    \begin{split}
        \Omega^{-1}\nabla_3(\Omega^2\alpha,K)=&\nabla\Psi_2+\varphi\Psi+\varphi^3+\varphi\nabla\varphi,
    \end{split}
\end{equation}
and the estimates are similar. The estimates for $\Omega\beta^r,\sigma^r,\Omega^{-1}\betab^r$ come from equations
\begin{equation}
    \begin{aligned}
        &\Omega^{-1}\betab^r=\dv(\Omega^{-1}\chibh)+\etab\Omega^{-1}\chibh-\frac{1}{2}\nabla\left(\Omega^{-1}\tr\chi\right)-\frac{1}{2}\etab\Omega^{-1}\tr\chi,\\
        &\Omega\beta^r=-\dv(\Omega\chih)+\frac{1}{2}\nabla(\Omega\tr\chi)+\etab\Omega\chih-\frac{1}{2}\etab\Omega\tr\chi,\ \ 
        \sigma^r=-\cl\etab,
    \end{aligned}
\end{equation}
which contains no $\eta$.

\end{proof}

We preceed to derive the other desired bounds.
\begin{proposition}\label{R III L2S est of varphi_1 Phi_1}
    The following relation holds:
    \begin{equation}
        \sum_{i\leq 4}\lnm \nabla^i(\ol{\Omega\chi},\ol{\Omega^{-1}\tr\chib},\etab,\ol{\Omega\omega},\ol{\Omega D_4\phi},\nabla\phi)\rnm^2_{\LT^2(S_{U,V})}+\sum_{i\leq 3}\lnm \nabla^i\ol{\Psi_1}\rnm^2_{\LT^2(S_{U,V})}\lesssim \epsilon^{2-2\delta} W(U,V).
    \end{equation}
\end{proposition}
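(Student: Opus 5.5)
The plan is to improve the constant $B^\prime$ in \eqref{R_III_BOOTSTRAP_ASSUMPTIONS_2}. The gain over Proposition \ref{R III L2S est non top order} is that the right-hand side is now $\epsilon^{2-2\delta}W_{III}(U,V)=\epsilon^{2-2\delta}V^a\left(\frac{-U}{V^{1-\k}}\right)^\tau$. This carries the extra decaying factor $\left(\frac{-U}{V^{1-\k}}\right)^\tau$, and we can afford at most four derivatives on $\varphi_1$ and three on $\Psi_1$. Every quantity in the list $\varphi_1\in\{\Omega\chi,\Omega^{-1}\tr\chib,\etab,\Omega\omega,\Omega D_4\phi,\nabla\phi\}$ and $\Psi_1\in\{\Omega^2\alpha,\Omega\beta^r,K,\sigma^r\}$ obeys an incoming equation of the schematic form
\[\Omega^{-1}\nabla_3\ol{\varphi_1}=\ol{\varphi\varphi}+\nabla\eta+\ol{\Psi_2},\qquad \Omega^{-1}\nabla_3\ol{\Psi_1}=\nabla\ol{\Psi_2}+\ol{\varphi\Psi}+\ol{\varphi^3}+\ol{\varphi\nabla\varphi}.\]
Some of these are obtained after the renormalisations already used, namely $\Omega\omega^\dagger$ and $\omega^r$ for $\Omega\omega$, and the Codazzi identities for $\Omega\beta^r$ and $\sigma^r$. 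None of these equations contains $\Omega^{-1}\omegab$, and none contains a linear $\Omega^{-1}\tr\chib$-term with a bad sign.

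I would therefore integrate in $U$ from the past boundary $U_0(V)$ of $\R_{III}$ using \eqref{R III transport estimate nabla 3}, with the weight $W_{III}$ and the trivial-weight comparison. The initial term is controlled by \eqref{R III initial value 1}. Along $z=\epsilon_1^{-1}$ we have $\left(\frac{-U}{V^{1-\k}}\right)^\tau\sim\epsilon_1^{(1-\k)\tau}$, so that term is bounded by $C(\epsilon_1)\epsilon^{2-3\delta/2}W_{III}\le\epsilon^{2-2\delta}W_{III}$ for $\epsilon$ small. For the commuted equations I would use \eqref{Comm_Formula_D3,nab}, with commutator terms of the form $\nabla^{i_1}(\eta+\etab)^{i_2}\nabla^{i_3}(\Omega^{-1}\chib)\nabla^{i_4}\varphi_1$. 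These are lower order and have at least one factor of size $\epsilon^{1-\delta}$.

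The key step is to convert the $\LT^2(\Hb_V^U)$ norms of the sources into bounds with the weight $W_{III}(U,V)$ evaluated at the final point. Since the weight along $\Hb_V$ involves $(-U^\prime)^\tau$ with $U^\prime\le U$, we get $(-U^\prime)^\tau\ge(-U)^\tau$, so a naive estimate loses. I would instead work at the level of the $S$-norm. The source terms with at most $4$ derivatives (resp.\ $3$ for $\Psi$) are bounded pointwise on spheres by Proposition \ref{R III L2S est non top order}, which gives $\epsilon^{2-2\delta}V^a$ with the trivial $(-U)$-profile, times the measure $V^{\k-1}\Omega^2\,dU^\prime\sim V^{2\k-1}dU^\prime$. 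Integrating over $U^\prime\in(U_0(V),U)$ then produces the factor $\int_{U_0}^U V^{\k-1}dU^\prime\lesssim\frac{-U_0+U}{V^{1-\k}}$, which is of size $\epsilon_1^{1-\k}$ in general. I would refine this by integrating the $\tau$-weighted transport identity directly. The term $\partial_U\log W_{III}=\tau/U<0$ yields a favourable contribution $-\tau\int(-U^\prime)^{-1}\|\cdot\|^2$, which absorbs the growth of the factor $(-U)^{-\tau}$. The source then only needs to be bounded in the $V^a$-weighted $\Hb$ norm multiplied by $\left(\frac{-U}{V^{1-\k}}\right)^{\tau}$, and that bound follows from the bootstrap \eqref{R_III_BOOTSTRAP_ASSUMPTIONS_1} together with Corollary \ref{R III Key Lemma Cor}. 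The top-derivative sources are $\nabla^5\eta$ in the $\varphi_1$ equations and $\nabla^4\Psi_2$ in the $\Psi_1$ equations. For these I would use the $\Hb$ flux bounds of \eqref{R_III_BOOTSTRAP_ASSUMPTIONS_1} and the $\nabla^5\eta$ bound of Proposition \ref{R III L2S est non top order}, whose loss $\epsilon\left(\frac{V^{1-\k}}{-U}\right)^\tau$ comes with a spare $\epsilon$.

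The main obstacle is the $\eta$-dependence. The $\nabla^5\eta$ estimate degenerates like $\left(\frac{V^{1-\k}}{-U}\right)^\tau$, and multiplying this by the target weight $\left(\frac{-U}{V^{1-\k}}\right)^\tau$ exactly cancels the needed decay. First I would try to exploit the extra $\epsilon$ and the smallness $\epsilon_1^{(1-\k)(1-N\tau)}$ of the $U$-interval, as in the preceding proof. If that fails, I would replace $\nabla\eta$ via $\mu=-\dv\eta+K$ and the Hodge system, so that only $\eta$-free quantities appear in the $\nabla_3$ equations. Closing with $\epsilon$ small gives $B^\prime\lesssim1$.
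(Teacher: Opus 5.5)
There is a genuine gap at the one step that matters here: producing the extra factor $\left(\frac{-U}{V^{1-\k}}\right)^\tau$.

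Suppose you run \eqref{R III transport estimate nabla 3} in the $W_{III}$-weighted norm. The initial term is then $\|\nabla^I\ol{\varphi_1}\|^2_{\LT^2(S_{U_0(V),V})}\lesssim\epsilon^{2-3\delta/2}W_{III}(U_0(V),V)$. But $W_{III}(U_0(V),V)/W_{III}(U,V)=\left(\frac{-U_0(V)}{-U}\right)^\tau\to\infty$ as $U\to 0$. So your claim that this term is $\le C(\epsilon_1)\epsilon^{2-3\delta/2}W_{III}$ fails at the point $(U,V)$.

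The favourable term $\partial_U\log W_{III}=\tau/U$ restores exactly this decay, but only if you keep it whole, which is the same as dividing the weight out. Whatever part of it you spend (via Cauchy--Schwarz) to absorb the cross term $\Omega^2\langle\psi,F\rangle$ is lost from the decay.

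The source side does not close either. Write $N_F(U'):=\int_{\SS}V^{a+2-2S(F)}|F|^2(U',V)$. The flux bootstrap \eqref{R_III_BOOTSTRAP_ASSUMPTIONS_1} bounds $\int_{U_0(V)}^U V^{\k-1}\left(\frac{-U'}{V^{1-\k}}\right)^\tau N_F(U')\,dU'$ by $B\epsilon^{2-2\delta}V^a$. What you actually need is the bound $\int_{U_0(V)}^U V^{\k-1}N_F(U')\,dU'\lesssim\epsilon^{2-2\delta}V^a$, and the flux bound does not give it. Corollary \ref{R III Key Lemma Cor} is irrelevant here: it controls the bulk norm $\LT^2(\D_{U,V})$, which carries an extra integration in $V'$ and says nothing on a single $\Hb_V$ or $S_{U,V}$. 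Integrating the weighted sphere bound $B\epsilon^{2-2\delta}V^a$ of Proposition \ref{R III L2S est non top order} against $V^{\k-1}dU'$ only gives $\epsilon_1^{1-\k}\epsilon^{2-2\delta}V^a$. As written, your argument reproduces Proposition \ref{R III L2S est non top order} with a smaller constant, not the improved bound.

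The missing idea, which is how the paper argues, is to remove the factor $\left(\frac{-U}{V^{1-\k}}\right)^\tau$ from the norm entirely. One proves the equivalent statement $\int_{\SS}V^{a+2-2S}|\nabla^I\ol{\varphi_1}|^2\lesssim\epsilon^{2-2\delta}V^a$ (for $I\le 4$, and similarly for $\Psi_1$ with $I\le3$) by a plain $\nabla_3$-transport in $U$. The three contributions are then handled as follows.

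\textbf{Initial term.} It is $\lesssim\epsilon^{2-3\delta/2}V^a$ directly by \eqref{R III initial value 1}.

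\textbf{Self-interaction.} It is $\lesssim\epsilon_1^{(1-\k)(1-\tau)}B'\epsilon^{2-2\delta}V^a$, because $\int_{U_0(V)}^U V^{\k-1}dU'\lesssim\epsilon_1^{1-\k}$.

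\textbf{Sources.} These are controlled on spheres by Proposition \ref{R III L2S est non top order} with the weight divided out, i.e.\ by $C(B,B')\epsilon^{2-2\delta}V^a\left(\frac{V^{1-\k}}{-U'}\right)^{N\tau}$. Here $N=N(I)$ is a fixed integer, with $N=2$ already for the $\nabla^5\eta$ source. This growth is integrable because $N\tau<1$: $\int_{U_0(V)}^U V^{\k-1}\left(\frac{V^{1-\k}}{-U'}\right)^{N\tau}dU'\lesssim\epsilon_1^{(1-\k)(1-N\tau)}$ uniformly in $U$.

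Choosing $\epsilon_1$ small then closes $B'$, and multiplying back by $\left(\frac{-U}{V^{1-\k}}\right)^\tau$ gives the statement.

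In this formulation $\nabla^5\eta$ is just one more source with integrable growth. The cancellation you single out as the main obstacle comes from multiplying the degenerate $\nabla^5\eta$ bound by the target weight instead of integrating it in $U'$, so the detour through $\mu=-\dv\eta+K$ is unnecessary. You did mention the factor $\epsilon_1^{(1-\k)(1-N\tau)}$, but it only does its job once the $(-U)^\tau$ part of the weight has been stripped off.
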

\begin{proof}
    The estimates used in this proposition are the non-top-order spherical bounds of Proposition \ref{R III L2S est non top order} and the integrated estimate \eqref{R III energy est R norm}. The smallness in the last line comes from choosing $\epsilon_1$ after the power $N(I)\tau$ has been fixed. Using the previous lemma, we obtain  
    \begin{equation}
        \sum_{i\leq 5}\int_{\SS}V^{2-2S(\nabla^I\varphi)}\left|\nabla^I\ol{\varphi}\right|^2\lesssim C(B,B^\prime)\epsilon^{2-2\delta} \left(\frac{V^{1-\k}}{-U}\right)^\tau,
    \end{equation}
    and 
    \begin{equation}
        \sum_{i\leq 4}\int_{\SS}V^{2-2S(\nabla^I\Psi)}\left|\nabla^I\ol{\Psi}\right|^2\lesssim C(B,B^\prime)\epsilon^{2-2\delta} \left(\frac{V^{1-\k}}{-U}\right)^\tau.
    \end{equation}
    Therefore, we have
    \begin{equation}
        \begin{split}
            &\int_{\SS}V^{a+2-2S(\nabla^I\varphi_1)}\left|\nabla^I\ol{\varphi_1}\right|^2\\ \lesssim & \epsilon^{2-2\delta} V^a+\int_{U_0(V)}^U \int_{\SS}V^{a+1-2S(\nabla^I\varphi_1)}\Omega^2\left|\nabla^I\ol{\varphi_1}\right|^2\\
            &+\int_{U_0(V)}^U \int_{\SS}V^{a+1-2S(\Omega^{-1}\nabla_3\nabla^I\varphi_1)}\Omega^2\left|\Omega^{-1}\nabla_3\nabla^I\ol{\varphi_1}\right|^2\\
            \lesssim & \epsilon^{2-2\delta} V^a+\epsilon_1^{(1-\k)(1-\tau)}\sup_{U_0(V)\leq U^\prime\leq U}\left(\int_{\SS}V^{a+2-2S(\nabla^I\varphi_1)}\left|\nabla^I\ol{\varphi_1}\right|^2(U^\prime,V)\right)\\
            &+\int_{U_0(V)}^U V^{a+\k-1} \epsilon^{2-2\delta} C(B,B^\prime)\left(\frac{V^{1-\k}}{-U^\prime}\right)^{N(I)\tau}\\
            \lesssim & \epsilon^{2-2\delta} V^a+\epsilon_1^{(1-\k)(1-\tau)}B^\prime \epsilon^{2-2\delta} V^a+V^a \epsilon^{2-2\delta} C(B,B^\prime)\epsilon_1^{(1-\k)(1-N(I)\tau)}.
        \end{split}
    \end{equation}
    Let $\epsilon_1$ be sufficiently small, we have the estimates for $\varphi_1$. $\Psi_1$ follows similarly.
\end{proof}
Using Sobolev embedding, one immediately obtains
\begin{corollary}
    For $\Omega\chi,\Omega\omega$, we establish the following bound:
    \begin{equation}
        \lnm \ol{\Omega\chi},\ol{\Omega\omega}\rnm^2_{\LS_1^\infty(S_{U,V})}\lesssim \epsilon^{2-2\delta}.
    \end{equation}
\end{corollary}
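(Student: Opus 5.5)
The bound is a direct consequence of Proposition \ref{R III L2S est of varphi_1 Phi_1} combined with the Sobolev embedding \eqref{Sobolev_W22}, adapted to the $V$-signature norms of Region III. The plan is to work with the trivial weight first and then relate it to the weighted statement, rather than redo any transport argument.

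First, I would check the hypothesis of \eqref{Sobolev_W22} in the $(U,V)$ gauge: uniform equivalence $C_1^{-1}g^{\SS}\le V^{-2}g\le C_1 g^{\SS}$. This follows from Lemma \ref{R_III_Chr_sol_lemma} ($g^c\sim V^2 g^{\SS}$) together with smallness of $\ol{g}$. The latter is propagated from $\Lie_v\ol g=2\ol{\Omega\chi}$ and the bounds already available on $\ol{\Omega\chi}$. The rescaled sphere argument in the proof of \eqref{Sobolev_W22} then applies verbatim with $(-u)$ replaced by $V$ and $s$ replaced by $S$. For the scalar quantities $\Omega\chi$ and $\Omega\omega$ this gives
\[
\lnm \ol{\Omega\chi},\ol{\Omega\omega}\rnm_{\LS_1^\infty(S_{U,V})}\lesssim \sum_{i\le 2}\lnm \nabla^i\ol{\Omega\chi},\nabla^i\ol{\Omega\omega}\rnm_{\LS_1^2(S_{U,V})}.
\]

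Second, I would insert the estimate of Proposition \ref{R III L2S est of varphi_1 Phi_1} for $i\le 2\le 4$. That estimate is stated in the $\LT$ norm with weight $W_{III}=V^a\left(\frac{-U}{V^{1-\k}}\right)^\tau$ and right side $\epsilon^{2-2\delta}W_{III}$. Since the $\LT^2(S)$ norm is exactly $W_{III}^{1/2}$ times the $\LS_1^2(S)$ norm, dividing by $W_{III}(U,V)$ removes the weight entirely and yields
\[
\sum_{i\le 4}\lnm \nabla^i\ol{\Omega\chi},\nabla^i\ol{\Omega\omega}\rnm^2_{\LS_1^2(S_{U,V})}\lesssim\epsilon^{2-2\delta}.
\]
Squaring the Sobolev inequality then gives the claim.

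The main point to be careful about is that the proposition's right side is genuinely $\epsilon^{2-2\delta}W(U,V)$ and not $\epsilon^{2-2\delta}V^a$. The weaker bound, as in Proposition \ref{R III L2S est non top order}, would lose a factor $\left(\frac{V^{1-\k}}{-U}\right)^\tau$ that blows up toward the horizon. Because $\Omega\chi$ and $\Omega\omega$ belong to the enhanced family $\varphi_1$, the stronger bound holds, so no loss occurs. The remaining mild issue is covariant versus coordinate derivatives in the Sobolev step. This is handled by the $\LS_1^2$ smallness of $\ol g$ and of $\nabla K$, using the curvature bounds in \eqref{R_III_BOOTSTRAP_ASSUMPTIONS_2}, exactly as in the proof of the Region I embedding.
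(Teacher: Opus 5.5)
Your proposal is correct and follows the paper's own one-line argument. The enhanced bound of Proposition \ref{R III L2S est of varphi_1 Phi_1} has right side $\epsilon^{2-2\delta}W_{III}(U,V)$, so dividing by the sphere-constant weight gives an unweighted $\LS_1^2(S_{U,V})$ bound, and the Sobolev embedding \eqref{Sobolev_W22} in $V$-signature form then yields the $\LS_1^\infty$ estimate. The metric-equivalence check you add via $\Lie_v\ol{g}=2\ol{\Omega\chi}$ is left implicit in the paper and is a harmless refinement; note only that $\Omega\chi$ is a symmetric $2$-tensor rather than a scalar, which the embedding handles equally well.
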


With the accurate estimate for $\varphi_1$, we then improve the estimate for $\Omega^{-1}\chibh,\ol{\Omega^{-1}D_3\phi}$.
\begin{proposition}\label{R III psi_1 W32 est}
    For $\Omega^{-1}\chibh$ and $\ol{\Omega^{-1}D_3\phi}$, we have
    \begin{equation*}
        \sum_{i\leq 3}\lnm \nabla^i\left(\Omega^{-1}\chibh,\ol{\Omega^{-1}D_3\phi}\right)\rnm^2_{\LT^2(S_{U,V})}\leq C(B)\epsilon^{2-2\delta}W(U,V).
    \end{equation*}
\end{proposition}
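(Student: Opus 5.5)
The plan is to estimate both quantities by outgoing transport along $H_U^V$, using the $\nabla_4$ equations with the improved $\varphi_1$ bounds of Proposition \ref{R III L2S est of varphi_1 Phi_1} as sources. For $\Omega^{-1}\chibh$ the natural equation is the $\nabla_4\chibh$ null structure equation, rewritten in the $(U,V)$ gauge:
\[\Omega\nabla_4(\Omega^{-1}\chibh)+\tfrac12\Omega\tr\chi\,\Omega^{-1}\chibh-4\Omega\omega\,\Omega^{-1}\chibh=\nabla\hat\otimes\etab+\etab\hat\otimes\etab+\tfrac12\nabla\phi\hat\otimes\nabla\phi-\tfrac12\Omega^{-1}\tr\chib\,\Omega\chih.\]
For $\ol{\Omega^{-1}D_3\phi}$ I would use the second wave equation in \eqref{intrinsic_wave_equation}, whose source consists of $\Delta\phi$, $\etab\cdot\nabla\phi$ and the products $\Omega\tr\chi\cdot\Omega^{-1}D_3\phi$ and $\Omega^{-1}\tr\chib\cdot\Omega D_4\phi$. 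Taking differences with the background gives two kinds of terms. The first are the damping terms $\tfrac12\lambda\Omega\tr\chi$ and $4\Lambda\Omega\omega$, which act on the unknown. The second are source terms in which every factor is one of $\varphi_1\in\{\Omega\chi,\Omega^{-1}\tr\chib,\etab,\Omega\omega,\Omega D_4\phi,\nabla\phi\}$, or one of their angular derivatives up to order four, or a product of such a factor with a background quantity. In particular $\eta$ does not appear. This is exactly why only three derivatives are claimed: the sources involve $\nabla^{i+1}\etab$ and $\nabla^{i+1}\phi$, and Proposition \ref{R III L2S est of varphi_1 Phi_1} controls these for $i\le 3$.

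The steps are as follows. First, commute with $\nabla^i$ for $i\le 3$ using \eqref{Comm_Formula_D4,nab}. The commutators only add terms of the form $\nabla^{j}(\Omega\chi)\nabla^{i-j}(\cdot)$, and these are harmless by the $\LS_1^\infty$ bound on $\ol{\Omega\chi}$ from the corollary above. Second, apply the transport estimate \eqref{R III transport estimate nabla 4} with a small $\delta_1$. For this I need the coefficient $2-2S+a-\tau(1-\k)-2\lambda+2\Lambda\k$ to be negative. For $\nabla^i(\Omega^{-1}\chibh)$ we have $S=-i$, $\lambda=1+i$ and $\Lambda=-1$ (coming from $-4\Omega\omega$, with $V\Omega\omega\to-\k/4$). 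The coefficient is then $-2\k+a-\tau(1-\k)<0$, which holds since $a\ll\tau(1-\k)$. A similar count works for $\Omega^{-1}D_3\phi$.

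Third, the source terms are estimated in $\LT^2(H_U^V)$ using the product Lemma \ref{R_III_Lemma_Estimate_without_eta} and Proposition \ref{R III L2S est of varphi_1 Phi_1}. That proposition gives spherical bounds of size $\epsilon^{2-2\delta}W_{III}(U,V)$. Integrating them in $V'$ with the measure $dV'/V'$ and the weight $W_{III}(U,V')=V'^a(-U)^\tau V'^{-(1-\k)\tau}$ should give a bound of order $\epsilon^{2-2\delta}W_{III}(U,V)$, or better. The initial data on $z=\epsilon_1^{-1}$ are handled by \eqref{R III initial value 1}, where $W_{III}\sim V^a$ up to a constant depending on $\epsilon_1$.

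The main obstacle is the weight bookkeeping in the $V$-integration. I expect it to be safe because $W_{III}$ decreases in $V$ like $V^{a-(1-\k)\tau}$, so the flux integral of $W_{III}(U,V')/V'$ is dominated by its initial endpoint. I therefore need to check that the endpoint contribution is compatible with $W_{III}(U,V)$ rather than with a larger quantity. If this fails, I would instead use the sign of the damping coefficient: the negative bulk term absorbs the growth, so one can compare with $W_{III}(U,V)$ directly, as in the $\eta$ estimate of Proposition \ref{R III L2S est non top order}. A secondary issue is the term $\Omega^{-1}\tr\chib\cdot\Omega\chih$. It is a pure difference, since $\chih^c=0$, and its factor $\ol{\Omega\chi}$ carries the improved bound. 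It is therefore included among the $\varphi_1$-type sources.
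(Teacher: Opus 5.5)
Your setup matches the paper's proof. You use outgoing transport for $\Omega^{-1}\chibh$ via the $\nabla_4\chibh$ equation, and for $\ol{\Omega^{-1}D_3\phi}$ via the second wave equation. Your sources involve only $\etab,\nabla\phi,\Omega\chi,\Omega^{-1}\tr\chib,\Omega D_4\phi$ (no $\eta$) and are controlled up to four derivatives by Proposition \ref{R III L2S est of varphi_1 Phi_1}. The damping coefficient's $i$-dependence cancels after commutation.

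However, the step you flag as the main obstacle does fail as written. Along $H_U$ the weight $W_{III}(U,V')=(-U)^\tau (V')^{a-(1-\k)\tau}$ is decreasing in $V'$. Hence both the data term and the flux $\int_{V_0(U)}^V\epsilon^{2-2\delta}W_{III}(U,V')\,dV'/V'$ are of size $\epsilon^{2-2\delta}W_{III}(U,V_0(U))$, and
\[
\frac{W_{III}(U,V_0(U))}{W_{III}(U,V)}=\Big(\frac{V}{V_0(U)}\Big)^{(1-\k)\tau-a},
\]
which is unbounded on $\R_{III}$: fix $V$ and let $U\to0$, so that $V_0(U)\to0$. Since \eqref{R III transport estimate nabla 4} only lets you discard the negative bulk term, it cannot yield $C\epsilon^{2-2\delta}W_{III}(U,V)$.

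Your fallback is the right repair, and it is in substance the paper's argument. Take $\psi\in\{\Omega^{-1}\chibh,\ol{\Omega^{-1}D_3\phi}\}$. Because $W_{III}(U,V)$ is constant on $S_{U,V}$, the claim is equivalent to
\[
\int_{\SS}V^{a+2-2S(\nabla^i\psi)}|\nabla^i\psi|^2\,d\V\le C\epsilon^{2-2\delta}V^a .
\]
The paper runs the transport estimate directly with this increasing weight:
\begin{itemize}
\item the coefficient $a+2-2S-(1+i)V\Omega\tr\chi+8V\Omega\omega$ tends to $a-2\k$;
\item spending $\k$ of it on Cauchy--Schwarz for the source leaves $a-\k<0$ and a remainder $\frac{C}{\k}\epsilon^{2-2\delta}\int_{V_0(U)}^V(V')^{a-1}dV'\lesssim\epsilon^{2-2\delta}V^a$;
\item the data contribute $\epsilon^{2-2\delta}V_0(U)^a\le\epsilon^{2-2\delta}V^a$.
\end{itemize}
With an increasing weight both contributions sit at the upper endpoint, so the integrated inequality suffices.

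If you keep the weight $W_{III}$, you must instead use the differential inequality
\[
V\partial_Vf\le-\mu f+C\delta_1^{-1}\epsilon^{2-2\delta}W_{III}(U,V),
\]
where $f=\|\nabla^i\psi\|^2_{\LT^2(S_{U,V})}$ and $\mu=2\k+(1-\k)\tau-a-\delta_1$ up to $o_{\epsilon,\epsilon_1}(1)$. Then integrate $\partial_V(V^\mu f)$. The margin $\mu-\big((1-\k)\tau-a\big)=2\k-\delta_1>0$ is exactly what places both data and flux at size $W_{III}(U,V)$. The two versions are the same computation, with the factor $(-U/V^{1-\k})^\tau$ moved into or out of the norm.
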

\begin{proof}
    We use $\Omega\nabla_4(\Omega^{-1}\chibh)$ equation to obtain that  
    \begin{equation*}
        \begin{aligned}
            &\int_{\SS}V^{a+2-2S(\Omega^{-1}\chibh)}\left|\Omega^{-1}\chibh\right|^2\\
            \leq &\epsilon^{2-2\delta}V^a+\int_{V_0(U)}^V\int_{\SS}(a+2-V^\prime\cdot\Omega\tr\chi +8V^\prime\cdot \Omega\omega)(V^\prime)^{a+1-2S(\Omega^{-1}\chibh)}\left|\Omega^{-1}\chibh\right|^2\\
            &+\int_{V_0(U)}^V\int_{\SS}V^{a+2-2S(\Omega^{-1}\chibh)}\left|\Omega^{-1}\chibh\right|\left|\nabla\varphi_1+\varphi_1\ol{\varphi_1}\right|\\
            \leq & \epsilon^{2-2\delta}V^a+(a-\k)\int_{V_0(U)}^V \frac{1}{V^\prime}\int_{\SS} (V^\prime)^{a+2-2S(\Omega^{-1}\chibh)}\left|\Omega^{-1}\chibh\right|^2 +\frac{C_0}{\k} \int_{V_0(U)}^V \epsilon^{2-2\delta} (V^\prime)^{a-1}\\
            \leq & C(B) \epsilon^{2-2\delta} V^a.
            \end{aligned}
    \end{equation*}
    Since $\Omega^{-1}D_3\phi$ satisfies equation 
    \begin{equation}
        \Omega e_4(\Omega^{-1}D_3\phi)+\frac{1}{2}\Omega\tr\chi\Omega^{-1}D_3\phi-4\Omega\omega\Omega^{-1}D_3\phi=\Delta \phi-\frac{1}{2}\Omega^{-1}\tr\chib\Omega D_4\phi+2\etab\nabla\phi=\nabla\varphi_1+\varphi_1\varphi_1,
    \end{equation}
    the signature satisfies 
    \begin{equation*}
        a+2-2S(\Omega^{-1}D_3\phi)-V\cdot\Omega\tr\chi +8V\cdot\Omega\omega<0.
    \end{equation*}
    The estimate is similar after making difference.
    For higher order derivatives, if $I\leq 3$,
    \begin{equation}
        \sum_{i_1+i_2+i_3\leq I+1}\int_{\SS} V^{2-2S}\left|\nabla^{i_1}\varphi_1^{i_2}\nabla^{i_3}\ol{\varphi_1}\right|^2\leq C(B)\epsilon^{2-2\delta}.
    \end{equation}
    Hence the lemma holds.
\end{proof}
We cannot do so for $\eta$ for the the energy of $\eta$ along $H_U^V$ cannot be canceled. Compared to Proposition \ref{R III L2S est non top order}, next proposition reveals that we can still improve the $\LT^2(S)$ norm for lower order terms.

\begin{proposition}\label{R III psi_2 Omega b W32 est}
    For any fixed $\lambda>0$, we can choose $\epsilon=\epsilon(\lambda),\epsilon_1=\epsilon_1(\lambda)$ sufficiently small, so that  
    \begin{equation}
        \sum_{i\leq 3}\lnm \nabla^ib,\nabla^i\eta,\nabla^i\ol{\Omega^2},\nabla^i\ol{\Omega^{-2}}\rnm^2_{\LT^2(S_{U,V})}\leq C(B,\lambda) \epsilon^{2-2\delta}W\cdot\left(\frac{V^{1-\k}}{-U}\right)^{\lambda}.
    \end{equation}
\end{proposition}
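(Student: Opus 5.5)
We prove the bound for $\eta$ first, since $b$ and $\Omega^2$ are driven by $\eta,\etab$ through $\Lie_V b=2\Omega^2(\etab-\eta)$ and $\nabla\log\Omega^2=\eta+\etab$. The obstruction noted after Proposition \ref{R III psi_1 W32 est} is that in the $\Omega\nabla_4\eta$ equation the coefficient $\Omega\tr\chi$ only balances the signature weight up to $a-(1-\k)\tau$. This leaves no room to absorb the $V$-flux of $\eta$. We therefore change the weight. For fixed $\lambda>0$ we work with $\widetilde W=W_{III}\cdot(V^{1-\k}/(-U))^{\lambda}$, which amounts to lowering $\tau$ to $\tau-\lambda$ in the transport estimate \eqref{R III transport estimate nabla 4}. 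For $\varphi_2=\eta$ with $\lambda=\Lambda=0$ in that lemma the bulk coefficient becomes $a-(1-\k)(\tau-\lambda)+\delta_1+o(1)$. When $\lambda>\tau$ this coefficient is positive and bounded by $\lambda(1-\k)$, so we run the estimate as a Gronwall inequality in $V$ rather than absorbing the term. The factor $V^{\lambda(1-\k)}$ this produces is exactly the one encoded in $\widetilde W$. What is gained is that every source in $\Omega\nabla_4\eta+\Omega\chi\cdot\eta\sim\ol{\varphi_1\varphi_1}-\Omega\beta^r$ now carries the improved spherical control of Proposition \ref{R III L2S est of varphi_1 Phi_1}, with weight $W$ instead of flux weight. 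Integrating $V^{-1}$ against $(V'^{1-\k}/(-U))^{\lambda}$ then gives a constant $C(\lambda)$ rather than a loss. The initial contribution on $V_0(U)$ comes from \eqref{R III initial value 1}.

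The steps, in order, are as follows.
\begin{enumerate}
\item Zeroth order for $\eta$, by the $\widetilde W$-weighted transport estimate just described. The $\Omega\beta^r$ term is bounded in $\LT^2(S)$ by Proposition \ref{R III L2S est of varphi_1 Phi_1}, since $\Omega\beta^r\in\Psi_1$. Integrating in $V$ gives the $\lambda$-loss.
\item Derivatives up to order three, via the commutation formula \eqref{Comm_Formula_D4,nab}. The commuted coefficient $(1+i)\Omega\chi$ only improves the sign. The commutator terms $\nabla^{i_1+1}(\Omega\chi)\nabla^{i_2}\eta$ involve at most $\nabla^4\Omega\chi$, which is covered by Proposition \ref{R III L2S est of varphi_1 Phi_1}. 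Lower derivatives of $\eta$ are handled by induction. Stopping at order three, rather than five, keeps every source within the four derivatives of $\varphi_1$ and three of $\Psi_1$ available there.
\item $b$. Transport $\nabla^i b$ along $\Omega\nabla_4$ using $\nabla^i(\Omega^2(\etab-\eta))$ and $\nabla^i(\Omega\chi\, b)$. Since $S(b)=1+\k$ matches $S(\Omega^2\eta)$, the same weighted Gronwall argument applies. The $\eta$ source is supplied by step 2 and the $\etab$ source by Proposition \ref{R III L2S est of varphi_1 Phi_1}. The data at $V_0(U)$ come from Region II.
\item $\Omega^2$. Use $\Omega\nabla_4\ol{\Omega^2}+4\Omega\omega\,\ol{\Omega^2}=-4\ol{\Omega\omega}(\Omega^2)^c$. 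With $V\Omega\omega\to-\k/4$ by Lemma \ref{R_III_Chr_sol_lemma}, the coefficient is $+2\k$ in the notation of \eqref{R III transport estimate nabla 4}. This is offset by $S(\Omega^2)=1+\k$, so again at most a $\lambda$-loss remains. For angular derivatives we use $\nabla\log\Omega^2=\eta+\etab$ directly, so that $\nabla^i\ol{\Omega^2}$ is controlled by steps 1--2 and Sobolev for $i\ge1$. $\ol{\Omega^{-2}}$ follows algebraically from $\ol{\Omega^2}$ together with the smallness $|\ol{\Omega^2}/(\Omega^2)^c|\ll1$ from the metric bootstrap.
\end{enumerate}

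The main obstacle is step 1. With the original weight, \eqref{R III transport estimate nabla 4} has a borderline or positive bulk coefficient, and we must check that the Gronwall factor is exactly compensated by $(V^{1-\k}/(-U))^{\lambda}$ uniformly in $U$. The point to verify is that $V$ ranges from $V_0(U)\sim(-U)^{1/(1-\k)}/\epsilon_1$, where the extra factor is $\epsilon_1^{-(1-\k)\lambda}$. Consequently the constant depends on $\epsilon_1$ only through $\epsilon_1^{-\lambda}$. This is harmless because $\epsilon_1=\epsilon_1(\lambda)$ is fixed first, and $\epsilon$ is then chosen small enough that all quadratic terms, and the bootstrap constant $B''$, are absorbed. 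As a consequence, the resulting $\LS_1^\infty$ bounds via \eqref{Sobolev_W22} with $\lambda=\tau/8$ improve the metric bootstrap for $\Omega^{-2}\nabla b$ and $\ol{\Omega^2}$.
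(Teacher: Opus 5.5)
Your steps 1, 2 and 4, for $\eta$ and $\Omega^{\pm2}$, use the same mechanism as the paper. Once the signature weight cancels against the linear term ($-\Omega\tr\chi$ for $\eta$; for $\Omega^2$, $V^{-2\k}$ against $-8\Omega\omega\approx 2\k/V$), the leftover coefficient is only $\delta_1+o_{\epsilon,\epsilon_1}(1)$. The sources are $O(\epsilon^{1-\delta})$ in $\LS_1$ by Proposition~\ref{R III L2S est of varphi_1 Phi_1}. An ODE in $V$ then costs only $(V/V_0(U))^{\delta_1+o(1)}\lesssim (V^{1-\k}/(-U))^{\lambda}$.

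The paper builds this loss into the weight $V^a(-U/V^{1-\k})^{\lambda}$ and integrates $\partial_V(V^{\lambda(1-\k)/4-a}F)\lesssim\epsilon^{2-2\delta}V^{\lambda(1-\k)/4-1}$. Your $\widetilde W$ and the restriction $\lambda>\tau$ are unnecessary: with an integrating factor the weight cancels for every $\lambda>0$. You need that case anyway, since you later use $\lambda=\tau/8<\tau$. For $\eta$ the lemma's parameters are $\lambda=1$, $\Lambda=0$, not $0,0$. Obtaining $\nabla^i\ol{\Omega^2}$ from $\nabla\log\Omega^2=\eta+\etab$ is a legitimate shortcut.

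The gap is step 3. In Region III the term $\Omega\chi\cdot b$ is not a perturbative source, because $V\Omega\tr\chi\to2$. It shifts the $V$-exponent by $\pm2$, against $-2\k$ from $V^{2-2S(b)}$. Fed into ``the same Gronwall argument'' as a source, it produces a factor up to $(V/V_0(U))^{2-2\k}=\epsilon_1^{2-2\k}(V^{1-\k}/(-U))^{2}$, which no small power absorbs. So you must use its sign.

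The paper writes $\Omega\nabla_4 b+\Omega\chi\cdot b=2\Omega^2(\etab-\eta)$, discards the linear term as damping, and keeps only the $\eta$-source with its $\lambda$-loss. You should check that sign, though. From $\partial_V b^A=2\Omega^2(\etab^A-\eta^A)$ and $\Pi D_{\partial_A}\partial_V=\Omega\chi_A{}^B\partial_B$ one gets $\Omega\nabla_4 b=\partial_V b+\Omega\chi\cdot b$. The term is then anti-damping, with net exponent $2-2\k$ at every order $\nabla^i b$.

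This is consistent with a direct look at the coordinate components. Since $\partial_V b^A=O(\epsilon V^{\k-2})$ (up to the small loss from $\eta$) is integrable, generically $b^A$ stays of size $\epsilon V_0(U)^{\k-1}$ along $H_U$. Hence $V^{-\k}|b|_g\sim\epsilon(V/V_0(U))^{1-\k}=\epsilon\,\epsilon_1^{1-\k}V^{1-\k}/(-U)$, and the same holds for $\nabla b$.

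So for $\lambda<2$ the $b$-part of the statement does not follow from your argument, and apparently not from the paper's either. Your closing claim that the bootstrap for $\Omega^{-2}\nabla b$ with power $\tau/8$ is improved fails for the same reason. The shift estimate has to be reformulated, for instance allowing a loss of order $(V^{1-\k}/(-U))^{2}$ or working with a renormalized shift. Its later uses (Corollary~\ref{R III Phi_1 R Est Cor} and the metric bootstrap) then need rechecking.
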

\begin{proof}
    We first write down the equations: 
    \begin{equation}
        \begin{split}
            \Omega\nabla_4b+\Omega\chi\cdot b=&2\Omega^2(\etab-\eta),\\
            \Omega\nabla_4\Omega^{2n}+4n\Omega\omega\Omega^{2n}=& 0, \quad \forall n\in\mathbb{Z},\\
            \Omega\nabla_4\eta+\Omega\chi\cdot\eta=&\Omega\chi\cdot\etab-\Omega\beta^r-\Omega D_4\phi \nabla\phi.
        \end{split}
    \end{equation}
    We choose $\epsilon,\epsilon_1$ sufficiently small such that in $\R_{III}$ it holds
    \begin{equation*}
        \begin{aligned}
            &\lnm \ol{\Omega\omega},\ol{\Omega\tr\chi}\rnm_{\LT^\infty(S)}+\lnm (\Omega\omega)^c(U,V)-(\Omega\omega)^c(0,V)\rnm_{\LS_1^\infty(S)}\\
            &\qquad\qquad+\lnm (\Omega\tr\chi)^c(U,V)-(\Omega\tr\chi)^c(0,V)\rnm_{\LS_1^\infty(S)}\ll \lambda.
        \end{aligned}
    \end{equation*}
    For $\Omega^2$, we find that 
    \begin{equation}
        \begin{split}
            &\int_{\SS}V^{a+2-2S(\nabla^I\Omega^2)}\left(\frac{-U}{V^{1-\k}}\right)^\lambda \left|\nabla^I{\Omega^2}\right|^2\\
            \lesssim & \epsilon^{2-2\delta}V^a+(a-\lambda(1-\k)/2)\int_{V_0(U)}^V\int_{\SS}(V^\prime)^{a+1-2S(\nabla^I \Omega^2)}\left|\nabla^I\ol{\Omega^2}\right|^2\left(\frac{-U}{{V^\prime}^{1-\k}}\right)^\lambda \\
            &+\frac{\lambda(1-\k)}{4}\int_{V_0(U)}^V\int_{\SS}(V^\prime)^{a+1-2S(\nabla^I \Omega^2)}\left|\nabla^I\ol{\Omega^2}\right|^2\left(\frac{-U}{{V^\prime}^{1-\k}}\right)^\lambda \\
            &+\frac{4}{\lambda(1-\k)}\sum_{\substack{i_1+i_2+i_3= I\\ i_3\leq I-1}}\int_{V_0(U)}^V\int_{\SS}(V^\prime)^{a+1-2S(\nabla^I \Omega^2)}\left| \nabla^{i_1}\varphi_1^{i_2+1}\nabla^{i_3}\ol{\Omega^2}\right|^2\left(\frac{-U}{{V^\prime}^{1-\k}}\right)^\lambda \\
            \leq & \epsilon^{2-2\delta}V^a+\frac{4}{\lambda(1-\k)}\sum_{\substack{i_1+i_2+i_3= I\\ i_3\leq I-1}}\int_{V_0(U)}^V\int_{\SS}(V^\prime)^{a+1-2S(\nabla^I \Omega^2)}\left| \nabla^{i_1}\varphi_1^{i_2+1}\nabla^{i_3}\ol{\Omega^2}\right|^2\left(\frac{-U}{{V^\prime}^{1-\k}}\right)^\lambda\\
            &+(a-\lambda(1-\k)/4)\int_{V_0(U)}^V\int_{\SS}(V^\prime)^{a+1-2S(\nabla^I \Omega^2)}\left|\nabla^I\ol{\Omega^2}\right|^2\left(\frac{-U}{{V^\prime}^{1-\k}}\right)^\lambda.
        \end{split}
    \end{equation} 
    Inductively, for $I=0$ the desired result is direct. Suppose we have proved
    \begin{equation}
        \sum_{i\leq I-1}\int_{\SS}V^{a+2-2S(\nabla^i\Omega^2)}\left(\frac{-U}{V^{1-\k}}\right)^\lambda \left|\nabla^i\ol{\Omega^2}\right|^2
            \leq C(B,\lambda)\epsilon^{2-2\delta}V^a,
    \end{equation}
    then for $F_I(V):=\int_{V_0(U)}^V\int_{\SS}(V^\prime)^{a+1-2S(\nabla^I \Omega^2)}\left|\nabla^I\ol{\Omega^2}\right|^2\left(\frac{-U}{{V^\prime}^{1-\k}}\right)^\lambda$ we have 
    \begin{equation}
        V\partial_V F\leq C(B,\lambda)\epsilon^{2-2\delta}V^a+(a-\lambda(1-\k)/4) F,
    \end{equation}
    or equivalently, 
    \begin{equation}
        \partial_V\left(V^{\lambda(1-\k)/4-a}F\right) \leq C(B,\lambda)\epsilon^{2-2\delta} V^{\lambda(1-\k)/4-1}.
    \end{equation}
    We obtain $$F\leq C(B,\lambda)\epsilon^{2-2\delta} V^{a}$$ since $F(V_0(U))=0$, and 
    \begin{equation}
        \int_{\SS}V^{a+2-2S(\nabla^I\Omega^2)}\left(\frac{-U}{V^{1-\k}}\right)^\lambda \left|\nabla^I{\Omega^2}\right|^2
            \leq C(B,\lambda)\epsilon^{2-2\delta}V^a.
    \end{equation}
    $\Omega^{-2}$ can be estimated in the same way as $\Omega^2$. For $\eta$, the following inequality holds 
    \begin{equation}
        \begin{split}
            &\int_{\SS}V^{a+2-2S(\nabla^I\eta)}\left(\frac{-U}{V^{1-\k}}\right)^\lambda \left|\nabla^I\eta\right|^2\\
            \lesssim & \epsilon^{2-2\delta}V^a+(a-\lambda(1-\k)/2)\int_{V_0(U)}^V\int_{\SS}(V^\prime)^{a+1-2S}\left| \nabla^I\eta\right|^2\left(\frac{-U}{(V^\prime)^{1-\k}}\right)^\lambda\\
            &+\frac{4}{\lambda(1-\k)}\sum_{\substack{i_1+i_2+i_3= I\\ i_3\leq I-1}}\int_{V_0(U)}^V\int_{\SS}(V^\prime)^{a+1-2S}\left| \nabla^{i_1}\varphi_1^{i_2+1}\nabla^{i_3}\eta\right|^2\left(\frac{-U}{(V^\prime)^{1-\k}}\right)^\lambda\\
            &+\frac{\lambda(1-\k)}{4}\int_{V_0(U)}^V\int_{\SS}(V^\prime)^{a+1-2S}\left| \nabla^I\eta\right|^2\left(\frac{-U}{(V^\prime)^{1-\k}}\right)^\lambda\\
            &+\frac{4}{\lambda(1-\k)}\sum_{i_1+i_2=I}\int_{V_0(U)}^V\int_{\SS}(V^\prime)^{a+1-2S}\left|\nabla^{i_1}\varphi_1^{i_2+2}\right|^2\left(\frac{-U}{(V^\prime)^{1-\k}}\right)^\lambda\\
            \leq & \epsilon^{2-2\delta}V^a+\frac{1}{2(\lambda(1-\k)-a)}\sum_{\substack{i_1+i_2+i_3= I\\ i_3\leq I-1}}\int_{V_0(U)}^V\int_{\SS}(V^\prime)^{a+1-2S}\left| \nabla^{i_1}\varphi_1^{i_2+1}\nabla^{i_3}\eta\right|^2\left(\frac{-U}{(V^\prime)^{1-\k}}\right)^\lambda\\
            &+(a-\lambda(1-\k)/4)\int_{V_0(U)}^V\int_{\SS}(V^\prime)^{a+1-2S}\left| \nabla^I\eta\right|^2\left(\frac{-U}{(V^\prime)^{1-\k}}\right)^\lambda.
        \end{split}
    \end{equation}
    By induction, we finish the estimate for $\eta$:
    \begin{equation}
        \int_{\SS} V^{2-2S(\nabla^I\eta)}|\nabla^I\eta|^2+V^{2-2S(\nabla^I\ol{\Omega^2})}|\nabla^I\ol{\Omega^2}|^2d\V\leq C(B)\epsilon^{2-2\delta}\left(\frac{V^{1-\k}}{-U}\right)^{\lambda/2}.
    \end{equation}
We have estimate for $b$ that 
    \begin{equation}
        \begin{split}
            \int_{\SS} V^{a+2I-2\k}|\nabla^I b| \lesssim & \epsilon^{2-2\delta}V^a + \int_{V_0(U)}^V\int_{\SS}(V^\prime)^{a+1-2S}\left|\nabla^I\left(\Omega^2(\etab-\eta)\right)\right|^2\\
            &+\sum_{\substack{i_1+i_2+i_3= I\\ i_3\leq I-1}}\int_{V_0(U)}^V\int_{\SS}(V^\prime)^{a+1-2S}\left|\nabla^{i_1}\varphi_1^{i_2+1}\nabla^{i_3}b\right|^2\\
            \lesssim & \epsilon^{2-2\delta} V^a+\sum_{\substack{i_1+i_2+i_3= I\\ i_3\leq I-1}}\int_{V_0(U)}^V\int_{\SS}(V^\prime)^{a+1-2S}\left|\nabla^{i_1}\varphi_1^{i_2+1}\nabla^{i_3}b\right|^2\\
            &+C(B)\int_{V_0(U)}^V \left(\frac{(V^\prime)^{1-\k}}{-U}\right)^{\lambda/2+\lambda/2}(V^\prime)^{a-1}\\
            \lesssim &C(B) \left(\frac{V^{1-\k}}{-U}\right)^{\lambda}V^a.
        \end{split}
    \end{equation}
\end{proof}

\subsection{Energy estimates} In this section, we establish the top-order energy estimates.
\begin{proposition}\label{R III Phi_1 R Phi_2 Rb Est}
    Let $(\Phi_1,\Phi_2)$ be $(\nabla \Omega D_4\phi,\nabla^2\phi)$, $(\nabla^2\phi,\nabla\Omega^{-1}D_3\phi)$, \\
    $(\Omega^2\alpha,\Omega\beta^r)$,
    $(\Omega\beta^r,(K,\sigma^r))$, $((K,\sigma^r),\Omega^{-1}\betab^r)$, $(\nabla(\Omega^{-1}\tr\chib),0)$, $(\nabla\Omega\omega+{}^*\nabla\Omega\omega^\dagger-\frac{1}{2}\Omega\beta^r,0)$, $(-\dv\etab+K,0)$,
    $(0,-\dv\eta+K),(0,\nabla(\Omega\tr\chi))$. Given any $0<\lambda<\frac{\tau}{2}$, for $\epsilon_1$ and $\epsilon=\epsilon(\epsilon_1)$ sufficiently small we have 
    \begin{equation}
        \sum_{I\leq 5}\lnm \nabla^I\ol{\Phi_1}\left(\frac{V^{1-\k}}{-U}\right)^\lambda\rnm^2_{\LT^2({\R_{U,V}})}+\lnm \nabla^I\ol{\Phi_2}\left(\frac{V^{1-\k}}{-U}\right)^\lambda\rnm^2_{\LT^2({\Rb_{U,V}})}\leq \epsilon_1^{1/2}\epsilon^{2-2\delta}V^a.
    \end{equation}
    Here we recall that the signature of $\psi\cdot\left(\frac{V^{1-\k}}{-U}\right)^\lambda$ is $S(\psi)$ for any $\lambda\in\mathbb{R}$.
\end{proposition}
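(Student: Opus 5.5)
The estimate will be proved by a bootstrap directly on the weighted bulk norms. Set $Z:=V^{1-\k}/(-U)$ and
\[
\mathcal{E}_\lambda(U,V):=\sum_{I\leq 5}\lnm \nabla^I\ol{\Phi_1}Z^\lambda\rnm^2_{\LT^2(\R_{U,V})}+\lnm \nabla^I\ol{\Phi_2}Z^\lambda\rnm^2_{\LT^2(\Rb_{U,V})},
\]
and assume $\mathcal{E}_\lambda\leq 2\epsilon_1^{1/2}\epsilon^{2-2\delta}V^a$. Multiplying by $Z^\lambda$ does not change signatures, so $\nabla^I\ol{\Phi_i}Z^\lambda$ is measured in the $\LT$ system with the effective weight $V^a Z^{2\lambda-\tau}$. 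Since $2\lambda<\tau$, this weight still carries a negative power of $Z$.

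The first step applies \eqref{R III energy est R norm} to $\psi=\nabla^I\ol{\Phi_1}Z^\lambda$ and \eqref{R III energy est Rb norm} to $\psi=\nabla^I\ol{\Phi_2}Z^\lambda$. This bounds $\mathcal{E}_\lambda$ by three kinds of terms.
\begin{itemize}
\item The boundary terms are $C(\epsilon_1)\lnm\cdot\rnm^2_{\LT^2(\Xi_V)}$. These are handled by the $\Xi_V$ lemma, with the input \eqref{R III initial value 2}--\eqref{R III initial value 3} used at exponent $a/2$, which is allowed because Region II holds for any positive weight power.
\item The zeroth-order $\D$-norms are $\lnm\psi\rnm^2_{\LT^2(\D)}$ and $\lnm Z^{\tau/2}\psi\rnm^2_{\LT^2(\D)}$.
\item The differentiated $\D$-norms are $\lnm\Omega^{-1}\nabla_3\psi\rnm^2_{\LT^2(\D)}$ and $\lnm\nabla_4(\Omega\psi)\rnm^2_{\LT^2(\D)}$.
\end{itemize}
The boundary terms need care because $C(\epsilon_1)$ is large. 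I would use that the Region II output at $z=\epsilon_1^{-1}$ is of size $\epsilon^{2-3\delta/2}V^{a}$ with $\epsilon=\epsilon(\epsilon_1)$. Choosing $\epsilon$ after $\epsilon_1$ then makes $C(\epsilon_1)\epsilon^{2-3\delta/2}\ll\epsilon_1^{1/2}\epsilon^{2-2\delta}$.

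The second step treats the zeroth-order $\D$-terms. Lemma \ref{R III Key Lemma} and Corollary \ref{R III Key Lemma Cor} convert any $\R$- or $\Hb$-type bound of size $C_1V^a$ into a $\D$ bound with the gain $\epsilon_1^{(1-\k)(1-\tau)}$. For $\lnm\psi\rnm^2_{\D}$ I use the bootstrap \eqref{R_III_BOOTSTRAP_ASSUMPTIONS_1}. For $\lnm Z^{\tau/2}\psi\rnm^2_{\D}$ the extra factor $Z^{\tau/2+\lambda}$ must be absorbed. I would do this by inserting $Z^{\tau/2}$ into the hypothesis of Lemma \ref{R III Key Lemma}, so that the total power $2\lambda+\tau/2$ is compensated by the $(-U/V^{1-\k})^{\tau}$ already present in $W_{III}$. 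The condition $\lambda<\tau/2$ guarantees that a negative power of $Z$ survives. The resulting gain, at least $\epsilon_1^{(1-\k)(1-\tau)}\ll\epsilon_1^{1/2}$, closes these terms.

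The third step, which I expect to be the main obstacle, treats the differentiated $\D$-terms. I replace $\Omega^{-1}\nabla_3(\nabla^I\ol{\Phi_1})$ and $\Omega\nabla_4(\nabla^I\ol{\Phi_2})$ using the commuted Bianchi-type equations for each listed pair. The commutators with $Z^\lambda$ contribute $\lambda V^{-1}$- and $\Omega^{-1}\nabla_3\log Z$-type multiples of $\psi$; these are zeroth-order and were handled in the second step.
\begin{itemize}
\item The principal terms $\mathcal{D}\nabla^I\ol{\Phi_2}$ and ${}^*\mathcal{D}\nabla^I\ol{\Phi_1}$ are one derivative higher. They are controlled in $\D$ through Corollary \ref{R III Key Lemma Cor}, with the fluxes $\lnm\nabla^{I+1}\ol{\Phi_2}\rnm_{\LT^2(\Hb)}$ and $\lnm\nabla^{I+1}\ol{\Phi_1}\rnm_{\LT^2(H)}$ coming from the bootstrap \eqref{R_III_BOOTSTRAP_ASSUMPTIONS_1}. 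In the $\Rb$ case, $\Omega\nabla_4$-fluxes along $H$ would first be converted to $\R$ and then to $\D$.
\item The source terms are $\varphi\Psi$, $\varphi\nabla\varphi$ and $\varphi^3$, and include the scalar-field loss-of-derivative term $\varphi\nabla^6\varphi$. Products are split by Lemma \ref{R_III_Lemma_Estimate_without_eta}; the $\LS_1^\infty$ bounds on $\ol{\Omega\chi}$ and $\ol{\Omega\omega}$ and Propositions \ref{R III L2S est non top order}--\ref{R III psi_2 Omega b W32 est} supply uniform smallness of the undifferentiated factors.
\end{itemize}
The delicate contributions are those containing $\eta$, $b$ or $\ol{\Omega^2}$, which grow like $Z^{\lambda'}$ by Proposition \ref{R III psi_2 Omega b W32 est}, together with the top-order $\eta\nabla^5(\Omega\chi)$ term. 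For these I take $\lambda'$ small relative to $\tau/2-\lambda$, so that the total power of $Z$ stays below $\tau$. The Key Lemma gain then still holds, possibly with a factor $\epsilon$ or $\epsilon_1^{c}$.

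Summing the three steps gives $\mathcal{E}_\lambda\leq (C\epsilon_1^{c}+C(\epsilon_1)\epsilon^{\delta/2})\epsilon^{2-2\delta}V^a$ for some $c>1/2$, which improves the bootstrap to $\epsilon_1^{1/2}\epsilon^{2-2\delta}V^a$ once $\epsilon_1$ and then $\epsilon(\epsilon_1)$ are chosen small.
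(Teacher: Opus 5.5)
\textbf{Main gap: derivative loss at top order.} Your Step 3 fails at $I=5$. There the equations give $\Omega^{-1}\nabla_3\nabla^5\ol{\Phi_1}=\mathcal{D}\nabla^5\ol{\Phi_2}+\dots$ and $\Omega\nabla_4\nabla^5\ol{\Phi_2}=-{}^*\mathcal{D}\nabla^5\ol{\Phi_1}+\dots$. These are six angular derivatives of a curvature component. For the scalar pairs they are $\nabla^8\phi$, $\nabla^7(\Omega D_4\phi)$ and $\nabla^7(\Omega^{-1}D_3\phi)$. The bootstrap \eqref{R_III_BOOTSTRAP_ASSUMPTIONS_1} controls curvature only up to $\nabla^5$, and first derivatives of $\phi$ only up to $\nabla^6$. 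So the fluxes $\lnm\nabla^{I+1}\ol{\Phi_2}\rnm_{\LT^2(\Hb)}$ and $\lnm\nabla^{I+1}\ol{\Phi_1}\rnm_{\LT^2(H)}$ that you invoke do not exist at $I=5$, and nothing else in the argument controls them.

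This loss is built into your strategy of applying \eqref{R III energy est R norm} to $\Phi_1$ and \eqref{R III energy est Rb norm} to $\Phi_2$ separately. Those inequalities put the full $\Omega^{-1}\nabla_3\psi$, respectively $\nabla_4(\Omega\psi)$, on the right-hand side. The missing idea is the paper's bulk-level energy identity:
\begin{enumerate}
\item Write the $\R$-norm of $\nabla^5\Phi_1$ by integrating $(\Omega e_3-b)$ from $U_0(V')$.
\item Write the $\Rb$-norm of $\nabla^5\Phi_2$ by integrating $\Omega e_4$ from $V_0(U'')$.
\item Both become integrals over the same triple domain, with the same weight including $\big(\frac{-U''}{(V')^{1-\k}}\big)^{\tau-2\lambda}$. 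Add them and integrate by parts on each sphere.
\end{enumerate}
Then $2\langle\nabla^5\Phi_1,\mathcal{D}\nabla^5\Phi_2\rangle$ cancels against $-2\langle{}^*\mathcal{D}\nabla^5\Phi_1,\nabla^5\Phi_2\rangle$, up to terms containing $\nabla\Omega=\frac12\Omega(\eta+\etab)$. Only after this cancellation do the remaining $\D$-norms involve just quantities with known $\LT^2(\R)$ or $\LT^2(\Hb)$ bounds ($\nabla^5\Phi$, $\nabla^6\varphi$, products). Lemma \ref{R III Key Lemma} and Corollary \ref{R III Key Lemma Cor} then close the estimate.

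\textbf{Two further points.} First, the commutator with $Z^\lambda$ is not harmless. We have $\Omega^{-1}e_3(Z^\lambda)=\lambda Z^\lambda\,\Omega^{-2}(-U)^{-1}\simeq\lambda Z\,V^{-1}Z^\lambda$. In scale-invariant terms this is $\lambda Z\psi$, where $Z\geq\epsilon_1^{-(1-\k)}$ is unbounded as $U\to 0$. The norm $\lnm Z\psi\rnm_{\LT^2(\D_{U,V})}$ is not among the terms handled in your Step 2. The remedy is to keep $Z^{2\lambda}$ inside the weight. Then the $U$-derivative of $(-U)^{\tau-2\lambda}$ produces a term of favourable sign. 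This monotonicity is what the hypothesis $\lambda<\tau/2$ is used for, not the survival of a negative power of $Z$ in the Key Lemma.

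Second, the $\Rb$-part needs no energy argument. The paper reads it off directly from the $\Hb$-flux bootstrap:
\[
\lnm \nabla^I\ol{\Phi_2}Z^\lambda\rnm^2_{\LT^2(\Rb_{U,V})}\leq\int_{U_0(V)}^U V^{\k-1}\Big(\frac{V^{1-\k}}{-U'}\Big)^{2\lambda}\lnm\nabla^I\ol{\Phi_2}\rnm^2_{\LT^2(\Hb_V^{U'})}\,dU'\lesssim\epsilon_1^{(1-\k)(1-2\lambda)}B\epsilon^{2-2\delta}V^a .
\]
This already gives the $\epsilon_1^{1/2}$ gain. The combined identity above is really needed only for the $\R$-norm of $\Phi_1$.
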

\begin{proof}
    It suffices to estimate the top order case. 
    Firstly, notice that 
    \begin{equation*}
       \begin{split}
         \lnm \nabla^I\Phi_2\left(\frac{V^{1-\k}}{-U}\right)^\lambda\rnm^2_{\LT^2({\Rb_{U,V}})}\leq & \int_{U_0(V)}^U\frac{1}{V^{1-\k}}\left(\frac{V^{1-\k}}{-U^{\prime}}\right)^{2\lambda}\lnm \nabla^I\Phi_2\rnm^2_{\LT^2(\Hb_V^{U^\prime})}\\
        \lesssim& \epsilon_1^{(1-\k)(1-2\lambda)} B\epsilon^{2-2\delta} V^a\leq \epsilon_1^{1/2}\epsilon^{2-2\delta} V^a. 
       \end{split}
    \end{equation*}
    We consider equations 
    \begin{equation}
        \begin{split}
            \Omega^{-1}\nabla_3\Phi_1- \mathcal{D}\Phi_2=&\varphi\Phi+\varphi^3+\varphi\nabla\varphi,\\
            \Omega \nabla_4\Phi_2+{}^*\mathcal{D}\Phi_1=&\varphi\Phi+\varphi^3+\varphi\nabla\varphi.
        \end{split}
    \end{equation}
    Commutation formula gives 
    \begin{equation}
        \begin{split}
            \Omega^{-1}\nabla_3\nabla^5\Phi_1-\mathcal{D}\nabla^5\Phi_2=&\sum_{i_1+i_2+i_3=5}\nabla^{i_1}\varphi^{i_2+1}\nabla^{i_3}\Phi+\sum_{i_1+i_2=6}\nabla^{i_1}\varphi^{i_2+2},\\
            \Omega\nabla_4\nabla^5\Phi_2+{}^*\mathcal{D}\nabla^5\Phi_1=&\sum_{i_1+i_2+i_3=5}\nabla^{i_1}\varphi^{i_2+1}\nabla^{i_3}\Phi+\sum_{i_1+i_2=6}\nabla^{i_1}\varphi^{i_2+2}.
        \end{split}
    \end{equation}
The next computation is written directly at the level of the integrated $\mathcal{R}$ and $\Rb$ norms. This avoids having to reapply the energy identity separately on each cone and makes the cancellation between the Hodge-dual top derivatives visible.
    Regarding the combined energy summation:
    \begin{equation}
        \begin{aligned}
            &\lnm\nabla^5{\Phi_1}\left(\frac{V^{1-\k}}{-U}\right)^\lambda\rnm^2_{\LT^2({\R_{U,V}})}+\lnm \nabla^5{\Phi_2}\left(\frac{V^{1-\k}}{-U}\right)^\lambda\rnm^2_{\LT^2({\Rb_{U,V}})}\\
            =& \int_{U_0(V)}^U\int_{V_0(U^\prime)}^V \frac{1}{(V^\prime)^{2-\k}}(V^\prime)^{a+2-2S(\nabla^5\Phi_1)}\left(\frac{-U^\prime}{(V^\prime)^{1-\k}}\right)^{\tau-2\lambda}\left|\nabla^5\Phi_1\right|^2\\
            &+\int_{U_0(V)}^{U}\int_{U_0(V)}^{U^\prime}\frac{1}{V^{2-\k}}V^{a+2-2S(\nabla^5\Phi_2)}\Omega^2\left(\frac{-U^{\prime\prime}}{V^{1-\k}}\right)^{\tau-2\lambda}\left|\nabla^5\Phi_2\right|^2,
        \end{aligned}
    \end{equation}
    we first systematically estimate the initial integral term as follows:
    \begin{equation*}
        \begin{aligned}
            &\int_{U_0(V)}^U\int_{V_0(U^\prime)}^V \frac{1}{(V^\prime)^{2-\k}}(V^\prime)^{a+2-2S(\nabla^5\Phi_1)}\left(\frac{-U^\prime}{(V^\prime)^{1-\k}}\right)^{\tau-2\lambda}\left|\nabla^5\Phi_1\right|^2\\
            =&\int_{U_0(V)}^U\int_{V_0(U^\prime)}^V\int_{U_0(V^\prime)}^{U^\prime}\frac{1}{(V^\prime)^{2-\k}}(V^\prime)^{a+2-2S(\nabla^5\Phi_1)}(\Omega e_3-b)\left[\left(\frac{-U^{\prime\prime}}{(V^\prime)^{1-\k}}\right)^{\tau-2\lambda}\left|\nabla^5\Phi_1\right|^2\right]\\
            &+\int_{U_0(V)}^U\int_{V_0(U^\prime)}^V\frac{1}{(V^\prime)^{2-\k}}(V^\prime)^{a+2-2S(\nabla^5\Phi_1)}\left(\frac{-U_0(V^\prime)}{(V^\prime)^{1-\k}}\right)^{\tau-2\lambda}\left|\nabla^5\Phi_1\right|^2\\
            \leq & \int_{U_0(V)}^U\int_{V_0(U^\prime)}^V\int_{U_0(V^\prime)}^{U^\prime}\frac{1}{(V^\prime)^{3-2\k}}(V^\prime)^{a+2-2S(\nabla^5\Phi_1)}\lnm \dv b\rnm_{\LT^\infty(S_{U^{\prime\prime},V^\prime})}\left(\frac{-U^{\prime\prime}}{(V^\prime)^{1-\k}}\right)^{\tau-2\lambda}\left|\nabla^5\Phi_1\right|^2\\
            &+\int_{U_0(V)}^U\int_{V_0(U^\prime)}^V\int_{U_0(V^\prime)}^{U^\prime}(V^\prime)^{a+\k-2S(\nabla^5\Phi_1)}2\langle \nabla^5\Phi_1,\mathcal{D}\nabla^5\Phi_2\rangle+\lnm \nabla^5\Phi_1\rnm^2_{\LT^2(\Xi^V)}\\
            &+\lnm \sum_{i_1+i_2+i_3=5}\nabla^{i_1}\varphi^{i_2+1}\nabla^{i_3}\Phi+\sum_{i_1+i_2=6}\nabla^{i_1}\varphi^{i_2+2}\rnm^2_{\LT^2(\mathcal{\D}_{U,V})}\\
            \lesssim & \lnm \nabla^5\Phi_1 \left(\frac{V^{1-\k}}{-U}\right)^{\lambda+\tau/4}\rnm^2_{\LT^2(\D_{U,V})}+\lnm \sum_{i_1+i_2+i_3=5}\nabla^{i_1}\varphi^{i_2+1}\nabla^{i_3}\Phi+\sum_{i_1+i_2=6}\nabla^{i_1}\varphi^{i_2+2}\rnm^2_{\LT^2(\mathcal{\D}_{U,V})}\\
            &+\epsilon^{2-2\delta}V^a +2\int_{U_0(V)}^U\int_{V_0(U^\prime)}^V\int_{U_0(V^\prime)}^{U^\prime}(V^\prime)^{a+\k-2S(\nabla^5\Phi_1)}\langle \nabla^5\Phi_1,\mathcal{D}\nabla^5\Phi_2\rangle.
        \end{aligned}
    \end{equation*}
    Following a similar line of reasoning, for $\lnm \nabla^5\Phi_2\rnm^2_{\LT^2(\Rb_{U,V})}$ we establish that
    \begin{equation}
        \begin{aligned}
            &\int_{U_0(V)}^{U}\int_{U_0(V)}^{U^\prime}\frac{1}{V^{2-\k}}V^{a+2-2S(\nabla^5\Phi_2)}\Omega^2\left(\frac{-U^{\prime\prime}}{V^{1-\k}}\right)^{\tau-2\lambda}\left|\nabla^5\Phi_2\right|^2\\
            =& \int_{U_0(V)}^{U}\int_{U_0(V)}^{U^\prime}\int_{V_0(U^{\prime\prime})}^V (\Omega e_4)\left((V^\prime)^{a+\k-2S(\nabla^5\Phi_2)}\Omega^2\left(\frac{-U^{\prime\prime}}{(V^\prime)^{1-\k}}\right)^{\tau-2\lambda}\left|\nabla^5\Phi_2\right|^2\right)\\
            &+\int_{U_0(V)}^{U}\int_{U_0(V)}^{U^\prime}V_0(U^{\prime\prime})^{a+\k-2S(\nabla^5\Phi_2)}\Omega^2\left(\frac{-U^{\prime\prime}}{V_0(U^{\prime\prime})^{1-\k}}\right)^{\tau-2\lambda}\left|\nabla^5\Phi_2\right|^2\\
            \lesssim & \lnm \nabla^5\Phi_2\rnm^2_{\LT^2(\Xi^V)}+\lnm \nabla^5\Phi_2\rnm^2_{\LT^2(\D_{U,V})}\\
            &-2\int_{U_0(V)}^U\int_{V_0(U^\prime)}^V\int_{U_0(V^\prime)}^{U^\prime}(V^\prime)^{a+\k-2S(\nabla^5\Phi_1)}\langle {}^*\mathcal{D}\nabla^5\Phi_1,\nabla^5\Phi_2\rangle\\
            &+\lnm \sum_{i_1+i_2+i_3=5}\nabla^{i_1}\varphi^{i_2+1}\nabla^{i_3}\Phi+\sum_{i_1+i_2=6}\nabla^{i_1}\varphi^{i_2+2}\rnm^2_{\LT^2(\mathcal{\D}_{U,V})}.
        \end{aligned}
    \end{equation}
    By summing these two resulting inequalities, the critical top-order ($6$th-order) derivative terms effectively cancel out.  
    Furthermore, the auxiliary $\LT^2(\D)$ norm can be straightforwardly estimated utilizing Lemma \ref{R III Key Lemma} and Corollary \ref{R III Key Lemma Cor}, given that every involved geometric quantity already admits either a robust $\LT^2(\R)$ or $\LT^2(\Hb)$ estimate. 

\end{proof}

We have the following corollaries from the estimates above.
\begin{corollary}
    Let $0<\lambda< \frac{\tau}{2}$. We find that 
    \begin{equation}\label{R_III_Enhanced_Estimate_e_3_eq_Connections}
        \begin{aligned}
            \sum_{i\leq 6}\lnm \nabla^i\left(\ol{\Omega^{-1}\tr\chib},\ol{\Omega\omega},\etab,\nabla\phi,\ol{\Omega e_4\phi}\right)\left(\frac{V^{1-\k}}{-U}\right)^\lambda\rnm^2_{\LT^2({\R_{U,V}})}\leq \epsilon^{2-2\delta}V^a.
        \end{aligned}
    \end{equation}
\end{corollary}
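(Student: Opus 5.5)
The plan is to read the corollary off Proposition \ref{R III Phi_1 R Phi_2 Rb Est}. The top-order part is controlled by the $\nabla_3$-type members $\Phi_1$ of the Bianchi-type pairs listed there. Explicitly, $\nabla(\Omega^{-1}\tr\chib)$, $\nabla\Omega\omega+{}^*\nabla\Omega\omega^\dagger-\frac12\Omega\beta^r$, $-\dv\etab+K$, $\nabla(\Omega D_4\phi)$ and $\nabla^2\phi$ all occur as first components. The proposition therefore gives, for $0<\lambda<\tau/2$,
\[
\sum_{I\leq 5}\lnm \nabla^I\ol{\Phi_1}\left(\frac{V^{1-\k}}{-U}\right)^\lambda\rnm^2_{\LT^2(\R_{U,V})}\leq \epsilon_1^{1/2}\epsilon^{2-2\delta}V^a .
\]
For $\Omega^{-1}\tr\chib$, $\nabla\phi$ and $\Omega e_4\phi$ this already bounds $\nabla^i$ for $1\le i\le 6$. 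The case $i=0$ and the $i\le 5$ terms are not top order. For them I would use the spherical bounds of Proposition \ref{R III L2S est of varphi_1 Phi_1} and Proposition \ref{R III L2S est non top order}, which give $\lnm\nabla^i\ol{\varphi_1}\rnm^2_{\LT^2(S_{U,V'})}\lesssim \epsilon^{2-2\delta}W_{III}$ for $i\le 4$. I would then integrate in the definition of the $\R$-norm. The extra factor $\left(\frac{V^{1-\k}}{-U}\right)^{2\lambda}$, combined with the $U'$-integral against $(V')^{\k-2}$, produces $\int (V')^{-(1-\k)(1-2\lambda)}(-U')^{-2\lambda}\,dU'\lesssim \epsilon_1^{(1-\k)(1-2\lambda)}$. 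The $V'$-integral of $(V')^{a-1}$ then gives $V^a/a$. After choosing $\epsilon_1$ small depending on $a$, the total is bounded by $\epsilon^{2-2\delta}V^a$. The fifth derivatives are handled in the same way using the $\LT^2(S)$ bound of Proposition \ref{R III L2S est non top order}.

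For $\etab$, the sixth derivative has to be recovered elliptically from $\mub=-\dv\etab+K$ and $\cl\etab=-\sigma^r$ via \eqref{Ellip Est}. On each sphere this gives $\nabla^6\etab$ in terms of $\nabla^5\mub$, $\nabla^5\sigma^r$, $\nabla^5 K$ and lower-order terms. I would multiply by the weight and integrate over $\R_{U,V}$. The $\mub$ term is controlled by the proposition. The curvature terms $\nabla^5(K,\sigma^r)$ appear as $\Phi_1$ in the pair $((K,\sigma^r),\Omega^{-1}\betab^r)$, so they are also covered.

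$\Omega\omega$ is treated in the same way. The combination $\omega^r=\nabla\Omega\omega+{}^*\nabla\Omega\omega^\dagger-\frac12\Omega\beta^r$ is controlled by the proposition, and $\Omega\beta^r$ is a $\Phi_1$ in the pair $(\Omega\beta^r,(K,\sigma^r))$. The div--curl system for $\nabla\Omega\omega$ and ${}^*\nabla\Omega\omega^\dagger$ then yields $\nabla^6\Omega\omega$ from $\nabla^5\omega^r$ and $\nabla^5\Omega\beta^r$. The subtraction $\ol{\cdot}$ causes no trouble because the background is spherically symmetric, so $\nabla\psi^c=0$ and only the difference of connections enters. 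That difference is controlled by the metric bounds of Proposition \ref{R III psi_2 Omega b W32 est}.

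The main obstacle I expect is the elliptic step inside the weighted bulk norm. The elliptic constant depends on $K$ and the metric, and the weight $\left(\frac{V^{1-\k}}{-U}\right)^{2\lambda}$ grows near $U=0$. The product terms in the Hodge systems, for example $\etab\cdot\Omega\chi$ and $\eta$ appearing via $\ol{\nabla}$, must not destroy the gain. I would handle this as follows. Low-order factors are estimated in $\LS_1^\infty$. Terms involving $\eta$ or $b$ use the growth bound of Proposition \ref{R III psi_2 Omega b W32 est} with an auxiliary exponent $\lambda'$. I would choose $\lambda'$ so that $2\lambda+\lambda'<\tau$, which keeps the integral $\int (-U')^{-(2\lambda+\lambda')}dU'$ finite and again small in $\epsilon_1$.
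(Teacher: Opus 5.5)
Your proposal is correct and is essentially the argument the paper has in mind; the paper states the corollary without proof as a consequence of Proposition~\ref{R III Phi_1 R Phi_2 Rb Est}. As you do, the top order comes directly from the $\Phi_1$ members of the listed pairs; div--curl estimates convert $\mub$ and $\nabla\Omega\omega+{}^*\nabla\Omega\omega^\dagger-\frac12\Omega\beta^r$ into $\nabla^6\etab$ and $\nabla^6\Omega\omega$; and the lower-order terms come from the spherical bounds integrated against the weight, exactly as you compute. The obstacle you anticipate in your last paragraph does not actually arise: since $\etab^c=0$ and $(\Omega\omega)^c$ is constant on spheres, these Hodge systems involve only $\mub$, $K$, $\sigma^r$, $\Omega\beta^r$ and the $\omega$-combination on the actual spheres, with no $\eta$, $b$ or connection-difference terms.
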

\begin{corollary}\label{R III Phi_1 R Est Cor}
    By Propositions \ref{R III psi_1 W32 est} and \ref{R III psi_2 Omega b W32 est}, for any fixed integer $N\geq 0$, we can choose $\epsilon_1,\epsilon$ sufficiently small so that
    \begin{equation}
        \begin{split}
            &\sum_{i,j,k\leq 1}\sum_{l,m,n\leq N}\lnm \nabla^5\left(\Phi_1,\nabla(\Omega^{-1}\tr\chib),\nabla\Omega\omega,\nabla\etab\right)\cdot{\left|\nabla^i\varphi\right|^m} |\nabla^jb|^n\left|\nabla^k\Omega^2\right|^l\rnm^2_{\LT^2(\R_{U,V})}\\
            &\qquad\leq \epsilon_1^{1/2}\epsilon^{2-2\delta}V^a.
        \end{split}
    \end{equation}
\end{corollary}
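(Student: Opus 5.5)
The plan is to reduce the statement to the weighted bulk bounds of Proposition \ref{R III Phi_1 R Phi_2 Rb Est} and of \eqref{R_III_Enhanced_Estimate_e_3_eq_Connections}. To do this, each multiplier $|\nabla^i\varphi|^m|\nabla^jb|^n|\nabla^k\Omega^2|^l$ with $i,j,k\le 1$ is bounded pointwise by a fixed power of $\left(\frac{V^{1-\k}}{-U}\right)$, with exponent as small as we like. Since the signature is additive under products, a scale-invariant $\LS_1^\infty(S_{U,V})$ bound on each factor, combined with the product rule, lets us pull the multiplier out of the $\LT^2(\R_{U,V})$ norm as a weight.

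\textbf{Step 1: pointwise control of each factor.} First apply the Sobolev embedding \eqref{Sobolev_W22} to one extra angular derivative. Then $H^3$-type spherical bounds give $\LS_1^\infty$ control of $\nabla^i$ for $i\le1$. The factors split into two groups.
\begin{itemize}
\item For $\varphi\ne\eta$, write $\varphi=\varphi^c+\ol\varphi$. The background $\varphi^c$ is bounded in $\LS_1^\infty$ by Lemma \ref{R_III_Chr_sol_lemma}, and $\nabla\varphi^c=0$. The difference $\ol\varphi$ is controlled by Propositions \ref{R III L2S est of varphi_1 Phi_1} and \ref{R III psi_1 W32 est}: dividing the $\LT^2(S)$ bound $\epsilon^{2-2\delta}W_{III}$ by the weight $W_{III}$ gives $\|\nabla^i\ol\varphi\|_{\LS_1^\infty}\lesssim\epsilon^{1-\delta}$. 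Hence these factors are $O(1)$ with no growth.
\item The factors $\eta$, $b$ and $\ol{\Omega^2}$ (with $(\Omega^2)^c\sim V^\k$ bounded in signature) are handled by Proposition \ref{R III psi_2 Omega b W32 est}. There we choose its parameter $\lambda'>0$ freely, at the cost of shrinking $\epsilon,\epsilon_1$. This gives $\|\nabla^i(\eta,b,\ol{\Omega^2})\|_{\LS_1^\infty}\lesssim C(B,\lambda')\epsilon^{1-\delta}\left(\frac{V^{1-\k}}{-U}\right)^{\lambda'/2}$.
\end{itemize}

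\textbf{Step 2: absorb the growth into the weight.} Fix $\lambda\in(0,\tau/2)$ and choose $\lambda'$ so that $3N\lambda'/2\le\lambda$. By Step 1 the whole multiplier is then bounded in $\LS_1^\infty$ by $C(N)\left(\frac{V^{1-\k}}{-U}\right)^{\lambda}$. Since $V^{1-\k}/(-U)\ge\epsilon_1^{\k-1}\ge1$ in $\R_{III}$, the $O(1)$ factors from the first group only enlarge the constant. Consequently the left-hand side is at most
\[C(N)\Bigl\|\nabla^5\bigl(\Phi_1,\nabla(\Omega^{-1}\tr\chib),\nabla\Omega\omega,\nabla\etab\bigr)\left(\tfrac{V^{1-\k}}{-U}\right)^{\lambda}\Bigr\|^2_{\LT^2(\R_{U,V})}.\]

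\textbf{Step 3: apply the bulk estimates.} The quantities $\Phi_1$, the mass aspect $-\dv\etab+K$, $\nabla(\Omega^{-1}\tr\chib)$ and $\nabla\Omega\omega+{}^*\nabla\Omega\omega^\dagger-\frac12\Omega\beta^r$ all appear among the pairs in Proposition \ref{R III Phi_1 R Phi_2 Rb Est}. That proposition bounds the weighted $\R$-norm of each by $\epsilon_1^{1/2}\epsilon^{2-2\delta}V^a$. To pass from these combinations back to $\nabla^6\Omega\omega$ and $\nabla^6\etab$ themselves, I would use the elliptic estimate \eqref{Ellip Est}, whose lower-order terms are controlled by Proposition \ref{R III L2S est non top order}. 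Alternatively one can quote \eqref{R_III_Enhanced_Estimate_e_3_eq_Connections} directly, checking that the factor $\epsilon_1^{1/2}$ survives. The constant $C(N)$ is harmless: we apply Proposition \ref{R III Phi_1 R Phi_2 Rb Est} with $\epsilon_1$ taken small enough that $C(N)\epsilon_1^{1/2}$ at a slightly smaller exponent is still $\le\epsilon_1^{1/2}$. The order of choices is therefore: fix $N$ and $\lambda$, then $\lambda'$, then $\epsilon_1$, then $\epsilon$.

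\textbf{Main obstacle.} The delicate point is Step 1 for $\eta$, $b$ and $\Omega^2$. These are the only factors without a uniform bound, and their growth must come with an arbitrarily small exponent; this is exactly what Proposition \ref{R III psi_2 Omega b W32 est} provides. One must also ensure that only up to three angular derivatives are needed, since that proposition stops at $i\le 3$, and that $\Omega^{-1}\chibh$ and $\ol{\Omega^{-1}e_3\phi}$ are covered only through \eqref{R_III_BOOTSTRAP_ASSUMPTIONS_3}, which is improved in Proposition \ref{R III psi_1 W32 est}.
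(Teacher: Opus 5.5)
Your proposal is correct, and it is essentially the argument the paper intends. The paper gives no separate proof; the corollary is meant to follow from Proposition~\ref{R III Phi_1 R Phi_2 Rb Est} by bounding the multipliers pointwise via Sobolev embedding and Propositions~\ref{R III L2S est of varphi_1 Phi_1}, \ref{R III psi_1 W32 est}, \ref{R III psi_2 Omega b W32 est}, with the arbitrarily slow growth of $\eta$, $b$, $\ol{\Omega^2}$ absorbed into the weight $\bigl(\frac{V^{1-\k}}{-U}\bigr)^{\lambda}$, $\lambda<\tau/2$; your point that $C(N)$ is absorbed because that proposition's proof actually yields a power of $\epsilon_1$ strictly larger than $1/2$ correctly makes explicit constant bookkeeping that the paper leaves implicit.
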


We now proceed to establish the estimates for the $\LT^2(H)$ and $\LT^2(\Hb)$ norms corresponding specifically to the top-order terms.

\begin{proposition}
    We establish the following estimates for the relevant quantities:
    \begin{equation}
        \lnm \nabla^6(\Omega e_4\phi)\rnm^2_{\LT^2(H_U^V)}+\lnm \nabla^7\phi\rnm^2_{\LT^2(\Hb_V^U)}\lesssim \epsilon^{2-2\delta}V^a.
    \end{equation}
\end{proposition}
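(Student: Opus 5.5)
We treat $(\nabla^6(\Omega e_4\phi),\nabla^7\phi)$ as a Bianchi pair $(\Phi_1,\Phi_2)$ and apply the Region III energy estimate, Lemma \ref{R III energy est lemma H Hb}, to the differences $\ol{\nabla^6(\Omega e_4\phi)}$ and $\nabla^7\phi$. The commuted equations are the Region III analogues of those in Proposition \ref{R_II_top_order_est_scalar_field}, obtained from the wave equation \eqref{intrinsic_wave_equation} and \eqref{Comm_Formula_D3,nab}, \eqref{Comm_Formula_D4,nab}:
\[
\Omega^{-1}\nabla_3\nabla^6(\Omega D_4\phi)-\dv\nabla^7\phi=F,\qquad
\Omega\nabla_4\nabla^7\phi+\tfrac{7}{2}\Omega\tr\chi\,\nabla^7\phi-\nabla^7(\Omega D_4\phi)=G .
\]
Here $F,G$ are sums of terms $\nabla^{i_1}\varphi^{i_2+1}\nabla^{i_3}(\Omega D_4\phi,\varphi)$, terms $\nabla^{i_1}K^{i_2+1}\nabla^{i_3}\nabla\phi$, and the differences $\ol{\Omega^{-1}\tr\chib\,\Omega D_4\phi}$, $\ol{\Omega\tr\chi\,\Omega^{-1}D_3\phi}$ and $\etab\nabla\phi$. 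In the $\Phi_2$ equation the coefficient $\lambda=7$ is large, so the sign condition $1+\k-2S(\Phi_2)+a-\tau(1-\k)-2\lambda+2\Lambda\k<-2\delta_1$ with $S(\nabla^7\phi)=-6$ becomes $14+\k+a-\tau(1-\k)-14<0$, up to the $\omega$ contribution. I will verify this using $V\Omega\tr\chi\to2$, $V\Omega\omega\to-\k/4$ from Lemma \ref{R_III_Chr_sol_lemma}, and the choice $a\ll\tau(1-\k)$. If the margin fails, I will renormalize with an $\Omega^{t}$ weight, choosing $t$ to absorb the $\k$.

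The boundary terms $\lnm\Phi_1\rnm^2_{\LT^2(\Sigma_{U,V})}$ and $|V^\k\partial_VU_0|\lnm\Omega\Phi_2\rnm^2_{\LT^2(\Sigma_{U,V})}$ are controlled by \eqref{R III initial value 2}–\eqref{R III initial value 3}, which give $\epsilon^{2-3\delta/2}V^a$. The $\dv b$ term and the $\eta+\etab$ cross term are handled by the metric bootstrap $\lnm\Omega^{-2}\nabla b\rnm_{\LS_1^\infty}\le B''\epsilon^{1-\delta}(V^{1-\k}/(-U))^{\tau/8}$. The factor $(V^{1-\k}/(-U))^{\tau/8}$ is absorbed by Proposition \ref{R III Phi_1 R Phi_2 Rb Est} with $\lambda=\tau/16$, and by Corollary \ref{R III Phi_1 R Est Cor}. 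The cross term uses Cauchy–Schwarz together with the $-\delta_1\lnm\Omega\Phi_2\rnm^2_{\LT^2(\R)}$ good term, using the smallness of $\eta,\etab$ from Propositions \ref{R III L2S est non top order} and \ref{R III psi_2 Omega b W32 est}.

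For the sources, $\langle\Omega^2\Phi_1,F\rangle_{\R_{U,V}}$ is bounded by $\lnm\Phi_1\rnm_{\LT^2(\R)}\lnm\Omega^2F\rnm_{\LT^2(\R)}$. The top-order factors of $F$ are $\nabla^6\varphi_1$ and $\nabla^5K$, weighted by $(V^{1-\k}/(-U))^\lambda$ where needed. These are controlled with the gain $\epsilon_1^{1/2}$ by \eqref{R_III_Enhanced_Estimate_e_3_eq_Connections} and Proposition \ref{R III Phi_1 R Phi_2 Rb Est}. Lower-order factors are placed in $\LS^\infty$ using Propositions \ref{R III L2S est of varphi_1 Phi_1} and \ref{R III psi_1 W32 est}, and expanded via Lemma \ref{R_III_Lemma_Estimate_without_eta}.

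The main obstacle is $\lnm\Omega G\rnm^2_{\LT^2(\R)}$. It contains $\nabla^6\ol{\Omega^{-1}D_3\phi}$ and $\nabla^6\eta$, which lie in the $\varphi_2$-family and have only $\Hb$ flux control. They are multiplied by $\Omega\tr\chi$ or by $\nabla\phi$, and $\eta$ additionally by $\Omega^{-1}\tr\chib\sim V^{-1}$. I plan to bound these through $\lnm\cdot\rnm^2_{\LT^2(\R)}\le\int V^{-1}\lnm\cdot\rnm^2_{\LT^2(\Hb)}$, and then via Corollary \ref{R III Key Lemma Cor} or Lemma \ref{R III Key Lemma}. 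This yields a factor $\epsilon_1^{(1-\k)(1-\tau)}$, since $\Omega^2\sim V^\k$ converts the $\Rb$ weight appropriately. For the $\nabla^6\eta$ term, the unfavourable growth $(V^{1-\k}/(-U))^\tau$ is compensated by the extra factor $\epsilon$ in Proposition \ref{R III L2S est non top order}.

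Collecting these bounds gives $\lnm\nabla^6(\Omega e_4\phi)\rnm^2_{\LT^2(H)}+\lnm\nabla^7\phi\rnm^2_{\LT^2(\Hb)}\lesssim\epsilon^{2-2\delta}V^a+(\epsilon_1^{1/2}+\epsilon)B\epsilon^{2-2\delta}V^a$. The bound $\lesssim\epsilon^{2-2\delta}V^a$ follows for $\epsilon_1$ small and then $\epsilon=\epsilon(\epsilon_1)$ small.
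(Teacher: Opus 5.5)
Your overall plan (an energy estimate for the pair $(\nabla^6(\Omega e_4\phi),\nabla^7\phi)$, fed by the enhanced $\R$-bounds of Proposition \ref{R III Phi_1 R Phi_2 Rb Est}) matches the paper's. However, the step you identify as the main obstacle does not work as written.

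First, $\nabla^6\ol{\Omega^{-1}D_3\phi}$ and $\nabla^6\eta$ are not in $G$. The outgoing equation is just $\Omega\nabla_4\nabla\phi=\nabla(\Omega e_4\phi)-\Omega\chi\cdot\nabla\phi$, so $G$ contains only $\nabla^i(\Omega\chi)\nabla^j\nabla\phi$ and $K$-commutator terms. The $\varphi_2$-terms come from $-\tfrac12\Omega\tr\chi\,\Omega^{-1}D_3\phi+2\eta\cdot\nabla\phi$ in the $\nabla_3$-equation for $\Omega D_4\phi$. They therefore lie in $F$, and $\nabla^6\ol{\Omega^{-1}D_3\phi}$ carries the non-small coefficient $\Omega\tr\chi\sim 2/V$.

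Second, your mechanism for these terms fails wherever they sit. Sources enter Lemma \ref{R III energy est lemma H Hb} through $\R$-norms. Lemma \ref{R III Key Lemma} and Corollary \ref{R III Key Lemma Cor} give the factor $\epsilon_1^{(1-\k)(1-\tau)}$ only for the triple-integrated $\D$-norm. From the $\Hb$-flux bootstrap you only get $\|\Omega\nabla^6\varphi_2\|^2_{\LT^2(\R_{U,V})}\le\int (V')^{-1}\|\nabla^6\varphi_2\|^2_{\LT^2(\Hb_{V'}^U)}\,dV'\lesssim a^{-1}B\epsilon^{2-2\delta}V^a$, with no small factor. So a term $\delta_1^{-1}\|\Omega G\|^2_{\LT^2(\R)}$ containing them could never improve $B$.

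The smallness has to come from the unknown, not from the source. In $F$ these terms are paired with $\Phi_1$, and $\|\Phi_1\|^2_{\LT^2(\R)}\lesssim\epsilon_1^{1/2}\epsilon^{2-2\delta}V^a$ by Proposition \ref{R III Phi_1 R Phi_2 Rb Est}. Cauchy--Schwarz against the merely bounded $\R$-norm of $\nabla^6\varphi_2$ then closes once $\epsilon_1$ is small relative to $a$ and $B$.

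The paper does this symmetrically. It puts the growing weight $(V^{1-\k}/(-U))^{\tau/3}$ on both unknowns; each of $\nabla^6(\Omega e_4\phi)$ and $\nabla^7\phi$ has an enhanced $\R$-bound, the latter from the pair $(\nabla^2\phi,\nabla\Omega^{-1}D_3\phi)$. It puts the decaying weight $(-U/V^{1-\k})^{\tau/3}\le\epsilon_1^{(1-\k)\tau/3}$ on every source, so no sign condition is needed at all.

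Two smaller slips:
\begin{itemize}
\item With $S(\nabla^7\phi)=-6$, $\lambda=7$, $\Lambda=0$, the sign quantity is $1+\k+12+a-\tau(1-\k)-14=\k-1+a-\tau(1-\k)<0$, not $\k+a-\tau(1-\k)$. The condition holds with margin about $1-\k$, so the $\Omega^t$ renormalization is unnecessary.
\item Proposition \ref{R III L2S est non top order} controls $\nabla^5\eta$ on spheres, not the top-order $\nabla^6\eta$. The product $\nabla^6\eta\cdot\nabla\phi$ is small because $\|\nabla\phi\|_{\LS_1^\infty}\lesssim\epsilon^{1-\delta}$.
\end{itemize}
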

\begin{proof}
    This proof now uses Proposition \ref{R III Phi_1 R Phi_2 Rb Est} for the enhanced $\mathcal{R}$-control of the pair\\ $(\nabla\Omega D_4\phi,\nabla^2\phi)$. The remaining lower-order products are controlled by Proposition \ref{R III L2S est non top order} and the spherical improvements from the preceding subsection. We first write the equations:
    \begin{equation}
        \begin{aligned}
         &   \Omega^{-1}\nabla_3\nabla^6(\Omega e_4\phi)-\dv \nabla^7\phi=\sum_{i_1+i_2=6}\nabla^{i_1}\varphi^{i_2+2}+\sum_{\substack{i_1+\cdots +i_5=6\\ i_2+i_4\geq 1}} \nabla^{i_1}(\eta+\etab)^{i_2}\nabla^{i_3}K^{i_4}\nabla^{i_5+1}\phi ,\\
            &\Omega\nabla_4\nabla^7\phi-\nabla(\nabla^6(\Omega e_4\phi))= \sum_{i_1+i_2=6}\nabla^{i_1}\varphi^{i_2+2}+\sum_{\substack{i_1+i_2+i_3=4}} \nabla^{i_1}K^{i_2+1}\nabla^{i_3+1}(\Omega e_4\phi).
        \end{aligned}
    \end{equation}
    Since $\nabla^6(\Omega e_4\phi),\nabla^7\phi$ both admits enhanced $\LT(\R)$ estimates, the following inequalities holds 
    \begin{equation}
        \begin{aligned}
            &\lnm \nabla^6(\Omega e_4\phi)\rnm^2_{\LT^2(H_U^V)}+\lnm \nabla^7\phi\rnm^2_{\LT^2(\Hb_V^U)}\\
            \lesssim & \epsilon^{2-2\delta}V^a+\lnm \left(\frac{V^{1-\k}}{-U}\right)^{\tau/3}\Omega\nabla^6\left(\Omega e_4\phi,\nabla\phi\right)\rnm^2_{\LT^2(\R_{U,V})}\\
            &+\sum_{i\leq 6,j\leq 4}\lnm \left(\frac{-U}{V^{1-\k}}\right)^{\tau/3}\Omega\left(\nabla^i\varphi,\nabla^j K\right)\rnm^2_{\LT^2(\R_{U,V})}\\
            \lesssim & \epsilon^{2-2\delta} V^a.
        \end{aligned}
    \end{equation}
\end{proof}
\begin{proposition}
    For $\Omega\tr\chi$, we have the estimate:
    \begin{equation}\label{R_III_Estimate_Omegatrchi_S}
        \lnm \nabla^6(\Omega \tr\chi)\rnm^2_{\LT^2(S_{U,V})}\lesssim B \epsilon^{2-2\delta}V^a\left(\frac{V^{1-\k}}{-U}\right)^{C\tau},
    \end{equation}
    where the implicit constant $C$ is strictly bounded by a universal geometric constant.
\end{proposition}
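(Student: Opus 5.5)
I will estimate $\nabla^6(\Omega\tr\chi)$ by a direct outgoing transport argument along $H_U$, using the Raychaudhuri equation
\[
\Omega\nabla_4(\Omega\tr\chi)+\tfrac12(\Omega\tr\chi)^2=-|\Omega\chih|^2+4\Omega\omega\,\Omega\tr\chi-(\Omega e_4\phi)^2 .
\]
After taking the difference with the Christodoulou background and commuting with $\nabla^6$ via \eqref{Comm_Formula_D4,nab}, I expect the schematic form
\[
\Omega\nabla_4\nabla^6\ol{\Omega\tr\chi}+\tfrac{7}{2}\Omega\tr\chi\,\nabla^6\ol{\Omega\tr\chi}=\nabla^6\bigl(\Omega\omega\,\Omega\tr\chi\bigr)^{-}+\Omega\chih\cdot\nabla^6\Omega\chih+\Omega e_4\phi\,\nabla^6\ol{\Omega e_4\phi}+\text{l.o.t.},
\]
where the lower-order terms are products $\nabla^{i_1}\varphi_1^{i_2+1}\nabla^{i_3}\ol{\varphi_1}$ with $i_3\le 5$.

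The key feature is the damping coefficient $\tfrac72\Omega\tr\chi\approx 7/V$ from Lemma \ref{R_III_Chr_sol_lemma}. In the $\nabla_4$ transport estimate \eqref{R III transport estimate nabla 4} the bulk coefficient becomes
\[
2-2S(\nabla^6\Omega\tr\chi)+a-\tau(1-\k)-7+o(1)=1+a-\tau(1-\k)+o(1),
\]
which is positive. I therefore do not intend to close with a favorable sign. Instead I will aim for a Grönwall bound in $V$ that loses a bounded power $(V^{1-\k}/(-U))^{C\tau}$, consistent with the statement. Concretely, I set
\[
F(V)=\lnm\nabla^6\ol{\Omega\tr\chi}\rnm^2_{\LT^2(S_{U,V})}.
\]
The $4\Omega\omega$ and quadratic terms, together with their $\nabla^6$ hitting $\Omega\omega$, $\Omega\chih$ and $\Omega e_4\phi$, will be placed in the source $G$. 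I then integrate the resulting $V\partial_V F\le c\,F+V\cdot(\text{source density})$ from the initial value on $z=\epsilon_1^{-1}$, which is supplied by \eqref{R III initial value 1}.

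The sources are controlled as follows.
\begin{itemize}
\item $\nabla^6\ol{\Omega\omega}$ and $\nabla^6\ol{\Omega e_4\phi}$ are bounded in $\LT^2(H_U^V)$ by the bootstrap \eqref{R_III_BOOTSTRAP_ASSUMPTIONS_1} and the preceding proposition.
\item $\nabla^6\Omega\chih$ is reduced through Codazzi (elliptic estimate \eqref{Ellip Est}) to $\nabla^6\Omega\tr\chi$ and $\nabla^5\Omega\beta^r$. The former is absorbed with coefficient $|\Omega\chih|\lesssim\epsilon^{1-\delta}$; the latter is controlled by the $\Psi_1$ flux bootstrap.
\item The lower-order factors $\varphi_1$ are uniformly bounded in $\LS_1^\infty$ by Proposition \ref{R III L2S est of varphi_1 Phi_1} and Sobolev embedding.
\end{itemize}
The lower-order factors may involve $\eta$, $b$ or $\Omega^2$, which carry only the weaker bounds with growth $(V^{1-\k}/(-U))^{\lambda}$ from Proposition \ref{R III psi_2 Omega b W32 est} and Proposition \ref{R III L2S est non top order}. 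These are the source of the $C\tau$ loss. I will choose $\lambda\lesssim\tau$, so that the loss remains a universal multiple of $\tau$.

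The main obstacle is the positive bulk coefficient: a naive Grönwall would produce $V^{1+a}$-type growth rather than the stated weight. To handle this, I will exploit the background renormalization. I first compare with $(\Omega\tr\chi)^c$, using $V(\Omega\tr\chi)^c\to2$, and then multiply by the integrating factor $V^{-(1+a-\tau(1-\k))}$. The source flux in $\LT^2(H_U^V)$ is of size $\epsilon^{2-2\delta}V^a$, and since the weight $W_{III}$ contains $(-U/V^{1-\k})^\tau$, integrating in $V$ at fixed $U$ converts the weight discrepancy into the factor $(V^{1-\k}/(-U))^{C\tau}$.

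If this integrating-factor bookkeeping fails, the fallback is to run the same argument on the $\mathcal{R}$-norm level. There, Proposition \ref{R III Phi_1 R Phi_2 Rb Est} already gives enhanced control of the pair $(0,\nabla(\Omega\tr\chi))$ in $\Rb$, and I would recover the spherical bound by the fundamental theorem of calculus in $V$ together with \eqref{R III energy est Rb norm}.
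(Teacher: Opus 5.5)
Your overall route matches the paper's: commute the Raychaudhuri equation with $\nabla^6$, transport along $H_U$ with \eqref{R III transport estimate nabla 4}, treat $\Omega\chih$ by Codazzi, and bound $\Omega\omega$, $\Omega e_4\phi$ by the bootstrap. However, the step you call the main obstacle rests on a miscalculation, and the workaround you build on it does not yield the statement.

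\textbf{The sign of the bulk term.} The damping coefficient is $4\Omega\tr\chi$, not $\frac72\Omega\tr\chi$. At top order, $\frac12(\Omega\tr\chi)^2$ contributes $\Omega\tr\chi\,\nabla^6(\Omega\tr\chi)$, with coefficient $1$, and \eqref{Comm_Formula_D4,nab} adds $3\Omega\tr\chi$. Also, since $\nabla_4\log\Omega^2=-4\omega$, the equation reads $\Omega\nabla_4(\Omega\tr\chi)+\frac12(\Omega\tr\chi)^2=-|\Omega\chih|^2-4\Omega\omega\,\Omega\tr\chi-(\Omega e_4\phi)^2$. In \eqref{R III transport estimate nabla 4} this corresponds to $\lambda=8$ and $\Lambda=1$. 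With $S(\nabla^6\Omega\tr\chi)=-6$, so that $2-2S=14$, the bulk coefficient is
\[
14+a-\tau(1-\k)-16+2\k+o(1)=-2+2\k+a-\tau(1-\k)+o(1)<-1 .
\]
Even your own $\frac72$ would give $-2\lambda=-14$ and the negative coefficient $a-\tau(1-\k)$. The value $1+a-\tau(1-\k)$ is an arithmetic slip.

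The bulk term is therefore favorable, which is exactly what the paper uses. It produces $-\frac12\lnm\Omega\nabla^6(\Omega\tr\chi)\rnm^2_{\LT^2(H_U^V)}$, and the Codazzi contribution of $\Omega\chih\cdot\nabla^6\Omega\chih$ is absorbed into it. You could also sidestep the sign altogether. Since $\Omega\chi$ belongs to $\psi_1$ in \eqref{R_III_BOOTSTRAP_ASSUMPTIONS_1}, the bulk term is bounded by $B\epsilon^{2-2\delta}V^a$ whatever its sign, consistent with the factor $B$ in the statement.

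\textbf{Your workaround does not give a $C\tau$ loss.} Suppose the coefficient $c$ really were positive and of order one. Integrating $V\partial_V F\le cF+\dots$ from $V_0(U)=(-U)^{1/(1-\k)}/\epsilon_1$ gives $F(V)\lesssim\epsilon^{2-2\delta}V^a\,(V/V_0(U))^{c-a}$. That is a loss of $(V^{1-\k}/(-U))^{(c-a)/(1-\k)}$, with exponent close to $1/(1-\k)>1$. The $V$-dependence of $W_{III}$ is already inside $c$, so no residual ``weight discrepancy'' remains to be converted into $C\tau$. A loss of that size would also break the next corollary, where $\int_{U_0(V)}^U(-U')^{-C\tau-2\lambda}\,dU'$ must converge as $U\to0$.

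Your fallback does not close the gap either. \eqref{R III energy est Rb norm} only bounds the $\Rb$-norm, which integrates over $U$ at fixed $V$. A bound on $S_{U,V}$ cannot be extracted from it by integrating in $V$ without the same $\nabla_4$-transport you are trying to avoid.

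In the correct argument, the $C\tau$ loss has a different origin. It comes only from the non-top-order product terms. Their factors are controlled through the spherical bounds of Proposition \ref{R III L2S est non top order}, which lose a power of $V^{1-\k}/(-U)$ proportional to $\tau$ once the weight $W_{III}$ is removed. This is the source integral $\int_{V_0(U)}^V\frac{1}{V'}\bigl(\frac{(V')^{1-\k}}{-U}\bigr)^{C\tau}\epsilon^{2-2\delta}(V')^a\,dV'$ in the paper.
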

\begin{proof}
    We commence by explicitly writing down the corresponding top-order evolution equation: 
    \begin{equation}
        \begin{aligned}
            \Omega\nabla_4\nabla^6(\Omega\tr\chi)&+(4\Omega\tr\chi+4\Omega\omega+C\Omega\chih)\nabla^6(\Omega\tr\chi)\sim\Omega\chih\nabla^6(\Omega\chih)\\
            &+\Omega e_4\phi\nabla^6(\Omega e_4\phi)+\Omega\tr\chi\nabla^6(\Omega\omega)+\text{non top order terms}.
        \end{aligned}
    \end{equation}
    The estimates invoked in the source terms are the scalar-field estimate just proved, the enhanced $\Omega\omega$ and $\etab$ estimates in \eqref{R_III_Enhanced_Estimate_e_3_eq_Connections}, and the curvature estimates from Proposition \ref{R III Phi_1 R Phi_2 Rb Est}. The trace-free term $\Omega\chih$ is treated through the Codazzi equation and the enhanced $\Omega\chi$ estimate is obtained immediately afterward.
    Crucially, since the leading algebraic coefficient of the borderline top-order term is strictly negative, i.e.,
    \[\frac{2+\k-2S(\nabla^6\Omega\tr\chi)+a+(2\lambda-\tau)(1-\k)}{V}-8\Omega\tr\chi-8\Omega\omega-C|\Omega\chih|<-1/V,\]
    we hence have the estimate 
    \begin{equation}
        \begin{aligned}
            &\lnm \nabla^6(\Omega \tr\chi) \rnm^2_{\LT^2(S_{U,V})}\\
            \lesssim & \epsilon^{2-2\delta} V^a-\frac{1}{2}\lnm \Omega\nabla^6(\Omega\tr\chi) \rnm^2_{\LT^2(H_U^V)}+o_{\epsilon}(1)\lnm \Omega\nabla^6(\Omega\chih)\rnm^2_{\LT^2(H_U^V)}\\
            &+\lnm \Omega\nabla^6(\Omega e_4\phi,\Omega\omega)\rnm^2_{\LT^2(H_U^V)}+\int_{V_0(U^\prime)}^V \frac{1}{V^\prime}\left(\frac{(V^\prime)^{1-\k}}{-U^\prime}\right)^{C\tau}\epsilon^{2-2\delta}(V^\prime)^a.
        \end{aligned}
    \end{equation}
\end{proof}
Leveraging the result of the preceding proposition, we can now rigorously establish the enhanced global $\LT^2(\R)$ energy estimate specifically for $\Omega\chi$.
\begin{corollary}
    Let $0<\lambda<\tau/2$. For $\Omega\chi$, the below inequality holds
    \begin{equation}\label{R_III_Enhanced_Estimate_OMegachi}
        \lnm \nabla^6(\Omega \chi)\left(\frac{V^{1-\k}}{-U}\right)^{\lambda}\rnm^2_{\LT^2(\R_{U,V})}\lesssim \epsilon^{2-2\delta}V^a.
    \end{equation}
\end{corollary}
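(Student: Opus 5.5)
The plan is to split $\Omega\chi$ into its trace and trace-free parts and handle each with an estimate already available. For $\nabla^6\ol{\Omega\tr\chi}$, the spherical bound \eqref{R_III_Estimate_Omegatrchi_S} gives
\[
\lnm \nabla^6(\Omega\tr\chi)\rnm^2_{\LT^2(S_{U',V'})}\lesssim B\epsilon^{2-2\delta}(V')^a\left(\frac{(V')^{1-\k}}{-U'}\right)^{C\tau}.
\]
I will insert this into the definition of the $\LT^2(\R_{U,V})$ norm, carrying the extra weight $\left(\frac{V^{1-\k}}{-U}\right)^{2\lambda}$. Integrating first in $U'$ over $(U_0(V'),U)$ produces
\[
\int (V')^{\k-2}\,(V')^{a}\left(\frac{(V')^{1-\k}}{-U'}\right)^{2\lambda+C\tau}dU'.
\]
Because $2\lambda+C\tau<1$ once $\tau$ is small, this integral is finite. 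It equals a factor $\epsilon_1^{(1-\k)(1-2\lambda-C\tau)}(V')^{a-1}$, the same computation as in the proof of Proposition \ref{R III Phi_1 R Phi_2 Rb Est} and Lemma \ref{R III Key Lemma}. The remaining $V'$ integral then gives $\lesssim a^{-1}\epsilon_1^{c}B\epsilon^{2-2\delta}V^a$. Choosing $\epsilon_1=\epsilon_1(a,\tau,B)$ small absorbs the constants.

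For the trace-free part I will use the Codazzi equation
\[
\dv(\Omega\chih)=\tfrac12\nabla(\Omega\tr\chi)-\Omega\beta^r+\etab\cdot\Omega\chih-\tfrac12\etab\,\Omega\tr\chi,
\]
with $\curl$ and $\tr$ determined as well, so the difference $\ol{\Omega\chih}$ satisfies a Hodge system. The elliptic estimate \eqref{Ellip Est}, applied on each sphere and multiplied by the weight $\left(\frac{V^{1-\k}}{-U}\right)^{2\lambda}$ (which is constant on spheres), bounds $\nabla^6\ol{\Omega\chih}$ by three groups of terms:
\begin{itemize}
\item $\nabla^6\ol{\Omega\tr\chi}$, handled in the first step;
\item $\nabla^5\ol{\Omega\beta^r}$, whose enhanced $\R$-bound with weight $\lambda$ comes from Proposition \ref{R III Phi_1 R Phi_2 Rb Est} (since $\Omega\beta^r$ is a $\Phi_1$ in the pair $(\Omega\beta^r,(K,\sigma^r))$);
\item products $\nabla^{\le5}(\etab\,\Omega\chi)$ together with metric-difference terms.
\end{itemize}

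For the product terms, the top part $\etab\,\nabla^5(\Omega\chi)$ has small coefficient $\etab$ and can be absorbed. The factor $\nabla^6\etab$ is controlled by the enhanced estimate \eqref{R_III_Enhanced_Estimate_e_3_eq_Connections}. Lower-order factors are bounded in $\LS_1^\infty$ by Proposition \ref{R III L2S est of varphi_1 Phi_1}. The metric and connection differences involve $\ol{g}$ and $K$ and are controlled as in Corollary \ref{R III Phi_1 R Est Cor}.

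The main obstacle is the integrability step for the trace. The bound \eqref{R_III_Estimate_Omegatrchi_S} loses the power $(V^{1-\k}/(-U))^{C\tau}$, and this loss has to be paid for by the gap $1-2\lambda$ in the $U'$-integration near $U=0$. To make this work I will fix $\tau$ small compared with $1/C$ and use $\lambda<\tau/2$, so that the total exponent $2\lambda+C\tau$ stays below $1-\k$ and the $U'$ integral converges with the $\epsilon_1$-gain.
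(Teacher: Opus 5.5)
Your proposal is correct and follows essentially the same route as the paper: you integrate the spherical bound \eqref{R_III_Estimate_Omegatrchi_S} over $\R_{U,V}$ to control the trace, then recover $\nabla^6(\Omega\chih)$ from the Codazzi equation via the elliptic estimate, using the weighted $\R$-bound on $\nabla^5(\Omega\beta^r)$ from Proposition \ref{R III Phi_1 R Phi_2 Rb Est}. Your tracking of the exponent $2\lambda+C\tau<1$ in the $U'$-integration is more explicit than the paper's, which suppresses the $2\lambda$; two harmless slips are that $\etab\,\nabla^5(\Omega\chih)$ is genuinely lower order rather than something to absorb, and that no $\nabla^6\etab$ actually arises from $\nabla^5(\etab\cdot\Omega\chih)$.
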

\begin{proof}  
The first displayed estimate converts the spherical trace estimate \eqref{R_III_Estimate_Omegatrchi_S} into an $\mathcal{R}$-norm. The elliptic estimate then adds the trace-free part using the curvature estimate for $\Omega\beta^r$ from Proposition \ref{R III Phi_1 R Phi_2 Rb Est}. Thus, we derive
    \begin{equation}
        \begin{aligned}
            & \lnm \nabla^6(\Omega \tr\chi)\left(\frac{V^{1-\k}}{-U}\right)^{\lambda}\rnm^2_{\LT^2(\R_{U,V})}
            \\
            \lesssim& \int_{V_0(U)}^V\int_{U_0(V^\prime)}^U\frac{1}{\left(V^\prime\right)^{2-\k}}B\epsilon^{2-2\delta} (V^\prime)^a\left(\frac{(V^\prime)^{1-\k}}{-U^\prime}\right)^{C\tau}\lesssim \epsilon^{2-2\delta} V^a.
        \end{aligned}
    \end{equation}
    Using elliptic estimate, it follows that
    \begin{equation}
        \begin{aligned}
            & \lnm \nabla^6(\Omega \chih)\left(\frac{V^{1-\k}}{-U}\right)^{\lambda}\rnm^2_{\LT^2(\R_{U,V})}\\
            \lesssim &   \lnm \left(\nabla^6(\Omega \tr\chi),\nabla^5(\Omega\beta^r),l.o.t.\right)\left(\frac{V^{1-\k}}{-U}\right)^{\lambda}\rnm^2_{\LT^2(\R_{U,V})}\lesssim \epsilon^{2-2\delta }V^a.
        \end{aligned}
    \end{equation}
\end{proof}

Leveraging the estimates for $\LT^2(\R_{U,V})$ norms, we can obtain the energy estimates for $\LT^2(H_U^V)$, $\LT^2(\Hb_V^U)$.
\begin{proposition}
    For $(\Phi_1,\Phi_2)\in\{(\nabla^2\phi,\nabla(\Omega^{-1}e_3\phi))$, $(\Omega^2\alpha,\Omega\beta^r)$, $(\Omega\beta^r,(K,\sigma^r)) $,\\ $((K,\sigma^r),\Omega^{-1}\betab^r),$ $(\nabla(\Omega^{-1}\tr\chib),0)$,
     $(-\dv\etab+ K,0),(\nabla(\Omega\omega)+{}^*\nabla(\Omega\omega^\dagger)-\frac{1}{2}\Omega\beta^r,0)\}$, it follows that 
    \begin{equation}
        \lnm \nabla^5\Phi_1\rnm^2_{\LT^2(H_U^V)}+\lnm \nabla^5\Phi_2\rnm^2_{\LT^2(\Hb_V^U)}\lesssim \epsilon_1^{1/2}\epsilon^{2-2\delta}V^a.
    \end{equation}
\end{proposition}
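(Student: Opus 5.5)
The plan is to apply the Region III energy estimate, Lemma \ref{R III energy est lemma H Hb}, to each commuted Bianchi-type pair $(\nabla^5\Phi_1,\nabla^5\Phi_2)$ listed in the statement. The source terms will not be controlled by the unweighted bootstrap \eqref{R_III_BOOTSTRAP_ASSUMPTIONS_1}. Instead they will be placed in the enhanced bulk norms obtained in Proposition \ref{R III Phi_1 R Phi_2 Rb Est} and its corollaries, \eqref{R_III_Enhanced_Estimate_e_3_eq_Connections}, Corollary \ref{R III Phi_1 R Est Cor} and \eqref{R_III_Enhanced_Estimate_OMegachi}. Those bounds carry the small factor $\epsilon_1^{1/2}$, and this factor is what yields the improvement claimed here.

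\textbf{Step 1: set up the commuted system.} For each pair we will write the commuted system exactly as in the proof of Proposition \ref{R III Phi_1 R Phi_2 Rb Est}. The $\Omega\nabla_4$ equation for $\nabla^5\Phi_2$ is put in the form required by Lemma \ref{R III energy est lemma H Hb}, with coefficients $\lambda\Omega\tr\chi$ and $\Lambda\Omega\omega$. We then check the sign condition $1+\k-2S(\Phi_2)+a-\tau(1-\k)-2\lambda+2\Lambda\k<-2\delta_1$, using the limits $V\Omega\tr\chi\to2$ and $V\Omega\omega\to-\k/4$ from Lemma \ref{R_III_Chr_sol_lemma}, together with $a\ll\tau(1-\k)\ll1$. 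When $\Phi_2=0$, for example for $\nabla(\Omega^{-1}\tr\chib)$, $-\dv\etab+K$ and the $\omega^r$ combination, only the $\nabla_3$ transport half is needed, and no sign condition is required.

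\textbf{Step 2: the boundary terms.} The boundary contributions on $\Sigma_{U,V}$ are controlled by the transition estimates \eqref{R III initial value 2} and \eqref{R III initial value 3}. These give a bound of $\epsilon^{2-3\delta/2}V^a$. Since $\epsilon^{2-3\delta/2}\le\epsilon^{\delta/2}\epsilon^{2-2\delta}$, this is $\ll\epsilon_1^{1/2}\epsilon^{2-2\delta}V^a$ once $\epsilon=\epsilon(\epsilon_1)$ is chosen small.

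\textbf{Step 3: the bulk terms.} On the right of \eqref{R III energy estimate} there are four bulk quantities:
\begin{itemize}
\item the pairing $\langle\Omega^2\nabla^5\Phi_1,F\rangle_{\R}$,
\item the $\dv b$ term,
\item $\delta_1^{-1}\|\Omega G\|^2_{\R}$,
\item the $\eta+\etab$ cross term.
\end{itemize}
For the top-order pieces of $F$ and $G$ of the form $\varphi\,\nabla^6\varphi$ or $\varphi\,\nabla^5\Psi$, we will split the weight as $\bigl(\frac{V^{1-\k}}{-U}\bigr)^{\lambda}\cdot\bigl(\frac{-U}{V^{1-\k}}\bigr)^{\lambda}$. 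The pointwise factor $\varphi$ is allowed to grow like $\bigl(\frac{V^{1-\k}}{-U}\bigr)^{\lambda'}$ with $\lambda'\ll\lambda$, by the $\LS^\infty$ consequences of Propositions \ref{R III psi_1 W32 est} and \ref{R III psi_2 Omega b W32 est} and the metric bootstrap on $b$ and $\Omega^2$. Its growth is absorbed into the enhanced weight of Corollary \ref{R III Phi_1 R Est Cor}, which gives $\epsilon_1^{1/2}\epsilon^{2-2\delta}V^a$. The $\dv b$ and $\eta+\etab$ terms are treated in the same way, using Cauchy--Schwarz, and each contributes $\epsilon\,\bigl(\frac{V^{1-\k}}{-U}\bigr)^{\tau/8}$ times an enhanced bulk norm. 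The lower-order products are handled by Proposition \ref{R III L2S est non top order}, Lemma \ref{R_III_Lemma_Estimate_without_eta}, and Lemma \ref{R III Key Lemma} or Corollary \ref{R III Key Lemma Cor}. Each of these converts a flux bound into a $\D$-bulk bound with the factor $\epsilon_1^{(1-\k)(1-\tau)}\le\epsilon_1^{1/2}$.

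\textbf{Main obstacle.} I expect the hardest point to be the pair $(\nabla^2\phi,\nabla(\Omega^{-1}e_3\phi))$ and the terms containing $\eta$ or $\Omega^{-1}\chibh$. Here only the weaker bounds \eqref{R_III_BOOTSTRAP_ASSUMPTIONS_3} and Proposition \ref{R III L2S est non top order} are available, the latter with the extra factor $\epsilon\bigl(\frac{V^{1-\k}}{-U}\bigr)^\tau$ at fifth order. Top-order $\nabla^6\eta$ enters through the $\Omega\nabla_4\nabla^6(\Omega^{-1}e_3\phi)$ equation. My first approach would be to rewrite $\nabla^6\eta$ via the elliptic system for $\mu=-\dv\eta+K$ and use the $\Rb$-bound for $\mu$ from Proposition \ref{R III Phi_1 R Phi_2 Rb Est}. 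The leftover factor $\epsilon\bigl(\frac{V^{1-\k}}{-U}\bigr)^{\tau}$ should be absorbed by the extra $\epsilon$, provided $\tau$ is small relative to $\lambda$.
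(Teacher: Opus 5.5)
Your overall framework is the paper's: Lemma \ref{R III energy est lemma H Hb}, boundary terms from the transition estimates, and a gain from the enhanced bulk norms. Step 3, however, has a genuine gap.

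\emph{The gap in Step 3.} It places every top-order piece $\varphi\nabla^6\varphi$, $\varphi\nabla^5\Psi$ of both sources into the enhanced norms of Corollary \ref{R III Phi_1 R Est Cor}. Here $F$ is the $\Omega^{-1}\nabla_3$-source and $G$ the $\Omega\nabla_4$-source. Those enhanced norms, together with \eqref{R_III_Enhanced_Estimate_e_3_eq_Connections} and \eqref{R_III_Enhanced_Estimate_OMegachi}, exist only for $\nabla^5\Phi_1$ and for $\nabla^6\varphi_1$ with $\varphi_1\in\{\Omega\chi,\Omega^{-1}\tr\chib,\etab,\Omega\omega,\Omega e_4\phi,\nabla\phi\}$. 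Nothing enhanced is available in the spacetime norm for $\nabla^6\eta$, $\nabla^6(\Omega^{-1}\chibh)$, $\nabla^6(\Omega^{-1}e_3\phi)$ or $\nabla^5(\Omega^{-1}\betab^r)$. Proposition \ref{R III Phi_1 R Phi_2 Rb Est} gives these $\Phi_2$'s only in $\Rb$, which is a degenerate integral over the single cone $\Hb_V$, not the bulk norm in which $F$ and $G$ are measured.

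The two sources therefore need different mechanisms.
\begin{itemize}
\item For $F$, the gain comes from the pairing $\langle\Omega^2\nabla^5\Phi_1,F\rangle_{\R}$. Put the enhanced weight $\bigl(\frac{V^{1-\k}}{-U}\bigr)^{\lambda}$ and the pointwise factor $\varphi$ on $\nabla^5\Phi_1$; this is exactly what Corollary \ref{R III Phi_1 R Est Cor} is built for. The remaining top-order factor then carries only $\bigl(\frac{-U}{V^{1-\k}}\bigr)^{\lambda}\le\epsilon_1^{\lambda(1-\k)}$. It is bounded by the unenhanced bootstrap \eqref{R_III_BOOTSTRAP_ASSUMPTIONS_1}, or, when it is $\nabla^5\Phi_2$, absorbed into $-\delta_1\|\Omega\nabla^5\Phi_2\|^2_{\LT^2(\R)}$. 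The $(\eta+\etab)$ cross term in \eqref{R III energy estimate} contains an unenhanced $\Phi_2$ and needs the same split. This yields an $o_{\epsilon_1}(1)$ gain rather than literally $\epsilon_1^{1/2}$, which is what the bootstrap needs and what the paper's argument delivers.
\item For $G$, the source enters only as $\delta_1^{-1}\|\Omega G\|^2_{\LT^2(\R)}$, with no partner able to carry a weight. The estimate closes only because of a null-structure fact you never state. For every $\Phi_2$ in the list, the $\Omega\nabla_4$ equation contains at top order only $\varphi\nabla^5\Phi_1$ and $\varphi\nabla^6\varphi_1$. The single exception is $\Omega\chih\,\nabla^6(\Omega^{-1}\chibh)$, coming from $\nabla(\chih\cdot\chibh)$ in the $\betab^r$ equation. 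It is harmless because $\Omega\chih=O(\epsilon^{1-\delta})$ pointwise while $\nabla^6(\Omega^{-1}\chib)$ is bounded in the bulk by the bootstrap. The paper additionally routes it through the Codazzi equation for $\chib$ and the negative bulk term.
\end{itemize}

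\emph{The ``main obstacle'' is misplaced, and the proposed fix would not work.} $\nabla^6\eta$ does not occur in the $\Omega\nabla_4$ equation for $\nabla^6(\Omega^{-1}e_3\phi)$. The wave equation gives $\Omega e_4(\Omega^{-1}D_3\phi)+\frac12\Omega\tr\chi\,\Omega^{-1}D_3\phi-4\Omega\omega\,\Omega^{-1}D_3\phi=\Delta\phi-\frac12\Omega^{-1}\tr\chib\,\Omega D_4\phi+2\etab\cdot\nabla\phi$, and $[\Omega\nabla_4,\nabla]$ involves only $\Omega\chi$. For this pair, $\nabla^6\eta$ appears only on the $\Omega^{-1}\nabla_3$ side, through $\nabla\Omega=\frac12\Omega(\eta+\etab)$ in the commutators for $\nabla^7\phi$. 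The weight transfer above handles it with no elliptic detour.

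The fix itself fails on three counts.
\begin{itemize}
\item The $\Rb$-bound for $\mu=-\dv\eta+K$ gives no bulk bound for $\nabla^6\eta$.
\item A factor $\bigl(\frac{V^{1-\k}}{-U}\bigr)^{\tau}$ is unbounded as $U\to0$ and cannot be absorbed by a fixed power of $\epsilon$.
\item The enhanced estimates hold only for $\lambda<\tau/2$, so $\tau$ cannot be taken small relative to $\lambda$.
\end{itemize}
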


\begin{proof}
    This proposition upgrades the enhanced $\mathcal{R}$-estimates to flux estimates. It uses \eqref{R_III_Enhanced_Estimate_e_3_eq_Connections}, \eqref{R_III_Enhanced_Estimate_OMegachi}, the $\Omega^{-1}\chibh$ estimate obtained below, and the non-top-order product bounds from Proposition \ref{R III L2S est non top order}.
    We begin our analysis with the fundamental schematic structure equations: 
    \begin{equation}
        \begin{split}
            \Omega\nabla_4\Phi_2+&\frac{1}{2}\lambda \Omega\tr\chi \Phi_2+4\Lambda \Omega\omega \Phi_2+C_0\Omega\chih\cdot\Phi_2-\D \Phi_1\\
            &=\varphi\Phi_1+\varphi\nabla\varphi_1+\varphi^3+\Omega\chih\nabla(\Omega^{-1}\chibh),\\ 
            \Omega^{-1}\nabla_3\Phi_1+&{}^*\D \Phi_2=\varphi\Phi+\sum_{i_1+i_2=1}\nabla^{i_1}\varphi^{i_2+2} ,
        \end{split}
    \end{equation}
    where $\varphi_1$ stands for $\Omega\chi,\etab,\nabla\phi,\Omega e_4\phi,\Omega\omega$ with enhanced $\LT(\R)$ estimates \eqref{R_III_Enhanced_Estimate_e_3_eq_Connections} and \eqref{R_III_Enhanced_Estimate_OMegachi}. Thus every occurrence of $\varphi_1$ in the top-order source is attached to an already enhanced $\mathcal{R}$-estimate. The only remaining delicate term is the product involving $\Omega^{-1}\chibh$, because that component is controlled through an elliptic estimate rather than directly by the $\nabla_3$-enhanced family. Applying standard algebraic commutation formulae subsequently yields 
    \begin{equation}
        \begin{aligned}
            &\Omega \nabla_4\nabla^5\Phi_2+\frac{1}{2}(\lambda+5)\Omega\tr\chi\nabla^5\Phi_2+4\Lambda\Omega\omega\nabla^5\Phi_2+C_5\Omega\chih\cdot\nabla^5\Phi_2-\mathcal{D}\nabla^5\Phi_1\\
            =& \underline{\varphi\nabla^{5}\Phi_1+\varphi\nabla^6\varphi_1+\Omega\chih\nabla^6(\Omega^{-1}\chibh)}_{G_1}\\
            &\underline{+\sum_{\substack{i_1+i_2+i_3=5\\ i_3\leq 4}}\nabla^{i_1}\varphi^{i_2+1}\nabla^{i_3}\Phi +\sum_{\substack{i_1+i_2+i_3=6\\ i_1,i_3\leq 5}}\nabla^{i_1}\varphi_1^{i_2+1}\nabla^{i_3}\varphi +\sum_{i_1+i_2+i_3=4}\nabla^{i_1}K^{i_2+1}\nabla^{i_3}\Phi_1}_{G_2},\\
            &\Omega^{-1}\nabla_3\nabla^5\Phi_1+{}^*\D\nabla^5\Phi_2\\
            =& \underline{\varphi\nabla^5\Phi+\varphi\nabla^6\varphi}_{F_1}\\ &+\underline{\sum_{\substack{i_1+i_2+i_3=5\\ i_3\leq 4}}\nabla^{i_1}\varphi^{i_2+1}\nabla^{i_3}\Phi+\sum_{\substack{i_1+i_2=6\\i_1,i_2\leq 5}}\nabla^{i_1}\varphi\nabla^{i_2}\varphi^{i_3+1} +\sum_{\substack{i_1+\cdots +i_5=5\\ i_2+i_4\geq 1}} \nabla^{i_1}(\eta+\etab)^{i_2}\nabla^{i_3}K^{i_4}\nabla^{i_5}\Phi_2}_{F_2}.
        \end{aligned}
    \end{equation}
    We carefully note that the highest top-order terms arising in the $\Omega\nabla_4$ evolution equations are precisely given by $\varphi\nabla^5\Phi_1,\varphi\nabla^6\varphi_1$, and the coupled differences $\nabla^6(\Omega\chih\Omega^{-1}\chibh)$ alongside $\Omega^{-1}\tr\chib\nabla^6(\Omega\tr\chi)$.
    The constants $\lambda,\Lambda$ take values:
    \begin{equation}
        \begin{array}{c|c|c|c|c|c|c}
            \Phi_2 & \Omega^{-1} D_3\nabla\phi  & \Omega\beta^r & K &\sigma^r & \Omega^{-1}\betab^r    \\
            \hline
            \lambda & 2  & 4 & 2 & 2 & 2 \\
            \Lambda & -1  & 1 & 0 & 0 & -1 \\
        \end{array}
    \end{equation}
    Consequently, substituting these precise tabular values into the signature formula, we deduce
    \begin{equation*}
        1-2S(\nabla^I\Phi_2)+a-(1-\k)\tau-V\cdot(\lambda+I)\Omega\tr\chi-8V\cdot \Lambda \Omega\omega = 3+a-(1-\k)\tau-2\lambda +2\k\Lambda +O(\epsilon)<-1/4.
    \end{equation*}
    This strictly negative upper bound mathematically justifies the direct application of the bulk energy estimate formulated in \ref{R III energy est lemma H Hb}:

\begin{equation}
        \begin{aligned}
            &\lnm \nabla^5\Phi_1\rnm^2_{\LT^2(H_U^V)}+\lnm \nabla^5\Phi_2\rnm^2_{\LT^2(\Hb_V^U)}\\
            \lesssim & \epsilon^{2-2\delta} V^a-\frac{1}{4}\lnm \Omega\nabla^5\Phi_2\rnm^2_{\LT^2(\R_{U,V})}+\lnm \lnm\Omega^2\nabla b\rnm_{\LS_1^\infty(S)}^{1/2}\nabla^5\Phi_1\rnm^2_{\LT^2(\R_{U,V})}\\
            &+\langle \Omega\nabla^5\Phi_1,\Omega F_1\rangle_{\R_{U,V}}+\langle \Omega\nabla^5\Phi_1,\Omega F_2\rangle_{\R_{U,V}}+\langle \Omega\nabla^5\Phi_2,\Omega G_1\rangle_{\R_{U,V}}+\langle \Omega\nabla^5\Phi_2,\Omega G_2\rangle_{\R_{U,V}}\\
            \lesssim & \epsilon^{2-2\delta} V^a +\lnm \left(\frac{V^{1-\k}}{-U}\right)^{\tau/3}\nabla^5\Phi_1\rnm^2_{\LT^2(\R_{U,V})}+\lnm \left(\frac{-U}{V^{1-\k}}\right)^{\tau/3}\Omega F_1\rnm^2_{\LT^2(\R_{U,V})}\\
            &+\lnm \Omega F_2,\Omega G_2\rnm^2_{\LT^2(\R_{U,V})} +\lnm \Omega G_1\rnm^2_{\LT^2(\R_{U,V})}-\frac{1}{8}\lnm \Omega\nabla^5\Phi_2\rnm^2_{\LT^2(\R_{U,V})}.
        \end{aligned}
    \end{equation} 
    Drawing upon our carefully established previous estimates, the integral term $\lnm \left(\frac{V^{1-\k}}{-U}\right)^{\tau/3}\nabla^5\Phi_1\rnm^2_{\LT^2(\R_{U,V})}$ is effectively bounded by $o_{\epsilon_1}(1)\epsilon^{2-2\delta}V^a$; moreover, for the third term, we explicitly exploit the limiting geometric property that $\frac{-U}{V^{1-\k}}=o_{\epsilon_1}(1)$, which formally gives
    \begin{equation}
        \begin{aligned}
            &\lnm \left(\frac{-U}{V^{1-\k}}\right)^{\tau/3}\Omega F_1\rnm^2_{\LT^2(\R_{U,V})}
            \lesssim  o_{\epsilon_1}(1)\lnm \Omega\nabla^5\Phi_2,\Omega\nabla^6\varphi_2\rnm^2_{\LT^2(\R_{U,V})}+o_{\epsilon_1}(1)\epsilon^{2-2\delta}V^a.
        \end{aligned}
    \end{equation}
    To effectively bound the extensive collection of non-top-order terms collectively denoted by $F_2$ and $G_2$, we seamlessly invoke standard Sobolev embedding inequalities in conjunction with our prior pointwise estimates, and we get
    \begin{equation}
        \begin{aligned}
            &\lnm \Omega F_2,\Omega G_2\rnm^2_{\LT^2(\R_{U,V})}\\
            \lesssim & \int_{U_0(V)}^U \int_{V_0(U^\prime)}^V \frac{1}{(V^\prime)^{2-\k}}\left(\frac{(V^\prime)^{1-\k}}{-U^\prime}\right)^{N(5)\tau}\epsilon^{2-2\delta} (V^\prime)^a\\
            \lesssim & \epsilon^{2-2\delta}\int_{V_0(U)}^V \left(\frac{-U_0(V^\prime)}{(V^\prime)^{1-\k}}\right)^{1-N(5)\tau}(V^\prime)^{a-1}\lesssim o_{\epsilon_1}(1)\epsilon^{2-2\delta} V^a.
        \end{aligned}
    \end{equation}
    Finally, we rigorously estimate the delicate composite term $G_1$. For the specific components $\varphi\nabla^5\Phi_1$ and $\varphi\nabla^6\varphi_1$, their required bounds follow effortlessly from their vastly improved $\LT(\R)$ norm estimates. Subsequently, we can estimate the remaining problematic term $\Omega\chih\nabla^6(\Omega^{-1}\chibh)$ and obtain 
    \begin{equation}
        \begin{aligned}
            & \lnm \Omega \cdot \Omega\chih\nabla^6(\Omega^{-1}\chibh)\rnm^2_{\LT^2(\R_{U,V})}\\
            \lesssim & o_\epsilon(1)\lnm \Omega \nabla^6(\Omega^{-1}\chibh)\rnm^2_{\LT^2(\R_{U,V})}\\
            \lesssim & o_\epsilon(1)\lnm \Omega \nabla^6(\Omega^{-1}\tr\chib)\rnm^2_{\LT^2(\R_{U,V})}+o_\epsilon(1)\lnm \Omega \nabla^5(\Omega^{-1}\betab^r)\rnm^2_{\LT^2(\R_{U,V})}+o_{\epsilon_1}(1)\epsilon^{2-2\delta} V^a\\
            \lesssim & o_\epsilon(1)\lnm \Omega \nabla^5(\Omega^{-1}\betab^r)\rnm^2_{\LT^2(\R_{U,V})}+\epsilon^{2-2\delta} V^a.
        \end{aligned}
    \end{equation}
For $o_{\epsilon}(1)$ small, the first term will be entirely absorbed into the remaining beneficial negative bulk term $-\frac{1}{8}\lnm \Omega\nabla^5\Phi_2\rnm^2_{\LT^2(\R_{U,V})}$, thus definitively concluding the proof.

\end{proof}

Using Codazzi equation for $\chib$, we can obtain the estimates for $\Omega^{-1}\chibh$.
\begin{corollary}
    For $\Omega^{-1}\chibh$, the below estimate holds:
    \begin{equation}
        \lnm \nabla^6(\Omega^{-1}\chibh)\rnm^2_{\LT^2(H_U^V)}\lesssim \epsilon^{2-2\delta} V^a.
    \end{equation}
\end{corollary}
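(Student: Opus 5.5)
The plan is to treat the Codazzi equation for $\chib$ as an elliptic Hodge system for $\Omega^{-1}\chibh$ on each sphere $S_{U,V}$, integrate along $H_U^V$, and feed in only quantities whose $\LT^2(H_U^V)$ norms are already controlled. The relevant identity, already used in Proposition \ref{R III L2S est non top order}, is
\[
\dv(\Omega^{-1}\chibh)=\Omega^{-1}\betab^r-\etab\cdot\Omega^{-1}\chibh+\tfrac{1}{2}\nabla(\Omega^{-1}\tr\chib)+\tfrac{1}{2}\etab\,\Omega^{-1}\tr\chib .
\]
Since $\Omega^{-1}\chibh$ is symmetric and traceless, its curl is determined by its divergence, and the spherically symmetric background satisfies $(\chibh)^c=0$. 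The elliptic estimate \eqref{Ellip Est} therefore applies with $\ol{\Omega^{-1}\chibh}=\Omega^{-1}\chibh$. Its hypotheses ($g\sim V^2g^{\SS}$, $K\sim V^{-2}$) follow from Lemma \ref{R_III_Chr_sol_lemma} together with the smallness of $\ol g$ and $\ol K$.

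Applied with six derivatives on each $S_{U,V'}$, the estimate gives
\[
\lnm\nabla^6(\Omega^{-1}\chibh)\rnm_{\LT^2(S)}\lesssim \lnm\nabla^5\Omega^{-1}\betab^r\rnm_{\LT^2(S)}+\lnm\nabla^6\ol{\Omega^{-1}\tr\chib}\rnm_{\LT^2(S)}+\sum_{i\leq5}\lnm\nabla^i(\etab\Omega^{-1}\chibh,\ \etab\Omega^{-1}\tr\chib)\rnm_{\LT^2(S)}+\text{l.o.t.}
\]
Squaring this and integrating in $V'$ with weight $(V')^{-1}$ along $H_U^V$ turns it into an $\LT^2(H_U^V)$ bound. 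The three groups of terms are then handled as follows.

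\textbf{Curvature term.} The component $\Omega^{-1}\betab^r$ appears as $\Phi_1$ in the pair $((K,\sigma^r),\Omega^{-1}\betab^r)$. The preceding proposition gives $\lnm\nabla^5\Omega^{-1}\betab^r\rnm^2_{\LT^2(H_U^V)}\lesssim\epsilon_1^{1/2}\epsilon^{2-2\delta}V^a$.

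\textbf{Trace term.} $\nabla(\Omega^{-1}\tr\chib)$ is the $\Phi_1$ of the pair $(\nabla(\Omega^{-1}\tr\chib),0)$, so the same proposition, or alternatively the bootstrap \eqref{R_III_BOOTSTRAP_ASSUMPTIONS_1}, bounds its fifth derivatives in $\LT^2(H_U^V)$.

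\textbf{Quadratic terms.} These are estimated with Lemma \ref{R_III_Lemma_Estimate_without_eta}, since no $\eta$ appears. The top-order factor $\nabla^5\etab$ is controlled in $\LT^2(H_U^V)$ by the bootstrap. The top-order factor $\nabla^5\Omega^{-1}\chibh$ is multiplied by the small $\lnm\etab\rnm_{\LS_1^\infty}\lesssim\epsilon^{1-\delta}$, using \eqref{R_III_BOOTSTRAP_ASSUMPTIONS_2} and Sobolev embedding. The remaining lower-order products are bounded by the spherical estimates of Propositions \ref{R III L2S est non top order} and \ref{R III psi_1 W32 est}, integrated along $H_U^V$.

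The main obstacle is the circularity in the top-order quadratic term $\etab\,\nabla^5(\Omega^{-1}\chibh)$, which contains the very quantity being estimated. I would absorb it in either of two ways: apply the elliptic estimate inductively in the derivative order, so that this term is only one order below the left-hand side; or simply move it to the left, which is allowed because its coefficient is $O(\epsilon^{1-\delta})$.

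A secondary concern is that the connection-difference terms $\ol{\nabla^i}$ must not cost derivatives. The Region II connection-difference corollary, transplanted to Region III using the metric bounds of Proposition \ref{R III psi_2 Omega b W32 est}, should take care of this.

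Collecting the three groups yields $\lnm\nabla^6(\Omega^{-1}\chibh)\rnm^2_{\LT^2(H_U^V)}\lesssim\epsilon^{2-2\delta}V^a$.
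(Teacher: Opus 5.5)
Your route is the paper's: the Codazzi equation for $\chib$ plus \eqref{Ellip Est}, integrated along $H_U^V$. The paper's proof is exactly the resulting inequality, with $\nabla^6(\Omega^{-1}\tr\chib)$, $\nabla^5(\Omega^{-1}\betab^r)$ and lower-order terms in $\LT^2(H_U^V)$ on the right. Your handling of the trace term is fine, since $\Omega^{-1}\tr\chib$ is a $\psi_1$ in \eqref{R_III_BOOTSTRAP_ASSUMPTIONS_1}. The quadratic terms are also fine; the $\nabla^5(\Omega^{-1}\chibh)$ factor is already covered on spheres by Proposition \ref{R III L2S est non top order}.

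The curvature step, however, rests on a misreading. In the pair $((K,\sigma^r),\Omega^{-1}\betab^r)$ it is $(K,\sigma^r)$ that obeys the $\Omega^{-1}\nabla_3$ equation, with $\dv$ and $\cl$ of $\betab^r$ as source. $\Omega^{-1}\betab^r$ obeys the $\Omega\nabla_4$ equation, so it is the $\Phi_2$ of that pair; it appears among the $\Phi_2$ in the table of $\lambda,\Lambda$ values. Hence the preceding proposition gives $\lnm\nabla^5(\Omega^{-1}\betab^r)\rnm^2_{\LT^2(\Hb_V^U)}\lesssim\epsilon_1^{1/2}\epsilon^{2-2\delta}V^a$, not the $\LT^2(H_U^V)$ bound you cite. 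Likewise, $\Omega^{-1}\betab^r$ enters \eqref{R_III_BOOTSTRAP_ASSUMPTIONS_1} only as a $\Psi_2$, i.e. only through its $\Hb_V^U$ flux.

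Nothing else in Region III supplies the missing flux.
\begin{itemize}
\item An $H_U^V$ bound for $\nabla^5(\Omega^{-1}\betab^r)$ would have to come from the $\Omega^{-1}\nabla_3$ equation for $\betab^r$. Its source contains $\dv\alphab$, and $\alphab$ is not part of the Region III scheme.
\item The spherical curvature bounds of Proposition \ref{R III L2S est non top order} stop at four derivatives.
\item Integrating the $\Omega\nabla_4$ equation for $\betab^r$ in $V$ costs a derivative, since it produces $\nabla^6K$.
\item The Codazzi equation itself only trades $\nabla^5(\Omega^{-1}\betab^r)$ back for $\nabla^6(\Omega^{-1}\chibh)$, which is the quantity you are trying to bound.
\end{itemize}
So, as written, your argument does not close. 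You also cannot fill the step by appealing to the paper: its one-line proof stops at the elliptic inequality and does not say where the $H_U^V$ bound for this term comes from.

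What the available estimates do support is the same elliptic argument in norms built from $\Hb$-fluxes. An example is a bound for $\Omega\nabla^6(\Omega^{-1}\chibh)$ in $\LT^2(\R_{U,V})$. That is exactly the form in which $\nabla^6(\Omega^{-1}\chibh)$ is used in the preceding proposition, through the term $\Omega\chih\,\nabla^6(\Omega^{-1}\chibh)$.
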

\begin{proof}
    Standard application of elliptic estimates readily confirms that  
    \begin{equation}
        \lnm \nabla^6(\Omega^{-1}\chibh)\rnm^2_{\LT^2(H_U^V)}\lesssim \lnm \nabla^6(\Omega^{-1}\tr\chib),\nabla^5(\Omega^{-1}\betab^r)\rnm^2_{\LT^2(H_U^V)}+\sum_{i\leq 5}\lnm \nabla^i(\Omega^{-1}\chib,\etab)\rnm^2_{\LT^2(H_U^V)}.
    \end{equation}
\end{proof}
As a final, capstone step in this analytical sequence, we rigorously improve the outstanding critical bootstrap assumptions specifically constraining $\eta$ and $\Omega\chi$.
\begin{proposition}
    The following estimates hold for the relevant quantities:
    \begin{equation}
        \lnm \nabla^6(\eta,\Omega\chi)\rnm^2_{\LT^2(\Hb_V^U)}\lesssim \epsilon^{2-2\delta} V^a.
    \end{equation}
\end{proposition}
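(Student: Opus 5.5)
We prove the incoming-flux bounds for $\eta$ and $\Omega\chi$ at top order by reducing both to quantities already controlled, following the Region II scheme. For $\Omega\chi$ we split into trace and trace-free parts. For the trace, \eqref{R_III_Estimate_Omegatrchi_S} gives a spherical bound with the loss $\left(\frac{V^{1-\k}}{-U}\right)^{C\tau}$. We integrate it in $U$ against $V^{\k-1}\Omega^2$ along $\Hb_V^U$. The loss is harmless: the $U$-integral yields a factor $\left(\frac{-U_0(V)}{V^{1-\k}}\right)^{1-C\tau}\lesssim \epsilon_1^{(1-\k)(1-C\tau)}$, since $\tau\ll1$. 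This gives $\lnm\nabla^6(\Omega\tr\chi)\rnm^2_{\LT^2(\Hb_V^U)}\lesssim\epsilon^{2-2\delta}V^a$. For the trace-free part we apply the elliptic estimate \eqref{Ellip Est} to the Codazzi equation for $\Omega\chih$. This bounds $\nabla^6(\Omega\chih)$ on $\Hb_V^U$ by $\nabla^6(\Omega\tr\chi)$, by $\nabla^5(\Omega\beta^r)$ (controlled on $\Hb_V^U$ by the preceding proposition), and by lower-order products $\nabla^5(\etab\cdot\Omega\chi)$. We handle these products with Lemma \ref{R_III_Lemma_Estimate_without_eta} and Proposition \ref{R III L2S est non top order}.

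For $\eta$ we use the mass aspect $\mu=-\dv\eta+K$ and the Hodge system $\dv\eta=-\mu+K$, $\cl\eta=\sigma^r$. The elliptic estimate reduces $\lnm\nabla^6\eta\rnm^2_{\LT^2(\Hb_V^U)}$ to $\lnm\nabla^5\mu\rnm^2_{\LT^2(\Hb_V^U)}$ plus the $\Hb$-fluxes of $\nabla^5K$ and $\nabla^5\sigma^r$, which are already bounded. The pair $(0,\mu)$ was treated in Proposition \ref{R III Phi_1 R Phi_2 Rb Est}, which gives an enhanced $\Rb$-bound. To get the flux we use the $\Omega\nabla_4\mu$ equation,
\[
\Omega\nabla_4\mu+\Omega\tr\chi\,\mu=2\Omega\chih\nabla\eta+\dv(\nabla\phi\,\Omega e_4\phi)+2\eta\dv(\Omega\chih)-\etab\nabla(\Omega\tr\chi),
\]
and commute with $\nabla^5$. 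We then apply Lemma \ref{R III energy est lemma H Hb} with $\Phi_1=0$. The coefficient is $\lambda=2$ plus the commutator contribution $5$, and $\Lambda=0$. It satisfies $1+\k-2S+a-\tau(1-\k)-2\lambda<-2\delta_1$, so a good negative bulk term $-\delta_1\lnm\Omega\nabla^5\mu\rnm^2_{\LT^2(\R)}$ appears. The source consists of $\nabla^6(\Omega e_4\phi)$, $\nabla^7\phi$, $\nabla^6(\Omega\chi)$ and $\nabla^6\etab$. Each of these carries the small coefficient $\ol{\varphi_1}$ or an $O(\epsilon^{1-\delta})$ pointwise factor, and each has an enhanced $\LT^2(\R)$ bound by \eqref{R_III_Enhanced_Estimate_e_3_eq_Connections} and \eqref{R_III_Enhanced_Estimate_OMegachi}. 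Non-top terms are bounded exactly as $F_2,G_2$ in the previous proof, gaining $o_{\epsilon_1}(1)$.

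\textbf{Main obstacle.} The hardest terms are $\Omega\chih\,\nabla^6\eta$ and $\eta\,\nabla^6(\Omega\chih)$. For them we have only the weak pointwise control $\lnm\eta\rnm_{\LS_1^\infty}\lesssim\epsilon^{1-\delta}\left(\frac{V^{1-\k}}{-U}\right)^{\lambda}$ from Proposition \ref{R III psi_2 Omega b W32 est}, and $\nabla^6\eta$ is the very quantity being estimated. We would take $\lambda<\tau/4$ and pair the growth factor with the enhanced $\R$-norm of $\nabla^6(\Omega\chi)$, which carries the weight $\left(\frac{V^{1-\k}}{-U}\right)^{\tau/2-}$. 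The term $\Omega\chih\nabla^6\eta$ we would control through the elliptic bound by $\nabla^5\mu$ and absorb into the negative bulk term, using $\lnm\Omega\chih\rnm_{\LS_1^\infty}\lesssim\epsilon^{1-\delta}$. This closes the bound with $\epsilon^{2-2\delta}V^a$.
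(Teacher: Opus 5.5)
Your proposal is correct and follows the paper's route. You reduce via the Codazzi and $(\dv,\cl)$ Hodge systems to $\nabla^6(\Omega\tr\chi)$ and $\nabla^5\mu$, then close $\mu$ with the degenerate $(0,\nabla^5\mu)$ energy estimate (damping $\frac72\Omega\tr\chi$), absorbing $\Omega\chih\,\nabla^6\eta$ through the elliptic bound into the negative bulk term. For the trace you integrate \eqref{R_III_Estimate_Omegatrchi_S} along $\Hb_V^U$ instead of rerunning the $\Omega\nabla_4$ estimate with enhanced $\R$-sources; this is equally fine and is what the paper itself does for the $\R$-norm in the preceding corollary.

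One detail the paper handles that you skip. It proves the stronger bound for $\Omega^{-1}\nabla^6(\eta,\Omega\chi)\,(V^{1-\k}/(-U))^{\lambda}$ on $\Hb_V^U$. The factor $\Omega^{-1}$ cancels the $\Omega^2$ in the $\LT^2(\Hb)$-norm, so the Hodge estimates apply sphere by sphere with no $\nabla\Omega$ terms. The factor $(V^{1-\k}/(-U))^{\lambda}$ then absorbs the only non-uniform upper bound on $\Omega^2/V^{\k}$ when passing back to the stated norm. Your direct elliptic step in the $\Omega^2$-weighted flux norm needs that device, or an equivalent treatment of the $\Omega$-weight.
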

\begin{proof}
    Analytically, it is entirely sufficient to just estimate $\nabla^6(\Omega\tr\chi)$ along with $\nabla^5\mu$ (where $\mu=-\dv\eta+K^r$), as the full proposition then drops out immediately from standard elliptic estimates. In what follows, we will rigorously prove the bound below for any chosen $\lambda\in(0,\tau/2)$: 
    \begin{equation}
        \lnm \nabla^6(\eta,\Omega\chi)\rnm^2_{\LT^2(\Hb_V^U)}\lesssim \lnm \Omega^{-1}\nabla^6(\eta,\Omega\chi)\left(\frac{V^{1-\k}}{-U}\right)^\lambda\rnm^2_{\LT^2(\Hb_V^U)}\lesssim \epsilon^{2-2\delta} V^a.
    \end{equation}
    We initiate this specific analysis by carefully considering $\Omega\tr\chi$, which strictly obeys the following nonlinear equation:
    \begin{equation}
        \Omega\nabla_4\nabla^6(\Omega\tr\chi)\sim \sum_{i=0}^6\nabla^{i}\varphi_1\nabla^{6-i}\varphi_1,
    \end{equation}
    with $\varphi_1\in\{\Omega\chi,\Omega\omega,\Omega e_4\phi\}$.
    Therefore this yields 
    \begin{equation}
        \begin{aligned}
            &\lnm \Omega^{-1}\nabla^6(\Omega\tr\chi)\left(\frac{V^{1-\k}}{-U}\right)^\lambda\rnm^2_{\LT^2(\Hb_V^U)}\\
            \lesssim & \epsilon^{2-2\delta}V^a+\lnm \nabla^6\varphi_1\left(\frac{V^{1-\k}}{-U}\right)^\lambda\rnm^2_{\LT^2(\R_{U,V})}+\int_{U_0(V)}^U\int_{V_0(U^\prime)}^V\frac{1}{(V^\prime)^{2-\k}}\epsilon^{2-2\delta}(V^\prime)^a\left(\frac{(V^\prime)^{1-\k}}{-U^\prime}\right)^{C\tau}\\
            \lesssim & \epsilon^{2-2\delta} V^a.
        \end{aligned}
    \end{equation}

    Turning our analytical attention to $\mu$, we confront the evolution equation
\begin{equation}
    \Omega\nabla_4\mu+\Omega\tr\chi\mu+\Omega\chih\cdot\nabla\eta = \varphi\Phi_1+\varphi^3+\varphi \nabla(\Omega\tr\chi,\etab),
\end{equation}
and applying the exact commutation formula subsequently reveals
\begin{equation}
    \begin{aligned}
        \Omega\nabla_4\nabla^5\mu+&\frac{7}{2}\Omega\tr\chi\mu+\Omega\chih\cdot\nabla^6\eta\\
        =&\nabla^{5}(\varphi\Phi_1)+\varphi\nabla^6(\Omega\tr\chi,\etab)+\sum_{\substack{i_1+i_2+i_3=6\\i_1,i_3\leq 5}}\nabla^{i_1}\varphi^{i_2+1}\nabla^{i_3}\varphi.
    \end{aligned}
\end{equation}
Since 
\begin{equation*}
    1+a+(2\lambda-\tau)(1-\k)-2S(\nabla^5\mu)-7 V\cdot\Omega\tr\chi \leq -1/2,
\end{equation*}
we can safely employ a specialized energy estimate adapted for the degenerate pair $(0,\nabla^5\mu)$:
\begin{equation*}
    \begin{aligned}
        &\lnm \Omega^{-1}\nabla^5\mu\left(\frac{V^{1-\k}}{-U}\right)^\lambda\rnm^2_{\LT^2(\Hb_V^U)}\\
        \lesssim & \epsilon^{2-2\delta}V^a-\frac{1}{4} \lnm \nabla^5\mu\left(\frac{V^{1-\k}}{-U}\right)^\lambda\rnm^2_{\LT^2(\R_{U,V})}+C(B)\epsilon^{2-2\delta}\lnm \nabla^6\eta\left(\frac{V^{1-\k}}{-U}\right)^\lambda\rnm^2_{\LT^2(\R_{U,V})}\\
        &+\lnm \Omega\varphi \nabla^5\Phi_1,\Omega\varphi\nabla^6\etab\rnm^2_{\LT^2(\R_{U,V})}+C(B)\int_{U_0(V)}^U\int_{V_0(U^\prime)}^V (V^\prime)^{\k-2} \epsilon^{2-2\delta} {(V^\prime)}^a\left(\frac{{(V^\prime)}^{1-\k}}{-U^\prime}\right)^{C\tau}dU^\prime dV^\prime\\
        \lesssim & \epsilon^{2-2\delta}V^a-\frac{1}{4} \lnm \nabla^5\mu\left(\frac{V^{1-\k}}{-U}\right)^\lambda\rnm^2_{\LT^2(\R_{U,V})}+C(B)\epsilon^{2-2\delta}\lnm \nabla^6\eta\left(\frac{V^{1-\k}}{-U}\right)^\lambda\rnm^2_{\LT^2(\R_{U,V})}.
    \end{aligned}
\end{equation*} 
From the fundamental properties of elliptic estimates, we similarly deduce
\begin{equation*}
    \lnm  \nabla^6\eta\left(\frac{V^{1-\k}}{-U}\right)^\lambda\rnm^2_{\LT^2(\R_{U,V})}\lesssim \lnm  \nabla^5(\sigma^r, K,\mu)\left(\frac{V^{1-\k}}{-U}\right)^\lambda\rnm^2_{\LT^2(\R_{U,V})},
\end{equation*}
so by explicitly selecting an $\epsilon$ that inherently satisfies $C(B)\epsilon^{2-2\delta}<1/4$, we guarantee the validity of the final requested estimate:
\begin{equation}
    \lnm \Omega^{-1}\nabla^5\mu\left(\frac{V^{1-\k}}{-U}\right)^\lambda\rnm^2_{\LT^2(\Hb_V^U)}\lesssim \epsilon^{2-2\delta}V^a.
\end{equation}
\end{proof}
\begin{remark}
    Because 
    $\lnm \Omega^{-1}\nabla^6\psi\rnm^2_{\LT^2(\Hb_V^U)}=\int_{U_0(V)}^U \frac{1}{V^{1-\k}}\lnm \nabla^6\psi\rnm^2_{\LT^2(S_{U^\prime,V})}$,
    we use $\Omega^{-1}\nabla^6\psi$ to cancel the $\Omega^2$ in the definition of $\LT^2(\Hb)$ and we can thus apply elliptic estimate without $\nabla\Omega$ appearing.
\end{remark}

\section{Asymptotic Analysis near Inner Cauchy Horizon}\label{Further Result near Cauchy Inner Horizon}
This final section extracts the geometric consequences of the estimates proved in the three regions.  The first point is that the constructed spacetime reaches the limiting inner boundary in finite affine time, so the singularity is visible from the regular region.  The second point is more precise: the scalar field develops a transverse blow-up at the inner Cauchy horizon strong enough to obstruct the expected higher regularity extensions.

Combining the local existence results and the a priori estimates, we establish the following proposition.
\begin{proposition}
    With the initial data prescribed as \eqref{334466}, the spacetime $(\mathcal{M},g,\phi)$ can be solved globally and contains a naked singularity.
\end{proposition}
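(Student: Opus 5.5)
The proof splits into two parts. First, global existence follows from a continuity argument that combines the local existence theory with the a priori estimates. Second, the naked singularity follows from an affine-length bound along $e_3$-generators ending on $H_{-1}$.

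For global existence, I start from Corollary \ref{Local existence of CIVP of ESE cor}. It applies to the data \eqref{334466} near $S_{-1,0}$; near $v=0$ one works in the renormalized gauge $\hat v=v^{1-\k}$, where the data are regular. Let $\mathcal{A}$ be the set of $(u_*,v_*)$ such that a smooth solution exists on $\{-1\le u\le u_*,\,0\le v\le v_*\}$ and satisfies the bootstrap assumptions \eqref{RI_Bootstrap_Assumptions}, \eqref{RII_Bootstrap_Assumptions} and \eqref{R_III_BOOTSTRAP_ASSUMPTIONS_1}--\eqref{R_III_BOOTSTRAP_ASSUMPTIONS_3}, each with constant $2C_0$, in the portion of each region $\R_I,\R_{II},\R_{III}$ it meets.

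Propositions \ref{Main Proposition R_I}, \ref{Main Proposition R_II} and the Region III proposition improve these constants to $C_0$. The parameters must be fixed in the right order: first $\delta,a,\tau$; then $\epsilon_1$; then $D=D(\epsilon_1,a)$; finally $\epsilon$. The interface estimates, Corollary \ref{R_I End Values} and the lemma at the start of Region III, feed each region's data into the next.

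The improved bounds control the Ricci coefficients, curvature and $D\phi$ in $H^5$ uniformly on every sphere with $u_*<0$. They do not control the metric degeneracy near $u=0$ uniformly, but that is not needed for continuation at a fixed $u_*<0$. Sobolev embedding gives pointwise bounds, and by Rendall's theorem (Theorem \ref{Rendall_thm}) the solution extends past any boundary sphere with $u_*<0$, $v_*<\infty$. Hence $\mathcal{A}$ is open and closed, and the solution exists on $[-1,0)\times[0,\infty)\times\SS$.

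For the naked singularity, fix $u_0=-1$ and take points $p_i\in H_{-1}$ with $V(p_i)\to\infty$. The incoming generators are integral curves of $e_3$, parametrized by $U$. An affine parameter $s$ satisfies $ds/dU\propto\Omega^2(U,V)\,\Omega^{-2}(-1,V)$, with constant fixed so that $ds/dU=1$ at $H_{-1}$. The task is therefore to bound
\[\int_{-1}^0\Omega^2(U,V)\Omega^{-2}(-1,V)\,dU\]
uniformly in $V$.

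Along each generator,
\[\partial_V\log\Omega^2=-4\,\Omega\omega\quad\text{and}\quad\partial_U\log\Omega^2=-4\,\Omega\omegab .\]
Integrating the $U$-equation from $U=-1$ gives $\log\big(\Omega^2(U,V)/\Omega^2(-1,V)\big)$ in terms of $\Omega\omegab$. In the Christodoulou background this ratio is explicit and behaves like $(-U)^{-\k/(1-\k)}$, which is integrable near $U=0$ since $\k<1/3$.

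The perturbed ratio is compared with the background one using the $\ol{\Omega\omega}$ estimates. One integrates $\partial_V$ of the difference and uses $\lnm\ol{\Omega\omega}\rnm_{\LS_1^\infty}\lesssim\epsilon^{1-\delta}$ from Region III (the corollary after Proposition \ref{R III L2S est of varphi_1 Phi_1}), together with Propositions \ref{R III psi_2 Omega b W32 est} and \ref{Main Proposition R_II}. This should yield a bound of the form
\[\Big|\ol{\log\big(\Omega^2(U,V)\Omega^{-2}(-1,V)\big)}\Big|\le \frac{\epsilon^{1/4}}{1-\k}\,\log\frac{1}{-U}+\epsilon^{1/4},\]
uniformly in $V$, as indicated in the introduction. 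Then $\Omega^2(U,V)\Omega^{-2}(-1,V)\lesssim(-U)^{-(\k+\epsilon^{1/4})/(1-\k)}$, which stays integrable for $\epsilon$ small. So the affine lengths are uniformly bounded, $\mathcal{I}^+$ is incomplete, and the singularity is naked in the sense of Definition \ref{def:naked_singularity}.

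The main obstacle is this uniform-in-$V$ lapse comparison in $\R_{III}$. The available bound on $\ol{\Omega^2}$ carries a growing factor $(V^{1-\k}/(-U))^{\lambda}$ (Proposition \ref{R III psi_2 Omega b W32 est}), so a naive integration loses. My first approach is to work with $\log\Omega^2$ rather than $\Omega^2$ and integrate the $V$-transport $\partial_V\ol{\log\Omega^2}=-4\ol{\Omega\omega}$ from the boundary of $\R_{II}$. The small pointwise bound on $\ol{\Omega\omega}$, which has signature zero so that $V|\ol{\Omega\omega}|\lesssim\epsilon^{1-\delta}$, integrates in $dV/V$ only to a logarithm. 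Along the boundary $z=\epsilon_1^{-1}$ that logarithm converts into $\epsilon^{1/4}\log(1/(-U))$, which is exactly the harmless power loss in $(-U)$ shown above.
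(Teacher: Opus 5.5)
Your continuity argument for global existence is a reasonable expansion of what the paper only asserts. Your plan for incompleteness, namely bounding $\int_{-1}^0\Omega^2(U,V)\Omega^{-2}(-1,V)\,dU$ uniformly in $V$ by $V$-integrating $\ol{\Omega\omega}$, is also the paper's. The gap is exactly at the step you flag. With only $V|\ol{\Omega\omega}|\lesssim\epsilon^{1-\delta}$, integrating $\partial_V\ol{\log\Omega^2}=-4\ol{\Omega\omega}$ along the cone $U$ from $V_0(U)=\epsilon_1^{-1}(-U)^{1/(1-\k)}$ gives $\epsilon^{1-\delta}\log(V/V_0(U))=\epsilon^{1-\delta}\big(\tfrac{1}{1-\k}\log\tfrac{1}{-U}+\log(\epsilon_1V)\big)$. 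Only the first term is the harmless $(-U)$-loss. The second comes from $V'\in[\epsilon_1^{-1},V]$, where the past boundary of $\R_{III}$ no longer moves, and it is unbounded as $V\to\infty$. Subtracting the same computation on $U=-1$ removes it only if $\ol{\Omega\omega}(U,V')-\ol{\Omega\omega}(-1,V')$ is integrable in $dV'$ at infinity, i.e.\ decays faster than $V'^{-1}$, and nothing in your sketch provides this. Without it, $\Omega^2(U,V)\Omega^{-2}(-1,V)$ is controlled only up to a factor $V^{C\epsilon^{1-\delta}}$. So the affine lengths are not shown to be uniformly bounded, which is what Definition \ref{def:naked_singularity} requires.

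The missing ingredient is the incoming equation for $\Omega\omega$. In $\Omega^{-1}\nabla_3(\Omega\omega)$ the $\omegab$ term cancels. Its sources ($K$, $\tr\chi\tr\chib$, $\eta\cdot\etab$, $|\etab|^2$, products of $D\phi$, \dots) have signature $-1$. Their differences are bounded in $\R_{III}$ by $\epsilon^{1/3}V^{-2}(V^{1-\k}/(-U))^{\lambda}$, where the $\lambda$-loss comes from Proposition \ref{R III psi_2 Omega b W32 est}. Integrating in $U$ over $[U_0(V),U]$ with $\Omega^2\sim V^\k$ gives
\[\big|\ol{\Omega\omega}(U,V)-\ol{\Omega\omega}(U_0(V),V)\big|\lesssim\epsilon^{1/3}(-U_0(V))^{1-\lambda}V^{-1-(1-\k)(1-\lambda)}.\]
This is the self-similar gain $-U/V^{1-\k}\ll 1$, and the bound is $dV$-integrable. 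The paper then integrates $\partial_V\ol{\log(\Omega^2(U,V)\Omega^{-2}(U_0(V),V))}$ from $V_0(U)$, where this quantity vanishes. For $V<\epsilon_1^{-1}$ the derivative is $\le\epsilon^{1/4}/V$; this includes the $\partial_VU_0$-terms carrying $\ol{\Omega\omegab}$ and $b$ on the boundary of $\R_{II}$. That piece yields $\tfrac{\epsilon^{1/4}}{1-\k}\log\tfrac{1}{-U}$. For $V\ge\epsilon_1^{-1}$, where $U_0\equiv-1$, only the decaying difference remains, and it contributes $O(\epsilon^{1/4})$. Your own splitting closes once you add this estimate, taking the interface values of $\ol{\log\Omega^2}$ from the Region II bounds.

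Separately, in the $(U,V)$ gauge the background ratio $(\Omega^2)^c(U,V)/(\Omega^2)^c(-1,V)$ is bounded above and below (indeed $1+o(1)$), since $(\Omega^c)^2\sim V^\k$ in $\R_{III}$. The factor $(-U)^{-\k/(1-\k)}=(-u)^{-\k}$ is the $(u,v)$-gauge lapse ratio, which goes with $du$ rather than $dU$. This slip is harmless for integrability, but keep the gauges consistent.
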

\begin{proof}
    Based on the preceding estimates, we choose $\{U=-1\}$ as the asymptotically flat hypersurface. Since $D_{\Omega^{-1}e_4}(\Omega^{-1}e_4)=0$, the companion null vector field along $H_{-1}$ is given by $\Omega e_3$, and the null geodesic parameterized by the affine parameter $s$ is expressed as 
    \begin{equation}
        \gamma_{(-1,V,\theta)}(s)=\exp_{(-1,V,\theta)}\left(s\cdot (\Omega e_3)_{(-1,V,\theta)}\right).
    \end{equation}
    Given that $D_{\Omega^{-1}e_3}(\Omega^{-1}e_3)=0$, we deduce that $\partial_s=(\Omega^2)|_{U=-1}\Omega^{-1}e_3$, which implies $\partial_u s=\Omega^2(\Omega^{-2})|_{U=-1}$. For $V\geq \epsilon_1^{-1}$, the affine length satisfies the relation
    \begin{equation}
        s(0,V,\theta)-s(-1,V,\theta)= \int_{-1}^0 \Omega^2(U^\prime,V,\theta)\Omega^{-2}(-1,V,\theta).
    \end{equation}
    
    \begin{equation}
        \left|\ol{\Omega\omega}(U,V)-\ol{\Omega\omega}(U_0(V),V)\right|\lesssim \epsilon^{1/3}\int_{-1}^U\frac{1}{V^{2-\k}}\left(\frac{V^{1-\k}}{-U^\prime}\right)^{\lambda}\lesssim \epsilon^{1/3}\frac{(-U_0(V))^{1-\lambda}}{V^{1+(1-\k)(1-\lambda)}}.
    \end{equation}
Applying the chain rule yields
\begin{equation}
    \begin{aligned}
        &\partial_V\ol{\log\left(\Omega^2(U,V)\Omega^{-2}(U_0(V),V)\right)}\\
        =&-4\left(\ol{\Omega\omega}(U,V)-\ol{\Omega\omega}(U_0(V),V)\right)+\left(4\ol{\Omega\omegab}(U_0(V),V)-\ol{b\nabla\log\Omega^{-2}}(U_0(V),V)\right)\partial_V U_0.
    \end{aligned}
\end{equation}
    In the regime where $V<\epsilon_1^{-1}$, the following estimate holds: 
    \begin{equation}
        \begin{aligned}
            \left|\ol{\log\left(\Omega^2(U,V)\Omega^{-2}(U_0(V),V)\right)}\right|=&\left|\int_{V_0(U)}^V \partial_V\ol{\log\left(\Omega^2(U,V^\prime)\Omega^{-2}(U_0(V^\prime),V^\prime)\right)}dV^\prime\right|\\
            \leq & \epsilon^{1/4}\int_{V_0(U)}^V \frac{1}{V^\prime}dV^\prime\leq \epsilon^{1/4}\log\left(\frac{\epsilon_1^{-1}}{V_0(U)}\right).
        \end{aligned}
    \end{equation}
    Conversely, for $V\geq \epsilon_1^{-1}$, we deduce that 
    \begin{equation}
        \begin{aligned}
            \left|\ol{\log\left(\Omega^2(U,V)\Omega^{-2}(U_0(V),V)\right)}\right|
            \leq & \epsilon^{1/4}\int_{V_0(U)}^{\epsilon_1^{-1}} \frac{1}{V^\prime}dV^\prime+\epsilon^{1/4}\int_{\epsilon_1^{-1}}^V \frac{1}{(V^\prime)^{1+(1-\k)(1-\lambda)}}.
        \end{aligned}
    \end{equation}
    Consequently, for any $V\in (0,\infty)$, we arrive at the estimate 
    \begin{equation}
        \left|\ol{\log\left(\Omega^2(U,V)\Omega^{-2}(U_0(V),V)\right)}\right|\leq \log\left((-U)^{-\frac{\epsilon^{1/4}}{1-\k}}\exp({\epsilon^{1/4}})\right).
    \end{equation}
    Since $(\Omega^{2})^c(U,V)(\Omega^{-2})^c(U_0(V),V)=1+o(1)$, we thereby obtain the following estimate, which is uniform in $V$:
    \begin{equation}
        \frac{1}{2}(-U)^{-\frac{\epsilon^{1/4}}{1-\k}}\leq \Omega^{2}(U,V)\Omega^{-2}(U_0(V),V) \leq 2(-U)^{-\frac{\epsilon^{1/4}}{1-\k}}.
    \end{equation}
    Finally, calculating the affine length yields $s(0,V,\theta)-s(-1,V,\theta)=1+o(1)$, demonstrating that the future null infinity is incomplete.
\end{proof}

The subsequent proposition demonstrates that the $C^{1,\frac{\k+\delta}{1-\k}}$ continuity cannot be further improved near the Cauchy horizon.
\begin{proposition}
    For any fixed $\delta>0$, provided that $\epsilon_1=\epsilon_1(\delta)$ and $\epsilon=\epsilon(\epsilon_1,\delta)$ are chosen sufficiently small, we obtain the following upper bound for $(\Omega^{-1}e_3)^2\phi$:
    \begin{equation}
        V^{2-2S(\Omega^{-1}\nabla_3\Omega^{-1}\nabla_3\phi)}\left|(\Omega^{-1}e_3)^2\phi\right|^2(U,V)\lesssim \epsilon^{2-4\delta} \left(\frac{V}{(-U)^{\frac{1}{1-\k}}}\right)^{2-4\k+2\delta},
    \end{equation}
    and its blow-up rate as $U\rightarrow 0$,
    \begin{equation}
        \lim_{U\rightarrow 0}(-U)^{\frac{2-4\k-2\delta}{1-\k}} V^{2-2S(\Omega^{-1}\nabla_3\Omega^{-1}\nabla_3\phi)}\left|(\Omega^{-1}e_3)^2\phi\right|^2(U,V)=\infty.
    \end{equation}
\end{proposition}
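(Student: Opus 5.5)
Set $\Theta:=\Omega^{-2}(\Omega^{-1}e_3)^2\phi$, or more conveniently $\Theta:=\Omega^{-1}\nabla_3(\Omega^{-1}D_3\phi)$; the two differ by a term $\Omega^{-1}\omegab\cdot\Omega^{-1}D_3\phi$, which we bound separately. The plan is to split $\Theta=\Theta^c+\ol{\Theta}$. Both bounds will then follow from two facts. The first is the exact background asymptotics of Theorem \ref{thm:background_properties}(5): it gives $V^{2-2S}|\Theta^c|^2\sim z^{2-4\k}$ with $z=V/(-U)^{1/(1-\k)}$, after converting from the $(u,v)$ gauge to $(U,V)$ with the transformation rules listed at the start of Region II. The second is an estimate showing that $\ol\Theta$ grows at most like $\epsilon^{1-2\delta}z^{1-2\k+\delta}$. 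The upper bound is then immediate. For the lower bound, fix $V$ and let $U\to0$, so that $z\to\infty$. Then $|\Theta|\ge|\Theta^c|-|\ol\Theta|\gtrsim z^{1-2\k}-\epsilon^{1-2\delta}z^{1-2\k+\delta}$.

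This lower bound is not yet good enough, because the error is larger by a factor $z^{\delta}$. So the estimate for $\ol\Theta$ must be sharpened to $o(z^{1-2\k})$, or at least to $\epsilon^{1/2}z^{1-2\k}$. Granted that, $(-U)^{(2-4\k-2\delta)/(1-\k)}z^{2-4\k}\sim V^{2-4\k}(-U)^{-2\delta/(1-\k)}\to\infty$, which is the second claim. Equivalently, it suffices to prove $|\ol\Theta|\lesssim\epsilon^{1-2\delta}z^{1-2\k+\delta'}$ for some $\delta'<\delta$ and then absorb; this is achieved by choosing $\tau$ and $\lambda$ small relative to $\delta$.

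To estimate $\ol\Theta$, I will not derive a $\nabla_3$ equation for it, since no such equation is available. Instead I would use the $\Omega\nabla_4$ evolution equation for $\Omega^{-1}\nabla_3(\Omega^{-1}D_3\phi)$. This is the $(U,V)$-gauge analogue of Proposition \ref{Appen_A_eq_D3D3phi}, used at the end of Region I. Schematically,
\[
\Omega\nabla_4\ol\Theta+\Big(\tfrac12\Omega\tr\chi-8\Omega\omega\Big)\ol\Theta=-\tfrac12\ol{\Omega\tr\chi}\,\Theta^c+\nabla^2(\Omega^{-1}D_3\phi)+\varphi\nabla\varphi+\ol{\varphi^3}+\ol{\varphi\Psi},
\]
where $\varphi$ ranges over $\Omega^{-1}\chib,\Omega\chi,\Omega\omega,\Omega^{-1}\omegab,\eta,\etab$ and the derivatives of $\phi$. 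I would integrate this along $H_U$ from the past boundary $V_0(U)$, that is, from Region II. The initial value there is $O(\epsilon^{1-\delta})$ by Proposition \ref{Main Proposition R_II} combined with \eqref{R_I_L2_est_D3D3phi} on Region I.

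In this transport, the homogeneous coefficient $V(\tfrac12\Omega\tr\chi-8\Omega\omega)\to1+2\k$ by Lemma \ref{R_III_Chr_sol_lemma}. This provides damping in $V$. The weight $z^{-(1-2\k)}$, however, grows like $V^{-(1-2\k)}$ at fixed $U$, so one must check that the exponent bookkeeping in the transport estimate \eqref{R III transport estimate nabla 4}, applied in $\LS_1$ and then in $L^\infty$ via Sobolev, actually closes. The source terms are handled as follows. The term $\ol{\Omega\tr\chi}\,\Theta^c$ is bounded by $\epsilon^{1-\delta}z^{1-2\k}$ times the pointwise smallness of $\ol{\Omega\tr\chi}$, which comes from the corollary after Proposition \ref{R III L2S est of varphi_1 Phi_1}. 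The remaining terms are lower order; they are controlled by Propositions \ref{R III L2S est non top order}, \ref{R III psi_1 W32 est}, \ref{R III psi_2 Omega b W32 est} with Sobolev, and they grow at most like $z^{C\tau}$. The quantity $\Omega^{-1}\omegab$ was avoided in Region III, but a crude estimate suffices for it: its $\Omega\nabla_4$ equation has curvature source already bounded, so it grows at most like a small power of $z$.

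The main obstacle is precisely this sharpening. The forcing term $\ol{\Omega\tr\chi}\Theta^c$ is of the same order $z^{1-2\k}$ as the main term, so smallness must come entirely from $\epsilon$ and not from $z$-decay. Meanwhile the $z^{C\tau}$ losses from $\eta$, $b$ and $\Omega^2$ (Proposition \ref{R III psi_2 Omega b W32 est}) must be kept below $z^{\delta}$. First I would try choosing $\tau,\lambda\ll\delta$ and integrating in $V$ with weight $V^{1+2\k}$. If the $z^{C\tau}$ factors fail to integrate, the fallback is to exploit the fact that $\Omega\nabla_4$ at fixed $U$ increases $z$ only through $V$, and to compare directly with the background ODE obeyed by $\Theta^c$.
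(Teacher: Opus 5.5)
\textbf{Upper bound.} Your treatment matches the paper. You integrate the $\Omega\nabla_4$ equation of Proposition~\ref{Appen_A_eq_D3D3phi} for the difference from the past boundary of Region III, where the difference is $O(\epsilon^{1-\delta})$, with damping rate at least $(1+2\kappa-\delta)/V$. This gives $|\ol{\Theta}|\lesssim \epsilon^{1-2\delta}V^{-2}z^{1-2\kappa+\delta}$. Note that a bound with prefactor $\epsilon^{2-4\delta}$ can only hold for the difference $\ol{(\Omega^{-1}e_3)^2\phi}$, which is what the paper proves. The background alone contributes $z^{2-4\kappa}$, so for the full quantity the upper bound is not ``immediate''.

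\textbf{Lower bound: genuine gap.} Your route needs $|\ol\Theta|\le \epsilon^{1/2}|\Theta^c|$ uniformly as $z\to\infty$, and nothing available gives this. The forcing is $(-\tfrac12\ol{\Omega\tr\chi}+8\ol{\Omega\omega})\Theta^c$; you dropped the $\omega$ part. Region III only yields $|\ol{\Omega\tr\chi}|+|\ol{\Omega\omega}|\lesssim\epsilon^{1-\delta}V^{-1}$, with no decay in $z$. After multiplying by $V^{1+2\kappa}$ the integrand is therefore $\epsilon^{1-\delta}V'^{-1}(-U)^{-\frac{1-2\kappa}{1-\kappa}}$. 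Integrating from $V_0(U)$ gives $|\ol\Theta|\lesssim\epsilon\log(\epsilon_1 z)\,|\Theta^c|$, which is unbounded as $U\to0$ at fixed $V$.

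This is structural, not a weak estimate. An $O(\epsilon)$ error in the damping rate shifts the power of $V$ by $O(\epsilon)$, so $\Theta/\Theta^c$ can drift like $z^{\pm C\epsilon}$. That multiplicative effect cannot be captured by an additive bound uniform in $z$.

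Your ``equivalent'' weaker version is also false. An error $\epsilon z^{1-2\kappa+\delta'}$ with any $\delta'>0$ eventually dominates $|\Theta^c|\sim z^{1-2\kappa}$. Then $|\Theta^c|-|\ol\Theta|$ gives nothing, however small $\tau,\lambda$ are chosen.

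\textbf{The missing idea.} Prove the lower bound for the full $\Theta$ directly, not via $|\Theta^c|-|\ol\Theta|$; this is what the paper does.

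\begin{itemize}
\item Compare with the background only on the past boundary $z=\epsilon_1^{-1}$ of Region III. There $z$ is bounded, so the small difference gives $|\Theta(U,V_0(U))|\gtrsim V_0^{-2}\epsilon_1^{-(1-2\kappa)}$.
\item For $\epsilon,\epsilon_1$ small, $V(\tfrac12\Omega\tr\chi-8\Omega\omega)\le 1+2\kappa+\delta/4$ by Lemma~\ref{R_III_Chr_sol_lemma} and the Region III bounds.
\item With the weight $V^{2+4\kappa+\delta}$ and Cauchy--Schwarz you get $\partial_V\bigl(V^{2+4\kappa+\delta}|\Theta|^2\bigr)\ge -C_\delta V^{3+4\kappa+\delta}|F|^2$. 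The slack $\delta$ absorbs all $O(\epsilon)$ errors in the coefficient.
\item Bound the source crudely by $|F|\lesssim V^{-3}z^{C\tau}$. Its integral is dominated by the lower endpoint and is smaller than the boundary term by a positive power of $\epsilon_1$. Alternatively, as in the paper, it is negligible once $-U$ is small at fixed $V$.
\end{itemize}

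This yields $V^4|\Theta|^2\gtrsim\epsilon_1^{-\delta}V^{2-4\kappa-\delta}(-U)^{-\frac{2-4\kappa-\delta}{1-\kappa}}$. That exceeds $(-U)^{-\frac{2-4\kappa-2\delta}{1-\kappa}}$ by a factor $(-U)^{-\delta/(1-\kappa)}\to\infty$. The $\delta$-loss in the statement exists precisely to pay for this multiplicative loss.

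\textbf{Minor points.} $(\Omega^{-1}e_3)^2\phi=\Omega^{-1}e_3(\Omega^{-1}e_3\phi)$ exactly, so no $\Omega^{-1}\omegab$ correction arises; this matters because Region III never controls $\Omega^{-1}\omegab$. Also, the $\eta$-terms cancel in Proposition~\ref{Appen_A_eq_D3D3phi}, which is why $F$ has only mild growth in $z$.
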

\begin{proof}
    According to \eqref{R_I_L2_est_D3D3phi}, upon performing a gauge transformation from $\R_I$ to $\R_{II}$, the difference in $(\Omega e_3)^2\phi$ along the hypersurface $z=\epsilon_1$ remains small, satisfying:
    $$\lnm\ol{(\Omega e_3)^2\phi}\rnm^2_{\LS_1^\infty\left(S_{-(V\epsilon_1^{-1})^{1-\k},V}\right)}\lesssim \epsilon^{2-\delta}.$$ 
    By Proposition \ref{Appen_A_eq_D3D3phi} in Appendix A, the term $(\Omega^{-1}e_3)^2\phi$ satisfies the evolution equation
    \begin{equation}
        \begin{aligned}
            \Omega e_4(\Omega^{-1}e_3)^2\phi+\frac{1}{2}\Omega\tr\chi (\Omega^{-1}e_3)^2\phi-8\Omega\omega(\Omega^{-1}e_3)^2\phi=\Delta\Omega^{-1}e_3\phi+\varphi\nabla\varphi+\varphi^3+\varphi\cdot\Psi:=F.
        \end{aligned}
    \end{equation}
    Here $\varphi\in\{\etab,\Omega\chi,\Omega^{-1}\chib,\nabla\phi,\Omega^{-1}D_3\phi,\Omega D_4\phi\}$ and $\Psi\in\{\Omega^{-1}\betab^r,K^r\}$. 
    The transport estimate \eqref{R III transport estimate nabla 4} in $\R_{II}$ readily implies that on the future boundary of $\R_{II}$, the following inequality holds: 
    \begin{equation}
        \lnm \ol{(\Omega e_3)^2\phi}\rnm^2_{\LS_1^\infty\left(S_{-(\epsilon_1 V)^{1-\k},V}\right)}\lesssim \epsilon^{2-2\delta}.
    \end{equation}
    Since the background solution satisfies
    \begin{equation}      \left|\left((\Omega^{-1}e_3)^2\phi\right)^c\right|\sim \frac{1}{V^{2}}\left(\frac{V}{(-U)^{\frac{1}{1-\k}}}\right)^{1-2\k},
    \end{equation}
    we thus deduce 
    \begin{equation}
        \partial_V \left|\ol{(\Omega^{-1}e_3)^2\phi}\right|+\frac{1+2\k-\delta}{V}\left|\ol{(\Omega^{-1}e_3)^2\phi}\right|\lesssim \epsilon^{1-\delta/2}\frac{1}{V^3}\left(\frac{V}{(-U)^{1/(1-\k)}}\right)^{1-2\k},
    \end{equation}
    and upon integration, this yields 
    \begin{equation}
        \begin{aligned}
            \left|\ol{(\Omega^{-1}e_3)^2\phi}\right|(U,V)\lesssim & \epsilon^{1-\delta}V^{-1-2\k+\delta} V_0(U)^{-1+2\k-\delta}+\epsilon^{1-\delta/2}V^{-1-2\k+\delta}V_0(U)^{-\delta}(-U)^{-\frac{1-2\k}{1-\k}}\\
            \leq & \epsilon^{1-2\delta} V^{-2}\left(\frac{V^{1-\k}}{-U}\right)^{\frac{1-2\k+\delta}{1-\k}}.
        \end{aligned}
    \end{equation}
    Next, we establish the lower bound for $(\Omega^{-1}e_3)^2\phi$:
    \begin{equation*}
        \begin{aligned}
            &V^{2-2S((\Omega e_3)^2\phi)}\left|(\Omega^{-1}e_3)^2\phi\right|^2\cdot V^{2-4\k+\delta}\left(\frac{-U}{V^{1-\k}}\right)^{\frac{2-4\k}{1-\k}}\\
            \geq & \left(V^{\delta+2-2S((\Omega e_3)^2\phi)}\left|(\Omega^{-1}e_3)^2\phi\right|^2\right)\left(U,V_0(U)\right)\cdot \left(-U\right)^{\frac{2-4\k}{1-\k}}\\
            &+\frac{\delta}{2} \int_{V_0(U)}^V  {V^\prime}^{2-2S((\Omega e_3)^2\phi)}\left|(\Omega^{-1}e_3)^2\phi\right|^2\cdot {V^\prime}^{\delta-1}\left(-U\right)^{\frac{2-4\k}{1-\k}}-\frac{4}{\delta}\int_{V_0(U)}^V  {V^\prime}^{2-2S}\left|F\right|^2\cdot {V^\prime}^{\delta-1}\left(-U \right)^{\frac{2-4\k}{1-\k}}\\
            \geq & \frac{1}{C_0}V_0(U)^\delta+\frac{\delta}{2} \int_{V_0(U)}^V  {V^\prime}^{2-2S((\Omega e_3)^2\phi)}\left|(\Omega^{-1}e_3)^2\phi\right|^2\cdot {V^\prime}^{\delta-1}\left(-U\right)^{\frac{2-4\k}{1-\k}}-C(\delta) V^{\delta}\left(-U\right)^{\frac{2-4\k}{1-\k}}.
        \end{aligned}
    \end{equation*}
    Fixing $V_1$ and taking $-U$ sufficiently small such that $(-U)^{\frac{\delta+4\k-2}{1-\k}}\gg V_1^{\delta}$, we find that for $V\in[V_0(U),V_1)$, the following relation holds: 
    \begin{equation*}
         V^{2-2S((\Omega e_3)^2\phi)}\left|(\Omega^{-1}e_3)^2\phi\right|^2\cdot V^{2-4\k+\delta}\left(\frac{-U}{V^{1-\k}}\right)^{\frac{2-4\k}{1-\k}}\geq \frac{1}{C_0}\frac{V_0(U)^\delta}{V^\delta}(-U)^{-\frac{2-4\k}{1-\k}}\geq \frac{1}{C(\epsilon_1)}V^{-\delta}(-U)^{-\frac{2-4\k-\delta}{1-\k}}.
    \end{equation*}
    Taking the limit as $-U \to 0$ for a fixed $V$ yields the desired result.

\end{proof}

\begin{remark}[Exact scale-invariant normalization]
Since $S(\phi)=1$ and $S(\Omega^{-1}\nabla_3\psi)=S(\psi)-1$, we have
$S(\Omega^{-1}\nabla_3\Omega^{-1}\nabla_3\phi)=-1$.  Thus the prefactor in (8.9)--(8.10) is $V^{2-2S}=V^4$, and for fixed $V>0$ this normalization does not affect the $U\to0$ divergence.
\end{remark}

\begin{remark}[$C^{1,\alpha}$-inextendibility]
Taking square roots in (8.10) gives
\[
    (-U)^{\frac{1-2\k-\delta}{1-\k}}
    \left|(\Omega^{-1}e_3)^2\phi\right|(U,V)
    \longrightarrow\infty
\]
for each fixed $V>0$.  A $C^{1,\alpha}$ extension across $U=0$ would force
$(-U)^{1-\alpha}|(\Omega^{-1}e_3)^2\phi|$ to remain bounded along the generator.  Comparing the exponents gives
\[
    1-\alpha=\frac{1-2\k-\delta}{1-\k},\qquad
    \alpha=\frac{\k+\delta}{1-\k}.
\]
Hence the Cauchy horizon is not $C^{1,\frac{\k+\delta}{1-\k}}$ extendible, and, after taking $\delta>0$ arbitrarily small, this is the stated $C^{1,\frac{\k}{1-\k}+}$ obstruction.
\end{remark}

\newpage

\appendix
\section{Derivation of equations}\label{appen_derive_eq}

This appendix records the structural identities used earlier: the self-similar system on the initial hypersurface, the difference equation for $\Omega\tr\chib$, and the transport identities for the scalar and torsion quantities.

{We begin with the self-similar identities that organize the later appendix computations.}

\begin{proposition}\label{Self similar equation near v equal 0}
    Along the initial hypersurface $u=-1$ for $v\geq 0$, the assumed self-similar spacetime geometry inherently satisfies the following coupled system of equations:
\begin{equation}\label{Equation_set}
    \left\{
    \begin{aligned}
        v\partial_v(\Omega^{-1}\chih_{AB})=&-\Lie_b(\Omega^{-1}\chih_{AB})+\l\frac{1}{2}\dv b+4v\Omega\omega-2\Lie_b\log\Omega\r\Omega^{-1}\chih_{AB}\\
        &+\frac{1}{2}(\nabla\hat\otimes b)^C_{\  A}(\Omega^{-1}\chih)_{BC}+ \frac{1}{2}(\nabla\hat\otimes b)^C_{\  B}(\Omega^{-1}\chih)_{AC}  +(\nabla\hat\otimes\eta)_{AB}+(\eta\hat\otimes\eta)_{AB}\\
        &-\frac{1}{2}(\nabla\hat\otimes b)_{AB}(\Omega^{-1}\tr\chi)+D_A\phi D_B\phi-\frac{1}{2}g_{AB}D_C\phi D^C\phi\\
        &+v\Omega^{2}\ \Omega^{-1}\tr\chi\ \Omega^{-1}\chih_{AB}+v\Omega^2\Omega^{-1}\chih_{AC}\Omega^{-1}\chih^{C}_{\ B}+v\Omega^2\Omega^{-1}\chih_{BC}\Omega^{-1}\chih^{C}_{\ A},\\
        v\partial_v(\Omega^{-1}\tr\chi)=&-\Lie_b(\Omega^{-1}\tr\chi)+\Omega^{-1}\tr\chi(1+\k-\dv b-2\Lie_b\log\Omega)-2K+2\dv\eta+2|\eta|^2\\
        &+D_A\phi D^A\phi-v\Omega^2\ (\Omega^{-1}\tr\chi)^2,\\
        v\partial_v(v^\k\Omega^2)=&v^\k\Omega^2(\k-4v\Omega\omega),\\
        v\partial_v b^A=&2v\Omega^2(\etab^A-\eta^A),\\
    v\partial_vg_{AB}=&2v\Omega^2\ \Omega^{-1}\chi_{AB},\\
        v\partial_v(v\Omega\omega)=&-b^A\nabla_A(v\Omega\omega)+v\Omega^2\l 2\nabla\log\Omega\cdot(\nabla\log\Omega-\eta)-\eta\cdot(2\nabla\log\Omega-\eta)\\
        &+\frac{1}{2}|\eta|^2+\frac{1}{2}\rho+\frac{1}{6}\Omega D_3\phi \ \Omega^{-1}D_4\phi+\frac{1}{12}D_A\phi D^A\phi\r,\\
    v\Lie_v\zeta_A=&-\Lie_b\zeta_A-\left(-\frac{2}{-u}+\dv b+\frac{v}{-u}\Omega\tr\chi\right)\zeta_A\\
            &-2(\nabla\hat\otimes b +\frac{v}{-u}\Omega\chih)_A^C\nabla_C\log\Omega-\frac{\k}{-2u}-\Lie_b\log\Omega-2\left(\frac{v}{-u}\Omega\omega-\frac{\k}{-4u}\right)\\
            &-\frac{1}{2}\dv\left(\nabla\hat\otimes b\right)-\dv\left(-\frac{v}{-u}\Omega\chih\right)+\frac{1}{2}\nabla\dv b+\frac{1}{2}\nabla\left(\frac{v}{-u}\tr\chi\right),\\
        v\partial_v(\Omega^{-1}D_4\phi)=&\l 4v\Omega\omega-\frac{1}{2}v\Omega^2\ \Omega^{-1}\tr\chi+\frac{1}{2}\dv b-b\cdot\nabla\log\Omega\r\cdot\Omega^{-1}D_4\phi-b\cdot\nabla(\Omega^{-1}D_4\phi)\\
        &+\Delta\phi+2(4\nabla\log\Omega-3\eta)\cdot\nabla\phi-\frac{1}{2}(\Omega D_3\phi+v\Omega D_4\phi)\Omega^{-1}\tr\chi,\\
    v\partial_v (\Omega D_3\phi)=&v\Omega^2\left(-\frac{1}{2}\Omega^{-1}D_4\phi\left(v\Omega^2\ \Omega^{-1}\tr\chi-2+\dv b\right)+\Delta \phi-\Omega^{-1}\tr\chi \Omega D_3\phi+2\etab\cdot\nabla\phi\right),\\v\partial_v\nabla_A\phi=&2v\Omega^2(\nabla_A\log\Omega\cdot \Omega^{-1}D_4\phi+\nabla_A(\Omega^{-1}D_4\phi)).
\end{aligned}
\right.
\end{equation}
\end{proposition}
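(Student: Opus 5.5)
The plan is to derive every line of \eqref{Equation_set} from one mechanism: the self-similarity identity for the incoming derivative, combined with the double null structure, wave and Bianchi equations of Section \ref{Preliminaries}. Recall that for a tensor scaling like $(-u)^{-a}\check F(z,\theta)$ we have
\[\Omega\Lie_{e_3}F=-\frac{a}{-u}F+\Lie_bF+\frac{v}{-u}\Lie_{\partial_v}F,\]
and at $u=-1$ this reads $v\partial_v F=\Omega\Lie_{e_3}F+aF-\Lie_bF$. So for each quantity I first assign its scaling weight $a$, read off from the scaling laws in the Corollary on constructed initial data: $\Omega^{-1}\chih$ has $a=-1$, $\Omega^{-1}\tr\chi$, $\Omega D_3\phi$, $\Omega^{-1}D_4\phi$ have $a=1$, and $\zeta,\nabla\phi$ have $a=0$. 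Then I replace $\Omega\Lie_{e_3}$ by the corresponding $\nabla_3$ equation. Converting between $\Lie_{e_3}$ and $\nabla_3$ produces the terms $\Omega\chib\cdot\psi$. These I eliminate with the algebraic relation \eqref{relation_chib_and_chi_omegab_and_omega}, $\Omega\chib=-g+\tfrac12\Lie_bg-v\Omega\chi$ at $u=-1$. That relation is the source of the $\tfrac12\dv b$ and $\nabla\hat\otimes b$ coefficients and of all the $v\Omega^2(\Omega^{-1}\chi)\cdot\psi$ corrections. The companion relation $\Omega\omegab=-v\Omega\omega+\ldots$ likewise turns $\omegab$-terms into the $4v\Omega\omega$ and $\Lie_b\log\Omega$ terms.

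In order, the steps are as follows.
\begin{enumerate}
\item The purely outgoing lines, for $v^\k\Omega^2$, $b$ and $g$, are not self-similar reductions at all. They are just $\nabla_4\log\Omega^2=-4\omega$, $\partial_vb=2\Omega^2(\etab-\eta)$ and $\Lie_vg=2\Omega\chi$, each multiplied by $v$.
\item For $\Omega^{-1}\chih$ and $\Omega^{-1}\tr\chi$, apply the above to the $\nabla_3\chih$ and $\nabla_3\tr\chi$ structure equations. The term $\tfrac12\tr\chi\,\Omega\chibh$ is handled by substituting $\Omega\chibh=\tfrac12\nabla\hat\otimes b-v\Omega\chih$.
\item For $\Omega^{-1}D_4\phi$, use the first wave equation \eqref{intrinsic_wave_equation}.
\item For $\Omega D_3\phi$ and $\nabla\phi$, no $\nabla_3$ equation is needed. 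Instead use the second wave equation, multiplied by $v\Omega^2$, together with the commutation $[\partial_v,\partial_A]=0$ and $\nabla\log\Omega=\tfrac12(\eta+\etab)$.
\item For $v\Omega\omega$, apply the identity with $a=1$ to the $\nabla_3\omega$ equation. Then rewrite $K$ through the Gauss equation \eqref{Gauss_Codazzi}, which produces $\rho$ and the $\tfrac16$, $\tfrac1{12}$ scalar-field coefficients, and express $\etab=2\nabla\log\Omega-\eta$.
\item For $\zeta$, rather than use a transport equation, apply the Codazzi equation for $\chib$ with the algebraic relation substituted for $\Omega\chib$. Combine it with $\zeta=\eta-\nabla\log\Omega$ and the self-similar differentiation of $\nabla\log\Omega$.
\end{enumerate}

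The main obstacle is bookkeeping. The $\zeta$ and $v\Omega\omega$ lines mix Codazzi/Gauss identities with the self-similar relation, and the $\k$-dependent constants ($1+\k$, $\tfrac{\k}{-2u}$, $\tfrac{\k}{-4u}$) come from the $v^\k$ normalization of $\Omega^2$. I will track these by writing $\Omega^2=v^{-\k}(v^\k\Omega^2)$ throughout, so that $v\partial_v\log\Omega^2=-\k+v\partial_v\log(v^\k\Omega^2)$. I expect the sign and index conventions in $\nabla\hat\otimes b$ acting on $\Omega^{-1}\chih$ to need a careful check against the $v=0$ limit, namely \eqref{eq_chih(-1,0)}, and I will use that limit, together with \eqref{eq_trchi(-1,0)}–\eqref{eq_D4phi(-1,0)}, as consistency checks on every line.
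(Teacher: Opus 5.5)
Your overall scheme is the paper's. The $g$, $b$ and $v^\k\Omega^2$ lines come straight from the $\partial_v$ equations. Every other line comes from converting $\Omega\Lie_{e_3}$ into $v\partial_v+\Lie_b$ by self-similarity, inserting the relevant $\nabla_3$ structure or wave equation, and eliminating $\chib$, $\omegab$, $\etab$ with the algebraic relations.

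\textbf{The $\zeta$ line.} This line does not follow from what you list. The Codazzi equation for $\chib$ in \eqref{Gauss_Codazzi} is a constraint on a single sphere and contains no $u$- or $v$-derivative of $\zeta$. Combining it with $\zeta=\eta-\nabla\log\Omega$ and the equations for $\log\Omega$ only gives $v\partial_v\nabla\log\Omega=-2\nabla(v\Omega\omega)$. It leaves $v\Lie_v\eta$, equivalently $v\Lie_v\zeta$, undetermined, so a transport equation for the torsion cannot be avoided.

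The paper proceeds as follows:
\begin{itemize}
\item It starts from the $\nabla_3\etab$ equation and writes $\etab=\nabla\log\Omega-\zeta$.
\item It computes $\Omega\nabla_3\nabla\log\Omega=-2\nabla(\Omega\omegab)-\Omega\chib\cdot\nabla\log\Omega$ by commutation.
\item It uses Codazzi only to trade $\betab^r$ for $\dv\chibh-\frac12\nabla\tr\chib$, up to $\zeta\cdot\chib$ terms.
\end{itemize}
This yields
\[
\Omega\nabla_3\zeta+\tfrac32\Omega\tr\chib\,\zeta+\Omega\chibh\cdot\zeta=\Omega\nabla_3\nabla\log\Omega-\Omega\dv\chibh+\tfrac12\Omega\nabla\tr\chib .
\]
Only then does the degree-zero self-similarity of $\zeta$ turn $\Omega\Lie_{e_3}\zeta$ into $v\Lie_v\zeta+\Lie_b\zeta$. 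The algebraic relation for $\Omega\chib$ then produces the $\dv(\nabla\hat\otimes b)$, $\nabla\dv b$, $v\Omega\chih$ and $v\tr\chi$ terms of the target. Going through $\nabla_4\eta$ instead would bring in $\beta^r$, which does not appear in that line.

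\textbf{Sign errors in your inputs.} Three signs are off and would propagate.
\begin{itemize}
\item At $u=-1$ the factor $\frac{v}{u}$ in \eqref{relation_chib_and_chi_omegab_and_omega} equals $-v$. The correct relations are $\Omega\chib=-g+\frac12\Lie_bg+v\Omega\chi$ and $\Omega\omegab=v\Omega\omega-\frac12\Lie_b\log\Omega$. You can check the first by restricting $\Lie_Sg=2g$ to the spheres, which gives $u(2\Omega\chib-\Lie_bg)+2v\Omega\chi=2g$.
\item With your $-v\Omega\chi$, the $\tr\chi$ line would end in $+v\Omega^2(\Omega^{-1}\tr\chi)^2$. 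The first term of the factor $v\Omega^2\Omega^{-1}\tr\chi-2+\dv b$ in the $\Omega D_3\phi$ line would also flip sign. These terms vanish at $v=0$, so your planned comparison with \eqref{eq_trchi(-1,0)}--\eqref{eq_D4phi(-1,0)} cannot detect this error.
\item For $F=(-u)^{-a}\check F$ one has $\partial_uF=\frac{a}{-u}F+\frac{v}{-u}\partial_vF$. So at $u=-1$ the identity is $v\partial_vF=\Omega\Lie_{e_3}F-aF-\Lie_bF$, not $+aF$. The display preceding \eqref{relation_chib_and_chi_omegab_and_omega} pairs the $(-u)^{-a}$ convention with the sign that belongs to $(-u)^{a}$, as in the introduction. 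For $\Omega^{-1}\chih$ the correct form is $\Omega\Lie_{e_3}(\Omega^{-1}\chih)=-\Omega^{-1}\chih+v\partial_v(\Omega^{-1}\chih)+\Lie_b(\Omega^{-1}\chih)$, which is what the paper uses.
\end{itemize}
The last error and the $\omegab$ sign would be caught at $v=0$. Together they give $3-\k$ instead of $1+\k$ as the zeroth-order coefficient of the $\Omega^{-1}\tr\chi$ line. The $v\Omega\chi$ sign would not be caught.
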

\begin{proof}
    To derive the evolution equation for $\zeta$, we start directly from the fundamental $\nabla_3\etab$ equation:
    \begin{equation}
        \begin{aligned}
            \Omega\nabla_3\zeta+&\frac{3}{2}\Omega\tr\chib\zeta+\Omega\chibh\zeta=\Omega\nabla_3\nabla\log\Omega-\Omega\dv\chibh+\frac{1}{2}\Omega\nabla\tr\chib\\
            =&-\Omega\chib\nabla\log\Omega-2\nabla(\Omega\omegab)-\dv\left(\Omega\chibh\right)+\frac{1}{2}\nabla\left(\Omega\tr\chib\right)-\nabla\log\Omega\cdot\chibh+\frac{1}{2}\nabla\log\Omega\cdot\tr\chib\\
            =&-2\Omega\chibh\nabla\log\Omega-2(\Omega\omegab)-\dv\left(\Omega\chibh\right)+\frac{1}{2}\nabla\left(\Omega\tr\chib\right).
        \end{aligned}
    \end{equation}
    Exploiting the essential scaling symmetry $\zeta_A(u,v)=\zeta_A(-1,v/(-u))$, we deduce
    \begin{equation}
        \begin{aligned}
            \left(v\Lie_v+\Lie_b\right)\zeta_A+&\left(-\frac{2}{-u}+\dv b+\frac{v}{-u}\Omega\tr\chi\right)\zeta_A\\
            =&-2(\nabla\hat\otimes b +\frac{v}{-u}\Omega\chih)_A^C\nabla_C\log\Omega-\frac{\k}{-2u}-\Lie_b\log\Omega-2\left(\frac{v}{-u}\Omega\omega-\frac{\k}{-4u}\right)\\
            &-\frac{1}{2}\dv\left(\nabla\hat\otimes b\right)-\dv\left(-\frac{v}{-u}\Omega\chih\right)+\frac{1}{2}\nabla\dv b+\frac{1}{2}\nabla\left(\frac{v}{-u}\tr\chi\right).
        \end{aligned}
    \end{equation}
    \providecommand{\dv}{\operatorname{div}}
\providecommand{\Lie}{\mathcal L}

We proceed to work on other equations.
On $u=-1$, the algebraic self-similar relations are
\[
\Omega\tr\chib
 =
 -2+\dv b+v\Omega^2\Omega^{-1}\tr\chi,
\]
\[
\Omega\chibh_{AB}
 =
 \frac12(\nabla\widehat\otimes b)_{AB}
 +v\Omega^2(\Omega^{-1}\hat\chi)_{AB},
\]
\[
\Omega\underline{\omega}
 =
 v\Omega\omega-\frac12\Lie_b\log\Omega,
 \qquad
\underline{\eta}=2\nabla\log\Omega-\eta.
\]

First, we derive the equation for $\Omega^{-1}\hat\chi_{AB}$.  From the
$\nabla_3\hat\chi$ equation in \((2.5)\),
\[
\nabla_3\hat\chi_{AB}
+\frac12\tr\chib\,\hat\chi_{AB}
 =
(\nabla\widehat\otimes\eta)_{AB}
+2\underline{\omega}\hat\chi_{AB}
-\frac12\tr\chi\,\chibh_{AB}
+(\eta\widehat\otimes\eta)_{AB}
+D_A\phi D_B\phi-\frac12g_{AB}D_C\phi D^C\phi.
\]
Equivalently,
\[
\begin{aligned}
\Omega\nabla_3(\Omega^{-1}\hat\chi_{AB})
={}&
(\nabla\widehat\otimes\eta)_{AB}
+(\eta\widehat\otimes\eta)_{AB}
+D_A\phi D_B\phi-\frac12g_{AB}D_C\phi D^C\phi        \\
&+\left(4\Omega\underline{\omega}
       -\frac12\Omega\tr\chib\right)
       \Omega^{-1}\hat\chi_{AB}
-\frac12(\Omega^{-1}\tr\chi)\Omega\chibh_{AB}.
\end{aligned}
\]
Using
\[
\begin{aligned}
\Omega\nabla_3(\Omega^{-1}\hat\chi_{AB})
={}&
-\Omega^{-1}\hat\chi_{AB}
+v\partial_v(\Omega^{-1}\hat\chi_{AB})
+\Lie_b(\Omega^{-1}\hat\chi_{AB})       \\
&-\Omega\chib^{C}{}_{A}(\Omega^{-1}\hat\chi)_{CB}
 -\Omega\chib^{C}{}_{B}(\Omega^{-1}\hat\chi)_{AC},
\end{aligned}
\]
and substituting the identities above, we obtain
\[
\begin{aligned}
v\partial_v(\Omega^{-1}\hat\chi_{AB})
={}&
-\Lie_b(\Omega^{-1}\hat\chi_{AB})
+\left(
\frac12\dv b+4v\Omega\omega-2\Lie_b\log\Omega
\right)\Omega^{-1}\hat\chi_{AB}        \\
&+\frac12(\nabla\widehat\otimes b)^C{}_{A}
   (\Omega^{-1}\hat\chi)_{BC}
 +\frac12(\nabla\widehat\otimes b)^C{}_{B}
   (\Omega^{-1}\hat\chi)_{AC}          \\
&+(\nabla\widehat\otimes\eta)_{AB}
 +(\eta\widehat\otimes\eta)_{AB}
 -\frac12(\nabla\widehat\otimes b)_{AB}
   (\Omega^{-1}\tr\chi)                \\
&+D_A\phi D_B\phi
 -\frac12g_{AB}D_C\phi D^C\phi          \\
&+v\Omega^2(\Omega^{-1}\tr\chi)(\Omega^{-1}\hat\chi_{AB})
 +v\Omega^2(\Omega^{-1}\hat\chi_{AC})
           (\Omega^{-1}\hat\chi)^C{}_{B}       \\
&+v\Omega^2(\Omega^{-1}\hat\chi_{BC})
           (\Omega^{-1}\hat\chi)^C{}_{A}.
\end{aligned}
\]

Next, we derive the equation for $\Omega^{-1}\tr\chi$.  From the
$\nabla_3\tr\chi$ equation in \((2.5)\), 
we write
\[
\begin{aligned}
\Omega\nabla_3(\Omega^{-1}\tr\chi)
={}&
\left(
4\Omega\underline{\omega}
-\Omega\tr\chib
\right)\Omega^{-1}\tr\chi       \\
&-2K+2\dv\eta+2|\eta|^2+D_A\phi D^A\phi.
\end{aligned}
\]
Using self-similarity for $\Omega^{-1}\tr\chi$ and the identity
\[
v\partial_v(v^\kappa\Omega^2)
 =
v^\kappa\Omega^2(\kappa-4v\Omega\omega),
\]
the preceding equation becomes
\[
\begin{aligned}
v\partial_v(\Omega^{-1}\tr\chi)
={}&
-\Lie_b(\Omega^{-1}\tr\chi)
+\left(
1+\kappa-\dv b-2\Lie_b\log\Omega
\right)\Omega^{-1}\tr\chi        \\
&-2K+2\dv\eta+2|\eta|^2+D_A\phi D^A\phi
-v\Omega^2(\Omega^{-1}\tr\chi)^2 .
\end{aligned}
\]

We now derive the equation for $v\Omega\omega$.  From the
$\nabla_3\omega$ equation in \((2.6)\),
\[
\nabla_3\omega
 =
2\omega\underline{\omega}
-\eta\cdot\underline{\eta}
+\frac12|\underline{\eta}|^2
-\frac12K-\frac18\tr\chi\tr\chib
+\frac14\hat\chi\cdot\chibh
+\frac14D_3\phi D_4\phi
+\frac14D_A\phi D^A\phi,
\]
we derive
\[
\begin{aligned}
\Omega\nabla_3(\Omega\omega)
 =
\Omega^2\Big(
&-\eta\cdot\underline{\eta}
+\frac12|\underline{\eta}|^2
+\frac12\rho
+\frac16D_3\phi D_4\phi
+\frac1{12}D_A\phi D^A\phi
\Big).
\end{aligned}
\]
Since $v\Omega\omega$ is self-similar of degree zero, we have
\[
\Omega\nabla_3(v\Omega\omega)
 =
v\partial_v(v\Omega\omega)
+b^A\nabla_A(v\Omega\omega),
\]
and hence we deduce the equation
\[
\begin{aligned}
v\partial_v(v\Omega\omega)
={}&
-b^A\nabla_A(v\Omega\omega)        \\
&+v\Omega^2\Big(
2\nabla\log\Omega\cdot(\nabla\log\Omega-\eta)
-\eta\cdot(2\nabla\log\Omega-\eta)      \\
&\qquad\qquad
+\frac12|\eta|^2
+\frac12\rho
+\frac16\Omega D_3\phi\,\Omega^{-1}D_4\phi
+\frac1{12}D_A\phi D^A\phi
\Big).
\end{aligned}
\]

Finally, we derive the equation for $\Omega^{-1}D_4\phi$.  The wave
equation in \((2.8)\), written as a $\nabla_3D_4\phi$ equation, gives 
\[
\begin{aligned}
\Omega\nabla_3(\Omega^{-1}D_4\phi)
={}&
\left(
4\Omega\underline{\omega}
-\frac12\Omega\tr\chib
\right)\Omega^{-1}D_4\phi
+\Delta\phi                                      \\
&-\frac12(\Omega^{-1}\tr\chi)\Omega D_3\phi
+2\underline{\eta}\cdot\nabla\phi .
\end{aligned}
\]
Using self-similarity for $\Omega^{-1}D_4\phi$, together with
$\underline{\eta}=2\nabla\log\Omega-\eta$ and the preceding algebraic
relations, this becomes
\[
\begin{aligned}
v\partial_v(\Omega^{-1}D_4\phi)
={}&
\left(
4v\Omega\omega
-\frac12v\Omega^2\Omega^{-1}\tr\chi
+\frac12\dv b
-b\cdot\nabla\log\Omega
\right)\Omega^{-1}D_4\phi        \\
&-b\cdot\nabla(\Omega^{-1}D_4\phi)
+\Delta\phi
+2(4\nabla\log\Omega-3\eta)\cdot\nabla\phi       \\
&-\frac12\left(\Omega D_3\phi+v\Omega D_4\phi\right)
       \Omega^{-1}\tr\chi .
\end{aligned}
\]
\end{proof}

{We next isolate the second-order subtraction formula for the incoming expansion.}

\begin{proposition}\label{Appen nabla 3 tr chib}
    The following relation holds:
    \begin{equation}
        \begin{split}
            \Omega\nabla_3\ddfl{\Omega\tr\chib}=& -\ddfl{b}\cdot\nabla\left(\ol{\Omega\tr\chib}(0)+v\Lie_v\ol{\Omega\tr\chib}(0)\right)-v\Lie_v b(0)\cdot\nabla\left(v\Lie_v\ol{\Omega\tr\chib}(0)\right)\\
            &-4\ddfl{\Omega\omegab\Omega\tr\chib}-\ddfl{\left|\Omega\chib\right|^2}-\ddfl{(\Omega D_3\phi)^2}.
        \end{split}
    \end{equation}
\end{proposition}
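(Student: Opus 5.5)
The identity is purely algebraic once we fix what is being differentiated. The plan is to start from the Raychaudhuri equation for $\Omega\tr\chib$ in the $\Omega\nabla_3$ form, subtract the corresponding equation for Christodoulou's background, and then subtract the zeroth and first $v$-Taylor coefficients at $v=0$.

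First I will rewrite the $\nabla_3\tr\chib$ equation from the null structure equations, multiplied by $\Omega^2$ and using $\nabla_3\log\Omega^2=-4\omegab$. Since $\Omega\nabla_3$ acts on a scalar as $\partial_u+b\cdot\nabla$, this gives
\[
(\partial_u+b^A\partial_A)(\Omega\tr\chib)=-4\Omega\omegab\,\Omega\tr\chib-|\Omega\chib|^2-(\Omega D_3\phi)^2,
\]
with the $\tfrac12(\tr\chib)^2$ and the trace part of $|\chib|^2$ combined into $|\Omega\chib|^2$ as the notation in the statement indicates. The same identity holds for the spherically symmetric background with $b^c=0$. Subtracting the two gives
\[
\partial_u\ol{\Omega\tr\chib}+b\cdot\nabla\ol{\Omega\tr\chib}=\ol{R},
\]
where $R$ is the right-hand side above. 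I use here that $\nabla(\Omega\tr\chib)^c=0$, so $b\cdot\nabla$ falls only on the difference.

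Second, I will apply the operation $\psi\mapsto\psi-\psi_0-v(\Lie_v\psi)_0$. The operator $\partial_u$ commutes with restriction to $v=0$ and with $\Lie_v$, so $\partial_u$ passes through $\ddf{\cdot}$. The transport term is the only genuine computation. I expand $b\cdot\nabla\ol f$ with $f=\ol{\Omega\tr\chib}$ via $b=b_0+v(\Lie_v b)_0+\ddfl{b}$ (note $b^c=0$, so $\ddfl b=\ddf b$) and $\ol f=\ol f_0+v(\Lie_v\ol f)_0+\ddfl{\Omega\tr\chib}$. I then collect the products and remove their values and $v$-linear parts at $v=0$. The terms $b_0\cdot\nabla\ol f_0$, $v[(\Lie_vb)_0\cdot\nabla\ol f_0+b_0\cdot\nabla(\Lie_v\ol f)_0]$ are exactly cancelled. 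What remains is $b\cdot\nabla\ddfl{\Omega\tr\chib}$, plus $\ddfl b\cdot\nabla(\ol f_0+v(\Lie_v\ol f)_0)$, plus the quadratic cross term $v(\Lie_v b)_0\cdot\nabla(v(\Lie_v\ol f)_0)$. The term $b\cdot\nabla\ddfl{\Omega\tr\chib}$ recombines with $\partial_u$ into $\Omega\nabla_3\ddfl{\Omega\tr\chib}$, and the other two are moved to the right with a minus sign.

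Third, the right-hand side becomes $\ddfl{R}$, which by linearity splits into the three terms displayed in the statement. The obstacle is bookkeeping rather than analysis. One must check that the regularity assumption $\lim_{v\to0}v\Lie_v\psi=0$, together with the continuity at $v=0$ given by the construction of Section 3, justifies commuting $\Lie_v$ with $\partial_u$ and $\nabla$ at $v=0$. This needs $[\Lie_v,\partial_A]=0$ in the coordinates $(u,v,\theta)$, since $e_A=\partial_{\theta^A}$. It also needs that $(\Lie_v\ol{\Omega\tr\chib})_0$ exists. I expect to take the latter from the finiteness of $\Omega^{-1}\nabla_4(\Omega\tr\chib)$ via the $\nabla_4\tr\chib$ equation, after the region I gauge change.
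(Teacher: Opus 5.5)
Your proposal is correct and is essentially the paper's argument: the paper also uses $\Omega e_3=\partial_u+b\cdot\nabla$ on scalars and the spherical symmetry of $(\Omega\tr\chib)^c$ (so that $b\cdot\nabla$ sees only the difference), and it splits $b$ and $\ol{\Omega\tr\chib}$ into value at $v=0$, $v$-linear term and remainder, which isolates exactly the two extra transport terms. The only difference is the order of operations: the paper applies $\Omega e_3$ piece by piece to the second-order difference, whereas you apply the Taylor subtraction to the difference equation. Your identification of the right-hand side via $|\chib|^2=|\chibh|^2+\tfrac12(\tr\chib)^2$ matches the paper's use of the $\nabla_3\tr\chib$ equation.
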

\begin{proof}
We compute directly:
    \begin{equation}
        \begin{aligned}
            &\Omega\nabla_3\ddfl{\Omega\tr\chib}\\
            =&\ol{\Omega\nabla_3(\Omega\tr\chib)}-(\ol{\Omega\nabla_3(\Omega\tr\chib)})_0-\df{b}\cdot\left(\nabla\ol{\Omega\tr\chib}\right)_0\\
            &-v\left(\Omega e_3\partial_v\ol{\Omega\tr\chib}\right)_0-v\df{b}\cdot\nabla(\Lie_v\ol{\Omega\tr\chib})_0\\
           =&\ol{\Omega\nabla_3(\Omega\tr\chib)}-(\ol{\Omega\nabla_3(\Omega\tr\chib)})_0-\ddfl{b}\cdot\left(\nabla\ol{\Omega\tr\chib}\right)_0-v(\Lie_v b)_0 \cdot\left(\nabla\ol{\Omega\tr\chib}\right)_0\\
            &-v\left(\Lie_v\ol{\Omega\nabla_3(\Omega\tr\chib)}\right)_0+v\left(\Lie_v b\cdot \nabla\ol{\Omega\tr\chib}\right)_0\\
            &-v\ddfl{b}\cdot\nabla\left(\Lie_v(\ol{\Omega\tr\chib})\right)_0-v^2\left(\Lie_v b\cdot\nabla\partial_v\ol{\Omega\tr\chib}\right)_0\\
            =& \ddfl{\Omega\nabla_3(\Omega\tr\chib)}-\ddfl{b}\cdot\nabla\left((\ol{\Omega\tr\chib})_0+v(\Lie_v\ol{\Omega\tr\chib})_0\right)-v^2(\Lie_v b)_0\nabla\left(\Lie_v\ol{\Omega\tr\chib}\right)_0.
        \end{aligned}
    \end{equation}
    In the computation, we have used the fact that $$\Omega e_3(\Omega\tr\chib)^c=(\Omega e_3)_0(\Omega\tr\chib)^c=(\Omega e_3(\Omega\tr\chib))^c.$$
    
\end{proof}
To use this equation, we make it more specific:
\begin{corollary}\label{Appen nabla 3 tr chib cor}
    If we replace the known quantities, which are $(\cdot)_0,\ol\cdot,(\cdot)^c, \df{(\cdot)^c}$ by suitable $\epsilon, \z$ terms, and ignore the $u$-signature, that readers can check that the consistency of the signature in the equations, for convenience, we have schematic equatioin
    \begin{equation}
        \begin{aligned}
            &\Omega\nabla_3\ddfl{\Omega\tr\chib}+(\Omega\tr\chib+4\Omega\omegab)\ddfl{\Omega\tr\chib}\\
            \sim & \Omega\chibh\cdot\ddfl{\Omega\chibh}+\Omega e_3\phi\ddfl{\Omega e_3\phi}+(\Omega\tr\chib)^c\ddfl{\Omega\omegab}+O(\epsilon)\ddfl{b}\\
            &+\sum_{\varphi\in\{\Omega\chib,\Omega\omegab,\Omega e_3\phi\}}\left((\ol{\varphi}+O(v))\ddfl{\varphi}+\ddf{\varphi^c}\ol{\varphi}\right)+O\left(\epsilon\z^2\right).
        \end{aligned}
    \end{equation}
\end{corollary}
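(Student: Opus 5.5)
The plan is to start from the exact evolution equation $\Omega\nabla_3(\Omega\tr\chib)+\frac12(\Omega\tr\chib)^2+4\Omega\omegab\,\Omega\tr\chib+|\Omega\chibh|^2+(\Omega D_3\phi)^2=0$ and apply the identity of Proposition \ref{Appen nabla 3 tr chib}. That identity expresses $\Omega\nabla_3\ddfl{\Omega\tr\chib}$ as $\ddfl{F}$ plus the shift terms, where $F=-\frac12(\Omega\tr\chib)^2-4\Omega\omegab\Omega\tr\chib-|\Omega\chibh|^2-(\Omega D_3\phi)^2$. The proof then reduces to expanding each term $\ddfl{\psi_1\psi_2}$ with the product formula of Lemma \ref{Product_rule_for_difference}, and sorting the resulting pieces into the schematic classes listed in the statement.

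The first step is the quadratic terms. For $\ddfl{(\Omega\tr\chib)^2}$ the expansion has a linear term $2(\Omega\tr\chib)_0\ddfl{\Omega\tr\chib}$ and a further piece $2\ddf{(\Omega\tr\chib)^c}\ol{\Omega\tr\chib}+\dots$. I will rewrite $(\Omega\tr\chib)_0=\Omega\tr\chib-[\Omega\tr\chib]_0^v$, and then move $\Omega\tr\chib\,\ddfl{\Omega\tr\chib}$ to the left-hand side; the correction is $O(v)\ddfl{\cdot}$.

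For $\ddfl{\Omega\omegab\Omega\tr\chib}$ the two linear pieces behave differently. The piece $4\Omega\omegab\,\ddfl{\Omega\tr\chib}$ goes to the left-hand side. The piece $4(\Omega\tr\chib)^c\ddfl{\Omega\omegab}$, after replacing $\Omega\tr\chib$ by its background plus $\ol{\cdot}$, gives the displayed term $(\Omega\tr\chib)^c\ddfl{\Omega\omegab}$ plus $\ol{\varphi}\ddfl{\varphi}$.

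The terms $|\Omega\chib|^2$ and $(\Omega D_3\phi)^2$ are handled the same way and give $\Omega\chibh\cdot\ddfl{\Omega\chibh}$ and $\Omega e_3\phi\,\ddfl{\Omega e_3\phi}$. Here the trace part of $|\Omega\chib|^2$ is absorbed into the $\tr\chib$ coefficient, which is why the left-hand side reads $\Omega\tr\chib$ rather than $\frac12\Omega\tr\chib$ coefficients separately.

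The remaining pieces of each expansion are of three kinds:
\begin{itemize}
\item $\ddfl{\varphi}\,\ol{\varphi}$ and $v\Lie_v\varphi(0)\,\ddfl{\varphi}$, which fall into the class $(\ol\varphi+O(v))\ddfl\varphi$;
\item $\ddf{\varphi^c}\,\ol{\varphi}$, which appear as stated;
\item $v^2\Lie_v\varphi^c(0)\Lie_v\ol\varphi(0)$, which is $O(\epsilon\z^2)$ because $\Lie_v\ol\varphi(0)=O(\epsilon)$ by the initial-data construction.
\end{itemize}

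The shift terms from Proposition \ref{Appen nabla 3 tr chib} are treated last. The term $\ddfl{b}\cdot\nabla(\ol{\Omega\tr\chib}(0)+v\Lie_v\ol{\Omega\tr\chib}(0))$ is $O(\epsilon)\ddfl{b}$, since the initial data are $\epsilon$-close to the spherically symmetric background, so $\nabla$ of these terms is $O(\epsilon)$. The term $v^2\Lie_vb(0)\cdot\nabla\Lie_v\ol{\Omega\tr\chib}(0)$ is $O(\epsilon\z^2)$.

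The main obstacle is bookkeeping the signatures so that the factors of $v$ become $\z$ with the correct $(-u)$ powers. I would fix signatures using $s(v\Lie_v\psi)=s(\psi)$, and verify once that each absorbed coefficient, e.g. $[\Omega\tr\chib]_0^v$, is genuinely $O(\z)$ in $\LS_1$. This requires that $(\Omega\tr\chib)^c$ has a $v$-derivative of the right order at $v=0$; that holds because its $\nabla_4$ equation involves neither $\Omega^{-1}e_4\phi$ nor $\Omega^{-1}\omega$.
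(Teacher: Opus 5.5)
Your proposal is correct and follows the paper's proof. You start from Proposition \ref{Appen nabla 3 tr chib} and expand each $\ddfl{\psi_1\psi_2}$ with Lemma \ref{Product_rule_for_difference}. You then move $\Omega\tr\chib\,\ddfl{\Omega\tr\chib}$ and $4\Omega\omegab\,\ddfl{\Omega\tr\chib}$ to the left and sort the rest into the $(\ol\varphi+O(v))\ddfl\varphi$, $\ddf{\varphi^c}\ol\varphi$, $O(\epsilon)\ddfl{b}$ and $O(\epsilon\z^2)$ classes; the paper simply packages your term-by-term expansion once, as a general formula for $\ddfl{\psi_1\psi_2}-\psi_1\ddfl{\psi_2}-\ddfl{\psi_1}\psi_2$.

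One minor correction to your closing remark. The $O(\z)$ bound on $[\varphi^c]_0^v$ only needs $\Lie_v\varphi^c$ to be bounded in the renormalized gauge, and that holds for all three choices of $\varphi$. The absence of $\Omega^{-1}e_4\phi$ and $\Omega^{-1}\omega$ from the $\nabla_4$ equation of $(\Omega\tr\chib)^c$ is what gives the stronger bound $\ddf{(\Omega\tr\chib)^c}=O(\z^2)$, which is used later and is not needed for this corollary.
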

\begin{proof}
    From the proposition above, we have
    \begin{equation}
        \begin{aligned}
            \Omega\nabla_3\ddfl{\Omega\tr\chib}\sim & O(\epsilon)\cdot\ddfl{b}+O\left(\epsilon^2\z^2\right)\\
            &-4\ddfl{\Omega\omegab\Omega\tr\chib}-\ddfl{\left|\Omega\chibh\right|^2}-\frac{1}{2}\ddfl{(\Omega\tr\chib)^2}-\ddfl{(\Omega D_3\phi)^2}.
        \end{aligned}
    \end{equation}
    For $\psi_i\in\{\Omega\omegab,\Omega\chib,\Omega D_3\phi\}$, the $\LS_1(S)$ norm of $\left[\psi_i^c\right]_0^v$ is controlled by $O\z$. Using Lemma \ref{Product_rule_for_difference}, we have 
    \begin{equation}
        \begin{aligned}
            \ddfl{\psi_1\psi_2}=&\psi_1\ddfl{\psi_2}-\ol{\psi_1}\ddfl{\psi_2}+\ddfl{\psi_1}\ol{\psi_2}+\ddf{\psi_1^c}\ol{\psi_2}+(\psi_2^c)_0\ddfl{\psi_1}+\ol{\psi_1}\ddf{\psi_2^c}\\
            &+O(v)\ddfl{\psi_1}+O(v)\ddfl{\psi_2}+O\left(\epsilon\z^2\right).
        \end{aligned}
    \end{equation}
    Slightly changing the expression yields:
    \begin{equation}
        \begin{aligned}
            &\ddfl{\psi_1\psi_2}-\psi_1\ddfl{\psi_2}-\ddfl{\psi_1}\psi_2\\
            \sim & (\ol{\psi_2}+O(v))\ddfl{\psi_1}+(\ol{\psi_1}+O(v))\ddfl{\psi_2}+\ddf{\psi_2^c}\ol{\psi_1}+\ddf{\psi_1^c}\ol{\psi_2}+O\left(\epsilon\z^2\right).
        \end{aligned}
    \end{equation}
    The proof is finished after plugging this into Proposition \ref{Appen nabla 3 tr chib}.
\end{proof}

{We now record the commuted scalar-field equation needed for the Cauchy-horizon analysis.}

\begin{proposition}\label{Appen_A_eq_D3D3phi}
    We explicitly derive the following higher-order evolution equation for the repeated derivative $(\Omega^{-1}e_3)^2\phi$:
    \begin{equation}
        \begin{aligned}
            \Omega e_4(\Omega^{-1}e_3)^2\phi+\frac{1}{2}\Omega\tr\chi (\Omega^{-1}e_3)^2\phi-8\Omega\omega(\Omega^{-1}e_3)^2\phi=\Delta\Omega^{-1}e_3\phi+\varphi\nabla\varphi+\varphi^3+\varphi\cdot\Psi.
        \end{aligned}
    \end{equation} 
    Here, the placeholder $\varphi$ systematically denotes any geometric factor taking values in the prescribed set $\{\etab,\Omega\chi,\Omega^{-1}\chib,\nabla\phi,\Omega^{-1}D_3\phi,\Omega D_4\phi\}$, and analogously $\Psi\in\{\Omega^{-1}\betab^r,K^r\}$.
\end{proposition}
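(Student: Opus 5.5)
We will derive the identity by applying $\Omega e_4$ to the scalar $\Omega^{-1}e_3\phi=\Omega^{-1}D_3\phi$ twice and commuting. The starting point is the second line of the intrinsic wave equation \eqref{intrinsic_wave_equation}, rewritten in renormalized form. Using $e_4\log\Omega=-2\omega$ and $\Omega\tr\chib=\Omega^2\,\Omega^{-1}\tr\chib$, it becomes
\[
\Omega e_4(\Omega^{-1}e_3\phi)+\tfrac12\Omega\tr\chi\,\Omega^{-1}e_3\phi-4\Omega\omega\,\Omega^{-1}e_3\phi=\Delta\phi-\tfrac12\Omega^{-1}\tr\chib\,\Omega e_4\phi+2\etab\cdot\nabla\phi .
\]
Here the coefficient $-4\Omega\omega$ comes from $2\omega D_3\phi$ together with the term produced by commuting $\Omega^{-1}$ through $e_4$. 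The right-hand side already has the schematic form $\nabla\varphi+\varphi^2$.

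We then apply $\Omega^{-1}e_3$ to this identity. The key input is the commutator of the null vector fields, $[\Omega e_4,\Omega^{-1}e_3]$. In the double null gauge $\Omega e_4=\partial_v$ and $\Omega^{-1}e_3=\Omega^{-2}(\partial_u+b^A\partial_A)$. Hence
\[
[\Omega e_4,\Omega^{-1}e_3]=\partial_v(\Omega^{-2})\,\Omega^{2}\,\Omega^{-1}e_3+\Omega^{-2}(\partial_v b^A)\partial_A .
\]
With $\partial_v\log\Omega^{-2}=4\Omega\omega$ and $\partial_v b=2\Omega^2(\etab-\eta)$, this gives
\[
[\Omega e_4,\Omega^{-1}e_3]=4\Omega\omega\,\Omega^{-1}e_3+2(\etab-\eta)^A e_A .
\]
So $\Omega e_4$ acting on $(\Omega^{-1}e_3)^2\phi$ equals $\Omega^{-1}e_3$ of the first identity, plus $4\Omega\omega(\Omega^{-1}e_3)^2\phi$, plus $(\etab-\eta)\cdot\nabla(\Omega^{-1}e_3\phi)$. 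The last term is of type $\varphi\nabla\varphi$ once $\eta$ is written as $2\nabla\log\Omega-\etab$; in Region III $\eta$ is bounded as in Proposition \ref{R III psi_2 Omega b W32 est}.

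Combining the two $\Omega\omega$ contributions yields the coefficient $-8\Omega\omega$. The factor $\tfrac12\Omega\tr\chi$ is reproduced, and the remaining terms fall into two groups.

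First, $\Omega^{-1}e_3$ hitting the coefficients. We use the null structure equations \eqref{ESE} for $\nabla_3\tr\chi$ and $\nabla_3\omega$, rescaled by $\Omega^{-1}$. This produces $(\Omega^{-1}e_3\Omega\tr\chi)\,\Omega^{-1}e_3\phi$ and $(\Omega^{-1}e_3\Omega\omega)\,\Omega^{-1}e_3\phi$, each of the form $\varphi\,(K^r+\nabla\etab+\varphi^2)$, hence $\varphi\Psi+\varphi\nabla\varphi+\varphi^3$. Here $\rho$ is replaced through the Gauss equation \eqref{Gauss_Codazzi} by $K$ and quadratic terms.

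Second, $\Omega^{-1}e_3$ hitting the right-hand side. We commute $\Omega^{-1}\nabla_3$ with $\Delta$ using Lemma \ref{LEMMA_COMMUTATION_FORMULAE_NABLA}, which gives
\[
\Omega^{-1}e_3\Delta\phi=\Delta(\Omega^{-1}e_3\phi)+\Omega^{-1}\chib\cdot\nabla^2\phi+\nabla(\Omega^{-1}\chib)\cdot\nabla\phi+(\eta+\etab)\varphi\nabla\phi .
\]
The term $\nabla(\Omega^{-1}\chib)$ is expressed through the Codazzi equation via $\Omega^{-1}\betab^r$ and $\nabla\Omega^{-1}\tr\chib$. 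The remaining terms $\Omega^{-1}e_3(\Omega^{-1}\tr\chib\,\Omega e_4\phi)$ and $\Omega^{-1}e_3(\etab\cdot\nabla\phi)$ are handled with the $\nabla_3\tr\chib$ equation, the wave equation for $e_3(D_4\phi)$, the $\nabla_3\etab$ equation (containing $\betab^r$), and \eqref{Comm_Formula_D3,nab} for $\nabla_3\nabla\phi$.

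The main obstacle is to confirm that no second transversal derivative $(\Omega^{-1}e_3)^2\phi$ or $\Omega^{-1}\omegab$ survives in the source terms. Two dangerous terms appear: $(\Omega^{-1}D_3\phi)^2$ from $\nabla_3\tr\chib$, and $D_3\phi\,D_3\phi$-type terms from $\nabla_3\omega$ and the wave equation. Each must be checked to carry only first derivatives of $\phi$. Any $\underline\omega$ term generated by $\Omega^{-1}e_3\log\Omega$ has to cancel between $e_3(\Omega^{-1}\cdot)$ factors. We expect this cancellation because we work with the renormalized pair $\Omega^{-1}e_3$, $\Omega e_4$.
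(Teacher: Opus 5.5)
Your skeleton is the paper's: the renormalized wave equation for $\Omega e_4(\Omega^{-1}e_3\phi)$, then $\Omega^{-1}e_3$ applied to it, plus the commutator $[\Omega e_4,\Omega^{-1}e_3]=4\Omega\omega\,\Omega^{-1}e_3+2(\etab-\eta)^Ae_A$. That correctly produces the coefficient $-8\Omega\omega$, and the $\omegab$-cancellation you worry about is automatic.

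\textbf{Gap.} You have misidentified what the proposition asserts. The admissible set for $\varphi$ deliberately excludes $\eta$, and also $\nabla\log\Omega$. In Region III, $\eta$ only obeys the lossy bound of Proposition \ref{R III psi_2 Omega b W32 est}, which grows as $U\to0$, while the source of this equation is integrated up to the inner Cauchy horizon. The content of the identity is that every $\eta$-dependent term cancels, and your proposal never shows this.
\begin{itemize}
\item Rewriting $-2\eta\cdot\nabla(\Omega^{-1}e_3\phi)$ via $\eta=2\nabla\log\Omega-\etab$ does not put it in the stated form (note also the factor $2$ you dropped). Quoting an estimate for $\eta$ does not prove an identity.
\item You misread the $\nabla_3\tr\chi$ equation. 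Its right-hand side contains $2\dv\eta+2|\eta|^2$, not $\nabla\etab$. Hence $-\tfrac12\,\Omega^{-1}e_3\phi\;\Omega^{-1}e_3(\Omega\tr\chi)$ produces $-(\dv\eta+|\eta|^2)\,\Omega^{-1}e_3\phi$, a derivative of $\eta$ that no rewriting makes admissible.
\item $4\,\Omega^{-1}e_3(\Omega\omega)\,\Omega^{-1}e_3\phi$ contains $-4\eta\cdot\etab\,\Omega^{-1}e_3\phi$.
\item $-\tfrac12\Omega^{-1}\tr\chib\;\Omega^{-1}e_3(\Omega e_4\phi)$ contains $-\Omega^{-1}\tr\chib\,\eta\cdot\nabla\phi$, coming from the $e_3(D_4\phi)$ wave equation.
\item Your schematic commutator formula for $\Omega^{-1}e_3\Delta\phi$ suppresses exactly the $\eta$-terms that are needed. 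Your proposed use of Codazzi introduces $\zeta$, hence more $\eta$.
\end{itemize}

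\textbf{Missing step: exact bookkeeping of the $\eta$-terms.}
\begin{itemize}
\item Use $\Omega^{-1}\nabla_3\nabla\phi=\nabla(\Omega^{-1}e_3\phi)+(\eta+\etab)\,\Omega^{-1}e_3\phi-\Omega^{-1}\chib\cdot\nabla\phi$, the commutator $[\Omega^{-1}\nabla_3,\dv]$ (whose $\nabla\Omega^{-2}$ part gives $(\eta+\etab)\cdot\Omega^{-1}\nabla_3\nabla\phi$), and the Codazzi equation for $\chibh$. The $\eta$-part of $\Omega^{-1}e_3\Delta\phi$ is then $2\eta\cdot\nabla(\Omega^{-1}e_3\phi)+(\dv\eta+|\eta|^2+2\eta\cdot\etab)\,\Omega^{-1}e_3\phi-2\eta\cdot\Omega^{-1}\chibh\cdot\nabla\phi$.
\item Using the $\nabla_3\etab$ equation, the $\eta$-part of $2\,\Omega^{-1}e_3(\etab\cdot\nabla\phi)$ is $\Omega^{-1}\tr\chib\,\eta\cdot\nabla\phi+2\eta\cdot\Omega^{-1}\chibh\cdot\nabla\phi+2\eta\cdot\etab\,\Omega^{-1}e_3\phi$.
\end{itemize}
Add these to the commutator term $-2\eta\cdot\nabla(\Omega^{-1}e_3\phi)$ and to the $\tr\chi$, $\omega$ and $e_3(D_4\phi)$ contributions listed above. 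Each of the six types of $\eta$-terms then cancels identically. The six types are $\eta\cdot\nabla(\Omega^{-1}e_3\phi)$, $\dv\eta\,\Omega^{-1}e_3\phi$, $|\eta|^2\Omega^{-1}e_3\phi$, $\eta\cdot\etab\,\Omega^{-1}e_3\phi$, $\Omega^{-1}\tr\chib\,\eta\cdot\nabla\phi$ and $\eta\cdot\Omega^{-1}\chibh\cdot\nabla\phi$. Only after this does the right-hand side take the form $\Delta(\Omega^{-1}e_3\phi)+\varphi\nabla\varphi+\varphi^3+\varphi\cdot\Psi$.

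As written, your argument yields only the equation with extra $\eta\,\nabla\varphi$, $\varphi\,\nabla\eta$ and $\eta\,\varphi^2$ terms. That is strictly weaker than the proposition and is not sufficient for its use near the inner Cauchy horizon.
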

\begin{proof}
    By actively commuting the differential operator $\Omega^{-1}e_3$ through the established $\Omega e_4(\Omega^{-1}e_3)\phi$ evolution equation, we explicitly obtain
    \begin{equation}
        \begin{aligned}
            \Omega e_4(\Omega^{-1}e_3)^2\phi=& 4\Omega\omega\cdot(\Omega^{-1}e_3)^2\phi+2(\etab-\eta)\cdot\nabla(\Omega^{-1}e_3\phi)+\Omega^{-1}e_3(\Delta\phi)-\frac{1}{2}\Omega\tr\chi\cdot (\Omega^{-1}e_3)^2\phi\\
            &-\frac{1}{2}\Omega^{-1}e_3\phi\cdot\Omega^{-1}\nabla_3(\Omega\tr\chi)-\frac{1}{2}\Omega^{-1}\nabla_3(\Omega^{-1}\tr\chib)\cdot\Omega e_4\phi-\frac{1}{2}\Omega^{-1}\tr\chib\cdot(\Omega^{-1}e_3(\Omega e_4)\phi)\\
            &+2\Omega^{-1}\nabla_3\etab\cdot\nabla\phi+2\etab\cdot\Omega^{-1}\nabla_3\nabla\phi+4\Omega^{-1}e_3(\Omega\omega)\Omega^{-1}e_3\phi+4\Omega\omega(\Omega^{-1}e_3)^2\phi.
        \end{aligned}
    \end{equation}
    Our primary analytical objective here is distinctly tracking the problematic linear $\eta$ terms; therefore, we can safely and directly ignore all lower-order nonlinear combinations of the schematic forms $\varphi\nabla\varphi$, $\varphi^3$, and $\varphi\cdot\Psi$, precisely as aggregated in the formal proposition statement. Using the rigorous commutation identity \eqref{Exact_Comm_Formula}, this yields
    \begin{equation}
        \begin{aligned}
            \Omega^{-1}\nabla_3\Delta\phi=&\Omega^{-1}[\nabla_3,\dv]\nabla\phi+\Omega^{-1}\dv\nabla_3\nabla\phi\\
            =&-\frac{1}{2}\Omega^{-1}\tr\chib\Delta\phi-\Omega^{-1}\chibh\cdot\nabla^2\phi+\Omega^{-1}\betab^r\cdot\nabla\phi+(\eta+\etab)\cdot\Omega^{-1}\nabla_3\nabla\phi-\eta\cdot\Omega\chibh\cdot\nabla\phi\\
            &+\frac{1}{2}\Omega^{-1}\tr\chib\eta\cdot\nabla\phi+\dv(\Omega^{-1}\nabla_3\nabla\phi)\\
            =& -2\eta\cdot\Omega^{-1}\chibh\cdot\nabla\phi+(|\eta|^2+2\eta\cdot\etab)\Omega^{-1}D_3\phi+\dv\eta\cdot\Omega^{-1}D_3\phi+2\eta\cdot\nabla\Omega^{-1}D_3\phi\\
            &+\varphi\nabla\varphi+\varphi^3+\varphi\cdot\Psi.
        \end{aligned}
    \end{equation}
    By algebraically expanding the terms situated on the second line, the expression formally reduces to
    \begin{equation}
        \begin{aligned}
            &-\frac{1}{2}\Omega^{-1}e_3\phi\cdot\Omega^{-1}\nabla_3(\Omega\tr\chi)-\frac{1}{2}\Omega^{-1}\nabla_3(\Omega^{-1}\tr\chib)\cdot\Omega e_4\phi-\frac{1}{2}\Omega^{-1}\tr\chib\cdot(\Omega^{-1}e_3(\Omega e_4)\phi)\\
            =&-\Omega^{-1}D_3\phi(\dv\eta+|\eta|^2)-\eta\cdot\nabla\phi\Omega^{-1}\tr\chib+\varphi\nabla\varphi+\varphi^3+\varphi\cdot\Psi.
        \end{aligned}
    \end{equation}
    Systematically substituting the full set of coupled null structure equations into the right-hand side, the remaining algebraic components cleanly yield
    \begin{equation}
      \begin{aligned}
          &-2\eta\cdot\nabla(\Omega^{-1}e_3\phi)+2\Omega^{-1}\nabla_3\etab\cdot\nabla\phi+2\etab\cdot\Omega^{-1}\nabla_3\nabla\phi+4\Omega^{-1}e_3(\Omega\omega)\Omega^{-1}e_3\phi\\
          =&\Omega^{-1}\tr\chib\eta\cdot\nabla\phi+2\eta\cdot\Omega^{-1}\chibh\cdot\nabla\phi+2\eta\cdot\etab\Omega^{-1}D_3\phi-2\eta\cdot\nabla(\Omega^{-1}D_3\phi)-4\eta\cdot\etab\Omega^{-1}D_3\phi\\
          &+\varphi\nabla\varphi+\varphi^3+\varphi\cdot\Psi.
      \end{aligned}
    \end{equation}
Upon carefully combining these disparate calculated segments, one remarkably finds that every single isolated $\eta$-dependent term algebraically cancels out in exact totality.

\end{proof}

{We also spell out the curl transport identities for the torsion forms.}

\begin{proposition}
    We rigorously establish the following pair of coupled transport equations:
    \begin{equation}
        \begin{aligned}
            &\Omega\nabla_3\cl\etab+\Omega\tr\chib\cl\etab=\cl\left(\Omega\betab^r\right)+\cl\left(\Omega\chib\cdot\eta\right)-\cl(\Omega\Rc)_{3\cdot},\\
            &\Omega\nabla_4\cl\eta+\Omega\tr\chi\cl\eta=-\cl\left(\Omega\beta^r\right)+\cl\left(\Omega\chi\cdot\etab\right)-\cl(\Omega\Rc)_{4\cdot}.
        \end{aligned}
    \end{equation}
\end{proposition}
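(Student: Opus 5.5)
We derive both identities by commuting $\cl$ through the $\nabla_3\etab$ and $\nabla_4\eta$ transport equations of \eqref{Gauss_Codazzi}'s companion system in Section \ref{Preliminaries}, using the commutation formula of Lemma \ref{LEMMA_COMMUTATION_FORMULAE_NABLA}. The argument is parallel to the derivation of \eqref{Equations_reduced_sigma}, and in fact is essentially equivalent to it, since $\cl\etab=-\sigma^r$ and $\cl\eta=\sigma^r$. We carry out the $\nabla_3$ case; the $\nabla_4$ case follows by the symmetry $3\leftrightarrow4$, $\eta\leftrightarrow\etab$, $\chib\leftrightarrow\chi$, $\betab\leftrightarrow-\beta$.

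\textbf{Step 1 (transport equation for $\etab$).} Multiply the $\nabla_3\etab$ equation by $\Omega$. This gives
\[\Omega\nabla_3\etab_A=-\Omega\chib_{AB}(\etab-\eta)^B+\Omega\betab_A-\tfrac12\Omega D_A\phi D_3\phi.\]
Since $\Omega\betab^r_A=\Omega\betab_A+\frac12\Omega D_3\phi\nabla_A\phi$ and $\Rc_{3A}=D_3\phi D_A\phi$, we can rewrite this as
\[\Omega\nabla_3\etab=-\Omega\chib\cdot\etab+\Omega\chib\cdot\eta+\Omega\betab^r-(\Omega\Rc)_{3\cdot}.\]

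\textbf{Step 2 (apply $\cl=\slashed\epsilon^{AB}\nabla_A$).} By \eqref{Exact_Comm_Formula} for one-forms,
\[[\Omega\nabla_3,\nabla_A]\etab_B=\big(-\nabla_B(\Omega\chib)_A{}^C+\nabla^C(\Omega\chib)_{AB}\big)\etab_C-\Omega\chib_A{}^C\nabla_C\etab_B.\]
Contracting with $\slashed\epsilon^{AB}$, the symmetric term $\nabla^C(\Omega\chib)_{AB}$ drops out. Because $\nabla_3\slashed\epsilon=0$ after projection, we obtain
\[\Omega\nabla_3\cl\etab=\cl(\Omega\nabla_3\etab)-\slashed\epsilon^{AB}\nabla_B(\Omega\chib)_A{}^C\etab_C-\slashed\epsilon^{AB}\Omega\chib_A{}^C\nabla_C\etab_B.\]
Now substitute Step 1. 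The term $-\cl(\Omega\chib\cdot\etab)$ expands as $-\slashed\epsilon^{AB}\nabla_A(\Omega\chib)_B{}^C\etab_C-\slashed\epsilon^{AB}\Omega\chib_B{}^C\nabla_A\etab_C$. Its first piece cancels the commutator term $-\slashed\epsilon^{AB}\nabla_B(\Omega\chib)_A{}^C\etab_C$ by antisymmetry of $\slashed\epsilon$.

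\textbf{Step 3 (main obstacle: the quadratic algebra in $\chib\cdot\nabla\etab$).} What remains is
\[-\slashed\epsilon^{AB}\big(\Omega\chib_B{}^C\nabla_A\etab_C+\Omega\chib_A{}^C\nabla_C\etab_B\big).\]
Split $\chib=\frac12\tr\chib\,\sg+\chibh$.

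For the trace part, each term contributes $-\frac12\Omega\tr\chib\,\cl\etab$, so together they give $-\Omega\tr\chib\,\cl\etab$. This produces the stated damping term.

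For the traceless part, use the two-dimensional identity that $\slashed\epsilon^{AB}\chibh_A{}^C T_{CB}$ combines with $\slashed\epsilon^{AB}\chibh_B{}^C T_{AC}$ into a multiple of $\chibh\wedge T$. Evaluated in an orthonormal frame where $\chibh=\mathrm{diag}(\lambda,-\lambda)$, the two $\chibh$ contributions cancel identically. This cancellation is the bookkeeping-heavy point, and I would verify it in that frame explicitly. If a residual term $\chibh\wedge\nabla\etab$ survived instead, it would be of the same schematic type as the terms already on the right-hand side, and it could be absorbed into $\cl(\Omega\chib\cdot\eta)$ using $\nabla\etab=2\nabla^2\log\Omega-\nabla\eta$.

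\textbf{Step 4 (assemble).} Collecting terms gives
\[\Omega\nabla_3\cl\etab+\Omega\tr\chib\cl\etab=\cl(\Omega\betab^r)+\cl(\Omega\chib\cdot\eta)-\cl(\Omega\Rc)_{3\cdot}.\]
As a consistency check, since $\cl\etab=-\sigma^r$, this should agree with \eqref{Equations_reduced_sigma}. We note that $\cl(\Omega\Rc)_{3\cdot}=\cl(\Omega D_3\phi\nabla\phi)$, which vanishes up to $\nabla(\Omega D_3\phi)\wedge\nabla\phi$.
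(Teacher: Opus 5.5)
Your proposal is correct and follows the paper's proof essentially line by line: commute $\cl$ through the $\Omega\nabla_3\etab$ equation using \eqref{Exact_Comm_Formula}, cancel the $\nabla(\Omega\chib)\cdot\etab$ pieces against $\cl(\Omega\chib\cdot\etab)$, then split $\chib$ into its trace, which gives $-\Omega\tr\chib\,\cl\etab$, and its traceless part. The traceless contraction does vanish identically, because in two dimensions a symmetric traceless $\chibh$ anticommutes with $\seps$ (this is what your diagonal-frame check shows), so the fallback ``absorption'' in Step 3 is unnecessary and could not have produced the exact identity anyway; the closing comparison with \eqref{Equations_reduced_sigma} plays no role in the derivation, and in fact does not match the signs printed there.
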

\begin{proof}
    We provide the detailed proof for the first equation; the rigorous derivation for the second one proceeds in a strictly similar symmetric fashion. 
    Starting fundamentally from the structural equation
    \[\Omega\nabla_3\etab_A=-\Omega\chib_A^B(\etab-\eta)_B+\Omega\betab^r_A-\Omega\Rc_{3A},\]
    we systematically apply the specific commutation formula \eqref{Comm_Formula_D3,nab} and subsequently obtain
    \begin{equation*}
        \begin{aligned}
            \Omega\nabla_3\cl\etab=&\cl\left(\Omega\nabla_3\etab\right)+\seps^{AB}\left(-\nabla_B(\Omega\chib)_A^C+\nabla^C(\Omega\chib)_{AB}\right)\etab_C-\seps^{AB}\Omega\chib_A^C\nabla_C\etab_B\\
            =&\cl(\Omega\betab^r)-\cl(\Omega\Rc)_{3\cdot}+\cl(\Omega\chib\cdot\eta)-\cl(\Omega\chib\cdot\etab)+\cl(\Omega\chib)\cdot\etab-\seps^{AB} \Omega\chib_A^C\nabla_C\etab_B\\
            =&\cl(\Omega\betab^r)-\cl(\Omega\Rc)_{3\cdot}+\cl(\Omega\chib\cdot\eta)-\Omega\tr\chib\cl\etab-\seps^{AB}\left(\nabla_A\etab_C\Omega\chibh_B^C+\nabla_C\etab_B\Omega\chibh_A^C\right).
        \end{aligned}
    \end{equation*}
    Through straightforward tensorial algebra, one can easily verify that the explicitly anti-symmetrized contraction $\seps^{AB}\left(\nabla_A\etab_C\Omega\chibh_B^C+\nabla_C\etab_B\Omega\chibh_A^C\right)$ vanishes identically to zero.
\end{proof}

{Finally, we derive the mass-aspect transport equations used in the renormalized Bianchi estimates.}

\begin{proposition}
     For the specific combined geometric quantities $\dv\etab-K$ and $\dv\eta-K$, we rigorously derive the governing evolution equations:
    \begin{equation}
        \begin{aligned}
            &\Omega\nabla_3\left(\dv \etab-K\right)+\Omega\tr\chib\left(\dv \etab-K\right)=-2\dv(\Omega\chibh\cdot\etab)-\dv(\Omega\Rc_3)+2\dv(\eta\cdot\Omega\chibh),\\
            &\Omega\nabla_4\left(\dv \eta-K\right)+\Omega\tr\chi\left(\dv \eta-K\right)=-2\dv(\Omega\chi\cdot\eta)-\dv(\Omega\Rc_4)+2\dv(\etab\cdot\Omega\chih).
        \end{aligned}
    \end{equation}

\end{proposition}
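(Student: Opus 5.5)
The plan is to derive both identities by direct computation from the structure equations, in the same way as the preceding curl identities. We do the first in detail; the second follows by interchanging $3\leftrightarrow4$, $\eta\leftrightarrow\etab$, $\chib\leftrightarrow\chi$, which flips the signs of the $\beta$-type terms.

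The first step is the transport equation for $K$. Equation \eqref{Equations_K} already gives
\[
\Omega\nabla_3K+\Omega\tr\chib K=\dv(\Omega\betab^r)-\tfrac12\dv(\eta\,\Omega\tr\chib)+\dv(\eta\cdot\Omega\chibh).
\]
So it only remains to compute $\Omega\nabla_3\dv\etab$.

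For this we write $\dv\etab=\sg^{AB}\nabla_A\etab_B$. Using $\Omega\nabla_3\sg=0$, the commutator \eqref{Exact_Comm_Formula} then gives
\[
\Omega\nabla_3\dv\etab=\dv(\Omega\nabla_3\etab)-\Omega\chib^{AC}\nabla_A\etab_C+\big(-\nabla^C(\Omega\tr\chib)+\dv(\Omega\chib)^C\big)\etab_C.
\]
Next we insert the equation $\Omega\nabla_3\etab=-\Omega\chib\cdot(\etab-\eta)+\Omega\betab^r-\Omega\Rc_{3\cdot}$, as in the previous proof. This produces the terms
\[
\dv(\Omega\betab^r)-\dv(\Omega\Rc_{3})-\dv(\Omega\chib\cdot\etab)+\dv(\Omega\chib\cdot\eta).
\]
We then split $\chib=\chibh+\frac12\tr\chib\,\sg$ throughout.

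Subtracting the $K$ equation, we expect the following cancellations:
\begin{itemize}
\item The $\dv(\Omega\betab^r)$ terms cancel.
\item The $\eta\,\tr\chib$ terms combine. Here $\dv(\frac12\Omega\tr\chib\,\eta)$ from the $\etab$ equation meets $+\frac12\dv(\eta\Omega\tr\chib)$ from $-K$, and the net effect is absorbed into the $\dv(\eta\cdot\Omega\chibh)$ terms, giving the coefficient $2$.
\item The pure trace terms $-\frac12\Omega\tr\chib\,\dv\etab$ (from $-\Omega\chib^{AC}\nabla_A\etab_C$ and from $-\dv(\Omega\chib\cdot\etab)$) assemble with the $-\nabla(\Omega\tr\chib)\cdot\etab$ pieces into $-\Omega\tr\chib\,\dv\etab$ on the left-hand side.
\item The trace-free pieces $-\Omega\chibh^{AC}\nabla_A\etab_C-\dv(\Omega\chibh\cdot\etab)+\dv(\Omega\chibh)\cdot\etab$ combine into $-2\dv(\Omega\chibh\cdot\etab)$.
\end{itemize}

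The main bookkeeping obstacle is this last bullet together with the gradient-of-trace terms. In particular, we must check that $\nabla(\Omega\tr\chib)\cdot\etab$ cancels exactly between the commutator and $\dv(\frac12\Omega\tr\chib\,\etab)$, so that no underived Codazzi term remains. If a residual $\dv\chibh-\frac12\nabla\tr\chib$ survives, we would replace it using the Codazzi equation \eqref{Gauss_Codazzi} by $\betab^r$ plus $\zeta\cdot\chib$ terms, and recheck the cancellation with $\dv(\Omega\betab^r)$. With the normalization $\nabla\log\Omega=\frac12(\eta+\etab)$, the factor $\Omega$ inside the divergences produces exactly such $\zeta$-type terms.
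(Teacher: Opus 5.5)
You follow the same route as the paper: trace the commutator \eqref{Exact_Comm_Formula}, insert $\Omega\nabla_3\etab=-\Omega\chib\cdot(\etab-\eta)+\Omega\betab^r-\Omega\Rc_{3\cdot}$, and subtract the first line of \eqref{Equations_K}. The first problem is that your traced commutator has the wrong sign on its zeroth-order part. Contracting \eqref{Exact_Comm_Formula} with $\sg^{AB}$ gives
\[
\Omega\nabla_3\dv\etab=\dv(\Omega\nabla_3\etab)-\Omega\chib\cdot\nabla\etab+\big(\nabla(\Omega\tr\chib)-\dv(\Omega\chib)\big)\cdot\etab .
\]
The first variation of $\dv$ under $\Lie_{\Omega e_3}\sg=2\Omega\chib$ gives the same formula. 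With your sign, the three trace-free pieces in your last bullet sum to $-2\dv(\Omega\chibh\cdot\etab)+2\dv(\Omega\chibh)\cdot\etab$, and the gradient-of-trace terms leave $-\nabla(\Omega\tr\chib)\cdot\etab$. The resulting residual is $2\big(\dv(\Omega\chibh)-\frac12\nabla(\Omega\tr\chib)\big)\cdot\etab$. By \eqref{Gauss_Codazzi} this equals $2\big(\Omega\betab^r+\eta\cdot\Omega\chibh-\frac12\eta\,\Omega\tr\chib\big)\cdot\etab$, so your Codazzi fallback produces a $\betab^r\cdot\etab$ term that nothing cancels. With the correct sign there is no residual:
\[
-\dv(\Omega\chib\cdot\etab)-\dv(\Omega\chib)\cdot\etab-\Omega\chib\cdot\nabla\etab+\nabla(\Omega\tr\chib)\cdot\etab=-2\dv(\Omega\chibh\cdot\etab)-\Omega\tr\chib\,\dv\etab ,
\]
which is what the paper's displayed computation gives.

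The second problem is your bullet on the $\eta\,\tr\chib$ terms. A $\tr\chib$-term cannot be ``absorbed'' into a $\chibh$-term, because the two are algebraically independent. Carry out the subtraction: $\dv(\Omega\chib\cdot\eta)$ from the $\etab$ equation, minus the $\eta$-terms of the first line of \eqref{Equations_K}, is
\[
\dv(\Omega\chibh\cdot\eta)+\tfrac12\dv(\Omega\tr\chib\,\eta)+\tfrac12\dv(\eta\,\Omega\tr\chib)-\dv(\eta\cdot\Omega\chibh)=\dv(\Omega\tr\chib\,\eta).
\]
The computation therefore yields
\[
\Omega\nabla_3(\dv\etab-K)+\Omega\tr\chib(\dv\etab-K)=-2\dv(\Omega\chibh\cdot\etab)+\dv(\Omega\tr\chib\,\eta)-\dv(\Omega\Rc_3),
\]
not $+2\dv(\eta\cdot\Omega\chibh)$. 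The paper only asserts that substituting into \eqref{Equations_K} ``immediately'' gives the result. The stated form would need the $\eta$-terms of the $K$ equation with the opposite sign.

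That opposite sign is what the second line of \eqref{Equations_K} carries, so the two lines are not mirror images under $3\leftrightarrow4$. The first-variation formula $\Omega e_4K+\Omega\tr\chi K=\dv\big(\dv(\Omega\chih)-\frac12\nabla(\Omega\tr\chi)\big)$, combined with Codazzi, gives $-\dv(\Omega\beta^r)+\dv(\etab\cdot\Omega\chih)-\frac12\dv(\etab\,\Omega\tr\chi)$, which is the genuine mirror of the first line. Your interchange argument therefore produces $-2\dv(\Omega\chih\cdot\eta)+\dv(\Omega\tr\chi\,\etab)-\dv(\Omega\Rc_4)$ for the second identity, which also differs from the stated one. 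What your method can actually establish is this corrected pair of identities, not the statement as written.
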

\begin{proof}
    We explicitly compute the differential term $\nabla_3\dv\etab$; the symmetric calculation for $\nabla_3\dv\eta$ follows an identical algebraic pattern. Submitting these intermediate results into the coupled structure equations \eqref{Equations_K} then immediately generates the desired final result.
    \begin{equation}
        \begin{aligned}
            \Omega\nabla_3\dv\etab=&g^{AB}\nabla_A\left(-\Omega\chib_B^C\etab_C+\Omega\chib_B^C\eta_C+\Omega\chib^r_B-\Omega\Rc_{3B}\right)\\
            &+g^{AB}\left(-\nabla_B(\Omega\chib)_A^C+\nabla^C(\Omega\chib)_{AB}\right)\etab_C-g^{AB}\Omega\chib_A^C\nabla_C\etab_B\\
            =&-\dv(\Omega\chibh\cdot\etab)-\frac{1}{2}\nabla(\Omega\tr\chib)\cdot\etab-\frac{1}{2}\Omega\tr\chib\dv\etab+\dv(\Omega\chib\cdot\eta)+\dv(\Omega\betab^r)-\dv(\Omega\Rc_3)\\
            &+\nabla(\Omega\tr\chib)\cdot\etab-\dv(\Omega\chibh)\cdot\etab-\frac{1}{2}\nabla(\Omega\tr\chib)\cdot\etab-\frac{1}{2}\Omega\tr\chib\cdot\dv\etab-\Omega\chibh\cdot\nabla\etab\\
            =&-2\dv(\Omega\chibh\cdot\etab)+\dv(\Omega\chib\cdot\eta)+\dv(\Omega\betab^r)-\dv(\Omega\Rc_3).
        \end{aligned}
    \end{equation}
\end{proof}
\section{Christodoulou's naked singularity}\label{Appendix_Chr94}
In this appendix, we systematically study the geometric properties of Christodoulou's fundamental naked singularity solution presented in \cite{Christodoulou1994}. 
\subsection{Near incoming null cone} Exploiting the inherent $\k$-self-similarity, the spacetime metric can be explicitly written as
\begin{equation}
    g(u,v,\theta)=-\left(\frac{-u}{v}\right)^\k\check{\Omega}^2\left(\frac{v}{-u}\right)\left(du\otimes dv+dv\otimes du\right)+(-u)^2\check{g}_{AB}\left(\frac{v}{-u}\right)d\theta^A\otimes d\theta^B,
\end{equation}
with scalar field function 
\begin{equation}
    \phi(u,v)=\sqrt{2\k}\log(-u)+\check{\phi}\left(\frac{v}{-u}\right).
\end{equation}
where the boundary conditions $\check{\Omega}^2(0)=1$ and $\check{g}_{AB}(0)={g^{\SS}}_{AB}$ necessarily hold according to the detailed analysis in \cite{Christodoulou1994}.
Invoking Proposition \ref{Self similar equation near v equal 0}, we rigorously derive the following reduced system of equations:
\begin{lemma}
    Along $u=-1$, the following algebraic relations hold:
    \begin{equation}
        \begin{aligned}
            \Omega\tr\chib=-2+v\Omega^2\cdot\Omega^{-1}\tr\chi,\quad \Omega\omegab=v\Omega\omega,\quad \Omega D_3\phi=-\sqrt{2\k}+v\Omega^2\cdot\Omega^{-1}D_4\phi,
        \end{aligned}
    \end{equation}
    and equations below are satisfied
    \begin{equation}
        \left\{
        \begin{aligned}
            v\partial_v(\Omega^{-1}\tr\chi)=&(1+\k)\Omega^{-1}\tr\chi-2K-v\Omega^2\left(\Omega^{-1}\tr\chi\right)^2,\\
            v\partial_v(v\Omega\omega)=& v\Omega^2\left(-\frac{1}{2}K-\frac{1}{8}\tr\chi\tr\chib+\frac{1}{4}D_3\phi D_4\phi\right),\\
            v\partial_v(v^\k\Omega^2)=&v^\k\Omega^2\left(\k-4v\Omega\omega\right),\\
            v\partial_v(\Omega^{-1}D_4\phi)=& (4v\Omega\omega-v\Omega^2\cdot \Omega^{-1}\tr\chi)\Omega^{-1}D_4\phi+\frac{\sqrt{2\k}}{2}\Omega^{-1}\tr\chi,\\
            v\partial_v K=&-v\Omega^2\cdot\Omega^{-1}\tr\chi K,
        \end{aligned}
        \right.
    \end{equation}
    with initial values 
    \begin{equation}
        K(0)=1,\  v\Omega\omega(0)=\k/4,\ v^\k\Omega^2(0)=1,\ \Omega D_3\phi(0)=-\sqrt{2\k} .
    \end{equation}
\end{lemma}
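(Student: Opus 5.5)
The plan is to specialize Proposition \ref{Self similar equation near v equal 0} to the spherically symmetric, $\kappa$-self-similar background. Throughout I set $u=-1$ and use the symmetry ansatz $b\equiv 0$, $\eta=\etab=\zeta=0$, $\nabla\phi=0$, $\chih=\chibh=0$. I also use that $\Omega$ and all Ricci coefficients are constant on each sphere, so every $\Lie_b$ term and every angular derivative drops out.

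First, the algebraic relations. Using the self-similar relation \eqref{relation_chib_and_chi_omegab_and_omega} at $u=-1$ with $b=0$, the trace gives $\Omega\tr\chib=-2+v\Omega^2\,\Omega^{-1}\tr\chi$, and the second relation gives $\Omega\omegab=v\Omega\omega$.

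For $\Omega D_3\phi$, I apply the scaling identity $\Omega\Lie_{e_3}F=-\frac{a}{-u}F+\Lie_bF+\frac{v}{-u}\Lie_{\partial_v}F$ to the scalar $\phi$, whose self-similar part is $\sqrt{2\kappa}\log(-u)+\check\phi(z)$. At $u=-1$ with $b=0$ this gives $\Omega e_3\phi=\partial_u\phi=-\sqrt{2\kappa}+v\partial_v\phi$. Since $e_4=\Omega^{-1}\partial_v$, we have $v\partial_v\phi=v\Omega^2\,\Omega^{-1}D_4\phi$, which is the stated identity.

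Second, the evolution equations, read off one at a time from \eqref{Equation_set}.

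\begin{itemize}
\item The $\Omega^{-1}\tr\chi$ line reduces directly to $(1+\kappa)\Omega^{-1}\tr\chi-2K-v\Omega^2(\Omega^{-1}\tr\chi)^2$, because $\dv b$, $\eta$ and $\nabla\phi$ all vanish.
\item The $v^\kappa\Omega^2$ line is already in the stated form.
\item For $v\Omega\omega$, I will not use the displayed reduced version, since it has $\rho$ and the scalar terms with the coefficients already simplified. Instead I will redo that step from the raw $\nabla_3\omega$ equation (2.6): drop the $\eta,\etab,\chih,\nabla\phi$ terms, and use that $\Omega\nabla_3(v\Omega\omega)=v\partial_v(v\Omega\omega)$. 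This yields $v\Omega^2(-\frac12K-\frac18\tr\chi\tr\chib+\frac14D_3\phi D_4\phi)$, provided the $2\omega\omegab$ term in $\nabla_3\omega$ is accounted for correctly.
\item For $\Omega^{-1}D_4\phi$, I substitute $\Omega D_3\phi=-\sqrt{2\kappa}+v\Omega D_4\phi$ into the last term of its line. This removes the $v\Omega D_4\phi$ contribution and leaves $\frac{\sqrt{2\kappa}}2\Omega^{-1}\tr\chi$, together with the stated coefficient $(4v\Omega\omega-v\Omega^2\Omega^{-1}\tr\chi)$.
\item For $K$, since $K=r^{-2}$ in spherical symmetry, $v\partial_vK=-2K\,v\partial_v\log r$. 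The $g$ line gives $v\partial_v\log\sqrt{\det g}=v\Omega^2\,\Omega^{-1}\tr\chi$, so $v\partial_vK=-v\Omega^2\,\Omega^{-1}\tr\chi\,K$.
\end{itemize}

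Third, the initial values come from Theorem \ref{thm:background_properties}(3) and the preceding lemma.
\begin{itemize}
\item $K(-1,0)=1$ is immediate.
\item $v^\kappa\Omega^2(0)=(-u)^\kappa=1$ at $u=-1$.
\item $\lim v\Omega\omega=\kappa/4$ follows from $\lim_{v\to0}\frac vu\Omega\omega=\frac{\kappa}{4u}$ at $u=-1$; equivalently, from differentiating $\log(v^\kappa\Omega^2)$ and its finite limit.
\item $\Omega D_3\phi(0)=\sqrt{2\kappa}/u=-\sqrt{2\kappa}$.
\end{itemize}

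The main obstacle I expect is sign and normalization bookkeeping. In particular I need to check whether the $4v\Omega\omega$ coefficient and the $\frac{\sqrt{2\kappa}}2$ coefficient are consistent with the normalizations $\Omega\omegab$ versus $v\Omega\omega$ in the old gauge. I also need to check the $\nabla_3\omega$ reduction, where the $\omega\omegab$ term must cancel against the derivative of the factor $v$.
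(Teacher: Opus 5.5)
You follow the paper's own route: specialize Proposition~\ref{Self similar equation near v equal 0} to $b=\eta=\nabla\phi=\hat\chi=0$. Your algebraic relations, the $v^\kappa\Omega^2$ and $K$ equations, and the initial values are fine. The gap is that two lines of \eqref{Equation_set} cannot simply be read off, and one of them produces the first ODE of the statement.

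Compute that line directly, as you planned for $\omega$. In spherical symmetry the $\nabla_3\tr\chi$ structure equation gives $\partial_u(\Omega^{-1}\tr\chi)=(4\Omega\omegab-\Omega\tr\chib)\Omega^{-1}\tr\chi-2K$. Since $\Omega^{-1}\tr\chi$ is homogeneous of degree $-1$ under $S$, at $u=-1$ one has $\partial_u=1+v\partial_v$ on it. Inserting $\Omega\omegab=v\Omega\omega$ and $\Omega\tr\chib=-2+v\Omega^2\Omega^{-1}\tr\chi$ gives
\[
v\partial_v(\Omega^{-1}\tr\chi)=(1+4v\Omega\omega)\,\Omega^{-1}\tr\chi-2K-v\Omega^2(\Omega^{-1}\tr\chi)^2 .
\]
You can confirm this from $\Omega^{-1}\tr\chi=\partial_v(r^2)/(\Omega^2r^2)$, $K=r^{-2}$ and $\partial_u\partial_v(r^2)=-2\Omega^2$.

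Since $4v\Omega\omega=\kappa-v\partial_v\log(v^\kappa\Omega^2)$, this agrees with the stated coefficient $1+\kappa$ only in the limit $v\to0$. For $v>0$ the discrepancy is
\[
-v\partial_v\log(v^\kappa\Omega^2)\,\Omega^{-1}\tr\chi=-\sqrt{2\kappa}\,v\Omega^2\,\Omega^{-1}\tr\chi\,\Omega^{-1}D_4\phi .
\]
This uses the identity $\sqrt{2\kappa}\,v\Omega^2\Omega^{-1}D_4\phi=\kappa-4v\Omega\omega$, which follows from the $\nabla_3\tr\chib$ and $\nabla_4\tr\chi$ equations, as in the proposition right after the lemma. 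The discrepancy is nonzero, of size $v^{1-\kappa}$ near $v=0$; also $v\Omega\omega\to-\kappa/4$ as $v\to\infty$.

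Proposition~\ref{Self similar equation near v equal 0} drops exactly this term. Your step therefore inherits an error, and the first ODE as written is false on $u=-1$ for $v>0$. What you can prove is the corrected equation above. It reduces to the stated one at $v=0$, which is the only place it is used later (to get $\Omega^{-1}\tr\chi(0)=2/(1+\kappa)$). Equivalently, $1+\kappa$ is the right coefficient for the renormalized quantity:
\[
v\partial_v(v^\kappa\Omega\tr\chi)=(1+\kappa)\,v^\kappa\Omega\tr\chi-2v^\kappa\Omega^2K-v\Omega\tr\chi\cdot v^\kappa\Omega\tr\chi .
\]

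Your $\Omega^{-1}D_4\phi$ step also fails as written. Substituting $\Omega D_3\phi=-\sqrt{2\kappa}+v\Omega D_4\phi$ into $-\tfrac12(\Omega D_3\phi+v\Omega D_4\phi)\Omega^{-1}\tr\chi$ doubles the $v\Omega D_4\phi$ term rather than removing it. The result is the coefficient $4v\Omega\omega-\tfrac32 v\Omega^2\Omega^{-1}\tr\chi$, because that displayed line contains a spurious $v\Omega D_4\phi$. The stated equation is nevertheless correct. You obtain it from \eqref{intrinsic_wave_equation}, which gives $\partial_u(\Omega^{-1}D_4\phi)=(4\Omega\omegab-\tfrac12\Omega\tr\chib)\Omega^{-1}D_4\phi-\tfrac12\Omega^{-1}\tr\chi\,\Omega D_3\phi$, together with $\partial_u=1+v\partial_v$ (degree $-1$) and the three algebraic relations.

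Finally, in the $v\Omega\omega$ line the $2\omega\omegab$ term does not cancel against $v$, which is constant along $e_3$. It cancels against the derivative of the factor $\Omega$: since $\Omega e_3\log\Omega=-2\Omega\omegab$, one has $\Omega e_3(\Omega\omega)=\Omega^2(e_3\omega-2\omega\omegab)$. With that bookkeeping your derivation of the second ODE is correct.
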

Since a well-defined $\nabla_4\tr\chi$ evolution equation exists for the generic ESE system, we automatically know that the derivative $\Omega^{-1}e_4(\Omega^{-1}\tr\chi)$ remains strictly bounded, which in turn implies the limit $(v\partial_v)\Omega^{-1}\tr\chi(0)=0$. Therefore, evaluating at the axis, we deduce that
\begin{equation*}
    \Omega^{-1}\tr\chi(0)=\frac{2K(0)}{1+\k}=\frac{2}{1+\k}.
\end{equation*}
Because the governing structure equation dictates that $\Omega^{-2}\partial_v\Omega^{-1}\tr\chi=-\frac{1}{2}(\Omega^{-1}\tr\chi)^2-(\Omega^{-1}D_4\phi)^2$, we can asymptotically expand $\Omega^{-1}\tr\chi=\frac{2}{1+\k}+v^{1-\k}O(1)$. By systematically plugging this rigorous expansion into the $v\partial_v(\Omega^{-1}D_4\phi)$ equation, the following relation holds
\begin{equation*}
    v\partial_v(\Omega^{-1}D_4\phi)=\k \Omega^{-1}D_4\phi +\frac{\sqrt{2\k}}{1+\k}+v^{1-\k}O(1).
\end{equation*}
Thus evaluating this relation at the origin conclusively yields $\Omega^{-1}D_4\phi(0)=-\frac{1}{1+\k}\sqrt{\frac{2}{\k}}$.

{The next identity links the scalar derivative to the normalized lapse derivative near the incoming cone.}

\begin{proposition}
    Near $v=0$, the following relation holds
    \begin{equation}
        \Omega^{-1}D_4\phi=-\frac{4}{\sqrt{2\k}}\frac{v\Omega\omega-\k/4}{v\Omega^2}.
    \end{equation}
\end{proposition}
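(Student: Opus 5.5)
The proof will combine the wave equation with the self-similarity relation on $u=-1$, rather than integrating any ODE. In the spherically symmetric, exactly self-similar background we have $b=0$, $\eta=\etab=0$ and $\nabla\phi=0$. We also have the relation $\Omega D_3\phi=-\sqrt{2\k}+v\Omega^2\,\Omega^{-1}D_4\phi$ from the preceding lemma. The goal is to derive a first-order identity for the quantity
\[
X:=\frac{4}{\sqrt{2\k}}\,\frac{v\Omega\omega-\k/4}{v\Omega^2}+\Omega^{-1}D_4\phi,
\]
show that $X$ solves a homogeneous linear equation in $v\partial_v$ with a favourable (positive) coefficient, and then conclude that $X\equiv 0$ because of its behaviour at $v=0$.

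The first step is to express $v\partial_v$ of $(v\Omega\omega-\k/4)/(v\Omega^2)$ using the lemma. Note that $v\partial_v(v\Omega^2)=v\Omega^2(1-4v\Omega\omega)$, which follows from the equation for $v^\k\Omega^2$. Combining this with the $v\partial_v(v\Omega\omega)$ equation gives
\[
v\partial_v\!\left(\frac{v\Omega\omega-\k/4}{v\Omega^2}\right)=-\frac12K-\frac18\tr\chi\tr\chib+\frac14D_3\phi D_4\phi-(1-4v\Omega\omega)\frac{v\Omega\omega-\k/4}{v\Omega^2}.
\]
I will then rewrite $\tr\chi\tr\chib=\Omega^{-1}\tr\chi\,\Omega\tr\chib$ and $D_3\phi D_4\phi=\Omega D_3\phi\,\Omega^{-1}D_4\phi$. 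Substituting $\Omega\tr\chib=-2+v\Omega^2\Omega^{-1}\tr\chi$ and the expression for $\Omega D_3\phi$ leaves the combination $-\frac12K+\frac14\Omega^{-1}\tr\chi-\frac{\sqrt{2\k}}4\Omega^{-1}D_4\phi$ plus terms carrying a factor $v\Omega^2$.

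The second step is to compute $v\partial_v X$ using the lemma's equation for $\Omega^{-1}D_4\phi$. I then want the inhomogeneous terms in $K$ and $\Omega^{-1}\tr\chi$ to cancel. This is the main obstacle: $K$ enters the $\omega$-equation, but it does not enter the scalar-field equation directly. The cancellation therefore has to come from a constraint, and the natural candidate is the $\nabla_3$ Raychaudhuri equation restricted to the cone.

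Concretely, self-similarity converts $\Omega\nabla_3\tr\chib$ into $v\partial_v$ of $-2+v\Omega^2\Omega^{-1}\tr\chi$. Equating this with $-\frac12(\tr\chib)^2-2\omegab\tr\chib-(D_3\phi)^2$, and using the equation for $v\partial_v(\Omega^{-1}\tr\chi)$, should produce an algebraic relation among $K$, $\Omega^{-1}\tr\chi$, $v\Omega\omega$ and $\Omega^{-1}D_4\phi$. This relation is essentially the statement $\ol{}$-free $\Omega\underline\omega$-consistency. I expect the correct identity to be precisely the relation whose leading part is $(4/\sqrt{2\k})(v\Omega\omega-\k/4)\cdot(\cdots)$ times the scalar-field constraint. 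If that relation does not close directly, I will instead try the Gauss equation for $K$ in terms of $\rho$, since $K$ is independent of $u$ up to scaling.

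Once the cancellation is in hand, $X$ satisfies $v\partial_vX=(c+O(v^{1-\k}))X$ for some constant $c$. I expect $c=1-\k>0$, coming from the $v\Omega^2\sim v^{1-\k}$ weights. Since $X$ stays bounded as $v\to0$, any nonzero solution would behave like $v^{c}\cdot$const. The boundary value $X(0)=0$ is checked from the limits $v\Omega\omega(0)=\k/4$, $v^\k\Omega^2(0)=1$, and $\Omega^{-1}D_4\phi(0)=-\frac1{1+\k}\sqrt{2/\k}$. Here the first term of $X$ is a $0/0$ limit, which I will resolve using the $v\partial_v(v\Omega\omega)$ equation, l'Hôpital, and the value of $K(0)$. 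Uniqueness for this singular ODE then gives $X\equiv0$ near $v=0$, which is the claimed identity.
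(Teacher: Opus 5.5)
Your last step fails, because the equation that actually forces the identity is missing from your plan. The equations you invoke are all $\nabla_3$-equations transcribed onto $u=-1$ by self-similarity: the lemma's equations for $v\Omega\omega$, $\Omega^{-1}D_4\phi$ and $\Omega^{-1}\tr\chi$, and the $\nabla_3\tr\chib$ Raychaudhuri equation. For these equations $v=0$ is a singular point. Suppose you carry out your plan with exactly these, eliminating $K$ through the relation coming from $\nabla_3\tr\chib$. You must use the coefficient $1+4v\Omega\omega$ in the $v\partial_v(\Omega^{-1}\tr\chi)$ equation, since that is what $\nabla_3\tr\chi$ and self-similarity give; the printed $1+\k$ is only its value at $v=0$, and with it a residual $-v\Omega\tr\chi\,\Omega^{-1}D_4\phi$ survives. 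With the correct coefficient the inhomogeneous terms do cancel, but what remains is
\[
v\partial_v X=\left(1+4v\Omega\omega-v\Omega\tr\chi\right)X,
\]
whose exponent at $v=0$ is $1+\k>0$. For a positive exponent your uniqueness argument is void. Every solution behaves like $Cv^{1+\k}$, so it is bounded and vanishes at $v=0$, and $X(0)=0$ carries no information. Compare Lemma~\ref{Lie_propagation_eq_II}, where such equations are fixed by data at $v=V$, not at $v=0$. Indeed, your verification of $X(0)=0$ uses only the $v=0$ limits of these same equations, which is why it cannot single out $X\equiv0$. Adding the lemma's $K$-equation does not remove this mode either. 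Your anticipated $c=1-\k>0$ has the same defect. The Gauss equation only defines $\rho$, so it supplies no new constraint.

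The missing ingredient is the outgoing Raychaudhuri equation $\partial_v(\Omega\tr\chi)=-\frac12(\Omega\tr\chi)^2-4\Omega\omega\,\Omega\tr\chi-(\Omega D_4\phi)^2$. With it no ODE is needed, and this is the paper's argument. On $u=-1$ substitute $\Omega\tr\chib=-2+v\Omega\tr\chi$, $\Omega\omegab=v\Omega\omega$ and $\Omega D_3\phi=-\sqrt{2\k}+v\Omega D_4\phi$ into the $\nabla_3\tr\chib$ equation; this gives
\[
v^2\partial_v(\Omega\tr\chi)+\tfrac12v^2(\Omega\tr\chi)^2=8v\Omega\omega-4v^2\Omega\omega\,\Omega\tr\chi-2\k+2\sqrt{2\k}\,v\Omega D_4\phi-v^2(\Omega D_4\phi)^2 .
\]
Substituting the $\nabla_4\tr\chi$ equation on the left cancels every quadratic term. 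What is left is $8v\Omega\omega-2\k+2\sqrt{2\k}\,v\Omega D_4\phi=0$, which is the claim, pointwise in $v$. If you prefer to keep your ODE, eliminate $K$ using $\nabla_4\tr\chi$ (equated with the $\nabla_3\tr\chi$ equation) rather than $\nabla_3\tr\chib$. Then $v\partial_vX=-(1-4v\Omega\omega+v\Omega\tr\chi)X$, whose exponent $-(1-\k)$ is negative. Boundedness of $X$ alone, which follows from $v\Omega\omega-\k/4=O(v^{1-\k})$, then forces $X\equiv0$. Note that this is the opposite sign to the one you expected.
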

\begin{proof}
    Starting from the fundamental geometric equation for $\nabla_3\tr\chib$, we restrict our analysis along the initial hypersurface $u=-1$ to obtain
    \begin{equation*}
        \begin{split}
            &\frac{-2}{(-u)^2}+\frac{v}{(-u)^2}\Omega\tr\chi+\frac{v}{-u}\partial_v\left(\frac{v}{-u}\Omega\tr\chi\right)+\frac{2}{(-u)^2}-2\frac{v}{(-u)^2}\Omega\tr\chi+\frac{1}{2}\left(\frac{v}{-u}\Omega\tr\chi\right)^2\\
            =&-4\frac{v}{-u}\Omega\omega(\frac{-2}{-u}+\frac{v}{-u}\Omega\tr\chi)-\frac{2\k}{(-u)^2} +2\frac{\sqrt{2\k}v}{(-u)^2} \Omega D_4\phi-\left(\frac{v}{-u}\Omega D_4\phi\right)^2.
        \end{split}
    \end{equation*}
After carefully simplifying this expression by substituting the corresponding exact relation from the $\nabla_4\tr\chi$ equation, we decisively have
    \begin{equation*}
        \begin{split}
            \Omega^{-1}D_4\phi = -\frac{1}{\sqrt{2\k}}\frac{4v\Omega\omega-\k}{v\Omega^2},
        \end{split}
    \end{equation*}
and therefore,
\begin{equation*}
    (\Omega^{-1}D_4)^2\phi=\frac{1}{\sqrt{2\k}}\frac{4v\Omega\omega-\k}{(v\Omega^2)^2}\cdot\frac{1-4v\Omega\omega}{-u}-\frac{1}{\sqrt{2\k}}\frac{4\Omega^{-1}\nabla_4\Omega\omegab}{\frac{v}{-u}\Omega^2}=-(-u)^{-2}\left(\frac{-u}{v}\right)^{1-\k}\left(\frac{2(1-\k)}{(1+\k)\sqrt{2\k}}+o(1)\right).
\end{equation*}
\end{proof}

If we formally consider the rescaled coordinate change defined by $(u,\hat{v})=(u,\frac{1}{1-\k}v^{1-\k})$, we readily compute the transformed operators:
\begin{equation*}
    \partial_v=v^{-\k}\partial_{\hat{v}},\  v\partial_v=(1-\k)\hat{v}\partial_{\hat{v}},\ \hat{\Omega}^2= v^{\k}\Omega^2,\ \Omega^{-1}e_4=\hat{\Omega}^{-1}\hat{e}_4,
\end{equation*}
\begin{equation*}
    \hat{\Omega}^{-1}\hat{\omega}=-\frac{1}{4}\hat{\Omega}^{-1}\hat{e}_4\log\hat{\Omega}^2=-\frac{1}{4}\Omega^{-2}\partial_v\log(v^\k\Omega^2)=\Omega^{-1}\omega-\frac{\k}{4v\Omega^2}=\frac{v\Omega\omega-\k/4}{v\Omega^2}.
\end{equation*}
Therefore we find that 
\begin{equation*}
    \hat\Omega^{-1}\hat{e}_4\phi = -\frac{4}{\sqrt{2\k}}\hat{\Omega}^{-1}\hat\omega,
\end{equation*}
\begin{equation*}
    \hat{\Omega}\hat{e}_3=\Omega e_3,\ \hat{\Omega}\hat\omegab=\Omega\omegab.
\end{equation*}

From the proposition, it follows that 
\begin{equation}
    \begin{aligned}
        \Omega^{-1}e_4(\Omega e_3)\phi =& -\frac{4}{\sqrt{2\k}}\Omega^{-1}\nabla_4\Omega\omegab.
    \end{aligned}
\end{equation}
On the other hand, we also have 
\begin{equation}
    \begin{aligned}
        &-\frac{1}{2}\left(\Omega^{-1}\tr\chi\Omega D_3\phi+\Omega\tr\chib\Omega^{-1}D_4\phi\right)\\
        =& -\frac{1}{2}\left(\Omega^{-1}\tr\chi\left(-\sqrt{2\k}-\frac{4}{\sqrt{2\k}}v\Omega^2 \hat\Omega^{-1}\hat\omega\right)-(-2+v\Omega\tr\chi)\frac{4}{\sqrt{2\k}}\hat\Omega^{-1}\hat\omega\right)\\
        =& \frac{\sqrt{2\k}}{2}\Omega^{-1}\tr\chi +\frac{4}{\sqrt{2\k}}\Omega^{-1}\tr\chi \left(\Omega\omegab-\k/4\right)-\frac{4}{\sqrt{2\k}}\hat\Omega^{-1}\hat\omega\\
        =&-\frac{4}{\sqrt{2\k}}\hat\Omega^{-1}\hat\omega+\frac{4}{\sqrt{2\k}}\Omega^{-1}\tr\chi \Omega\omegab.
    \end{aligned}
\end{equation}
\begin{equation}
    \begin{aligned}
        \Omega^{-1}\nabla_4\Omega\omegab+\Omega^{-1}\tr\chi\Omega\omegab=\hat\Omega^{-1}\hat\omega.
    \end{aligned}
\end{equation}
Since $\Omega^{-1}e_4\Omega\omegab=\hat\Omega^{-1}\hat{e}_4\hat\Omega\hat\omegab=\partial_u\left(\hat\Omega^{-1}\hat\omega\right)-4\Omega\omegab\hat\Omega^{-1}\hat\omega=\hat\Omega^{-1}\hat\omega+v\partial_v\hat\Omega^{-1}\hat\omega-4\Omega\omegab\hat\Omega^{-1}\hat\omega$, it follows that 
\begin{equation}\label{Sec App B v partial v Omega -1 omega}
    \begin{split}
        v\partial_v\hat\Omega^{-1}\hat\omega=&(4\hat\Omega^{-1}\hat\omega-\Omega^{-1}\tr\chi)\left(\frac{\k}{4}+v\Omega^2\cdot\hat\Omega^{-1}\hat\omega\right)\\
        =&\k\hat\Omega^{-1}\hat\omega-\Omega^{-1}\tr\chi\left(\frac{\k}{4}+v\Omega^2\cdot\hat\Omega^{-1}\hat\omega\right)+4v\Omega^2(\hat\Omega^{-1}\hat\omega)^2\\
        =& \k \hat\Omega^{-1}\hat\omega+F,
    \end{split}
\end{equation}
where 
$F=4v\Omega^2(\hat\Omega^{-1}\hat\omega)^2-\left(\frac{\k}{4}+v\Omega^2\cdot\hat\Omega^{-1}\hat\omega\right)\Omega^{-1}\tr\chi$.
We formally gather the complete algebraic equation system governing the restricted $\k$-self-similar and strictly spherically symmetric ESE model:
\begin{equation}\label{equations system for the self-similar and spherical symmetric ESE}
    \left\{
    \begin{aligned}
        & v\partial_v(\hat\Omega^{-1}\hat\omega)=(4\hat\Omega^{-1}\hat\omega-\Omega^{-1}\tr\chi)\left(\frac{\k}{4}+v\Omega^2\cdot\hat\Omega^{-1}\hat\omega\right)\\
        &\partial_v\Omega^{-1}\tr\chi=\Omega^2\left(-\frac{1}{2}(\Omega^{-1}\tr\chi)^2-\frac{8}{\k}(\hat\Omega^{-1}\hat\omega)^2\right),\\
        &\partial_v\log v^\k\Omega^2=-4\Omega^2\cdot\hat\Omega^{-1}\hat\omega^{-1}.
    \end{aligned}
    \right.
\end{equation}
We can algebraically transform the differential equation \eqref{Sec App B v partial v Omega -1 omega} into the exact conservation law format $\partial_v\left(v^{-\k}\left[\hat\Omega^{-1}\hat\omega\right]_0^v+\int_0^v v^{-1-\k}(F+\k\hat\Omega^{-1}\hat\omega(0))\right)=0$, which immediately implies the uniform bound
\begin{equation}\label{increasing_rate_omega_c}
    \left[\hat\Omega^{-1}\hat\omega\right]_0^v\lesssim v^{\k}.
\end{equation}
Let us purposefully define the auxiliary variables $A(v)=(\Omega^{-1}\tr\chi)^{-1}(-1,v)$ and $B(v)=\frac{\hat\Omega^{-1}\hat\omega}{\Omega^{-1}\tr\chi}$. We then deduce directly from the coupled system \eqref{equations system for the self-similar and spherical symmetric ESE} that
\begin{equation}
    \left\{
    \begin{aligned}
        &\partial_v A=\Omega^2\left(\frac{1}{2}+\frac{8}{\k}B^2\right),\\
        &\partial_v \left(v^{-\k}(B-B(0))\right)=v^{-1-\k}\cdot v\Omega^2\left(\frac{8}{\k}B^2+4B-\frac{1}{2}\right)\frac{B}{A}.
    \end{aligned}
    \right.
\end{equation}
Since $\lim_{v\rightarrow\infty}v^{-\k}\Omega^{2}(-1,v)=1$ and $\lim_{v\rightarrow\infty}v\Omega\omega(-1,v)=-\k/4$, $\lim_{v\rightarrow\infty}v\Omega\tr\chi=2$, the following relation holds 
\begin{equation*}
    B(0)=\frac{1}{4},\ A(0)=\frac{1+\k}{2},\ B(\infty)=-\frac{\k}{4},\ \lim_{v\rightarrow\infty}v^{-1-\k}A(v)=\frac{1}{2}.
\end{equation*}
Analyzing this dynamical system, we rigorously observe that the variable $A$ is strictly increasing and bounded positively, while the algebraic quadratic expression $\frac{8}{\k}B^2+4B-\frac{1}{2}$ near the boundary value $B=\frac{1}{4}$ is also strictly positive. Consequently, the fraction $\frac{B-B(0)}{v^\k}$ must be strictly increasing in an immediate neighborhood of $v=0$. There necessarily exists a critical point $v_0>0$, identified as the first positive value for which $B(v_0)=B(0)$, so that
\begin{equation*}
    \lim_{v\rightarrow 0}\frac{B-B(0)}{v^\k}=-\int_{0}^{v_0}v^{-1-\k}\cdot v\Omega^2\left(\frac{8}{\k}B^2+4B-\frac{1}{2}\right)\frac{B}{A}dv<0.
\end{equation*}
Therefore, evaluated along the specific null hypersurface $H_{-1}$, we robustly possess the asymptotic relationship
$\left[\hat\Omega^{-1}\hat\omega\right]_0^v\sim v^\k$.

{This asymptotic immediately gives the singular rate of the iterated outgoing null derivative.}

\begin{corollary}\label{increasing_rate_D4phi_c}
    For any $\delta>0$, the following asymptotic estimate holds near $\frac{v}{-u}=0$:
    \begin{equation}
        \left|(\Omega^{-1}D_4)^2\phi\right|\sim \left(\frac{-u}{v}\right)^{1-2\k}(-u)^{-2}.
    \end{equation}
\end{corollary}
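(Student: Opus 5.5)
We need to prove Corollary \ref{increasing_rate_D4phi_c}, which asserts $|(\Omega^{-1}D_4)^2\phi|\sim (\frac{-u}{v})^{1-2\k}(-u)^{-2}$ near $z=0$ on the Christodoulou background. The plan is to work on $u=-1$ and use self-similarity to recover the $u$-weights at the end. Both $\Omega^{-1}D_4\phi$ and $(\Omega^{-1}D_4)^2\phi$ scale homogeneously under $S$, so the general statement follows from its restriction to $u=-1$ by the scaling $z=v/(-u)$. On $u=-1$ we take the identity from the preceding proposition,
\[
\Omega^{-1}D_4\phi=-\frac{4}{\sqrt{2\k}}\,\hat\Omega^{-1}\hat\omega ,
\]
and apply $\Omega^{-1}e_4=\Omega^{-2}\partial_v$ to both sides. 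This gives
\[
(\Omega^{-1}D_4)^2\phi=-\frac{4}{\sqrt{2\k}}\,\Omega^{-2}\partial_v(\hat\Omega^{-1}\hat\omega)=-\frac{4}{\sqrt{2\k}}\,\frac{1}{v\Omega^2}\,v\partial_v(\hat\Omega^{-1}\hat\omega).
\]
Since $v^\k\Omega^2\to 1$ as $v\to0$, the prefactor satisfies $\frac{1}{v\Omega^2}\sim v^{\k-1}$. It therefore suffices to show that $v\partial_v(\hat\Omega^{-1}\hat\omega)\sim v^{\k}$, because then $|(\Omega^{-1}D_4)^2\phi|\sim v^{2\k-1}$. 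This is exactly the claimed rate at $u=-1$.

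To obtain the $v^\k$ rate, write $f=\hat\Omega^{-1}\hat\omega$ and use \eqref{Sec App B v partial v Omega -1 omega}, namely $v\partial_v f=\k f+F$, together with the regularity limit $v\partial_v f\to 0$. This limit forces $\k f(0)+F(0)=0$, so we may write
\[
v\partial_v f=\k[f]_0^v+[F]_0^v .
\]
By the analysis just before the corollary, $[f]_0^v\sim v^\k$, meaning the limit $c:=\lim_{v\to0} v^{-\k}[f]_0^v$ exists and is negative. For the second term, $F=4v\Omega^2 f^2-(\frac{\k}{4}+v\Omega^2 f)\Omega^{-1}\tr\chi$, and $v\Omega^2\sim v^{1-\k}$. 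We expand $[F]_0^v$ to leading order using three facts: $[\Omega^{-1}\tr\chi]_0^v=O(v^{1-\k})$, since $\Omega^{-2}\partial_v\Omega^{-1}\tr\chi$ is bounded; $[f]_0^v=cv^\k+o(v^\k)$; and $[v\Omega^2]_0^v=v\Omega^2\sim v^{1-\k}$. This yields
\[
[F]_0^v=-\frac{\k}{4}[\Omega^{-1}\tr\chi]_0^v+O(v^{1-\k})=O(v^{1-\k}).
\]
Because $\k<1/3$, we have $1-\k>\k$, so this contribution is lower order. Hence $v\partial_v f=\k c\,v^\k(1+o(1))$ with $\k c\neq0$, and the upper and lower bounds follow together.

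The main obstacle is the non-vanishing of the leading coefficient $c$, as opposed to a mere upper bound. We rely on the sign argument given just before the corollary. There $B(v)-B(0)$ is expressed through $v^{-\k}(B-B(0))=-\int_v^{v_0}(\cdots)$ with a positive integrand near $0$, and since $A$ is bounded below, the integrand is integrable at $0$ with a strictly negative total. Converting from $B=f/\Omega^{-1}\tr\chi$ back to $f$ costs only $O(v^{1-\k})$ corrections, so $c=\Omega^{-1}\tr\chi(0)\cdot\lim v^{-\k}(B-B(0))<0$. A secondary point is to check that $o(1)$ errors stay uniform in $\theta$, which is trivial here by spherical symmetry.

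Finally, the rescaling from $u=-1$ to general $u$ multiplies by $(-u)^{-2}$, consistent with $s((\Omega^{-1}D_4)^2\phi)=-1$. This gives the stated form $(\frac{-u}{v})^{1-2\k}(-u)^{-2}$; the $\delta$ in the statement plays no role, since the exact rate is obtained.
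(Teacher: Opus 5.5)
Your proposal is correct and follows the paper's route: you differentiate the identity $\Omega^{-1}D_4\phi=-\frac{4}{\sqrt{2\k}}\hat\Omega^{-1}\hat\omega$ along $\Omega^{-1}e_4=\Omega^{-2}\partial_v$, extract the prefactor $(v\Omega^2)^{-1}\sim v^{\k-1}$, and use the rate $[\hat\Omega^{-1}\hat\omega]_0^v\sim v^{\k}$ from the preceding analysis of $B$. The paper goes directly from $\Omega^{-2}\partial_v(\hat\Omega^{-1}\hat\omega)$ to $(v\Omega^2)^{-1}[\hat\Omega^{-1}\hat\omega]_0^v$ without comment, and your decomposition $v\partial_v f=\k[f]_0^v+[F]_0^v$ with $[F]_0^v=O(v^{1-\k})$ supplies the justification for that step.
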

\begin{proof}
    Directly recalling the established equivalence $\Omega^{-1}D_4\phi\sim \hat\Omega^{-1}\hat\omega$, we subsequently compute
    \begin{equation}
        \left|(\Omega^{-1}D_4)^2\phi\right|\sim \left(\frac{(-u)^{-\k}}{v^{1-\k}}\right)\left|\left[\hat\Omega^{-1}\hat\omega\right]_0^v\right|\sim \left(\frac{-u}{v}\right)^{1-2\k}(-u)^{-2}.
    \end{equation}
\end{proof}
\subsection{Near outgoing null cone}
We systematically proceed to analyze the asymptotic spacetime geometry in the complementary neighborhood near $\frac{-u}{v}=0$. Exploiting the fundamental $\k$-similarity and spherical symmetry, we readily deduce that
\begin{equation}
    g(u,v,\theta)=-\left(\frac{v}{-u}\right)^\k\check{\Omega}^2(-u/v)\left(du\otimes dv+dv\otimes du\right)+v^2\check{g}_{AB}(-u/v)d\theta^A\otimes d\theta^B,
\end{equation}
with scalar function 
\begin{equation}
    \phi(u,v)=\sqrt{2\k}\log v+\check{\phi}(-u/v),
\end{equation}
where $\check{\Omega}^2(0)=1$ and $\check{g}_{AB}(0)={g^{\SS}}_{AB}$ according to the analysis in \cite{Christodoulou1994}. 
From the elementary differential identity $\partial_v\left(v^kf(-u/v)\right)=\frac{k}{v}\cdot v^kf(-u/v)+\frac{-u}{v}\partial_u\left(v^kf(-u/v)\right)$, we can explicitly calculate
\begin{equation}
    \Omega\tr\chi=\frac{2}{v}+\frac{-u}{v}\Omega^2\cdot\Omega^{-1}\tr\chib,\ \Omega\omega=\frac{(-u)\Omega\omegab}{v},\ \Omega D_4\phi = \sqrt{2\k}\frac{1}{v}+\frac{-u}{v}\Omega^2\cdot \Omega^{-1}D_3\phi,
\end{equation}
and along $u=0$, $\Omega\tr\chi=2v^{-1}$, $\Omega\omega=-\k v^{-1}/4$, $\Omega D_4\phi=\sqrt{2\k}v^{-1}$.

{We record the corresponding reduced ODE system on the outgoing side.}

\begin{lemma}
    The following ODE system governs the self-similar geometry along $v=1$:
    \begin{equation}
        \left\{
        \begin{aligned}
            \Omega^{-1}e_3(\Omega^{-1}\tr\chib)=&-\frac{1}{2}(\Omega^{-1}\tr\chib)^2-(\Omega^{-1}D_3\phi)^2,\\
            (-u)\partial_u\Omega^{-1}\tr\chib=&(4\Omega\omega-1-(-u)\Omega^2\Omega^{-1}\tr\chib)\Omega^{-1}\tr\chib-2K,\\
            (-u)\partial_u\Omega^{-1}D_3\phi=&\left(4\Omega\omega-(-u)\Omega^2\cdot \Omega^{-1}\tr\chib\right)\Omega^{-1}D_3\phi-\frac{\sqrt{2\k}}{2}\Omega^{-1}\tr\chib,\\
            \partial_u K=&-\Omega^2\Omega^{-1}\tr\chib K,\\
            (-u)\partial_u \log\Omega^2=&-4\Omega\omega.
        \end{aligned}
        \right.
    \end{equation}
    Moreover, along $u=0$, we deduce that 
    \begin{equation}
        K(0,v)=\frac{1}{v^2} ,\ \Omega^{-1}\tr\chib(0,v)=-\frac{2}{1+\k},\ \Omega^{-1}D_3\phi(0,v)=\sqrt{\frac{2}{\k}}\frac{1}{1+\k}.
    \end{equation}
\end{lemma}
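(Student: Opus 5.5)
The plan is to mirror the incoming-cone computation of the preceding subsection, with the roles of $u$ and $v$ exchanged. The ODE system will come from the spherically symmetric, $\k$-self-similar structure near $\frac{-u}{v}=0$; the boundary values at $u=0$ will come from the requirement that the relevant $\Omega^{-1}e_3$-derivatives stay regular as $u\to 0$.

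\textbf{Step 1: the ODE system.} In spherical symmetry we have $\eta=\etab=\nabla\phi=0$, $\chibh=\chih=0$ and $b=0$. I then take the following equations from Section \ref{Preliminaries}:
\begin{itemize}
\item the $\nabla_3\tr\chib$ equation;
\item the $\nabla_3\tr\chi$ equation;
\item the wave equation \eqref{intrinsic_wave_equation} for $e_4(D_3\phi)$;
\item the first line of \eqref{Equations_K};
\item the identity $\nabla_3\log\Omega^2=-4\omegab$.
\end{itemize}
Each is rewritten in the normalization $\Omega^{-1}e_3$ or $\Omega e_4$. Then I use the scaling identity $\partial_v(v^kf(-u/v))=\frac{k}{v}v^kf+\frac{-u}{v}\partial_u(v^kf)$, which was already used to obtain $\Omega\tr\chi=\frac{2}{v}+\frac{-u}{v}\Omega^2\Omega^{-1}\tr\chib$ and the analogous formulas for $\Omega\omega$ and $\Omega D_4\phi$.

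These relations replace every $\Omega e_4$ derivative, and every occurrence of $\Omega\tr\chi$, $\Omega\omega$, $\Omega D_4\phi$, by $(-u)\partial_u$ acting on the corresponding $U$-quantities. This is exactly analogous to Proposition \ref{Self similar equation near v equal 0}.

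For example, on $v=1$, the $\nabla_4\tr\chib$ equation with $\Omega e_4(\Omega^{-1}\tr\chib)=(-u)\partial_u(\Omega^{-1}\tr\chib)+\Omega^{-1}\tr\chib$ gives the second line after regrouping. The wave equation in its $D_3$ form gives the third line, with $\frac{\sqrt{2\k}}{2}\Omega^{-1}\tr\chib$ coming from $-\frac12\tr\chib D_4\phi$ and the leading term $\sqrt{2\k}/v$ of $\Omega D_4\phi$. The $K$ equation in \eqref{Equations_K}, with all angular terms dropped, gives the fourth line. The lapse identity is immediate from $\Omega\omega=(-u)\Omega\omegab/v$. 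I expect this step to be routine bookkeeping.

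\textbf{Step 2: values along $u=0$.} Since $K$ has signature $-2$ under the homothety and $\partial_uK=-\Omega^2\Omega^{-1}\tr\chib K$ is integrable, $K(0,v)=v^{-2}\check K(0)$. Because $\check g(0)=g^{\SS}$, this gives $K(0,v)=v^{-2}$. Likewise $\Omega\omega(0,v)=-\k/(4v)$ follows from $\lim_{u\to0}(-u)^\k\Omega^2=v^\k$ and the lapse equation.

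For the regularity argument I use the first line of the lemma. It shows that $\Omega^{-1}e_3(\Omega^{-1}\tr\chib)$ is bounded, and since $\Omega^2\sim(-u)^{-\k}$, this forces $(-u)\partial_u\Omega^{-1}\tr\chib\to0$ as $u\to0$. Evaluating the second line at $u=0$, where $(-u)\Omega^2\to0$, then gives $(4\Omega\omega-1)\Omega^{-1}\tr\chib=2K$. At $v=1$ this reads $-(1+\k)\Omega^{-1}\tr\chib=2$, so $\Omega^{-1}\tr\chib(0,v)=-\frac{2}{1+\k}$ in scale-invariant form.

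For the scalar field, I first insert the expansion $\Omega^{-1}\tr\chib=-\frac{2}{1+\k}+O((-u)^{1-\k})$ into the third line. I then use that $\Omega^{-1}e_3(\Omega^{-1}D_3\phi)$ remains controlled, so $(-u)\partial_u\Omega^{-1}D_3\phi\to0$. This yields $-\k\,\Omega^{-1}D_3\phi+\frac{\sqrt{2\k}}{1+\k}=0$, hence $\Omega^{-1}D_3\phi(0,v)=\sqrt{\frac{2}{\k}}\frac{1}{1+\k}$, matching Theorem \ref{thm:background_properties}.

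\textbf{Main obstacle.} The delicate point is justifying the vanishing of $(-u)\partial_u$ of the unknowns at $u=0$. The problem is that $\Omega^{-1}D_3\phi$ is only $C^{\frac{\k}{1-\k}}$-type there, so its $\Omega^{-1}e_3$ derivative is not bounded. I would first try to use the conservation-law trick of the preceding subsection, as in \eqref{increasing_rate_omega_c}: write the equation as $\partial_u\big((-u)^{\k}[\,\cdot\,]_0^u\big)=$ integrable, and deduce that the bracket is $O((-u)^{\k})$. That gives convergence of $(-u)\partial_u$ to $0$ without needing boundedness.
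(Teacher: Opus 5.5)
Your plan follows the paper's route. The paper states this lemma as the mirror image of the incoming-cone computation in Appendix~\ref{Appendix_Chr94}: the ODEs come from the spherically symmetric structure and wave equations plus homogeneity, and the values at $u=0$ are read off by forcing $(-u)\partial_u(\cdot)\to 0$. You also correctly spot that $(\Omega^{-1}e_3)^2\phi$ blows up, so $\Omega^{-1}D_3\phi$ cannot be handled by a boundedness argument; the paper's incoming analogue passes over this by simply evaluating at the origin. However, several steps fail as written.

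\textbf{Sign in the scaling identity.} $\Omega^{-1}\tr\chib$ and $\Omega^{-1}D_3\phi$ are homogeneous of degree $-1$. At $v=1$ the identity you quote therefore gives $\Omega e_4(\Omega^{-1}\tr\chib)=\partial_v(\Omega^{-1}\tr\chib)=-\Omega^{-1}\tr\chib+(-u)\partial_u(\Omega^{-1}\tr\chib)$. With your $+$ sign, the $\nabla_4\tr\chib$ equation produces the coefficient $4\Omega\omega-3-\cdots$ in the second line and the wrong value $-2/(3+\k)$. (That line comes from $\nabla_4\tr\chib$, not from the $\nabla_3\tr\chi$ equation in your list.)

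\textbf{Weight in the conservation law.} The weight must be $(-u)^{-\k}$, not $(-u)^{\k}$. Set $g=\Omega^{-1}D_3\phi-\sqrt{2/\k}\,(1+\k)^{-1}$; the third line becomes $(-u)\partial_u g=-\k g+e$. Hence $\partial_u((-u)^{-\k}g)=(-u)^{-\k-1}e$, which is integrable only if $e=O((-u)^{\gamma})$ for some $\gamma>\k$. Since $e$ contains $(4\Omega\omega+\k)\,\Omega^{-1}D_3\phi$, you need a rate for $4\Omega\omega+\k$, not only for $\Omega^{-1}\tr\chib$. It is also more direct to subtract this candidate value than $[\,\cdot\,]_0^u$, which presupposes the limit.

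A cleaner route: in $\sigma=-\log(-u)$ the third line reads $\partial_\sigma f=af+b$ with $a\to-\k<0$ and $b\to\sqrt{2\k}/(1+\k)$. This is a damped linear ODE, so $f\to-\lim b/\lim a$ and $\partial_\sigma f\to 0$ from mere convergence of the coefficients. The second line is damped in the same way, with $a\to-(1+\k)$.

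\textbf{Limit of $\Omega\omega$.} $\lim_{u\to0}(-u)^\k\Omega^2=v^\k$ alone does not give $\Omega\omega(0,1)=-\k/4$, because a limit does not control a logarithmic derivative. Use instead the identity $4\Omega\omega+\k=-\sqrt{2\k}\,(-u)\Omega^2\,\Omega^{-1}D_3\phi$ at $v=1$. It follows from the $\nabla_4\tr\chi$ Raychaudhuri equation together with the first line, and the paper derives it just after the lemma. With $\Omega^{-1}D_3\phi$ bounded, it gives both the limit and the rate $O((-u)^{1-\k})$ needed above.

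\textbf{Factor $1/v$.} The ODEs only give values at $v=1$. By homogeneity, the values along $u=0$ are $-\frac{2}{(1+\k)v}$ and $\sqrt{2/\k}\,\frac{1}{(1+\k)v}$, as in Theorem~\ref{thm:background_properties}. The factor $1/v$ should be stated explicitly rather than absorbed into ``scale-invariant form''.
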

By systematically plugging the derived algebraic formulae directly into the structural equation
\begin{equation*}
    \Omega \nabla_4\Omega\tr\chi+\frac{1}{2}(\Omega\tr\chi)^2+4\Omega\omega\Omega\tr\chi+(\Omega D_4\phi)^2=0,
\end{equation*}
we consequently obtain, through an application of the corresponding $\Omega^{-1}\nabla_3\Omega^{-1}\tr\chib$ equation, the exact relation
\begin{equation*}
    \Omega^{-1}D_3\phi=-\frac{4}{\sqrt{2\k}}\frac{(-u)\Omega\omegab+v\k/4}{(-u)\Omega^2}.
\end{equation*}
Consider the analogous rescaled coordinate transformation $(\hat{u},v)=\left(-\frac{1}{1-\k}(-u)^{1-\k},v\right)$; it then identically holds that
\begin{equation*}
    \partial_{u}=(-u)^{-\k}\partial_{\hat{u}}, \ u\partial_u=(1-\k)\hat{u}\partial_{\hat{u}},\ \Omega^2=(-u)^{-\k}\hat{\Omega}^2,\ \Omega^{-1}e_3=\hat\Omega^{-1}\hat{e}_3,\ \Omega e_4=\hat\Omega \hat{e}_4.
\end{equation*}
\begin{equation*}
    \hat\Omega^{-1}\hat{\omegab}=-\frac{1}{4}\hat\Omega^{-1}\hat{e}_3\left(\log \hat\Omega^2\right)=-\frac{1}{4}\Omega^{-1}e_3\left(\log\Omega^2+\k\log(-u)\right)=\frac{(-u)\Omega\omegab+\k/4}{(-u)\Omega^2}.
\end{equation*}
Therefore we can write 
\begin{equation*}
    \Omega^{-1}D_3\phi=-\frac{4}{\sqrt{2\k}}\hat\Omega^{-1}\hat\omegab,
\end{equation*}
\begin{equation*}
    \Omega D_4\phi=\frac{\sqrt{2\k}}{v}-\frac{4}{\sqrt{2\k}}\left(\Omega\omega+\frac{\k}{4v}\right),
\end{equation*}
Along $v=1$, we obtain  
\begin{equation*}
    \begin{aligned}
        \Omega^{-1}e_3(\Omega e_4)\phi=-\frac{4}{\sqrt{2\k}}\Omega^{-1}\nabla_3\Omega\omega,
    \end{aligned}
\end{equation*}
and on the other hand 
\begin{equation*}
    \begin{aligned}
        \Omega^{-1}e_3(\Omega e_4)\phi=&-\frac{1}{2}\Omega^{-1}\tr\chib\Omega D_4\phi-\frac{1}{2}\Omega\tr\chi\Omega^{-1}D_3\phi\\
        =&-\frac{1}{2}\Omega^{-1}\tr\chib\left(\sqrt{2\k}-\frac{4}{\sqrt{2\k}}(-u)\Omega^2\hat\Omega^{-1}\hat\omegab\right) -\frac{1}{2}\left(2+(-u)\Omega^2\cdot \Omega^{-1}\tr\chib\right)\left(-\frac{4}{\sqrt{2\k}}\hat\Omega^{-1}\hat\omegab\right)\\
        =&-\frac{\sqrt{2\k}}{2} \Omega^{-1}\tr\chib+\frac{4}{\sqrt{2\k}}(-u)\Omega^2\Omega^{-1}\tr\chib\hat\Omega^{-1}\hat\omegab+\frac{4}{\sqrt{2\k}}\hat\Omega^{-1}\hat\omegab,
    \end{aligned}
\end{equation*}
so by algebraically comparing these expressions, we finally obtain
\begin{equation*}
    \Omega^{-1}\nabla_3\Omega\omega+\Omega^{-1}\tr\chib\Omega\omega+\hat\Omega^{-1}\hat\omegab=0.
\end{equation*}
\begin{equation*}
    \begin{aligned}
        (-u)\partial_u\hat\Omega^{-1}\hat\omegab=& \partial_v\hat\Omega^{-1}\hat\omegab+\hat\Omega^{-1}\hat\omegab=4\Omega\omega \hat\Omega^{-1}\hat\omegab+\Omega^{-1}\nabla_3\Omega\omega+\hat\Omega^{-1}\hat\omegab\\
        =&(4\hat\Omega^{-1}\hat\omegab-\Omega^{-1}\tr\chib)\Omega\omega\\
        =&-\k\hat\Omega^{-1}\hat\omegab+(\Omega^{-1}\tr\chib)\left(\frac{\k}{4}-(-u)\Omega^2\hat\Omega^{-1}\hat\omegab\right)-4(-u)\Omega^2\left(\hat\Omega^{-1}\hat\omega\right)^2.
    \end{aligned}
\end{equation*}
The overarching dynamical equations for the $\k$-self-similar and spherically symmetric ESE system securely reduce to the following compact form:
\begin{equation}
    \left\{\begin{aligned}
        (-u)\partial_u\hat\Omega^{-1}\hat\omegab=&-\k\hat\Omega^{-1}\hat\omegab+(\Omega^{-1}\tr\chib)\left(\frac{\k}{4}-(-u)\Omega^2\hat\Omega^{-1}\hat\omegab\right)-4(-u)\Omega^2\left(\hat\Omega^{-1}\hat\omega\right)^2,\\
        \partial_u \Omega^{-1}\tr\chib=& -\Omega^2\left(\frac{1}{2}(\Omega^{-1}\tr\chib)^2+\frac{8}{\k}(\hat\Omega^{-1}\hat\omegab)^2\right),\\
        \partial_u\log\left((-u)^\k\Omega^2\right)=&-4\Omega^2\cdot \hat\Omega^{-1}\hat\omegab.
    \end{aligned}\right.
\end{equation}
Employing an analytical argument rigorously parallel to that utilized near $\frac{v}{-u}=0$, we formally define
$A(u)=\frac{1}{\Omega^{-1}\tr\chib}(u,1)$ and $B(u)=\frac{\hat\Omega^{-1}\hat\omegab(u,1)}{\Omega^{-1}\tr\chib(u,1)}$, the resulting equations are
\begin{equation}
    \begin{aligned}
        \partial_uA =&\Omega^2\left(\frac{1}{2}+\frac{8}{\k}B^2\right),\\
        (-u)\partial_u B=&-\k (B-B(0))-(-u)\Omega^2\frac{B}{A}\left(\frac{8}{\k}B^2+4B-\frac{1}{2}\right).
    \end{aligned}
\end{equation}
The second differential equation can be identically reformulated into the exact conservation-like format
\begin{equation*}
    \partial_u\left((-u)^{-\k}(B-B(0))\right)=-(-u)^{-\k-1}(-u)\Omega^2\frac{B}{A}\left(\frac{8}{\k}B^2+4B-\frac{1}{2}\right).
\end{equation*}
Since the prescribed boundary values dictate $A(0)=-\frac{1+\k}{2},B(0)=\frac{1}{4},$ and $B(-\infty)=-\frac{\k}{4}$, in a localized neighborhood of $u=0$ we firmly establish that the scaled quantity $(-u)^{-\k}(B-B(0))$ is strictly decreasing with respect to $u$. There necessarily exists a largest negative value $u_0<0$ such that $B(u_0)=B(0)$, and furthermore the quantities $-A$ and $\frac{8}{\k}B^2+4B-\frac{1}{2}$ remain strictly positive throughout the interval $(u_0,0)$, so
\begin{equation*}
    \lim_{u\rightarrow 0}(-u)^{-\k}(B-B(0))=\int_{u_0}^0(-u)^{-\k-1}(-u)\Omega^2\frac{B}{-A}\left(\frac{8}{\k}B^2+4B-\frac{1}{2}\right)>0,
\end{equation*}
which fundamentally implies the desired boundary behavior
\begin{equation}
    \left|\hat\Omega^{-1}\hat\omega(u,1)-\hat\Omega^{-1}\hat\omega(0,1)\right|\sim (-u)^\k.
\end{equation}

{The outgoing-side analogue yields the corresponding blow-up rate for the incoming iterated derivative.}

\begin{corollary}
    Near $\frac{-u}{v}=0$, the following asymptotic estimate holds for the iterated null derivative:
    \begin{equation}
        \left|(\Omega^{-1}D_3)^2\phi\right|\sim (-u)^{2\k-1}v^{-2}.
    \end{equation}
\end{corollary}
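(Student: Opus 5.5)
The plan mirrors the derivation of Corollary \ref{increasing_rate_D4phi_c}, with the roles of $u$ and $v$ exchanged. By self-similarity it suffices to work along $v=1$ and then rescale, since $(\Omega^{-1}D_3)^2\phi$ has $V$-signature $-1$ and homogeneity degree $-2$. First I use the identity derived above,
\[
\Omega^{-1}D_3\phi=-\frac{4}{\sqrt{2\k}}\hat\Omega^{-1}\hat\omegab .
\]
Applying $\Omega^{-1}e_3$ to it gives
\[
(\Omega^{-1}D_3)^2\phi=-\frac{4}{\sqrt{2\k}}\,\Omega^{-1}e_3\big(\hat\Omega^{-1}\hat\omegab\big)=-\frac{4}{\sqrt{2\k}}\,\Omega^{-2}\partial_u\big(\hat\Omega^{-1}\hat\omegab\big).
\]
The lapse factor satisfies $\Omega^{-2}\sim(-u)^{\k}$ near $u=0$, because $(-u)^\k\Omega^2\to v^\k$.

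The second step extracts the exact rate of $\partial_u(\hat\Omega^{-1}\hat\omegab)$. I write $\hat\Omega^{-1}\hat\omegab=AB\cdot(\Omega^{-1}\tr\chib)^2$, or more simply use $\hat\Omega^{-1}\hat\omegab=B/A$. Then
\[
\partial_u(B/A)=\frac{\partial_uB}{A}-\frac{B\,\partial_uA}{A^2}.
\]
From the equation $\partial_uA=\Omega^2(\tfrac12+\tfrac8\k B^2)$, the second term is $O((-u)^{-\k})$. For the first term, the ODE $(-u)\partial_uB=-\k(B-B(0))-(-u)\Omega^2\frac BA(\cdots)$ applies, together with the established limit
\[
\lim_{u\to0}(-u)^{-\k}\big(B-B(0)\big)=c>0 .
\]
This gives $(-u)\partial_uB=-\k c(-u)^{\k}+o((-u)^\k)+O((-u)^{1-\k})$. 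Since $\k<1/3$ we have $1-\k>\k$, so the leading behaviour is $\partial_uB\sim -\k c(-u)^{\k-1}$, which dominates the $(-u)^{-\k}$ contribution because $\k-1<-\k$.

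Combining the two steps gives
\[
|(\Omega^{-1}D_3)^2\phi|(u,1)\sim(-u)^\k(-u)^{\k-1}=(-u)^{2\k-1}.
\]
Rescaling by self-similarity then restores the factor $v^{-2}$, with the argument $-u/v$ for a general $v$. Equivalently, in the scale-invariant form, $(-u)^{2\k-1}v^{-2}\cdot v^{1-2\k}$; since the statement is near fixed $v$, the form $(-u)^{2\k-1}v^{-2}$ is read with $v$-dependence absorbed by homogeneity.

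The main obstacle is the dominance comparison. I must ensure that the nonlinear term $(-u)\Omega^2\frac BA(\cdots)$ and the $\partial_uA$ contribution are genuinely lower order than $(-u)^{\k-1}$. This requires $c\neq0$, which was obtained above via the sign argument on $(u_0,0)$, and it requires $\k<1/2$, which is guaranteed here by $\k<1/3$.
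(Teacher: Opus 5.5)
Your argument is correct and follows the paper's route, which is the mirror image of the proof of Corollary~\ref{increasing_rate_D4phi_c}: you differentiate $\Omega^{-1}D_3\phi=-\frac{4}{\sqrt{2\k}}\hat\Omega^{-1}\hat{\omegab}$ along $\Omega^{-1}e_3=\Omega^{-2}\partial_u$ and combine $\Omega^{-2}\sim(-u)^{\k}$ with the $(-u)^{\k}$ rate of $B-B(0)$; your use of the $B$-equation, whose nonlinear part is $O((-u)^{1-\k})$ whatever its exact sign, to upgrade this to $\partial_uB\sim(-u)^{\k-1}$ makes explicit a step the paper leaves implicit. Your homogeneity remark is also right: uniformly in $v$ the rate is $v^{-2}(-u/v)^{2\k-1}=(-u)^{2\k-1}v^{-1-2\k}$, consistent with item (5) of Theorem~\ref{thm:background_properties}, so the literal $v^{-2}$ in the statement is accurate only for $v$ in compact subsets of $(0,\infty)$.
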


\bibliographystyle{plain} 
\bibliography{ref}

\end{document}